\documentclass[acmsmall,screen,nonacm]{acmart}

\AtBeginDocument{%
  }

\theoremstyle{definition}

\theoremstyle{definition}

\newtheorem*{theorem*}{Theorem}

\usepackage{mathpartir}
\usepackage{amsthm}
\usepackage{stmaryrd}
\usepackage{hyperref}
\usepackage{turnstile}
\usepackage{xspace}
\usepackage{listings}
\usepackage{enumitem}
\usepackage{formal-grammar}
\usepackage[dvipsnames]{xcolor}
\usepackage{soul}
\usepackage{anyfontsize}
\usepackage{framed}

\theoremstyle{definition}
\newtheorem{lemma}{Lemma}
\newtheorem{definition}{Definition}

\newtheorem{theorem}{Theorem}

\newcommand{\gco}{\textrm{Gradual C0}\xspace}
\newcommand{\co}{\textrm{C\textsubscript{0}}\xspace}

\newcommand{\ttt}[1]{\texttt{#1}}
\newcommand{\multiple}[1]{\overline{#1}}
\newcommand{\disableTttResize}[0]{\renewcommand{\small}[1]{##1}}

\newcommand{\dom}{\operatorname{dom}}
\newcommand{\notimplies}{\Longarrownot\Longrightarrow}

\newcommand{\pair}[2]{\langle #1, \, #2 \rangle}
\newcommand{\triple}[3]{\langle #1,\, #2, \, #3 \rangle}
\newcommand{\quadruple}[4]{\langle #1,\, #2,\, #3,\, #4 \rangle}
\newcommand{\quintuple}[5]{\langle #1,\, #2,\, #3,\, #4,\, #5 \rangle}
\newcommand{\sextuple}[6]{\langle #1,\, #2,\, #3,\, #4,\, #5,\, #6 \rangle}

\newcommand{\gprogram}{\mathrm{program}}
\newcommand{\gpredicate}{\mathcal{P}}
\newcommand{\gmethod}{\mathcal{M}}
\newcommand{\gstruct}{\mathcal{S}}
\newcommand{\gfunc}{\mathcal{F}}
\newcommand{\gform}{\tilde{\phi}}
\newcommand{\gstatement}{\mathit{s}}
\newcommand{\gcontract}{\mathrm{\Phi}}
\newcommand{\gfcontract}{\mathrm{\Psi}}
\newcommand{\gtype}{\mathit{T}}

\newcommand{\gexpression}{\mathit{e}}
\newcommand{\gvar}{\mathit{x}}

\newcommand{\gpreciseform}{\phi}

\newcommand{\kunfolding}{\ttt{unfolding}}
\newcommand{\kensures}{\ttt{ensures}}
\newcommand{\krequires}{\ttt{requires}}
\newcommand{\kalloc}{\ttt{alloc}}
\newcommand{\knull}{\ttt{null}}
\newcommand{\kacc}{\ttt{acc}}
\newcommand{\kskip}{\ttt{skip}}
\newcommand{\kif}{\ttt{if}}
\newcommand{\kthen}{\ttt{then}}
\newcommand{\kelse}{\ttt{else}}
\newcommand{\kassert}{\ttt{assert}}

\newcommand{\kwhile}{\ttt{while}}
\newcommand{\kinvariant}{\ttt{invariant}}
\newcommand{\kdo}{\ttt{do}}
\newcommand{\kfold}{\ttt{fold}}
\newcommand{\kunfold}{\ttt{unfold}}
\newcommand{\kint}{\ttt{int}}
\newcommand{\kchar}{\ttt{char}}
\newcommand{\kbool}{\ttt{bool}}
\newcommand{\ktrue}{\ttt{true}}
\newcommand{\kfalse}{\ttt{false}}
\newcommand{\kresult}{\ttt{result}}
\newcommand{\keq}{\mathbin{\ttt{==}}}

\newcommand{\kadd}{\mathbin{\ttt{+}}}
\newcommand{\ksub}{\mathbin{\ttt{-}}}
\newcommand{\kdiv}{\mathbin{\ttt{/}}}
\newcommand{\kmul}{\mathbin{\ttt{*}}}
\newcommand{\kand}{\mathbin{\ttt{\&\&}}}
\newcommand{\kor}{\mathbin{\ttt{||}}}
\newcommand{\kneg}{\mathop{\ttt{!}}}
\newcommand{\kassign}{\mathbin{\ttt{=}}}

\newcommand{\sblock}[1]{\{ ~#1~ \}}
\newcommand{\salloc}[1]{\kalloc(#1)}
\newcommand{\sseq}[2]{#1 \ttt{;} ~#2}
\newcommand{\sif}[3]{\kif ~#1~ \kthen ~#2~ \kelse ~#3}
\newcommand{\sassert}[1]{\kassert ~#1}
\newcommand{\swhile}[3]{\kwhile ~#1~ \kinvariant ~#2~ \kdo ~#3}
\newcommand{\sfold}[1]{\kfold ~#1}
\newcommand{\sunfold}[1]{\kunfold ~#1}
\newcommand{\simprecise}[1]{\ttt{?} * #1}
\newcommand{\smethdef}[5]{#1 ~ #2(#3) ~#4~ #5}
\newcommand{\sunfolding}[2]{\kunfolding ~ #1 ~ \ttt{in} ~ #2}

\newcommand{\Literal}{\textsc{Literal}}
\newcommand{\Value}{\textsc{Value}}
\newcommand{\Field}{\textsc{Field}}
\newcommand{\Formula}{\textsc{Formula}}
\newcommand{\GFormula}{\tilde{\textsc{F}}\textsc{ormula}}
\newcommand{\Location}{\textsc{Location}}
\newcommand{\Var}{\textsc{Var}}
\newcommand{\Expr}{\textsc{Expr}}
\newcommand{\Stmt}{\textsc{Stmt}}
\newcommand{\Method}{\textsc{Method}}
\newcommand{\Predicate}{\textsc{Predicate}}
\newcommand{\Function}{\textsc{Function}}
\newcommand{\Struct}{\textsc{Struct}}
\newcommand{\Type}{\textsc{Type}}

\newcommand{\Perm}{\textsc{Perm}}

\newcommand{\SExpr}{\textsc{SExpr}}
\newcommand{\SValue}{\textsc{SValue}}
\newcommand{\SField}{\textsc{SField}}
\newcommand{\SPredicate}{\textsc{SPredicate}}
\newcommand{\SCheck}{\textsc{SCheck}}
\newcommand{\SPerm}{\textsc{SPerm}}
\newcommand{\SState}{\textsc{SState}}

\newcommand{\veval}[4]{#1 \vdash #2 \Downarrow #3 \dashv #4}
\newcommand{\vcons}[3]{#1 \vdash #2 \rhd #3}

\newcommand{\eval}[4]{\pair{#1}{#2} \vdash #3 \Downarrow #4}

\newcommand{\assertion}[4]{\triple{#1}{#2}{#3} \vDash #4}

\newcommand{\foot}[4]{\lfloor #4 \rfloor_{\triple{#1}{#2}{#3}}}

\newcommand{\efoot}[3]{\llfloor #3 \rrfloor_{\pair{#1}{#2}}}

\newcommand{\vfoot}[3]{#1 \llparenthesis #3 \rrparenthesis_{#2}}

\newcommand{\frm}[4]{\triple{#1}{#2}{#3} \vdash_{\mathrm{frm}} #4}
\newcommand{\ifrm}[4]{\triple{#1}{#2}{#3} \vdash_{\mathrm{frmI}} #4}
\newcommand{\efrm}[4]{\triple{#1}{#2}{#3} \vdash_{\mathrm{frmE}} #4}

\newcommand{\dexec}[5]{\pair{#1}{#2},\, #3 \to \pair{#4}{#5}}

\newcommand{\dtrans}[4]{#1 \vdash #3,\, #2 \to #4}

\newcommand{\simheap}[4]{\pair{#3}{#4} \sdtstile{#1}{} #2}

\newcommand{\simenv}[3]{#3 \sdtstile{#1}{} #2}

\newcommand{\simstate}[5]{\triple{#3}{#4}{#5} \sdtstile{#1}{} #2}

\newcommand{\rtassert}[4]{\pair{#2}{#3} \vdash_{#1} #4}

\newcommand{\seval}[5]{#1 \vdash #2 \Downarrow #3 \dashv #4,\, #5}

\newcommand{\sproduce}[3]{#1 \vdash #2 \lhd #3}

\newcommand{\sconsume}[6]{#1,\, #2 \vdash #3 \rhd #4,\, \allowbreak #5,\, \allowbreak #6}

\newcommand{\scons}[4]{#1 \vdash #2 \rhd #3, ~#4}

\newcommand{\sexec}[4]{#1 \vdash #2 \to #3 \dashv #4}

\newcommand{\sguard}[4]{#1 \rightharpoonup #2 \dashv #3,\, #4}

\newcommand{\strans}[3]{#1 \vdash #2 \to #3}

\newcommand{\vstate}{\Sigma}

\newcommand{\existential}[2]{\exists\, #1 : #2}
\newcommand{\nexistential}[2]{\nexists\, #1 : #2}
\newcommand{\universal}[2]{\forall\, #1 : #2}

\newcommand{\heap}{H}
\newcommand{\env}{\rho}
\newcommand{\perms}{\alpha}
\newcommand{\xperms}{\hat{\perms}}
\newcommand{\stack}{\mathcal{S}}
\newcommand{\tlist}{\multiple{t}}
\newcommand{\prog}{\Pi}
\newcommand{\dstate}{\Gamma}

\newcommand{\initsym}{\mathsf{init}}
\newcommand{\finalsym}{\mathsf{final}}
\newcommand{\nilsym}{\mathsf{nil}}

\newcommand{\ffresh}{\operatorname{fresh}}
\newcommand{\fpre}{\operatorname{pre}}
\newcommand{\fpost}{\operatorname{post}}
\newcommand{\fbody}{\operatorname{body}}
\newcommand{\fparams}{\operatorname{params}}
\newcommand{\fpred}{\operatorname{predicate}}
\newcommand{\fpredparams}{\operatorname{predicate\_params}}
\newcommand{\ffuncpre}{\operatorname{func\_pre}}
\newcommand{\ffuncpost}{\operatorname{func\_post}}
\newcommand{\ffuncparams}{\operatorname{func\_params}}
\newcommand{\ffuncbody}{\operatorname{func\_body}}
\newcommand{\fstruct}{\operatorname{struct}}
\newcommand{\fdefault}{\operatorname{default}}
\newcommand{\frem}{\operatorname{rem}}
\newcommand{\fremf}{\operatorname{rem_f}}
\newcommand{\fremfp}{\operatorname{rem_{fp}}}
\newcommand{\falias}{\operatorname{alias}}
\newcommand{\fmodified}{\operatorname{modified}}
\newcommand{\fsep}{\operatorname{sep}}
\newcommand{\fsat}{\operatorname{sat}}
\newcommand{\powerset}[1]{\mathcal{P}(#1)}
\newcommand{\set}[1]{\{ #1 \}}
\newcommand{\iffdef}{\stackrel{\text{def}}{\iff}}
\newcommand{\pfunc}{\rightharpoonup}

\newcommand{\sstate}{\sigma}

\newcommand{\sheap}{\mathsf{H}}
\newcommand{\oheap}{\mathcal{H}}
\newcommand{\scheck}{\mathcal{R}}
\newcommand{\senv}{\gamma}
\newcommand{\pc}{g}
\newcommand{\imp}{\iota} 
\newcommand{\sperms}{\Theta}
\newcommand{\sperm}{\theta}
\newcommand{\svis}{\mathcal{V}}

\newcommand{\semanticsargs}[1][]{#1\end{mathpar}}
\newcommand{\semantics}[3][]{\begin{mathpar}\inferrule[#1]{#2}{#3}\semanticsargs}

\newcounter{defpartnum}[definition]
\newcounter{defsubpartnum}[defpartnum]
\newenvironment{defparts}
{
    \setcounter{defpartnum}{0}
    \setcounter{defsubpartnum}{0}
    \newcommand{\defpart}{\par\indent\refstepcounter{defpartnum}\textit{Part \thedefpartnum.}~}
    
}
{
    \par
}
\renewcommand*\thedefpartnum{\arabic{definition}.\arabic{defpartnum}}

\newcounter{casenum}
\newcounter{subcasenum}[casenum]
\newenvironment{enumcases}
{
    \setcounter{casenum}{0}
    \setcounter{subcasenum}{0}
    \newcommand{\case}{\par\indent\refstepcounter{casenum}\textbf{Case \thecasenum.}~}
    \let\SavedTheHcasenum=\theHcasenum
    \def\theHcasenum{\thelemma.\SavedTheHcasenum}
    \newcommand{\subcase}{\par\indent\refstepcounter{subcasenum}\textit{Case \thecasenum(\thesubcasenum).}~}
}
{
    \par
}
\renewcommand*\thecasenum{\arabic{casenum}}
\renewcommand*\thesubcasenum{\alph{subcasenum}}

\hypersetup{
    colorlinks=true,
    linkcolor = blue
}

\newcommand{\refrule}[1]{\hyperlink{#1}{\TirName {#1}}}

\definecolor{light-gray}{gray}{0.87}
\newcommand{\lightgray}[1]{\colorbox{light-gray}{#1}}

\definecolor{light-purple}{RGB}{229,204,255}

\definecolor{light-yellow}{RGB}{255,228,181}

\definecolor{light-blue}{RGB}{189,215,238}
\newcommand{\lightblue}[1]{\colorbox{light-blue}{#1}}

\definecolor{light-green}{RGB}{152,251,152}

\definecolor{light-red}{RGB}{255,204,204}
\newcommand{\lightred}[1]{\colorbox{light-red}{#1}}

\definecolor{rred}{RGB}{255,153,153}

\definecolor{light-pink}{RGB}{255,204,255}

\definecolor{light-orange}{RGB}{255,204,153}

\definecolor{neon-blue}{RGB}{153,255,255}

\definecolor{neon-yellow}{RGB}{255,255,153}

\setlist[itemize]{leftmargin=14pt}

\lstdefinestyle{c0style}{
    basicstyle=\scriptsize\ttfamily,
	mathescape=true,
	numbers=none,
    numbersep=5pt,
    classoffset=0,
    morekeywords={requires, in, ensures, assert, predicate, invariant, unfold, fold, loop_invariant, acc, unfolding, result},
    keywordstyle=\bfseries\color{Bittersweet},
    classoffset=1,
	morekeywords={for, if, then, else, return, let, void, int, class, while, NULL, alloc, struct, true},	
    keywordstyle=\bfseries\color{MidnightBlue},
    classoffset=0,
    escapeinside={(*@}{@*)}
}
\lstdefinestyle{gobra}{
	language=Go,
    basicstyle=\scriptsize\ttfamily,
    classoffset=0,
    keywordstyle=\bfseries\color{MidnightBlue},
	mathescape=true,
	numbers=none,
    numbersep=5pt,
    classoffset=1,
    morekeywords={requires, ensures, pred, invariant, unfold, fold, loop_invariant, acc, unfolding, nat, in},
    keywordstyle=\bfseries\color{Bittersweet},
    escapeinside={(*@}{@*)},
    commentstyle=\color{gray}
}

\lstdefinelanguage{Viper}{
    basicstyle=\scriptsize\ttfamily,
	mathescape=true,
	numbers=none,
    numbersep=5pt,
    classoffset=0,
    morekeywords={method, field, predicate, function, requires, ensures, unfolding, in, acc, result},
    keywordstyle=\bfseries\color{MidnightBlue},
    classoffset=1,
    morekeywords={Int, Bool, Ref},
    keywordstyle=\bfseries\color{OliveGreen}
}

\newcommand{\sfoot}[2]{\llbracket #1 \rrbracket_{#2}}
\newcommand{\fax}[3]{\operatorname{axiomatize}_{#1}[#2](#3)}
\newcommand{\fts}{\ensuremath{f(\multiple{t}, s)}}

\definecolor{newstuff}{rgb}{0.9, 0.9, 0.9}
\addtocontents{toc}{\protect\setcounter{tocdepth}{0}}

\newcommand{\pe}{\ensuremath{p(\multiple{e})}}
\newcommand{\fe}{\ensuremath{f(\multiple{e})}}

\begin{document}

\title{Gradually Verifying Unfolding Expressions \& Pure Functions}

\author{Hazel Torek}
\affiliation{%
  \institution{Clemson University}
  \city{Clemson}
  \state{South Carolina}
  \country{USA}
}
\email{ctorek@clemson.edu}

\author{Long Tien Nguyen}
\affiliation{%
  \institution{Carnegie Mellon University}
  \city{Pittsburgh}
  \state{Pennsylvania}
  \country{USA}
}
\email{longtien@cs.cmu.edu}

\author{Priyam Gupta}
\affiliation{%
  \institution{ETH Z\"{u}rich}
  \city{Z\"{u}rich}
  \country{Switzerland}
}
\email{priyam23g@gmail.com}

\author{Jenna DiVincenzo}
\affiliation{%
  \institution{Purdue University}
  \city{West Lafayette}
  \state{Indiana}
  \country{USA}
}
\email{jennad@purdue.edu}

\author{Jonathan Aldrich}
\affiliation{%
  \institution{Carnegie Mellon University}
  \city{Pittsburgh}
  \state{Pennsylvania}
  \country{USA}
}
\email{jonathan.aldrich@cs.cmu.edu}

\renewcommand{\shortauthors}{Torek et al.}

\begin{abstract}
    Unfolding expressions, which temporarily unfold a predicate to leverage its owned fields when evaluating a heap-dependent expression, and pure functions, which are heap-dependent functions that can be used in specifications, are used in deductive program verifiers based on implicit dynamic frames, such as Gradual C0, Gobra, Nagini, and SnaKt, to increase the modularity of specifications involving ownership. In this paper, we present the formal semantics for unfolding expressions and pure functions for a static verifier using symbolic execution, extend it for a gradual verifier, and provide a proof of soundness. To support Gradual C0, our proof is in the setting of gradual verification, a deductive program verification system that combines static and dynamic verification to allow partial specifications.  However, because the gradual verifier is a conservative extension of a static verifier, our results also apply to static verifiers that use symbolic execution, such as the Silicon symbolic execution backend for the Viper verification infrastructure used by Gobra, Nagini, and SnaKt.
\end{abstract}

\begin{CCSXML}
<ccs2012>
<concept>
<concept_id>10003752.10003790.10002990</concept_id>
<concept_desc>Theory of computation~Logic and verification</concept_desc>
<concept_significance>500</concept_significance>
</concept>
<concept>
<concept_id>10003752.10003790.10003794</concept_id>
<concept_desc>Theory of computation~Automated reasoning</concept_desc>
<concept_significance>500</concept_significance>
</concept>
<concept>
<concept_id>10003752.10003790.10011742</concept_id>
<concept_desc>Theory of computation~Separation logic</concept_desc>
<concept_significance>500</concept_significance>
</concept>
</ccs2012>
\end{CCSXML}

\ccsdesc[500]{Theory of computation~Logic and verification}
\ccsdesc[500]{Theory of computation~Automated reasoning}
\ccsdesc[500]{Theory of computation~Separation logic}

\keywords{gradual verification, symbolic execution, static verification, implicit dynamic frames, soundness proof, Viper}

\received{9 July 2026}

\maketitle

\section{Introduction}

Implicit dynamic frames (IDF)~\cite{smans2009implicit} is a verification technique related to separation logic that can provide modularity benefits by separating ownership specifications from assertions about heap contents.
In the context of IDF, \emph{unfolding expressions} are a type of expression that temporarily unfolds a predicate instance to gain ownership of additional fields during evaluation of a heap-dependent expression. These are of the form $\sunfolding{\pe}{e_0}$ where $p$ is a predicate, $\multiple{e}$ is the list of arguments passed to the predicate, and $e_0$ is the expression evaluated using the permissions encapsulated in the predicate. Unfolding expressions were originally introduced as part of the Chalice verifier \cite{leino2009verification} to enable more modular expressions of ownership. Similarly, \textit{pure functions} allow developers to define heap-dependent, side effect-free, recursive functions that can be used in specifications. These also increase the modularity of specifications by allowing developers to separate and reuse recurring patterns. Both of these constructs appear in Viper \cite{muller2016viper} and many Viper-based verifiers such as Gobra for Go \cite{wolf2021gobra}, Nagini for Python \cite{eilers2018nagini}, SnaKt for Kotlin \cite{protopapa}, \gco for C0 \cite{divincenzo2025gradual, gupta2025increasing, mutlu2025expanding}. We observe that unfolding expressions and pure functions are used repeatedly throughout the verified packages of the Go standard library \cite{jenny, laux}, the implementation of VerifiedSCION, a next-generation Internet architecture \cite{pereira2024protocols}, and the verified Go implementations of the Needham-Schroeder-Lowe, Diffie-Hellman, and WireGuard protocols \cite{ArquintSchwerhoffMehtaMueller23, ArquintWLSSWBM23}, demonstrating their use in verified codebases from the real-world.


Together, unfolding expressions and pure functions enable design patterns in verified code that greatly improve modularity; for instance, Gobra codebases often define predicates that encapsulate entire data structures, using unfolding expressions to access fields of these data structures and pure functions to represent properties of these data structures. Figure~\ref{fig:pattern} demonstrates an example of this pattern from the implementation of multi-precision signed integers in VerifiedSCION,\footnote{Our example has been slightly adapted for readability. For multi-precision signed integers, the original implementation and usage can be found in \href{https://github.com/viperproject/VerifiedSCION/blob/4524a9d68eea963cfd9a9c2e094f336d3430e78c/verification/dependencies/math/big/int.gobra}{int.gobra} in VerifiedSCION.} and the implementation of lists in the verified Go standard library provides a more complex example of this pattern.\footnote{For lists, the predicate is defined in \href{https://github.com/viperproject/verified_go_stdlib/blob/cbd7513c892592b33d9bcc8af2f678671057f5ab/src/container/list/list_spec.gobra}{list\_spec.gobra} and used in \href{https://github.com/viperproject/verified_go_stdlib/blob/cbd7513c892592b33d9bcc8af2f678671057f5ab/src/container/list/list.go}{list.go} in the verified Go standard library.} Without these constructs, developers would be required to specify ownership of all fields used in the specification and method separately and define separate predicates to express the required recursive properties, which would significantly increase the specification burden.

\begin{figure}
    \centering

    \begin{minipage}{0.42\linewidth}
        \begin{lstlisting}[style=gobra]
type Int struct {
    neg bool // sign
    abs nat  // absolute value
}

pred (i *Int) Mem() {
    acc(i) && i.abs.Mem()
}
        \end{lstlisting}
    \end{minipage}
    \begin{minipage}{0.49\linewidth}
        \begin{lstlisting}[style=gobra]
// In Gobra, requires and ensures clauses precede
// the function/method they are associated with
requires acc(x.Mem())
ensures acc(x.Mem()) &&
        unfolding acc(x.Mem()) in
            len(x.abs) <= 1
            ==> toInt(res) == x.abs.toInt()
func (x *Int) Uint64() (res uint64) {
    ... // returns uint64 representation of x
}
        \end{lstlisting}
    \end{minipage}
    
    \caption{Predicate definition and usage from the implementation of multi-precision signed integers in VerifiedSCION demonstrating a common design pattern enabled by unfolding expressions and pure functions.}
    \label{fig:pattern}
\end{figure}

Accordingly, these are important constructs in increasing the usability and expressiveness of implicit dynamic frames-based verifiers; however, their semantics have not yet been formalized despite existing formalizations \cite{dardinier2025formal, parthasarathy2004towards} and they have not yet been included in a proof of soundness of such a verifier. To address this gap and ensure the correctness of verifiers using this feature, we provide a formalization and proof of soundness for a generalization of static verification called \emph{gradual verification}. Gradual verification is a verification technique introduced by Bader et al. \cite{bader2018gradual} to address the high upfront cost of static verification which, inspired by gradual typing \cite{siek2007gradual}, enables developers to write explicitly partial specifications using the \emph{imprecise formula}, written as $\ttt{?}$, which acts as a placeholder wildcard during static verification. For instance, $\simprecise{x > 9}$ can refer to any non-contradictory strengthening of the formula $x > 9$, such $x \geq 15$. These \emph{imprecise specifications} are backed by a combination of static and dynamic checks \cite{wise2020gradual}. Notably, gradual verification is designed to maintain an important property called the \emph{gradual guarantee}, an idea adapted from gradual typing: reducing the precision of a correct specification will never cause the verifier to report a spurious error. Gradual verification is also a \emph{conservative extension} of static verification, meaning that a static verifier and the gradual verifier extended from it will coincide on all fully-specified programs \cite{wise2020gradual, zimmerman2024sound}.

The initial formulation by Bader et al. lacked the ability to verify programs with recursive data structures allocated on the heap, which limited its practicality for verifying real-world programs. Wise et al. \cite{wise2020gradual} and DiVincenzo et al. \cite{divincenzo2025gradual} later extended gradual verification with these capabilities using implicit dynamic frames \cite{smans2009implicit}, a variation of separation logic \cite{reynolds2002separation}, and recursive abstract predicates \cite{parkinson2005separation}. These extensions were accompanied by an implementation for the C0 subset of the C programming language \cite{arnold2010c0, divincenzo2025gradual} by extending the Silicon symbolic execution backend for Viper \cite{schwerhoff2016advancing, muller2016viper} into \emph{Gradual Viper}.

Since Gradual Viper uses symbolic execution rather than weakest liberal preconditions, Zimmerman et al. \cite{zimmerman2024sound} provided a full formal system corresponding to Gradual Viper, including static and dynamic semantics, definitions of the correspondence between static and dynamic verifier states, and a proof of soundness. Because gradual verification is a conservative extension of its static counterpart, this also proves the soundness of a subset of the underlying Viper features. However, this formalization lacks many other underlying Viper constructs that could improve verifier usability; for instance, Gupta \cite{gupta2025increasing} introduced unfolding expressions and Mutlu \cite{mutlu2025expanding} introduced pure functions to Gradual Viper's implementation, but the formalization and proof of soundness does not yet include either of these features.

Addressing this gap in the static and gradual verification ecosystems, this paper makes the following contributions:

\begin{itemize}
    \item Design and formalization of the symbolic evaluation semantics for unfolding expressions and pure functions in IDF-based verifiers such as Viper.
    \item Design and formalization of the static semantics, dynamic semantics, and their correspondence for unfolding expressions and pure functions in Gradual Viper.
    \item An updated proof of soundness for Gradual Viper, based on the formal system provided by Zimmerman et al \cite{zimmerman2024sound}, which also applies to a subset of Viper's features.
\end{itemize}

With these contributions, we increase our confidence in the correctness of Viper's implementation and Viper-based verification tools that use these constructs, both directly and in translation. Additionally, by expanding upon the gradual verification ecosystem rather than focusing solely on static verification, we maintain the same correctness guarantees while inheriting a rich body of work on both implementation and formalization for gradual verification.

\section{Viper Semantics}

\subsection{Ownership \& Predicates in Viper}

In static verification of imperative programs, ownership reasoning is used to maintain invariants regarding heap contents and aliasing. These reasoning capabilities are established with logics like separation logic \cite{reynolds2002separation}, which requires reasoning about heap ownership and contents simultaneously, or implicit dynamic frames \cite{smans2009implicit}, which enables specifying ownership and contents separately. The Viper verification infrastructure, built on implicit dynamic frames, enables ownership reasoning using several different verification primitives, the most fundamental of which is the \emph{accessibility predicate}. An accessibility predicate, written $\kacc(x.f)$, specifies read and write access to field $f$ of structure $x$.

In verifiers based on implicit dynamic frames, a specification is \emph{framed} if it only accesses heap locations it has permission to access. To account for this requirement regardless of the owned heap locations in a specific verification state, specifications must be \emph{self-framing}, meaning they must specify ownership of all the fields used in the formula to be well-defined. For instance, the specification $x.f \geq 9$ is not self-framed, but the specification $\kacc(x.f) * x.f \geq 9$ is.

In Viper, \emph{predicates} are named, parameterized assertions that are used to encapsulate ownership of and express properties of potentially-unbounded data structures like lists and trees. The logic for specifying these structures is provided by recursive abstract predicates \cite{parkinson2005separation}. However, there are two differing interpretations of these predicates that lead to different semantics for assertions and different ways of specifying ownership \cite{summers2013formal}. Under the \emph{equirecursive} approach, an instance of a predicate is considered equivalent to the information contained in its body. Although this is viewed as more intuitive, this poses a problem in implementation because recursive predicates become unbounded. In contrast, the \emph{isorecursive} approach (used in Viper \cite{schwerhoff2016advancing}) views predicate instances and the information contained therein as distinct, requiring developers to manually exchange between these by \emph{unfolding} and \emph{folding} predicate instances (some verifiers such as VeriFast \cite{jacobs2011verifast} use the terms \textit{opening} and \textit{closing} a predicate, respectively). Accordingly, unfolding expressions temporarily unfold a predicate, using additional permissions from its body to ensure the evaluation of the expression is framed with sufficient permissions. 

\begin{figure}
    \centering
    \begin{minipage}{0.52\linewidth}
    \begin{lstlisting}
struct Node { int data; struct Node *next; };

/*@
predicate sortedList(struct Node *x) =
 (x == NULL) ? true : (
   acc(x->data) && acc(x->next) &&
   sortedList(x->next) && (x->next == NULL ||
     unfolding sortedList(x->next) in
       x->data <= x->next->data)
  );
@*/
    \end{lstlisting}
    \end{minipage}
    \caption{Checking that a list is sorted \cite{gupta2025increasing}.}
    \label{fig:sorted}
\end{figure}

Without using unfolding expressions in combination with pure functions, specifications are more complex and less intuitive; for instance, in Figure~\ref{fig:sorted}, checking whether a linked list is sorted would require adding an additional parameter to the \verb|sortedList| predicate keeping track of the previous value. In some cases, writing the specification without unfolding expressions might even require additional ghost code/auxiliary data to facilitate verification \cite{gupta2025increasing}.

\subsection{Representation}\label{sec:representation}

The static semantics are based on the Silicon backend for Viper \cite{schwerhoff2016advancing}. During symbolic execution, the \emph{symbolic state} $\sstate$ tracks information about the execution of the program, and is formulated as a tuple with the following components:

\begin{itemize}
    \item The \emph{precise heap} $\sheap$ represents distinct heap chunks where ownership is statically verified.
    \item The \emph{symbolic store} $\senv$ maps variables to symbolic values.
    \item The \emph{path condition} $\pc$ identifies the path through branches in the program.
    \item The \emph{visited set} $\svis$ tracks predicates that have already been unfolded and pure functions that have already been evaluated during recursive evaluation of these constructs; the use of this is described in Section~\ref{sec:static-unfolding}.
\end{itemize}

The program is statically verified starting from an initial symbolic state $\sstate$, which is updated throughout the process by performing four fundamental operations:

\begin{itemize}
    \item Symbolically evaluating a symbolic expression $e$ into a symbolic value $t$, resulting in new state $\sstate'$. This is denoted by $\veval{\sstate}{e}{t}{\sstate'}$.
    \item Producing the information from formula $\gform$ into the symbolic heap, resulting in new state $\sstate'$; denoted $\sproduce{\sstate}{\gform}{\sstate'}$. Intuitively, this \emph{adds information} to the state.
    \item Consuming a formula $\gform$ from the symbolic state, resulting in new state $\sstate'$; denoted $\vcons{\sstate}{\gform}{\sstate'}$. Intuitively, this \emph{checks for and removes} information in the state.
    \item Symbolically executing a statement. Notation and definitions of this judgement are provided in the appendix since it is not directly used or extended here.
\end{itemize}

\subsection{Static Semantics for Unfolding Expressions in Viper}\label{sec:static-unfolding}

Because unfolding expressions are a type of expression, we extend the symbolic evaluation judgement, additionally relying on the consume and produce judgements. The example in figure \ref{fig:static-unfolding-example} demonstrates our formalization of the purely static approach in the Silicon backend for Viper.

\begin{figure}[H]
\begin{minipage}{\linewidth}
\begin{lstlisting}[language=Viper,xleftmargin=.2\textwidth,xrightmargin=.2\textwidth]
field val: Int

predicate pos(x: Ref) { acc(x.val) * x.val > 0 }

method caller(x: Ref, ...)
  requires pos(x)
  ensures ...
{
    ... unfolding pos(x) in (x.val < 256) ...
}
\end{lstlisting}
\end{minipage}
    \caption{Example of unfolding expression.}
    \label{fig:static-unfolding-example}
\end{figure}

\newcommand{\kmem}{\ttt{pos}}
\newcommand{\kval}{\ttt{val}}

We start with an initial symbolic state with $\ktrue$ as the path condition and $\emptyset$ as the visited set. Assume we have an object $x$ and an initial symbolic heap containing an instance of the $\ttt{pos}$ predicate applied to $x$. Note that the symbolic heap doesn't directly contain a field chunk pertaining to $x.\kval$ because it is encapsulated by the predicate $\kmem$. We use grey highlights to show the current symbolic state at a given step and blue highlights to show the operations being performed.

{\disableTttResize
\begin{lstlisting}[aboveskip=.25em, belowskip=0.5em,numbers=none,style=c0style]
(*@
\lightgray{$\sstate_1 = \quadruple{\sheap_1}{\senv}{\ktrue}{\emptyset} \quad \sheap_1 = \set{\pair{\kmem}{t_x}} \quad \senv = [\ttt{x} \mapsto t_x]$}
@*)
\end{lstlisting}
}

To begin evaluation of the unfolding expression $\sunfolding{\kmem(x)}{x.\kval < 256}$, we evaluate the argument $x$ passed to the predicate. Because $x$ is mapped to $t_x$ in the symbolic environment, evaluation simply looks up the underlying values and the symbolic state remains unchanged.

{\disableTttResize
\begin{lstlisting}[aboveskip=.25em, belowskip=0.5em,numbers=none,style=c0style]
(*@
\lightblue{$\veval{\sstate_1}{x}{t_x}{\sstate_1}$}
@*)
\end{lstlisting}
}

Then, we must temporarily exchange the predicate instance $\kmem(x)$ for its body to temporarily frame the inner expression $x.\kval < 256$ with additional permissions. To do so, we first consume the predicate instance to ensure that it holds in the current state; this removes the predicate chunk $\pair{\kmem}{t_x}$ from the symbolic heap.

{\disableTttResize
\begin{lstlisting}[aboveskip=.25em, belowskip=0.5em,numbers=none,style=c0style]
(*@
\lightblue{$\vcons{\sstate_1}{\kmem(x)}{\sstate_2}$}
@*)
(*@
\lightgray{$\sstate_2 = \quadruple{\sheap_2}{\senv}{\ktrue}{\emptyset} \quad \sheap_2 = \emptyset \quad \senv = [\ttt{x} \mapsto t_x]$}
@*)
\end{lstlisting}
}

Now, we produce the body of the predicate to obtain the permissions encapsulated within, yielding a symbolic heap containing the field chunk $\triple{\kval}{t_x}{a}$ and the path condition $a > 0$ where $a$ is a fresh symbolic value. This also requires updating the visited set $\svis$ to prevent unbounded recursion; more discussion of the handling of recursive unfolding expressions is provided at the end of this section.

{\disableTttResize
\begin{lstlisting}[aboveskip=.25em, belowskip=0.5em,numbers=none,style=c0style]
(*@
\lightblue{$\sproduce{\sstate_2}{\kacc(x.\kval)}{\sstate_3}$}
@*)
(*@
\lightgray{$\sstate_3 = \quadruple{\sheap_3}{\senv}{a > 0}{\set{\kmem}} \quad \sheap_3 = \set{\triple{\kval}{t_x}{a}} \quad \senv = [\ttt{x} \mapsto t_x]$}
@*)
\end{lstlisting}
}

Now, having exchanged the predicate instance for its body, we have a suitable environment for evaluating the inner expression. Since this evaluation recurs over the operation and requires only evaluating a variable and a literal, the state is unchanged except for having updated the visited set. 

{\disableTttResize
\begin{lstlisting}[aboveskip=.25em, belowskip=0.5em,numbers=none,style=c0style]
(*@
\lightblue{$\veval{\sstate_3[\svis = \emptyset]}{x.\kval < 256}{a < 256}{\sstate_4}$}
@*)
(*@
\lightgray{$\sstate_4 = \quadruple{\sheap_4}{\senv}{a > 0}{\emptyset} \quad \sheap_4 = \set{\triple{\kval}{t_x}{a}} \quad \senv = [\ttt{x} \mapsto t_x]$}
@*)
\end{lstlisting}
}

Now, because the predicate instance must be re-folded after the expression is evaluated, we simply reset the symbolic heap to its initial state to receive our final state.

{\disableTttResize
\begin{lstlisting}[aboveskip=.25em, belowskip=0.5em,numbers=none,style=c0style]
(*@
\lightgray{$\sstate_5 = \quadruple{\sheap_5}{\senv}{a > 0}{\emptyset} \quad \sheap_5 = \set{\pair{\kmem}{t_x}} \quad \senv = [\ttt{x} \mapsto t_x]$}
@*)
\end{lstlisting}
}

In the end, this procedure returns the expression $a < 256$ from the penultimate step. This entire process is described formally by the \textsc{SEvalUnfoldingExplicit} rule in Figure~\ref{fig:viper-unfolding}. Note that, due to the isorecursive semantics of symbolic evaluation, $\fpred(p)$ refers to the body of the predicate $p$ obtained by unfolding once, not by the full unrolling.

The \textsc{SEvalUnfoldingImplicit} rule in Figure~\ref{fig:viper-unfolding} is specific to our handling of recursive unfolding expressions. In cases where an unfolding expression is applied with a predicate that unfolds another instance of the same predicate, following the previously described process in its entirely can result in unbounded recursion. This is exemplified by Figure~\ref{fig:sorted}, where evaluating an unfolding expression with the $\ttt{sorted}$ predicate will then unfold another instance of the same predicate, and the recursion will not terminate because evaluation will branch over both the base and recursive cases to verify all possible paths. To address this, our approach is adapted from that of Schwerhoff for Viper \cite{schwerhoff2016advancing}; using the visited set $\svis$, we track which predicates have already been unfolded. Then, if a predicate being unfolded has already been visited, the recursion is terminated by replacing the result of the entire expression with an uninterpreted symbolic value. Although we don't unfold the predicate due to the possibility of unbounded recursion, we do still check that a predicate instance is present in the heap to ensure that this operation will succeed at run-time. This approach is theoretically incomplete, but it is still sound and this incompleteness is rarely observed in practice since it requires manually unfolding predicates to the depth where the recursion was cut off. Although we have formalized this behavior using a set, it could easily be exchanged for a multiset to enable a specified cutoff depth rather than only allowing a depth of one.

\begin{figure}
  {\footnotesize\disableTttResize
    \begin{mathpar}
        \inferrule[SEvalUnfoldingExplicit]{
            p \notin \svis(\sstate_1) \\
            \multiple{\veval{\sstate_1}{e}{t}{\sstate_2}} \\
            \vcons{\sstate_2}{p(\multiple{e})}{\sstate_3} \\
            \multiple{x} = \fpredparams(p) \\
            \sproduce{\sstate_3[\senv = \senv(\sstate_3)[\multiple{x \mapsto t}], \svis = \svis(\sstate_3) \cup \{ p \}]}{\fpred(p)}{\sstate_4} \\
            \veval{\sstate_4[\senv = \senv(\sstate_3)]}{e_0}{t_0}{\sstate_5} \\
            \sstate_6 = \sstate_5[\svis = \svis(\sstate_1), \sheap = \sheap(\sstate_2)]
        }{
            \veval{\sstate_1}{\sunfolding{p(\multiple{e})}{e_0}}{t_0}{\sstate_6}
        }

        \inferrule[SEvalUnfoldingImplicit]{
            p \in \svis(\sstate_1) \\
            \multiple{\veval{\sstate_1}{e}{t}{\sstate_2}} \\
            \pair{p}{\multiple{t}} \in \sheap(\sstate_2) \\
            t_0 = \ffresh
        }{
            \veval{\sstate}{\sunfolding{p(\multiple{e})}{e_0}}{t_0}{\sstate_2}
        }
    \end{mathpar}
   }

  \caption{Symbolic evaluation rules for unfolding expressions in Viper.}
  \label{fig:viper-unfolding}
\end{figure} 

\subsection{Static Semantics for Pure Functions in Viper}\label{sec:static-pure}

Our approach for the evaluation of pure functions is based on that of Smans et al. in the Smallfoot verifier \cite{smans2010heap} for when the bodies of pure functions are visible. However, in Viper pure functions can have both preconditions and post-conditions, whereas in Smallfoot they can only have preconditions. This requires us to expand our approach to axiomatize not only the result of the function evaluation, but also the post-condition. This also means that we must provide a new definition for well-formedness rather than using the one for methods, since methods may have a formula as their post-condition rather than an expression. Accordingly, we impose the additional constraint that evaluation should succeed in the symbolic state obtained from producing the precondition; this corresponds to the body being framed by the precondition, meaning the precondition contains all permissions necessary for evaluating the body. The use of precondition will also impact our handling of recursive pure functions, described later in this section.

We demonstrate our formalization of the purely static approach of Smans et al. using the example in figure \ref{fig:static-pure-func-example}.

\begin{figure}[H]
\begin{minipage}{\linewidth}
\begin{lstlisting}[language=Viper,xleftmargin=.2\textwidth,xrightmargin=.2\textwidth]
field val: Int

function add(x: Ref, y: Ref): Int
  requires acc(x.val) * acc(y.val)
  ensures (y.val < 0) || (result >= x.val)
  { x.val + y.val }

method caller(...)
  requires ...
  ensures ...
{
    ... add(x, y) ...
}
\end{lstlisting}
\end{minipage}
    \caption{Example of pure function.}
    \label{fig:static-pure-func-example}
\end{figure}

To demonstrate the use of including the post-condition in our approach, we use the post-condition $(y.\kval \geq 0) \implies (\kresult \geq x.\kval)$, so that it does not directly articulate the same information provided by the implementation. In the program, we write this as $(y.\kval < 0) \lor (\kresult \geq x.\kval)$, using the fact that $p \implies q \equiv \lnot p \lor q$ since implications are not formalized in prior work.

Then, assume we have objects $x$ and $y$ and an initial symbolic heap containing the field chunks $\sheap(\sstate_1) = \set{\triple{\kval}{t_x}{a}, \triple{\kval}{t_y}{b}}$, meaning field $\kval$ of object $t_x$ (corresponding to the symbolic variable $x$) has value $a$ and field $\kval$ of object $t_y$ (corresponding to $y$) has value $b$. Note that, since these objects are contained in the symbolic heap, they must not be aliases. 

Now, we demonstrate the process for symbolically evaluating these in our formalization. We start with a symbolic state containing only the necessary heap chunks and variable mappings for our assumptions that is otherwise empty. 

{\disableTttResize
\begin{lstlisting}[aboveskip=.25em, belowskip=0.5em,numbers=none,style=c0style]
(*@
\lightgray{$\sstate_1 = \quadruple{\sheap_1}{\senv}{\ktrue}{\emptyset} \quad \sheap_1 = \set{\triple{\kval}{t_x}{a}, \triple{\kval}{t_y}{b}} \quad \senv = [\ttt{x} \mapsto t_x, \ttt{y} \mapsto t_y]$}
@*)
\end{lstlisting}
}

Now, we begin with evaluation of the pure function application $\ttt{add}(x, y)$. First, we evaluate each of the arguments to the function application into corresponding symbolic values. Since these arguments are already variables in the symbolic environment, evaluation simply results in these underlying values and the symbolic state is not changed.

{\disableTttResize
\begin{lstlisting}[aboveskip=.25em, belowskip=0.5em,numbers=none,style=c0style]
(*@
\lightblue{$\veval{\sstate_1}{x}{t_x}{\sstate_1}$} \qquad \lightblue{$\veval{\sstate_1}{y}{t_y}{\sstate_1}$}
@*)
\end{lstlisting}
}

Then, since pure functions can depend on values in the heap, we compute a \emph{heap snapshot} to implicitly pass as an argument to $\ttt{add}$ that contains all the fields used in the function. Since the bodies and post-conditions of pure functions must be framed by the precondition, we can generate this snapshot using exclusively the precondition. In Smallfoot, these heap snapshots are obtained directly by consuming the precondition. However, since the formalization from Zimmerman et al. does not explicitly provide machinery to take heap snapshots, we introduce the notion of a \emph{symbolic footprint} of a formula $\phi$ to do so, written $\sfoot{\phi}{\sstate}$, which recurses over the structure of a formula to determine the heap chunks used when producing or consuming it.

{\disableTttResize
\begin{lstlisting}[aboveskip=.25em, belowskip=0.5em,numbers=none,style=c0style]
(*@
\lightblue{$\sfoot{\kacc(x.\kval) * \kacc(y.\kval)}{\sstate_1} 
= \sfoot{\kacc(x.\kval)}{\sstate_1} \cup \sfoot{\kacc(y.\kval)}{\sstate_1} 
= \set{\triple{\kval}{t_x}{a}, \triple{\kval}{t_y}{b}}$}
@*)
\end{lstlisting}
}

Then, to ensure the precondition holds, we consume the function's precondition. In the resulting symbolic state $\sstate_2$, access to both heap chunks has been removed. Because of the separating conjunction $*$, the consume operation also ensures that the removed heap chunks aren't aliases.

{\disableTttResize
\begin{lstlisting}[aboveskip=.25em, belowskip=0.5em,numbers=none,style=c0style]
(*@
\lightblue{$\vcons{\sstate_1}{\kacc(x.\kval) * \kacc(y.\kval)}{\sstate_2}$}
@*)
(*@
\lightgray{$\sstate_2 = \quadruple{\sheap_2}{\senv}{\ktrue}{\emptyset} \quad \sheap_2 = \emptyset \quad \senv = [\ttt{x} \mapsto t_x, \ttt{y} \mapsto t_y]$}
@*)
\end{lstlisting}
}

After consuming and ensuring the precondition holds, we produce the precondition so that we have a suitable environment for evaluating the function's body, resulting in a symbolic heap that again contains two heap chunks corresponding to the fields of our two objects.

{\disableTttResize
\begin{lstlisting}[aboveskip=.25em, belowskip=0.5em,numbers=none,style=c0style]
(*@
\lightblue{$\sproduce{\sstate_2}{\kacc(x.\kval) * \kacc(y.\kval)}{\sstate_3}$}
@*)
(*@
\lightgray{$\sstate_3 = \quadruple{\sheap_3}{\senv}{\ktrue}{\emptyset} \quad \sheap_3 = \set{\triple{\kval}{t_x}{a'}, \triple{\kval}{t_y}{b'}} \quad \senv = [\ttt{x} \mapsto t_x, \ttt{y} \mapsto t_y]$}
@*)
\end{lstlisting}
}

However, note that between $\sheap_1$ and $\sheap_3$, $a \neq a'$ and $b \neq b'$. Consuming and then producing the same specification ensures that we will end up with field chunks corresponding to the same field accesses, but it is critical to observe that these field chunks \emph{will contain distinct symbolic values}, since producing an accessibility predicate will initialize fresh symbolic values. So that the function body is evaluated in a way that respects constraints on the original symbolic values, we must reset these field chunks to use the correct values.

{\disableTttResize
\begin{lstlisting}[aboveskip=.25em, belowskip=0.5em,numbers=none,style=c0style]
(*@
\lightblue{$\sheap_4 = \sheap(\sstate_1)$}
@*)
(*@
\lightgray{$\sstate_4 = \quadruple{\sheap_4}{\senv}{\ktrue}{\emptyset} \quad \sheap_4 = \set{\triple{\kval}{t_x}{a}, \triple{\kval}{t_y}{b}} \quad \senv = [\ttt{x} \mapsto t_x, \ttt{y} \mapsto t_y]$}
@*)
\end{lstlisting}
}

Now, we are prepared to evaluate the body of the function. Since the evaluation of the body involves only looking up field chunks in the symbolic heap and looking up variables in the symbolic environment, the symbolic heap, symbolic environment, and path condition remain unchanged. However, we add the function name $\ttt{add}$ to the visited set in the resulting state to track which function bodies have been evaluated and prevent unbounded recursion.

{\disableTttResize
\begin{lstlisting}[aboveskip=.25em, belowskip=0.5em,numbers=none,style=c0style]
(*@
\lightblue{$\veval{\sstate_4[\svis = \set{\ttt{add}}]}{x.\kval + y.\kval}{a + b}{\sstate_5}$}
@*)
(*@
\lightgray{$\sstate_5 = \quadruple{\sheap_5}{\senv}{\ktrue}{\set{\ttt{add}}} \quad \sheap_5 = \set{\triple{\kval}{t_x}{a}, \triple{\kval}{t_y}{b}} \quad \senv = [\ttt{x} \mapsto t_x, \ttt{y} \mapsto t_y]$}
@*)
\end{lstlisting}
}

\newcommand{\fadd}{\ttt{add}}
Instead of directly returning the result of evaluation, we now encode the result of evaluation into the path condition because of the implicit dependency on the heap snapshot. Accordingly, we generate a fresh symbolic variable $\fadd(a, b, s)$, which represents the dependencies on the heap and the arguments, and require in the path condition that $\fadd(a, b, s) = a + b$. We additionally reset the value of $\svis$ because we only care to cut off recursive function calls, not sequential ones.

{\disableTttResize
\begin{lstlisting}[aboveskip=.25em, belowskip=0.5em,numbers=none,style=c0style]
(*@
\lightblue{$\fadd(a, b, s) = \ffresh$}
@*)
(*@
\lightgray{$\sstate_6 = \quadruple{\sheap_6}{\senv}{\fadd(a, b, s) == a + b}{\emptyset} \quad \sheap_6 = \set{\triple{\kval}{t_x}{a}, \triple{\kval}{t_y}{b}} \quad \senv = [\ttt{x} \mapsto t_x, \ttt{y} \mapsto t_y]$}
@*)
\end{lstlisting}
}

Furthermore, the post-condition may contain further restrictions or information about the resulting value. However, in its current form, the post-condition is an arbitrary expression containing field accesses and the $\kresult$ variable, so we must \emph{axiomatize} it to include it in the post-condition. Accordingly, we introduce a new function $\fax{\sstate}{\fadd(a, b, s)}{e}$, which recurses over the expression $e$, substituting $\kresult$ with the symbolic variable $\fadd(a, b, s)$ and substituting field accesses for their underlying symbolic values from the snapshot. 

{\disableTttResize
\begin{lstlisting}[aboveskip=.25em, belowskip=0.5em,numbers=none,style=c0style]
(*@
\lightblue{$\fax{\sstate_6}{\fadd(a, b, s)}{(y.\kval < 0) \kor (\kresult >= x.\kval)}
= (a < 0) \kor (\fadd(a, b, s) >= b)
$}@*)
(*@
\lightgray{$\sstate_7 = \quadruple{\sheap_7}{\senv}{(\fadd(a, b, s) == a + b) \kand ((a < 0) \kor (\fadd(a, b, s) >= b))}{\emptyset} \quad \sheap_7 = \set{\triple{\kval}{t_x}{a}, \triple{\kval}{t_y}{b}} \quad \senv = [\ttt{x} \mapsto t_x, \ttt{y} \mapsto t_y]$}
@*)
\end{lstlisting}
}

Finally, we return the generated symbol $\fadd(a, b, s)$, which is constrained by the path condition. 

Formally, this process can be represented by the rules in Figure~\ref{fig:viper-pure}. Similarly to unfolding expressions, pure functions are also susceptible to unbounded verification during static verification since a recursive function evaluation will branch over both the base case and recursive case due to the use of symbolic values rather than concrete values. To address recursive functions, we use the same approach as Section~\ref{sec:static-unfolding}, where \textsc{SEvalPureExplicit} refers to the case where the function has not yet recursively called itself and the body is evaluated and \textsc{SEvalPureImplicit} refers to the case where the function is being recursively called and the body is not evaluated to prevent unbounded recursion. This approach is theoretically incomplete; however, we axiomatize the post-condition of the recursive application to minimize this drawback.

Compared to the approach detailed here, Viper's symbolic execution engine uses a more complex approach based on passing axioms to the underlying SMT solver to enable more complete quantifier triggering strategies, which increases performance when functions with large bodies are called repeatedly or when using recursive functions \cite{schwerhoff2016advancing}. We instead opt for the strategy described by Smans et al. because the Zimmerman et al. formalization does not directly interface with the underlying SMT solver beyond implicitly assuming its soundness; since the path condition is provided to the SMT solver during verification, we instead opt to emulate this behavior by adding these axioms to the path condition.

\begin{figure}
    \centering
    {\footnotesize\disableTttResize
    \begin{mathpar}
        \inferrule[SEvalPureExplicit]{
            f \notin \svis(\sstate_1) \\
            \multiple{\veval{\sstate_1}{e}{t}{\sstate_2}} \\
            \multiple{x} = \fparams(f) \\
            s = \sfoot{\fpre(f)}{{\sstate_2}} \\
            \vcons{\sstate_2[\senv = \senv(\sstate_2)[\multiple{x \mapsto t}]]}{\fpre(f)}{\sstate_3} \\
            \sproduce{\sstate_3}{\fpre(f)}{\sstate_4} \\
            \sheap' = \{ \triple{f}{t_e}{t'} : \triple{f}{t_e}{t} \in \sheap(\sstate_4), \triple{f}{t_e}{t'} \in \sheap(\sstate_2) \} \\
            \veval{\sstate_4[\sheap = \sheap', \svis = \svis(\sstate_3) \cup \{ f \}]}{\fbody(f)}{t'}{\sstate_5}{\scheck_3} \\
            f(\multiple{t}, s) = \ffresh \\
            \sstate_6 = \sstate_2[\pc = \pc(\sstate_2) \kand (f(\multiple{t}, s) \keq t') \kand \fax{\sstate}{\fts}{\fpost(f)}]
        }{
            \veval{\sstate_1}{\fe}{f(\multiple{t}, s)}{\sstate_6}
        }
    
        \inferrule[SEvalPureImplicit]{
            f \in \svis(\sstate_1) \\
            \multiple{\veval{\sstate_1}{e}{t}{\sstate_2}} \\
            \multiple{x} = \fparams(f) \\
            s = \sfoot{\fpre(f)}{{\sstate_2}} \\
            \vcons{\sstate_2[\senv = \senv(\sstate_2)[\multiple{x \mapsto t}]]}{\fpre(f)}{\sstate_3} \\
            f(\multiple{t}, s) = \ffresh \\
            \sstate_4 = \sstate_2[\pc = \pc(\sstate_2) \kand \fax{\sstate}{\fts}{\fpost(f)}]
        }{
            \veval{\sstate_1}{\fe}{f(\multiple{t}, s)}{\sstate_4}
        }
    \end{mathpar}
    }
    \caption{Symbolic evaluation rules for pure function applications in Viper.}
    \label{fig:viper-pure}
\end{figure}

\section{Gradual Viper Semantics}

After gradual verification was implemented for the C0 subset of the C programming language and the underlying Gradual Viper verifier by DiVincenzo et al. \cite{divincenzo2025gradual}, it was proved to be sound by Zimmerman et al. \cite{zimmerman2024sound}. This work differs from the previous formalizations of gradual verification \cite{bader2018gradual, wise2020gradual} in that it specifically focuses on the Silicon symbolic execution backend for Viper rather than the weakest liberal preconditions approach used in previous iterations and that the semantics are formalized in non-deterministic inference rules rather than continuation passing-style rules. Overall, the formalization is structured as several components: the static semantics, the dynamic semantics, the correspondence, and the proof of soundness.

In this section, we provide and examine the formal semantics for symbolic evaluation, run-time evaluation, and framing alongside the definition of exact footprints for unfolding expressions. Additionally, we describe the changes we've made to the definitions of correspondence between static and dynamic states to account for the addition of unfolding expressions.

\subsection{Representation}\label{sec:gv-representation}

Gradual Viper enables imprecise specifications through the use of the $\ttt{?}$ operator; accordingly, we expand our definition of formulae to include imprecise formulas, written $\gform = \simprecise{\phi}$. In our context, this is noteworthy because an imprecise specification represents any non-contradictory strengthening of the precise portion, meaning that formulas like $\simprecise{x.f \geq 2}$ can also be strengthened with accessibility predicates ($\kacc(x.f) * x.f \geq 2$) or with arbitrary unfolding of user-defined predicates ($\sunfolding{\ttt{pred}(x)}{x.f >= 2}$ where $\ttt{pred}(x)$ specifies access to $x.f$) \cite{wise2020gradual}.

\subsubsection{Symbolic State}
Because a symbolic state can now be imprecise, the state now includes an \emph{imprecision flag}, written $\imp$, representing whether the current state is imprecise. Additionally, to store optimistic assumptions, the symbolic state also includes a \emph{optimistic heap}, written $\oheap$.

\begin{itemize}
    \item The \emph{imprecise flag} $\imp \in \{ \top, \bot \}$ indicates whether a specification containing $\ttt{?}$ was used during symbolic execution and thus whether the current state can make optimistic assumptions.
    \item The \emph{optimistic heap} $\oheap$ represents field chunks and predicate chunks that may not be distinct where ownership is optimistically assumed. If the current state is precise, this must be empty.
    \item The precise heap $\sheap$, symbolic store $\senv$, path condition $\pc$, and visited set $\svis$ behave as described previously.
\end{itemize}

The fundamental operations describing static verification are the same; however, because Gradual Viper also includes imprecise specifications and optimistic assumptions, these operations can now also produce sets of \emph{run-time checks}, written $\scheck$. Optimistic assumptions about ownership are backed by these runtime checks, meaning that ownership also has to be tracked dynamically rather than only statically. This specifically poses a challenge for methods with imprecise preconditions, since this requires taking ownership of all fields owned by the caller since any might be used by the callee \cite{wise2020gradual}.

\begin{itemize}
    \item Symbolically evaluating a symbolic expression $e$ into a symbolic value $t$, resulting in new state $\sstate'$ and run-time checks $\scheck$. This is denoted by $\seval{\sstate}{e}{t}{\sstate'}{\scheck}$. The run-time checks $\scheck$ will ensure that all fields and predicate instances necessary to evaluate the expression are available.
    \item Producing the information from formula $\gform$ into the symbolic heap, resulting in new state $\sstate'$; denoted $\sproduce{\sstate}{\gform}{\sstate'}$.
    \item Consuming a specification $\gform$ from the symbolic state, resulting in new state $\sstate'$ and run-time checks $\scheck$; denoted $\scons{\sstate}{\gform}{\sstate'}{\scheck}$. The run-time checks $\scheck$ will ensure that the specification holds if optimistic assumptions are made.
    \item Symbolically executing a statement (notation unnecessary since we neither extend nor rely upon this judgement).
\end{itemize}

The semantics for these operations will coincide with those for Silicon for when the program being verified is fully precise, which will result in no run-time checks; the formalization of Gradual Viper conservatively extends these to account for cases with imprecise specifications, resulting optimistic assumptions, and runtime checks.

\subsubsection{Dynamic State}
The execution semantics for \gco were derived from those of of the original \co language \cite{arnold2010c0}. In addition to these, the \gco dynamic semantics include the definition of the \emph{footprint} of a formula, which is the set of permissions necessary to assert that formula. 

Analogous to how the symbolic state tracks information about a symbolic execution path, we track information about the execution of the program at run time with a heap $\heap$, an environment $\rho$ mapping variables to values, and a permission set $\alpha$ tracking the owned fields. There are three main operations at run time that use this information.

\begin{itemize}
    \item Evaluating an expression $e$ into a concrete value $t$, denoted by $\eval{\heap}{\env}{e}{t}$.
    \item Asserting a formula $\gform$, denoted by $\assertion{\heap}{\perms}{\env}{\gform}$.
    \item Executing a statement (again, notation elided for simplicity).
\end{itemize}

\subsection{Static Semantics for Unfolding Expressions in Gradual Viper}

With the introduction of imprecise specifications, there are several cases in which the behavior of unfolding expressions change, depending on whether the initial state is precise, whether the body of the predicate is precise, and whether the predicate recursively unfolds another instance of itself. If the predicate has not recursively unfolded another instance of itself, the following cases apply.

\begin{itemize}
    \item If the initial state is precise, then the optimistic heap must remain empty regardless of whether the predicate being unfolded is precise or not. This is because, in a precise state, the optimistic heap must remain empty to ensure soundness, since the optimistic heap contains assumptions made from an imprecise state. This is represented by the \textsc{SEvalUnfoldingPrecise} rule in Figure~\ref{fig:gv-unfolding}.

    \item If the initial state is imprecise and the predicate body is precise, the precise heap is reset as usual but the information from the final optimistic heap and an instance of the predicate are retained in the optimistic heap. The assumptions from the final optimistic heap can be retained since they resulted from imprecision in the initial state and not the predicate body -- since the predicate body must be re-folded, its resulting optimistic assumptions must be discarded. The predicate instance is also included because it must be optimistically assumed for framing during run-time evaluation. This is represented by the \textsc{SEvalUnfoldingImpreciseA} rule.

    \item If both the initial state and the predicate body are precise, then the assumptions in the final optimistic heap are discarded because those assumptions could have arisen as a result of the imprecision in the predicate rather than the imprecision in the initial state. Conservatively, our design assumes that these assumptions were made resulting from the imprecision in the predicate body and thus must be removed when the predicate is re-folded. This is represented by the \textsc{SEvalUnfoldingImpreciseB} rule.
\end{itemize}

Although we could naively reset both the precise and optimistic heaps in any case, our design safely decreases the run-time overhead by retaining assumptions in the optimistic heap if possible.

Our handling of predicates that recursively unfold another instance of themselves is similar to the approach described in Section~\ref{sec:static-unfolding}, where the \textsc{SEvalUnfoldingImplicit} rule from Figure~\ref{fig:viper-unfolding} applies to cut off unbounded recursion. However, in the gradual case, additional complications arise from the possibility of an imprecise initial state, meaning that we have to account for the following cases:

\begin{itemize}
    \item The predicate instance is already in either the precise or optimistic heap. In this case, the unfolding operation will succeed, since access to the predicate instance has been statically proven. This behavior is represented by the \textsc{SEvalUnfoldingImplicitPrecise} rule.

    \item The initial state is imprecise and the predicate instance isn't in either heap. In this case, the unfolding operation will succeed, since access to the predicate instance can be optimistically assumed by adding a matching run-time check. This behavior is represented by the \textsc{SEvalUnfoldingImplicitImprecise} rule.

    \item The initial state is precise and the predicate instance isn't in either heap. In this case, the unfolding operation will not succeed since the predicate instance isn't available statically or through optimistic assumptions. This behavior is represented by the \textsc{SEvalUnfoldingImplicitFailure} rule, since returning an unsatisfiable run-time check set $\set{\bot}$ corresponds to a failure of static verification.
\end{itemize}

In order to avoid unbounded recursion in the rules, we don't actually unfold the predicate instance in any of these cases; however, soundness still requires checking whether the instance is available. Otherwise, a recursive unfolding expression might statically verify when the predicate being unfolded doesn't hold statically or at run time.

In Figure~\ref{fig:gradual-unfolding-example}, we specifically examine an example with a precise initial state and an imprecise predicate being unfolded to demonstrate our design of the unfolding expression semantics.

\begin{figure}[H]
\begin{minipage}{\linewidth}
\begin{lstlisting}[language=Viper,xleftmargin=.2\textwidth,xrightmargin=.2\textwidth]
field val: Int

predicate pos(x: Ref) { (*@\lightred{\ttt{?}}@*) * x.val > 0 }

method caller(x: Ref, ...)
  requires pos(x)
  ensures ...
{
    ... unfolding pos(x) in (x.val < 256) ...
}
\end{lstlisting}
\end{minipage}
    \caption{Example of unfolding expression with imprecise predicate.}
    \label{fig:gradual-unfolding-example}
\end{figure}

To demonstrate the importance of the imprecise predicate and the optimistic assumptions it enables, we assume that we have objects $t_y$ and $t_z$ with an initial symbolic heap $\sheap(\sstate_1) = \set{\pair{\kmem}{t_x}, \triple{\kval}{t_y}{b}}$, meaning predicate $\kmem$ holds with argument $t_x$ and field $\kval$ of object $t_y$ has value $b$. Although $t_y$ isn't used in the body of the predicate, we include it to highlight how the gradual case interacts with the heap. Again, we start with a precise (represented by $\bot$) symbolic state that contains only the described heap chunk and variable mappings for our assumptions.

{\disableTttResize
\begin{lstlisting}[aboveskip=.25em, belowskip=0.5em,numbers=none,style=c0style]
(*@
\lightgray{$\sstate_1 = \sextuple{\bot}{\sheap_1}{\oheap_1}{\senv}{\ktrue}{\emptyset} \quad \sheap_1 = \set{\pair{\kmem}{t_x}, \triple{\kval}{t_y}{b}} \quad \oheap_1 = \emptyset \quad \senv = [\ttt{x} \mapsto t_x, \ttt{y} \mapsto t_y]$}
@*)
\end{lstlisting}
}

We evaluate the predicate argument, resulting in the same symbolic state. However, as described in Section~\ref{sec:gv-representation}, the form of the evaluation judgement has changed to include an additional set of run-time checks accumulated through evaluation. In this case, the set is empty because the evaluation only requires looking up a variable in the symbolic environment. Note that even though we do not have a heap chunk corresponding to $x.\kval$, we can evaluate $x$ itself since we are only looking up the underlying value $t_x$ in the symbolic environment, not accessing a field of $t_x$.

{\disableTttResize
\begin{lstlisting}[aboveskip=.25em, belowskip=0.5em,numbers=none,style=c0style]
(*@
\lightblue{$\seval{\sstate_1}{x}{t_x}{\sstate_1}{\emptyset}$}
@*)
\end{lstlisting}
}

Now, we consume an instance of the predicate. Since we are currently in a precise state and consuming a predicate instance does not interact with the imprecise body, we remain in a precise state afterwards. Note that this also includes a set of run-time checks, which is again empty because the predicate instance is statically proven.

{\disableTttResize
\begin{lstlisting}[aboveskip=.25em, belowskip=0.5em,numbers=none,style=c0style]
(*@
\lightblue{$\scons{\sstate_1}{\kmem(t_x)}{\sstate_2}{\emptyset}$}
@*)
(*@
\lightgray{$\sstate_2 = \sextuple{\bot}{\sheap_2}{\oheap_2}{\senv}{\ktrue}{\emptyset} \quad \sheap_2 = \set{\triple{\kval}{t_y}{b}} \quad \oheap_2 = \emptyset \quad \senv = [\ttt{x} \mapsto t_x, \ttt{y} \mapsto t_y]$}
@*)
\end{lstlisting}
}

To achieve the desired unfolding, we now produce the predicate's body so that the information contained therein is accessible while evaluating the inner expression. We also set the visited set to $\set{\kmem}$ to avoid unbounded recursion if the predicate unfolds itself in its body. Because the predicate body is imprecise, this results in an imprecise state; while evaluating the inner expression, this also optimistically assumes access to $x.\kval$, adding it to the optimistic heap. 


{\disableTttResize
\begin{lstlisting}[aboveskip=.25em, belowskip=0.5em,numbers=none,style=c0style]
(*@
\lightblue{$\sproduce{\sstate_2[\svis = \set{\kmem}]}{x.\kval > 0}{\sstate_3}$}
@*)
(*@
\lightgray{$\sstate_3 = \sextuple{\top}{\sheap_3}{\oheap_3}{\senv}{a > 0}{\set{\kmem}} \quad \sheap_3 = \set{\triple{\kval}{t_y}{b}} \quad \oheap_3 = \set{\triple{\kval}{t_x}{a}} \quad \senv = [\ttt{x} \mapsto t_x, \ttt{y} \mapsto t_y]$}
@*)
\end{lstlisting}
}

Now, we are ready to evaluate the inner expression. Again, the state remains unchanged (except for having reset $\svis$) and there are no run-time checks because we simply look up $x.\kval$ in the optimistic heap.

{\disableTttResize
\begin{lstlisting}[aboveskip=.25em, belowskip=0.5em,numbers=none,style=c0style]
(*@
\lightblue{$\seval{\sstate_3[\svis = \emptyset]}{x.\kval < 256}{a < 256}{\sstate_4}{\emptyset}$}
@*)
(*@
\lightgray{$\sstate_4 = \sextuple{\top}{\sheap_4}{\oheap_4}{\senv}{a > 0}{\emptyset} \quad \sheap_4 = \set{\triple{\kval}{t_y}{b}} \quad \oheap_4 = \set{\triple{\kval}{t_x}{a}} \quad \senv = [\ttt{x} \mapsto t_x, \ttt{y} \mapsto t_y]$}
@*)
\end{lstlisting}
}

Because the imprecision arose as a result of unfolding an imprecise predicate, which must now be re-folded, we must also discard the optimistic assumptions that were made while it was unfolded. Because the initial state was precise, the optimistic heap must be emptied, and we also must reset the precise heap so that it once again contains the predicate instance and the fields it initially held.

{\disableTttResize
\begin{lstlisting}[aboveskip=.25em, belowskip=0.5em,numbers=none,style=c0style]
(*@
\lightblue{$\sheap_5 = \sheap_1 \quad \oheap_5 = \emptyset$}
@*)
(*@
\lightgray{$\sstate_5 = \sextuple{\top}{\sheap_5}{\oheap_5}{\senv}{a > 0}{\emptyset} \quad \sheap_5 = \set{\pair{\kmem}{t_x}, \triple{\kval}{t_y}{b}} \quad \oheap_5 = \emptyset \quad \senv = [\ttt{x} \mapsto t_x, \ttt{y} \mapsto t_y]$}
@*)
\end{lstlisting}
}

Now, we can return the expression $a < 256$ from the penultimate step. 


\begin{figure}[t]
  {\footnotesize\disableTttResize
    \begin{mathpar}
        \inferrule[SEvalUnfoldingPrecise]{
                \lnot \imp(\sstate_1) \\
                p \notin \svis(\sstate_1) \\
                \multiple{\seval{\sstate_1}{e}{t}{\sstate_2}{\scheck_1}} \\
                \scons{\sstate_2}{p(\multiple{e})}{\sstate_3}{\scheck_2} \\
                \multiple{x} = \fpredparams(p) \\
                \sproduce{\sstate_3[\senv = \senv(\sstate_3)[\multiple{x \mapsto t}], \svis = \svis(\sstate_3) \cup \{ p \}]}{\fpred(p)}{\sstate_4} \\
                \seval{\sstate_4[\senv = \senv(\sstate_3)]}{e_0}{t_0}{\sstate_5}{\scheck_3} \\
                \sstate_6 = \sstate_5[\svis = \svis(\sstate_1), \sheap = \sheap(\sstate_2), \oheap = \emptyset]
            }{
                \seval{\sstate_1}{\sunfolding{p(\multiple{e})}{e_0}}{t_0}{\sstate_6}{\scheck_1 \cup \scheck_2 \cup \scheck_3}
            }

            \inferrule[SEvalUnfoldingImpreciseA]{
                \imp(\sstate_1) \\
                \fpred(p)\; \text{is precise} \\
                p \notin \svis(\sstate_1) \\
                \multiple{\seval{\sstate_1}{e}{t}{\sstate_2}{\scheck_1}} \\
                \scons{\sstate_2}{p(\multiple{e})}{\sstate_3}{\scheck_2} \\
                \multiple{x} = \fpredparams(p) \\
                \sproduce{\sstate_3[\senv = \senv(\sstate_3)[\multiple{x \mapsto t}], \svis = \svis(\sstate_3) \cup \{ p \}]}{\fpred(p)}{\sstate_4} \\
                \seval{\sstate_4[\senv = \senv(\sstate_3)]}{e_0}{t_0}{\sstate_5}{\scheck_3} \\
                \oheap' = \oheap(\sstate_2) \cup \oheap(\sstate_5) \cup \{ \pair{p}{\multiple{t}} \} \\
                \sstate_6 = \sstate_5[\svis = \svis(\sstate_1), \sheap = \sheap(\sstate_2), \oheap = \oheap']
            }{
                \seval{\sstate_1}{\sunfolding{p(\multiple{e})}{e_0}}{t_0}{\sstate_6}{\scheck_1 \cup \scheck_2 \cup \scheck_3}
            }

            \inferrule[SEvalUnfoldingImpreciseB]{
                \imp(\sstate_1) \\
                \fpred(p)\; \text{is imprecise} \\
                p \notin \svis(\sstate_1) \\
                \multiple{\seval{\sstate_1}{e}{t}{\sstate_2}{\scheck_1}} \\
                \scons{\sstate_2}{p(\multiple{e})}{\sstate_3}{\scheck_2} \\
                \multiple{x} = \fpredparams(p) \\
                \sproduce{\sstate_3[\senv = \senv(\sstate_3)[\multiple{x \mapsto t}], \svis = \svis(\sstate_3) \cup \{ p \}]}{\fpred(p)}{\sstate_4} \\
                \seval{\sstate_4[\senv = \senv(\sstate_3)]}{e_0}{t_0}{\sstate_5}{\scheck_3} \\
                \sstate_6 = \sstate_5[\svis = \svis(\sstate_1), \sheap = \sheap(\sstate_2), \oheap = \oheap(\sstate_2) \cup \{ \pair{p}{\multiple{t}} \}]
            }{
                \seval{\sstate_1}{\sunfolding{p(\multiple{e})}{e_0}}{t_0}{\sstate_6}{\scheck_1 \cup \scheck_2 \cup \scheck_3}
            }

            \inferrule[SEvalUnfoldingImplicitPrecise]{
                p \in \svis(\sstate_1) \\
                \seval{\sstate_1}{e}{t}{\sstate_2}{\scheck} \\
                \pair{p}{\multiple{t}} \in \sheap(\sstate_2) \cup \oheap(\sstate_2) \\
                t_0 = \ffresh
            }{
                \seval{\sstate_1}{\sunfolding{p(\multiple{e})}{e_0}}{t_0}{\sstate_2}{\scheck}
            }

            \inferrule[SEvalUnfoldingImplicitImprecise]{
                p \in \svis(\sstate_1) \\
                \seval{\sstate_1}{e}{t}{\sstate_2}{\scheck} \\
                \imp(\sstate_2) \\
                \pair{p}{\multiple{t}} \notin \sheap(\sstate_2) \cup \oheap(\sstate_2) \\
                \oheap' = \oheap(\sstate_2) \cup \set{\pair{p}{\multiple{t}}} \\
                \scheck' = \scheck \cup \set{\pair{p}{\multiple{t}}} \\
                \sstate_3 = \sstate_2[\oheap = \oheap'] \\
                t_0 = \ffresh
            }{
                \seval{\sstate_1}{\sunfolding{\pe}{e_0}}{t_0}{\sstate_3}{\scheck'}
            }

            \inferrule[SEvalUnfoldingImplicitFailure]{
                p \in \svis(\sstate_1) \\
                \seval{\sstate_1}{e}{t}{\sstate_2}{\scheck} \\
                \lnot \imp(\sstate_2) \\
                \pair{p}{\multiple{t}} \notin \sheap(\sstate_2) \cup \oheap(\sstate_2) \\
                t_0 = \ffresh
            }{
                \seval{\sstate_1}{\sunfolding{\pe}{e_0}}{t_0}{\sstate_2}{\set{\bot}}
            }
    \end{mathpar}
  }
  \caption{Symbolic evaluation rules for unfolding expressions in Gradual Viper.}
  \label{fig:gv-unfolding}
\end{figure}

\subsection{Static Semantics for Pure Functions in Gradual Viper}

To demonstrate our contribution in the static setting, in Figure~\ref{fig:gradual-pure-func-example} we use an example similar to the fully-static example used in Section~\ref{sec:static-pure}, but with an imprecise precondition.

\begin{figure}[H]
\begin{minipage}{\linewidth}
\begin{lstlisting}[language=Viper,xleftmargin=.2\textwidth,xrightmargin=.2\textwidth]
field val: Int

function add(x: Ref, y: Ref): Int
  requires (*@\lightred{\ttt{?}}@*) * acc(y.val)
  ensures (y.val < 0) || (result >= x.val)
  { x.val + y.val }

method caller(...)
  requires ...
  ensures ...
{
    ... add(x, y) ...
}
\end{lstlisting}
\end{minipage}
    \caption{Example of pure function with imprecise precondition.}
    \label{fig:gradual-pure-func-example}
\end{figure}

To demonstrate the behavior of the imprecise precondition, we assume an initial symbolic heap $\sheap(\sstate_1) = \set{\triple{\kval}{t_y}{b}, \triple{\kval}{t_z}{c}}$ as in the previous example. Although $t_z$ again isn't used in our code sample, we include it to highlight the intricacies of the gradual case and its interaction with the heap. Again, we start with a precise symbolic state with the aforementioned heap and variable mappings.

{\disableTttResize
\begin{lstlisting}[aboveskip=.25em, belowskip=0.5em,numbers=none,style=c0style]
(*@
\lightgray{$\sstate_1 = \sextuple{\bot}{\sheap_1}{\oheap_1}{\senv}{\ktrue}{\emptyset} \quad \sheap_1 = \set{\triple{\kval}{t_y}{b}, \triple{\kval}{t_z}{c}} \quad \oheap_1 = \emptyset \quad \senv = [\ttt{x} \mapsto t_x, \ttt{y} \mapsto t_y]$}
@*)
\end{lstlisting}
}

We evaluate the function arguments, resulting in the same symbolic state. However, as described in Section~\ref{sec:gv-representation}, the form of the evaluation judgement has changed to include an additional set of run-time checks accumulated through evaluation. In this case, the sets are both empty because evaluation only requires looking up these variables in the symbolic environment. Note that, even though we don't have a heap chunk corresponding to $x.\kval$, we can evaluate $x$ itself since we are only looking up the underlying value $t_x$ in the symbolic environment, not accessing a field of $t_x$.

{\disableTttResize
\begin{lstlisting}[aboveskip=.25em, belowskip=0.5em,numbers=none,style=c0style]
(*@
\lightblue{$\seval{\sstate_1}{x}{t_x}{\sstate_1}{\emptyset}$} \qquad \lightblue{$\seval{\sstate_1}{y}{t_y}{\sstate_1}{\emptyset}$}
@*)
\end{lstlisting}
}

We must take a snapshot of the heap locations used during evaluation to pass as an implicit parameter; however, since the state during evaluation is imprecise, the verifier can make optimistic assumptions about ownership of any fields that are then checked at run-time. To account for this possibility, the symbolic footprint of an imprecise specification captures the entire heap, including the unused heap chunk $\triple{\kval}{t_z}{c}$.

{\disableTttResize
\begin{lstlisting}[aboveskip=.25em, belowskip=0.5em,numbers=none,style=c0style]
(*@
\lightblue{$\sfoot{\simprecise{\kacc(y.\kval)}}{\sstate_1} = \sheap_1$}
@*)
\end{lstlisting}
}

To ensure that the precondition holds in the initial state (which is still precise), we again consume the precondition. Because an imprecise specification might represent access to any field, consuming an imprecise formula empties both the precise and optimistic heaps and ends in an imprecise state (reprsented by $\top$). Note that, although the consume operation generally may produce a run-time check, the resulting run-time check set here is empty since $y.\kval$ is the only field mentioned so far and access to it is statically proven.

{\disableTttResize
\begin{lstlisting}[aboveskip=.25em, belowskip=0.5em,numbers=none,style=c0style]
(*@
\lightblue{$\scons{\sstate_1}{\simprecise{\kacc(y.\kval)}}{\sstate_2}{\emptyset}$}
@*)
(*@
\lightgray{$\sstate_2 = \sextuple{\top}{\sheap_2}{\oheap_2}{\senv}{\ktrue}{\emptyset} \quad \sheap_2 = \emptyset \quad \oheap_2 = \emptyset \quad \senv = [\ttt{x} \mapsto t_x, \ttt{y} \mapsto t_y]$} 
@*)
\end{lstlisting}
}

So that we have a suitable environment to evaluate the function's body, we must now produce the function's precondition. This adds a heap chunk corresponding to the accessibility predicate $\kacc(y.\kval)$ back into the precise heap, but the resulting state is still imprecise.

{\disableTttResize
\begin{lstlisting}[aboveskip=.25em, belowskip=0.5em,numbers=none,style=c0style]
(*@
\lightblue{$\sproduce{\sstate_2}{\simprecise{\kacc(y.\kval)}}{\sstate_3}$}
@*)
(*@
\lightgray{$\sstate_3 = \sextuple{\top}{\sheap_3}{\oheap_3}{\senv}{\ktrue}{\emptyset} \quad \sheap_3 = \set{ \triple{\kval}{t_y}{b'} } \quad \oheap_3 = \emptyset \quad \senv = [\ttt{x} \mapsto t_x, \ttt{y} \mapsto t_y]$} 
@*)
\end{lstlisting}
}

However, $b \neq b'$ because consuming and producing resulted in new symbolic values instead of the ones used in the original symbolic state. To ensure that the final constraints in the path condition are expressed in terms of known symbolic values, we revert these to refer to those from a previous state. Note that, unlike the previous example, we can't directly reset to $\sheap_1$ since this would provide static access to the field chunk $\triple{\kval}{t_z}{c}$, even though the initial heaps were emptied after consuming and access to $t_z.\kval$ was never obtained again through the produce.

{\disableTttResize
\begin{lstlisting}[aboveskip=.25em, belowskip=0.5em,numbers=none,style=c0style]
(*@
\lightblue{$\sheap_4 = \set{\triple{f}{t_e}{t'} : \triple{f}{t_e}{t} \in \sheap(\sstate_3), \triple{f}{t_e}{t'} \in \sheap(\sstate_1)}$}
@*)
(*@
\lightgray{$\sstate_4 = \sextuple{\top}{\sheap_4}{\oheap_4}{\senv}{\ktrue}{\emptyset} \quad \sheap_4 = \set{\triple{\kval}{t_y}{b}} \quad \oheap_4 = \emptyset \quad \senv = [\ttt{x} \mapsto t_x, \ttt{y} \mapsto t_y]$}
@*)
\end{lstlisting}
}

Now, we proceed to evaluation of the function body, which is where the imprecision of the symbolic state comes into play. Since the field $x.\kval$ isn't available in the precise heap and we are currently in an imprecise state, the verifier creates a new heap chunk $\triple{\kval}{t_x}{a}$, adds this to the optimistic heap $\oheap_5$, and assumes that ownership of the required field will hold. This also adds the run-time check $\pair{t_x}{\kval}$ to the set of run-time checks at the end, which will check ownership of the field $\kval$ of object $t_x$ at this point during execution.

{\disableTttResize
\begin{lstlisting}[aboveskip=.25em, belowskip=0.5em,numbers=none,style=c0style]
(*@
\lightblue{$\seval{\sstate_4[\svis = \set{\fadd}]}{x.\kval + y.\kval}{a + b}{\sstate_5}{\set{\pair{t_x}{\kval}}}$}
@*)
(*@
\lightgray{$\sstate_5 = \sextuple{\top}{\sheap_5}{\oheap_5}{\senv}{\ktrue}{\set{\ttt{add}}} \quad \sheap_5 = \set{\triple{\kval}{t_y}{b}} \quad \oheap_5 = \set{\triple{\kval}{t_x}{a}} \quad \senv = [\ttt{x} \mapsto t_x, \ttt{y} \mapsto t_y] \quad \scheck = \set{\pair{t_x}{\kval}}$}
@*)
\end{lstlisting}
}

Then, axiomatizing the return value and post-condition is the same as in the fully-precise case.

{\disableTttResize
\begin{lstlisting}[aboveskip=.25em, belowskip=0.5em,numbers=none,style=c0style]
(*@
\lightblue{$\fadd(a, b, s) = \ffresh$}
@*)
(*@
\lightblue{$\fax{\sstate_5}{\fadd(a, b, s)}{(y.\kval < 0) \kor (\kresult >= x.\kval)}
= (b < 0) \kor (\fadd(a, b, s) >= a)
$}@*)
(*@
\lightgray{$\sstate_6 = \sextuple{\top}{\sheap_6}{\oheap_6}{\senv}{(\fadd(a, b, s) == a + b) \kand ((b < 0) \kor (\fadd(a, b, s) >= a))}{\emptyset} \quad \sheap_6 = \set{\triple{\kval}{t_y}{b}} \quad \oheap_6 = \set{\triple{\kval}{t_x}{a}} $}
@*)
(*@
\lightgray{$\senv = [\ttt{x} \mapsto t_x, \ttt{y} \mapsto t_y] \quad \scheck = \set{\pair{t_x}{\kval}}$}
@*)
\end{lstlisting}
}

When evaluating a pure function with an imprecise precondition, the final symbolic state after evaluation remains imprecise; although this breaks a previous assumption by Zimmerman et al. \cite{zimmerman2024sound} that evaluation will not change the precision of a state, once imprecise assumptions are introduced to a state, it should remain imprecise. This assumption arose only because the previous formalization didn't address features like pure functions that rely on the evaluation judgement while directly interacting with imprecision.

Overall, this process is captured by the symbolic evaluation rules in Figure~\ref{fig:gv-pure}. Notably, our handling of recursive pure functions interacts differently with imprecise preconditions or imprecise initial states; since recursive calls to the same function aren't evaluated directly, the post-condition of a recursive pure function may rely on optimistic assumptions that haven't yet been made. For instance, let us consider the example in Figure~\ref{fig:recursive-pure-func-example}. When $\ttt{sum}$ is called for the second time, no optimistic assumptions will be made since the body isn't called and the post-condition isn't directly produced. 


\newcommand{\ksum}{\ttt{sum}}
\begin{figure}[H]
\begin{minipage}{\linewidth}
\begin{lstlisting}[language=Viper,xleftmargin=.2\textwidth,xrightmargin=.2\textwidth]
field val: Int
field next: Ref

function sum(node: Ref): Int
  requires (*@\lightred{\ttt{?}}@*)
  ensures (*@\lightred{$\ttt{x.val}~\ttt{<=}~\kresult$}@*)
  { (node != null) ? node.val + sum(node.next) : 0 }
\end{lstlisting}
\end{minipage}
    \caption{Example of recursive pure function with imprecise precondition.}
    \label{fig:recursive-pure-func-example}
\end{figure}

In this case, we must take the symbolic footprint of the post-condition, add any missing fields to the optimistic heap, and add the necessary run-time checks to the final result of evaluation. When we pass the symbolic state to the symbolic footprint, we must ensure we pass it in an imprecise state so that any of the necessary fields can still be obtained without being statically proven. After this step, we can axiomatize the post-condition as usual.

{\disableTttResize
\begin{lstlisting}[aboveskip=.25em, belowskip=0.5em,numbers=none,style=c0style]
(*@
\lightblue{$s' = \sfoot{\ffuncpost(\ksum)}{\sstate[\imp = \top]} = \sfoot{x.\kval}{\sstate[\imp = \top]} = \set{\triple{\kval}{t_x}{a}}$}
@*)
(*@
\lightblue{$\scheck' = \set{\pair{t_e}{f} : \triple{f}{t_e}{t} \in s'} = \set{\pair{t_x}{\kval}}$} \quad
\lightblue{$\oheap' = \oheap \cup (s' \setminus \sheap) = \set{\triple{\kval}{t_x}{a}}$}
@*)
\end{lstlisting}
}

This behavior is exhibited in the \textsc{SEvalPureImplicit} rule in Figure~\ref{fig:gv-pure}.

\begin{figure}
    \centering
    {\footnotesize\disableTttResize
    \begin{mathpar}
        \inferrule[SEvalFunctionExplicit]{
            f \notin \svis(\sstate_1) \\
            \multiple{\seval{\sstate_1}{e}{t}{\sstate_2}{\scheck_1}} \\
            \multiple{x} = \ffuncparams(f) \\
            s = \sfoot{\ffuncpre(f)}{{\sstate_2}} \\
            \scons{\sstate_2[\senv = \senv(\sstate_2)[\multiple{x \mapsto t}]]}{\ffuncpre(f)}{\sstate_3}{\scheck_2} \\
            \sproduce{\sstate_3}{\ffuncpre(f)}{\sstate_4} \\
            \sheap' = \{ \triple{f}{t_e}{t'} : \triple{f}{t_e}{t} \in \sheap(\sstate_4), \triple{f}{t_e}{t'} \in \sheap(\sstate_2) \} \\
            \oheap' = \set{ \triple{f}{t_e}{t'} : \triple{f}{t_e}{t} \in \oheap(\sstate_4), \triple{f}{t_e}{t'} \in \oheap(\sstate_2) } \\
            \seval{\sstate_4[\sheap = \sheap', \oheap = \oheap', \svis = \svis(\sstate_3) \cup \{ f \}]}{\ffuncbody(f)}{t'}{\sstate_5}{\scheck_3} \\
            f(\multiple{t}, s) = \ffresh \\
            \sstate_6 = \sstate_2[\imp = \imp(\sstate_5), \oheap = \oheap(\sstate_5), \pc = \pc(\sstate_2) \kand (f(\multiple{t}, s) \keq t') \kand \fax{\sstate}{\fts}{\ffuncpost(f)}]
        }{
            \seval{\sstate_1}{\fe}{f(\multiple{t}, s)}{\sstate_6}{\scheck_1 \cup \scheck_2 \cup \scheck_3}
        }
    
        \inferrule[SEvalFunctionImplicit]{
            f \in \svis(\sstate_1) \\
            \multiple{\seval{\sstate_1}{e}{t}{\sstate_2}{\scheck_1}} \\
            \multiple{x} = \ffuncparams(f) \\
            s = \sfoot{\ffuncpre(f)}{{\sstate_2}} \\
            s' = \sfoot{\ffuncpost(f)}{\sstate_2} \\
            \oheap' = \oheap(\sstate_2) \cup (s' \setminus \sheap(\sstate_2)) \\
            \scons{\sstate_2[\senv = \senv(\sstate_2)[\multiple{x \mapsto t}]]}{\ffuncpre(f)}{\sstate_3}{\scheck_2} \\
            f(\multiple{t}, s) = \ffresh \\
            \sstate_4 = \sstate_2[\imp = \imp(\sstate_3), \oheap = \oheap', \pc = \pc(\sstate_2) \kand \fax{\sstate}{\fts}{\ffuncpost(f)}] \\
            \scheck_3 = \set{\pair{t_e}{f} : \triple{f}{t_e}{t} \in s'}
        }{
            \seval{\sstate_1}{\fe}{f(\multiple{t}, s)}{\sstate_4}{\scheck_1 \cup \scheck_2 \cup \scheck_3}
        }
    \end{mathpar}
    }
    
    \caption{Symbolic evaluation rules for pure function applications in Gradual Viper.}
    \label{fig:gv-pure}
\end{figure}

\subsection{Dynamic Semantics}

Because proving that static verification is sound also requires providing semantics for execution and because our static verification scheme results in run-time checks, specifying the execution semantics is necessary to formalizing implementation. Consequently, our extensions to the dynamic semantics (Figures ~\ref{fig:dynamic-rules-unfolding} and \ref{fig:dynamic-rules-func}) govern the evaluation, the framing checks and the footprints of unfolding expressions and pure function applications.

The \textsc{EvalUnfolding} rule, which describes the runtime evaluation of an unfolding expression, evaluates the inner expression without using the predicate instance. Because \gco uses equirecursive semantics at run time, it is not necessary to unfold the predicate to gain ownership of additional fields while evaluating the expression as the predicate instance will already be fully unrolled and the fields will already be accessible. 

In addition to the evaluation rule, the dynamic semantics also include the framing rule and the definition of the exact footprint for unfolding expressions. The framing rule, \textsc{FrameUnfolding}, gives the conditions where the evaluation of an unfolding expression is well-defined; similarly to framing rules for other types of assertions, this checks that all the requisite memory locations are accessible. However, this also performs an additional check that the predicate assertion holds, which poses significant challenges in proving soundness since other framing checks rely solely on ownership. The exact footprint defines the set of memory locations accessed to assert and frame the unfolding expression; because the assertion relies on evaluating the expression and framing relies on the predicate assertion, the exact footprint contains both the exact footprint of the predicate instance and the exact footprint of the inner expression.

Similarly for pure functions, the \textsc{EvalFunction} rule directly evaluates the body. At run time, it isn't necessary to annotate the function body with the same implicit dependency on the heap. The framing rule for pure functions relies on the precondition holding, similarly to how unfolding expressions rely on the predicate assertion holding, and the exact footprint for pure function applications depends on the arguments and the precondition to represent the minimal set of locations the function may access. However, the maximal footprint differs depending on whether the precondition is imprecise; if the precondition is imprecise, then the maximal footprint is the entire set of permissions since it may implicitly assume access to new fields.

\begin{figure}
    {\footnotesize\disableTttResize
     \begin{mathpar}
        \inferrule[EvalUnfolding]{
            \eval{\heap}{\env}{e_0}{v}
        }{
            \eval{\heap}{\env}{\sunfolding{p(\multiple{e})}{e_0}}{v}
        }

        \inferrule[FrameUnfolding]{
            \ifrm{\heap}{\perms}{\env}{p(\multiple{e})} \\
            \assertion{\heap}{\perms}{\env}{p(\multiple{e})} \\
            \frm{\heap}{\perms}{\env}{e_0}
        }{
            \frm{\heap}{\perms}{\env}{\sunfolding{p(\multiple{e})}{e_0}}
        }
            
        \efoot{\heap}{\env}{\sunfolding{p(\multiple{e})}{e_0}} = \efoot{\heap}{\env}{p(\multiple{e})} \cup \efoot{\heap}{\env}{e_0}

     \end{mathpar}
    }
    \caption{Dynamic evaluation and framing rules for unfolding expressions}
    \label{fig:dynamic-rules-unfolding}
\end{figure}

\begin{figure}
    {\footnotesize\disableTttResize
     \begin{mathpar}
        \inferrule[EvalFunction]{
        \multiple{x} = \fparams(f) \\
        \multiple{\eval{\heap}{\env}{e}{v}} \\
        \env' = [\multiple{x \mapsto v}] \\
        \eval{\heap}{\env'}{\fbody(f)}{v'}
    }{
        \eval{\heap}{\env}{\fe}{v'}
    }

    \inferrule[FrameFunction]{
        \ifrm{\heap}{\perms}{\env}{\fpre(f)} \\
        \assertion{\heap}{\perms}{\env}{\fpre(f)} \\
        \multiple{\frm{\heap}{\perms}{\env}{e}}
    }{
        \frm{\heap}{\perms}{\env}{\fe}
    }

    \efoot{\heap}{\env}{\fe} = \bigcup \multiple{\efoot{\heap}{\env}{e}} \cup \efoot{\heap}{[\multiple{x \mapsto v}]}{\ffuncpre(f)} \cup \efoot{\heap}{[\multiple{x \mapsto v}]}{\ffuncbody(f)} \\ \text{where}~ \multiple{x} = \fparams(f) ~\text{and}~ \multiple{\eval{\heap}{\env}{e}{v}}
     \end{mathpar}
    }
    \caption{Dynamic evaluation and framing rules for pure functions}
    \label{fig:dynamic-rules-func}
\end{figure}

\subsection{Correspondence}

To prove soundness, it is necessary to specify how states of the verifier correspond to states at run time and how run-time checks behave dynamically. Mapping between verification states and dynamic states requires mapping between symbolic and concrete values. The primary mechanism that \citet{zimmerman2024sound} use to do so is \emph{corresponding valuations}. This valuation is a partial function mapping symbolic values to concrete values for a specific program trace, starting from the valuation function defined by Khoo et al. \cite{khoo2010mixing}.  The initial valuation function is written as $V$, and the corresponding valuation after evaluating an expression under dynamic heap $\heap$ and environment $\env$ is written as $V[\seval{\sstate}{e}{t}{\sstate'}{\scheck} \mid \heap, \env]$. This expands upon the initial valuation by mapping additional symbolic values to concrete values based on the result of evaluation. By chaining these together, we can obtain a valuation corresponding to a full program trace with all the resultant mappings between symbolic and concrete values.

For a dynamic state to model a static state $\sstate$ under a certain valuation $V$, the dynamic environment $\env$ must model the symbolic store $\senv(\sstate)$ under $V$ and the dynamic heap $\heap$ and permission set $\perms$ must model the precise heap $\sheap$ and optimistic heap $\oheap$ under $V$. This is notated as $\simstate{V}{\sstate}{\heap}{\perms}{\env}$.

To define the corresponding valuation for \textsc{SEvalUnfoldingPrecise}, \textsc{SEvalUnfoldingImpreciseA}, and \textsc{SEvalUnfoldingImpreciseB}, we simply follow the steps in the symbolic evaluation rule because symbolic and concrete evaluation proceed in lockstep. For \textsc{SEvalUnfoldingImplicitImprecise}, the same approach wouldn't work; since the expression isn't actually being evaluated statically, we rely on the result of run time evaluation instead (see the appendix for details). The cases for functions are similar.

Additionally, our design of unfolding expressions for Gradual Viper changed the definition of the optimistic heap so that it is able to hold both predicate instances and heap locations rather than just heap locations. Accordingly, when formalizing these changes, it was necessary to change the definition of correspondence between the optimistic heap and the dynamic heap to assert that each of the predicate instances holds in the corresponding dynamic state. This fits alongside the previous definition's requirements that field values align with those in the dynamic heap and that each field is included in the permission set of the dynamic state. These updated definitions are in the appendix. 

\section{Soundness}

To show that the verifier only accepts programs which meet their specifications, we expand upon the proof of soundness provided by Zimmerman et al. \cite{zimmerman2024sound}, who derive and prove a formulation of soundness from the progress-preservation formulation commonly used in type systems \cite{wright1994syntactic}.

In these theorems, we use the following notation:
\begin{itemize}
    \item A \emph{dynamic state} $\dstate$ is either the abstract symbol $\initsym$, the abstract symbol $\finalsym$, or a pair $\pair{\heap}{\stack}$ of a heap $\heap$ and a stack $\stack$.
    
    \item A \emph{verification state} $\vstate$ is either the abstract symbol $\initsym$, the abstract symbol $\finalsym$, or a triple $\triple{\sstate}{s}{\gform}$ containing a symbolic state, a statement to execute, and a post-condition to be proven after execution.
    
    \item The \emph{guard} judgement $\sguard{\vstate}{\sstate'}{\scheck}{\sperms}$ determines that run-time check set $\scheck$ must be satisfied at state $\sstate'$ with exclusion frame $\sperms$ to continue symbolic execution.
    
    \item The \textit{run-time assertion} judgement $\rtassert{V'}{\heap}{\perms}{\scheck}$ represents that each run-time check in $\scheck$ succeeds under heap $\heap$, permission set $\perms$, and valuation $V'$.

    \item The set of permissions in heap $\heap$ necessary to satisfy symbolic permission set $\sperms$ is given by $\vfoot{V}{\heap}{\sperms}$.
\end{itemize}

The definition of progress is split into two theorems: first, dynamic execution will continue when the run-time checks corresponding to the current path through the program are satisfied, and second, there exists a set of run-time checks corresponding to any valid path condition. Together, these imply that dynamic execution will only halt if a run-time check fails.

\begin{theorem}[Progress, part 1]
  Let $\Gamma$ be some dynamic state validated by $\vstate$ and valuation $V$. If $\sguard{\vstate}{\sstate'}{\scheck}{\sperms}$ with corresponding valuation $V'$ extending $V$, $V'(\pc(\sstate')) = \ktrue$, and $\pair{\heap}{\perms(\Gamma)} \vdash_{V'} \scheck$ then
  $$\dtrans{\prog}{\vfoot{V'}{\heap(\Gamma)}{\sperms}}{\Gamma}{\Gamma'}$$
  for some $\Gamma'$.
\end{theorem}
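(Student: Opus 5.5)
The plan is to reuse the structure of Zimmerman et al.'s proof of Progress, part 1, and to redo only the cases touched by unfolding expressions and pure function applications. The original proof goes by case analysis on the guard derivation $\sguard{\vstate}{\sstate'}{\scheck}{\sperms}$, which amounts to case analysis on the statement at the head of $\vstate$. Every dynamic execution step needs some expressions to evaluate (assignment right-hand sides, conditions, arguments, field targets). So the crux is an \emph{evaluation progress lemma}, which I would state and prove first:

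\begin{quote}
If $\simstate{V}{\sstate}{\heap}{\perms}{\env}$ and $\seval{\sstate}{e}{t}{\sstate'}{\scheck}$, with corresponding valuation $V'$ satisfying $V'(\pc(\sstate')) = \ktrue$ and $\rtassert{V'}{\heap}{\perms}{\scheck}$, then $\eval{\heap}{\env}{e}{v}$ for some $v$ with $V'(t) = v$. Moreover, $\simstate{V'}{\sstate'}{\heap}{\perms}{\env}$.
\end{quote}

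The second conclusion is what lets the lemma be applied repeatedly to argument lists and subexpressions. Given this lemma, the statement-level cases of Progress go through exactly as in the original proof. The restriction to the exclusion frame $\vfoot{V'}{\heap(\Gamma)}{\sperms}$ affects only method calls, and those are unchanged.

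The lemma is proved by induction on the symbolic evaluation derivation. The old expression cases are inherited from the original proof. For the three explicit unfolding rules (\textsc{SEvalUnfoldingPrecise}, \textsc{SEvalUnfoldingImpreciseA} and \textsc{SEvalUnfoldingImpreciseB}), \textsc{EvalUnfolding} only needs $e_0$ to evaluate. I would therefore argue as follows.

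\begin{itemize}
\item First, the consume of $p(\multiple{e})$ together with the checks $\scheck_2$ gives $\assertion{\heap}{\perms}{\env}{p(\multiple{e})}$, by soundness of consume from the original development.
\item Since the run-time semantics of predicates is equirecursive, this assertion entails the assertion of the body $\fpred(p)$ under $[\multiple{x \mapsto v}]$.
\item Hence the state $\sstate_4$ obtained by producing the body is modeled by the same $\heap, \perms$. This uses the existing produce-soundness lemma, extended to optimistic predicate chunks via the updated correspondence in the appendix.
\item The induction hypothesis on $e_0$ then yields $v$.
\item Finally, resetting $\sheap$ and $\oheap$ preserves correspondence. In the ImpreciseA/B cases the added chunk $\pair{p}{\multiple{t}}$ is justified by the assertion just derived.
\end{itemize}

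For the implicit rules, $t_0$ is fresh and the valuation is defined from the run-time result. So it remains to show that $e_0$ evaluates at all, and I would obtain this from the framing judgment available in the verification state. The Failure rule is vacuous, since $\set{\bot}$ cannot be satisfied. The ImplicitImprecise rule is justified by the check $\pair{p}{\multiple{t}} \in \scheck'$.

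For functions, \textsc{EvalFunction} requires the body to evaluate under $[\multiple{x\mapsto v}]$. In \textsc{SEvalFunctionExplicit} I would use the induction hypothesis on the body evaluation. For this I need correspondence of the reconstructed state $\sstate_4[\sheap = \sheap', \oheap = \oheap']$ with $\heap$, which holds because $\sheap'$ and $\oheap'$ restore the original symbolic values from $\sstate_2$, and these are already modeled. The axioms added to the path condition are satisfied because $V'$ maps $f(\multiple{t},s)$ to the concrete body result, and because the postcondition holds at run time by function well-formedness. The implicit function case has no symbolic body evaluation, so I need a separate termination argument.

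I expect the main obstacle to be the implicit (recursion-cutoff) cases, both for unfolding and for functions. The symbolic derivation contains no subderivation for $e_0$ or $\fbody(f)$, so the induction hypothesis is unavailable. I would first try a nested induction on the run-time unfolding depth, which is finite because the predicate assertion holds in a finite heap. The invariant would be that the outer explicit occurrence for $p$ (or $f$), which put it into $\svis$, already verified $e_0$ (respectively the body) symbolically in a state that generalizes the inner one. Failing that, I would strengthen the lemma to carry a well-formedness hypothesis stating that every predicate body and function body is framed by its definition, and derive evaluability of $e_0$ from \textsc{FrameUnfolding} and \textsc{FrameFunction} through a framing-implies-evaluation lemma.
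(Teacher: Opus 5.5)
Your architecture matches the paper's: a case analysis on the guard, with all new work pushed into an evaluation lemma proved by induction on the symbolic derivation (Lemma~\ref{lem:seval-soundness}). The problem is that the lemma you state is too weak for the statement-level cases to ``go through exactly as in the original proof''. \textsc{ExecAssign}, \textsc{ExecAssignField}, \textsc{ExecCallEnter}, \textsc{ExecIfA}/\textsc{ExecIfB} and \textsc{ExecWhileEnter}/\textsc{ExecWhileSkip} all have $\frm{\heap}{\perms}{\env}{e}$ as a premise. That is why the paper's lemma concludes framing in addition to evaluation and correspondence. For the new constructs, framing is where the real content sits. \textsc{FrameUnfolding} needs $\ifrm{\heap}{\perms}{\env}{p(\multiple{e})}$, $\assertion{\heap}{\perms}{\env}{p(\multiple{e})}$ and $\frm{\heap}{\perms}{\env}{e_0}$. \textsc{FrameFunction} needs the precondition to hold and be iso-framed, but says nothing about the body. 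The assertion you extract from consume soundness supplies exactly these premises in the explicit cases. So you must add framing to your lemma and prove it case by case.

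The implicit cases are left open, and neither of your fallbacks can close them. Evaluability is not actually the obstacle. The paper defines the corresponding valuation for the implicit rules through the run-time value of $e_0$ (resp.\ of the body), so the hypothesis that $V'$ exists already supplies $\eval{\heap}{\env}{e_0}{v}$, and no termination argument is needed. Framing of $e_0$, however, is constrained by nothing in the symbolic derivation, as the following example shows.
\begin{itemize}
\item Take $p(z) = \kacc(z.f)$, a method with precondition $p(x) * p(y)$, and the statement $w = \sunfolding{p(x)}{(\sunfolding{p(y)}{y.g})}$.
\item The outer unfolding is explicit, and its $e_0$ is evaluated with $p \in \svis$.
\item So the inner unfolding is derived by \textsc{SEvalUnfoldingImplicitPrecise}, since the chunk for $p(y)$ is still present. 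No check mentions $y.g$, and the path condition stays $\ktrue$.
\item At run time the frame owns only the two $f$ fields, so \textsc{FrameField} fails on $y.g$ and \textsc{ExecAssign} cannot fire.
\end{itemize}
Your invariant (a) is false here, because the outer occurrence never inspected $y.g$. Strategy (b) is irrelevant, because $p$'s body is perfectly framed. The paper's proof passes over this by asserting $\frm{\heap}{\perms}{\env}{e_0}$ ``inductively'', with no subderivation to induct on. This case needs a change to the rules or a restriction on programs, not a cleverer proof.

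Your explicit-function case has a related hole. The induction hypothesis on the body requires $V'(\pc(\sstate_5)) = \ktrue$. But $\pc(\sstate_6)$ is built from $\pc(\sstate_2)$ and discards everything learned while evaluating the body. The truth of the added axioms is a hypothesis you are given, not something you need to prove. For example, take a function $h(x)$ with precondition and postcondition $\ktrue$ and body $x \kor h(x)$, and the call $h(\kfalse)$. It admits a derivation via \textsc{SEvalOrA} that satisfies every hypothesis. Yet the run-time body has no evaluation derivation, so \textsc{EvalFunction}, and hence the step, fails.
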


\begin{theorem}[Progress, part 2]
  Let $\Gamma$ be some well-formed dynamic state validated by $\vstate$ and valuation $V$. Then if $\Gamma \ne \finalsym$ and $\rtassert{V'}{\heap(\Gamma)}{\perms}{\scheck}$,
  $$\vstate \rightharpoonup \sstate', \scheck, \sperms$$
  for some $\sstate'$, $\scheck$, $\sperms$ such that $V'(\pc(\sstate')) = \ktrue$ where $V'$ is the corresponding valuation extending $V'$.
\end{theorem}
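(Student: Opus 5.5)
The plan is to extend Zimmerman et al.'s proof of Progress, part 2, which proceeds by case analysis on the shape of the verification state $\vstate$ and of the dynamic state $\Gamma$ it validates. Since $\Gamma \ne \finalsym$, either $\Gamma = \initsym$ (trivial: the guard judgement for $\initsym$ always has a derivation with path condition $\ktrue$) or $\Gamma = \pair{\heap}{\stack}$ with $\vstate = \triple{\sstate}{s}{\gform}$. In the latter case I would show that the guard judgement $\vstate \rightharpoonup \sstate', \scheck, \sperms$ is derivable. The derivation either unfolds the head statement of $s$ or consumes the postcondition. The valuation $V'$ is chosen so that $V'(\pc(\sstate')) = \ktrue$.

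The existing proof obtains this from an auxiliary lemma for each judgement: symbolic evaluation, consume, and produce. Each lemma says that, for any dynamic state modelling $\sstate$ under $V$ (i.e.\ $\simstate{V}{\sstate}{\heap}{\perms}{\env}$), there exists a derivation whose resulting path condition is true under the corresponding valuation. The dynamic state determines which branch to take. I would reuse these lemmas unchanged for all old constructs and only add new cases to the symbolic-evaluation lemma, proceeding by induction on the expression, with an inner induction on the recursion depth bounded by the visited set $\svis$.

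For $\sunfolding{\pe}{e_0}$ with $p \notin \svis(\sstate)$, I would choose among \textsc{SEvalUnfoldingPrecise}, \textsc{SEvalUnfoldingImpreciseA}, and \textsc{SEvalUnfoldingImpreciseB} according to $\imp(\sstate)$ and the precision of $\fpred(p)$. The steps are:
\begin{itemize}
\item apply the induction hypothesis to the arguments $\multiple{e}$;
\item apply the consume lemma to $p(\multiple{e})$;
\item apply the produce lemma to $\fpred(p)$, using the equirecursive run-time semantics: the fully unrolled predicate holds in $\heap$, so the dynamic state still models the produced state;
\item apply the induction hypothesis to $e_0$.
\end{itemize}
The truth of the path condition is preserved because the final state only resets $\sheap$, $\oheap$, and $\svis$ and keeps $\pc(\sstate_5)$. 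For $p \in \svis(\sstate)$, exactly one of the three implicit rules applies, since their premises partition on membership of $\pair{p}{\multiple{t}}$ in $\sheap \cup \oheap$ and on $\imp$. The fresh value $t_0$ is mapped by $V'$ to the run-time value of $e_0$, and the path condition is unchanged. The hypothesis $\rtassert{V'}{\heap(\Gamma)}{\perms}{\scheck}$ rules out the $\set{\bot}$ case being reached at run time, but that is not needed for existence.

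Pure functions are handled the same way with \textsc{SEvalFunctionExplicit} and \textsc{SEvalFunctionImplicit}. The new conjuncts added to the path condition are $f(\multiple{t}, s) \keq t'$ and $\fax{\sstate}{\fts}{\ffuncpost(f)}$. I would make them true by defining $V'(f(\multiple{t}, s))$ to be the value given by \textsc{EvalFunction}, which makes the first conjunct true directly. The second conjunct requires that the postcondition actually holds at run time.

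I expect this postcondition step to be the main obstacle. It needs a well-formedness assumption on function declarations: the body satisfies the postcondition whenever the precondition holds. It also needs the fact that axiomatization commutes with valuation, i.e.\ $V'(\fax{\sstate}{\fts}{\phi})$ equals the run-time truth of $\phi$ under $[\multiple{x \mapsto v}]$ with $\kresult$ bound to the function's value. I would first prove this commutation lemma by induction on the expression. I would then discharge the postcondition obligation using the assumed soundness of verifying each function's definition, so that its body establishes its postcondition. If the static semantics does not yet enforce this, I would add it as a program well-formedness hypothesis.
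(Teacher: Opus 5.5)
Your decomposition matches the paper's. You do a case analysis on $\Gamma$ and the head statement, reuse the progress lemmas for evaluation, consume and produce, and add new cases only to the evaluation-progress lemma (Lemma~\ref{lem:eval-progress}), split the same way the paper splits them. You are more careful than the paper in two places.
\begin{itemize}
\item The paper's ``induction on $e$'' is not well founded once the argument enters $\fpred(p)$ or $\ffuncbody(f)$, so a measure involving $\svis$ is genuinely needed. Note, though, that $\ffuncpre(f)$ is consumed before $f$ enters $\svis$, and again under \textsc{SEvalFunctionImplicit}. A precondition that calls back into $f$ therefore still defeats your measure unless you add a well-formedness restriction.
\item For \textsc{SEvalFunctionExplicit} the paper only asserts $\pc(\sstate_5) \implies \pc(\sstate_6)$. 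That cannot account for the new conjuncts $\fts \keq t'$ and $\fax{\sstate}{\fts}{\ffuncpost(f)}$.
\end{itemize}
The postcondition clause you want to add is already in Definition~\ref{def:well-formed-prog}. Your commutation lemma, however, will not go through by plain induction with the rules as written. $\operatorname{axiomatize}$ reads fields from the snapshot $s = \sfoot{\ffuncpre(f)}{\sstate_2}$. For an imprecise precondition this is $\sheap(\sstate_2) \cup \oheap(\sstate_2)$, which can lack fields of the postcondition that were only acquired optimistically while evaluating the body.

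The genuine gap is your claim that the hypothesis $\rtassert{V'}{\heap(\Gamma)}{\perms}{\scheck}$ is not needed. The conclusion must be read as ``for the guard you build, if its checks hold then $V'(\pc(\sstate')) = \ktrue$''; without that premise it is false.

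Your step ``the fully unrolled predicate holds in $\heap$'' follows from heap correspondence only when the chunk for $\pe$ is in $\sheap$ or $\oheap$. Otherwise, after \textsc{SConsumePredicateImprecise}, its only justification is the emitted check $\pair{p}{\multiple{t}} \in \scheck_2$ (via \textsc{CheckPred}, i.e.\ Lemma~\ref{lem:cons-soundness}). After \textsc{SConsumePredicateFailure} nothing justifies it.

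Concretely, suppose $\ttt{pos}(x)$ is absent from both heaps and $\heap(\ell, \ttt{val}) \le 0$ for $\ell = V(t_x)$. Producing the body conjoins $a > 0$ with $a \mapsto \heap(\ell, \ttt{val})$, so the path condition is false. No alternative derivation exists, because $\svis$, $\imp$ and the symbolic heap force every rule choice.

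The same premise is needed elsewhere too:
\begin{itemize}
\item In the function case it is what gives you the precondition at run time, before the well-formedness clause can yield the axiomatized postcondition.
\item It is needed for the old lemmas you reuse unchanged. \textsc{SConsumeValueImprecise} conjoins $t$ to the path condition, and maintaining your modelling invariant across \textsc{SEvalFieldImprecise} requires the access check it emits.
\end{itemize}
So each auxiliary lemma must assume the checks of the derivation it constructs and return correspondence of the final state, as Lemma~\ref{lem:seval-soundness} does. The $\set{\bot}$ cases are then discharged by this premise rather than being irrelevant. The paper's own proof is vague at exactly this step, but its statement at least carries the premise that makes it valid.
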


The definition of preservation is similar to the original formulation: when there is a dynamic execution step to be taken from a valid dynamic state, the resulting dynamic state will also be valid. 

\begin{theorem}[Preservation]
  Let $\pair{\heap}{\stack}$ be some dynamic state validated by $\vstate$ and valuation $V$ for some program $\prog$. If $\sguard{\vstate}{\sstate'}{\scheck}{\sperms}$ with $V' = V[\sguard{\vstate}{\sstate'}{\scheck}{\sperms} \mid \heap]$, $V'(\pc(\sstate')) = \ktrue$, $\rtassert{V'}{\heap}{\perms(\stack)}{\scheck}$, and $\dexec{\heap}{\stack}{\vfoot{V'}{\heap}{\sperms}}{\heap'}{\stack'}$
  then $\Gamma' = \pair{\heap'}{\stack'}$ is a valid state.
\end{theorem}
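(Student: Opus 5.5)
The proof follows the structure of Zimmerman et al.: we reuse their preservation argument and discharge only the cases touched by the new constructs. We proceed by case analysis on the guard derivation $\sguard{\vstate}{\sstate'}{\scheck}{\sperms}$ and the dynamic step $\dexec{\heap}{\stack}{\vfoot{V'}{\heap}{\sperms}}{\heap'}{\stack'}$. In each case we show that the successor verification state validates $\pair{\heap'}{\stack'}$ under an extension of $V'$. Statements and their guards are unchanged, so every statement-level case carries over verbatim. The one exception is that validity now also requires the strengthened optimistic-heap correspondence, in which each predicate chunk $\pair{p}{\multiple{t}} \in \oheap$ must be asserted in the dynamic state. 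We therefore re-check the heap-modifying statement cases (field assignment, allocation, calls, fold/unfold) against this extra clause. For each, we argue that the frame $\vfoot{V'}{\heap}{\sperms}$ excludes the footprints of the optimistic predicate instances. A framing lemma then shows that assertions of $p(\multiple{V'(t)})$ are preserved under heap updates outside their exact footprint.

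The new content lives in the expression-level lemmas that preservation invokes whenever a statement evaluates an expression, produces, or consumes. The key one is evaluation soundness. Suppose $\seval{\sstate}{e}{t}{\sstate_e}{\scheck}$, $\simstate{V}{\sstate}{\heap}{\perms}{\env}$, $V(\pc(\sstate))$ holds, and $\rtassert{V}{\heap}{\perms}{\scheck}$. Then there is $v$ with $\eval{\heap}{\env}{e}{v}$ and $\frm{\heap}{\perms}{\env}{e}$. Moreover, $V_e = V[\seval{\sstate}{e}{t}{\sstate_e}{\scheck}\mid\heap,\env]$ satisfies $V_e(t)=v$, $V_e(\pc(\sstate_e))$, and $\simstate{V_e}{\sstate_e}{\heap}{\perms}{\env}$. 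We prove this by induction on the evaluation derivation, which is well-founded even though the predicate and function bodies are larger than $e$.

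\textbf{Unfolding expressions (explicit rules).} Consume soundness gives $\assertion{\heap}{\perms}{\env}{p(\multiple{e})}$. Equirecursive run-time semantics then gives the body assertion, so produce-correspondence yields a model of $\sstate_4$ under the unchanged concrete state. The induction hypothesis on $e_0$, combined with \textsc{EvalUnfolding}, gives the value, and \textsc{FrameUnfolding} gives framing. To restore correspondence for $\sstate_6$, we note that $\sheap(\sstate_2)$ is modeled by the prefix valuation. In the ImpreciseA/B rules, the optimistic part $\pair{p}{\multiple{t}}$ is justified by the consumed assertion. The chunks retained from $\oheap(\sstate_5)$ in ImpreciseA came from the imprecise initial state rather than the precise body, so they are backed by checks in $\scheck_3$.

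\textbf{Unfolding expressions (implicit rules).} Here the predicate chunk is either in the heaps, in which case correspondence gives the assertion, or it is supplied by the run-time check. The fresh $t_0$ is mapped to the run-time value $v$; this is why the corresponding valuation is defined from the dynamic result. Framing of $e_0$ must come from equirecursive unrolling of the asserted instance. The failure rule is vacuous because $\set{\bot}$ cannot be asserted.

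\textbf{Pure functions.} We argue analogously via the precondition. Consuming it gives the assertion and \textsc{FrameFunction}; \textsc{EvalFunction} gives $v'$. The fresh symbol $f(\multiple{t},s)$ is mapped to $v'$. The conjuncts $f(\multiple{t},s)\keq t'$ and the axiomatized postcondition are then true, the latter via a lemma that $\fax{}{}{}$ commutes with valuation on the snapshot $s$ together with well-formedness (the body satisfies the postcondition). The implicit case relies on the checks $\scheck_3$ for the postcondition footprint.

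The main obstacle is the re-folding step. After $e_0$ is evaluated, the dynamic permission set still contains the unrolled fields, while the symbolic state has reverted. We must show that the retained optimistic chunks and the new predicate-instance chunk model the dynamic state without double-counting separation. I would first try to establish this with a monotonicity lemma: correspondence is preserved when symbolic heaps shrink to sub-multisets whose chunks remain covered by $\perms$. This must be combined with a proof that \textsc{FrameUnfolding}'s assertion premise is implied by consume soundness rather than by ownership alone.
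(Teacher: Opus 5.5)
Your outline has the same shape as the paper's proof of Lemma~\ref{lem:dexec-preservation}: Zimmerman et al.'s case analysis on the dynamic step, with the new constructs handled inside the evaluation lemmas and the heap-change lemmas. However, two steps you commit to would fail.

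First, the non-interference argument for heap-modifying steps. For \textsc{ExecAssignField} the guard returns $\sperms=\emptyset$, so $\vfoot{V'}{\heap}{\sperms}$ excludes nothing. It cannot tell you that the written location $\pair{\ell}{f}$ avoids the footprints of optimistic predicate instances. (For \textsc{ExecAlloc} the frame is also empty, but freshness of $\ell$ suffices there.) Calls and fold/unfold do not modify the heap at all. What the paper uses instead is the following:
\begin{itemize}
\item Consume soundness (Lemma~\ref{lem:cons-soundness}) makes the current frame's post-consume state correspond under $\perms\setminus\set{\pair{\ell}{f}}$. In fact every non-failing $\kacc$-consume rule passes $\oheap$ through $\fremf$, which keeps only field chunks, so no optimistic predicate chunk survives in that frame.
\item The frames below correspond under permission sets disjoint from $\perms\ni\pair{\ell}{f}$, by well-formedness of the stack (Lemma~\ref{lem:pres-heap-change-partial}).
\item The modified Lemmas~\ref{lem:sim-heap-contains} and~\ref{lem:pres-modify-heap} use the fact that predicate bodies are specifications. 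Every predicate chunk, optimistic or precise, therefore asserts inside the permission set it corresponds under, and survives any heap update outside that set.
\end{itemize}
The exclusion frame matters only for \textsc{ExecCallEnter} and \textsc{ExecWhileEnter}. There Lemma~\ref{lem:rem-simstate} goes through because $\frem$ already contains the optimistic predicate chunks.

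Second, the evaluation lemma you call the key one is false as stated. You conclude $V_e(\pc(\sstate_e))$ and $V_e(t)=v$, but symbolic evaluation is nondeterministic and the corresponding valuation follows the concrete run. Suppose $b\mapsto t_b$ is in the store with $V(t_b)=\kfalse$. \textsc{SEvalOrA} still derives $b \kor c$ with result $t_b$ and final path condition $\pc(\sstate)\kand t_b$. Then $V_e(\pc(\sstate_e))=V_e(t)=\kfalse$, even when $c$, and hence $b \kor c$, is concretely $\ktrue$. Your induction breaks there, and again at the branching produce inside the unfolding rules. As in Lemma~\ref{lem:seval-soundness}, truth of the final path condition must be a hypothesis. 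The same goes for the checks, which may mention fresh symbols and so must be read under the extended valuation, not under $V$. In preservation this hypothesis is supplied by $V'(\pc(\sstate'))=\ktrue$ for the guard's final state, together with the subpath lemmas.

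Finally, the re-folding step is not the obstacle you expect. Run-time evaluation does not change $\perms$, and $\sheap(\sstate_6)=\sheap(\sstate_2)$ is already modeled under the prefix valuation; your own earlier remark suffices. Optimistic correspondence has no separation clause, so the unions in \textsc{SEvalUnfoldingImpreciseA} and \textsc{SEvalUnfoldingImpreciseB} are harmless. A shrinking argument starting from $\sstate_5$ would actually fail: the consumed chunk for $p(\multiple{e})$ lies in $\sheap(\sstate_2)$ but in general not in $\sheap(\sstate_5)$.
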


The full proof of soundness is structured as a set of lemmas for each of the fundamental operations in the static verifier, culminating in proofs of progress and preservation. Previously, the proof was structured using lemmas for each of the fundamental operations, culminating in the proof of progress and preservation. In this structure, \citet{zimmerman2024extended} first proved lemmas relating to expressions and operations on them such as evaluation and framing. However, incorporating pure functions and unfolding expressions into this framework results in a distinct issue. Previously the syntax forms for expressions didn't rely on formulas at all, instead exclusively using simpler forms like field accesses, binary operations, unary operations, literals, and variables. However, unfolding expressions are a type of expression that rely on predicates, which are formulas, and pure functions have preconditions, which are formulas. Accordingly, expressions that perform induction on expressions now rely on those that perform induction on formulas, which in turn rely on those that perform induction on expressions since formulas include heap-dependent expressions. Similarly, defining symbolic evaluation for these constructs uses the produce and consume judgements, which are defined in terms of symbolic evaluation. 

For example, in the proof the following are written as separate theorems. In prior work, the second of these theorems depended on the first but not vice versa. However, because of how we mix formulas and expressions, they rely on each other.

\begin{theorem*}
    If $\frm{\heap}{\perms}{\env}{e}$ and $\perms \subseteq \perms'$, then $\frm{\heap}{\perms'}{\env}{e}$.
\end{theorem*}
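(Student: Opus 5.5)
We prove this monotonicity of framing under permission growth by induction on the derivation of $\frm{\heap}{\perms}{\env}{e}$. Because unfolding expressions and pure function applications now make the framing judgement for expressions depend on the assertion and imprecise-framing judgements for formulas, the induction cannot stay inside expressions. We therefore strengthen the statement to a simultaneous (mutual) induction. Assume $\perms \subseteq \perms'$. We prove all of the following together:
\begin{itemize}
\item if $\frm{\heap}{\perms}{\env}{e}$ then $\frm{\heap}{\perms'}{\env}{e}$;
\item if $\ifrm{\heap}{\perms}{\env}{\phi}$ then $\ifrm{\heap}{\perms'}{\env}{\phi}$;
\item if $\assertion{\heap}{\perms}{\env}{\phi}$ then $\assertion{\heap}{\perms'}{\env}{\phi}$.
\end{itemize}
The measure is the combined height of the derivations. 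A predicate or function body is never reached by syntactic descent, so the measure must be the derivation height and not the syntax size. This is well founded because each judgement is given inductively.

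The base and structural cases are routine and mirror the prior proof of \citet{zimmerman2024extended}. Variables and literals are framed by any permission set. For a field access $e.f$, framing requires $\frm{\heap}{\perms}{\env}{e}$, evaluation of $e$ to a location $\ell$, and $\pair{\ell}{f} \in \perms$. Evaluation does not depend on $\perms$, and $\pair{\ell}{f} \in \perms'$ follows from the subset assumption. Binary and unary operations follow directly from the induction hypotheses.

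For \textsc{FrameUnfolding} we apply the three induction hypotheses in turn to $\ifrm{\heap}{\perms}{\env}{p(\multiple{e})}$, to $\assertion{\heap}{\perms}{\env}{p(\multiple{e})}$, and to $\frm{\heap}{\perms}{\env}{e_0}$, and then reapply the rule. \textsc{FrameFunction} is handled the same way, using the hypotheses for $\fpre(f)$ and the pointwise hypothesis for the arguments $\multiple{e}$.

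On the formula side, accessibility predicates $\kacc(e.f)$ need $\pair{\ell}{f}$ to be in the set, which is preserved by the subset assumption. Separating conjunctions split $\perms$ into disjoint parts; we enlarge one part by $\perms' \setminus \perms$ to obtain a split of $\perms'$. Predicate instances are asserted through their (equirecursive) bodies under the predicate's parameters, which falls under the hypothesis for a smaller derivation. Imprecise formulas $\simprecise{\phi}$ are asserted by their precise part and so reduce to the $\phi$ case. Formulas that are plain expressions use the expression framing hypothesis together with the fact that evaluation is independent of $\perms$.

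The main obstacle is the assertion of separating conjunctions and predicate instances. If the assertion semantics requires exact footprints, for example that the footprints of the conjuncts be disjoint subsets of $\perms$, then monotonicity is fine. If instead it demands equality with $\perms$, monotonicity fails outright. I would first check the definition of $\assertion{\heap}{\perms}{\env}{\cdot}$ in the appendix. If it is of the subset kind, I would prove an auxiliary lemma, by the same mutual induction, that $\efoot{\heap}{\env}{\cdot}$ does not depend on $\perms$. Disjointness within $\perms$ then carries over to $\perms'$ unchanged, and the whole mutual induction closes.
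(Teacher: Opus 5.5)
\paragraph{The gap.} A single simultaneous induction on derivations is the right shape, but your set of clauses is incomplete, and the case you wave through is exactly where it fails. \textsc{AssertImprecise} has two premises, $\assertion{\heap}{\perms}{\env}{\phi}$ \emph{and} $\efrm{\heap}{\perms}{\env}{\phi}$. So rebuilding $\assertion{\heap}{\perms'}{\env}{\simprecise{\phi}}$ also requires $\efrm{\heap}{\perms'}{\env}{\phi}$, and the imprecise case does not ``reduce to the $\phi$ case''. None of your three clauses supplies this, because equi-recursive framing is a different judgement from iso-recursive framing.

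The case cannot be avoided. \textsc{FrameUnfolding} asserts $p(\multiple{e})$, which by \textsc{AssertPredicate} asserts $\fpred(p)$, and \textsc{FrameFunction} asserts $\fpre(f)$. Both may be imprecise, as in the paper's \texttt{pos} and \texttt{add} examples. This is precisely the mutual dependency the paper points out in the main text: monotonicity of expression framing is paired with monotonicity of $\efrm$ (Lemma~\ref{lem:efrm-monotonicity}), not with assertion alone.

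The repair is to add a fourth clause: $\efrm{\heap}{\perms}{\env}{\phi}$ implies $\efrm{\heap}{\perms'}{\env}{\phi}$. Every \textsc{EFrame} rule has as premises only evaluations (which ignore $\perms$), expression-framing judgements, or $\efrm$ judgements at the same permission set. With that clause, the four-way induction on derivation height closes.

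\paragraph{Smaller points.} Your ``main obstacle'' is not one. \textsc{AssertConjunction} only demands $\perms_1 \cup \perms_2 \subseteq \perms$ and $\perms_1 \cap \perms_2 = \emptyset$, so the same split already witnesses the conjunction at $\perms'$. Also, $\efoot{\heap}{\env}{\cdot}$ does not mention $\perms$ at all, so no auxiliary lemma is needed. Do not enlarge one of the parts: its sub-derivation lives at $\perms_1$, not at your fixed pair $(\perms,\perms')$, so your hypotheses would not apply there unless you quantify over all pairs.

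\paragraph{Comparison with the paper.} The paper does not run this induction directly. It obtains the statement as a two-line corollary of Lemma~\ref{lem:efoot-framing}, which shows that $\frm{\heap}{\perms}{\env}{e}$ forces $\efoot{\heap}{\env}{e} \subseteq \perms$, and that framing then holds at any set containing this footprint. That characterization is stronger, since it also permits shrinking to the footprint. However, its unfolding and function cases rely on assertion monotonicity, several footprint lemmas, and well-formedness of function bodies, which is where the mutual recursion hides. Your direct route, once patched, avoids the footprint machinery entirely. Measuring by derivation height rather than term size is also the correct choice, since predicate bodies are not subterms.
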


\begin{theorem*}
    If $\efrm{\heap}{\perms}{\env}{\gform}$ and $\perms \subset \perms'$, then $\efrm{\heap}{\perms'}{\env}{\gform}$.
\end{theorem*}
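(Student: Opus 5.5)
We prove this theorem simultaneously with the preceding one (monotonicity of expression framing $\vdash_{\mathrm{frm}}$), together with the analogous monotonicity results for $\vdash_{\mathrm{frmI}}$ and for the assertion judgement $\vDash$. The latter is needed because the new rules \textsc{FrameUnfolding} and \textsc{FrameFunction} contain assertion premises. The induction is a mutual structural induction on the derivation of the framing judgement, and the measure is the lexicographic pair (unfolding depth of predicates and function bodies, syntactic size). The isorecursive predicate and function bodies can be referenced again inside framing derivations, so plain syntax size is not well-founded. We therefore induct on the height of the framing derivation itself. This is finite by definition, since a framing judgement only holds if it has a finite derivation.

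For $\gform = \simprecise{\phi}$, the imprecise formula's framing rule reduces to framing of $\phi$ under $\vdash_{\mathrm{frmI}}$ (or is trivially satisfied), so we appeal to the induction hypothesis for $\phi$. For precise $\phi$, we proceed by cases on the last rule.

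In the case $\kacc(e.f)$, the premise requires $e$ to be framed and evaluation to a location whose permission lies in $\perms$. Framing of $e$ under $\perms'$ follows from the mutual hypothesis for expressions. Membership in $\perms'$ follows from $\perms \subset \perms'$. Evaluation does not depend on $\perms$, so its premise carries over unchanged.

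In the cases for expressions, comparisons, and conditionals, we apply the expression hypothesis to each subexpression. For a conditional we also note that the branch taken is determined by evaluation, which is independent of $\perms$.

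In the case $\phi_1 * \phi_2$, the rule used by $\vdash_{\mathrm{frmE}}$ frames $\phi_2$ under $\perms$ extended with the footprint of $\phi_1$. We apply the hypothesis to $\phi_1$. For $\phi_2$, we use that $\perms \cup \lfloor\phi_1\rfloor \subseteq \perms' \cup \lfloor\phi_1\rfloor$, after checking that the footprint of $\phi_1$ depends only on $\heap$ and $\env$ and not on $\perms$.

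In the case of a predicate instance $p(\multiple{e})$, we frame the arguments by the expression hypothesis.

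The main obstacle is the cross-dependence introduced by the new expression forms that appear inside formulas: unfolding expressions and function applications. When the expression hypothesis reaches \textsc{FrameUnfolding}, it needs three things under $\perms'$: $\ifrm{\heap}{\perms'}{\env}{p(\multiple{e})}$, $\assertion{\heap}{\perms'}{\env}{p(\multiple{e})}$, and framing of $e_0$. The first and third follow from the mutual hypotheses, since their derivations are strict subderivations. The second requires monotonicity of assertion, $\assertion{\heap}{\perms}{\env}{\phi} \Rightarrow \assertion{\heap}{\perms'}{\env}{\phi}$. This holds in the equirecursive dynamic semantics because satisfaction of $\kacc$ is membership, $*$ splits a permission set, and enlarging the set preserves a valid split with the excess assigned to any component.

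I would first try to state assertion monotonicity as a separate lemma, proved by induction on the assertion derivation. Its only dependence on framing is through the well-formedness side conditions, which are themselves handled by the same mutual induction. \textsc{FrameFunction} is treated identically, using $\fpre(f)$ in place of $p(\multiple{e})$.

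With all four judgements bundled into one induction on derivation height, every appeal to a hypothesis is on a strictly smaller derivation, and the mutual recursion is well-founded.
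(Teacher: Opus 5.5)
Your overall strategy is sound, and in one respect more careful than the paper's. The paper obtains this lemma by mutual induction with expression-framing monotonicity, which it derives from Lemma~\ref{lem:efoot-framing}, and it describes the combined argument as induction on term size. But term size does not decrease across \textsc{EFramePredicate} or \textsc{AssertPredicate}, whose premises concern the unrolled body $\fpred(p)$. So induction on the height of the combined derivation is the right well-founded measure. You are also right that the cycle passes through assertion monotonicity. The paper's \textsc{FrameUnfolding} case of Lemma~\ref{lem:efoot-framing} uses Lemma~\ref{lem:assert-monotonicity}, and that lemma in turn needs the present one because of the $\vdash_{\mathrm{frmE}}$ premise of \textsc{AssertImprecise}. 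The paper names only two members of this cycle; bundling all four judgements makes it explicit. Two small corrections: drop the lexicographic measure you mention first, since derivation height is what does the work. And it is the equi-recursive framing and assertion rules that re-reference $\fpred(p)$, not iso-recursive framing, which never inspects predicate bodies.

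The case analysis, however, is not carried out against the paper's $\vdash_{\mathrm{frmE}}$ rules.
\begin{itemize}
\item There is no $\vdash_{\mathrm{frmE}}$ rule for $\simprecise{\phi}$, so that case is vacuous rather than a reduction to $\vdash_{\mathrm{frmI}}$.
\item \textsc{EFrameAcc} has only the premise $\frm{\heap}{\perms}{\env}{e}$; there is no permission-membership premise.
\item \textsc{EFrameConjunction} frames both conjuncts under the same $\perms$, with no footprint extension, so that case is just two appeals to the hypothesis.
\item The real omission is the predicate case. Framing the arguments yields only the premise of \textsc{IFramePredicate}. \textsc{EFramePredicate} also requires $\multiple{\eval{\heap}{\env}{e}{v}}$ and $\efrm{\heap}{\perms}{[\multiple{x \mapsto v}]}{\fpred(p)}$. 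You must apply the $\vdash_{\mathrm{frmE}}$ hypothesis to that body subderivation under the parameter environment; evaluation is independent of $\perms$. This is precisely the case that forces derivation-height induction, so with your measure it is a one-line fix, but as written the case does not conclude.
\item Likewise, in \textsc{EFrameConditionalA}/\textsc{B} you need the $\vdash_{\mathrm{frmE}}$ hypothesis on the selected branch formula, not only the expression hypothesis.
\end{itemize}
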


However, to demonstrate the logical soundness of our proof, we can interpret these separate but mutually-recursive theorems as the following single theorem.

\begin{theorem*}
    Let $x \in \Expr \cup \GFormula$ and $\perms \subseteq \perms'$. Then, if $x \in \Expr$ and if $\frm{\heap}{\perms}{\env}{x}$, then $\frm{\heap}{\perms'}{\env}{x}$. If $x \in \GFormula$ and if $\efrm{\heap}{\perms}{\env}{x}$, then $\efrm{\heap}{\perms'}{\env}{x}$.
\end{theorem*}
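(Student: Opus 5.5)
We prove the combined statement by induction on the height of the derivation of the framing judgement ($\frm{\heap}{\perms}{\env}{x}$ when $x \in \Expr$, $\efrm{\heap}{\perms}{\env}{x}$ when $x \in \GFormula$). Inducting on the syntactic size of $x$ is not well-founded here. \textsc{FrameFunction} refers to $\fpre(f)$, and a predicate body reached through an assertion may mention $p$ again, so neither is a syntactic subterm of $x$. Derivation height avoids this: every premise of a framing rule, whether an expression-framing, formula-framing, or assertion premise, sits strictly lower in the derivation than the conclusion. If the assertion judgement is not itself inductively defined, we use a lexicographic measure instead: derivation height, then syntactic size. Either way the expression and formula parts of the statement are proved simultaneously. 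Each case may use either conclusion as its hypothesis, which is exactly why the two theorems above can be merged into one.

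The inherited cases come from the prior formalization. These are literals, variables, binary and unary operations, and field accesses $e.f$, plus on the formula side accessibility predicates, separating conjunction, conditionals, predicate instances, and imprecise formulas $\simprecise{\phi}$. Each proceeds as in \citet{zimmerman2024extended}: apply the induction hypothesis to the framed subterms and monotonicity of set membership for the permission checks (for example, $\pair{\ell}{f} \in \perms \subseteq \perms'$). Where those proofs appealed to the expression lemma from inside the formula lemma, the appeal now becomes a use of the mutual induction hypothesis.

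For \textsc{FrameUnfolding} we have three premises: $\ifrm{\heap}{\perms}{\env}{p(\multiple{e})}$, $\assertion{\heap}{\perms}{\env}{p(\multiple{e})}$, and $\frm{\heap}{\perms}{\env}{e_0}$. The first and third lift to $\perms'$ by the induction hypothesis at the formula and expression parts respectively. The second needs a separate fact: \emph{assertion monotonicity}, that $\assertion{\heap}{\perms}{\env}{\phi}$ and $\perms \subseteq \perms'$ imply $\assertion{\heap}{\perms'}{\env}{\phi}$. \textsc{FrameFunction} is handled the same way. Its premises are framing of $\fpre(f)$, assertion of $\fpre(f)$, and framing of each argument $e$, which lift by the formula induction hypothesis, assertion monotonicity, and the expression induction hypothesis respectively.

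The main obstacle is assertion monotonicity. It does not follow from ownership alone: for separating conjunction the footprints of the two conjuncts must be split disjointly, and accessibility predicates require that the permission is present. My first attempt is to show it by the same derivation-height induction, or to take it from the prior development if it is proved there for the equirecursive assertion semantics. The key observation is that the extra permissions in $\perms' \setminus \perms$ can always be discarded. If assertion of $\phi$ is defined via the exact footprint, $\efoot{\heap}{\env}{\phi} \subseteq \perms$, then it is immediately preserved by enlarging $\perms$. The only remaining content is that evaluation results do not depend on $\perms$, which holds because \textsc{EvalUnfolding} and \textsc{EvalFunction} do not consult permissions. If instead assertion of $\phi$ itself invokes framing, for example for imprecise formulas, then assertion monotonicity becomes a third mutually inductive clause, added to the same combined statement and discharged by the same measure.
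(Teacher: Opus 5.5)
Your argument works and takes a different route from the paper, but it needs one extra clause (see the end). The paper does not argue monotonicity directly. It obtains Lemma~\ref{lem:framing-monotonicity} as a one-line corollary of Lemma~\ref{lem:efoot-framing}, which characterizes expression framing by the exact footprint: $\frm{\heap}{\perms}{\env}{e}$ implies $\efoot{\heap}{\env}{e} \subseteq \perms$, and framing transfers to any superset of that footprint. The exact footprint of an unfolding expression contains that of $\fpred(p)$, and that of a function call contains those of $\ffuncpre(f)$ and the body, so the new cases of that lemma must bound these footprints. This pulls in Lemmas~\ref{lem:efoot-subset-framed}, \ref{lem:ifrm-implies-efrm}, \ref{lem:efoot-assert}, \ref{lem:assert-efoot-subset}, assertion monotonicity (Lemma~\ref{lem:assert-monotonicity}), and the condition of Definition~\ref{def:well-formed-prog} that function bodies are framed by their preconditions. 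The resulting cycle with Lemma~\ref{lem:efrm-monotonicity} is closed by ``mutual induction over the decoupled theorems'', justified informally as induction on term size.

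Your rule-by-rule lifting avoids exact footprints altogether. It needs no specification or well-formedness hypotheses, since the premises of \textsc{FrameFunction} never mention the body, and it keeps the mutual block down to the monotonicity statements themselves. The paper's detour is cheap only because the footprint characterization is part of its development anyway.

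Your measure is also the sound one. Term size does not decrease through $\ffuncpre(f)$ in \textsc{FrameFunction}, nor through $\fpred(p)$ in \textsc{EFramePredicate} and \textsc{AssertPredicate} for recursive predicates, whereas derivation height does. All the judgements involved (expression framing, both formula framings, assertion) are given by inference rules, so your lexicographic fallback is unnecessary.

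You are likewise right that assertion monotonicity belongs inside the mutual block rather than being cited as an unchanged lemma. \textsc{AssertImprecise} has a $\vdash_{\mathrm{frmE}}$ premise, so your ``third clause'' branch is the one that applies. \textsc{AssertConjunction} goes through by keeping the same split $\perms_1, \perms_2$ (its premises live at $\perms_1, \perms_2$, not at $\perms$) and using $\perms_1 \cup \perms_2 \subseteq \perms \subseteq \perms'$.

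The missing piece is that the framing premise of \textsc{FrameUnfolding} and \textsc{FrameFunction} is iso-recursive ($\vdash_{\mathrm{frmI}}$). The formula clause of the combined statement, and hence your ``formula induction hypothesis'', concerns only $\vdash_{\mathrm{frmE}}$. For $p(\multiple{e})$ you can invert \textsc{IFramePredicate} and use the expression clause. For $\ffuncpre(f)$, which is an arbitrary formula, you need monotonicity of $\vdash_{\mathrm{frmI}}$ itself. Add it as a further clause. Its rules mention only expression framing, evaluation, and iso-recursive framing of subformulas (never predicate bodies), so the same derivation-height induction discharges it.
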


Conceptually, this theorem can be proved by induction on the size of the term.  Our formalization uses this intuition but concretely realizes it via mutual induction over the previous decoupled theorems.

\section{Related Work}

Gradual verification was originally inspired by gradual typing \cite{siek2007gradual}, where type annotations are optional and missing annotations are backed by checks at run time. Accordingly, previous formulations of gradual verification \cite{bader2018gradual, wise2020gradual} draw on work from gradual typing such as the Abstracting Gradual Typing methodology from Garcia et al. \cite{garcia2016abstracting}.

The original formulation of gradual verification did not have any support for unfolding expressions; Wise et al.'s extension \cite{wise2020gradual} supports \emph{unfolding formulae}, which are more restricted than unfolding expressions since they can only appear in specifications and predicate bodies. Similarly, although Mutlu \cite{mutlu2025expanding} designed an approach for handling pure functions in gradual verification based on the approach in Silicon \cite{schwerhoff2016advancing}, this work lacks a proof of soundness. However, our design is instead inspired by that of Smans et al. \cite{smans2010heap} for its higher level of abstraction. Gradual verification systems without these features using weakest liberal precondition reasoning have been previously proven sound \cite{wise2020gradual} and mechanized in Rocq \cite{bader2018gradual}. However, these proofs of soundness are not applicable to Gradual Viper verifier because it is built on top of the Viper symbolic execution backend.

We found that unfolding expressions and pure functions are used extensively in real-world code bases verified using symbolic execution verifiers, often in conjunction -- for example, in the router for the VerifiedSCION architecture and the verified Go standard library, which are both verified using Gobra \cite{wolf2021gobra}. Our work is therefore a critical building block for extending gradual verification to support verifying realistic programs written in mainstream programming languages.

Many static verifiers based on symbolic execution, such as Chalice \cite{leino2009verification}, lack formal semantics and proofs of soundness. A notable exception to this is VeriFast, which is a static verifier for C, Rust, and Java based on symbolic execution \cite{jacobs2011verifast}. However, VeriFast is based on separation logic rather than implicit dynamic frames, meaning heap ownership and contents are specified simultaneously and unfolding expressions are unnecessary. Another notable exception is Viper; Dardinier et al. \cite{dardinier2025formal} proved the soundness of the Viper verification infrastructure by providing a formal framework for proving the soundness of translational verifiers based on separation logic that can use more than one verification backend and applying this framework to Viper. However, this only addresses a subset of Viper's features not including unfolding expressions or pure functions. Similarly, Parthasarathy et al. \cite{parthasarathy2004towards} provide semantics for a subset of Viper's features to automatically validate translations into the Viper IVL; however, this subset again does not include unfolding expressions or pure functions and their method focuses on the Boogie verification condition generation backend rather than the symbolic execution backend for Viper.

Previous work by Summers and Drossopoulou has connected equirecursive and isorecursive interpretations of predicates in the context of implicit dynamic frames \cite{summers2013formal}, but treated the dichotomy as a result of the difference between theory and implementation. In contrast, Wise et al. \cite{wise2020gradual} connected isorecursive and equirecursive interpretations in the context of gradual verification because of how it combines static and dynamic verification and the connection between symbolic and concete values.

\section{Future Work}

One limitation of our work is its handling of recursive pure functions and recursive unfolding expressions, resulting in incompleteness. Formalizing the behavior of the Silicon symbolic execution verifier, which better handles recursion for pure functions using SMT axioms to improve quantifier triggering strategies, would result in a more complete treatment of pure functions that is closer to the implementation.

Several other features that are available in the Viper verification backend have not yet been implemented or formalized for gradual verification. For example, fractional permissions have been implemented \cite{liu2024design} but not formalized, and quantified permissions have been neither implemented nor formalized. Adding these features to the gradual verification ecosystem expands the range of programs that can be verified; for example, adding quantified permissions enables verifications of programs that use data structures such as arrays or graphs.

Although the underlying verifications primitives from Viper are sufficiently expressive as part of an intermediate verification language, directly exposing these primitives in the \gco front-end is not a suitable level of abstraction for realistic usable verification. To make gradual verification more practical, we have the long-term goal of incorporating it into a verifier for a language with a substructural type system, such as Prusti for Rust \cite{astrauskas2019leveraging} or the SnaKt uniqueness type system and verifier for Kotlin \cite{protopapa}. Substructural type systems such as Rust's type system provide information that allow the verifier to automatically generate auxiliary specifications, such as folding and unfolding predicates, reducing the specification burden on the developer \cite{astrauskas2019leveraging}. Therefore, introducing more features from Viper to the gradual verification ecosystem ensures that we have a sound and practical foundation with the tools needed for such an extension.

Although the initial semantics and properties of gradual verification were mechanized in Rocq, these are no longer applicable to \gco due to the use of symbolic execution rather than weakest liberal preconditions. The semantics and proof of soundness for \gco have not been formalized using a proof assistant. However, Dardinier et al. \cite{dardinier2025formal} provide a formal framework for proving the soundness of translational separation logic verifiers mechanized in the Isabelle/HOL proof assistant. Since this framework accounts for using different verification back-ends and different front-end languages, this could be expanded to include gradual verification using the gradual version of the Silicon symbolic execution backend and the gradual version of the Silver intermediate representation language for Viper. Incorporating the formalization of gradual verification into the mechanized proof could make it easier to expand upon the formal system for gradual verification, discover possible errors in the proofs, and verify that the implementation corresponds to the formal system.

\section{Conclusion}

In this paper, we provide the formal semantics for the evaluation of unfolding expressions and pure function applications in gradual verification by extending the formalization provided by Zimmerman et al. \cite{zimmerman2024sound}, as well as an extension to the proof of soundness to include the semantics for unfolding expressions and pure functions. 

By formalizing and proving the soundness of gradually-verified unfolding expressions and pure function applications, we demonstrate the soundness of both \gco and the underlying symbolic execution backend for Viper. Overall, introducing unfolding expressions and pure functions to the gradual verification ecosystem ensures that gradual verification becomes more practical for verifying real-world programs while still remaining sound.

\begin{acks}
    We thank Conrad Zimmerman and Chanhee Cho for their valuable feedback. Additionally, we thank Doruk Alp Mutlu and Craig Liu for their valuable contributions to the gradual verification ecosystem.
    
    This work was supported by the
    \grantsponsor{nsa}{National Security Agency}{https://www.nsa.gov/} under the \href{https://sos-vo.org/projects/continuous-reasoning-gradual-verification}{Continuous Reasoning with Gradual Verification} grant, by the \grantsponsor{nsf}{National Science Foundation}{https://www.nsf.gov/} under Grant No. \grantnum[https://www.nsf.gov/awardsearch/show-award?AWD_ID=2447499]{NSF}{CCF-2447499}, and by \grantsponsor{meta}{Meta}{https://www.meta.com/} through the \href{https://www.cmu.edu/scs/s3d/reuse/}{Research Experiences for Undergraduates in Software Engineering} program at Carnegie Mellon University. Any opinions, findings, conclusions, or recommendations are those of the authors and do not necessarily reflect the views of the National Security Agency, the National Science Foundation, or Meta.
\end{acks}

\bibliographystyle{ACM-Reference-Format}
\bibliography{main}

\clearpage
\appendix

\renewcommand*\contentsname{Appendices}

\tableofcontents
\addtocontents{toc}{\protect\setcounter{tocdepth}{2}}
\setcounter{theorem}{0}

\section{Grammar}\label{sec:grammar}

In the following appendices, our changes from Zimmerman et al. \cite{zimmerman2024extended} are \sethlcolor{newstuff}\hl{highlighted light gray}.
\begin{grammar}
  \firstcase
    {\gprogram}
    {\multiple{\gstruct} ~ \multiple{\gpredicate} ~ \multiple{\gfunc} ~ \multiple{\gmethod} ~ \gstatement}
    {Program definition}

  \firstcase
    {\gstruct}
    {S~ \sblock{\multiple{\gtype ~ f}}}
    {Struct definition}

  \firstcase
    {\gpredicate}
    {p({\multiple{\gtype ~\gvar}}) = \gform}
    {Predicate definition}

  \firstcase
    {\gfunc}
    {T~g(\multiple{T~x})~\gfcontract~e}
    {\sethlcolor{newstuff}\hl{Pure function definition}}

  \firstcase
    {\gfcontract}
    {\krequires ~ \gform ~ \kensures ~ e}
    {\sethlcolor{newstuff}\hl{Pure function contract}}

  \firstcase
    {\gmethod}
    {\smethdef{T}{m}{\multiple{T ~ x}}{\gcontract}{s}}
    {Method definition}
  
  \firstcase
    {\gcontract}
    {\krequires ~ \gform ~ \kensures ~ \gform}
    {Method contract}

  \firstcase
    {\gtype}
    {S \gralt \kint \gralt \kbool \gralt \kchar}
    {Type}

  \firstcase
    {\gstatement}
    {\sseq{\gstatement}{\gstatement}}
    {Statement sequence}
  \otherform
    {\kskip}
    {No-op}
  \otherform
    {x \kassign e}
    {Variable assignment}
  \otherform
    {x.f \kassign e}
    {Field assignment}
  \otherform
    {x \kassign \salloc{S}}
    {Allocation}
  \otherform
    {x \kassign m({\multiple{e}})}
    {Method invocation}
  \otherform
    {\sassert{\gform}}
    {Assertion}
  \otherform
    {\sif{e}{s}{s}}
    {Conditional}
  \otherform
    {\swhile{e}{\gform}{s}}
    {Loop}
  \otherform
    {\sfold{p(\multiple{e})}}
    {Fold predicate}
  \otherform
    {\sunfold{p(\multiple{e})}}
    {Unfold predicate}

  \firstcase
    {\gexpression}
    {l \gralt x \gralt e.f \gralt e \oplus e}
    {Expression}
  \otherform
    {e \kor e \gralt e \kand e \gralt \kneg e}
    {}
  \otherform
    {\sunfolding{p(\multiple{e})}{e}}
    {\sethlcolor{newstuff}\hl{Unfolding expression}}
  \otherform
    {g(\multiple{e})}
    {\sethlcolor{newstuff}\hl{Pure function call}}

  \firstcase
    {\gvar}
    {\kresult \gralt id}
    {Variable}

  \firstcase
    {\textit{l}}
    {n \gralt c}
    {Value}
  \otherform{\knull \gralt \ktrue \gralt \kfalse}{}

  \firstcase
    {\gform}
    {\simprecise{\phi} \gralt \phi}
    {Gradual formula}

  \firstcase
    {\phi}
    {\phi * \phi \gralt p(\multiple{e}) \gralt e}
    {Precise formula}
    \otherform
    {\sif{e}{\phi}{\phi}}
    {}
  \otherform
    {\kacc(e.f)}
    {}
\end{grammar}

Where $n \in \mathbb{Z}$, $c \in \textsc{Char}$, $id \in \textsc{Identifier}$, $f \in \Field$, $m \in \Method$, $p \in \Predicate$, $g \in \textsc{Function}$, $\oplus \in \{ +, -, <, >, \le, \ge, = \}$

\begin{definition}\label{def:well-formed-prog}
  A program is \textbf{well-formed} if all the following requirements are satisified:
  \begin{itemize}
    \item It is properly typed.
    \item All loop invariants, method pre-conditions, function pre-conditions, and method post-conditions are specifications (definition \ref{def:specification}).
    \item The free variables of any method are a subset of its parameters.
    \item The special variable $\kresult$ is not a free variable of any method.
    \item No parameters appear on the left side of a variable assignment.
    \item Formulas in pre-conditions only reference parameters.
    \item Formulas in post-conditions only reference parameters and the special variable $\kresult$.
    \item If a pre-condition is imprecise, the post-condition is also imprecise.
    \item Function bodies and post-conditions are framed (section~\ref{sec:framing}) by their pre-conditions and function post-conditions can be proven to hold for the result.
  \end{itemize}
\end{definition}

\section{Run-time Semantics}\label{sec:dynamic-exec}

\subsection{Definitions}

The rules in the following section reference an ambient program with elements denoted as follows:
\begin{itemize}
  \item Structs: $S \in \Struct$
  \item Predicates: $p \in \Predicate$
  \definecolor{shadecolor}{gray}{0.9}
  \begin{snugshade}
  \item Functions: $g \in \textsc{Function}$
  \end{snugshade}
  \item Methods: $m \in \Method$
  \item Types: $T \in \Type$
  \item Variables: $x \in \Var$
  \item Field identifiers: $f \in \Field$
  \item Locations (opaque values): $\ell \in \Location$
  \item Literals (integers, characters, booleans, null): $l \in \Literal$
  \item Values: $v \in \Value = \Location \cup \Literal$
  \item Gradual formulas: $\gform \in \GFormula$
  \item Precise formulas: $\phi \in \Formula$
  \item Statements: $s \in \Stmt$
  \item Heap: $\heap : \Location \times \Field \to \Value$
  \item Permissions: $\perms \in \powerset{\Location \times \Field}$
  \item Environment: $\env : \Var \pfunc \Value$
\end{itemize}

The following functions are defined to access elements in the program:
\begin{itemize}
  \item $\fdefault : \Type \to \Value$ -- Gets the default value of the given type ($0$, $\knull$, etc.)
  \item $\fstruct : \Struct \to \multiple{\Field}$ -- Gets the list of fields from the declaration of the specified struct.
  \item $\fpre : \Method \to \GFormula$ -- Gets the precondition from the declaration of the specified method.
  \item $\fpost : \Method \to \GFormula$ -- Gets the post-condition from the declaration of the specified method.
  \item $\fbody : \Method \to \Stmt$ -- Gets the body from the declaration of the specified method.
  \item $\fparams : \Method \to \multiple{\Var}$ -- Gets the list of parameters from the declaration of the specified predicate.
  \item $\fpred : \Predicate \to \GFormula$ -- Gets the body from the declaration of the specified predicate.
  \item $\fpredparams : \Predicate \to \multiple{\Var}$ -- Gets the list of parameters from the declaration of the specified predicate.
  \definecolor{shadecolor}{gray}{0.9}
  \begin{snugshade}
  \item $\ffuncpre : \Function \to \GFormula$ -- Gets the precondition from the declaration of the specified pure function.
  \item $\ffuncpost : \Function \to \Expr$ -- Gets the post-condition, which is an expression rather than a formula, from the declaration of the specified pure function.
  \item $\ffuncbody : \Function \to \Expr$ -- Gets the body from the declaration of the specified pure function.
  \item $\ffuncparams : \Function to \multiple{\Var}$ -- Gets the list of parameters from the declaration of the specified pure function.
  \end{snugshade}
\end{itemize}

\subsection{Evaluation}

The relation
$$\eval{\heap}{\env}{e}{v}$$
denotes the evaluation of an expression $e \in \Expr$ to a value $v \in \Value$
where $\heap : \Value \times \Field \to \Value$ represents the heap, and $\env : \Var \pfunc \Value$ represents the local variable environment.

\semantics[EvalLiteral]
  { }
  {\eval{\heap}{\env}{l}{l}}
\semantics[EvalVar]
  { }
  {\eval{\heap}{\env}{x}{\env(x)}}
\semantics[EvalAndA]
  {\eval{\heap}{\env}{e_1}{\kfalse}}
  {\eval{\heap}{\env}{e_1 \kand e_2}{\kfalse}}
\semantics[EvalAndB]
  {\eval{\heap}{\env}{e_1}{\ktrue} \\ \eval{\heap}{\env}{e_2}{v_2}}
  {\eval{\heap}{\env}{e_1 \kand e_2}{v_2}}
\semantics[EvalOrA]
  {\eval{\heap}{\env}{e_1}{\ktrue}}
  {\eval{\heap}{\env}{e_1 \kor e_2}{\ktrue}}
\semantics[EvalOrB]
  {\eval{\heap}{\env}{e_1}{\kfalse} \\ \eval{\heap}{\env}{e_2}{v_2}}
  {\eval{\heap}{\env}{e_1 \kor e_2}{v_2}}
\semantics[EvalOp]
  {\eval{\heap}{\env}{e_1}{v_1} \\ \eval{\heap}{\env}{e_2}{v_2}}
  {\eval{\heap}{\env}{e_1 \oplus e_2}{v_1 \oplus v_2}}
\semantics[EvalNeg]
  {\eval{\heap}{\env}{e}{v}}
  {\eval{\heap}{\env}{\kneg e}{\neg v}}
\semantics[EvalField]
  {\eval{\heap}{\env}{e}{\ell}}
  {\eval{\heap}{\env}{e.f}{\heap(\ell, f)}}
\definecolor{shadecolor}{gray}{0.9}
\begin{snugshade}
\semantics[EvalUnfolding]
    {\eval{\heap}{\env}{e_0}{v}}
    {\eval{\heap}{\env}{\sunfolding{p(\multiple{e})}{e_0}}{v}}
\semantics[EvalFunction]
    {
        \multiple{x} = \fparams(f) \\
        \multiple{\eval{\heap}{\env}{e}{v}} \\
        \env' = [\multiple{x \mapsto v}] \\
        \eval{\heap}{\env'}{\fbody(f)}{v'}
    }
    {
        \eval{\heap}{\env}{\fe}{v'}
    }
\end{snugshade}

\subsection{Formulas}

$\Formula$ is the set of all $\gpreciseform$ elements in the grammar, while $\GFormula$ is the set of all $\gform$ elements in the grammar.

\begin{definition}\label{def:precise}
  An \textbf{imprecise} formula $\gform$ is any formula in $\GFormula$ of the form $\simprecise{\phi}$ where $\phi \in \Formula$.
  
  Otherwise, a formula $\phi$ is \textbf{precise} and $\phi \in \Formula$.
\end{definition}

\begin{definition}\label{def:precise-formula}
  A formula $\gform$ is \textbf{completely precise} if there is no $\heap, \perms, \env$ such that \textnormal{\refrule{AssertImprecise}} applies at some step in the derivation of $\assertion{\heap}{\perms}{\env}{\gform}$.

  In other words, a completely precise formula is precise and all predicate bodies referenced in its equi-recursive unrolling are also precise.
\end{definition}

\begin{definition}\label{def:specification}
  A formula $\gform$ is a \textbf{specification} if either $\gform$ is imprecise or $\gform$ is precise and self-framed (definition \ref{def:self-framed}).
\end{definition}

\subsection{Footprints}

\begin{definition}\label{def:efoot}
  The \textbf{exact footprint} of a formula $\gform \in \GFormula$, denoted $\efoot{\heap}{\env}{\gform}$, or of an expression $e$, denoted $\efoot{\heap}{\env}{e}$, is the set of permissions that must be accessed when asserting $\gform$ or evaluating $e$.

  By lemmas \ref{lem:efoot-subset-spec} and \ref{lem:efoot-assert}, if $\gform$ is a specification, this set is the lower bound of permissions that satisfy $\gform$.
\end{definition}

The calculation of exact footprints is defined as follows:
\begin{align*}
  \efoot{\heap}{\env}{l} &:= \emptyset \\
  \efoot{\heap}{\env}{x} &:= \emptyset \\
  \efoot{\heap}{\env}{e.f} &:= \efoot{\heap}{\env}{e}; \pair{\ell}{f} &\text{if }\eval{\heap}{\env}{e}{\ell} \\
  \efoot{\heap}{\env}{e_1 \oplus e_2} &:= \efoot{\heap}{\env}{e_1} \cup \efoot{\heap}{\env}{e_2} \\
  \efoot{\heap}{\env}{e_1 \kor e_2} &:= \efoot{\heap}{\env}{e_1} &\text{if }\eval{\heap}{\env}{e_1}{\ktrue} \\
  \efoot{\heap}{\env}{e_1 \kor e_2} &:= \efoot{\heap}{\env}{e_1} \cup \efoot{\heap}{\env}{e_2} &\text{if }\eval{\heap}{\env}{e_1}{\kfalse} \\
  \efoot{\heap}{\env}{e_1 \kand e_2} &:= \efoot{\heap}{\env}{e_1} &\text{if }\eval{\heap}{\env}{e_1}{\kfalse} \\
  \efoot{\heap}{\env}{e_1 \kand e_2} &:= \efoot{\heap}{\env}{e_1} \cup \efoot{\heap}{\env}{e_2} &\text{if }\eval{\heap}{\env}{e_1}{\ktrue} \\
  \efoot{\heap}{\env}{\kneg e} &:= \efoot{\heap}{\env}{e} \\
  \efoot{\heap}{\env}{\simprecise{\phi}} &:= \efoot{\heap}{\env}{\phi} \\
  \efoot{\heap}{\env}{\phi_1 * \phi_2} &:= \efoot{\heap}{\env}{\phi_1} \cup \efoot{\heap}{\env}{\phi_2} \\
  \efoot{\heap}{\env}{p(\multiple{e})} &:= \efoot{\heap}{[\multiple{x \mapsto v}]}{\fpred(p)} ~\cup &\text{if } \multiple{x} = \fpredparams(p) \\
  &\quad\quad \bigcup \multiple{\efoot{\heap}{\env}{e}} &\text{and }\multiple{\eval{\heap}{\env}{e}{v}} \\
  \efoot{\heap}{\env}{\sif{e}{\phi_1}{\phi_2}} &:= \efoot{\heap}{\env}{e} \cup \efoot{\heap}{\env}{\phi_1} &\text{if }\eval{\heap}{\env}{e_1}{\ktrue} \\
  \efoot{\heap}{\env}{\sif{e}{\phi_1}{\phi_2}} &:= \efoot{\heap}{\env}{e} \cup \efoot{\heap}{\env}{\phi_2} &\text{if }\eval{\heap}{\env}{e_1}{\kfalse} \\
  \efoot{\heap}{\env}{\kacc(e.f)} &:= \efoot{\heap}{\env}{e}; \pair{\ell}{f} &\text{if }\eval{\heap}{\env}{e}{\ell}
\end{align*}
\definecolor{shadecolor}{gray}{0.9}
\begin{snugshade}
\begin{align*}
  \efoot{\heap}{\env}{\sunfolding{\pe}{e_0}} &:= \efoot{\heap}{\env}{\pe} \cup \efoot{\heap}{\env}{e_0} \\
  \efoot{\heap}{\env}{\fe} &:= \bigcup \multiple{\efoot{\heap}{\env}{e}} \cup &\text{if } \multiple{x} = \ffuncparams(f) \\
  &\quad\quad \efoot{\heap}{[\multiple{x \mapsto v}]}{\ffuncpre(f)} \cup &\text{and } \multiple{\eval{\heap}{\env}{e}{v}} \\
  &\quad\quad \efoot{\heap}{[\multiple{x \mapsto v}]}{\ffuncbody(f)}
\end{align*}
\end{snugshade}

\begin{definition}\label{def:footprint}
  The \textbf{maximal footprint} of a formula, denoted $\foot{\heap}{\perms}{\env}{\gform}$, is the set of all permissions that $\gform$ may represent in the context of a heap $\heap$, permission set $\perms$, and variable environment $\env$.

  The footprint of a completely precise formula (definition \ref{def:precise-formula}) is its exact footprint, while the footprint of a formula which is not completely precise is the current set of permissions.

  \begin{equation*}
    \foot{\heap}{\perms}{\env}{\gform} := \begin{cases}
      \efoot{\heap}{\env}{\gform} & \text{if $\gform$ is completely precise} \\
      \perms & \text{otherwise}
    \end{cases}
  \end{equation*}
\end{definition}

\subsection{Framing}\label{sec:framing}

The relation $\frm{\heap}{\perms}{\env}{e}$ denotes that $e \in \Expr$ is framed by the permissions contained in $\perms \in \powerset{\Perm}$.

\semantics[FrameLiteral]
  { }
  {\frm{\heap}{\perms}{\env}{l}}
\semantics[FrameVar]
  { }
  {\frm{\heap}{\perms}{\env}{x}}
\semantics[FrameField]
  {
    \frm{\heap}{\perms}{\env}{e} \\
    \assertion{\heap}{\perms}{\env}{\kacc(e.f)}
  }
  {
    \frm{\heap}{\perms}{\env}{e.f}
  }
\semantics[FrameOp]
  {
    \frm{\heap}{\perms}{\env}{e_1} \\
    \frm{\heap}{\perms}{\env}{e_2}
  }
  {
    \frm{\heap}{\perms}{\env}{e_1 \oplus e_2}
  }
\semantics[FrameOrA]
  {
    \eval{\heap}{\env}{e_1}{\ktrue} \\
    \frm{\heap}{\perms}{\env}{e_1}
  }
  {
    \frm{\heap}{\perms}{\env}{e_1 \kor e_2}
  }
\semantics[FrameOrB]
  {
    \eval{\heap}{\env}{e_1}{\kfalse} \\
    \frm{\heap}{\perms}{\env}{e_1} \\
    \frm{\heap}{\perms}{\env}{e_2}
  }
  {
    \frm{\heap}{\perms}{\env}{e_1 \kor e_2}
  }
\semantics[FrameAndA]
  {
    \eval{\heap}{\env}{e_1}{\kfalse} \\
    \frm{\heap}{\perms}{\env}{e_1}
  }
  {
    \frm{\heap}{\perms}{\env}{e_1 \kand e_2}
  }
\semantics[FrameAndB]
  {
    \eval{\heap}{\env}{e_1}{\ktrue} \\
    \frm{\heap}{\perms}{\env}{e_1} \\
    \frm{\heap}{\perms}{\env}{e_2}
  }
  {
    \frm{\heap}{\perms}{\env}{e_1 \kand e_2}
  }
\semantics[FrameNeg]
  {\frm{\heap}{\perms}{\env}{e}}
  {\frm{\heap}{\perms}{\env}{\kneg e}}
\definecolor{shadecolor}{gray}{0.9}
\begin{snugshade}
\semantics[FrameUnfolding]
  {
    \ifrm{\heap}{\perms}{\env}{p(\multiple{e})} \\
    \assertion{\heap}{\perms}{\env}{p(\multiple{e})} \\
    \frm{\heap}{\perms}{\env}{e_0}
  }
  {
    \frm{\heap}{\perms}{\env}{\sunfolding{p(\multiple{e})}{e_0}}
  }
\semantics[FrameFunction]
  {
    \ifrm{\heap}{\perms}{\env}{\fpre(f)} \\
    \assertion{\heap}{\perms}{\env}{\fpre(f)} \\
    \multiple{\frm{\heap}{\perms}{\env}{e}}
  }
  {
    \frm{\heap}{\perms}{\env}{\fe}
  }
\end{snugshade}

The relation $\ifrm{\heap}{\perms}{\env}{\phi}$ denotes that $\phi \in \Formula$ is framed by the permissions in $\perms \in \powerset{\Perm}$ using an iso-recursive interpretation of predicates (i.e., without unrolling predicate instances).

\semantics[IFrameExpression]
  {
    \frm{\heap}{\perms}{\env}{e}
  }
  {
    \ifrm{\heap}{\perms}{\env}{e}
  }
\semantics[IFrameConjunction]
  {
    \ifrm{\heap}{\perms}{\env}{\phi_1} \\
    \ifrm{\heap}{\perms}{\env}{\phi_2}
  }
  {
    \ifrm{\heap}{\perms}{\env}{\phi_1 * \phi_2}
  }
\semantics[IFramePredicate]
  {
    \multiple{\frm{\heap}{\perms}{\env}{e}}
  }
  {
    \ifrm{\heap}{\perms}{\env}{p(\multiple{e})}
  }
\semantics[IFrameConditionalA]
  {
    \eval{\heap}{\env}{e}{\ktrue} \\
    \frm{\heap}{\perms}{\env}{e} \\
    \ifrm{\heap}{\perms}{\env}{\phi_1}
  }
  {
    \ifrm{\heap}{\perms}{\env}{\sif{e}{\phi_1}{\phi_2}}
  }
\semantics[IFrameConditionalB]
  {
    \eval{\heap}{\env}{e}{\kfalse} \\
    \frm{\heap}{\perms}{\env}{e} \\
    \ifrm{\heap}{\perms}{\env}{\phi_2}
  }
  {
    \ifrm{\heap}{\perms}{\env}{\sif{e}{\phi_1}{\phi_2}}
  }
\semantics[IFrameAcc]
  {
    \frm{\heap}{\perms}{\env}{e}
  }
  {
    \ifrm{\heap}{\perms}{\env}{\kacc(e.f)}
  }

Define the relation $\efrm{\heap}{\perms}{\env}{\phi}$  denotes that $\phi \in \Formula$ is framed by the permissions in $\perms \in \powerset{\Perm}$ using an equi-recursive interpretation of predicates (i.e., unrolling predicate instances).

\semantics[EFrameExpression]
  {
    \frm{\heap}{\perms}{\env}{e}
  }
  {
    \efrm{\heap}{\perms}{\env}{e}
  }
\semantics[EFrameConjunction]
  {
    \efrm{\heap}{\perms}{\env}{\phi_1} \\
    \efrm{\heap}{\perms}{\env}{\phi_2}
  }
  {
    \efrm{\heap}{\perms}{\env}{\phi_1 * \phi_2}
  }
\semantics[EFramePredicate]
  {
    \multiple{\frm{\heap}{\perms}{\env}{e}} \\
    \multiple{\eval{\heap}{\env}{e}{v}} \\
    \multiple{x} = \fpredparams(p) \\
    \efrm{\heap}{\perms}{[\multiple{x \mapsto v}]}{\fpred(p)}
  }
  {
    \efrm{\heap}{\perms}{\env}{p(\multiple{e})}
  }
\semantics[EFrameConditionalA]
  {
    \eval{\heap}{\env}{e}{\ktrue} \\
    \frm{\heap}{\perms}{\env}{e} \\
    \efrm{\heap}{\perms}{\env}{\phi_1}
  }
  {
    \efrm{\heap}{\perms}{\env}{\sif{e}{\phi_1}{\phi_2}}
  }
\semantics[EFrameConditionalB]
  {
    \eval{\heap}{\env}{e}{\kfalse} \\
    \frm{\heap}{\perms}{\env}{e} \\
    \efrm{\heap}{\perms}{\env}{\phi_2}
  }
  {
    \efrm{\heap}{\perms}{\env}{\sif{e}{\phi_1}{\phi_2}}
  }
\semantics[EFrameAcc]
  {
    \frm{\heap}{\perms}{\env}{e}
  }
  {
    \efrm{\heap}{\perms}{\env}{\kacc(e.f)}
  }

\begin{definition}\label{def:self-framed}
  A \textbf{self-framed} formula is a precise formula $\phi \in \Formula$ such that for all $\heap, \perms, \env$,
  $$\assertion{\heap}{\perms}{\env}{\phi} \implies \ifrm{\heap}{\perms}{\env}{\phi}.$$
\end{definition}

\subsection{Assertions}

The relation $\assertion{\heap}{\perms}{\env}{\gform}$ denotes the validity of $\gform \in \GFormula$ for the state represented by $\triple{\heap}{\perms}{\env}$.

\semantics[AssertImprecise]
  {
    \assertion{\heap}{\perms}{\env}{\phi} \\
    \efrm{\heap}{\perms}{\env}{\phi}
  }
  {\assertion{\heap}{\perms}{\env}{\simprecise{\phi}}}
\semantics[AssertValue]
  {
    \eval{\heap}{\env}{e}{\ktrue}
  }
  {
    \assertion{\heap}{\perms}{\env}{\gexpression}
  }
\semantics[AssertIfA]
  {
    \eval{\heap}{\env}{e}{\ktrue} \\
    \assertion{\heap}{\perms}{\env}{\phi_1}
  }
  {
    \assertion{\heap}{\perms}{\env}{\sif{e}{\phi_1}{\phi_2}}
  }
\semantics[AssertIfB]
  {
    \eval{\heap}{\env}{e}{\kfalse} \\
    \assertion{\heap}{\perms}{\env}{\phi_2}
  }
  {
    \assertion{\heap}{\perms}{\env}{\sif{e}{\phi_1}{\phi_2}}
  }
\semantics[AssertAcc]
  {
    \eval{\heap}{\env}{e}{\ell} \\
    \pair{\ell}{f} \in \perms
  }
  {
    \assertion{\heap}{\perms}{\env}{\kacc(e.f)}
  }
\semantics[AssertConjunction]
  {
    \assertion{\heap}{\perms_1}{\env}{\phi_1} \\
    \assertion{\heap}{\perms_2}{\env}{\phi_2} \\
    \perms_1 \cup \perms_2 \subseteq \perms \\
    \perms_1 \cap \perms_2 = \emptyset
  }
  {
    \assertion{\heap}{\perms}{\env}{\phi_1 * \phi_2}
  }
\semantics[AssertPredicate]
  {
    \multiple{x} = \fpredparams(p) \\
    \multiple{\eval{\heap}{\env}{e}{v}} \\
    \assertion{\heap}{\perms}{[\multiple{x \mapsto v}]}{\fpred(p)}
  }
  {
    \assertion{\heap}{\perms}{\env}{p(\multiple{e})}
  }

\subsection{Execution}\label{sec:exec-rules}

\begin{definition}
  A \textbf{stack} $\stack$ is a list of the form
  $$\triple{\perms_n}{\env_n}{s_n} \cdot \ldots \cdot \triple{\perms_1}{\env_1}{s_1} \cdot \nilsym$$
  where $n\ge 1$, $\perms_n, \cdots, \perms_1$ are permission sets, $\env_n, \cdots, \env_1$ are variable environments, and $s_n, \cdots, s_1$ are statements.

  $\perms(\stack)$, $\env(\stack)$, and $s(\stack)$ may be used to denote the values $\perms_n$, $\env_n$, and $s_n$, respectively.
\end{definition}

\begin{definition}
  An \textbf{exclusion frame} $\xperms$ a set of permissions that may not be transferred to a callee stack frame. This is necessary to ensure that the permissions represented by the imprecise specifications of a callee cannot overlap with some predicate instance that is owned by the caller.
\end{definition}

Small-step execution is denoted by the judgement
$$\dexec{\heap}{\stack}{\xperms}{\heap'}{\stack'}$$
for stacks $\stack, \stack'$, heap $\heap$, and exclusion frame $\xperms$.

\semantics[ExecSeq]
  { }
  {
    \dexec
      {\heap}{\triple{\perms}{\env}{\sseq{\kskip}{s}} \cdot \stack}
      {\xperms}
      {\heap}{\triple{\perms}{\env}{s} \cdot \stack}
  }
\semantics[ExecAssign]
  {
    \eval{\heap}{\env}{e}{v} \\
    \frm{\heap}{\perms}{\env}{e}
  }
  {
    \dexec
      {\heap}{\triple{\perms}{\env}{\sseq{x = e}{s}}  \cdot \stack}
      {\xperms}
      {\heap}{\triple{\perms}{\env[x \mapsto v]}{s} \cdot \stack}
  }
\semantics[ExecAssignField]
  {
    \eval{\heap}{\env}{x}{\ell} \\
    \eval{\heap}{\env}{e}{v} \\
    \assertion{\heap}{\perms}{\env}{\kacc(x.f)} \\
    \frm{\heap}{\perms}{\env}{e} \\
    \heap' = \heap[\pair{\ell}{f} \mapsto v]
  }
  {
    \dexec
      {\heap}{\triple{\perms}{\env}{\sseq{x.f = e}{s}} \cdot \stack}
      {\xperms}
      {\heap'}{\triple{\perms}{\env}{s} \cdot \stack}
  }
\semantics[ExecAlloc]
  {
    \ell = \ffresh \\
    \multiple{T ~f} = \fstruct(S) \\
    \heap' = \heap[\multiple{\pair{\ell}{f} \mapsto \fdefault(T)}] \\
    \perms' = \perms \cup \set{\multiple{\pair{\ell}{f}}}
  }
  {
    \dexec
      {\heap}{\triple{\perms}{\env}{\sseq{x = \salloc{S}}{s}} \cdot \stack}
      {\xperms}
      {\heap'}{\triple{\perms'}{\env[x \mapsto \ell]}{s} \cdot \stack}
  }
\semantics[ExecCallEnter]
  {
    \multiple{x} = \fparams(m) \\
    \multiple{\eval{\heap}{\env}{e}{v}} \\
    \multiple{\frm{\heap}{\perms}{\env}{e}} \\
    \env' = [\multiple{x \mapsto v}] \\
    \assertion{\heap}{\perms \setminus \xperms}{\env'}{\fpre(m)} \\
    \perms' = \foot{\heap}{\perms \setminus \xperms}{\env'}{\fpre(m)}
  }
  {
    \pair
      {\heap}
      {\triple{\perms}{\env}{\sseq{y \kassign m(\multiple{e})}{s}} \cdot \stack},
    \,\xperms
    \\\\ \to \\\\
    \pair
      {\heap}
      {
        \triple{\perms'}{\env'}{\sseq{\fbody(m)}{\kskip}} \cdot
        \triple{\perms \setminus \perms'}{\env}{\sseq{y \kassign m(\multiple{e})}{s}} \cdot \stack
      }
  }
\semantics[ExecCallExit]
  {
    \assertion{\heap}{\perms'}{\env'}{\fpost(m)} \\
    \env'' = \env[y \mapsto \env'(\kresult)] \\
    \perms'' = \perms \cup \foot{\heap}{\perms'}{\env'}{\fpost(m)}
  }
  {
    \dexec
      {\heap}
      {
        \triple{\perms'}{\env'}{\kskip} \cdot
        \triple{\perms}{\env}{\sseq{y \kassign m(\multiple{e})}{s}} \cdot
        \stack
      }
      {\xperms}
      {\heap}
      {\triple{\perms''}{\env''}{s} \cdot \stack}
  }
\semantics[ExecAssert]
  {
    \assertion{\heap}{\perms}{\env}{\simprecise{\phi}}
  }
  {
    \dexec
      {\heap}{\triple{\perms}{\env}{\sseq{\sassert{\gform}}{s}} \cdot \stack}
      {\xperms}
      {\heap}{\triple{\perms}{\env}{s} \cdot \stack}
  }
\semantics[ExecIfA]
  {
    \eval{\heap}{\env}{e}{\ktrue} \\
    \frm{\heap}{\perms}{\env}{e}
  }
  {
    \dexec
      {\heap}{\triple{\heap}{\env}{\sseq{\sif{e}{s_1}{s_2}}{s}} \cdot \stack}
      {\xperms}
      {\heap}{\triple{\heap}{\env}{\sseq{s_1}{s}} \cdot \stack}
  }
\semantics[ExecIfB]
  {
    \eval{\heap}{\env}{e}{\kfalse} \\
    \frm{\heap}{\perms}{\env}{e}
  }
  {
    \dexec
      {\heap}{\triple{\heap}{\env}{\sseq{\sif{e}{s_1}{s_2}}{s}} \cdot \stack}
      {\xperms}
      {\heap}{\triple{\heap}{\env}{\sseq{s_2}{s}} \cdot \stack}
  }
\semantics[ExecWhileEnter]
  {
    \eval{\heap}{\env}{e}{\ktrue} \\
    \frm{\heap}{\perms}{\env}{e} \\
    \assertion{\heap}{\perms \setminus \xperms}{\env}{\gform} \\
    \perms' = \foot{\heap}{\perms \setminus \xperms}{\env}{\gform}
  }
  {
    \pair
      {\heap}
      {\triple{\perms}{\env}{\sseq{\swhile{e}{\gform}{s'}}{s}} \cdot \stack},
    \,\xperms
    \\\\ \to \\\\
    \pair
      {\heap}
      {\triple{\perms'}{\env}{\sseq{s'}{\kskip}} \cdot \triple{\perms \setminus \perms'}{\env}{\sseq{\swhile{e}{\gform}{s'}}{s}} \cdot \stack}
  }
\semantics[ExecWhileSkip]
  {
    \eval{\heap}{\env}{e}{\kfalse} \\
    \frm{\heap}{\perms}{\env}{e} \\
    \assertion{\heap}{\perms \setminus \xperms}{\env}{\gform}
  }
  {
    \dexec
      {\heap}{\triple{\perms}{\env}{\sseq{\swhile{e}{\gform}{s'}}{s}} \cdot \stack}
      {\xperms}
      {\heap}{\triple{\perms}{\env}{s} \cdot \stack}
  }
\semantics[ExecWhileFinish]
  {
    \assertion{\heap}{\perms'}{\env'}{\gform} \\
    \perms'' = \perms \cup \foot{\heap}{\perms'}{\env'}{\gform}
  }
  {
    \pair
      {\heap}
      {
        \triple{\perms'}{\env'}{\kskip} \cdot
        \triple{\perms}{\env}{\sseq{\swhile{e}{\gform}{s'}}{s}} \cdot
        \stack
      },
    \,\xperms
    \\\\ \to \\\\
    \pair
      {\heap}
      {\triple{\perms''}{\env'}{\sseq{\swhile{e}{\gform}{s'}}{s}} \cdot \stack}
  }
\semantics[ExecFold]
  { }
  {
    \dexec
      {\heap}
      {\triple{\perms}{\env}{\sseq{\sfold{p(\multiple{e})}}{s}} \cdot S}
      {\xperms}
      {\heap}
      {\triple{\perms}{\env}{s} \cdot S}
  }
\semantics[ExecUnfold]
  { }
  {
    \dexec
      {\heap}
      {\triple{\perms}{\env}{\sseq{\sunfold{p(\multiple{e})}}{s}} \cdot S}
      {\xperms}
      {\heap}
      {\triple{\perms}{\env}{s} \cdot S}
  }

\subsection{Reachable transitions}\label{sec:dynamic-reachability}

\begin{definition}
  An \textbf{execution state} $\Gamma$ is either one of the abstract symbols $\finalsym$ or $\initsym$, or a pair $\pair{\heap}{\stack}$ of a heap $\heap$ and a stack $\stack$.
\end{definition}

\begin{definition}
  An execution state $\Gamma$ is \textbf{well-formed} if $\Gamma$ is either one of the abstract symbols $\initsym$ or $\finalsym$, or of the form $\pair{\heap}{\triple{\perms_n}{\env_n}{s_n} \cdot \ldots \cdot \triple{\perms_1}{\env_1}{s_1} \cdot \nilsym}$ and
  \begin{itemize}
    \item $\perms_i \cap \perms_j = \emptyset$ for all $1 \le i < j \le n$.
    \item $s_n = \sseq{s}{\kskip}$ for some statement $s$ or $s_n = \kskip$.
    \item For all $1 \le i < n$, $s_i = \sseq{s}{\sseq{s'}{\kskip}}$ or $s_i  = \sseq{s}{\kskip}$ for some statements $s$ and $s'$ where $s$ is of the form $m(\multiple{e})$ for some $m$, $\multiple{e}$ or $\swhile{e}{\gform}{s_{body}}$ for some $e$, $\gform$, $s_{body}$.
  \end{itemize}
\end{definition}

Examining the execution rules shows that well-formedness of states is preserved by the execution rules defined above.

A dynamic execution transition $\Gamma \to \Gamma'$ is reachable under a program $\prog$, using the exclusion frame $\xperms$, when the following judgement holds:
$$\dtrans{\prog}{\xperms}{\Gamma}{\Gamma'}$$
\semantics[ExecInit]
{ }
{
  \dtrans{\quadruple{s}{M}{P}{S}}{\_}{\initsym}{\pair{\emptyset}{\triple{\emptyset}{\emptyset}{s} \cdot \nilsym}}
}
\semantics[ExecStep]
{
  \dtrans{\prog}{\_}{\_}{\pair{\heap}{\stack}} \\
  \dexec{\heap}{\stack}{\xperms}{\heap'}{\stack'}
}
{
  \dtrans{\prog}{\xperms}{\pair{\heap}{\stack}}{\pair{\heap'}{\stack'}}
}
\semantics[ExecFinal]
{ }
{
  \dtrans{\prog}{\_}{\pair{\_}{\triple{\_}{\_}{\kskip} \cdot \nilsym}}{\finalsym}
}
\begin{definition}\label{def:dstate-reachable}
  An execution state $\dstate$ is \textbf{reachable} from program $\prog$ if $\dstate = \initsym$ or $\dtrans{\prog}{\_}{\_}{\dstate}$.
\end{definition}

\section{Symbolic Execution}\label{sec:symbolic-exec}

\subsection{Definitions}

\begin{definition}
  A \textbf{symbolic value} $\nu \in \SValue$ is an abstract value that represents an unknown character, boolean, integer, or location value. We leave the concrete type of $\SValue$ undefined, but assume that an infinite number of distinct new values can be produced by the $\ffresh$ function.
\end{definition}

\begin{definition}
  A \textbf{symbolic expression} $t \in \SExpr$ is a symbolic value or symbolic expressions combined using operators. Note that the binary operators $\oplus$ are the same as in \S \ref{sec:grammar}.
    $$t ::= \nu \gralt l \gralt \kneg t \gralt t_1 \kand t_2 \gralt t_1 \kor t_2 \gralt t_1 \oplus t_2$$
\end{definition}

\begin{definition}
  A \textbf{path condition} $\pc \in \SExpr$ is a symbolic expression consisting of conjuncts added at every branch point during a particular symbolic execution path.
\end{definition}

\begin{definition}
  An \textbf{imprecise flag} $\imp \in \set{ \top, \bot }$ is a flag that indicates whether a state is imprecise.
\end{definition}

\begin{definition}
  A \textbf{symbolic evironment} $\senv : \Var \pfunc \SExpr$ is a partial function mapping variable names to symbolic expressions.
\end{definition}

\begin{definition}
  A \textbf{field chunk} $\triple{f}{t}{t'} \in \SField$ denotes the mapping of the field $f$ of instance $t$ to the value $t'$.
\end{definition}

\begin{definition}
  A \textbf{predicate chunk} $\pair{p}{\multiple{t}} \in \SPredicate$ represents an isorecursive predicate $p$ with symbolically-evaluated arguments $\multiple{t}$.
\end{definition}

\begin{definition}
  A \textbf{heap chunk} $h \in \SField \cup \SPredicate$ is either a field chunk or a predicate chunk.
\end{definition}

\begin{definition}
  A \textbf{precise symbolic heap} (usually abbreviated as precise heap) $\sheap \in \powerset{\SField \cup \SPredicate}$ is a set of heap chunks where all heap chunks must occupy distinct heap locations at run time.
\end{definition}

\definecolor{shadecolor}{gray}{0.9}
\begin{snugshade}
\begin{definition}
  An \textbf{optimistic symbolic heap} (usually abbreviated as optimistic heap) $\oheap \in \powerset{\SField \cup \SPredicate}$ is a set of field chunks and predicate chunks where distinct chunks may coincide on the heap at run time (i.e. object references that are distinct symbolic expressions may be represent the same object value at run time).
\end{definition}

\begin{definition}
    A \textbf{visited set} $\svis \in \powerset{\Predicate~\cup~\Function}$ contains predicates that have already been unfolded and pure functions that have already been evaluated during recursive symbolic evaluation.
\end{definition}
\end{snugshade}

\begin{definition}
  A \textbf{symbolic permission} $\sperm \in \SPerm$ represents a particular heap location or predicate instance.
  $$\sperm ::= \pair{f}{t} \gralt \pair{p}{\multiple{t}}$$
  A set of symbolic permissions is denoted by $\sperms$.
\end{definition}

\begin{definition}
  A \textbf{run-time check} $r \in \SCheck$ is a symbolic value that must be asserted at run time, a symbolic permission whose access must be asserted at run time, a set of symbolic permissions whose disjointness must be asserted at run time, or an unsatisfiable check.
  $$r ::= t \gralt \sperm \gralt \fsep(\sperms_1, \sperms_2) \gralt \bot$$
  A set of run-time checks is denoted by $\scheck$.
\end{definition}

\definecolor{shadecolor}{gray}{0.9}
\begin{snugshade}
\begin{definition}
  A \textbf{symbolic state} $\sstate \in \SState$ consists of an imprecise flag, a path condition, a precise heap, an optimistic heap, a symbolic environment, and a visited set.
  $$\sstate ::= \sextuple{\imp}{\pc}{\sheap}{\oheap}{\senv}{\svis}$$
  $\imp(\sstate)$, $\pc(\sstate)$, $\sheap(\sstate)$, $\oheap(\sstate)$, $\senv(\sstate)$, and $\svis(\sstate)$ each denote a reference to the respective component of $\sstate$.
\end{definition}
\end{snugshade}

\subsection{Valuations}\label{sec:valuations}

In order to prove soundness with respect to the dynamic semantics, we must first define a correspondence between the two representations.

\begin{definition}
  A \textbf{valuation} $V : \SValue \pfunc \Value$ is a partial function mapping symbolic values to concrete values.
\end{definition}

A base valuation $V : \SValue \pfunc \Value$ is implicitly extended to $V : \SExpr \pfunc \Value$ for all possible symbolic expressions composed of literals and symbolic values in the domain of $V$:
\begin{align*}
  V(l) &:= l \\
  V(t_1 \kadd t_2) &:= V(t_1) + V(t_2) \\
  V(t_1 \ksub t_2) &:= V(t_1) - V(t_2) \\
  V(t_1 \kmul t_2) &:= V(t_1) \cdot V(t_2) \\
  V(t_1 \kdiv t_2) &:= \frac{V(t_1)}{V(t_2)} \\
  V(t_1 \keq t_2) &:= \begin{cases}
    \ktrue &\text{if}~ V(t_1) = V(t_2) \\
    \kfalse &\text{otherwise}
  \end{cases} \\
  V(\kneg t) &:= \begin{cases}
    \ktrue &\text{if}~ V(t_1) = \kfalse \\
    \kfalse &\text{otherwise}
  \end{cases} \\
  V(t_1 \kor t_2) &:= \begin{cases}
    \ktrue &\text{if}~ V(t_1) = \ktrue ~\text{or}~ V(t_2) = \ktrue \\
    \kfalse &\text{otherwise}
  \end{cases} \\
  V(t_1 \kand t_2) &:= \begin{cases}
    \ktrue &\text{if}~ V(t_1) = \ktrue ~\text{and}~ V(t_2) = \ktrue \\
    \kfalse &\text{otherwise}
  \end{cases}
\end{align*}

\begin{definition}\label{def:implication}
  A symbolic expression $t_1$ \textbf{implies} another symbolic expression $t_2$ (denoted $t_1 \implies t_2$) if, for all valuations for which $V(t_1)$ and $V(t_2)$ are defined, $(V(t_1) = \ktrue) \implies (V(t_2) = \ktrue)$.
\end{definition}

\subsection{Footprints}

$\vfoot{V}{\heap}{\sperm}$ and $\vfoot{V}{\heap}{\sperms}$ denote the footprint (i.e. set of permissions) necessary to satisfy the given symbolic permission or symbolic permission set, respectively, given some heap $\heap$.
\begin{align*}
  \vfoot{V}{\heap}{\pair{f}{t}} &:= \pair{V(t)}{f} \\
  \vfoot{V}{\heap}{\pair{p}{\multiple{t}}} &:= \efoot{\heap}{[\multiple{x \mapsto V(t)}]}{\fpred(p)} \\
  \vfoot{V}{\heap}{\sperms} &:= \bigcup_{\sperm \in \sperms} \vfoot{V}{\heap}{\sperm}
\end{align*}

\definecolor{shadecolor}{gray}{0.9}
\begin{snugshade}
Additionally, $\sfoot{\gform}{\sstate}$ represents the set of field chunks in the precise and imprecise heaps represented by an imprecise specification. This is used when taking heap snapshots during the evaluation of pure function applications.
\begin{align*}
    \sfoot{l}{\sstate} &= \varnothing \\
    \sfoot{x}{\sstate} &= \varnothing \\
    \sfoot{e.f}{\sstate} &= \set{\triple{f}{t_e}{t}} &\text{where}~ \seval{\sstate}{e}{t_e}{\_}{\_} \\
    &\quad\quad &\text{and}~ \seval{\sstate}{t_e.f}{t}{\_}{\_} \\
    \sfoot{\kacc(e.f)}{\sstate} &= \set{\triple{f}{t_e}{t}} &\text{where}~ \seval{\sstate}{e}{t_e}{\_}{\_} \\
    &\quad\quad &\text{and}~ \seval{\sstate}{t_e.f}{t}{\_}{\_} \\
    \sfoot{e_1 \oplus e_2}{\sstate} &= \sfoot{e_1}{\sstate} \cup \sfoot{e_2}{\sstate} \\
    \sfoot{e_1 \kor e_2}{\sstate} &= \sfoot{e_1}{\sstate} \cup \sfoot{e_2}{\sstate} \\
    \sfoot{e_1 \kand e_2}{\sstate} &= \sfoot{e_1}{\sstate} \cup \sfoot{e_2}{\sstate} \\
    \sfoot{\kneg e}{\sstate} &= \sfoot{e}{\sstate} \\
    \sfoot{\simprecise{\phi}}{\sstate} &= \sheap(\sstate) \cup \oheap(\sstate) \\
    \sfoot{\phi_1 * \phi_2}{\sstate} &= \sfoot{\phi_1}{\sstate} \cup \sfoot{\phi_2}{\sstate} \\
    \sfoot{\pe}{\sstate} &= \sfoot{\fpred(p)}{\sstate[ \senv = [\multiple{x \mapsto t}] ] } \cup &\text{where}~ \multiple{x} = \fpredparams(p) \\
    &\quad\quad \bigcup \multiple{\sfoot{e}{\sstate}} \cup \set{\pair{p}{\multiple{t}}} &\text{and}~ \multiple{\seval{\sstate}{e}{t}{\_}{\_}} \\
    \sfoot{\fe}{\sstate} &= \sfoot{\ffuncpre(f)}{\sstate} \cup \\
    &\quad\quad \sfoot{\ffuncbody(f)}{\sstate} \cup \bigcup \multiple{\sfoot{e}{\sstate}} \\
    \sfoot{\sunfolding{\pe}{e_0}}{\sstate} &= \sfoot{\pe}{\sstate} \cup \sfoot{e_0}{\sstate}
\end{align*}
\end{snugshade}

\subsection{Correspondence}

The relations $\simheap{V}{\sheap}{\heap}{\perms}$, $\simheap{V}{\oheap}{\heap}{\perms}$, $\simenv{V}{\senv}{\env}$, and $\simstate{V}{\sstate}{\heap}{\perms}{\env}$ denote correspondence between symbolic states and run-time states:
\begin{align}
  \begin{split}\label{eq:sheap-correspondence}
    \simheap{V}{\sheap}{\heap}{\perms} \iffdef
      & (\universal{\triple{f}{t}{t'} \in \sheap}{\heap(V(t), f) = V(t')}) ~\wedge \\
      & (\universal{\triple{f}{t}{t'} \in \sheap}{\pair{V(t)}{f} \in \perms}) ~\wedge \\
      & (\universal{\pair{p}{\tlist} \in \sheap}{\assertion{\heap}{\perms}{[\multiple{x \mapsto V(t)}]}{\fpred(p)}}) ~\wedge \\
      & (\universal{h_1, h_2 \in \sheap^2}{h_1 \ne h_2 \implies \vfoot{V}{\heap}{h_1} \cap \vfoot{V}{\heap}{h_2} = \emptyset})
  \end{split} \\
  \begin{split}
    \simheap{V}{\oheap}{\heap}{\perms} \iffdef
      & (\universal{\triple{f}{t}{t'} \in \oheap}{\heap(V(t), f) = V(t')}) ~\wedge \\
      & (\universal{\triple{f}{t}{t'} \in \oheap}{\pair{V(t)}{f} \in \perms}) ~\wedge \\
      & (\universal{\pair{p}{\multiple{t}} \in \oheap}{\assertion{\heap}{\perms}{[\multiple{x \mapsto V(t)}]}{\fpred(p)}})
  \end{split} \label{eq:oheap-correspondence} \\
  \simenv{V}{\senv}{\env} \iffdef &\universal{x \in \dom(\senv)}{\env(x) = V(t)} \label{eq:senv-correspondence} \\
  \begin{split}\label{eq:sstate-correspondence}
    \simstate{V}{\sstate}{\heap}{\perms}{\env} \iffdef
      & (\simheap{V}{\sheap(\sstate)}{\heap}{\perms}) ~\wedge \\
      & (\simheap{V}{\oheap(\sstate)}{\heap}{\perms}) ~\wedge \\
      & (\simenv{V}{\senv(\sstate)}{\env}) ~\wedge \\
      & (V(\pc(\sstate)) = \ktrue)
  \end{split}
\end{align}

\subsection{Run-time checks}

The judgement $\rtassert{V}{\heap}{\perms}{r}$ denotes that a symbolic runtime check $r \in \SCheck$ is satisfied at run time by a heap $\heap$ and permission set $\perms$ through a valuation $V$. Note that there is no rule for $\bot$; by design it can never be satisfied.
\semantics[CheckValue]
  {V(t) = \ktrue}
  {\rtassert{V}{\heap}{\perms}{t}}
\semantics[CheckAcc]
  {\pair{V(t)}{f} \in \perms}
  {\rtassert{V}{\heap}{\perms}{\pair{f}{t}}}
\semantics[CheckPred]
  {
    \multiple{x} = \fpredparams(p) \\
    \assertion{\heap}{\perms}{[\multiple{x \mapsto V(t)}]}{\fpred(p)}
  }
  {
    \rtassert{V}{\heap}{\perms}{\pair{p}{\multiple{t}}}
  }
\semantics[CheckSep]
  {
    \vfoot{V}{\heap}{\sperms_1} \cap \vfoot{V}{\heap}{\sperms_2} = \emptyset
  }
  {
    \rtassert{V}{\heap}{\perms}{\fsep(\sperms_1, \sperms_2)}
  }

This judgement is naturally extended for a set of runtime checks $\scheck$:
$$\rtassert{V}{\heap}{\perms}{\scheck} \iffdef \universal{r \in \scheck}{\rtassert{V}{\heap}{\perms}{r}}$$

\subsection{Evaluation}\label{sec:seval-rules}

The judgement
$$\seval{\sstate}{e}{t}{\sstate'}{\scheck}$$
denotes the evaluation of the expression $e \in \Expr$ in the symbolic state $\sstate \in \SState$. It yields the symbolic expression $t \in \SExpr$, a new symbolic state $\sstate'$ which must be satisfied to produce the resulting value, and a set of run-time checks $\scheck \in \powerset{\SCheck}$.

Note that for any given $\sstate$, there may be multiple values of $t, \sstate', \scheck$ for which the relation is satisfied. Therefore, the path condition of $\sstate'$ must be satisfied before assuming that $t$ corresponds to an actual value.

Also note that unsatisfiable paths are not pruned during evaluation. These paths may be pruned by checking the satisfiability of $\pc(\sstate')$.

For a list of expressions $\multiple{e}$, $\multiple{\seval{\sstate}{e}{t}{\sstate'}{\scheck}}$ represents a sequence of judgements
$$\seval{\sstate_0}{e_1}{t_1}{\sstate_1}, \cdots, \seval{\sstate_{n-1}}{e_n}{t_n}{\sstate_n}{\scheck_n}$$
where $\sstate_0 = \sstate$, $e_1, \cdots, e_n = \multiple{e}$, and $\scheck = \scheck_1 \cup \cdots \cup \scheck_n$.

\semantics[SEvalLiteral]
  { }
  {\seval{\sstate}{l}{l}{\sstate}{\emptyset}}
\semantics[SEvalVar]
  { }
  {\seval{\sstate}{x}{\senv(\sstate)(x)}{\sstate}{\emptyset}}
\semantics[SEvalOrA]
  {
    \seval{\sstate}{e_1}{t_1}{\sstate'}{\scheck} \\
    \sstate'' = \sstate'[\pc = \pc(\sstate') \kand t_1]
  }
  {
    \seval{\sstate}{e_1 \kor e_2}{t_1}{\sstate''}{\scheck}
  }
\semantics[SEvalOrB]
  {
    \seval{\sstate}{e_1}{t_1}{\sstate'}{\scheck_1} \\
    \seval{\sstate'[\pc = \pc(\sstate') \kand \kneg t_1]}{e_2}{t_2}{\sstate''}{\scheck_2}
  }
  {
    \seval{\sstate}{e_1 \kor e_2}{t_2}{\sstate''}{\scheck_1 \cup \scheck_2}
  }
\semantics[SEvalAndA]
  {
    \seval{\sstate}{e_1}{t_1}{\sstate'}{\scheck} \\
    \sstate'' = \sstate'[\pc = \pc(\sstate') \kand \kneg t_1]
  }
  {
    \seval{\sstate}{e_1 \kand e_2}{t_1}{\sstate''}{\scheck}
  }
\semantics[SEvalAndB]
  {
    \seval{\sstate}{e_1}{t_1}{\sstate'}{\scheck_1} \\
    \seval{\sstate'[\pc = \pc(\sstate') \kand t_1]}{e_2}{t_2}{\sstate''}{\scheck_2}
  }
  {
    \seval{\sstate}{e_1 \kand e_2}{t_2}{\sstate''}{\scheck_1 \cup \scheck_2}
  }
\semantics[SEvalOp]
  {
    \seval{\sstate}{e_1}{t_1}{\sstate'}{\scheck_1} \\
    \seval{\sstate'}{e_2}{t_2}{\sstate''}{\scheck_2}
  }
  {
    \seval{\sstate}{e_1 \oplus e_2}{t_1 \oplus t_2}{\sstate''}{\scheck_1 \cup \scheck_2}
  }
\semantics[SEvalNeg]
  {
    \seval{\sstate}{e}{t}{\sstate'}{\scheck}
  }
  {
    \seval{\sstate}{\kneg e}{\kneg t}{\sstate'}{\scheck}
  }
\semantics[SEvalField]
  {
    \seval{\sstate}{e}{t_e}{\sstate'}{\scheck} \\
    \pc(\sstate') \implies t_e \keq t_e' \\
    \triple{f}{t_e'}{t} \in \sheap(\sstate')
  }
  {
    \seval{\sstate}{e.f}{t}{\sstate'}{\scheck}
  }
\semantics[SEvalFieldOptimistic]
  {
    \seval{\sstate}{e}{t_e}{\sstate'}{\scheck} \\
    \nexistential{t_e', t}{\triple{f}{t_e'}{t} \in \sheap(\sstate) \wedge \pc(\sstate') \implies t_e' \keq t_e} \\
    \triple{f}{t_e'}{t} \in \oheap(\sstate) \\
    \pc(\sstate') \implies t_e' \keq t_e
  }
  {
    \seval{\sstate}{e.f}{t}{\sstate'}{\scheck}
  }
\semantics[SEvalFieldImprecise]
  {
    \imp(\sstate) \\
    \seval{\sstate}{e}{t_e}{\sstate'}{\scheck} \\
    \nexistential{t_e', t}{\triple{f}{t_e'}{t} \in \sheap(\sstate) \cup \oheap(\sstate) \wedge \pc(\sstate') \implies t_e' \keq t_e} \\
    t = \ffresh \\
    \sstate'' = \sstate'[\oheap = \oheap(\sstate'); \triple{f}{t_e}{t}]
  }
  {
    \seval{\sstate}{e.f}{t}{\sstate''}{\scheck; \pair{t_e}{f}}
  }
\semantics[SEvalFieldFailure]
  {
    \neg \imp(\sstate) \\
    \seval{\sstate}{e}{t_e}{\sstate'}{\scheck} \\
    \nexistential{t_e', t}{\triple{f}{t_e'}{t} \in \sheap(\sstate) \cup \oheap(\sstate) \wedge \pc(\sstate') \implies t_e' \keq t_e} \\
    t = \ffresh
  }
  {
    \seval{\sstate}{e.f}{t}{\sstate'}{\set{\bot}}
  }

\definecolor{shadecolor}{gray}{0.9}
\begin{snugshade}
\semantics[SEvalUnfoldingPrecise]{
    \lnot \imp(\sstate_1) \\
    p \notin \svis(\sstate_1) \\
    \multiple{\seval{\sstate_1}{e}{t}{\sstate_2}{\scheck_1}} \\
    \scons{\sstate_2}{p(\multiple{e})}{\sstate_3}{\scheck_2} \\
    \multiple{x} = \fpredparams(p) \\
    \sproduce{\sstate_3[\senv = \senv(\sstate_3)[\multiple{x \mapsto t}], \svis = \svis(\sstate_3) \cup \{ p \}]}{\fpred(p)}{\sstate_4} \\
    \seval{\sstate_4[\senv = \senv(\sstate_3)]}{e_0}{t_0}{\sstate_5}{\scheck_3} \\
    \sstate_6 = \sstate_5[\svis = \svis(\sstate_1), \sheap = \sheap(\sstate_2), \oheap = \emptyset]
}{
    \seval{\sstate_1}{\sunfolding{p(\multiple{e})}{e_0}}{t_0}{\sstate_6}{\scheck_1 \cup \scheck_2 \cup \scheck_3}
}

            \semantics[SEvalUnfoldingImpreciseA]{
                \imp(\sstate_1) \\
                \fpred(p)\; \text{is precise} \\
                p \notin \svis(\sstate_1) \\
                \multiple{\seval{\sstate_1}{e}{t}{\sstate_2}{\scheck_1}} \\
                \scons{\sstate_2}{p(\multiple{e})}{\sstate_3}{\scheck_2} \\
                \multiple{x} = \fpredparams(p) \\
                \sproduce{\sstate_3[\senv = \senv(\sstate_3)[\multiple{x \mapsto t}], \svis = \svis(\sstate_3) \cup \{ p \}]}{\fpred(p)}{\sstate_4} \\
                \seval{\sstate_4[\senv = \senv(\sstate_3)]}{e_0}{t_0}{\sstate_5}{\scheck_3} \\
                \oheap' = \oheap(\sstate_2) \cup \oheap(\sstate_5) \cup \{ \pair{p}{\multiple{t}} \} \\
                \sstate_6 = \sstate_5[\svis = \svis(\sstate_1), \sheap = \sheap(\sstate_2), \oheap = \oheap']
            }{
                \seval{\sstate_1}{\sunfolding{p(\multiple{e})}{e_0}}{t_0}{\sstate_6}{\scheck_1 \cup \scheck_2 \cup \scheck_3}
            }

            \semantics[SEvalUnfoldingImpreciseB]{
                \imp(\sstate_1) \\
                \fpred(p)\; \text{is imprecise} \\
                p \notin \svis(\sstate_1) \\
                \multiple{\seval{\sstate_1}{e}{t}{\sstate_2}{\scheck_1}} \\
                \scons{\sstate_2}{p(\multiple{e})}{\sstate_3}{\scheck_2} \\
                \multiple{x} = \fpredparams(p) \\
                \sproduce{\sstate_3[\senv = \senv(\sstate_3)[\multiple{x \mapsto t}], \svis = \svis(\sstate_3) \cup \{ p \}]}{\fpred(p)}{\sstate_4} \\
                \seval{\sstate_4[\senv = \senv(\sstate_3)]}{e_0}{t_0}{\sstate_5}{\scheck_3} \\
                \sstate_6 = \sstate_5[\svis = \svis(\sstate_1), \sheap = \sheap(\sstate_2), \oheap = \oheap(\sstate_2) \cup \{ \pair{p}{\multiple{t}} \}]
            }{
                \seval{\sstate_1}{\sunfolding{p(\multiple{e})}{e_0}}{t_0}{\sstate_6}{\scheck_1 \cup \scheck_2 \cup \scheck_3}
            }

            \semantics[SEvalUnfoldingImplicitPrecise]{
                p \in \svis(\sstate_1) \\
                \seval{\sstate_1}{e}{t}{\sstate_2}{\scheck} \\
                \pair{p}{\multiple{t}} \in \sheap(\sstate_2) \cup \oheap(\sstate_2) \\
                t_0 = \ffresh
            }{
                \seval{\sstate_1}{\sunfolding{p(\multiple{e})}{e_0}}{t_0}{\sstate_2}{\scheck}
            }

            \semantics[SEvalUnfoldingImplicitImprecise]{
                p \in \svis(\sstate_1) \\
                \seval{\sstate_1}{e}{t}{\sstate_2}{\scheck} \\
                \imp(\sstate_2) \\
                \pair{p}{\multiple{t}} \notin \sheap(\sstate_2) \cup \oheap(\sstate_2) \\
                \oheap' = \oheap(\sstate_2) \cup \set{\pair{p}{\multiple{t}}} \\
                \scheck' = \scheck \cup \set{\pair{p}{\multiple{t}}} \\
                \sstate_3 = \sstate_2[\oheap = \oheap'] \\
                t_0 = \ffresh
            }{
                \seval{\sstate_1}{\sunfolding{\pe}{e_0}}{t_0}{\sstate_3}{\scheck'}
            }

            \semantics[SEvalUnfoldingImplicitFailure]{
                p \in \svis(\sstate_1) \\
                \seval{\sstate_1}{e}{t}{\sstate_2}{\scheck} \\
                \lnot \imp(\sstate_2) \\
                \pair{p}{\multiple{t}} \notin \sheap(\sstate_2) \cup \oheap(\sstate_2) \\
                t_0 = \ffresh
            }{
                \seval{\sstate_1}{\sunfolding{\pe}{e_0}}{t_0}{\sstate_2}{\set{\bot}}
            }
\semantics[SEvalFunctionExplicit]{
            f \notin \svis(\sstate_1) \\
            \multiple{\seval{\sstate_1}{e}{t}{\sstate_2}{\scheck_1}} \\
            \multiple{x} = \ffuncparams(f) \\
            s = \sfoot{\ffuncpre(f)}{{\sstate_2}} \\
            \scons{\sstate_2[\senv = \senv(\sstate_2)[\multiple{x \mapsto t}]]}{\ffuncpre(f)}{\sstate_3}{\scheck_2} \\
            \sproduce{\sstate_3}{\ffuncpre(f)}{\sstate_4} \\
            \sheap' = \{ \triple{f}{t_e}{t'} : \triple{f}{t_e}{t} \in \sheap(\sstate_4), \triple{f}{t_e}{t'} \in \sheap(\sstate_2) \} \\
            \oheap' = \set{ \triple{f}{t_e}{t'} : \triple{f}{t_e}{t} \in \oheap(\sstate_4), \triple{f}{t_e}{t'} \in \oheap(\sstate_2) } \\
            \seval{\sstate_4[\sheap = \sheap', \oheap = \oheap', \svis = \svis(\sstate_3) \cup \{ f \}]}{\ffuncbody(f)}{t'}{\sstate_5}{\scheck_3} \\
            f(\multiple{t}, s) = \ffresh \\
            \sstate_6 = \sstate_2[\imp = \imp(\sstate_5), \oheap = \oheap(\sstate_5), \pc = \pc(\sstate_2) \kand (f(\multiple{t}, s) \keq t') \kand \fax{\sstate}{\fts}{\ffuncpost(f)}]
        }{
            \seval{\sstate_1}{\fe}{f(\multiple{t}, s)}{\sstate_6}{\scheck_1 \cup \scheck_2 \cup \scheck_3}
        }
        \semantics[SEvalFunctionImplicit]{
            f \in \svis(\sstate_1) \\
            \multiple{\seval{\sstate_1}{e}{t}{\sstate_2}{\scheck_1}} \\
            \multiple{x} = \ffuncparams(f) \\
            s = \sfoot{\ffuncpre(f)}{{\sstate_2}} \\
            s' = \sfoot{\ffuncpost(f)}{\sstate_2} \\
            \oheap' = \oheap(\sstate_2) \cup (s' \setminus \sheap(\sstate_2)) \\
            \scons{\sstate_2[\senv = \senv(\sstate_2)[\multiple{x \mapsto t}]]}{\ffuncpre(f)}{\sstate_3}{\scheck_2} \\
            f(\multiple{t}, s) = \ffresh \\
            \sstate_4 = \sstate_2[\imp = \imp(\sstate_3), \oheap = \oheap', \pc = \pc(\sstate_2) \kand \fax{\sstate}{\fts}{\ffuncpost(f)}] \\
            \scheck_3 = \set{\pair{t_e}{f} : \triple{f}{t_e}{t} \in s'}
        }{
            \seval{\sstate_1}{\fe}{f(\multiple{t}, s)}{\sstate_4}{\scheck_1 \cup \scheck_2 \cup \scheck_3}
        }

The function $\operatorname{axiomatize}$, used in evaluating pure function applications to convert function postconditions into valid path conditions, is defined as follows.

\begin{align*}
    \fax{\sstate}{\fts}{\kresult} &= \fts \\
    \fax{\sstate}{\fts}{l} &= l \\
    \fax{\sstate}{\fts}{x} &= \senv(\sstate)(x) \\
    \fax{\sstate}{\fts}{e.f} &= t ~\text{where}~ \triple{f}{t_e}{t} \in s ~\text{and}~ \seval{\sstate}{e}{t_e}{\_}{\_} \\
    \fax{\sstate}{\fts}{e_1 \oplus e_2} &= \fax{\sstate}{\fts}{e_1} \oplus \fax{\sstate}{\fts}{e_2} \\
    \fax{\sstate}{\fts}{e_1 \kor e_2} &= \fax{\sstate}{\fts}{e_1} \kor \fax{\sstate}{\fts}{e_2} \\
    \fax{\sstate}{\fts}{e_1 \kand e_2} &= \fax{\sstate}{\fts}{e_1} \kand \fax{\sstate}{\fts}{e_2} \\
    \fax{\sstate}{\fts}{\kneg e} &= \kneg \fax{\sstate}{\fts}{e} \\
    \fax{\sstate}{\fts}{\sunfolding{\pe}{e_0}} &= \fax{\sstate}{\fts}{e_0} \\
\end{align*}
\end{snugshade}

\subsection{Produce}
\label{sec:produce-rules}

The \textbf{produce} operation adds the information contained in a formula $\gform$ to the symbolic state $\sstate$, resulting in a new symbolic state $\sstate'$. This is denoted by the judgement
$$\sproduce{\sstate}{\phi}{\sstate'}.$$

\semantics[SProduceImprecise]
  {\sproduce{\sstate[\imp = \top]}{\phi}{\sstate'}}
  {\sproduce{\sstate}{\simprecise{\phi}}{\sstate'}}
\semantics[SProduceExpr]
  {
    \seval{\sstate}{e}{t}{\sstate'}{\_} \\
    \sstate'' = \sstate'[\pc = \pc(\sstate') \kand t]
  }
  {
    \sproduce{\sstate}{e}{\sstate''}
  }
\semantics[SProducePredicate]
  {
    \multiple{\seval{\sstate}{e}{t}{\sstate'}{\_}} \\
    \sstate'' = \sstate'[\sheap = \sheap(\sstate'); \pair{p}{\tlist}]
  }
  {
    \sproduce{\sstate}{p(\multiple{e})}{\sstate''}
  }
\semantics[SProduceField]
  {
    \seval{\sstate}{e}{t_e}{\sstate'}{\_} \\
    t = \ffresh \\
    \sstate'' = \sstate'[\sheap = \sheap(\sstate'); \triple{f}{t_e}{t}]
  }
  {
    \sproduce{\sstate}{\kacc(e.f)}{\sstate''}
  }
\semantics[SProduceConjunction]
  {
    \sproduce{\sstate}{\phi_1}{\sstate'} \\
    \sproduce{\sstate'}{\phi_2}{\sstate''}
  }
  {
    \sproduce{\sstate}{\phi_1 * \phi_2}{\sstate''}
  }
\semantics[SProduceIfA]
  {
    \seval{\sstate}{e}{t}{\sstate'}{\_} \\
    \sproduce{\sstate'[\pc = \pc(\sstate') \kand t]}{\phi_1}{\sstate''}
  }
  {
    \sproduce{\sstate}{\sif{e}{\phi_1}{\phi_2}}{\sstate''}
  }
\semantics[SProduceIfB]
  {
    \seval{\sstate}{e}{t}{\sstate'}{\_} \\
    \sproduce{\sstate'[\pc = \pc(\sstate') \kand \kneg t]}{\phi_2}{\sstate''}
  }
  {
    \sproduce{\sstate}{\sif{e}{\phi_1}{\phi_2}}{\sstate'}
  }

\subsection{Consume}\label{sec:consume-rules}

The \textbf{consume} operation checks whether a formula $\gform$ is established by the symbolic state, collects runtime checks that are minimally sufficient to establish $\gform$, and removes permissions asserted in $\gform$ from the symbolic state. This is denoted by the judgement
$$\sconsume{\sstate}{\sstate_E}{\gform}{\sstate'}{\scheck}{\sperms}$$
where $\sstate$ is the symbolic state containing the currently remaining permissions during consume, and $\sstate_E$ is the symbolic state containing the original permissions which may used for evaluating expressions.

\semantics[SConsumeImprecision]
  {
    \sconsume{\sstate}{\sstate_E[\imp = \top]}{\phi}{\sstate'}{\scheck}{\sperms}
  }
  {
    \sconsume{\sstate}{\sstate_E}{\simprecise{\phi}}{\quintuple{\top}{\pc(\sstate')}{\senv(\sstate')}{\emptyset}{\emptyset}}{\scheck}{\sperms}
  }
\semantics[SConsumeValue]
  {
    \seval{\sstate_E}{e}{t}{\_}{\scheck} \\
    \pc(\sstate) \implies t
  }
  {
    \sconsume{\sstate}{\sstate_E}{e}{\sstate}{\scheck}{\emptyset}
  }
\semantics[SConsumeValueImprecise]
  {
    \imp(\sstate) \\
    \seval{\sstate_E}{e}{t}{\_}{\scheck} \\
    \pc(\sstate) \notimplies t
  }
  {
    \sconsume{\sstate}{\sstate_E}{e}{\sstate[\pc = \pc(\sstate) \kand t]}{\scheck; t}{\emptyset}
  }
\semantics[SConsumeValueFailure]
  {
    \neg \imp(\sstate) \\
    \seval{\sstate_E}{e}{t}{\_}{\scheck} \\
    \pc(\sstate) \notimplies t
  }
  {
    \sconsume{\sstate}{\sstate_E}{e}{\sstate}{\set{\bot}}{\emptyset}
  }
\semantics[SConsumePredicate]
  {
    \multiple{\seval{\sstate_E}{e}{t}{\_}{\scheck}} \\
    \multiple{\pc(\sstate) \implies t \keq t'} \\
    \sheap(\sstate) = \sheap'; \pair{p}{\multiple{t'}}
  }
  {
   \sconsume{\sstate}{\sstate_E}{p(\multiple{e})}{\sstate[\sheap = \sheap', \oheap = \emptyset]}{\bigcup \multiple{\scheck}}{\set{\pair{p}{\multiple{t}}}}
  }
\semantics[SConsumePredicateImprecise]
  {
    \imp(\sstate) \\
    \multiple{\seval{\sstate_E}{e}{t}{\_}{\scheck}} \\
    \nexistential{\pair{p}{\multiple{t'}} \in \sheap(\sstate)}{\bigwedge \multiple{\pc(\sstate) \implies t \keq t'}}
  }
  {
    \sconsume{\sstate}{\sstate_E}{p(\multiple{e})}{\sstate[\heap = \emptyset, \oheap = \emptyset]}{\bigcup \multiple{\scheck}; \pair{p}{\multiple{t}}}{\set{\pair{p}{\multiple{t}}}}
  }
\semantics[SConsumePredicateFailure]
  {
    \neg \imp(\sstate) \\
    \multiple{\seval{\sstate_E}{e}{t}{\_}{\scheck}} \\
    \nexistential{\pair{p}{\multiple{t'}} \in \sheap(\sstate)}{\bigwedge \multiple{\pc(\sstate) \implies t \keq t'}}
  }
  {
    \sconsume{\sstate}{\sstate_E}{p(\multiple{e})}{\sstate}{\set{\bot}}{\set{\pair{p}{\multiple{t}}}}
  }
\semantics[SConsumeAcc]
  {
    \seval{\sstate_E}{e}{t_e}{\_}{\scheck} \\
    \pc(\sstate) \implies t_e' \keq t_e \\
    \triple{f}{t_e'}{t} \in \sheap(\sstate) \\
    \sheap' = \fremfp(\sheap(\sstate), \sstate, t_e, f) \\
    \oheap' = \fremf(\oheap(\sstate), \sstate, t_e, f)
  }
  {
    \sconsume{\sstate}{\sstate_E}{\kacc(e.f)}{\sstate[\sheap = \sheap', \oheap = \oheap']}{\scheck}{\set{\pair{t_e}{f}}}
  }
\semantics[SConsumeAccOptimistic]
  {
    \seval{\sstate_E}{e}{t_e}{\_}{\scheck} \\
    \pc(\sstate) \implies t_e' \keq t_e \\
    \triple{f}{t_e'}{t} \in \sheap(\sstate) \\
    \nexistential{t_e', t}{\triple{f}{t_e}{t} \in \sheap(\sstate) \wedge (\pc(\sstate) \implies t_e' \keq t_e)} \\
    \sheap' = \fremf(\sheap(\sstate), \sstate, t_e, f) \\
    \oheap' = \fremf(\oheap(\sstate), \sstate, t_e, f)
  }
  {
    \sconsume{\sstate}{\sstate_E}{\kacc(e.f)}{\sstate[\sheap = \sheap', \oheap = \oheap']}{\scheck}{\set{\pair{t_e}{f}}}
  }
\semantics[SConsumeAccImprecise]
  {
    \imp(\sstate) \\
    \seval{\sstate_E}{e}{t_e}{\_}{\scheck} \\
    \nexistential{t_e', t}{\triple{f}{t_e}{t} \in \sheap(\sstate) \cup \oheap(\sstate) \wedge (\pc(\sstate) \implies t_e' \keq t_e)} \\
    \sheap' = \fremf(\sheap(\sstate), \sstate, t_e, f) \\
    \oheap' = \fremf(\oheap(\sstate), \sstate, t_e, f)
  }
  {
    \sconsume{\sstate}{\sstate_E}{\kacc(e.f)}{\sstate[\sheap = \sheap', \oheap = \oheap']}{\scheck; \pair{t_e}{f}}{\set{\pair{t_e}{f}}}
  }
\semantics[SConsumeAccFailure]
  {
    \neg \imp(\sstate) \\
    \seval{\sstate_E}{e}{t_e}{\_}{\scheck} \\
    \nexistential{t_e', t}{\triple{f}{t_e}{t} \in \sheap(\sstate) \cup \oheap(\sstate) \wedge (\pc(\sstate) \implies t_e' \keq t_e)}
  }
  {
    \sconsume{\sstate}{\sstate_E}{\kacc(e.f)}{\sstate}{\set{\bot}}{\set{\pair{t_e}{f}}}
  }
\semantics[SConsumeConjunction]
  {
    \sconsume{\sstate}{\sstate_E}{\phi_1}{\sstate'}{\scheck_1}{\sperms_1} \\
    \sconsume{\sstate'}{\sstate_E[\pc = \pc(\sstate')]}{\phi_2}{\sstate''}{\scheck_2}{\sperms_2} \\\\
    (\scheck_1 \cup \scheck_2) \cap \SPerm = \emptyset
  }
  {
    \sconsume{\sstate}{\sstate_E}{\phi_1 * \phi_2}{\sstate''}{\scheck_1 \cup \scheck_2}{\sperms_1 \cup \sperms_2}
  }
\semantics[SConsumeConjunctionImprecise]
  {
    \sconsume{\sstate}{\sstate_E}{\phi_1}{\sstate'}{\scheck_1}{\sperms_1} \\
    \sconsume{\sstate'}{\sstate_E[\pc = \pc(\sstate')]}{\phi_2}{\sstate''}{\scheck_2}{\sperms_2} \\\\
    (\scheck_1 \cup \scheck_2) \cap \SPerm \ne \emptyset
  }
  {
    \sconsume{\sstate}{\sstate_E}{\phi_1 * \phi_2}{\sstate''}{\scheck_1 \cup \scheck_2; \fsep(\sperms_1, \sperms_2)}{\sperms_1 \cup \sperms_2}
  }
\semantics[SConsumeConditionalA]
  {
    \seval{\sstate_E}{e}{t}{\_}{\scheck} \\
    \pc' = \pc(\sstate') \kand t \\
    \sconsume{\sstate[\pc = \pc']}{\sstate_E[\pc = \pc']}{\phi_1}{\sstate'}{\scheck'}{\sperms}
  }
  {
    \sconsume{\sstate}{\sstate_E}{\sif{e}{\phi_1}{\phi_2}}{\sstate'}{\scheck \cup \scheck'}{\sperms}
  }
\semantics[SConsumeConditionalB]
  {
    \seval{\sstate_E}{e}{t}{\_}{\scheck} \\
    \pc' = \pc(\sstate') \kand \kneg t \\
    \sconsume{\sstate[\pc = \pc']}{\sstate_E[\pc = \pc']}{\phi_2}{\sstate'}{\scheck'}{\sperms}
  }
  {
    \sconsume{\sstate}{\sstate_E}{\sif{e}{\phi_1}{\phi_2}}{\sstate'}{\scheck \cup \scheck'}{\sperms}
  }

The functions $\fremf$, $\fremfp$, and $\falias$ are defined as follows:
\begin{align*}
  \fremf(\sheap, \sstate, t, f) &= \set{\triple{f'}{t'}{t''} \in \sheap : \neg \falias(\sstate, t, f, t', f')} \\
  \fremfp(\sheap, \sstate, t, f) &= \fremf(\sheap, \sstate, t, f) \cup \set{\pair{p}{\multiple{t}} \in \sheap} \\
  \falias(\sstate, t, f, t', f') &= \begin{cases}
    f = f' \wedge (\pc(\sstate) \implies t \keq t') & \neg \imp(\sstate) \\
    (f = f') \wedge \fsat(\pc(\sstate) \kand t \keq t') & \imp(\sstate)
  \end{cases}
\end{align*}

For ease of notation, the judgement $\scons{\sstate}{\gform}{\sstate'}{\scheck}$ applies the rules above, initializing additional parameters and ignoring the parameters that are internal to consume.
\semantics[SConsume]
  {
    \sconsume{\sstate}{\sstate}{\gform}{\sstate'}{\scheck}{\_}
  }
  {
    \scons{\sstate}{\gform}{\sstate'}{\scheck}
  }

\subsection{Execute}\label{sec:sexec-rules}

Symbolic execution is denoted by the small-step judgement
$$\sexec{\sstate}{s}{s'}{\sstate'}.$$
where $\sstate$ is the symbolic state prior to execution, $s$ is the statement to execute, $s'$ is the statement remaining after this execution step and $\sstate'$ is the symbolic state after the execution step.

Note that as in \S\ref{sec:seval-rules} there may be multiple resulting symbolic states $\sstate'$ for which the judgement applies, and therefore the path condition of $\sstate'$ must be satisfied before assuming that $\sstate'$ corresponds to a particular execution step.

\semantics[SExecSeq]
  {
  }
  {
    \sexec{\sstate}{\sseq{\kskip}{s}}{s}{\sstate}
  }
\semantics[SExecAssign]
  {
    \seval{\sstate}{e}{t}{\sstate'}{\_} \\
    \senv' = \senv(\sstate)[x \mapsto t]
  }
  {
    \sexec{\sstate}{\sseq{x = e}{s}}{s}{\sstate'[\senv = \senv']}
  }
\semantics[SExecAssignField]
  {
    \seval{\sstate}{e}{t}{\sstate'}{\_} \\
    \scons{\sstate'}{\kacc(x.f)}{\sstate''}{\_} \\
    \sheap' = \sheap(\sstate); \triple{\senv(\sstate'')(x)}{f}{t}
  }
  {
    \sexec{\sstate}{\sseq{x.f = e}{s}}{s}{\sstate''[\sheap = \sheap']}
  }
\semantics[SExecAlloc]
  {
    t = \ffresh \\
    \multiple{T ~ f} = \fstruct(S) \\
    \sheap' = \sheap(\sstate); \multiple{\triple{f}{t}{\fdefault(T)}}
  }
  {
    \sexec{\sstate}{\sseq{x = \salloc{S}}{s}}{s}{\sstate[\sheap = \sheap']}
  }
\semantics[SExecCall]
  {
    \multiple{\seval{\sstate}{e}{t_e}{\sstate'}{\_}} \\
    \multiple{x} = \fparams(m) \\
    \scons{\sstate'[\senv = [\multiple{x \mapsto t_e}]]}{\fpre(m)}{\sstate'}{\_} \\
    t = \ffresh \\
    \sproduce{\sstate'[\senv = [\multiple{x \mapsto t_e}, \kresult \mapsto t]]}{\fpost(m)}{\sstate''}
  }
  {
    \sexec{\sstate}{\sseq{y \kassign m(\multiple{e})}{s}}{s}{\sstate''[\senv = \senv(\sstate)[y \mapsto t]]}
  }
\semantics[SExecAssert]
  {
    \scons{\sstate}{\simprecise{\phi}}{\sstate'}{\_} \\
    \sproduce{\sstate'}{\simprecise{\phi}}{\sstate''}
  }
  {
    \sexec{\sstate}{\sseq{\sassert{\phi}}{s}}{s}{\sstate[\pc = \pc(\sstate'')]}
  }
\semantics[SExecFold]
  {
    \multiple{\seval{\sstate}{e}{t}{\sstate'}{\_}} \\
    \multiple{x} = \fpredparams(p) \\
    \scons{\sstate'[\senv = [\multiple{x \mapsto t}]]}{\fpred(p)}{\sstate''}{\_} \\
    \sstate''' = \sstate''[\senv = \senv(\sstate), \sheap = \sheap(\sstate''); \pair{p}{\multiple{t}}]
  }
  {
    \sexec{\sstate}{\sseq{\sfold{p(\multiple{e})}}{s}}{s}{\sstate'''}
  }
\semantics[SExecUnfold]
  {
    \multiple{\seval{\sstate}{e}{t}{\sstate'}{\_}} \\
    \scons{\sstate'}{p(\multiple{e})}{\sstate''}{\_} \\
    \multiple{x} = \fpredparams(p) \\
    \sproduce{\sstate''[\senv = [\multiple{x \mapsto t}]]}{\fpred(p)}{\sstate'''}
  }
  {
    \sexec{\sstate}{\sseq{\sunfold{p(e_1, \cdots, e_n)}}{s}}{s}{\sstate'''[\senv = \senv(\sstate)]}
  }
\semantics[SExecIfA]
  {
    \seval{\sstate}{e}{t}{\sstate'}{\_}
  }
  {
    \sexec{\sstate}{\sseq{\sif{e}{s_1}{s_2}}{s}}{\sseq{s_1}{s}}{\sstate'[\pc = \pc(\sstate') \kand t]}
  }
\semantics[SExecIfB]
  {
    \seval{\sstate}{e}{t}{\sstate'}{\_}
  }
  {
    \sexec{\sstate}{\sseq{\sif{e}{s_1}{s_2}}{s}}{\sseq{s_2}{s}}{\sstate'[\pc = \pc(\sstate') \kand \kneg t]}
  }

\semantics[SExecWhileSkip]
  {
    \scons{\sstate}{\gform}{\sstate'}{\_} \\
    \multiple{x} = \fmodified(s) \\
    \sproduce{\sstate'[\senv = \senv(\sstate')[\multiple{x \mapsto \ffresh}]]}{\gform}{\sstate''} \\
    \seval{\sstate''}{e}{t}{\_}{\_} \\
  }
  {
    \sexec{\sstate}{\sseq{\swhile{e}{\gform}{s}}{s'}}{s'}{\sstate''[\pc = \pc(\sstate'') \kand \kneg t]}
  }

\begin{definition}\label{def:frem}
  The helper function $\frem(\sstate, \gform)$ returns the set of all permissions remaining in $\sstate$ if $\gform$ is not completely precise:
  $$\frem(\sstate, \gform) := \begin{cases}
    \emptyset &\text{if $\gform$ is completely precise} \\
    \set{ \pair{t}{f} : \triple{f}{t}{t'} \in \sheap(\sstate) \cup \oheap(\sstate) } ~\cup \\
    \quad \set{ \pair{p}{\multiple{t}} : \pair{p}{\multiple{t}} \in \sheap(\sstate) \cup \oheap(\sstate) } &\text{otherwise}
  \end{cases}$$
  This is used to symbolically calculate the required exclusion frame.
\end{definition}

\subsection{Verification states}

\begin{definition}
  A \textbf{verification state} $\vstate$ is either an abstract symbol or a triple consisting of a symbolic state, a statement, and a post-condition.
  $$\vstate ::= \initsym \gralt \finalsym \gralt \triple{\sstate}{s}{\gform}$$
\end{definition}

The reachability of verification transitions, as determined by modular verification for a given program $\prog$, is determined by the judgement
$$\strans{\prog}{\vstate}{\vstate'}$$
where $\vstate$ is the beginning verification state and $\vstate'$ is the next verification state.

\semantics[SVerifyInit]
{
}
{
  \strans{\quadruple{s}{M}{P}{S}}{\initsym}{\triple{\quintuple{\bot}{\emptyset}{\emptyset}{\emptyset}{\ktrue}}{s}{\ktrue}}
}
\semantics[SVerifyMethod]
{
  m \in M \\
  \multiple{x} = \fparams(m) \\
  \sproduce{\quintuple{\bot}{\emptyset}{\emptyset}{[\multiple{x \mapsto \ffresh}]}{\ktrue}}{\fpre(m)}{\sstate}
}
{
  \strans{\quadruple{s}{M}{P}{S}}{\initsym}{\triple{\sstate}{\sseq{\fbody(m)}{\kskip}}{\fpost(m)}}
}

\semantics[SVerifyLoopBody]
{
  \strans{\prog}{\_}{\triple{\sstate_0}{\sseq{\swhile{e}{\gform}{s}}{s'}}{\gform_0}} \\
  \multiple{x} = \fmodified(s) \\
  \sproduce{\quintuple{\bot}{\senv(\sstate_0)[\multiple{x \mapsto \ffresh}]}{\emptyset}{\emptyset}{\pc(\sstate_0)}}{\gform}{\sstate_0'} \\
  \seval{\sstate_0'}{e}{t}{\_}{\scheck}
}
{
  \strans{\prog}{\triple{\sstate}{\sseq{\swhile{e}{\gform}{s}}{s'}}{\gform_0}}{\triple{\sstate_0'[\pc = \pc(\sstate_0') \kand t]}{\sseq{s}{\kskip}}{\gform}}
}
\semantics[SVerifyLoop]
{
  \strans{\prog}{\_}{\triple{\sstate}{\sseq{\swhile{e}{\gform'}{s}}{s'}}{\gform}} \\
  \scons{\sstate}{\gform'}{\sstate'}{\_} \\
  \multiple{x} = \fmodified(s) \\
  \sproduce{\sstate'[\senv = \senv(\sstate)[\multiple{x \mapsto \ffresh}]]}{\gform'}{\sstate''}
}
{
  \prog \vdash \triple{\sstate}{\sseq{\swhile{e}{\gform'}{s}}{s'}}{\gform} \to \\\\
  \hspace{3em} \triple{\sstate''}{\sseq{\swhile{e}{\gform'}{s}}{s'}}{\gform}
}
\semantics[SVerifyStep]
{
  \strans{\prog}{\_}{\triple{\sstate}{s}{\gform}} \\
  \sexec{\sstate}{s}{s'}{\sstate'}
}
{
  \strans{\prog}{\triple{\sstate}{s}{\gform}}{\triple{\sstate'}{s'}{\gform}}
}
\semantics[SVerifyFinal]
{
  \strans{\prog}{\_}{\triple{\sstate}{\kskip}{\gform}} \\
  \scons{\sstate}{\gform}{\sstate'}{\_}
}
{
  \prog \vdash \triple{\sstate}{\kskip}{\gform} \to \finalsym
}

\begin{definition}\label{def:vstate-reachable}
  A verification state $\vstate$ is \textbf{reachable} from program $\prog$ if $\vstate = \initsym$ or $\prog \vdash \_ \to \vstate, \_, \_, \_$.

  A verification state $\vstate$ is \textbf{reachable} from program $\prog$ \textbf{with valuation} $V$ if $\vstate$ is reachable from $\prog$ and $V$ is defined for all symbolic values contained in $\vstate$.
\end{definition}

\subsection{Guards}

A guard judgement determines the set of runtime checks that must be satisfied at the given verification state before taking the next execution step, and determines the exclusion frame required to preserve the validity of heap chunks contained by the current symbolic state. This is denoted by the judgement
$$\vstate \rightharpoonup \sstate', \scheck, \sperms$$
where $\vstate$ is the current verification state, $\sstate'$ is the intermediate state, $\scheck$ is the set of required checks, and $\sperms$ is the exclusion frame (represented as a set of symbolic permissions).

\semantics[SGuardInit]
  {
  }
  {
    \sguard{\initsym}{\quintuple{\bot}{\emptyset}{\emptyset}{\emptyset}{\ktrue}}{\emptyset}{\emptyset}
  }
\semantics[SGuardSeq]
  {
  }
  {
    \sguard{\triple{\sstate}{\sseq{\kskip}{s}}{\gform}}{\sstate}{\emptyset}{\emptyset}
  }
\semantics[SGuardAssign]
  {
    \seval{\sstate}{e}{\_}{\sstate'}{\scheck}
  }
  {
    \sguard{\triple{\sstate}{\sseq{x = e}{s}}{\gform}}{\sstate}{\scheck}{\emptyset}
  }
\semantics[SGuardAssignField]
  {
    \seval{\sstate}{e}{\_}{\sstate'}{\scheck'} \\
    \scons{\sstate'}{\kacc(x.f)}{\sstate''}{\scheck''}
  }
  {
    \sguard{\triple{\sstate}{\sseq{x.f = e}{s}}{\gform}}{\sstate''}{\scheck' \cup \scheck''}{\emptyset}
  }
\semantics[SGuardAlloc]
  {
  }
  {
    \sguard{\triple{\sstate}{\sseq{x = \salloc{S}}{s}}{\gform}}{\sstate}{\emptyset}{\emptyset}
  }
\semantics[SGuardCall]
  {
    \multiple{
      \seval{\sstate}{e}{t}{\sstate'}{\scheck}
    } \\
    \multiple{x} = \fparams(m) \\
    \scons{\sstate'[\senv = [\multiple{x \mapsto t}]]}{\fpre(m)}{\sstate''}{\scheck'}
  }
  {
    \sguard{\triple{\sstate}{\sseq{y \kassign m(\multiple{e})}{s}}{\gform}}{\sstate''[\senv = \senv(\sstate)]}{\multiple{\scheck} \cup \scheck'}{\frem(\sstate'', \fpre(m))}
  }
\semantics[SGuardAssert]
  {
    \scons{\sstate}{\simprecise{\phi}}{\sstate'}{\scheck}
  }
  {
    \sguard{\triple{\sstate}{\sseq{\sassert{\phi}}{s}}{\gform}}{\sstate'}{\scheck}{\emptyset}
  }
\semantics[SGuardFold]
  {
    \multiple{\seval{\sstate}{e}{t}{\sstate'}{\scheck}} \\
    \multiple{x} = \fpredparams(p) \\
    \scons{\sstate'[\senv = [\multiple{x \mapsto t}]]}{\fpred(p)}{\sstate''}{\scheck'}
  }
  {
    \sguard{\triple{\sstate}{\sseq{\sfold{p(\multiple{e})}}{s}}{\gform}}{\sstate''[\senv = \senv(\sstate)]}{\scheck' \cup \bigcup \multiple{\scheck}}{\emptyset}
  }
\semantics[SGuardUnfold]
  {
    \multiple{\seval{\sstate}{e}{t}{\sstate'}{\scheck}} \\
    \scons{\sstate'}{p(\multiple{e})}{\sstate''}{\scheck'}
  }
  {
    \sguard{\triple{\sstate}{\sseq{\sunfold{p(\multiple{e})}}{s}}{\gform}}{\sstate''}{\scheck' \cup \bigcup \multiple{\scheck}}{\emptyset}
  }
\semantics[SGuardIf]
  {
    \seval{\sstate}{e}{\_}{\sstate'}{\scheck}
  }
  {
    \sguard{\triple{\sstate}{\sseq{\sif{e}{s_1}{s_2}}{s}}{\gform}}{\sstate'}{\scheck}{\emptyset}
  }
  
\semantics[SGuardWhile]
  {
    \scons{\sstate}{\gform}{\sstate'}{\scheck'} \\
    \multiple{x} = \fmodified(s) \\
    \sproduce{\sstate'[\senv = \senv(\sstate')[\multiple{x \mapsto \ffresh}]]}{\gform}{\sstate''} \\
    \seval{\sstate''}{e}{\_}{\_}{\scheck''}
  }
  {
    \sguard{\triple{\sstate}{\sseq{\swhile{e}{\gform}{s}}{s'}}{\gform'} }{\sstate'[\pc = \pc(\sstate'')]}{\scheck' \cup \scheck''}{\frem(\sstate', \gform)}
  }

\semantics[SGuardFinish]
  {
    \scons{\sstate}{\gform}{\sstate'}{\scheck}
  }
  {
    \sguard{\triple{\sstate}{\kskip}{\gform}}{\sstate'}{\scheck}{\emptyset}
  }

\subsection{Valid states}

\begin{definition}\label{def:vstate-corresponds}
  A verification state $\vstate$ \textbf{corresponds} with valuation $V$ to an execution state $\Gamma$ if $\vstate = \Gamma$, or $\Gamma = \pair{\heap}{\triple{\perms}{\env}{s} \cdot \stack}$ for some $\heap, \perms, \env, s$ and $\vstate = \triple{\sstate}{s}{\_}$ for some $\sstate$, such that $\simstate{V}{\sstate}{\heap}{\perms}{\env}$.
\end{definition}

\begin{definition}\label{def:partial-valid}
  A \textit{partial state} $\Gamma = \pair{\heap}{\stack}$ is \textbf{validated} by a verification state $\vstate$ with valuation $V$ if one of the following cases apply:

  \begin{defparts}
    \defpart\label{def:partial-valid-nil}
      $\stack = \nilsym$

    \defpart\label{def:partial-valid-call}
      $\stack = \triple{\env}{\perms}{\sseq{y \kassign m(e_1, \cdots, e_k)}{s}} \cdot \stack^*$ for some $\env$, $\perms$, $y$, $m$, $k$, $e_1, \cdots, e_k$, $s$, and $\stack^*$, and there exists some $\vstate'$, $V'$, $x_1, \cdots, x_k$, $t_1, \cdots, t_k$, $\sstate_0, \cdots, \sstate_k$ and $\sstate'$ such that:
      \begin{gather}
        \text{The partial state $\pair{\heap}{\stack^*}$ is validated by $\vstate'$ and $V'$}, \\
        \vstate' \text{ is reachable from $\prog$ with valuation $V'$}, \quad s(\vstate') = s(\stack), \\
        x_1, \cdots, x_k = \fparams(m), \\
        \sstate_0 = \sstate(\vstate'), \quad \seval{\sstate_0}{e_1}{t_1}{\sstate_1}{\_}, \quad\cdots,\quad \seval{\sstate_{k-1}}{e_k}{t_k}{\sstate_k}{\_}, \label{eq:partial-valid-call-eval}\\
        \universal{1 \le i \le k}{V(\senv(\vstate)(x_i)) = V'(t_i)}, \label{eq:partial-valid-call-params} \\
        \scons{\sstate_k}{\fpre(m)}{\sstate'}{\_}, \quad \simstate{V'}{\sstate'[\senv = \senv(\sstate_0)]}{\heap}{\perms}{\env}, \quad\text{and} \label{eq:partial-valid-call-sim} \\
        \gform(\vstate) = \fpost(m)
      \end{gather}

    \defpart\label{def:partial-valid-while}
      $\stack = \triple{\env}{\perms}{\sseq{\swhile{e}{\gform}{s}}{s'}} \cdot \stack^*$ for some $\env$, $\perms$, $e$, $\gform$, $s$, $s'$, $\stack^*$, and there exists some $\vstate'$, $V'$, and $\sstate'$ such that:
      \begin{gather}
        \text{The partial state $\pair{\heap}{\stack^*}$ is validated by $\vstate'$ and $V'$} \label{eq:partial-valid-while-valid}\\
        \vstate' \text{ is reachable from $\prog$ with valuation $V'$}, \quad s(\vstate') = s(\stack) \label{eq:partial-valid-while-reachable}\\
        \scons{\sstate}{\gform}{\sstate'}{\_}, \quad\text{and}\quad
        \simstate{V'}{\sstate'}{\heap}{\perms}{\env} \label{eq:partial-valid-while-sim} \\
        \gform(\vstate) = \gform \label{eq:partial-valid-while-post}
      \end{gather}
  \end{defparts}
\end{definition}

\begin{definition}\label{def:state-valid}
  For a program $\prog$, a dynamic state $\Gamma$ is \textbf{validated} by $\vstate$ and valuation $V$ if all the following are true:
  
  \begin{defparts}
    \defpart\label{def:state-valid-reachable} $\vstate$ is reachable from $\prog$ with $V$
    
    \defpart\label{def:state-valid-correspond} $\Gamma$ corresponds to $\vstate$ with $V$

    \defpart\label{def:state-valid-partial} If $\Gamma = \pair{\heap}{\triple{\perms}{\env}{s} \cdot \stack^*}$,
    then the partial state $\pair{\heap}{\stack^*}$ is validated by $\vstate$ and $V$.

  \end{defparts}
\end{definition}

\begin{definition}
  $\Gamma$ is a \textbf{valid state} if $\Gamma$ is validated by some $\vstate$.
\end{definition}
  




\section{Soundness}
\label{sec:soundness}

\definecolor{shadecolor}{gray}{0.9}
\begin{snugshade}
In this version of the proof, we forgo the deterministic evaluation judgement used by \citet{zimmerman2024extended} to more closely align with the implementation of Gradual Viper. Lemmas 1 to 27 in our proof have the same numbering as in the proof of soundness in \cite{zimmerman2024extended}. Lemma 28 is a modified version of lemma 29 in \cite{zimmerman2024extended}. Lemmas 29 and above in our proof correspond to lemmas 31 and above in \cite{zimmerman2024extended}.

Lemmas 1, 14, 24, 25, 26, 27, 28, 49, 50, 54 have been modified;
lemmas and theorems stated without proof have the same proof as \citet{zimmerman2024extended}.
\end{snugshade}

\subsection{Cross-cutting lemmas}
\label{sec:cross-cutting}

\begin{lemma}[Relating expression framing and exact footprints]\label{lem:efoot-framing}
    If $\frm{\heap}{\perms}{\env}{e}$ then $\efoot{\heap}{\env}{e} \subseteq \perms$, and conversely, if $\frm{\heap}{\perms}{\env}{e}$ and $\efoot{\heap}{\env}{e} \subseteq \perms'$ then $\frm{\heap}{\perms'}{\env}{e}$.
\end{lemma}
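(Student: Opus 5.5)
Both directions go by induction on the derivation of $\frm{\heap}{\perms}{\env}{e}$, with a case for each framing rule. Because of the new rules \textsc{FrameUnfolding} and \textsc{FrameFunction}, expression framing now depends on formula framing and assertion. So, as the soundness section explains, the induction has to be mutual with the companion facts for formulas. These are: the analogous statement for $\vdash_{\mathrm{frmI}}$ and $\vdash_{\mathrm{frmE}}$, and the fact that if $\assertion{\heap}{\perms}{\env}{\phi}$ then $\efoot{\heap}{\env}{\phi} \subseteq \perms$ (and conversely, for precise formulas, assertion is preserved when $\perms$ is replaced by any $\perms'$ containing the footprint). I measure terms by size, counting a predicate instance as its unrolled body and a function call as its precondition plus body. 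This is well founded because the run-time derivations are finite.

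For the first direction (framing implies $\efoot{\heap}{\env}{e} \subseteq \perms$), the old cases are immediate.
\begin{itemize}
\item Literals and variables have empty footprint.
\item For \textsc{FrameField}, the inductive hypothesis gives $\efoot{\heap}{\env}{e}\subseteq\perms$, and \textsc{AssertAcc} gives $\pair{\ell}{f}\in\perms$.
\item The $\kor$/$\kand$ cases match the side condition on $e_1$'s value, which selects the corresponding case of the footprint definition. Evaluation is deterministic, so the value used in the framing rule and in the footprint agree.
\end{itemize}
For \textsc{FrameUnfolding}, $\efoot{\heap}{\env}{e_0}\subseteq\perms$ follows by induction. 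For $\efoot{\heap}{\env}{\pe}$, the argument footprints come from $\ifrm{\heap}{\perms}{\env}{\pe}$ via \textsc{IFramePredicate}, and the body footprint comes from the assertion premise $\assertion{\heap}{\perms}{\env}{\pe}$ via the formula-level lemma. \textsc{FrameFunction} is analogous for the arguments and the precondition. The body footprint $\efoot{\heap}{[\multiple{x\mapsto v}]}{\ffuncbody(f)}$ is handled with the well-formedness requirement (Definition~\ref{def:well-formed-prog}) that function bodies are framed by their preconditions. Since the precondition is asserted, the body is framed by $\perms$ in the callee environment, and the inductive hypothesis on the body applies.

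For the converse, I rebuild the framing derivation under $\perms'$ using the same rule.
\begin{itemize}
\item Evaluation premises do not mention permissions, so they carry over unchanged.
\item Sub-framing premises follow by induction, because each subterm's footprint is contained in the whole footprint and hence in $\perms'$.
\item For \textsc{FrameField}, $\pair{\ell}{f}\in\efoot{\heap}{\env}{e.f}\subseteq\perms'$ gives \textsc{AssertAcc} under $\perms'$.
\item For \textsc{FrameUnfolding} and \textsc{FrameFunction}, the isorecursive framing of the arguments transfers by induction. The assertion premise must be transferred from $\perms$ to $\perms'$, using that its footprint lies in $\perms'$.
\end{itemize}

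I expect that transfer of the assertion premise to be the main obstacle. Assertion is not monotone in an arbitrary way for imprecise formulas, because \textsc{AssertImprecise} additionally demands equirecursive framing. It also requires splitting permissions for $*$ under \textsc{AssertConjunction}. I would prove an auxiliary formula-level statement, mutually with this lemma: if $\assertion{\heap}{\perms}{\env}{\gform}$ and $\efoot{\heap}{\env}{\gform}\subseteq\perms'$, then $\assertion{\heap}{\perms'}{\env}{\gform}$. For conjunctions I would take $\perms_i' = \efoot{\heap}{\env}{\phi_i}$. These are disjoint subsets of the original disjoint $\perms_i$ and both lie in $\perms'$. The \textsc{AssertImprecise} case uses the $\vdash_{\mathrm{frmE}}$ version of the present lemma. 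If this auxiliary statement is already among Zimmerman et al.'s lemmas, only the new unfolding and function cases need to be added to it.
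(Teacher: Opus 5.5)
Your route is the paper's. You induct on the framing derivation, frame the arguments of $\pe$ via \textsc{IFramePredicate}, frame the function body via the well-formedness clause of Definition~\ref{def:well-formed-prog} plus the induction hypothesis, and in the converse move the assertion premise to $\perms'$ by Lemmas~\ref{lem:efoot-assert} and~\ref{lem:assert-monotonicity} (packaged as Lemma~\ref{lem:assert-efoot-subset}). You are more careful than the paper in two places. First, in the first direction of \textsc{FrameUnfolding} you account for $\efoot{\heap}{[\multiple{x\mapsto v}]}{\fpred(p)}$; this is part of $\efoot{\heap}{\env}{\pe}$ but is missing from the paper's displayed equation. Second, your measure is what licenses applying the hypothesis to the body framing supplied by well-formedness, which is not a subderivation of the \textsc{FrameFunction} derivation. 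Make that measure the depth of the recursion defining $\efoot{\heap}{\env}{e}$ rather than a syntactic size, since ``precondition plus body'' is infinite for a recursive $f$.

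The real problem is that your formula-level companion facts are false as stated, so the mutual induction cannot be set up that way. Framing alone does not bound a formula's footprint: $\kacc(x.f)$ is iso- and equi-framed by $\emptyset$, because its premises only frame $x$, yet its footprint is $\set{\pair{\ell}{f}}$ where $\ell = \env(x)$. Assertion alone does not bound it either: $x.f > 0$ holds under $\emptyset$ by \textsc{AssertValue} whenever $\heap(\ell, f) > 0$. What is true needs assertion and equi-framing together (Lemma~\ref{lem:efoot-subset-framed}). For a specification the assertion itself supplies the framing, via self-framing and Lemma~\ref{lem:ifrm-implies-efrm} or via \textsc{AssertImprecise}, which gives Lemma~\ref{lem:efoot-subset-spec}. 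That lemma is what your new cases actually use, and it applies only because $\ffuncpre(f)$ and $\fpred(p)$ are specifications. For predicate bodies this is an assumption the paper relies on but Definition~\ref{def:well-formed-prog} does not list. Without it the first direction genuinely fails: $\sunfolding{p(x)}{0}$ with body $x.f > 0$ is framed by $\emptyset$. The same oversight breaks your conjunction step in the transfer lemma. Take $(x.f > 0) * \kacc(x.f)$ asserted with $\perms_1 = \emptyset$ and $\perms_2 = \set{\pair{\ell}{f}}$. Both conjunct footprints equal $\set{\pair{\ell}{f}}$, so your choice gives $\perms_1' \not\subseteq \perms_1$ and $\perms_1' \cap \perms_2' \ne \emptyset$, and no disjoint split containing both footprints exists. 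The transfer form is therefore too weak as an induction hypothesis at $*$. Induct instead on the intersection form (Lemma~\ref{lem:efoot-assert}) with $\perms_i' = \efoot{\heap}{\env}{\phi_i} \cap \perms_i$, and then recover the transfer by Lemma~\ref{lem:assert-monotonicity}, as the paper does.
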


\begin{proof}
    By induction on the derivation of $\frm{\heap}{\perms}{\env}{e}$.

    \begin{enumcases}
        \case $l$: Then $\efoot{\heap}{\env}{l} = \emptyset$, thus $\efoot{\heap}{\env}{l} \subseteq \perms$, and $\frm{\heap}{\perms}{\env}{l}$ by \refrule{FrameLiteral}.
        \case $v$: Then $\efoot{\heap}{\env}{v} = \emptyset$, thus $\efoot{\heap}{\env}{v} \subseteq \perms$, and $\frm{\heap}{\perms}{\env}{v}$ by \refrule{FrameVar}.
        \case $e.f$:
    
          Suppose that $\efoot{\heap}{\env}{e.f} \subseteq \perms$. Then by definition $\efoot{\heap}{\env}{e} \subseteq \perms$ and $\pair{\ell}{f} \in \perms$ for some $\ell$ such that $\eval{\heap}{\env}{e}{\ell}$. By induction $\frm{\heap}{\perms}{\env}{e}$, by \refrule{AssertAcc} $\assertion{\heap}{\perms}{\env}{\kacc(e.f)}$, and thus by \refrule{FrameField}, $\frm{\heap}{\perms}{\env}{e.f}$.
    
          Suppose that $\frm{\heap}{\perms}{\env}{e.f}$. Then by \refrule{FrameField}, $\frm{\heap}{\perms}{\env}{e}$ and $\assertion{\heap}{\perms}{\env}{\kacc(e.f)}$, thus by induction $\efoot{\heap}{env}{e} \subseteq \perms$ and by \refrule{AssertAcc}, $\pair{\ell}{f} \in \perms$ for some $\ell$ such that $\eval{\heap}{\env}{e}{\ell}$. Thus $\efoot{\heap}{\env}{e.f} = \efoot{\heap}{\env}{e}; \pair{\ell}{f} \subseteq \perms$.
    
        \case $e_1 \oplus e_2$:
          \begin{align*}
            &\efoot{\heap}{\env}{e_1 \oplus e_2} \subseteq \perms \\
            &\iff \efoot{\heap}{\env}{e_1} \subseteq \perms ~\text{and}~ \efoot{\heap}{\env}{e_2} \subseteq \perms &\text{by definition} \\
            &\iff \frm{\heap}{\perms}{\env}{e_1} ~\text{and}~ \frm{\heap}{\perms}{\env}{e_2} &\text{by induction} \\
            &\iff \frm{\heap}{\perms}{\env}{e_1 \oplus e_2} &\text{by \refrule{FrameOp}}
          \end{align*}
    
        \case\label{case:efoot-framing-or} $e_1 \kor e_2$:
          \begin{align*}
            &\efoot{\heap}{\env}{e_1 \kor e_2} \subseteq \perms \implies  \\
            &~\textbf{either}~ \eval{\heap}{\env}{e_1}{\ktrue} ~\text{and}~ \efoot{\heap}{\env}{e_1} \subseteq \perms &\text{by definition}\\
            &\quad\implies \eval{\heap}{\env}{e_1}{\ktrue} ~\text{and}~ \frm{\heap}{\perms}{\env}{e_1} &\text{by induction} \\
            &\quad\implies \frm{\heap}{\perms}{\env}{e_1 \kor e_2} &\text{by \refrule{FrameOrA}} \\
            &~\textbf{or}~ \eval{\heap}{\env}{e_1}{\kfalse} ~\text{and}~ \efoot{\heap}{\env}{e_1} \cup \efoot{\heap}{\env}{e_2} \subseteq \perms &\text{by definition}\\
            &\quad\implies \eval{\heap}{\env}{e_1}{\kfalse}, ~\frm{\heap}{\perms}{\env}{e_1}, \\
              &\hspace{3.5em} \text{and}~ \frm{\heap}{\perms}{\env}{e_2} &\text{by induction} \\
            &\quad\implies \frm{\heap}{\perms}{\env}{e_1 \kor e_2} &\text{by \refrule{FrameOrB}}
          \end{align*}
          \begin{align*}
            &\frm{\heap}{\perms}{\env}{e_1 \kor e_2} \implies \\
            &~\textbf{either}~ \eval{\heap}{\env}{e_1}{\ktrue} ~\text{and}~ \frm{\heap}{\perms}{\env}{e_1} &\text{by \refrule{FrameOrA}} \\
            &\quad\implies \eval{\heap}{\env}{e_1}{\ktrue} ~\text{and}~ \efoot{\heap}{\env}{e_1} \subseteq \perms &\text{by induction} \\
            &\quad\implies \efoot{\heap}{\env}{e_1 \kor e_2} \subseteq \perms &\text{by definition} \\
            &~\textbf{or}~ \eval{\heap}{\env}{e_1}{\kfalse}, ~\frm{\heap}{\perms}{\env}{e_1}, \\
              &\hspace{2.25em}\text{and}~ \frm{\heap}{\perms}{\env}{e_2} &\text{by \refrule{FrameOrB}}\\
            &\quad\implies \eval{\heap}{\env}{e_1}{\kfalse} ~\text{and}~ \efoot{\heap}{\env}{e_1} \cup \efoot{\heap}{\env}{e_2} \subseteq \perms &\text{by induction}\\
            &\quad\implies \efoot{\heap}{\env}{e_1 \kor e_2} \subseteq \perms &\text{by definition}
          \end{align*}
    
        \case $e_1 \kand e_2$: Similar to case \ref{case:efoot-framing-or}.
    
        \case $\kneg e$:
          \begin{align*}
            &\efoot{\heap}{\env}{\kneg e} \subseteq \perms \\
            &\iff \efoot{\heap}{\env}{e} \subseteq \perms &\text{by definition} \\
            &\iff \frm{\heap}{\perms}{\env}{e} &\text{by induction} \\
            &\iff \frm{\heap}{\perms}{\env}{\kneg e} &\text{by \refrule{FrameNeg}}
          \end{align*}

        \definecolor{shadecolor}{gray}{0.9}
        \begin{snugshade} \case \textsc{FrameUnfolding} -- Assume that $\frm{\heap}{\perms}{\env}{\sunfolding{\pe}{e_0}}$. Then, by definition of \refrule{IFramePredicate}, $\multiple{\frm{\heap}{\perms}{\env}{e}}$, so inductively $\bigcup \multiple{\efoot{\heap}{\env}{e}} \subseteq \perms$. Inductively, $\efoot{\heap}{\env}{e_0} \subseteq \perms$. By definition, 

        $$\efoot{\heap}{\env}{\sunfolding{\pe}{e_0}} = \efoot{\heap}{\env}{e_0} \cup \bigcup \multiple{\efoot{\heap}{\env}{e}} \subseteq \perms$$.

        Assume $\frm{\heap}{\perms}{\env}{\fe}$ and $\efoot{\heap}{\env}{\fe} \subseteq \perms'$. By definition, $\ifrm{\heap}{\perms}{\env}{\pe}$, $\assertion{\heap}{\perms}{\env}{\pe}$, and $\frm{\heap}{\perms}{\env}{e_0}$. By the induction hypothesis and \refrule{IFramePredicate}, $\ifrm{\heap}{\perms'}{\env}{\pe}$ and $\frm{\heap}{\perms'}{\env}{e_0}$. Then, by definition of \refrule{AssertPredicate}, $\assertion{\heap}{\perms}{[\multiple{x \mapsto v}]}{\fpred(p)}$ where $\multiple{x} = \fpredparams(p)$ and $\multiple{\eval{\heap}{\env}{e}{v}}$. Then, by Lemma~\ref{lem:efoot-assert} and Lemma~\ref{lem:assert-monotonicity}, $\assertion{\heap}{\perms'}{\env}{\pe}$.
        
        \case \textsc{FrameFunction} -- Assume $\frm{\heap}{\perms}{\env}{\fe}$. Then, by definition, $\ifrm{\heap}{\perms}{[\multiple{x \mapsto v}]}{\ffuncpre(f)}$, $\assertion{\heap}{\perms}{[\multiple{x \mapsto v}]}{\ffuncpre(f)}$, and $\multiple{\frm{\heap}{\perms}{\env}{e}}$. By Definition~\ref{def:well-formed-prog}, this implies $\frm{\heap}{\perms}{[\multiple{x \mapsto v}]}{\ffuncbody(f)}$. Inductively, $\bigcup \multiple{\efoot{\heap}{\env}{e}} \subseteq \perms$ and $\efoot{\heap}{[\multiple{x \mapsto v}]}{\ffuncbody}$. Then, by Lemma~\ref{lem:ifrm-implies-efrm}, $\efrm{\heap}{\perms}{[\multiple{x \mapsto v}]}{\ffuncpre(f)}$, so by Lemma~\ref{lem:efoot-subset-framed}, $\efoot{\heap}{[\multiple{x \mapsto v}]}{\ffuncpre(f)} \subseteq \perms$. Therefore, 
        
        $$\efoot{\heap}{\perms}{\fe} = \efoot{\heap}{[\multiple{x \mapsto v}]}{\ffuncpre(f)} \cup \efoot{\heap}{[\multiple{x \mapsto v}]}{\ffuncbody(f)} \cup \bigcup \multiple{\efoot{\heap}{\env}{e}} \subseteq \perms.$$

        Now, assume $\frm{\heap}{\perms}{\env}{\fe}$ and $\efoot{\heap}{\perms}{\fe} \subseteq \perms'$. Inductively, $\multiple{\frm{\heap}{\perms'}{\env}{e}}$ and $\ifrm{\heap}{\perms'}{\env}{\ffuncpre(f)}$. Since $\assertion{\heap}{\perms}{\env}{\ffuncpre(f)}$ and $\efoot{\heap}{\env}{\ffuncpre(f)} \subseteq \perms'$, by Lemma~\ref{lem:assert-efoot-subset}, $\assertion{\heap}{\perms'}{\env}{\ffuncpre(f)}$. Therefore, $\frm{\heap}{\perms'}{\env}{\fe}$.
        \end{snugshade}
    \end{enumcases}
\end{proof}

\begin{lemma}[Relating formula assertion/framing and exact footprints]\label{lem:efoot-subset-framed}
  If $\assertion{\heap}{\perms'}{\env}{\gform}$, $\efrm{\heap}{\perms}{\env}{\gform}$, and $\perms' \subseteq \perms$, then $\efoot{\heap}{\env}{\gform} \subseteq \perms$.
\end{lemma}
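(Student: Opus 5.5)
We argue by induction on the derivation of $\assertion{\heap}{\perms'}{\env}{\gform}$, carrying along the companion hypothesis $\efrm{\heap}{\perms}{\env}{\gform}$ and the inclusion $\perms' \subseteq \perms$. Since expressions now contain formulas (through unfolding expressions and pure function applications) and formulas contain expressions, this is one half of a mutual induction with Lemma~\ref{lem:efoot-framing}. Concretely, we induct on the size of the term as described in the soundness section, and appeal to Lemma~\ref{lem:efoot-framing} for strictly smaller expressions.

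For the expression case (\refrule{AssertValue}), the formula is an expression $e$, and \refrule{EFrameExpression} gives $\frm{\heap}{\perms}{\env}{e}$. The first half of Lemma~\ref{lem:efoot-framing} then yields $\efoot{\heap}{\env}{e} \subseteq \perms$.

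The remaining cases follow the shape of the derivation.
\begin{itemize}
  \item \refrule{AssertImprecise}: the footprint of $\simprecise{\phi}$ is that of $\phi$. The premise $\assertion{\heap}{\perms'}{\env}{\phi}$ and inversion of the framing hypothesis give $\efrm{\heap}{\perms}{\env}{\phi}$, so the induction hypothesis applies directly.
  \item \refrule{AssertAcc}: \refrule{EFrameAcc} gives $\frm{\heap}{\perms}{\env}{e}$, so $\efoot{\heap}{\env}{e} \subseteq \perms$ by Lemma~\ref{lem:efoot-framing}. Moreover $\pair{\ell}{f} \in \perms' \subseteq \perms$, which covers the added location.
  \item \refrule{AssertConjunction}: each conjunct is asserted under $\perms_i \subseteq \perms' \subseteq \perms$, and \refrule{EFrameConjunction} frames each conjunct under $\perms$. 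The induction hypothesis then applies to both conjuncts, and we take the union.
  \item Conditionals (\refrule{AssertIfA}, \refrule{AssertIfB}): evaluation is deterministic, so the branch chosen by the assertion matches the branch chosen by \refrule{EFrameConditionalA} or \refrule{EFrameConditionalB}. The guard $e$ is handled by Lemma~\ref{lem:efoot-framing}, and the chosen branch by the induction hypothesis.
  \item \refrule{AssertPredicate}: \refrule{EFramePredicate} supplies framing of the arguments $\multiple{e}$, which Lemma~\ref{lem:efoot-framing} turns into inclusion in $\perms$. It also supplies $\efrm{\heap}{\perms}{[\multiple{x \mapsto v}]}{\fpred(p)}$ with the same values $\multiple{v}$, again by determinism. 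Together with the assertion of $\fpred(p)$ under $\perms'$, the induction hypothesis gives inclusion of the body's footprint.
\end{itemize}

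The main obstacle is the measure in the predicate case. The body $\fpred(p)$ is not syntactically smaller than $p(\multiple{e})$, so induction on the size of the term fails there. We therefore make the primary measure the height of the assertion derivation. The derivation of $\assertion{\heap}{\perms'}{[\multiple{x \mapsto v}]}{\fpred(p)}$ is a strict subderivation, so it is smaller in this measure. Calls into Lemma~\ref{lem:efoot-framing} can re-enter this lemma through \refrule{FrameUnfolding} and \refrule{FrameFunction}, which carry an assertion premise of their own. We handle those re-entries by inducting lexicographically: first on the combined height of the assertion and framing derivations, then on the size of the term. Every mutual call then strictly decreases the measure, and in each case the argument is a routine case analysis.
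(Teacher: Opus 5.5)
Your decomposition matches the paper's: a case analysis on the assertion derivation, with Lemma~\ref{lem:efoot-framing} handling expression subterms, run as a mutual induction (the paper itself only appeals to induction on the size of the term). You are right that term size fails at \textsc{AssertPredicate}. The gap is in your repair: it is false that every mutual call decreases your lexicographic measure. At the re-entry through \textsc{FrameUnfolding}, the rule supplies only $\ifrm{\heap}{\perms}{\env}{p(\multiple{e})}$ and an assertion of $p(\multiple{e})$. To apply this lemma you must manufacture an equi-recursive framing derivation, using Lemma~\ref{lem:ifrm-implies-efrm} plus self-framing of $\fpred(p)$. That derivation is not a subderivation of anything you hold, so nothing bounds its height. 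Similarly, in \textsc{FrameFunction} the framing of $\ffuncbody(f)$ comes from program well-formedness.

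Here is a concrete failure. Take $p(x) = \kacc(x.f) * (\sunfolding{p(x)}{\ktrue})$. It is self-framed: asserting the body forces $\pair{\env(x)}{f} \in \perms$, and that is all \textsc{FrameUnfolding} needs for the second conjunct. Let $\perms' = \perms \ni \pair{\env(x)}{f}$, with $\env$ defined only on $x$. A framing derivation $F$ of $p(x)$ contains, at the \textsc{FrameUnfolding} node of the second conjunct, a derivation of $\assertion{\heap}{\perms}{\env}{p(x)}$, and this may be taken to be your assertion derivation $A$. Follow your cases: \textsc{AssertPredicate}, \textsc{AssertConjunction}, \textsc{AssertValue}, then Lemma~\ref{lem:efoot-framing} at that node. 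You come back to this lemma for $p(x)$ with $A$ and a rebuilt framing derivation of the same shape as $F$ (it may be $F$ itself). The hypotheses are unchanged, so no measure on derivations and terms can decrease. In fact the defining equations of the exact footprint do not terminate here, since $\efoot{\heap}{\env}{p(x)} = \set{\pair{\env(x)}{f}} \cup \efoot{\heap}{\env}{p(x)}$.

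The missing idea is to induct on the footprint recursion rather than on derivations. Read $\efoot{\heap}{\env}{\cdot}$ as the least solution of its equations and use fixpoint induction. For fixed $\heap$, call an item (a syntactic form with its environment) good if it is either an expression framed by $\perms$, or a formula asserted under some subset of $\perms$ and equi-framed by $\perms$. It then suffices to check, for every good item, that each item on the right-hand side of its footprint equation is again good and each location it adds explicitly lies in $\perms$. Every case is local:
\begin{itemize}
\item \textsc{AssertPredicate} with \textsc{EFramePredicate} makes the instantiated body good.
\item \textsc{FrameUnfolding} with Lemma~\ref{lem:ifrm-implies-efrm} makes $p(\multiple{e})$ and $e_0$ good.
\item \textsc{FrameFunction} with Lemma~\ref{lem:ifrm-implies-efrm} and well-formedness covers the precondition and the body.
\end{itemize}
No recursive call arises, so manufactured derivations are harmless. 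This does require Lemma~\ref{lem:ifrm-implies-efrm}, and the monotonicity facts it uses, to be proved by direct induction on derivations. They cannot go through Lemma~\ref{lem:efoot-framing}, which is how the paper proves Lemma~\ref{lem:framing-monotonicity}.

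A smaller slip: $\efrm{\heap}{\perms}{\env}{\cdot}$ has no rule for $\simprecise{\phi}$, so inversion gives you nothing in the \textsc{AssertImprecise} case, which is therefore vacuous. The useful move there is to apply the induction hypothesis to the premise $\efrm{\heap}{\perms'}{\env}{\phi}$ of \textsc{AssertImprecise}.
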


\begin{lemma}[Relating iso- and equi-recursive framing]\label{lem:ifrm-implies-efrm}
  If $\ifrm{\heap}{\perms}{\env}{\gform}$, $\assertion{\heap}{\perms'}{\env}{\gform}$, and $\perms' \subseteq \perms$, then $\efrm{\heap}{\perms}{\env}{\gform}$.
\end{lemma}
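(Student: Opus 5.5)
We argue by induction on the derivation of $\ifrm{\heap}{\perms}{\env}{\gform}$, generalising over $\env$ and $\perms'$. For the predicate case we need a second induction, described below. In every case we invert the assertion $\assertion{\heap}{\perms'}{\env}{\gform}$ on the same syntactic form and build the matching \textsc{EFrame} rule. For the imprecise wrapper $\simprecise{\phi}$, framing is taken on $\phi$. The assertion then gives $\efrm{\heap}{\perms'}{\env}{\phi}$ directly through \refrule{AssertImprecise}. Monotonicity of equi-recursive framing in the permission set, the single theorem stated in the soundness section, lifts this from $\perms'$ to $\perms$.

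The structural cases are routine.
\begin{itemize}
\item For \refrule{IFrameExpression} and \refrule{IFrameAcc}, the premise is already an expression-framing judgement $\frm{\heap}{\perms}{\env}{e}$. We conclude with \refrule{EFrameExpression} or \refrule{EFrameAcc}.
\item For \refrule{IFrameConjunction}, \refrule{AssertConjunction} splits $\perms'$ into disjoint $\perms_1, \perms_2 \subseteq \perms' \subseteq \perms$. We apply the induction hypothesis to each conjunct with $\perms_i$ in place of $\perms'$, then use \refrule{EFrameConjunction}.
\item For the conditional cases, evaluation is deterministic, so the branch chosen by the \textsc{IFrameConditional} rule matches the branch chosen by \refrule{AssertIfA} or \refrule{AssertIfB}. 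We apply the induction hypothesis to that branch.
\end{itemize}

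The predicate case $p(\multiple{e})$ is the main obstacle. \refrule{IFramePredicate} only gives $\multiple{\frm{\heap}{\perms}{\env}{e}}$. \refrule{EFramePredicate} additionally requires $\efrm{\heap}{\perms}{[\multiple{x \mapsto v}]}{\fpred(p)}$, and we have no iso-framing hypothesis for the body. We do have $\assertion{\heap}{\perms'}{[\multiple{x \mapsto v}]}{\fpred(p)}$ from \refrule{AssertPredicate}, and evaluation is deterministic, so the values $\multiple{v}$ agree.

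To handle this, we first prove an auxiliary claim by induction on the assertion derivation: if $\assertion{\heap}{\perms'}{\env}{\gform}$ holds and every predicate body is a specification, then $\efrm{\heap}{\perms'}{\env}{\gform}$. For a precise, self-framed body, Definition~\ref{def:self-framed} turns the assertion into an iso-framing judgement. We then recurse, and the assertion derivation strictly shrinks, which justifies the induction. For an imprecise body, \refrule{AssertImprecise} supplies the equi-framing directly. Finally, monotonicity lifts the result from $\perms'$ to $\perms$.

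Mixed formulas and expressions add one further wrinkle. The body may contain unfolding expressions or function calls, and expression framing for those is itself defined through assertions. This means the claim must be proved by mutual induction with the expression-level lemmas, exactly as described in the soundness section. Arranging this mutual induction on derivation size, so that it is well-founded, is the step we expect to need the most care.
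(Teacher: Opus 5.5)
Your argument is correct in substance and is the natural proof of this lemma. The paper itself does not reprove it; it defers to Zimmerman et al. The structure is right: invert the assertion in the structural cases, and in the predicate case use that predicate bodies are specifications, via self-framedness for a precise body and \textsc{AssertImprecise} for an imprecise one, followed by monotonicity. Two repairs are needed in the write-up.

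First, your auxiliary claim is false as stated, because assertion alone does not give equi-recursive framing. For example, $\assertion{\heap}{\emptyset}{\env}{x.f \keq 1}$ holds by \textsc{AssertValue} whenever $\heap(\env(x), f) = 1$, since evaluation checks no permissions. Yet $\efrm{\heap}{\emptyset}{\env}{x.f \keq 1}$ fails at \textsc{FrameField}. The claim needs the hypothesis that $\gform$ is itself a specification, which is the only way you use it.

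Second, the measure that makes the recursion well-founded is the assertion derivation, not the $\vdash_{\mathrm{frmI}}$ derivation. \textsc{IFramePredicate} says nothing about the body, and the iso-framing judgement produced by self-framedness is not a subderivation of anything. The clean version is a single induction on the derivation of $\assertion{\heap}{\perms'}{\env}{\gform}$, generalizing over $\perms$, $\perms'$ and $\env$. The conjuncts, the branch taken by \textsc{AssertIfA} or \textsc{AssertIfB}, and the body premise of \textsc{AssertPredicate} are all strict subderivations. So in the predicate case you get $\ifrm{\heap}{\perms'}{[\multiple{x \mapsto v}]}{\fpred(p)}$ from self-framedness, apply the hypothesis with $\perms'$ in both positions, and lift the result to $\perms$.

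The difficulty you anticipate at the end is misplaced. This lemma never looks inside an expression: the $\vdash_{\mathrm{frm}}$ premises of \textsc{IFrameExpression} and \textsc{IFrameAcc} are exactly what \textsc{EFrameExpression} and \textsc{EFrameAcc} need. So unfolding expressions and function calls cost nothing here.

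The real circularity is in the monotonicity step. Monotonicity of $\vdash_{\mathrm{frmE}}$ rests on monotonicity of $\vdash_{\mathrm{frm}}$. The paper derives that from Lemma~\ref{lem:efoot-framing}, whose \textsc{FrameFunction} case invokes the present lemma. You can break the cycle by proving monotonicity of $\vdash_{\mathrm{frm}}$, $\vdash_{\mathrm{frmI}}$, $\vdash_{\mathrm{frmE}}$ and $\vDash$ directly, by one simultaneous induction on their derivations. \textsc{FrameUnfolding} and \textsc{FrameFunction} need assertion monotonicity on their premises, and \textsc{AssertImprecise} needs $\vdash_{\mathrm{frmE}}$ monotonicity on a premise. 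That induction does not use this lemma at all.
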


\begin{lemma}[Relating specification assertion and exact footprints]\label{lem:efoot-subset-spec}
  If $\gform$ is a specification and $\assertion{\heap}{\perms}{\env}{\gform}$, then $\efoot{\heap}{\env}{\gform} \subseteq \perms$.
\end{lemma}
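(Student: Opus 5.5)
The plan is to reduce this lemma to Lemma~\ref{lem:ifrm-implies-efrm} and Lemma~\ref{lem:efoot-subset-framed}, splitting on whether $\gform$ is precise or imprecise as in Definition~\ref{def:specification}. The two cases need different sources of framing: self-framedness in the precise case, and the premise of \refrule{AssertImprecise} in the imprecise case.

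\emph{Precise case.} Here $\gform = \phi$ is self-framed by Definition~\ref{def:self-framed}, so $\assertion{\heap}{\perms}{\env}{\phi}$ yields $\ifrm{\heap}{\perms}{\env}{\phi}$. I then apply Lemma~\ref{lem:ifrm-implies-efrm} with $\perms' = \perms$, using the assertion itself as the second hypothesis, to obtain $\efrm{\heap}{\perms}{\env}{\phi}$. Lemma~\ref{lem:efoot-subset-framed}, again with $\perms' = \perms$, then gives $\efoot{\heap}{\env}{\phi} \subseteq \perms$.

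\emph{Imprecise case.} Here $\gform = \simprecise{\phi}$, and the assertion can only have been derived by \refrule{AssertImprecise}. Inverting that rule gives both $\assertion{\heap}{\perms}{\env}{\phi}$ and $\efrm{\heap}{\perms}{\env}{\phi}$. Lemma~\ref{lem:efoot-subset-framed} with $\perms'=\perms$ yields $\efoot{\heap}{\env}{\phi} \subseteq \perms$. Since $\efoot{\heap}{\env}{\simprecise{\phi}} := \efoot{\heap}{\env}{\phi}$ by definition, the claim follows.

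The main obstacle is bookkeeping rather than the case split. The relation $\efrm{}{}{}{}$ is defined only for precise $\phi$, so in the imprecise case I must apply Lemma~\ref{lem:efoot-subset-framed} to $\phi$ itself and then transport the result through the defining clause for $\simprecise{\phi}$. Applying that lemma directly to $\gform$ would not typecheck. I must also confirm that Lemma~\ref{lem:efoot-subset-framed} and Lemma~\ref{lem:ifrm-implies-efrm} remain valid in the presence of unfolding expressions and pure functions. Those extensions are handled by the mutual-induction argument described in the soundness section, and this lemma uses them only as black boxes, so no new induction is needed here.
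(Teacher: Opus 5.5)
Your proof is correct and takes essentially the paper's route. The paper gives no proof here and defers to Zimmerman et al., whose argument is the same precise/imprecise split: self-framing plus Lemma~\ref{lem:ifrm-implies-efrm} in the precise case, inversion of \refrule{AssertImprecise} in the imprecise case, and Lemma~\ref{lem:efoot-subset-framed} to finish both.
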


\begin{lemma}[Relating specification assertion and footprints]\label{lem:foot-subset-spec}
  If $\gform$ is a specification and $\assertion{\heap}{\perms}{\env}{\gform}$, then $\foot{\heap}{\perms}{\env}{\gform} \subseteq \perms$.
\end{lemma}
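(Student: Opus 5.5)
We need to prove Lemma (foot-subset-spec): if $\gform$ is a specification and $\assertion{\heap}{\perms}{\env}{\gform}$, then $\foot{\heap}{\perms}{\env}{\gform} \subseteq \perms$.

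The plan is a direct case split on the definition of the maximal footprint (Definition~\ref{def:footprint}), reducing everything to Lemma~\ref{lem:efoot-subset-spec}. No induction is needed here: all the structural work over formulas, including the new unfolding and pure-function cases, is already packaged in the earlier lemmas.

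\textbf{Case 1: $\gform$ is completely precise.} By Definition~\ref{def:footprint}, $\foot{\heap}{\perms}{\env}{\gform} = \efoot{\heap}{\env}{\gform}$. Since $\gform$ is a specification and $\assertion{\heap}{\perms}{\env}{\gform}$, Lemma~\ref{lem:efoot-subset-spec} gives $\efoot{\heap}{\env}{\gform} \subseteq \perms$, which is the claim.

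\textbf{Case 2: $\gform$ is not completely precise.} By Definition~\ref{def:footprint}, $\foot{\heap}{\perms}{\env}{\gform} = \perms$, and $\perms \subseteq \perms$ holds trivially. Notably, this case does not use the assertion hypothesis at all.

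The only step I would double-check is whether Lemma~\ref{lem:efoot-subset-spec} still holds in the extended system. Its proof would now pass through the \textsc{FrameUnfolding} and \textsc{FrameFunction} cases via Lemmas~\ref{lem:ifrm-implies-efrm} and \ref{lem:efoot-subset-framed}. The paper states that lemma without modification, so I take it as given and expect no real obstacle. The present lemma is therefore immediate.
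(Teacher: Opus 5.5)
Your proof is correct and follows the same approach as the proof the paper inherits from Zimmerman et al. You split on whether $\gform$ is completely precise, apply Lemma~\ref{lem:efoot-subset-spec} when $\foot{\heap}{\perms}{\env}{\gform} = \efoot{\heap}{\env}{\gform}$, and otherwise note that the footprint is $\perms$ itself.
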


\begin{lemma}[Relating specification exact footprints and footprints]\label{lem:foot-subset-efoot-spec}
  If $\gform$ is a specification and $\assertion{\heap}{\perms}{\env}{\gform}$, then $\efoot{\heap}{\env}{\gform} \subseteq \foot{\heap}{\perms}{\env}{\gform}$.
\end{lemma}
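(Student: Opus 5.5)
We prove this by case analysis on whether $\gform$ is completely precise (Definition~\ref{def:precise-formula}), since the maximal footprint $\foot{\heap}{\perms}{\env}{\gform}$ is defined by exactly this split in Definition~\ref{def:footprint}. In each case the inclusion reduces to a result already available, so no new induction over $\gform$ is needed. This matters here because expressions and formulas are now mutually recursive; the only recursive machinery involved lives inside the lemmas we invoke.

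\textbf{Case 1: $\gform$ is completely precise.} Then by Definition~\ref{def:footprint}, $\foot{\heap}{\perms}{\env}{\gform} = \efoot{\heap}{\env}{\gform}$. The goal becomes $\efoot{\heap}{\env}{\gform} \subseteq \efoot{\heap}{\env}{\gform}$, which holds by reflexivity of $\subseteq$. Neither the assertion hypothesis nor the specification hypothesis is used in this case.

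\textbf{Case 2: $\gform$ is not completely precise.} Then $\foot{\heap}{\perms}{\env}{\gform} = \perms$, so we must show $\efoot{\heap}{\env}{\gform} \subseteq \perms$. This is exactly Lemma~\ref{lem:efoot-subset-spec}, applied to the hypotheses that $\gform$ is a specification and $\assertion{\heap}{\perms}{\env}{\gform}$.

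The only real content is therefore Lemma~\ref{lem:efoot-subset-spec}, which we may assume. Its validity in the extended setting, where exact footprints of unfolding expressions and pure function applications include predicate and precondition footprints, is what ultimately rests on the mutually inductive lemmas such as Lemma~\ref{lem:efoot-framing} and Lemma~\ref{lem:efoot-subset-framed}. For the present statement we simply cite it.

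The one point needing care is that the case split on complete precision must match Definition~\ref{def:footprint} verbatim. A formula may be precise at the top level yet fail to be completely precise because an unrolled predicate body is imprecise. Such formulas fall into Case 2, which Lemma~\ref{lem:efoot-subset-spec} still covers, since it only requires that $\gform$ be a specification.
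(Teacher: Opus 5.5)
Your proof is correct and takes essentially the paper's approach. The paper states this lemma without proof, deferring to the argument of Zimmerman et al., which is the same case split on complete precision from Definition~\ref{def:footprint}: reflexivity when $\gform$ is completely precise, and Lemma~\ref{lem:efoot-subset-spec} when the maximal footprint is $\perms$.
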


\begin{lemma}[Monotonicity of expression framing WRT permissions]\label{lem:framing-monotonicity}
  If $\frm{\heap}{\perms}{\env}{e}$ and $\perms \subseteq \perms'$, then $\frm{\heap}{\perms'}{\env}{e}$.
\end{lemma}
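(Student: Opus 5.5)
The plan is to argue by induction on the derivation of $\frm{\heap}{\perms}{\env}{e}$, mutually with the corresponding monotonicity statement for formula assertion (Lemma~\ref{lem:assert-monotonicity}) and for iso-/equi-recursive framing. As the discussion of mutually-recursive theorems in the soundness section explains, the measure is the size of the term. The premises of \textsc{FrameUnfolding} and \textsc{FrameFunction} mention formulas (predicate instances and preconditions), and those formulas in turn contain expressions. The old expression-only induction is therefore no longer self-contained.

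For the original syntactic forms the cases are immediate. \textsc{FrameLiteral} and \textsc{FrameVar} have no premises. \textsc{FrameOp}, \textsc{FrameNeg}, \textsc{FrameOrA/B} and \textsc{FrameAndA/B} follow by applying the induction hypothesis to the subexpressions; the evaluation premises $\eval{\heap}{\env}{e_1}{\ktrue}$ (or $\kfalse$) do not mention $\perms$ and carry over unchanged. For \textsc{FrameField}, the induction hypothesis gives $\frm{\heap}{\perms'}{\env}{e}$. The premise $\assertion{\heap}{\perms}{\env}{\kacc(e.f)}$ yields $\pair{\ell}{f} \in \perms \subseteq \perms'$ via \textsc{AssertAcc}, so the rule reapplies. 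Alternatively, this whole part could be discharged by Lemma~\ref{lem:efoot-framing}: from $\frm{\heap}{\perms}{\env}{e}$ we get $\efoot{\heap}{\env}{e} \subseteq \perms \subseteq \perms'$, and hence $\frm{\heap}{\perms'}{\env}{e}$. This is the shortest route, and I would present it as the main argument, with the structural induction as the justification that avoids any circularity.

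The new cases are \textsc{FrameUnfolding} and \textsc{FrameFunction}. For unfolding, the iso-framing premise $\ifrm{\heap}{\perms}{\env}{p(\multiple{e})}$ only requires $\multiple{\frm{\heap}{\perms}{\env}{e}}$ (\textsc{IFramePredicate}), and these lift to $\perms'$ by the induction hypothesis on the strictly smaller arguments $\multiple{e}$. The inner premise $\frm{\heap}{\perms}{\env}{e_0}$ lifts likewise. The assertion premise $\assertion{\heap}{\perms}{\env}{p(\multiple{e})}$ lifts by monotonicity of assertion with respect to permissions (Lemma~\ref{lem:assert-monotonicity}). That lemma is invoked at the formula level, which is exactly the mutual dependency. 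The function case is identical: lift $\ifrm{\heap}{\perms}{\env}{\fpre(f)}$ by the mutual iso-framing monotonicity and $\assertion{\heap}{\perms}{\env}{\fpre(f)}$ by assertion monotonicity, and lift the arguments by the induction hypothesis.

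The main obstacle is well-foundedness of the mutual induction. Assertion of $p(\multiple{e})$ unrolls $\fpred(p)$, which is not a subterm of $\sunfolding{p(\multiple{e})}{e_0}$, so naive term size does not decrease. I would handle this by inducting on the \emph{derivation height} rather than the term. Assertion monotonicity proceeds by induction on the derivation of $\assertion{\heap}{\perms}{\env}{\cdot}$: its rules only add permissions upward (\textsc{AssertAcc}, and \textsc{AssertConjunction} with $\perms_1 \cup \perms_2 \subseteq \perms \subseteq \perms'$), and \textsc{AssertImprecise} calls equi-framing, which calls expression framing on subderivations. Each appeal in the framing case is then to a strictly smaller derivation. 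Finite derivations are guaranteed because the relations are inductively defined.
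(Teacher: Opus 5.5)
Your proposal is correct. The route you say you would actually present, $\efoot{\heap}{\env}{e} \subseteq \perms \subseteq \perms'$ followed by the converse half of Lemma~\ref{lem:efoot-framing}, is exactly the paper's two-line proof.

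The simultaneous rule induction you sketch is a genuinely different, self-contained argument. It proves monotonicity of $\vdash_{\mathrm{frm}}$, $\vdash_{\mathrm{frmI}}$, $\vdash_{\mathrm{frmE}}$ and $\vDash$ together. It uses only the fact that every premise mentioning $\perms$ is either one of these judgements at the same $\perms$, or an upward-closed side condition such as $\pair{\ell}{f} \in \perms$ or $\perms_1 \cup \perms_2 \subseteq \perms$. It never touches exact footprints, nor the well-formedness assumption on function bodies that the \textsc{FrameFunction} case of Lemma~\ref{lem:efoot-framing} relies on.

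The two routes are not layered the way you describe, though. Once the direct induction is done, citing Lemma~\ref{lem:efoot-framing} is redundant. If you do cite it, you inherit a cycle:
\begin{itemize}
\item its \textsc{FrameUnfolding} case calls Lemma~\ref{lem:assert-monotonicity};
\item Lemma~\ref{lem:assert-monotonicity} needs Lemma~\ref{lem:efrm-monotonicity} for \textsc{AssertImprecise};
\item Lemma~\ref{lem:efrm-monotonicity} needs the present lemma for \textsc{EFrameExpression}.
\end{itemize}
So the paper's route is non-circular only if Lemma~\ref{lem:efoot-framing} itself is folded into the same simultaneous induction.

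The paper's version is shorter because it reuses a lemma it needs anyway. Yours is more elementary. Your point that the measure must be derivations rather than term size is also a sharper justification than the paper's remark about induction on the size of the term, since \textsc{AssertPredicate} and \textsc{EFramePredicate} unroll $\fpred(p)$.
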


\begin{proof}
    By Lemma~\ref{lem:efoot-framing}, $\efoot{\heap}{\env}{e} \subseteq \perms \subseteq \perms'$, so again by Lemma~\ref{lem:efoot-framing}, $\frm{\heap}{\perms'}{\env}{e}$.
\end{proof}

\begin{lemma}[Monotonicity of equi-recursive framing WRT permissions]\label{lem:efrm-monotonicity}
  If $\efrm{\heap}{\perms}{\env}{\gform}$ and $\perms \subseteq \perms'$, then $\efrm{\heap}{\perms'}{\env}{\gform}$.
\end{lemma}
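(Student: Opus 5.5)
We argue by induction on the derivation of $\efrm{\heap}{\perms}{\env}{\gform}$. The formula case pushes down into expression framing, and there we appeal to Lemma~\ref{lem:framing-monotonicity}, which has already been established. As the discussion preceding the appendix explains, the expression and formula framing results are mutually dependent: \textsc{FrameUnfolding} and \textsc{FrameFunction} refer back to formulas. The clean way to justify this is to read Lemma~\ref{lem:framing-monotonicity} and the present lemma as a single statement, proved by induction on the size of the term. With that reading, every appeal to Lemma~\ref{lem:framing-monotonicity} below is made on a strictly smaller subterm, or else it is the already proved lemma applied to a plain expression.

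The cases go as follows.

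\begin{itemize}
\item \textsc{EFrameExpression} and \textsc{EFrameAcc}: the premise is $\frm{\heap}{\perms}{\env}{e}$. Lemma~\ref{lem:framing-monotonicity} gives $\frm{\heap}{\perms'}{\env}{e}$, and we reapply the same rule.
\item \textsc{EFrameConjunction}: apply the induction hypothesis to both conjuncts.
\item \textsc{EFrameConditionalA} and \textsc{EFrameConditionalB}: evaluation $\eval{\heap}{\env}{e}{v}$ does not depend on permissions, so it carries over unchanged. Framing of the guard $e$ follows from Lemma~\ref{lem:framing-monotonicity}, and the chosen branch follows from the induction hypothesis.
\item \textsc{EFramePredicate}: the arguments $\multiple{e}$ are handled by Lemma~\ref{lem:framing-monotonicity} and their values are unchanged. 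The premise $\efrm{\heap}{\perms}{[\multiple{x \mapsto v}]}{\fpred(p)}$ is handled by the induction hypothesis.
\end{itemize}

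The imprecise formula $\simprecise{\phi}$ has no equi-recursive framing rule of its own. Framing is only ever asserted of its precise part, so the statement for $\gform$ reduces to the statement for $\phi$.

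The main obstacle is the \textsc{EFramePredicate} case. Its premise frames $\fpred(p)$, which is not a syntactic subterm of $p(\multiple{e})$, so induction on formula size fails. This is exactly why we induct on the derivation rather than on syntax. A finite derivation of the premise is itself a strictly smaller derivation, so the induction hypothesis applies even though the predicate is unrolled equi-recursively.

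The same point settles the mutual recursion. The unfolding and pure-function subcases inside expression framing consume finite sub-derivations of $\ifrm{}$, of $\vDash$, and of the argument framings. So the correct combined measure is derivation height, not term size. The \textsc{FrameUnfolding} and \textsc{FrameFunction} cases of Lemma~\ref{lem:efoot-framing}, on which Lemma~\ref{lem:framing-monotonicity} rests, additionally use assertion monotonicity, Lemma~\ref{lem:assert-monotonicity}. That lemma does not in turn depend on the present result, so the dependency is well-founded.
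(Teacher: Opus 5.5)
Your rule-by-rule case analysis is the standard one: induction on the $\vdash_{\mathrm{frmE}}$ derivation, with expression premises deferred to Lemma~\ref{lem:framing-monotonicity}. That is the mutual induction the paper describes. You are also right that term size cannot be the measure, since \textsc{EFramePredicate} unrolls $\fpred(p)$, which is not a subterm of $p(\multiple{e})$.

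The gap is in your last paragraph, which is meant to make the appeal to Lemma~\ref{lem:framing-monotonicity} non-circular. You claim that Lemma~\ref{lem:assert-monotonicity} does not depend on the present result, but it does. \textsc{AssertImprecise} has the premise $\efrm{\heap}{\perms}{\env}{\phi}$, so lifting $\assertion{\heap}{\perms}{\env}{\simprecise{\phi}}$ to a larger $\perms'$ needs exactly Lemma~\ref{lem:efrm-monotonicity}. \textsc{AssertPredicate} reaches that case whenever a predicate body is imprecise. So the chain Lemma~\ref{lem:efrm-monotonicity} $\to$ Lemma~\ref{lem:framing-monotonicity} $\to$ Lemma~\ref{lem:efoot-framing} (its \textsc{FrameUnfolding} and \textsc{FrameFunction} cases) $\to$ Lemma~\ref{lem:assert-monotonicity} $\to$ Lemma~\ref{lem:efrm-monotonicity} is a genuine cycle, and your argument does not break it. Routing through Lemma~\ref{lem:efoot-framing} also does not fit a derivation-height measure. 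There, assertion monotonicity is applied to the derivation produced by Lemma~\ref{lem:efoot-assert}, not to a sub-derivation of the one you induct on. Separately, your first paragraph claims every appeal is on a strictly smaller subterm. That fails for the same \textsc{EFramePredicate} reason and contradicts your later conclusion.

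The fix is your own derivation-height idea, applied to the right family of judgements. Prove monotonicity in the permission set for $\vdash_{\mathrm{frm}}$, $\vdash_{\mathrm{frmI}}$, $\vdash_{\mathrm{frmE}}$ and $\vDash$ simultaneously, by structural induction on derivations, without exact footprints. Every rule uses the permission set only positively:
\begin{itemize}
\item as a membership test in \textsc{AssertAcc};
\item as the bound $\perms_1 \cup \perms_2 \subseteq \perms$ in \textsc{AssertConjunction}, where you reapply the rule with the same $\perms_1, \perms_2$;
\item or passed unchanged to premises of these four judgements.
\end{itemize}
Evaluation premises do not mention it. So every case, including \textsc{FrameField}, \textsc{FrameUnfolding}, \textsc{FrameFunction}, \textsc{AssertImprecise}, \textsc{AssertPredicate} and \textsc{EFramePredicate}, follows immediately from the induction hypothesis on immediate sub-derivations. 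The $\simprecise{\phi}$ case of the present lemma is then vacuous, since no $\vdash_{\mathrm{frmE}}$ rule concludes an imprecise formula.
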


\begin{lemma}[Monotonicity of assertions WRT permissions]\label{lem:assert-monotonicity}
  If $\assertion{\heap}{\perms}{\env}{\gform}$ and $\perms \subseteq \perms'$, then $\assertion{\heap}{\perms'}{\env}{\gform}$.
\end{lemma}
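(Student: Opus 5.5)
We prove the statement by induction on the derivation of $\assertion{\heap}{\perms}{\env}{\gform}$. The key observation is that permissions enter the assertion rules in only two places: directly in \refrule{AssertAcc}, and through framing in \refrule{AssertImprecise}. All other rules either ignore $\perms$ or pass it unchanged to their premises. Expression evaluation $\eval{\heap}{\env}{e}{v}$ does not mention permissions at all, even for the new unfolding and function expressions (\refrule{EvalUnfolding}, \refrule{EvalFunction}). Every evaluation premise therefore carries over to $\perms'$ verbatim.

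We go through the rules one at a time.
\begin{itemize}
\item \refrule{AssertValue}: the only premise is an evaluation, which is independent of permissions, so the conclusion holds at $\perms'$ directly.
\item \refrule{AssertAcc}: we have $\pair{\ell}{f} \in \perms \subseteq \perms'$.
\item \refrule{AssertIfA}, \refrule{AssertIfB}: we keep the evaluation premise and apply the induction hypothesis to the chosen branch.
\item \refrule{AssertPredicate}: we keep the argument evaluations and apply the induction hypothesis to $\fpred(p)$ under $[\multiple{x \mapsto v}]$. This is a subderivation, so the induction on derivations is well founded even though predicates are recursive.
\item \refrule{AssertConjunction}: we reuse the same witnesses $\perms_1,\perms_2$. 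They remain disjoint, and $\perms_1 \cup \perms_2 \subseteq \perms \subseteq \perms'$, so no induction hypothesis is even needed.
\item \refrule{AssertImprecise}: the induction hypothesis gives $\assertion{\heap}{\perms'}{\env}{\phi}$. Lemma~\ref{lem:efrm-monotonicity} gives $\efrm{\heap}{\perms'}{\env}{\phi}$.
\end{itemize}

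The main obstacle is the dependency structure. The equi-recursive framing judgement now reaches expression framing $\vdash_{\mathrm{frm}}$, which for \refrule{FrameUnfolding} and \refrule{FrameFunction} itself contains assertion premises. So Lemma~\ref{lem:efrm-monotonicity} and Lemma~\ref{lem:framing-monotonicity} (via Lemma~\ref{lem:efoot-framing}) depend on the present lemma, exactly the circularity discussed in the soundness section.

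I would resolve this as the paper suggests. The three monotonicity statements (expression framing, equi-recursive framing, assertion) are proved simultaneously by mutual induction on the combined size of the derivation. In the \refrule{AssertImprecise} case, the framing derivation for $\phi$ is a strict subderivation, so invoking framing monotonicity there is an application of the mutual induction hypothesis rather than of an already-completed lemma. Symmetrically, in the framing proofs, the \refrule{FrameUnfolding} and \refrule{FrameFunction} cases appeal to assertion monotonicity on the strictly smaller assertion premises $\assertion{\heap}{\perms}{\env}{p(\multiple{e})}$ and $\assertion{\heap}{\perms}{\env}{\fpre(f)}$.

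The \refrule{IFramePredicate} premise only frames the arguments, so it needs expression framing monotonicity alone. I expect no further difficulty beyond checking that each appeal is to a strictly smaller derivation.
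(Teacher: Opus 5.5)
Your proof is correct. For this lemma itself it is the same induction on the assertion derivation that the paper inherits from Zimmerman et al.: \textsc{AssertAcc} is immediate, \textsc{AssertConjunction} reuses its witnesses, and \textsc{AssertImprecise} calls on Lemma~\ref{lem:efrm-monotonicity}.

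Where you differ from the paper is in how you break the new circularity.
\begin{itemize}
\item The paper gets expression-framing monotonicity (Lemma~\ref{lem:framing-monotonicity}) from the footprint characterization, Lemma~\ref{lem:efoot-framing}. That lemma's \textsc{FrameUnfolding} and \textsc{FrameFunction} cases invoke Lemmas~\ref{lem:efoot-assert}, \ref{lem:assert-monotonicity} and \ref{lem:assert-efoot-subset}. So the mutually dependent cluster the paper has to justify is considerably larger, and it describes the justification as induction on the size of the term.
\item You prove framing monotonicity directly by induction on framing derivations. As a result, only the monotonicity statements depend on one another, and you measure by derivation size rather than term size.
\end{itemize}
Derivation size is the better measure. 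In \textsc{AssertPredicate} and \textsc{EFramePredicate} the premise concerns $\fpred(p)$, which need not be smaller than $p(\multiple{e})$. By contrast, every premise you appeal to is a strict subderivation.

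There is one small omission. In the \textsc{FrameFunction} case the premise is $\vdash_{\mathrm{frmI}}$ for $\ffuncpre(f)$, which is an arbitrary precise formula. Your remark about \textsc{IFramePredicate} therefore covers only \textsc{FrameUnfolding}. You also need monotonicity of $\vdash_{\mathrm{frmI}}$. Since \textsc{IFrameExpression}, \textsc{IFrameAcc}, \textsc{IFrameConditionalA/B} and \textsc{IFramePredicate} all have expression-framing premises, it has to be a fourth member of the mutual induction. Each of its cases is routine and again appeals only to strict subderivations.
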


\begin{lemma}[Exact footprint preserves equi-recursive framing]\label{lem:efoot-efrm}
  If $\efrm{\heap}{\perms}{\env}{\gform}$, then \\
  $\efrm{\heap}{\efoot{\heap}{\env}{\gform} \cap \perms}{\env}{\gform}$.
\end{lemma}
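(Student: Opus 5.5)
We prove the statement by induction on the derivation of $\efrm{\heap}{\perms}{\env}{\gform}$. Write $\perms^* = \efoot{\heap}{\env}{\gform} \cap \perms$. Because the \textsc{EFrame} rules delegate leaves to expression framing, each case needs a companion fact about expressions. For any subexpression $e$ with $\frm{\heap}{\perms}{\env}{e}$, we have $\frm{\heap}{\efoot{\heap}{\env}{e} \cap \perms}{\env}{e}$. This follows directly from Lemma~\ref{lem:efoot-framing}: framing gives $\efoot{\heap}{\env}{e} \subseteq \perms$, so $\efoot{\heap}{\env}{e} \subseteq \efoot{\heap}{\env}{e}\cap\perms$, and the converse direction of that lemma yields framing under the smaller set. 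Combined with Lemma~\ref{lem:framing-monotonicity}, any superset of $\efoot{\heap}{\env}{e}\cap\perms$ contained in $\perms$ also frames $e$. We will use this repeatedly.

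The cases then go as follows. For \textsc{EFrameExpression}, the footprint of the formula is that of $e$, and the companion fact closes it. For \textsc{EFrameAcc}, $\efoot{\heap}{\env}{\kacc(e.f)} \supseteq \efoot{\heap}{\env}{e}$, so we apply monotonicity within $\perms$. For \textsc{EFrameConjunction}, the induction hypothesis gives framing of $\phi_i$ under $\efoot{\heap}{\env}{\phi_i}\cap\perms$. This set is contained in $\perms^*$, since the footprint of $\phi_1 * \phi_2$ is the union of the two footprints. We then lift to $\perms^*$ by Lemma~\ref{lem:efrm-monotonicity}. For the conditional rules, the evaluation premise $\eval{\heap}{\env}{e}{\ktrue}$ (resp.\ $\kfalse$) is permission-independent. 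The footprint includes $\efoot{\heap}{\env}{e}$ together with the footprint of the chosen branch, so the companion fact frames $e$ and the induction hypothesis plus Lemma~\ref{lem:efrm-monotonicity} frames the branch. For an imprecise formula $\simprecise{\phi}$, the footprint equals that of $\phi$, and we reduce to the precise case.

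The key case is \textsc{EFramePredicate}. The premises are $\multiple{\frm{\heap}{\perms}{\env}{e}}$, $\multiple{\eval{\heap}{\env}{e}{v}}$, and $\efrm{\heap}{\perms}{[\multiple{x\mapsto v}]}{\fpred(p)}$. The footprint of $p(\multiple{e})$ is $\efoot{\heap}{[\multiple{x\mapsto v}]}{\fpred(p)} \cup \bigcup\multiple{\efoot{\heap}{\env}{e}}$. The arguments are handled by the companion fact and monotonicity. For the body, we apply the induction hypothesis to the sub-derivation in the environment $[\multiple{x\mapsto v}]$. This gives framing under $\efoot{\heap}{[\multiple{x\mapsto v}]}{\fpred(p)}\cap\perms \subseteq \perms^*$, and Lemma~\ref{lem:efrm-monotonicity} finishes. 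The induction must be on derivations rather than on formula size, because predicate bodies unroll; this is fine, since the body's framing is a strict sub-derivation.

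The main obstacle is the expression companion fact once expressions contain unfolding expressions and pure function calls, for example inside \textsc{EFrameExpression}. Lemma~\ref{lem:efoot-framing}'s unfolding and function cases themselves rely on assertion monotonicity and footprint lemmas for formulas. As discussed in the soundness section, the argument therefore has to be read as part of the mutual induction over expressions and formulas. I would simply invoke Lemma~\ref{lem:efoot-framing} as already established for all expressions, including these new forms. I would then check that nothing here requires the converse of Lemma~\ref{lem:efoot-framing} for formulas: only expression-level framing under a shrunken permission set is ever needed, and the lemma provides exactly that.
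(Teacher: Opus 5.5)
Your proof is correct and matches the argument the paper relies on. The paper states this lemma without proof, inheriting Zimmerman et al.'s induction on the $\vdash_{\mathrm{frmE}}$ derivation, which handles the expression premises through Lemma~\ref{lem:efoot-framing} and lifts sub-results to the full footprint by monotonicity (Lemma~\ref{lem:efrm-monotonicity}), as you do. The circularity you flag---Lemma~\ref{lem:efoot-framing}'s unfolding case going through assertion lemmas that in turn need this lemma for imprecise predicate bodies---is exactly the mutual induction the paper acknowledges, and your separate $\simprecise{\phi}$ case is simply vacuous, since no \textsc{EFrame} rule concludes an imprecise formula.
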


\begin{lemma}[Exact footprint preserves assertions]\label{lem:efoot-assert}
  If $\assertion{\heap}{\perms}{\env}{\gform}$, then $\assertion{\heap}{\efoot{\heap}{\env}{\gform} \cap \perms}{\env}{\gform}$.
\end{lemma}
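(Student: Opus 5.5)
We prove Lemma~\ref{lem:efoot-assert} by induction on the derivation of $\assertion{\heap}{\perms}{\env}{\gform}$, writing $\perms_e := \efoot{\heap}{\env}{\gform} \cap \perms$. Throughout we use the fact that evaluation $\eval{\heap}{\env}{e}{v}$ does not depend on $\perms$. Therefore every side condition that only evaluates expressions, as in \refrule{AssertValue} and the guards of \refrule{AssertIfA}/\refrule{AssertIfB}, transfers unchanged to any permission set. This settles \refrule{AssertValue} immediately.

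For \refrule{AssertAcc}, the footprint contains $\pair{\ell}{f}$ where $\eval{\heap}{\env}{e}{\ell}$, and $\pair{\ell}{f} \in \perms$. Hence $\pair{\ell}{f} \in \perms_e$, and \refrule{AssertAcc} reapplies.

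For the conditionals, suppose $e$ evaluates to $\ktrue$. Induction gives $\assertion{\heap}{\efoot{\heap}{\env}{\phi_1} \cap \perms}{\env}{\phi_1}$. Since $\efoot{\heap}{\env}{\phi_1} \subseteq \efoot{\heap}{\env}{\sif{e}{\phi_1}{\phi_2}}$, Lemma~\ref{lem:assert-monotonicity} lifts this to $\perms_e$.

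For \refrule{AssertPredicate}, the footprint includes $\efoot{\heap}{[\multiple{x \mapsto v}]}{\fpred(p)}$. The same induction plus monotonicity argument applies, with the hypothesis instantiated at environment $[\multiple{x \mapsto v}]$. This requires the induction to be over derivations, not formula size, since predicate bodies are not structurally smaller.

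For \refrule{AssertConjunction}, we have $\perms_1,\perms_2$ disjoint, contained in $\perms$, with $\assertion{\heap}{\perms_i}{\env}{\phi_i}$. Induction gives $\assertion{\heap}{\efoot{\heap}{\env}{\phi_i}\cap\perms_i}{\env}{\phi_i}$. These two sets remain disjoint, and their union lies in $\efoot{\heap}{\env}{\phi_1 * \phi_2} \cap \perms = \perms_e$. Hence \refrule{AssertConjunction} applies directly.

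The case I expect to be the main obstacle is \refrule{AssertImprecise}. There we must additionally re-establish $\efrm{\heap}{\perms_e}{\env}{\phi}$ from $\efrm{\heap}{\perms}{\env}{\phi}$, noting that $\efoot{\heap}{\env}{\simprecise{\phi}} = \efoot{\heap}{\env}{\phi}$. Lemma~\ref{lem:efoot-efrm} gives exactly $\efrm{\heap}{\efoot{\heap}{\env}{\phi}\cap\perms}{\env}{\phi}$, and the assertion part comes from the induction hypothesis, so this case closes by combining the two.

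The subtlety lies with the new expression forms. Assertions of expressions involving $\sunfolding{\pe}{e_0}$ or $\fe$ only use evaluation, which is permission-independent. So no case of the assertion rules ever invokes framing of these constructs, and they introduce no new obligations here. The new forms reappear only inside Lemma~\ref{lem:efoot-efrm}, through \refrule{FrameUnfolding} and \refrule{FrameFunction}, which themselves mention assertions of $\pe$ and $\ffuncpre(f)$.

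This creates a potential circularity. I would handle it as the paper describes: state Lemmas~\ref{lem:efoot-framing}, \ref{lem:efoot-efrm} and \ref{lem:efoot-assert} as one mutual induction on derivation height. Each appeal to another lemma is then made only on strictly smaller subderivations. For example, the assertion of $\pe$ inside \refrule{FrameUnfolding} is a premise of the framing derivation being processed.
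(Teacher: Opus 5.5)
Your proof is correct and follows the same route the paper inherits from Zimmerman et al.: induction on the assertion derivation, using permission-independence of evaluation, Lemma~\ref{lem:assert-monotonicity} for the conditional and predicate cases, intersecting the disjoint split with the footprints for $*$, and the induction hypothesis plus Lemma~\ref{lem:efoot-efrm} for \textsc{AssertImprecise}.

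One caution about your well-foundedness remark. If Lemma~\ref{lem:efoot-framing} enters the mutual induction as stated, its \textsc{FrameFunction} case must show $\efoot{\heap}{[\multiple{x \mapsto v}]}{\ffuncbody(f)} \subseteq \perms$. That fact comes from program well-formedness, not from any premise. So your claim that every cross-appeal is on a strictly smaller subderivation holds only if you state the expression lemma in restriction form: ``$\frm{\heap}{\perms}{\env}{e}$ implies $\frm{\heap}{\efoot{\heap}{\env}{e} \cap \perms}{\env}{e}$''. In that form, the \textsc{FrameFunction} case merely re-establishes the premises at the smaller permission set.
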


\begin{lemma}[Supersets of exact footprints preserve assertions]\label{lem:assert-efoot-subset}
  If $\assertion{\heap}{\perms}{\env}{\gform}$ and $\efoot{\heap}{\env}{\gform} \subseteq \perms'$, then $\assertion{\heap}{\perms'}{\env}{\gform}$.
\end{lemma}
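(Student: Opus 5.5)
The plan is a two-step reduction. First shrink $\perms$ to the exact footprint using Lemma~\ref{lem:efoot-assert}, then enlarge to $\perms'$ using Lemma~\ref{lem:assert-monotonicity}.

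Concretely, from $\assertion{\heap}{\perms}{\env}{\gform}$, Lemma~\ref{lem:efoot-assert} gives
\[
\assertion{\heap}{\efoot{\heap}{\env}{\gform} \cap \perms}{\env}{\gform}.
\]
Since $\efoot{\heap}{\env}{\gform} \cap \perms \subseteq \efoot{\heap}{\env}{\gform} \subseteq \perms'$ by hypothesis, Lemma~\ref{lem:assert-monotonicity} applied with the permission sets $\efoot{\heap}{\env}{\gform} \cap \perms \subseteq \perms'$ yields $\assertion{\heap}{\perms'}{\env}{\gform}$. No case analysis on $\gform$ is needed at this level.

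The main point to check is well-foundedness rather than any calculation. In the proof of Lemma~\ref{lem:efoot-framing} (the \textsc{FrameFunction} case), the present lemma is invoked. In turn, Lemmas~\ref{lem:efoot-assert} and \ref{lem:assert-monotonicity} depend on expression framing and evaluation through the new unfolding and function forms. The argument should therefore be read as part of the mutual induction on term size described in the Soundness section.

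To make this precise, I would state the lemma at a fixed formula size $n$. Its use inside Lemma~\ref{lem:efoot-framing} is only for the precondition $\ffuncpre(f)$ of a function application occurring in a strictly larger expression. Consequently, the instances of Lemmas~\ref{lem:efoot-assert} and \ref{lem:assert-monotonicity} that are needed concern only formulas of size at most $n$, and these are available by the induction hypothesis. If the size measure is problematic for predicate bodies unrolled equirecursively, I would instead induct on the height of the assertion derivation $\assertion{\heap}{\perms}{\env}{\gform}$. This measure is preserved by both lemmas, since neither changes the derivation shape beyond the permission sets.
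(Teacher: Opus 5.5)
Your argument, applying Lemma~\ref{lem:efoot-assert} to get the assertion under $\efoot{\heap}{\env}{\gform} \cap \perms$ and then Lemma~\ref{lem:assert-monotonicity} because that set is contained in $\perms'$, is essentially the proof the paper relies on: it states this lemma without proof and defers to Zimmerman et al., and it uses exactly this combination inline in the \textsc{FrameUnfolding} case of Lemma~\ref{lem:efoot-framing}. One slip is in your well-foundedness aside: $\ffuncpre(f)$, like $\fpred(p)$, comes from a declaration and is not a subterm of $\fe$, so term size does not decrease there, and your fallback of inducting on the height of the combined assertion/framing derivation is the measure to use.
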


\begin{lemma}[Footprints preserve specification assertions]\label{lem:foot-assert}
  If $\gform$ is a specification and $\assertion{\heap}{\perms}{\env}{\gform}$, then $\assertion{\heap}{\foot{\heap}{\perms}{\env}{\gform}}{\env}{\gform}$.
\end{lemma}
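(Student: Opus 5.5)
The plan is to reduce the statement to the exact-footprint lemmas already available, splitting on the definition of the maximal footprint (Definition~\ref{def:footprint}). Assume $\gform$ is a specification and $\assertion{\heap}{\perms}{\env}{\gform}$.

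\emph{Case 1: $\gform$ is not completely precise.} Then $\foot{\heap}{\perms}{\env}{\gform} = \perms$ by definition, and the goal is exactly the hypothesis $\assertion{\heap}{\perms}{\env}{\gform}$. Nothing further is needed.

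\emph{Case 2: $\gform$ is completely precise.} Then $\foot{\heap}{\perms}{\env}{\gform} = \efoot{\heap}{\env}{\gform}$, so we must show $\assertion{\heap}{\efoot{\heap}{\env}{\gform}}{\env}{\gform}$.
\begin{itemize}
\item Lemma~\ref{lem:efoot-assert} gives $\assertion{\heap}{\efoot{\heap}{\env}{\gform} \cap \perms}{\env}{\gform}$.
\item Since $\gform$ is a specification, Lemma~\ref{lem:efoot-subset-spec} gives $\efoot{\heap}{\env}{\gform} \subseteq \perms$. Hence $\efoot{\heap}{\env}{\gform} \cap \perms = \efoot{\heap}{\env}{\gform}$, which yields the claim.
\item Alternatively, Lemma~\ref{lem:assert-efoot-subset} with $\perms' = \efoot{\heap}{\env}{\gform}$ gives the result directly from the hypothesis. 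This route does not even need the specification assumption, which is then used only to justify consistency with Lemma~\ref{lem:foot-subset-spec}.
\end{itemize}

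\emph{Main obstacle.} There is essentially no new work here: all the difficulty sits in the cited lemmas. The new constructs (unfolding expressions and pure functions) would have to be handled inside Lemmas~\ref{lem:efoot-assert} and \ref{lem:assert-efoot-subset}, via the mutual induction described in the soundness section. The only point to check in this proof is that the case split on complete precision matches the definition of $\foot{\heap}{\perms}{\env}{\gform}$. The proof is therefore a direct two-case argument with no induction of its own.
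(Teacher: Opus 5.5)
Your proof is correct and matches the intended argument. The paper states this lemma without a proof of its own, inheriting Zimmerman et al.'s. That argument is the same split on complete precision: the non-completely-precise case is immediate because $\foot{\heap}{\perms}{\env}{\gform} = \perms$, and the completely precise case follows from Lemma~\ref{lem:efoot-assert} together with $\efoot{\heap}{\env}{\gform} \subseteq \perms$ from Lemma~\ref{lem:efoot-subset-spec}. Your side remark also holds: Lemma~\ref{lem:assert-efoot-subset}, as stated, carries no specification hypothesis, so instantiating it with $\perms' = \efoot{\heap}{\env}{\gform}$ closes the precise case directly.
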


\begin{lemma}[Modeling implies ownership]\label{lem:sim-heap-contains}
  If $\simstate{V}{\sstate}{\heap}{\perms}{\env}$ then
  $$\universal{h \in \sheap(\sstate) \cup \oheap(\sstate)}{\vfoot{V}{\heap}{h} \subseteq \perms}.$$
\end{lemma}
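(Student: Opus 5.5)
The plan is to unfold the definition of $\simstate{V}{\sstate}{\heap}{\perms}{\env}$ (equation~\eqref{eq:sstate-correspondence}). This gives $\simheap{V}{\sheap(\sstate)}{\heap}{\perms}$ and $\simheap{V}{\oheap(\sstate)}{\heap}{\perms}$. We then fix an arbitrary chunk $h \in \sheap(\sstate) \cup \oheap(\sstate)$ and split on whether $h$ is a field chunk or a predicate chunk. The precise and optimistic heaps are handled uniformly, because the first three conjuncts of \eqref{eq:sheap-correspondence} coincide with the three conjuncts of \eqref{eq:oheap-correspondence}. The disjointness conjunct of the precise heap is not needed.

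\emph{Field chunk} $h = \triple{f}{t}{t'}$. By the definition of the valuation footprint, $\vfoot{V}{\heap}{h}$ is the singleton $\set{\pair{V(t)}{f}}$. The second conjunct of the heap correspondence gives $\pair{V(t)}{f} \in \perms$ directly, which settles this case.

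\emph{Predicate chunk} $h = \pair{p}{\multiple{t}}$. Here $\vfoot{V}{\heap}{h} = \efoot{\heap}{[\multiple{x \mapsto V(t)}]}{\fpred(p)}$ with $\multiple{x} = \fpredparams(p)$. The third conjunct of the correspondence (now present for $\oheap$ as well, thanks to our modified definition) gives $\assertion{\heap}{\perms}{[\multiple{x \mapsto V(t)}]}{\fpred(p)}$. To conclude that the exact footprint lies inside $\perms$, we would apply Lemma~\ref{lem:efoot-subset-spec}. This requires $\fpred(p)$ to be a specification, i.e.\ either imprecise or precise and self-framed. That is a well-formedness requirement on predicate bodies, inherited from \citet{zimmerman2024extended}, which we assume carries over.

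Alternatively, if we prefer to avoid the specification hypothesis, we can use Lemma~\ref{lem:efoot-assert}. It yields $\assertion{\heap}{\efoot{\heap}{\rho'}{\fpred(p)} \cap \perms}{\rho'}{\fpred(p)}$, which does not by itself give containment. So the route through Lemma~\ref{lem:efoot-subset-spec} is the one to commit to.

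The main obstacle is the predicate case. Its argument depends on Lemma~\ref{lem:efoot-subset-spec} remaining valid once the exact footprint includes unfolding expressions and pure function calls. In those cases the footprint contains $\efoot{\heap}{\env}{\pe}$ and the precondition and body footprints. Containment in $\perms$ there relies on \textsc{FrameUnfolding} and \textsc{FrameFunction} together with Lemma~\ref{lem:efoot-framing}, which was extended above.

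Given those earlier lemmas, the present result is a short case analysis with no induction. The only new ingredient relative to the original proof is using the added predicate conjunct of the optimistic-heap correspondence, so that predicate chunks in $\oheap$ are covered as well.
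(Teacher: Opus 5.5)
Your proof is correct and follows the paper's own argument. You split on field versus predicate chunks, use the access conjunct of the precise or optimistic heap correspondence for fields, and for predicates use the predicate-assertion conjunct (now also present for $\oheap$) together with Lemma~\ref{lem:efoot-subset-spec}, which needs $\fpred(p)$ to be a specification. You are right to flag that requirement as an assumption: the paper's proof relies on it silently, and Definition~\ref{def:well-formed-prog} does not list predicate bodies among the formulas required to be specifications.
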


\begin{proof}
    Let $h \in \sheap(\sstate) \cup \oheap(\sstate)$. Then, by definition of the symbolic and optimistic heaps, one of the following two cases applies:

    \begin{enumcases}
        \case $h$ is a field chunk -- Then, $h = \triple{f}{t}{t'}$. By the definition of state correspondence, $\simheap{V}{\sheap(\sstate)}{\heap}{\perms}$ and $\simheap{V}{\oheap(\sstate)}{\heap}{\perms}$. Therefore, $\pair{V(t)}{f} \in \perms$, so $\vfoot{V}{\heap}{\triple{f}{t}{t'}} = \{ \pair{V(t)}{f} \} \subseteq \perms$.

        \case $h$ is a predicate chunk -- Then, $h = \pair{p}{\multiple{t}}$. Since $\simheap{V}{\sheap(\sstate)}{\heap}{\perms}$ \sethlcolor{newstuff}\hl{and $\simheap{V}{\oheap(\sstate)}{\heap}{\perms}$ (since the optimistic heap can now contain predicate chunks as well)}, $\assertion{\heap}{\perms}{[\multiple{x \mapsto V(t)}]}{\fpred(p)}$ where $\multiple{x} = \fpredparams(p)$. Then, since $\fpred(p)$ is a specification, $\vfoot{V}{\heap}{\pair{p}{\multiple{t}}} = \efoot{\heap}{[\multiple{x \mapsto V(t)}]}{\fpred(p)} \subseteq \perms$ by Lemma~\ref{lem:efoot-subset-spec}.
    \end{enumcases}
\end{proof}

\begin{lemma}[Correspondence with exclusion implies disjointness]\label{lem:sim-heap-disjoint}
  If $\simstate{V}{\sstate}{\heap}{\perms \setminus \perms'}{\env}$, then
  $$\universal{h \in \sheap(\sstate)}{\vfoot{V}{\heap}{h} \cap \perms' = \emptyset}.$$
\end{lemma}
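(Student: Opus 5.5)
We argue directly from the definition of state correspondence, in the same style as Lemma~\ref{lem:sim-heap-contains}. Fix $h \in \sheap(\sstate)$. Since $\simstate{V}{\sstate}{\heap}{\perms \setminus \perms'}{\env}$, we have in particular $\simheap{V}{\sheap(\sstate)}{\heap}{\perms \setminus \perms'}$, i.e.\ the precise-heap correspondence holds with respect to the reduced permission set $\perms \setminus \perms'$. Applying Lemma~\ref{lem:sim-heap-contains} to the state correspondence with $\perms \setminus \perms'$ in place of $\perms$ yields
$$\vfoot{V}{\heap}{h} \subseteq \perms \setminus \perms'.$$
Any set contained in $\perms \setminus \perms'$ is disjoint from $\perms'$, so $\vfoot{V}{\heap}{h} \cap \perms' = \emptyset$, which is the claim.

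If we prefer not to route through Lemma~\ref{lem:sim-heap-contains}, we split on the shape of $h$ exactly as there.

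\emph{Field chunk.} If $h = \triple{f}{t}{t'}$, then the second conjunct of (\ref{eq:sheap-correspondence}) gives $\pair{V(t)}{f} \in \perms \setminus \perms'$. Hence $\vfoot{V}{\heap}{h} = \set{\pair{V(t)}{f}}$ avoids $\perms'$.

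\emph{Predicate chunk.} If $h = \pair{p}{\multiple{t}}$, then the third conjunct gives $\assertion{\heap}{\perms \setminus \perms'}{[\multiple{x \mapsto V(t)}]}{\fpred(p)}$. Since predicate bodies are specifications in a well-formed program, Lemma~\ref{lem:efoot-subset-spec} gives $\efoot{\heap}{[\multiple{x \mapsto V(t)}]}{\fpred(p)} \subseteq \perms \setminus \perms'$. This is precisely $\vfoot{V}{\heap}{h}$, so it is disjoint from $\perms'$.

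The only point needing care is the predicate case: we must justify that $\fpred(p)$ is a specification, so that Lemma~\ref{lem:efoot-subset-spec} applies. This is the same assumption already used in the proof of Lemma~\ref{lem:sim-heap-contains}. The changes for unfolding expressions and pure functions do not affect this argument. The only new feature, predicate chunks in $\oheap$, is irrelevant here because the statement quantifies only over $\sheap(\sstate)$.
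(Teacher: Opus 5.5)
Your proof is correct and essentially the intended one. The paper states this lemma without proof, inheriting it unchanged from Zimmerman et al., where it follows immediately from Lemma~\ref{lem:sim-heap-contains} instantiated at $\perms \setminus \perms'$, exactly as in your first paragraph; your field/predicate case split just inlines that lemma's proof restricted to $\sheap(\sstate)$, relying on the same standing assumption that predicate bodies are specifications.
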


\begin{lemma}[Disjointness implies correspondence with exclusion]\label{lem:disjoint-sim-heap-subset}
  If $\simheap{V}{\sheap}{\heap}{\perms}$ and $\universal{h \in \sheap}{\vfoot{V}{\heap}{h} \cap \perms' = \emptyset}$, then $\simheap{V}{\sheap}{\heap}{\perms \setminus \perms'}$.
\end{lemma}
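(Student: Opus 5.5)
We must prove: if $\simheap{V}{\sheap}{\heap}{\perms}$ and every $h \in \sheap$ satisfies $\vfoot{V}{\heap}{h} \cap \perms' = \emptyset$, then $\simheap{V}{\sheap}{\heap}{\perms \setminus \perms'}$. The plan is to unfold Definition \eqref{eq:sheap-correspondence} and re-establish each of its four conjuncts for the smaller permission set $\perms \setminus \perms'$. Throughout, we use the hypothesis $\simheap{V}{\sheap}{\heap}{\perms}$ for the facts about $\perms$, and the disjointness hypothesis to show that no chunk's footprint meets $\perms'$.

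\textbf{Conjuncts that do not depend on $\perms$.} The first conjunct (field values agree with $\heap$) and the fourth (pairwise disjointness of $\vfoot{V}{\heap}{h_1}$ and $\vfoot{V}{\heap}{h_2}$) mention only $\heap$ and $V$. They therefore transfer verbatim from $\simheap{V}{\sheap}{\heap}{\perms}$.

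\textbf{Field ownership.} Let $\triple{f}{t}{t'} \in \sheap$. By hypothesis $\pair{V(t)}{f} \in \perms$. Since $\vfoot{V}{\heap}{\triple{f}{t}{t'}} = \set{\pair{V(t)}{f}}$ (the field-chunk case, as in Lemma~\ref{lem:sim-heap-contains}), disjointness from $\perms'$ gives $\pair{V(t)}{f} \notin \perms'$. Hence $\pair{V(t)}{f} \in \perms \setminus \perms'$.

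\textbf{Predicate chunks (main step).} Let $\pair{p}{\multiple{t}} \in \sheap$, with $\multiple{x} = \fpredparams(p)$ and $\env_p = [\multiple{x \mapsto V(t)}]$. The hypothesis gives $\assertion{\heap}{\perms}{\env_p}{\fpred(p)}$.
\begin{itemize}
\item Since $\fpred(p)$ is a specification (well-formedness), Lemma~\ref{lem:efoot-subset-spec} gives $\efoot{\heap}{\env_p}{\fpred(p)} \subseteq \perms$.
\item By the footprint definition, this set equals $\vfoot{V}{\heap}{\pair{p}{\multiple{t}}}$, which is disjoint from $\perms'$ by hypothesis.
\item Therefore $\efoot{\heap}{\env_p}{\fpred(p)} \subseteq \perms \setminus \perms'$.
\item Lemma~\ref{lem:assert-efoot-subset} then yields $\assertion{\heap}{\perms \setminus \perms'}{\env_p}{\fpred(p)}$. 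Alternatively, Lemma~\ref{lem:efoot-assert} gives assertion at $\efoot{\heap}{\env_p}{\fpred(p)} \cap \perms$, and Lemma~\ref{lem:assert-monotonicity} lifts it.
\end{itemize}

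This is the only step expected to need care. It relies on the exact footprint of an asserted predicate body being a sufficient permission set, and those lemmas must remain valid now that expressions may contain unfolding expressions and function calls. I take them as given from earlier in the paper. The remaining subtlety is matching the definitions: the symbolic footprint of a predicate chunk must coincide with the exact footprint of its body under $\env_p$, which holds by the definition of $\vfoot{V}{\heap}{\cdot}$.

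With all four conjuncts re-established for $\perms \setminus \perms'$, we conclude $\simheap{V}{\sheap}{\heap}{\perms \setminus \perms'}$.
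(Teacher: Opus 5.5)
Your proof is correct and is essentially the argument the paper relies on; this lemma is stated without proof and inherited unchanged from Zimmerman et al. The value and pairwise-disjointness conjuncts do not mention the permission set, field ownership survives because $\vfoot{V}{\heap}{\triple{f}{t}{t'}} = \set{\pair{V(t)}{f}}$ misses $\perms'$, and predicate-body assertions transfer to $\perms \setminus \perms'$ via the exact-footprint lemmas.

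Of your two routes for predicate chunks, the one through Lemma~\ref{lem:efoot-assert} and Lemma~\ref{lem:assert-monotonicity} is slightly more robust. It does not need $\fpred(p)$ to be a specification, which Definition~\ref{def:well-formed-prog} does not explicitly require of predicate bodies; the paper only assumes it implicitly elsewhere.
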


\begin{lemma}\label{lem:sim-sheap-monotonicity}
  If $\simheap{V}{\sheap}{\heap}{\perms}$ and $\perms \subseteq \perms'$, then $\simheap{V}{\sheap}{\heap}{\perms}$.
\end{lemma}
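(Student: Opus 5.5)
As literally written, the conclusion $\simheap{V}{\sheap}{\heap}{\perms}$ is one of the hypotheses, so the statement holds trivially. The intended statement is clearly the monotonicity claim $\simheap{V}{\sheap}{\heap}{\perms'}$, and I would prove that stronger version. The plan is to unfold Definition~\eqref{eq:sheap-correspondence} and check each of its four conjuncts separately under the larger permission set $\perms'$. No induction on $\sheap$ is needed, since each conjunct is a universally quantified statement about individual chunks or pairs of chunks.

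\emph{Field values.} The first conjunct, $\heap(V(t), f) = V(t')$ for every field chunk $\triple{f}{t}{t'} \in \sheap$, mentions only $\heap$ and $V$. It therefore transfers unchanged.

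\emph{Field ownership.} The second conjunct gives $\pair{V(t)}{f} \in \perms$. Since $\perms \subseteq \perms'$, we get $\pair{V(t)}{f} \in \perms'$.

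\emph{Predicate chunks.} For each $\pair{p}{\multiple{t}} \in \sheap$, the third conjunct gives $\assertion{\heap}{\perms}{[\multiple{x \mapsto V(t)}]}{\fpred(p)}$. Lemma~\ref{lem:assert-monotonicity} (monotonicity of assertions with respect to permissions) then yields $\assertion{\heap}{\perms'}{[\multiple{x \mapsto V(t)}]}{\fpred(p)}$.

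\emph{Disjointness.} The fourth conjunct concerns $\vfoot{V}{\heap}{h_1} \cap \vfoot{V}{\heap}{h_2}$ for distinct chunks $h_1, h_2$. By the definition of $\vfoot{V}{\heap}{\cdot}$, both footprints are computed from $V$, $\heap$ and the exact footprint $\efoot{\heap}{\cdot}{\fpred(p)}$. Neither depends on the permission set, so this conjunct also transfers unchanged.

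The only nontrivial ingredient is the predicate-chunk case, which relies on Lemma~\ref{lem:assert-monotonicity}. In this paper that lemma is stated without proof and is inherited from the prior proof, even though formulas can now contain unfolding expressions and pure-function calls. Its validity therefore rests on the mutual induction between expression framing and formula assertion sketched in the soundness section, and I take it as given.

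The same argument extends to the optimistic-heap relation~\eqref{eq:oheap-correspondence}. That relation has only the first three conjuncts, so the same three cases cover it.
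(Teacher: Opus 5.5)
Your proof is correct. The paper states this lemma without proof (it defers to Zimmerman et al.), and your conjunct-by-conjunct check of~\eqref{eq:sheap-correspondence} is the intended argument: the field-value and disjointness conjuncts do not mention the permission set, field ownership follows from $\perms \subseteq \perms'$, and predicate chunks follow from Lemma~\ref{lem:assert-monotonicity}. You are also right that the conclusion is a typo for $\perms'$; the same typo appears in Lemma~\ref{lem:sim-oheap-monotonicity}, which your final remark already covers.
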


\begin{lemma}\label{lem:sim-oheap-monotonicity}
  If $\simheap{V}{\oheap}{\heap}{\perms}$ and $\perms \subseteq \perms'$, then $\simheap{V}{\oheap}{\heap}{\perms}$.
\end{lemma}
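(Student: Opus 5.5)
As printed, the conclusion repeats the hypothesis $\simheap{V}{\oheap}{\heap}{\perms}$, so the statement holds immediately. I will instead prove the intended stronger form, $\simheap{V}{\oheap}{\heap}{\perms'}$, which clearly implies the printed one. The plan is to unfold Definition~(\ref{eq:oheap-correspondence}) and check its three conjuncts one at a time for $\perms'$. No induction is needed, because the optimistic heap correspondence has no disjointness conjunct; that conjunct is the only part of the precise heap correspondence that is not monotone.

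First, the field value conjunct: for every $\triple{f}{t}{t'} \in \oheap$ we need $\heap(V(t), f) = V(t')$. This does not mention the permission set at all, so it transfers unchanged from the hypothesis. Second, the ownership conjunct: for every $\triple{f}{t}{t'} \in \oheap$ the hypothesis gives $\pair{V(t)}{f} \in \perms$. Since $\perms \subseteq \perms'$, we get $\pair{V(t)}{f} \in \perms'$ directly.

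Third, the predicate conjunct, which is the only step that uses an earlier result. It was added in this paper, since the optimistic heap may now contain predicate chunks. For every $\pair{p}{\multiple{t}} \in \oheap$ the hypothesis gives $\assertion{\heap}{\perms}{[\multiple{x \mapsto V(t)}]}{\fpred(p)}$ with $\multiple{x} = \fpredparams(p)$. Applying Lemma~\ref{lem:assert-monotonicity} with $\perms \subseteq \perms'$ yields $\assertion{\heap}{\perms'}{[\multiple{x \mapsto V(t)}]}{\fpred(p)}$.

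I expect no real obstacle. The only subtlety is the third conjunct: it depends on assertion monotonicity, and in this paper that lemma is part of the mutual induction over expressions and formulas because of unfolding expressions and pure functions. Lemma~\ref{lem:assert-monotonicity} is stated earlier, so I take it as given, and combining the three conjuncts completes the proof.
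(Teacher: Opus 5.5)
Your proof is correct. You are right that the conclusion should read $\perms'$, and your argument is the intended one. The paper states this lemma without proof and defers to Zimmerman et al., whose optimistic heap held only field chunks. So your conjunct-by-conjunct unfolding, with Lemma~\ref{lem:assert-monotonicity} handling the newly added predicate-chunk conjunct, is exactly the extension that is needed.

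One aside is inaccurate. The disjointness conjunct of the precise-heap correspondence does not mention the permission set at all, since $\vfoot{V}{\heap}{h}$ depends only on $V$, $\heap$, and $h$. It is therefore trivially preserved as well, and its absence is not what makes induction unnecessary.
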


\begin{lemma}\label{lem:simstate-monotonicity}
  If $\simstate{V}{\sstate}{\heap}{\perms}{\env}$ and $\perms \subseteq \perms'$, then $\simstate{V}{\sstate}{\heap}{\perms'}{\env}$.
\end{lemma}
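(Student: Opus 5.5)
The statement is the state-level monotonicity lemma: if $\simstate{V}{\sstate}{\heap}{\perms}{\env}$ and $\perms \subseteq \perms'$, then $\simstate{V}{\sstate}{\heap}{\perms'}{\env}$. I would prove it by unfolding definition~\eqref{eq:sstate-correspondence} into its four conjuncts and establishing each one for $\perms'$.

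Two conjuncts do not mention the permission set at all. The environment correspondence $\simenv{V}{\senv(\sstate)}{\env}$ and the path-condition requirement $V(\pc(\sstate)) = \ktrue$ carry over unchanged.

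For the precise heap, I would appeal to Lemma~\ref{lem:sim-sheap-monotonicity}. Its conclusion as printed repeats $\perms$, so I would read it as the intended statement with $\perms'$. To be safe, I would verify the four clauses of \eqref{eq:sheap-correspondence} directly:
\begin{itemize}
\item The field-value clause $\heap(V(t), f) = V(t')$ is independent of $\perms$.
\item The ownership clause $\pair{V(t)}{f} \in \perms$ lifts to $\perms'$ because $\perms \subseteq \perms'$.
\item The predicate clause $\assertion{\heap}{\perms}{[\multiple{x \mapsto V(t)}]}{\fpred(p)}$ lifts by Lemma~\ref{lem:assert-monotonicity}.
\item The disjointness clause concerns only $\vfoot{V}{\heap}{h_1} \cap \vfoot{V}{\heap}{h_2}$, which depends on $V$ and $\heap$ but not on $\perms$, so it transfers verbatim.
\end{itemize}

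The optimistic heap is handled the same way, via Lemma~\ref{lem:sim-oheap-monotonicity} or directly from the three clauses of \eqref{eq:oheap-correspondence}. The only change from the prior formalization is the new clause asserting $\fpred(p)$ for predicate chunks stored in $\oheap$. This is again discharged by Lemma~\ref{lem:assert-monotonicity}.

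The main obstacle is not this lemma itself but the fact that it rests on Lemma~\ref{lem:assert-monotonicity}. With unfolding expressions and pure functions, assertion of an expression now depends on evaluation and framing of formulas, so assertion monotonicity must be proved by the mutual induction over expressions and formulas described in the Soundness section. Given that lemma, the present proof is a routine conjunct-by-conjunct check.
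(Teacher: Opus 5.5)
Your proposal is correct and follows the same route as the paper, which inherits the proof from Zimmerman et al.: the environment and path-condition conjuncts do not depend on the permission set, and the two heap conjuncts follow from Lemmas~\ref{lem:sim-sheap-monotonicity} and~\ref{lem:sim-oheap-monotonicity}, whose printed conclusions you are right to read with $\perms'$. Discharging the new predicate-chunk clause of the optimistic-heap correspondence with Lemma~\ref{lem:assert-monotonicity} is exactly the update needed once $\oheap$ may hold predicate chunks, a step the paper leaves implicit.
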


\begin{lemma}[Statement rearrangement]\label{lem:stmt-rearrangement}
  If $s = s_1; s_2$, then $s = s_1'; s_2'$ such that $s_1'$ is not a sequence statement and $s_2' = s_2$ or $s_2' = s_1''; s_2$ where $s_1 = s_1'; s_1''$.
\end{lemma}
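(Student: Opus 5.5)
The plan is a structural induction on the statement $s$, following how the grammar builds sequences. Read $s = s_1; s_2$ as a right-nested sequence. The goal is to peel off the leftmost non-sequence statement of $s_1$, re-associating as we go.

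\emph{Base case.} If $s_1$ is not itself a sequence statement, take $s_1' = s_1$ and $s_2' = s_2$. The first disjunct of the conclusion then holds trivially.

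\emph{Inductive case.} Suppose $s_1 = s_a; s_b$. Apply the induction hypothesis to $s_1$, which is a strictly smaller term, viewed as the sequence $s_a; s_b$. This gives $s_1 = s_1'; s_1''$ with $s_1'$ not a sequence, where either $s_1'' = s_b$ or $s_1'' = s_a''; s_b$. Now set $s_2' = s_1''; s_2$. Since sequencing is treated up to associativity, we have
$$s = (s_1'; s_1''); s_2 = s_1'; (s_1''; s_2) = s_1'; s_2',$$
which is exactly the second disjunct, with $s_1 = s_1'; s_1''$ as required.

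The only real obstacle is justifying that rewriting $(s_1'; s_1''); s_2$ to $s_1'; (s_1''; s_2)$ counts as the same statement. I would handle this by stating explicitly that statement equality is taken modulo associativity of $\ttt{;}$, which the execution rules such as \textsc{ExecSeq} already implicitly assume. Alternatively, the induction can be phrased on the number of leading nested sequence constructors in $s_1$, so that each step removes one left-nesting and the process terminates.

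No lemmas about framing, evaluation, or the new unfolding and function constructs are needed. The argument is purely syntactic and unaffected by this paper's extensions, so the proof from \citet{zimmerman2024extended} carries over unchanged.
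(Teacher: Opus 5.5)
Your proof is correct and is the natural argument: induct on the left component, pull out its leftmost non-sequence statement, and re-associate, with statement equality read modulo associativity of $\ttt{;}$. That reading is genuinely required, since under literal syntactic equality the lemma already fails for $(a;b);c$ with $a,b,c$ non-sequence statements. The paper gives no proof of its own here, saying it carries over unchanged from \citet{zimmerman2024extended}, which matches your observation that the unfolding and pure-function extensions play no role in this purely syntactic fact.
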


\begin{lemma}[Run-time check monotonicity WRT permissions]\label{lem:scheck-perms-monotonicity-elem}
  If $\rtassert{V}{\heap}{\perms}{r}$ and $\perms \subseteq \perms'$, then $\rtassert{V}{\heap}{\perms'}{r}$.
\end{lemma}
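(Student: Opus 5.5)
The plan is a case analysis on the form of the run-time check $r$, following the four rules defining $\rtassert{V}{\heap}{\perms}{r}$. There is no rule for $\bot$, so the hypothesis $\rtassert{V}{\heap}{\perms}{r}$ rules out $r = \bot$. In each remaining case I invert the rule that derived the hypothesis and re-apply the same rule with $\perms'$ in place of $\perms$.

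\begin{enumcases}
\case \refrule{CheckValue}. Here $r = t$ and $V(t) = \ktrue$. This premise does not mention $\perms$, so \refrule{CheckValue} gives $\rtassert{V}{\heap}{\perms'}{t}$ directly.

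\case \refrule{CheckAcc}. Here $r = \pair{f}{t}$ and $\pair{V(t)}{f} \in \perms$. Since $\perms \subseteq \perms'$, also $\pair{V(t)}{f} \in \perms'$, and \refrule{CheckAcc} applies.

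\case \refrule{CheckSep}. Here $r = \fsep(\sperms_1, \sperms_2)$ and $\vfoot{V}{\heap}{\sperms_1} \cap \vfoot{V}{\heap}{\sperms_2} = \emptyset$. The symbolic footprint $\vfoot{V}{\heap}{\cdot}$ depends only on $V$ and $\heap$, not on $\perms$. So the premise is unchanged and \refrule{CheckSep} applies.

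\case \refrule{CheckPred}. This is the only case with content and the only one affected by the paper's changes. We have $r = \pair{p}{\multiple{t}}$, $\multiple{x} = \fpredparams(p)$, and $\assertion{\heap}{\perms}{[\multiple{x \mapsto V(t)}]}{\fpred(p)}$. Applying Lemma~\ref{lem:assert-monotonicity} to $\fpred(p)$ with $\perms \subseteq \perms'$ gives $\assertion{\heap}{\perms'}{[\multiple{x \mapsto V(t)}]}{\fpred(p)}$, and \refrule{CheckPred} concludes.
\end{enumcases}

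The main obstacle is Case 4, because it rests on Lemma~\ref{lem:assert-monotonicity}. In the extended system, assertions depend on expression framing: \refrule{AssertImprecise} uses equi-recursive framing, and expressions now include unfolding expressions and function calls, whose framing rules in turn mention assertions. So assertion monotonicity is tied to Lemma~\ref{lem:framing-monotonicity} and Lemma~\ref{lem:efrm-monotonicity} through the mutual induction on term size described in the soundness section. Here I take Lemma~\ref{lem:assert-monotonicity} as already established, since it precedes the statement, so the argument above is complete.

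Finally, the set version, $\rtassert{V}{\heap}{\perms'}{\scheck}$ whenever $\rtassert{V}{\heap}{\perms}{\scheck}$ and $\perms \subseteq \perms'$, follows pointwise from this lemma by the definition of $\rtassert{V}{\heap}{\perms}{\scheck}$ for a set of checks.
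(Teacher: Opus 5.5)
Your proof is correct and matches the paper's argument; the paper states this lemma without proof and defers to Zimmerman et al. That argument is a case split on the rule deriving the check: \textsc{CheckValue} and \textsc{CheckSep} do not mention the permission set, \textsc{CheckAcc} follows from $\perms \subseteq \perms'$, and \textsc{CheckPred} follows from Lemma~\ref{lem:assert-monotonicity} applied to $\fpred(p)$.
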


\begin{lemma}[Run-time checks monotonicity WRT permissions]\label{lem:scheck-perms-monotonicity}
  If $\rtassert{V}{\heap}{\perms}{\scheck}$ and $\perms \subseteq \perms'$, then $\rtassert{V}{\heap}{\perms'}{\scheck}$.
\end{lemma}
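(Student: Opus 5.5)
The statement to prove is the set-level monotonicity of run-time checks: if $\rtassert{V}{\heap}{\perms}{\scheck}$ and $\perms \subseteq \perms'$, then $\rtassert{V}{\heap}{\perms'}{\scheck}$. The plan is to reduce it pointwise to Lemma~\ref{lem:scheck-perms-monotonicity-elem}.

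The first step is to unfold the definition of the judgement for a check set. By definition, $\rtassert{V}{\heap}{\perms}{\scheck}$ means $\universal{r \in \scheck}{\rtassert{V}{\heap}{\perms}{r}}$.

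Next, I fix an arbitrary $r \in \scheck$. The hypothesis then gives $\rtassert{V}{\heap}{\perms}{r}$. Together with $\perms \subseteq \perms'$, Lemma~\ref{lem:scheck-perms-monotonicity-elem} yields $\rtassert{V}{\heap}{\perms'}{r}$.

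Since $r$ was arbitrary, $\universal{r \in \scheck}{\rtassert{V}{\heap}{\perms'}{r}}$ holds. Folding the definition back gives $\rtassert{V}{\heap}{\perms'}{\scheck}$.

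None of these steps is a real obstacle, because all the work happens in the single-check lemma. For completeness, I note what that lemma needs in the one case touched by this paper's changes, \refrule{CheckPred}. There the argument goes through Lemma~\ref{lem:assert-monotonicity}, applied to $\assertion{\heap}{\perms}{[\multiple{x \mapsto V(t)}]}{\fpred(p)}$. That assertion lemma is where the new mutual dependency between formulas, unfolding expressions and pure functions enters. Here we simply assume it, as it is stated earlier in the paper.
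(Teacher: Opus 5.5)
Your proof is correct and matches the paper's approach. The paper states this lemma without proof and inherits Zimmerman et al.'s argument: unfold $\rtassert{V}{\heap}{\perms}{\scheck}$ as $\universal{r \in \scheck}{\rtassert{V}{\heap}{\perms}{r}}$ and apply Lemma~\ref{lem:scheck-perms-monotonicity-elem} pointwise. One small correction to your side remark: \refrule{CheckPred} is not itself changed by this paper. It is simply the only check case whose monotonicity argument, via Lemma~\ref{lem:assert-monotonicity}, now passes through the new expression forms.
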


\begin{lemma}[Run-time check set implies subsets]\label{lem:scheck-monotonicity}
  If $\rtassert{V}{\heap}{\perms}{\scheck}$ and $\scheck' \subseteq \scheck$, then $\rtassert{V}{\heap}{\perms}{\scheck'}$.
\end{lemma}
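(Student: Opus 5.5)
This follows directly from the definition of satisfaction for a set of run-time checks; no induction on checks or derivations is needed. Recall that
$$\rtassert{V}{\heap}{\perms}{\scheck} \iffdef \universal{r \in \scheck}{\rtassert{V}{\heap}{\perms}{r}}.$$
So the plan is simply to unfold this definition on both sides and use the inclusion $\scheck' \subseteq \scheck$.

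Assume $\rtassert{V}{\heap}{\perms}{\scheck}$ and $\scheck' \subseteq \scheck$. To show $\rtassert{V}{\heap}{\perms}{\scheck'}$, it suffices by the same definition to show $\rtassert{V}{\heap}{\perms}{r}$ for an arbitrary $r \in \scheck'$. Since $\scheck' \subseteq \scheck$, we have $r \in \scheck$. Instantiating the hypothesis at $r$ then gives $\rtassert{V}{\heap}{\perms}{r}$.

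Every kind of check, whether a value $t$, a permission $\sperm$, a $\fsep(\sperms_1,\sperms_2)$, or $\bot$, is handled uniformly, because the argument never inspects which rule (\refrule{CheckValue}, \refrule{CheckAcc}, \refrule{CheckPred}, \refrule{CheckSep}) justified the check. In particular, the new predicate checks $\pair{p}{\multiple{t}}$ introduced for unfolding expressions and pure functions need no special treatment. The $\bot$ check is covered vacuously: if $\bot \in \scheck'$ then $\bot \in \scheck$, so the hypothesis could not hold in the first place.

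There is no real obstacle here. The only point needing care is to apply the set-level definition rather than attempting an induction over checks, which is unnecessary. The degenerate case $\scheck' = \emptyset$ is immediate, since the universal statement then holds trivially.
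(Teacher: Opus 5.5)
Your proof is correct: unfolding the pointwise definition of satisfaction for a check set and instantiating the hypothesis at each $r \in \scheck' \subseteq \scheck$ is exactly the intended argument. The paper states this lemma without proof and defers to Zimmerman et al.; this direct definitional argument is the one it relies on.
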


\subsection{Evaluation}
\label{sec:eval-soundness}

\begin{definition}\label{def:eval-valuation}
  For a judgement $\seval{\sstate}{e}{t}{\sstate'}{\scheck}$, given an initial valuation $V$, heap $\heap$, and environment $\env$, the \textbf{corresponding valuation} is denoted
  $$V[\seval{\sstate}{e}{t}{\sstate'}{\scheck} \mid \heap, \env].$$
  This valuation is defined as follows, depending on the rule that proves the derivation. Values are referenced using the respective name from the rule definition.

  Note that the corresponding valuation always extends the initial valuation, and is defined for all fresh symbolic values in the judgement.

  \begin{itemize}
    \item \refrule{SEvalLiteral}:
      $$V[\seval{\sstate}{l}{l}{\sstate}{\_} \mid \heap, \env] := V$$
    \item \refrule{SEvalVar}:
      $$V[\seval{\sstate}{x}{\_}{\sstate}{\_} \mid \heap, \env] := V$$
    \item \refrule{SEvalNeg}:
      $$V[\seval{\sstate}{\kneg e}{\kneg t}{\sstate'}{\scheck} \mid \heap, \env] := V[\seval{\sstate}{e}{t}{\sstate'}{\scheck} \mid \heap, \env]$$
    \item \refrule{SEvalOrA}:
      $$V[\seval{\sstate}{e_1 \kor e_2}{t_1}{\sstate''}{\scheck} \mid \heap, \env] := V[\seval{\sstate}{e_1}{t_1}{\sstate'}{\scheck} \mid \heap, \env]$$
    \item \refrule{SEvalOrB}:
      \begin{align*}
        &V[\seval{\sstate}{e_1 \kor e_2}{t_2}{\sstate''}{\scheck_1 \cup \scheck_2} \mid \heap, \env] := \\
        &\hspace{2em} V[\seval{\sstate}{e_1}{t_1}{\sstate'}{\scheck_1} \mid \heap, \env] \\
        &\hspace{3em} [\seval{\sstate'[\pc = \pc(\sstate') \kand \kneg t_1]}{e_2}{t_2}{\sstate''}{\scheck_2} \mid \heap, \env]
      \end{align*}
    \item \refrule{SEvalAndA}:
      $$V[\seval{\sstate}{e_1 \kand e_2}{t_1}{\sstate''}{\scheck} \mid \heap, \env] := V[\seval{\sstate}{e_1}{t_1}{\sstate'}{\scheck} \mid \heap, \env]$$
    \item \refrule{SEvalAndB}:
      \begin{align*}
        &V[\seval{\sstate}{e_1 \kand e_2}{t_2}{\sstate''}{\scheck_1 \cup \scheck_2} \mid \heap, \env] := \\
        &\hspace{2em} V[\seval{\sstate}{e_1}{t_1}{\sstate'}{\scheck_1} \mid \heap, \env] \\
        &\hspace{3em} [\seval{\sstate'[\pc = \pc(\sstate') \kand t_1]}{e_2}{t_2}{\sstate''}{\scheck_2} \mid \heap, \env]
      \end{align*}
    \item \refrule{SEvalOp}:
      \begin{align*}
        &V[\seval{\sstate}{e_1 \oplus e_2}{t_1 \oplus t_2}{\sstate''}{\scheck_1 \cup \scheck_2} \mid \heap, \env] := \\
        &\hspace{2em} V[\seval{\sstate}{e_1}{t_1}{\sstate'}{\scheck_1} \mid \heap, \env] \\
        &\hspace{3em} [\seval{\sstate'}{e_2}{t_2}{\sstate''}{\scheck_2} \mid \heap, \env]
      \end{align*}
    \item \refrule{SEvalField} or \refrule{SEvalFieldOptimistic}:
      $$V[\seval{\sstate}{e.f}{t}{\sstate''}{\_} \mid \heap, \env] := V[\seval{\sstate}{e}{t_e}{\sstate'}{\scheck} \mid \heap, \env]$$
    \item \refrule{SEvalFieldImprecise}:
      $$V[\seval{\sstate}{e.f}{t}{\sstate''}{\_} \mid \heap, \env] := V[\seval{\sstate}{e}{t_e}{\sstate'}{\scheck} \mid \heap, \env][t \mapsto \heap(V(t_e), f)]$$
    \item \refrule{SEvalFieldFailure}:
      $$V[\seval{\sstate}{e.f}{t}{\sstate'}{\_} \mid \heap, \env] := V[\seval{\sstate}{e}{t_e}{\sstate'}{\scheck} \mid \heap, \env][t \mapsto \heap(V(t_e), f)]$$
    \definecolor{shadecolor}{gray}{0.9}
    \begin{snugshade} \item \refrule{SEvalUnfoldingPrecise}, \refrule{SEvalUnfoldingImpreciseA}, or \refrule{SEvalUnfoldingImpreciseB}:
        \begin{align*}
           &V[\seval{\sstate_1}{\sunfolding{\pe}{e_0}}{t}{\sstate_5[\ldots]}{\scheck} \mid \heap, \env] := \\ 
           &\hspace{2em} V[\multiple{\seval{\sstate_1}{e}{t}{\sstate_2}{\scheck_1}} \mid \heap, \env] \\
           &\hspace{3em} [\scons{\sstate_2}{\pe}{\sstate_3}{\scheck_2} \mid \heap, \env] \\
           &\hspace{3em} [\sproduce{\sstate_3[\ldots]}{\fpred(p)}{\sstate_4} \mid \heap, \env] \\
           &\hspace{3em} [\seval{\sstate_4}{e_0}{t_0}{\sstate_5}{\scheck_3} \mid \heap, \env]
        \end{align*}
    \item \refrule{SEvalUnfoldingImplicitPrecise} or \refrule{SEvalUnfoldingImplicitImprecise}:
        $$V[\seval{\sstate_1}{\sunfolding{\pe}{e_0}}{t}{\sstate_2}{\scheck} \mid \heap, \env] := 
            V[\multiple{\seval{\sstate_1}{e}{t}{\sstate_2}{\scheck}} \mid \heap, \env][t \mapsto v]$$
        where $\eval{\heap}{\env}{e_0}{v}$.
    \item \refrule{SEvalFunctionExplicit}:
        \begin{align*}
          &V[\seval{\sstate_1}{\fe}{t}{\sstate_6}{\scheck_1 \cup \scheck_2 \cup \scheck_3} \mid \heap, \env] := \\
          &\hspace{2em} V[\multiple{\seval{\sstate_1}{e}{t}{\sstate_2}{\scheck_1}} \mid \heap] \\
          &\hspace{3em} [\scons{\sstate_2[\senv = \senv(\sstate_2)[\multiple{x \mapsto t}]]}{\fpre(f)}{\sstate_3}{\scheck_2} \mid \heap, \env] \\
          &\hspace{3em} [\sproduce{\sstate_3}{\fpre(f)}{\sstate_4} \mid \heap, \env] \\
          &\hspace{3em} [\seval{\sstate_4[\sheap = \sheap', \oheap = \oheap(\sstate_2), \svis = \svis(\sstate_3) \cup \{ f \}]}{\fbody(f)}{t'}{\sstate_5}{\scheck_3} ~|~ \heap, \env] \\
          &\hspace{3em} [\fts \mapsto V(t')]
        \end{align*}
    \item \refrule{SEvalFunctionImplicit}:
         \begin{align*}
           &V[\seval{\sstate_1}{\fe}{t}{\sstate_4}{\scheck_1 \cup \scheck_2} \mid \heap, \env] := \\
           &\hspace{2em} V[\multiple{\seval{\sstate_1}{e}{t}{\sstate_2}{\scheck_1}} \mid \heap, \env] \\
           &\hspace{3em} [\scons{\sstate_2[\senv = \senv(\sstate_2)[\multiple{x \mapsto t}]]}{\fpre(f)}{\sstate_3}{\scheck_2} \mid \heap, \env]
           &\hspace{3em} [\fts \mapsto v]
         \end{align*}
         where $\eval{\heap}{[x \mapsto V(t)]}{\ffuncbody(f)}{v}$.
    \end{snugshade}
  \end{itemize}
\end{definition}

\begin{lemma}\label{lem:eval-subpath}
  If $\seval{\sstate}{e}{\_}{\sstate'}{\_}$ then $\pc(\sstate') \implies \pc(\sstate)$.
\end{lemma}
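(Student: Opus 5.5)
The plan is to prove the statement by induction on the derivation of $\seval{\sstate}{e}{t}{\sstate'}{\scheck}$. Because \refrule{SEvalUnfoldingPrecise}, \refrule{SEvalUnfoldingImpreciseA}, \refrule{SEvalUnfoldingImpreciseB} and \refrule{SEvalFunctionExplicit} contain produce and consume premises, and those judgements in turn contain evaluation premises, the induction will be a mutual one (on derivation height), as sketched in the soundness discussion. I will simultaneously prove two auxiliary claims:
\begin{itemize}
  \item[(a)] if $\sproduce{\sstate}{\gform}{\sstate'}$ then $\pc(\sstate') \implies \pc(\sstate)$;
  \item[(b)] if $\sconsume{\sstate}{\sstate_E}{\gform}{\sstate'}{\scheck}{\sperms}$ then $\pc(\sstate') \implies \pc(\sstate)$.
\end{itemize}
Two elementary facts about the implication of Definition~\ref{def:implication} will be used repeatedly: it is reflexive and transitive, and $g \kand t \implies g$ for any $t$. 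Every rule only ever replaces the path condition by itself or by a conjunction extending a previous path condition, so each case reduces to chaining these facts.

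\textbf{Evaluation cases.}
\begin{itemize}
  \item \refrule{SEvalLiteral} and \refrule{SEvalVar} leave the state unchanged.
  \item \refrule{SEvalNeg}, \refrule{SEvalField}, \refrule{SEvalFieldOptimistic} and \refrule{SEvalFieldFailure} follow directly from the induction hypothesis. \refrule{SEvalFieldImprecise} modifies only $\oheap$, so it too follows from the induction hypothesis.
  \item \refrule{SEvalOrA} and \refrule{SEvalAndA} add one conjunct after an inductive step.
  \item \refrule{SEvalOrB}, \refrule{SEvalAndB} and \refrule{SEvalOp} chain two inductive steps, with an added conjunct in between for the first two.
  \item For the three explicit unfolding rules I chain: the argument evaluations ($\sstate_1$ to $\sstate_2$, by induction), the consume (by (b)), the produce (by (a); updating $\senv$ and $\svis$ does not touch $\pc$), and the inner evaluation (by induction). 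Finally, $\sstate_6$ overrides only $\svis$, $\sheap$ and $\oheap$, so $\pc(\sstate_6) = \pc(\sstate_5) \implies \pc(\sstate_1)$.
  \item The implicit unfolding rules return $\sstate_2$, or $\sstate_2$ with only $\oheap$ changed, so they follow from the induction hypothesis on the argument evaluations.
  \item For \refrule{SEvalFunctionExplicit} and \refrule{SEvalFunctionImplicit} the result state is $\sstate_2$ with $\pc(\sstate_2) \kand \cdots$. Hence it implies $\pc(\sstate_2)$, which implies $\pc(\sstate_1)$ by induction on the argument evaluations. The consume, produce and body evaluation are irrelevant here because their path conditions are discarded.
\end{itemize}

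\textbf{Produce and consume.} Claim (a) is similar to the evaluation cases. \refrule{SProduceExpr}, \refrule{SProduceIfA} and \refrule{SProduceIfB} evaluate and then add a conjunct; note that \refrule{SProduceIfB} as printed returns $\sstate'$, which is handled by the induction hypothesis alone. Conjunction chains two inductive steps.

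For claim (b), the value and predicate rules either return $\sstate$ with only heap components changed, or add a conjunct (\refrule{SConsumeValueImprecise}). \refrule{SConsumeImprecision} returns $\pc(\sstate')$ from an inner consume, so it follows by induction. Conjunction chains two steps, observing that the second consume starts from $\sstate'$. The conditional rules start from $\sstate[\pc = \pc']$ with $\pc'$ a conjunction extending $\pc(\sstate)$; I read the printed $\pc(\sstate')$ there as $\pc(\sstate)$.

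Importantly, consume never takes its result path condition from $\sstate_E$, so the evaluations performed in $\sstate_E$ need no induction hypothesis at all.

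\textbf{Main obstacle.} The main obstacle I expect is justifying the mutual induction itself, namely that the combined relation is well-founded. For the function rules the body evaluation is not even needed. For unfolding, the inner evaluation is a strict subderivation, so induction on derivation height of the combined judgement suffices. A secondary concern is confirming that the typos noted above (\refrule{SProduceIfB}, \refrule{SConsumeConditionalA}, \refrule{SConsumeConditionalB}) do not break the argument under either reading; I expect them not to.
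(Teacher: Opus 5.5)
Your proof is correct and follows the paper's approach: induction on the evaluation derivation. The explicit unfolding cases chain through the argument evaluations, the consume and produce steps (where the paper cites Lemmas~\ref{lem:cons-subpath} and~\ref{lem:produce-subpath}) and the inner evaluation, and the function and implicit-unfolding cases reduce to the argument evaluations because the result state is $\sstate_2$ with only extra conjuncts or heap changes. The one difference is that you make explicit the simultaneous induction the paper leaves implicit (its produce lemma is itself proved using this one); the well-foundedness you flag is not a real obstacle, since the judgements are inductively defined and every premise you appeal to is a strict subderivation.
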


\begin{proof}
    We proceed by induction on the derivation of $\seval{\sstate}{e}{\_}{\sstate'}{\_}$.

    \begin{enumcases}
        \case \refrule{SEvalLiteral} -- $\seval{\sstate}{l}{\_}{\sstate}{\_}$; \refrule{SEvalVar} -- $\seval{\sstate}{x}{\_}{\sstate}{\_}$: Trivial since $\pc(\sstate) \implies \pc(\sstate)$.
    
        \case\label{case:eval-subpath-ora} \refrule{SEvalOrA} -- $\seval{\sstate}{e_1 \vee e_2}{\_}{\sstate''}{\_}$:
          By \refrule{SEvalOrA} $\seval{\sstate}{e_1}{\_}{\sstate'}{\_}$, thus by induction $\pc(\sstate') \implies \pc(\sstate)$. Therefore $\pc(\sstate'') = \pc(\sstate') \wedge t_1 \implies \pc(\sstate') \implies \pc(\sstate)$.
    
        \case\label{case:eval-subpath-orb} \refrule{SEvalOrB} -- $\seval{\sstate}{e_1 \vee e_2}{\_}{\sstate''}{\_}$:
          By \refrule{SEvalOrB} $\seval{\sstate}{e_1}{\_}{\sstate'}{\_}$ and $\seval{\sstate'}{e_2}{\_}{\sstate''}{\_}$, thus by induction $\pc(\sstate'') \implies \pc(\sstate') \implies \pc(\sstate)$. Therefore $\pc(\sstate''') = \pc(\sstate'') \kand \kneg t_1 \implies \pc(\sstate'') \implies \pc(\sstate)$.
    
        \case \refrule{SEvalAndA} -- $\seval{\sstate}{e_1 \wedge e_2}{\_}{\sstate''}{\_}$: Similar to case \ref{case:eval-subpath-ora}.
    
        \case \refrule{SEvalAndB} -- $\seval{\sstate}{e_1 \wedge e_2}{\_}{\sstate''}{\_}$: Similar to case \ref{case:eval-subpath-orb}.
    
        \case \refrule{SEvalOp} -- $\seval{\sstate}{e_1 \oplus e_2}{\_}{\sstate''}{\_}$; \refrule{SEvalNeg} -- $\seval{\sstate}{\kneg e}{\_}{\sstate'}{\_}$; \refrule{SEvalField}, \refrule{SEvalFieldOptimistic} -- $\seval{\sstate}{e.f}{\_}{\sstate'}{\_}$; \refrule{SEvalFieldImprecise} -- $\seval{\sstate}{e.f}{\_}{\sstate''}{\_}$: By induction.

        \definecolor{shadecolor}{gray}{0.9}
        \begin{snugshade} \case \refrule{SEvalUnfoldingPrecise} -- By definition of \textsc{SEvalUnfoldingPrecise}, $\pc(\sstate_6) \implies \pc(\sstate_5)$ because $\pc(\sstate_6) = \pc(\sstate_5)$. Then, $\pc(\sstate_5) \implies \pc(\sstate_4)$ by the induction hypothesis, $\pc(\sstate_4) \implies \pc(\sstate_3)$ by Lemma 31, $\pc(\sstate_3) \implies \pc(\sstate_2)$ by Lemma 36, and $\pc(\sstate_2) \implies \pc(\sstate_1)$ by the induction hypothesis. Therefore, $\pc(\sstate_6) \implies \pc(\sstate_1)$ transitively.
        
        \case \refrule{SEvalUnfoldingImpreciseA} or \textsc{SEvalUnfoldingImpreciseB} -- By definition, $\pc(\sstate_6) \implies \pc(\sstate_5)$ because $\pc(\sstate_6) = \pc(\sstate_5)$. Then, $\pc(\sstate_5) \implies \pc(\sstate_4)$ by the induction hypothesis, $\pc(\sstate_4) \implies \pc(\sstate_3)$ by Lemma 31, $\pc(\sstate_3) \implies \pc(\sstate_2)$ by Lemma 36, and $\pc(\sstate_2) \implies \pc(\sstate_1)$ by the induction hypothesis. Therefore, $\pc(\sstate_6) \implies \pc(\sstate_1)$ transitively.
        
        \case \refrule{SEvalUnfoldingImplicitPrecise} -- By induction, $\pc(\sstate_1) \implies \pc(\sstate_2)$.

        \case \refrule{SEvalUnfoldingImplicitImprecise} -- By induction and \refrule{SEvalUnfoldingImplicitImprecise}, $\pc(\sstate_1) \implies \pc(\sstate_2) = \pc(\sstate_3)$.
        
        \case \refrule{SEvalFunctionExplicit} -- Inductively, $\pc(\sstate_2) \implies \pc(\sstate_1)$, and by definition, $\pc(\sstate_6) \implies \pc(\sstate_2)$. Transitively, $\pc(\sstate_6) \implies \pc(\sstate_1)$ as desired.

        \case \refrule{SEvalFunctionImplicit} -- Inductively, $\pc(\sstate_2) \implies \pc(\sstate_1)$, and by definition, $\pc(\sstate_4) \implies \pc(\sstate_2)$. Transitively, $\pc(\sstate_4) \implies \pc(\sstate_1)$ as desired.
        \end{snugshade}
    \end{enumcases}
\end{proof}

\begin{lemma}\label{lem:eval-unchanged}
  If $\seval{\sstate}{e}{t}{\sstate'}{\scheck}$ then $\sheap(\sstate') = \sheap(\sstate)$, and $\senv(\sstate') = \senv(\sstate)$.
\end{lemma}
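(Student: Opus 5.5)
The proof is by induction on the derivation of $\seval{\sstate}{e}{t}{\sstate'}{\scheck}$, with one case per evaluation rule. Sequences $\multiple{\seval{\sstate}{e}{t}{\sstate'}{\scheck}}$ are handled by an inner induction on the length of the argument list: the single-step statement, chained, gives $\sheap(\sstate_n)=\sheap(\sstate_0)$ and $\senv(\sstate_n)=\senv(\sstate_0)$.

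\textbf{Rules of the original system.} Most cases are immediate.
\begin{itemize}
\item \refrule{SEvalLiteral} and \refrule{SEvalVar} return $\sstate$ itself.
\item \refrule{SEvalOrA}, \refrule{SEvalOrB}, \refrule{SEvalAndA} and \refrule{SEvalAndB} only change the path condition, which touches neither $\sheap$ nor $\senv$. So the claim follows by chaining the induction hypothesis through the intermediate states.
\item \refrule{SEvalOp} and \refrule{SEvalNeg} follow the same way.
\item \refrule{SEvalField}, \refrule{SEvalFieldOptimistic} and \refrule{SEvalFieldFailure} return the state $\sstate'$ of the subderivation, so the induction hypothesis applies directly.
\item \refrule{SEvalFieldImprecise} differs from that state only in $\oheap$.
\end{itemize}

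\textbf{Unfolding rules.} The implicit rules \refrule{SEvalUnfoldingImplicitPrecise} and \refrule{SEvalUnfoldingImplicitFailure} return $\sstate_2$. \refrule{SEvalUnfoldingImplicitImprecise} returns $\sstate_2[\oheap=\oheap']$. In each case the argument induction hypothesis suffices.

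The explicit rules (\refrule{SEvalUnfoldingPrecise}, \refrule{SEvalUnfoldingImpreciseA}, \refrule{SEvalUnfoldingImpreciseB}) need more care.
\begin{itemize}
\item For the heap, the final state explicitly sets $\sheap=\sheap(\sstate_2)$, and $\sheap(\sstate_2)=\sheap(\sstate_1)$ by the argument induction hypothesis.
\item For the environment, $\senv(\sstate_6)=\senv(\sstate_5)$. The inner evaluation starts from $\sstate_4[\senv=\senv(\sstate_3)]$, so the induction hypothesis gives $\senv(\sstate_5)=\senv(\sstate_3)$.
\item It remains to show $\senv(\sstate_3)=\senv(\sstate_2)$, i.e.\ that consume preserves the store.
\end{itemize}

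\textbf{Auxiliary lemma on consume.} I would state and prove by inspection an auxiliary fact: if $\sconsume{\sstate}{\sstate_E}{\gform}{\sstate'}{\scheck}{\sperms}$ then $\senv(\sstate')=\senv(\sstate)$. Every consume rule returns $\sstate$ updated only in $\pc$, $\sheap$, $\oheap$ or $\imp$. The exception is \refrule{SConsumeImprecision}, whose output tuple takes $\senv(\sstate')$ from a subderivation, so that case uses induction. The consume rules themselves call evaluation, but only on $\sstate_E$, discarding the resulting state. Consequently this auxiliary fact does not need the present lemma, and there is no genuine circularity. If one prefers, the two statements can be proved by mutual induction as the paper describes for framing. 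Note that the produce step is irrelevant here: its effects on $\sheap$ and $\senv$ are both overwritten in the final state.

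\textbf{Function rules.} In \refrule{SEvalFunctionExplicit} the result is $\sstate_6=\sstate_2[\imp=\cdot,\oheap=\cdot,\pc=\cdot]$, so $\sheap(\sstate_6)=\sheap(\sstate_2)$ and $\senv(\sstate_6)=\senv(\sstate_2)$. Both equal those of $\sstate_1$ by the argument induction hypothesis. \refrule{SEvalFunctionImplicit} is identical with $\sstate_4$ in place of $\sstate_6$. Nothing needs to be said about the body evaluation or the consume/produce of the precondition, since their states are discarded.

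\textbf{Main obstacle.} I expect the only real difficulty to be the explicit unfolding case, specifically tracking the environment through the consume step. This is handled by the auxiliary store-preservation fact for consume and the observation that it does not depend circularly on the present lemma.
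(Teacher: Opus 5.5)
Your proof is correct and takes essentially the paper's route. The paper also argues by induction on the evaluation derivation and reduces the explicit unfolding case to the fact that consuming $p(\multiple{e})$ leaves $\senv$ unchanged; it reads this off \textsc{SConsumePredicate} directly, and records the general fact as Lemma~\ref{lem:cons-unchanged}. It handles the function cases, as you do, by reading them off $\sstate_6 = \sstate_2[\ldots]$ and $\sstate_4 = \sstate_2[\ldots]$.

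One small slip in your auxiliary consume fact: \textsc{SConsumeConjunction} and \textsc{SConsumeConditionalA}/\textsc{B} also return the state of a subderivation, so they need the (trivial) inductive step too, not only \textsc{SConsumeImprecision}. This is harmless, since only the predicate-consume rules can fire in this lemma.
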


\begin{proof}
    We proceed by induction on the derivation of $\seval{\sstate}{e}{t}{\sstate'}{\scheck}$.

    \begin{enumcases}
        \case \refrule{SEvalLiteral}; \refrule{SEvalVar}; \refrule{SEvalNeg}; \refrule{SEvalOrA}; \refrule{SEvalOrB}; \refrule{SEvalAndA}; \refrule{SEvalAndB}; \refrule{SEvalOp}; \refrule{SEvalField}; \refrule{SEvalFieldOptimistic}; \refrule{SEvalFieldImprecise}; \refrule{SEvalFieldFailure}: Trivial by induction on $\seval{\sstate}{e}{t}{\sstate'}{\scheck}$.
        \definecolor{shadecolor}{gray}{0.9}
        \begin{snugshade} \case \refrule{SEvalUnfoldingPrecise} -- By definition, $\sheap(\sstate_6) = \sheap(\sstate_2)$, and by the induction hypothesis, $\sheap(\sstate_2) = \sheap(\sstate_1)$. By definition and the induction hypothesis, $\senv(\sstate_6) = \senv(\sstate_5) = \senv(\sstate_3)$. By the definition of \textsc{SConsumePredicate}, $\senv(\sstate_3) = \senv(\sstate_2)$. Then, by the induction hypothesis, $\senv(\sstate_2) = \senv(\sstate_1)$.
        
        \case \refrule{SEvalUnfoldingImpreciseA} -- By definition, $\sheap(\sstate_6) = \sheap(\sstate_2)$, and by the induction hypothesis, $\sheap(\sstate_2) = \sheap(\sstate_1)$.
        
        By definition and the induction hypothesis, $\senv(\sstate_6) = \senv(\sstate_5) = \senv(\sstate_3)$. By the definition of \textsc{SConsumePredicate}, $\senv(\sstate_3) = \senv(\sstate_2)$. Then, by the induction hypothesis, $\senv(\sstate_2) = \senv(\sstate_1)$.
        
        \case \refrule{SEvalUnfoldingImpreciseB} -- By definition, $\sheap(\sstate_6) = \sheap(\sstate_2)$, and by the induction hypothesis, $\sheap(\sstate_2) = \sheap(\sstate_1)$.
        By definition and the induction hypothesis, $\senv(\sstate_6) = \senv(\sstate_5) = \senv(\sstate_3)$. By the definition of \textsc{SConsumePredicate}, $\senv(\sstate_3) = \senv(\sstate_2)$. Then, by the induction hypothesis, $\senv(\sstate_2) = \senv(\sstate_1)$.
        
        \case \refrule{SEvalUnfoldingImplicitPrecise} -- Trivially, $\gamma(\sigma) = \gamma(\sigma)$, and $\mathsf{H}(\sigma) = \mathsf{H}(\sigma)$.

        \case \refrule{SEvalUnfoldingImplicitImprecise} -- Similar to \refrule{SEvalUnfoldingImplicitPrecise}.

        \case \refrule{SEvalFunctionExplicit} -- Inductively, $\sheap(\sstate_2) = \sheap(\sstate_1)$ and $\senv(\sstate_2) = \senv(\sstate_1)$. By definition, $\sheap(\sstate_6) = \sheap(\sstate_2)$ and $\senv(\sstate_6) = \senv(\sstate_2)$.

        \case \refrule{SEvalFunctionImplicit} -- Inductively, $\sheap(\sstate_2) = \sheap(\sstate_1)$ and $\senv(\sstate_2) = \senv(\sstate_1)$. By definition, $\sheap(\sstate_4) = \sheap(\sstate_2)$ and $\senv(\sstate_4) = \senv(\sstate_2)$.
        \end{snugshade}
    \end{enumcases}
\end{proof}

\begin{lemma}[Soundness]\label{lem:seval-soundness}
  Let $V$ be some initial valuation and $\triple{\heap}{\perms}{\env}$ be some well-formed evaluation state such that $\simstate{V}{\sstate}{\heap}{\perms}{\env}$.

  If $\seval{\sstate}{e}{t}{\sstate'}{\scheck}$, $\rtassert{V'}{\heap}{\perms}{\scheck}$, and $V'(\pc(\sstate')) = \ktrue$ where $V' \supseteq V[\seval{\sstate}{e}{t}{\sstate'}{\scheck} \mid \heap, \env]$, then
  $$\simstate{V'}{\sstate'}{\heap}{\perms}{\env}, \quad
    \eval{\heap}{\env}{e}{V'(t)}, \quad \text{and}~
    \frm{\heap}{\perms}{\env}{e}.$$
\end{lemma}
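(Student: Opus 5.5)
The proof goes by induction on the derivation of $\seval{\sstate}{e}{t}{\sstate'}{\scheck}$. Because the new rules invoke produce and consume, this induction has to be mutual with the soundness lemmas for produce and consume stated later in the paper (those for the consume and produce judgements, with the corresponding valuation extended through each premise). Following the remark in the Soundness section, I would make the mutual induction well-founded by measuring the size of the term together with the derivation height. The recursion through $\fpred(p)$ and $\ffuncbody(f)$ stays bounded because the visited set $\svis$ only grows along a nested path, and the Implicit rules do not recurse into bodies.

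For the rules inherited from Zimmerman et al. (literals, variables, the boolean and arithmetic operators, and the four field rules), the argument is unchanged. One ingredient is Lemma~\ref{lem:eval-subpath}: since $V'(\pc(\sstate'))=\ktrue$ and $V'$ extends the valuations of the subderivations, the intermediate path conditions are also true, so the induction hypothesis applies to each premise. We then conclude evaluation using the \textsc{Eval} rules and framing using the \textsc{Frame} rules. The field cases use $\simstate{V}{\sstate}{\heap}{\perms}{\env}$ to obtain $\pair{V(t_e)}{f}\in\perms$, or the run-time check $\pair{t_e}{f}$ in the imprecise case. The correspondence of $\sstate'$ follows from Lemma~\ref{lem:eval-unchanged}, together with the fact that only the optimistic heap gains a chunk whose value $V'$ maps to $\heap(V(t_e),f)$.

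For the explicit unfolding rules, the plan is as follows.
\begin{enumerate}
\item Apply the induction hypothesis to the arguments $\multiple{e}$. This gives their values $\multiple{v}=\multiple{V'(t)}$ and their framing.
\item Apply consume soundness to $p(\multiple{e})$. Combined with $\rtassert{V'}{\heap}{\perms}{\scheck_2}$ and the predicate-chunk clause of the correspondence, this gives $\assertion{\heap}{\perms}{\env}{p(\multiple{e})}$ and hence, via \textsc{AssertPredicate}, $\assertion{\heap}{\perms}{[\multiple{x\mapsto v}]}{\fpred(p)}$.
\item Apply produce soundness. Because the dynamic semantics is equirecursive, the same $\heap,\perms$ model $\sstate_4$, with the environment restored to $\senv(\sstate_3)$.
\item The induction hypothesis on $e_0$ gives $\eval{\heap}{\env}{e_0}{V'(t_0)}$ and $\frm{\heap}{\perms}{\env}{e_0}$.
\item Conclude evaluation by \textsc{EvalUnfolding}, and framing by \textsc{FrameUnfolding} using \textsc{IFramePredicate}.
\end{enumerate}

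For correspondence of $\sstate_6$:
\begin{itemize}
\item The precise heap is reset to $\sheap(\sstate_2)$, which is modelled by step 1.
\item The path condition is that of $\sstate_5$, which $V'$ makes true.
\item The optimistic heap is $\emptyset$ in \textsc{SEvalUnfoldingPrecise}. In the ImpreciseA and ImpreciseB rules it is a union whose pieces are each modelled: $\oheap(\sstate_2)$ and $\oheap(\sstate_5)$ by the induction hypotheses, and $\pair{p}{\multiple t}$ by the assertion from step 2 together with the new predicate clause of the optimistic-heap correspondence. No disjointness is needed for $\oheap$.
\end{itemize}

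For the implicit unfolding rules:
\begin{itemize}
\item \textsc{SEvalUnfoldingImplicitFailure} is vacuous, since $\set{\bot}$ cannot be satisfied.
\item In \textsc{SEvalUnfoldingImplicitPrecise} and \textsc{SEvalUnfoldingImplicitImprecise}, the assertion of $\fpred(p)$ comes from the heap correspondence or from \textsc{CheckPred}, respectively.
\item $V'(t_0)=v$ holds by construction of the corresponding valuation.
\end{itemize}
The remaining gap is that $e_0$ must actually evaluate and be framed, and no static derivation for $e_0$ exists here. I would obtain this from a well-formedness property: unfolding expressions in specifications are framed by the body of the unfolded predicate. Then Lemma~\ref{lem:ifrm-implies-efrm}, Lemma~\ref{lem:efoot-subset-framed} and Lemma~\ref{lem:efoot-framing} give $\frm{\heap}{\perms}{\env}{e_0}$, and evaluation is defined whenever framing holds.

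The function rules are analogous.
\begin{itemize}
\item Consume soundness gives the precondition assertion, so \textsc{FrameFunction} applies.
\item Definition~\ref{def:well-formed-prog} gives framing of the body. In the explicit case, the induction hypothesis on the body, evaluated under the heaps restored to the original symbolic values, yields $V'(t')=v'$, so the conjunct $f(\multiple t,s)\keq t'$ holds. In the implicit case, this equality holds by definition of the valuation.
\item The axiomatized postcondition is true under $V'$ by the well-formedness requirement that postconditions hold for the result, together with a small auxiliary lemma. That lemma states that $V'(\fax{\sstate}{\fts}{e})$ equals the dynamic value of $e$ with $\kresult\mapsto v'$, proved by induction on $e$ using that the snapshot chunks in $s$ agree with $\heap$.
\item The optimistic-heap additions in \textsc{SEvalFunctionImplicit} are justified by the checks $\scheck_3$.
\end{itemize}

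I expect two steps to be the main obstacles. The first is the implicit unfolding and implicit function cases, where no static derivation for $e_0$ or the body is available, so both framing and the path-condition axioms must be recovered purely from dynamic assertions and well-formedness. The second is keeping the mutual induction with produce and consume well-founded.
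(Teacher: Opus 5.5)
Your skeleton matches the paper's: induction on the evaluation derivation, consume and produce soundness in the explicit unfolding and function cases, and heap correspondence or \textsc{CheckPred} for the predicate assertion in the implicit cases. You are also right that \textsc{SEvalUnfoldingImplicitPrecise} and \textsc{SEvalUnfoldingImplicitImprecise} have no premise about $e_0$. The paper's proof just asserts ``inductively, $\frm{\heap}{\perms}{\env}{e_0}$'', and nothing supports that. However, the well-formedness property you add to fill this hole is not in Definition~\ref{def:well-formed-prog}, and it fails in both directions. It is too strong: the paper's own \texttt{sortedList} (Fig.~\ref{fig:sorted}) violates it. Its inner unfolding, which is exactly where the implicit rule fires, reads \texttt{x->data}, and that field is owned by the enclosing body, not by \texttt{sortedList(x->next)}. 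It is also too weak, since it only constrains specifications. The formal \textsc{SEvalUnfoldingPrecise}, unlike the worked example of Section~\ref{sec:static-unfolding}, evaluates $e_0$ in $\sstate_4$, whose $\svis$ still contains $p$. Take a method with precondition $p(x) * p(y)$, $\fpred(p) = \ktrue$, and the \texttt{if} condition $\sunfolding{p(x)}{(\sunfolding{p(y)}{y.g > 0})}$. The inner unfolding takes \textsc{SEvalUnfoldingImplicitPrecise} and emits no checks. Yet framing the condition needs $\pair{\ell_y}{g} \in \perms$, while the method holds only $\efoot{\heap}{\env}{p(x) * p(y)} = \emptyset$. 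So the framing of $e_0$ has to come from the implicit rules themselves, for instance via run-time checks on the fields of $e_0$, as \textsc{SEvalFunctionImplicit} does for its postcondition. Alternatively, the rule could reset $\svis$ before $e_0$ and the lemma be restricted accordingly, so that implicit unfoldings are met only inside a produce. There the Correspondence lemma is what gets used, and it assumes $\efoot{\heap}{\env}{e} \subseteq \perms$ and evaluation rather than concluding framing.

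Second, in both function cases your plan, like the paper's, invokes consume soundness for the precondition. For \textsc{SEvalFunctionExplicit} it also invokes produce soundness and the induction hypothesis on the body. These need $V'$ to make true the path conditions reached after the consume, the produce and the body. Lemma~\ref{lem:eval-subpath} does not supply them, because the final path condition is rebuilt from $\pc(\sstate_2)$ and every branch condition collected inside is discarded. The conclusion can then fail outright. Take precondition $\kacc(x.a) * \kacc(x.b)$, body $x.a > 0 \kor x.b > 0$ derived by \textsc{SEvalOrB}, and a heap with $x.a = 1$ and $x.b = 0$. The conjunct $\fts \keq t'$ is true under $V'$ only because the corresponding valuation sets $V'(\fts) := V'(t')$. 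So every hypothesis holds, yet $V'(\fts) = \kfalse$ while the body evaluates to $\ktrue$. This also shows your effort is misdirected: $\fts \keq t'$ and the axiomatized postcondition are conjuncts of $\pc(\sstate_6)$, which is assumed. What is missing is the body's own path condition, which must be kept in $\pc(\sstate_6)$ or added as a hypothesis. Finally, the checks $\scheck_3$ of \textsc{SEvalFunctionImplicit} only give $\pair{V'(t_e)}{g} \in \perms$ via \textsc{CheckAcc}. They do not give the value clause $\heap(V'(t_e), g) = V'(t)$ of the optimistic-heap correspondence for the new chunks of $s'$, whose values are fresh and not fixed by that rule's corresponding valuation.
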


\begin{proof}
    By induction on $\seval{\sstate}{e}{t}{\sstate'}{\scheck}$.  
    
    \begin{enumcases}
        \case \refrule{SEvalLiteral} -- $\seval{\sstate}{l}{l}{\sstate}{\emptyset}$:
          Then $V' = V$, thus $\simstate{V'}{\sstate}{\heap}{\perms}{\env}$ by assumption.
    
          By \refrule{EvalLiteral} $\eval{\heap}{\env}{l}{l}$, and by definition $V'(l) = l$.
    
          By \refrule{FrameLiteral} $\frm{\heap}{\perms}{\env}{l}$.
    
        \case \refrule{SEvalVar} -- $\seval{\sstate}{x}{\senv(\sstate)(x)}{\sstate}{\emptyset}$:
          Then $V' = V$, thus $\simstate{V'}{\sstate}{\heap}{\perms}{\env}$ by assumption.
    
          By \refrule{EvalVar} $\eval{\heap}{\env}{x}{\env(x)}$, and $\env(x) = V'(\senv(\sstate)(x))$ since $\simenv{V'}{\senv(\sstate)}{\env}$.
    
          By \refrule{FrameVar} $\frm{\heap}{\perms}{\env}{x}$.
    
        \case\label{case:seval-sim-ora} \refrule{SEvalOrA} -- $\seval{\sstate}{e_1 \kor e_2}{t_1}{\sstate''}{\scheck}$:
    
          By \refrule{SEvalOrA} $\seval{\sstate}{e_1}{t_1}{\sstate'}{\scheck}$.
    
          Suppose $\rtassert{V'}{\heap}{\perms}{\scheck}$ and $V'(\pc(\sstate'')) = \ktrue$. Then $V'(\pc(\sstate')) = \ktrue$ since $\pc(\sstate'') = \pc(\sstate') \kand t_1 \implies \pc(\sstate')$ by \refrule{SEvalOrA}. Then by induction $\simstate{V'}{\sstate'}{\heap}{\perms}{\env}$.
    
          Now $\simstate{V'}{\sstate''}{\heap}{\perms}{\env}$ since $\sstate'' = \sstate'[\pc = \pc(\sstate') \kand t_1]$ by \refrule{SEvalOrA} and $V'(\pc(\sstate'')) = \ktrue$ by assumption.
    
          By induction $\eval{\heap}{\env}{e_1}{V'(t_1)}$. But now $V'(t_1) = \ktrue$ since $\pc(\sstate'') = \pc(\sstate') \kand t_1 \implies t_1$. Thus $\eval{\heap}{\env}{e_1}{\ktrue}$.
    
          Then by \refrule{EvalOrA} $\eval{\heap}{\env}{e_1 \kor e_2}{\ktrue}$ and $V'(t_1) = \ktrue$.
    
          By induction $\frm{\heap}{\perms}{\env}{e_1}$. Thus $\frm{\heap}{\perms}{\env}{e_1 \kor e_2}$ by \refrule{FrameOrA} since $\eval{\heap}{\env}{e_1}{\ktrue}$.
    
        \case\label{case:seval-sim-orb} \refrule{SEvalOrB} -- $\seval{\sstate}{e_1 \vee e_2}{t_2}{\sstate''}{\scheck_1 \cup \scheck_2}$:
    
          By \refrule{SEvalOrB} $\seval{\sstate}{e_1}{t_1}{\sstate'}{\scheck_1}$ and $\seval{\hat{\sstate}'}{e_2}{t_2}{\sstate''}{\scheck_2}$, where $\hat{\sstate}' = \sstate'[\pc = \pc(\sstate') \kand \kneg t_1]$.
    
          Now suppose $\rtassert{V'}{\heap}{\perms}{\scheck_1 \cup \scheck_2}$ and $V'(\pc(\sstate'')) = \ktrue$. By lemma \ref{lem:scheck-monotonicity} $\rtassert{V'}{\heap}{\perms}{\scheck_1}$ (thus $\rtassert{V'}{\heap}{\perms}{\scheck_1}$) and $\rtassert{V'}{\heap}{\perms}{\scheck_2}$.
    
          Also, by lemma \ref{lem:eval-subpath}, $\pc(\sstate'') \implies \pc(\hat{\sstate}') = \pc(\sstate') \kand \kneg t_1 \implies \pc(\sstate')$. Therefore $V'(\pc(\sstate')) = V'(\pc(\sstate')) = \ktrue$, and $V'(\pc(\sstate'')) = \ktrue$ by assumption.
    
          Now by induction $\simstate{V'}{\sstate'}{\heap}{\perms}{\env}$. Also $V'(\pc(\hat{\sstate}')) = \ktrue$ since $\pc(\sstate'') \implies \pc(\sstate')$. Therefore $\simstate{V'}{\hat{\sstate}'}{\heap}{\perms}{\env}$. Then also by induction $\simstate{V'}{\sstate''}{\heap}{\perms}{\env}$.
    
          By induction $\eval{\heap}{\env}{e_1}{V'(t_1)}$ and $\eval{\heap}{\env}{e_2}{V'(t_2)}$. But now $\pc(\sstate'') \implies \pc(\hat{\sstate}') = \pc(\sstate') \kand \kneg t_1 \implies \kneg t_1$, thus $V'(t_1) = \kfalse$. Therefore $\eval{\heap}{\env}{e_1}{\kfalse}$.
    
          Thus $\eval{\heap}{\env}{e_1 \kor e_2}{V'(t_2)}$ by \refrule{EvalOrB}.
    
          By induction $\frm{\heap}{\perms}{\env}{e_1}$ and $\frm{\heap}{\perms}{\env}{e_2}$. Thus $\frm{\heap}{\perms}{\env}{e_1 \kor e_2}$ by \refrule{FrameOrB} since $\eval{\heap}{\env}{e_1}{\kfalse}$.
    
        \case \refrule{SEvalAndA} -- $\seval{\sstate}{e_1 \wedge e_2}{t_1}{\sstate''}{\scheck}$: Similar to case \ref{case:seval-sim-ora}.
    
        \case \refrule{SEvalAndB} -- $\seval{\sstate}{e_1 \wedge e_2}{t_2}{\sstate''}{\scheck_1 \cup \scheck_2}$: Similar to case \ref{case:seval-sim-orb}.
    
        \case \refrule{SEvalOp} -- $\seval{\sstate}{e_1 \oplus e_2}{t_1 \oplus t_2}{\sstate''}{\scheck_1 \cup \scheck_2}$:
    
          By \refrule{SEvalOp} $\seval{\sstate}{e_1}{t_1}{\sstate'}{\scheck_1}$ and $\seval{\sstate'}{e_2}{t_2}{\sstate''}{\scheck_2}$.
    
          Now suppose $\rtassert{V'}{\heap}{\perms}{\scheck_1 \cup \scheck_2}$ and $V'(\pc(\sstate'')) = \ktrue$. By lemma \ref{lem:scheck-monotonicity} $\rtassert{V'}{\heap}{\perms}{\scheck_1}$ and $\rtassert{V'}{\heap}{\perms}{\scheck_2}$.
    
          Also, by lemma \ref{lem:eval-subpath}, $\pc(\sstate'') \implies \pc(\sstate')$. Therefore $V'(\pc(\sstate')) = V'(\pc(\sstate')) = \ktrue$, and $V'(\pc(\sstate'')) = \ktrue$ by assumption.
    
          Now by induction $\simstate{V'}{\sstate'}{\heap}{\perms}{\env}$, and then also by induction $\simstate{V'}{\sstate''}{\heap}{\perms}{\env}$.
    
          By induction $\eval{\heap}{\env}{e_1}{V'(t_1)}$ and $\eval{\heap}{\env}{e_2}{V'(t_2)}$. Therefore $\eval{\heap}{\env}{e_1 \oplus e_2}{V'(t_1) \oplus V'(t_2)}$ and $V'(t_1) \oplus V'(t_2) = V'(t_1 \oplus t_2)$.
    
          By induction $\frm{\heap}{\perms}{\env}{e_1}$ and $\frm{\heap}{\perms}{\env}{e_2}$. Thus $\frm{\heap}{\perms}{\env}{e_1 \oplus e_2}$ by \refrule{FrameOp}.
    
        \case \refrule{SEvalNeg} -- $\seval{\sstate}{\kneg e}{\kneg t}{\sstate'}{\scheck}$:
        
          By \refrule{SEvalNeg} $\seval{\sstate}{e}{t}{\sstate'}{\scheck}$.
    
          Suppose that $\rtassert{V'}{\heap}{\perms}{\scheck}$ and $V'(\pc(\sstate')) = \ktrue$.
    
          Now $\simstate{V'}{\sstate'}{\heap}{\perms}{\env}$ by induction.
    
          Also by induction $\eval{\heap}{\env}{e}{V'(t)}$. Thus $\eval{\heap}{\env}{\kneg e}{\neg V'(t)}$ by \refrule{EvalNeg} and $V'(\kneg t) = \neg V'(t)$.
    
          By induction $\frm{\heap}{\perms}{\env}{e}$, and thus $\frm{\heap}{\perms}{\env}{\kneg e}$ by \refrule{FrameNeg}.
    
        \case\label{case:seval-sim-field} \refrule{SEvalField}: $\seval{\sstate}{e.f}{t}{\sstate'}{\scheck}$
    
          By \refrule{SEvalField} $\seval{\sstate}{e}{t_e}{\sstate'}{\scheck}$.
    
          Suppose that $\rtassert{V'}{\heap}{\perms}{\scheck}$ and $V'(\pc(\sstate')) = \ktrue$.
    
          Then $\simstate{V'}{\sstate'}{\heap}{\perms}{\env}$ by induction.
    
          Also by induction $\eval{\heap}{\env}{e}{V'(t_e)}$. By \refrule{SEvalField} $\pc(\sstate') \implies t_e \keq t_e'$, therefore $V'(t_e) = V'(t_e')$. Also by \refrule{SEvalField} $\triple{f}{t_e'}{t} \in \sheap(\sstate')$. Therefore $V'(t) = \heap(V'(t_e'), f) = \heap(V'(t_e), f)$ since $\simstate{V'}{\sstate'}{\heap}{\env}{\perms}$.
    
          Thus $\eval{\heap}{\env}{e.f}{\heap(V'(t_e), f)}$ by \refrule{EvalField} and $\heap(V'(t_e), f) = V_{\sstate'}(t)$.
    
          By induction $\frm{\heap}{\perms}{\env}{e}$. Also, since $\triple{f}{t_e'}{t} \in \sheap(\sstate')$ and $\simstate{V'}{\sstate'}{\heap}{\env}{\perms}$, $\pair{V'(t_e')}{f} = \pair{V'(t_e)}{f} \in \perms$. By \refrule{AssertAcc}, $\assertion{\heap}{\perms}{\env}{\kacc(e.f)}$ since $\eval{\heap}{\env}{e}{V'(t_e)}$. Thus by \refrule{FrameField} $\frm{\heap}{\perms}{\env}{e.f}$.
    
        \case \refrule{SEvalFieldOptimistic} -- $\seval{\sstate}{e.f}{t}{\sstate'}{\scheck}$: Similar to case \ref{case:seval-sim-field}.
    
        \case \refrule{SEvalFieldImprecise} -- $\seval{\sstate}{e.f}{t}{\sstate'}{\scheck; \pair{t_e}{f}}$:
    
          By \refrule{SEvalFieldImprecise} $\seval{\sstate}{e}{t_e}{\sstate'}{\scheck}$, $t = \ffresh$, and $\sstate'' = \sstate'[\oheap = \oheap(\sstate'); \triple{f}{t_e}{t}]$.
    
          Suppose $\rtassert{V'}{\heap}{\perms}{\scheck; \pair{t_e}{f}}$ and $V'(\pc(\sstate'')) = \ktrue$, thus $\rtassert{V'}{\heap}{\perms}{\scheck}$ by lemma \ref{lem:scheck-monotonicity}. Then $V'(\pc(\sstate'')) = V'(\pc(\sstate')) = \ktrue$.
    
          Then by induction $\simstate{V'}{\sstate'}{\heap}{\perms}{\env}$, thus $\simstate{V'}{\sstate'}{\heap}{\perms}{\env}$.
    
          By lemma \ref{lem:scheck-monotonicity} $\rtassert{V'}{\heap}{\perms}{\set{\pair{t_e}{f}}}$ and thus $\rtassert{V'}{\heap}{\perms}{\pair{t_e}{f}}$. Then by \refrule{CheckAcc} $\pair{V'(t_e)}{f} \in \perms$.
          
          By definition \ref{def:eval-valuation} $\heap(V'(t_e), f) = V'(t)$, and also $\pair{V'(t_e)}{f} \in \perms$ and $\simheap{V'}{\oheap(\sstate')}{\heap}{\perms}$, thus $\simheap{V'}{\oheap(\sstate'); \triple{f}{t_e}{t}}{\heap}{\perms}$.
          
          Then $\simstate{V'}{\sstate''}{\heap}{\perms}{\env}$ since $\sstate'$ and $\sstate''$ differ only in their $\oheap$ components and $\oheap(\sstate'') = \oheap(\sstate'); \triple{f}{t_e}{t}$.
    
          By induction $\eval{\heap}{\env}{e}{V'(t_e)}$. As shown before, $\heap(V'(t_e), f) = V'(t)$. Thus by \refrule{EvalField} $\eval{\heap}{\env}{e.f}{V'(t)}$.
    
          By induction $\frm{\heap}{\perms}{\env}{e}$. Also, as shown before, $\pair{V'(t_e)}{f} \in \perms$. By \refrule{AssertAcc}, $\assertion{\heap}{\perms}{\env}{\kacc(e.f)}$ since $\eval{\heap}{\env}{e}{V'(t_e)}$. Thus by \refrule{FrameField} $\frm{\heap}{\perms}{\env}{e.f}$.
    
        \case \refrule{SEvalFieldFailure} -- $\seval{\sstate}{e.f}{t}{\sstate'}{\set{\bot}}$:
        
          $\rtassert{V}{\heap}{\perms}{\set{\bot}}$ cannot hold. Since the assumptions cannot be satisfied, the statement holds vacuously.
    
        \definecolor{shadecolor}{gray}{0.9}
        \begin{snugshade} 
        \case \refrule{SEvalUnfoldingPrecise} -- $\seval{\sstate_1}{\sunfolding{\pe}{e_0}}{t_0}{\sstate_6}{\scheck_1 \cup \scheck_2 \cup \scheck_3}$:

        By the induction hypothesis, $\simstate{V'}{\sstate_2}{\heap}{\perms}{\env}$, $\multiple{\eval{\heap}{\env}{e}{V'(t)}}$, and $\multiple{\frm{\heap}{\perms}{\env}{e}}$ with $V'(\pc(\sstate_2)) = \ktrue$. Then, by Lemma~\ref{lem:cons-soundness}, $\assertion{\heap}{\perms}{\env}{\pe}$ and $\simstate{V'}{\sstate_3}{\heap}{\perms \setminus \efoot{\heap}{\env}{\pe}}{\env}$ with $V'(\pc(\sstate_3)) = \ktrue$. By Lemma~\ref{lem:produce-soundness}, $\simstate{V'}{\sstate_4}{\heap}{\perms}{\env}$ with $V'(\pc(\sstate_4)) = \ktrue$. Then, by the induction hypothesis, $\simstate{V'}{\sstate_5}{\heap}{\perms}{\env}$ with $V'(\pc(\sstate_5)) = \ktrue$. Then, $\simstate{V'}{\sstate_6}{\heap}{\perms}{\env}$ since $\sheap(\sstate_6) \subseteq \sheap(\sstate_5)$ and $\oheap(\sstate_6) = \emptyset$. Similarly, $\pc(\sstate_6) = \pc(\sstate_5)$ so $V'(\pc(\sstate_6)) = V'(\pc(\sstate_5)) = \ktrue$, as desired. Furthermore, by \refrule{EvalUnfolding}, $\eval{\heap}{\perms}{\sunfolding{\pe}{e_0}}{V'(t_0)}$.
        By the induction hypothesis, $\eval{\heap}{\env}{e_0}{V'(t_0)}$ and $\frm{\heap}{\perms}{\env}{e_0}$. By Lemma~\ref{lem:cons-soundness}, $\assertion{\heap}{\perms}{\env}{\pe}$, so by definition of \refrule{FrameUnfolding}, $\frm{\heap}{\perms}{\env}{\sunfolding{\pe}{e_0}}$ as desired.
        
        \case \refrule{SEvalUnfoldingImplicitPrecise} -- $\seval{\sstate_1}{\sunfolding{\pe}{e_0}}{t_0}{\sstate_2}{\scheck}$: 
        
        Since $\simstate{V}{\sstate_1}{\heap}{\perms}{\env}$, by the induction hypothesis $\simstate{V'}{\sstate_2}{\heap}{\perms}{\env}$. By Definition~\ref{def:eval-valuation}, $\eval{\heap}{\env}{e_0}{V'(t)}$, so therefore $\eval{\heap}{\env}{\sunfolding{\pe}{e_0}}{V'(t)}$. 
        
        Inductively, $\frm{\heap}{\perms}{\env}{e_0}$ and $\ifrm{\heap}{\perms}{\env}{\pe}$, and since $\pair{p}{\multiple{t}} \in \sheap(\sstate_2) \cup \oheap(\sstate_2)$, by the definition of heap correspondence and \refrule{AssertPredicate}, $\assertion{\heap}{\perms}{\env}{\pe}$. Therefore, $\frm{\heap}{\perms}{\env}{\sunfolding{\pe}{e_0}}$.

        \case \refrule{SEvalUnfoldingImplicitImprecise} -- $\seval{\sstate_1}{\sunfolding{\pe}{e_0}}{t_0}{\sstate_3}{\scheck'}$:

        Similar to the case for \refrule{SEvalUnfoldingImplicitPrecise}. To show heap correspondence with $\oheap'$, recall that $\rtassert{V'}{\heap}{\perms}{\pair{p}{\multiple{t}}}$, showing that the predicate assertion holds in the corresponding state.

        \case \refrule{SEvalUnfoldingImplicitFailure} -- $\seval{\sstate_1}{\sunfolding{\pe}{e_0}}{t_0}{\sstate_2}{\set{\bot}}$:

        The run-time check set $\set{\bot}$ is unsatisfiable, so the conclusion holds vacuously.

        \case \refrule{SEvalFunctionExplicit} -- $\seval{\sstate_1}{\fe}{t'}{\sstate_6}{\scheck_1 \cup \scheck_2 \cup \scheck_3}$:

        By the induction hypothesis, $\simstate{V'}{\sstate_2}{\heap}{\perms}{\env}$, $\multiple{\eval{\heap}{\env}{e}{V'(t)}}$, and $\multiple{\frm{\heap}{\perms}{\env}{e}}$ with $V'(\pc(\sstate_2)) = \ktrue$. Then, by Lemma~\ref{lem:cons-soundness}, $\assertion{\heap}{\perms}{\env}{\ffuncpre(f)}$ and $\simstate{V'}{\sstate_3}{\heap}{\perms \setminus \efoot{\heap}{\env}{\ffuncpre(f)}}{\env}$ with $V'(\pc(\sstate_3)) = \ktrue$. By Lemma~\ref{lem:produce-soundness}, $\simstate{V'}{\sstate_4}{\heap}{\perms}{\env}$ with $V'(\pc(\sstate_4)) = \ktrue$. Then, $\simstate{V'}{\sstate_5}{\heap}{\perms}{\env}$ by the induction hypothesis. Then, $\simstate{V'}{\sstate_6}{\heap}{\perms}{\env}$ because $\oheap(\sstate_6) = \oheap(\sstate_5)$ and $\sheap(\sstate_6) = \sheap(\sstate_2)$.

        Since $\multiple{\frm{\heap}{\perms}{\env}{e}}$ and $\ifrm{\heap}{\perms}{\env}{\ffuncpre(f)}$ inductively and $\assertion{\heap}{\perms}{\env}{\ffuncpre(f)}$ from the consume step, $\frm{\heap}{\perms}{\env}{\ffuncpre(f)}$. 
        
        
        \case \refrule{SEvalFunctionImplicit} -- $\seval{\sstate_1}{\fe}{t'}{\sstate_4}{\scheck_1 \cup \scheck_2 \cup \scheck_3}$:

        Because $\rtassert{V}{\heap}{\perms}{\scheck_2}$ where $\scheck_2$ arose as a result of consuming $\ffuncpre(f)$, in the corresponding state it must be the case that $\assertion{\heap}{\perms}{\env}{\ffuncpre(f)}$. Inductively, $\multiple{\frm{\heap}{\perms}{\env}{e}}$ and $\ifrm{\heap}{\perms}{\env}{\ffuncpre(f)}$, so $\frm{\heap}{\perms}{\env}{\fe}$ by \refrule{FrameFunction}. By Definition~\ref{def:eval-valuation}, $\eval{\heap}{\env'}{\ffuncbody(f)}{V'(t')}$ so by \refrule{EvalFunction}, $\eval{\heap}{\env}{\fe}{V'(t')}$. Then, to show $\simstate{V'}{\sstate_4}{\heap}{\perms}{\env}$, it suffices to show that $\simheap{V'}{\oheap(\sstate_4)}{\heap}{\perms}$ because $\sheap(\sstate_4) = \sheap(\sstate_1)$ and $\senv(\sstate_4) = \senv(\sstate_1)$ by Lemma~\ref{lem:eval-unchanged} where $\simstate{V}{\sstate_1}{\heap}{\perms}{\env}$. Furthermore, since $\oheap(\sstate_4) = \oheap(\sstate_2) \cup (s' \setminus \sheap(\sstate_2))$ where $\simheap{V}{\oheap(\sstate_2)}{\heap}{\perms}$, it suffices to show $\simheap{V'}{s' \setminus \sheap(\sstate_2)}{\heap}{\perms}$. For each heap chunk in $s' \setminus \sheap(\sstate_2)$, there is a corresponding run-time check $\rtassert{V'}{\heap}{\perms}{r}$ showing that the heap chunk holds in the corresponding state by \refrule{CheckAcc} and \refrule{CheckPred}. Accordingly, $\simheap{V'}{s' \setminus \sheap(\sstate_2)}{\heap}{\perms}$ as desired.
        \end{snugshade}
    \end{enumcases}
\end{proof}

\begin{lemma}[Progress]\label{lem:eval-progress}
  If $V$ is some initial valuation and $V(\pc(\sstate)) = \ktrue$, then for some $\sstate'$, $t$, and $\scheck$,
  $$\seval{\sstate}{e}{t}{\sstate'}{\scheck} ~\text{and}~ V'(\pc(\sstate')) = \ktrue$$
  where $V' = V[\seval{\sstate}{e}{t}{\sstate'}{\scheck} \mid \heap, \env]$ for some $\heap$.
\end{lemma}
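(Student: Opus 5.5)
We prove this by structural induction on $e$, generalizing over $\sstate$ and $V$. In each case we pick a symbolic evaluation rule that applies and whose resulting path condition stays true under the corresponding valuation. The key observation is that the corresponding valuation (Definition~\ref{def:eval-valuation}) only extends $V$: every added conjunct must come out $\ktrue$ under it, and fresh symbols are mapped to the values computed from $\heap$.

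\textbf{Base cases.} For literals and variables, \refrule{SEvalLiteral} and \refrule{SEvalVar} leave $\sstate$ unchanged, so $V' = V$ works.

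\textbf{Operators.} For $e_1 \oplus e_2$ and $\kneg e$ we apply the induction hypothesis to the first subexpression. We then apply it again to the second, starting from the resulting state and valuation, and chain the two valuations. For $e_1 \kor e_2$ and $e_1 \kand e_2$, we first evaluate $e_1$ by induction to obtain $t_1$ and $V_1$. We then choose between the A and B rules according to whether $V_1(t_1) = \ktrue$ or $\kfalse$. The added conjunct $t_1$ or $\kneg t_1$ is therefore true under $V_1$. In the B case we continue by induction on $e_2$ from the strengthened state, which is legitimate because its path condition is true under $V_1$.

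\textbf{Field access.} For $e.f$ we evaluate $e$ by induction and then split on the heap contents. If some chunk in $\sheap$ or $\oheap$ has a receiver that the path condition proves equal to $t_e$, we use \refrule{SEvalField} or \refrule{SEvalFieldOptimistic}; neither changes the path condition. Otherwise we use \refrule{SEvalFieldImprecise} or \refrule{SEvalFieldFailure}, depending on $\imp(\sstate)$. Both add only a fresh symbol, which the valuation maps to $\heap(V(t_e),f)$, and leave the path condition untouched. Progress does not require the checks to succeed, so the Failure rules are acceptable here.

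\textbf{Unfolding expressions.} We first evaluate the arguments by induction. If $p \in \svis$, one of the three Implicit rules always applies, according to membership of $\pair{p}{\multiple t}$ in the heaps and to $\imp$. None of them touches the path condition, and the valuation maps $t_0$ to the run-time value of $e_0$.

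If $p \notin \svis$, we choose among Precise, ImpreciseA and ImpreciseB by $\imp(\sstate_1)$ and the precision of $\fpred(p)$. We then need three auxiliary progress facts:
\begin{itemize}
\item a consume step of $\pe$ exists with a true path condition, which we would cite as the consume progress lemma (the Failure variants guarantee that a derivation exists);
\item a produce step of $\fpred(p)$ exists with a true path condition under the extended valuation;
\item evaluation of $e_0$ progresses by the induction hypothesis.
\end{itemize}
The resets in $\sstate_6$ do not alter the path condition.

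\textbf{Pure functions.} For \refrule{SEvalFunctionExplicit} we proceed as in the unfolding case: evaluate the arguments, consume and produce $\ffuncpre(f)$, and evaluate $\ffuncbody(f)$. The final path condition adds $\fts \keq t'$, which is true because the valuation maps $\fts$ to $V(t')$. It also adds the axiomatized postcondition, whose truth we obtain from well-formedness (Definition~\ref{def:well-formed-prog}), since postconditions provably hold for the result. \refrule{SEvalFunctionImplicit} is handled the same way, using the run-time value of the body.

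\textbf{Main obstacle.} The hardest part is the well-foundedness of the induction. The premises involve $\fpred(p)$ and $\ffuncbody(f)$, which are not subterms of $e$. Produce and consume also call evaluation recursively, so this lemma must be proved mutually with the produce and consume progress lemmas, as the paper describes.

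The measure I would try first is lexicographic: the number of predicates and functions not yet in $\svis$, followed by term size. The Explicit rules add $p$ or $f$ to $\svis$ before recursing into a body, and the Implicit rules do not recurse into bodies at all. There is one wrinkle: $\svis$ is reset to $\svis(\sstate_3)$ for the inner expression $e_0$ in the unfolding rules, so for $e_0$ the recursion has to be justified by term size instead.

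A second concern is whether the axiomatized postcondition is actually true under the constructed valuation. If the well-formedness appeal turns out to be insufficient, I would extend the valuation so that the $\fts$ symbol is interpreted consistently with the snapshot $s$.
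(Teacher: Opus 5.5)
Your case analysis matches the paper's: structural induction on $e$, choosing the \textsc{A}/\textsc{B} rule by the value of $t_1$, the four field rules, the three Implicit unfolding rules, and consume progress, produce progress and the induction hypothesis for the Explicit rules. The literal, variable, operator, field and Implicit cases are fine. The gap is in the two steps where a conjunct coming from outside $e$ must be true under $V'$. Neither follows from $V(\pc(\sstate)) = \ktrue$ alone; in fact the statement is false with only that hypothesis.

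\textbf{The unfolding case.} Lemma~\ref{lem:produce-progress} requires $\assertion{\heap}{\perms}{\env}{\fpred(p)}$ and $\simstate{V}{\sstate}{\heap}{\perms}{\env}$, and no choice of $\heap$ substitutes for them. Take $p() = \kfalse$ (a legal, vacuously self-framed body), $\lnot\imp(\sstate)$, $p \notin \svis(\sstate)$, $\pc(\sstate) = \ktrue$ and $e = \sunfolding{p()}{1}$. Only \textsc{SEvalUnfoldingPrecise} applies, and its produce step is \textsc{SProduceExpr} on $\kfalse$. So $\kfalse$ lands in $\pc(\sstate_6) = \pc(\sstate_5)$ in every derivation.

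\textbf{The axiomatized postcondition.} Take $f$ with precondition $\kacc(x.\ttt{val}) * x.\ttt{val} > 0$, body $x.\ttt{val}$ and postcondition $\kresult > 0$, which is well-formed. Use a precise state whose only chunk is $\triple{\ttt{val}}{t_x}{a}$, with $\pc = \ktrue$ and $V(a) = -1$. Consuming the precondition goes through \textsc{SConsumeValueFailure}, and $\pc(\sstate_6)$ is built from $\pc(\sstate_2)$ anyway. Definition~\ref{def:eval-valuation} pins $V'(\fts) = V(a) = -1$, so the conjunct $\fts > 0$ is false. Well-formedness gives the postcondition only \emph{given} the precondition, which nothing here provides, and you are not free to reinterpret $\fts$.

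The paper's proof takes the same unsupported step: it says ``in the corresponding state the predicate assertion must hold'', and it asserts $\pc(\sstate_5) \implies \pc(\sstate_6)$, which fails because $\fts$ is fresh. So the repair belongs in the statement. Assume a run-time state with $\simstate{V}{\sstate}{\heap}{\perms}{\env}$ and $\frm{\heap}{\perms}{\env}{e}$, so that \textsc{FrameUnfolding} and \textsc{FrameFunction} provide the predicate and precondition assertions in $\heap$. Then carry correspondence (at least its value-agreement part) through the induction alongside path-condition truth, as in Lemma~\ref{lem:eval-correspondence}.

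\textbf{Termination.} Your lexicographic measure does not cover the recursion performed in both function rules when consuming $\ffuncpre(f)$ and computing the snapshot $\sfoot{\ffuncpre(f)}{\sstate_2}$. These steps evaluate parts of $\ffuncpre(f)$, which is not a subterm of $e$, in $\sstate_2$. Its visited set does not yet contain $f$ in \textsc{SEvalFunctionExplicit} and is not enlarged in \textsc{SEvalFunctionImplicit}. So if a precondition contains function applications or unfolding expressions, neither component of your measure need decrease. Definition~\ref{def:well-formed-prog} does not exclude such preconditions. Making the mutual induction well-founded therefore needs an extra restriction, for instance an acyclic dependency order on the functions and predicates reachable from preconditions. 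The paper's ``by induction on $e$'', which applies the hypothesis to $\ffuncbody(f)$, leaves this open as well.
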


\begin{proof}
    By induction on $e$.
    \begin{enumcases}
        \case $l \in \Literal$:
          By \refrule{SEvalLiteral} $\seval{\sstate}{l}{\_}{\sstate}{\_}$. Then $V'(\pc(\sstate)) = \ktrue$ by assumptions.
    
        \case $x \in \Var$:
          By \refrule{SEvalVar} $\seval{\sstate}{x}{\_}{\sstate}{\_}$. Then $V'(\pc(\sstate)) = \ktrue$ by assumptions.
    
        \case $e.f$ where $e \in \Expr, f \in \Field$:
          By induction, $\seval{e}{t_e}{\sstate}{\scheck}$ and $V'(\pc(\sstate')) = \ktrue$ where $V'$ is the corresponding valuation. Then one of the following must apply:
          
          \subcase If $\existential{t_e', t}{\big[\pc(\sstate') \implies t_e' \keq t_e \big] \wedge \triple{f}{t_e'}{t} \in \sheap(\sstate')}$, then \refrule{SEvalField} applies and thus $\seval{\sstate}{e.f}{\_}{\sstate'}{\_}$. Let $\sstate'' = \sstate'$.
          \subcase Otherwise, $\nexistential{t_e', t}{\big[\pc(\sstate') \implies t_e' \keq t_e \big] \wedge \triple{f}{t_e'}{t} \in \sheap(\sstate')}$. Then, if $\existential{t_e', t}{\big[\pc(\sstate') \implies t_e' \keq t_e \big] \wedge \triple{f}{t_e'}{t} \in \oheap(\sstate')}$, \refrule{SEvalFieldOptimistic} applies and thus $\seval{\sstate}{e.f}{\_}{\sstate'}{\_}$. Let $\sstate'' = \sstate'$.
          \subcase Otherwise, $\nexistential{t_e', t}{\big[\pc(\sstate') \implies t_e' \keq t_e \big] \wedge \triple{f}{t_e'}{t} \in \sheap(\sstate') \cup \oheap(\sstate')}$. Then if $\imp(\sstate')$, \refrule{SEvalFieldImprecise} applies. In this case, $\seval{\sstate}{e.f}{\_}{\sstate''}{\_}$ where $\pc(\sstate'') = \pc(\sstate')$.
          \subcase Otherwise, $\neg \imp(\sstate')$ and $\nexistential{t_e', t}{\big[\pc(\sstate') \implies t_e' \keq t_e \big] \wedge \triple{f}{t_e'}{t} \in \sheap(\sstate') \cup \oheap(\sstate')}$. Thus \refrule{SEvalFieldFailure} applies and thus $\seval{\sstate}{e.f}{\_}{\sstate'}{\_}$. Let $\sstate'' = \sstate'$.
    
          In all of these subcases, $\seval{\sstate}{e.f}{\_}{\sstate''}{\_}$ where $\pc(\sstate'') = \pc(\sstate')$, and thus $V'(\pc(\sstate'')) = V'(\pc(\sstate')) = \ktrue$. By definition \ref{def:eval-valuation}, in all of these subcases $V[\seval{\sstate}{e.f}{\_}{\sstate''}{\_} \mid \heap]$ extends $V'$.
    
        \case $e_1 \oplus e_2$; $e_1, e_2 \in \Expr$:
          By induction $\seval{\sstate}{e_1}{\_}{\sstate'}{\_}$ for some $e_1, \sstate'$ where $V'$ is the corresponding valuation and $V'(\pc(\sstate')) = \ktrue$. Then by induction $\seval{\sstate'}{e_2}{\_}{\sstate''}$ where $V''$ is the corresponding valuation extending $V'$ and $V''(\pc(\sstate'')) = \ktrue$. Finally, by \refrule{SEvalOp} $\seval{\sstate}{e_1 \oplus e_2}{\_}{\sstate''}{\_}$. By definition \ref{def:eval-valuation}, $V[\seval{\sstate}{e_1 \oplus e_2}{\_}{\sstate''}{\_} \mid \heap]$ extends $V''$.
    
        \case\label{case:seval-progress-or} $e_1 \kor e_2$; $e_1, e_2 \in \Expr$:
          By induction $\seval{\sstate}{e_1}{t_1}{\sstate'}{\_}$ for some $e_1, t_e, \sstate'$ where $V'$ is the corresponding valuation and $V'(\pc(\sstate')) = \ktrue$. Then one of the following cases must apply since the program is well-typed:
    
          \subcase $V'(t_1) = \ktrue$: Then by \refrule{SEvalOrA}, $\seval{\sstate}{e_1 \vee e_2}{t_1}{\sstate''}{\_}$ where $\pc(\sstate'') = \pc(\sstate') \kand t_1$. Let $V''$ be the corresponding valuation. Now $V''(\pc(\sstate'')) = V''(\pc(\sstate')) \wedge V''(t_1) = \ktrue$, which completes the proof.
          
          \subcase $V'(t_1) = \kfalse$: Let $\hat{\sstate}' = \sstate'[\pc = \pc(\sstate') \kand \kneg t_1]$. Then $V'(\pc(\hat{\sstate}')) = V'(\sstate') \wedge \neg V'(t_1) = \ktrue$. By induction $\seval{\hat{\sstate}'}{e_2}{t_2}{\sstate''}{\_}$ for some $e_2, t_2, \sstate''$ where $V''$ is the corresponding valuation and $V''(\pc(\sstate'')) = \ktrue$. Finally, by \refrule{SEvalOrB}, $\seval{\sstate}{e_1 \kor e_2}{\_}{\sstate''}{\_}$. By definition \ref{def:eval-valuation} $V[\seval{\sstate}{e_1 \kor e_2}{\_}{\sstate''}{\_} \mid \heap]$ extends $V''$.
    
        \case $e_1 \kand e_2$; $e_1, e_2 \in \Expr$: Similar to case \ref{case:seval-progress-or}.
    
        \case $\kneg e$; $e \in \Expr$:
          By induction $\seval{\sstate}{e}{\_}{\sstate'}{\_}$ and $V'(\sstate') = \ktrue$ where $V'$ is the corresponding derivation. Then by \refrule{SEvalNeg}, $\seval{\sstate}{\kneg e}{\_}{\sstate'}{\_}$ and the corresponding valuation extends $V'$ by definition \ref{def:eval-valuation}.

        \definecolor{shadecolor}{gray}{0.9}
        \begin{snugshade} \case $\sunfolding{\pe}{e_0}$ where $p \in \Predicate$ and $\multiple{e}, e_0 \in \Expr$: 
        
        By the induction hypothesis, $\multiple{\seval{\sstate_1}{e}{t}{\sstate_2}{\scheck_1}}$ with $V'(\pc(\sstate_2)) = \ktrue$.

            \subcase $p \in \svis(\sstate_1), \pair{p}{\multiple{t}} \in \sheap(\sstate_2) \cup \oheap(\sstate_2)$:
            In this case, \refrule{SEvalUnfoldingImplicitPrecise} applies. The final state is $\sstate_2$ so $V'(\pc(\sstate_2)) = \ktrue$ as desired.

            \subcase $p \in \svis(\sstate_1), \pair{p}{\multiple{t}} \notin \sheap(\sstate_2) \cup \oheap(\sstate_2), \imp(\sstate_2)$:
            In this case, \refrule{SEvalUnfoldingImplicitImprecise} applies. Since $\pc(\sstate_3) = \pc(\sstate_2)$, $V'(\pc(\sstate_3)) = V'(\pc(\sstate_2)) = \ktrue$.

            \subcase $p \in \svis(\sstate_1), \pair{p}{\multiple{t}} \notin \sheap(\sstate_2) \cup \oheap(\sstate_2), \lnot \imp(\sstate_2)$:
            In this case, \refrule{SEvalUnfoldingImplicitFailure} applies. The final state is $\sstate_2$ so $V'(\pc(\sstate_2)) = \ktrue$ as desired.

            \subcase $p \notin \svis(\sstate_1), \lnot \imp(\sstate_1)$:
            In this case, \refrule{SEvalUnfoldingPrecise} applies, so by Lemma~\ref{lem:cons-progress}, $\scons{\sstate_2}{\pe}{\sstate_3}{\scheck_2}$ with $V'(\pc(\sstate_3)) = \ktrue$. Since $\pair{p}{\multiple{t}}$ holds in $\sstate_2$ due to the consume operation, in the corresponding state the predicate assertion must hold, so by Lemma~\ref{lem:produce-progress}, $\sproduce{\sstate_3}{\fpred(p)}{\sstate_4}$ with $V'(\pc(\sstate_4))$. By the induction hypothesis, $\seval{\sstate_4}{e_0}{t_0}{\sstate_5}{\scheck_3}$ with $V'(\pc(\sstate_5) - \ktrue$. Then, since $\pc(\sstate_5) = \pc(\sstate_6)$, $V'(\pc(\sstate_6)) = V'(\pc(\sstate_5)) = \ktrue$.

            \subcase $p \notin \svis(\sstate_1), \imp(\sstate_1)$:
            In this case, \refrule{SEvalUnfoldingImpreciseA} or \refrule{SEvalUnfoldingImpreciseB}. This case is similar to the previous case, except for the information that remains in the optimistic heap at the end.


        \case $\fe$ where $f \in \Function$ and $\multiple{e} \in \Expr$: By the induction hypothesis, $\multiple{\seval{\sstate_1}{e}{t}{\sstate_2}{\scheck_1}}$ with $V_2(\pc(\sstate_2)) = \ktrue$. Then, by Lemma~\ref{lem:cons-progress}, $\scons{\sstate_2[\senv = \senv(\sstate_2)[\multiple{x \mapsto t}]]}{\ffuncpre(f)}{\sstate_3}{\_}$ where $V_3(\pc(\sstate_3)) = \ktrue$. 
    
            \subcase $f \notin \svis(\sstate_1)$: Because the precondition holds due to the consume operation, in the corresponding state the dynamic assertion must hold, so by Lemma~\ref{lem:produce-progress}, $\sproduce{\sstate_3}{\fpred(p)}{\sstate_4}$ with $V'(\pc(\sstate_4))$. By induction, $\seval{\sstate_4[\sheap = \sheap', \oheap = \oheap', \svis = \svis(\sstate_3) \cup \set{f}]}{\ffuncbody(f)}{t'}{\sstate_5}{\_}$ with $V_5(\pc(\sstate_5))$. By definition, $\pc(\sstate_5) \implies \pc(\sstate_6)$ so $V_5(\pc(\sstate_6)) = \ktrue$.

            \subcase $f \in \svis(\sstate_1)$: By definition, $\pc(\sstate_3) \implies \pc(\sstate_4)$ so $V_3(\pc(\sstate_4)) = \ktrue$.
        \end{snugshade}
    \end{enumcases}
\end{proof}

\definecolor{shadecolor}{gray}{0.9}
\begin{snugshade}
\begin{lemma}[Correspondence]\label{lem:eval-correspondence}
  Let $\seval{\sstate}{e}{t}{\sstate'}{\_}$ with corresponding valuation $V'$. If $\simstate{V}{\sstate}{\heap}{\perms}{\env}$ such that $\efoot{\heap}{\env}{e} \subseteq \perms$ and $\eval{\heap}{\env}{e}{v}$ then 
  \[ \simstate{V'}{\sstate'}{\heap}{\perms}{\env} \quad v = V'(t) \]
\end{lemma}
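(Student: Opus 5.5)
We argue by induction on the derivation of $\seval{\sstate}{e}{t}{\sstate'}{\_}$, in the same case structure as Lemma~\ref{lem:seval-soundness}. The difference is the hypotheses. Soundness assumes the run-time checks hold and that $V'(\pc(\sstate'))=\ktrue$, and derives evaluation and framing. Here we assume the concrete evaluation $\eval{\heap}{\env}{e}{v}$ and the footprint inclusion $\efoot{\heap}{\env}{e}\subseteq\perms$. From these we must recover correspondence of the final state and agreement of $v$ with $V'(t)$.

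In each case we proceed in three steps. First, we invert the concrete evaluation to obtain derivations for the subexpressions. Second, we use the definition of $\efoot{\heap}{\env}{\cdot}$ to split the footprint hypothesis into hypotheses for those subexpressions. Third, we apply the induction hypothesis in order, threading the valuation as in Definition~\ref{def:eval-valuation}.

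The base cases \refrule{SEvalLiteral} and \refrule{SEvalVar} are immediate from $\simenv{V}{\senv}{\env}$.

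For \refrule{SEvalOp} and \refrule{SEvalNeg} we chain the induction hypotheses. For the short-circuit rules we must show the branch taken symbolically agrees with the concrete one. In \refrule{SEvalOrA}, for example, the induction hypothesis gives $V'(t_1)=v_1$. The path condition becomes $\pc\kand t_1$, so we need $V'(t_1)=\ktrue$, which must come from the concrete evaluation having taken \refrule{EvalOrA}. The rule mismatch (\refrule{SEvalOrA} against \refrule{EvalOrB}) is exactly where correspondence of the path condition would fail. We will therefore either restrict to derivations whose branch agrees with the concrete one or use $V'(\pc(\sstate'))=\ktrue$ as an implicit side condition; the first is what I would try.

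For the field rules, $\efoot{\heap}{\env}{e.f}\subseteq\perms$ gives $\pair{V'(t_e)}{f}\in\perms$. That fact and $V'(t)=\heap(V'(t_e),f)$ extend $\oheap$-correspondence in \refrule{SEvalFieldImprecise}. Notably, this replaces the run-time check used in Lemma~\ref{lem:seval-soundness}.

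For the unfolding rules with explicit unfolding, we first take $\efoot{\heap}{\env}{\pe}\subseteq\perms$. Combined with correspondence, this lets us appeal to the consume and produce correspondence lemmas to obtain $\simstate{V'}{\sstate_4}{\heap}{\perms}{\env}$. We then apply the induction hypothesis to $e_0$ using $\efoot{\heap}{\env}{e_0}\subseteq\perms$, and restore the heaps as in the soundness proof. \refrule{EvalUnfolding} gives $v=V'(t_0)$.

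For \refrule{SEvalUnfoldingImplicitPrecise} and \refrule{SEvalUnfoldingImplicitImprecise}, $V'(t_0)$ is defined to be the concrete value of $e_0$, so $v=V'(t_0)$ holds by definition. In the imprecise case we additionally need the predicate assertion for the new $\oheap$ chunk. The concrete evaluation alone does not provide it, so we will argue it from the footprint hypothesis together with the fact that the predicate holds concretely; this likely requires strengthening the hypothesis to include $\frm{\heap}{\perms}{\env}{e}$.

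The pure-function cases follow the same pattern. In \refrule{SEvalFunctionExplicit} the valuation maps $\fts$ to $V'(t')$, and the induction hypothesis on the body gives $V'(t')=v$ via \refrule{EvalFunction}. In \refrule{SEvalFunctionImplicit}, $\fts$ is mapped to the concrete body value directly. The main obstacle is the path-condition conjunct $\fax{\sstate}{\fts}{\ffuncpost(f)}$. We must show it evaluates to $\ktrue$ under $V'$, which requires well-formedness (Definition~\ref{def:well-formed-prog}: post-conditions hold for the result). We also need a small auxiliary lemma relating $\operatorname{axiomatize}$ under the snapshot to concrete evaluation of the post-condition with $\kresult\mapsto v$. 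I would prove that lemma first, by induction on the post-condition expression.
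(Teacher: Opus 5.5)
Your skeleton (induction on the symbolic derivation, inverting $\Downarrow$, splitting the footprint, threading the valuation of Definition~\ref{def:eval-valuation}) is the paper's. You are also right that the short-circuit cases cannot be done as stated. For $\kfalse \kor \ktrue$ derived by \textsc{SEvalOrA}, the final path condition is $\pc(\sstate) \kand \kfalse$ and $V'(t) = \kfalse$, although $\eval{\heap}{\env}{\kfalse \kor \ktrue}{\ktrue}$ and the footprint is empty. The paper's proof bridges this by citing Lemma~\ref{lem:eval-progress} for $V'(\pc(\sstate'')) = \ktrue$. That step is not valid: the lemma only asserts that \emph{some} derivation has a true path condition, not the given one. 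So an extra hypothesis is genuinely needed. However, your fallback, $V'(\pc(\sstate')) = \ktrue$ on the final state, is not enough. \textsc{SEvalFunctionExplicit} builds $\sstate_6$ from $\sstate_2$ and discards $\pc(\sstate_5)$, so that side condition never reaches the body. Take a function with pre- and post-condition $\ktrue$ and body $\kfalse \kor \ktrue$. Evaluating the body by \textsc{SEvalOrA} gives $V'(\fts) = \kfalse$ with a true final path condition, whereas \textsc{EvalFunction} returns $\ktrue$. You therefore need your first option, made precise as a branch-agreement relation that recurses into the produce, consume, and body sub-derivations (\textsc{SProduceIfA}/\textsc{SProduceIfB} and \textsc{SConsumeConditionalA}/\textsc{SConsumeConditionalB} branch as well). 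The alternative is to change the rule so that it keeps $\pc(\sstate_5)$.

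The other gap is in the explicit unfolding and function cases, where you appeal to \emph{consume and produce correspondence lemmas} that the development does not have. Lemma~\ref{lem:cons-soundness} needs $\rtassert{V'}{\heap}{\perms}{\scheck_2}$, which is not assumed here. Lemma~\ref{lem:produce-soundness} needs $\assertion{\heap}{\perms}{[\multiple{x \mapsto V'(t)}]}{\fpred(p)}$, together with separation of the produced body from $\sheap(\sstate_3)$.

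Correspondence of $\sstate_2$ supplies that assertion only when the consume went through \textsc{SConsumePredicate}. After \textsc{SConsumePredicateImprecise} or \textsc{SConsumePredicateFailure}, neither source provides it:
\textsc{EvalUnfolding} ignores $p$, and the footprint inclusion yields permissions but not the pure conjuncts of the body, disjointness of its $\kacc$ chunks, or nested predicate bodies.
Hence $\sstate_4$ need not correspond, and the induction hypothesis for $e_0$ cannot be applied. The paper's appeal to \emph{the previous state correspondence} has the same hole.

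The framing strengthening you rightly propose for the implicit-imprecise case is needed here as well, because \textsc{FrameUnfolding} and \textsc{FrameFunction} carry exactly the predicate and precondition assertions. The precondition assertion is also what you need before well-formedness can discharge the axiomatized post-condition. Even with framing, the \textsc{SConsumePredicateFailure} sub-case leaves $\sheap(\sstate_2)$ in place, so nothing makes the produced body disjoint from it. That sub-case needs its own argument rather than Lemma~\ref{lem:produce-soundness}.
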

  \begin{proof}
    By induction on $\seval{\sstate}{e}{t}{\sstate'}{\_}$:
    \begin{enumcases}
    \case \refrule{SEvalLiteral} -- $\seval{\sstate}{l}{l}{\sstate}{\_}$: 
    
    Then $V' = V$, so trivially $\simstate{V'}{\sstate'}{\heap}{\perms}{\env}$. In addition $\eval{\heap}{\env}{l}{l}$ and by definition $V(l) = V'(l) = l$. 
    
    \case \refrule{SEvalVar} -- ${\seval{\sstate}{x}{\senv(\sstate)(x)}{\sstate}{\_}}$: 
    
    Then $V' = V$, so trivially $\simstate{V'}{\sstate'}{\heap}{\perms}{\env}$. Therefore $\simenv{V'}{\senv(\sstate)}{\env}$, and so $\env(x) = V'(\senv(\sstate)(x))$. Then since $\eval{\heap}{\perms}{x}{\env(x)}$, $\eval{\heap}{\perms}{x}{V'(\senv(\sstate)(x))}$.
    
    \case \refrule{SEvalOrA} -- $\seval{\sstate}{e_1 \kor e_2}{t_1}{\sstate''}{\_}$: 
    
    By \refrule{SEvalOrA}, $\seval{\sstate}{e_1}{t_1}{\sstate'}{\_}$, and the corresponding valuation for this case is $V' = V[\seval{\sstate}{e_1}{t_1}{\sstate'}{\_} \mid H]$. Since $\simstate{V}{\sstate}{\heap}{\perms}{\env}$, by lemma \ref{lem:eval-progress}, $V'(\pc(\sstate'')) = \ktrue$ and $V'(\pc(\sstate') = \ktrue$. Then, since $\pc(\sstate'') = \pc(\sstate') \kand t_1$, $V'(t_1) = \ktrue$. 

    Now since $\eval{\heap}{\env}{e_1 \kor e_2}{v}$, either \refrule{EvalOrA} or \refrule{EvalOrB} must apply. In either case, $\eval{\heap}{\env}{e_1}{v_1}$ for some $v_1$. Since $\efoot{\heap}{\env}{e_1 \kor e_2} \subseteq \perms$, by definition \ref{def:efoot}, $\efoot{\heap}{\env}{e_1} \subseteq \perms$. Then by induction $\simstate{V'}{\sstate'}{\heap}{\perms}{\env}$ and $\eval{\heap}{\env}{e_1}{V'(t_1)}$. 

    Since $\sstate''$ and $\sstate'$ differ only in their $\pc$ components, $\simstate{V'}{\sstate''}{\heap}{\perms}{\env}$ as $V'(\pc(\sstate')) = \ktrue$. In addition, since $V'(t_1) = \ktrue$,  $\eval{\heap}{\env}{e_1}{\ktrue}$, which means \refrule{EvalOrA} must apply. Then $\eval{\heap}{\env}{e_1 \kor e_2}{\ktrue}$, which is $V'(t_1)$.
    
    \case \refrule{SEvalOrB} -- $\seval{\sstate}{e_1 \kor e_2}{t_2}{\sstate''}{\_}$: 
    
    By \refrule{SEvalOrB}, $\seval{\sstate}{e_1}{t_1}{\sstate'}{\_}$. Let $V' = V[\seval{\sstate}{e_1}{t_1}{\sstate'}{\_} \mid \heap]$. Let $\hat{\sigma} = \sigma'[\pc = \pc(\sigma') \kand \kneg t_1]$. Also by \refrule{SEvalOrB}, $\seval{\hat{\sstate}}{e_2}{t_2}{\sstate''}{\_}$ and the corresponding valuation for this case is $V'' = V'[\seval{\hat{\sstate}}{e_2}{t_2}{\sstate''}{\_} \mid H]$. 
    
    Since $\simstate{V}{\sstate}{\heap}{\perms}{\env}$, by lemma \ref{lem:eval-progress}, $V'(\pc(\sstate')) = \ktrue$ and $V''(\pc(\sstate'')) = \ktrue$. Then by lemma \ref{lem:eval-subpath}, $\pc(\sstate'') \implies \pc(\hat{\sstate}) = \pc(\sstate') \kand \kneg t_1$. Since corresponding valuations must extend, $V''(t_1) = V'(t_1) = \kfalse$. 

    Now since $\eval{\heap}{\env}{e_1 \kor e_2}{v}$, either \refrule{EvalOrA} or \refrule{EvalOrB} must apply. In either case, $\eval{\heap}{\env}{e_1}{v_1}$ for some $v_1$. Since $\efoot{\heap}{\env}{e_1 \kor e_2} \subseteq \perms$, by definition \ref{def:efoot}, $\efoot{\heap}{\env}{e_1} \subseteq \perms$. Then by induction $\simstate{V'}{\sstate'}{\heap}{\perms}{\env}$ and $\eval{\heap}{\env}{e_1}{V'(t_1)}$. 

    By lemma \ref{lem:eval-progress} $V'(\pc(\sstate')) = \ktrue$. Then $\hat{\sstate}$ and $\sstate'$ differ only in their $\pc$ components, $\simstate{V'}{\hat{\sstate}}{\heap}{\perms}{\env}$. In addition, since $V'(t_1) = \kfalse$, $\eval{\heap}{\env}{e_1}{\kfalse}$, which means that \refrule{EvalOrB} must apply. Therefore, $\eval{\heap}{\env}{e_2}{v_2}$ for some $v_2$. In addition, by definition \ref{def:efoot} and since $\efoot{\heap}{\env}{e_1 \kor e_2} \subseteq \perms$, $\efoot{\heap}{\env}{e_2} \subseteq \perms$. Then by induction, $\simstate{V''}{\sstate''}{\heap}{\perms}{\env}$ and $\eval{\heap}{\env}{e_2}{V''(t_2)}$. By \refrule{EvalOrB}, $\eval{\heap}{\env}{e_1 \kor e_2}{v_2}$, and $v_2 = V''(t_2)$. 
     
    \case \refrule{SEvalAndA} -- $\seval{\sstate}{e_1 \kand e_2}{t_1}{\sstate''}{\_}$: 
    
    By \refrule{SEvalAndA}, $\seval{\sstate}{e_1}{t_1}{\sstate'}{\_}$, and the corresponding valuation for this case is $V' = V[\seval{\sstate}{e_1}{t_1}{\sstate'}{\_} \mid \heap]$. Since $\simstate{V}{\sstate}{\heap}{\perms}{\env}$, by lemma \ref{lem:eval-progress} $V'(\pc(\sstate'')) = \ktrue$ and $V'(\pc(\sstate')) = \ktrue$. Then, since $\pc(\sstate'') = \pc(\sstate') \kand \kneg t_1$, $V'(t_1) = \kfalse$.

    Now since $\eval{\heap}{\env}{e_1 \kand e_2}{v}$, either \refrule{EvalAndA} or \refrule{EvalAndB} must apply. In either case, $\eval{\heap}{\env}{e_1}{v_1}$ for some $v_1$. Since $\efoot{\heap}{\env}{e_1 \kand e_2} \subseteq \perms$, by definition \ref{def:efoot}, $\efoot{\heap}{\env}{e_1} \subseteq \perms$. Then by induction $\simstate{V'}{\sstate'}{\heap}{\perms}{\env}$ and $\eval{\heap}{\env}{e_1}{V'(t_1)}$.

    Since $\sstate''$ and $\sstate'$ differ only in their $\pc$ components, $\simstate{V'}{\sstate''}{\heap}{\perms}{\env}$ as $V'(\pc(\sstate')) = \ktrue$. In addition, since $V'(t_1) = \kfalse$, $\eval{\heap}{\env}{e_1}{\kfalse}$, which means \refrule{EvalAndA} must apply. Then $\eval{\heap}{\env}{e_1 \kand e_2}{\kfalse}$, which is $V'(t_1)$.
 
    \case \refrule{SEvalAndB} -- $\seval{\sstate}{e_1 \kand e_2}{t_2}{\sstate''}{\_}$: 
    
    By \refrule{SEvalAndB} $\seval{\sstate}{e_1}{t_1}{\sstate'}{\_}$. Let $V' = V[\seval{\sstate}{e_1}{t_1}{\sstate'}{\_} \mid \heap]$. Let $\hat{\sstate} = \sstate'[\pc = \pc(\sstate') \kand t_1]$. Also by \refrule{SEvalAndB}, $\seval{\hat{\sstate}}{e_2}{t_2}{\sstate''}{\_}$ and the corresponding valuation for this case is $V'' = V'[\seval{\hat{\sstate}}{e_2}{t_2}{\sstate''}{\_} \mid \heap]$.

    Since $\simstate{V}{\sstate}{\heap}{\perms}{\env}$, by lemma \ref{lem:eval-progress}, $V'(\pc(\sstate')) = \ktrue = V''(\pc(\sstate'))$ and $V''(\pc(\sstate'')) = \ktrue$. Then by lemma \ref{lem:eval-subpath}, $\pc(\sstate'') \implies \pc(\hat{\sstate'}) = \pc(\sstate') \kand t_1$. Since corresponding valuations must extend, $V''(t_1) = V'(t_1) = \ktrue$. 

    Now since $\eval{\heap}{\env}{e_1 \kand e_2}{v}$, either \refrule{EvalAndA} or \refrule{EvalAndB} must apply. In either case, $\eval{\heap}{\env}{e_1}{v_1}$ for some $v_1$. Since $\efoot{\heap}{\env}{e_1 \kand e_2} \subseteq \perms$, by definition \ref{def:efoot}, $\efoot{\heap}{\env}{e_1} \subseteq \perms$. Then by induction $\simstate{V'}{\sstate'}{\heap}{\perms}{\env}$ and $\eval{\heap}{\env}{e_1}{V'(t_1)}$.

    By lemma \ref{lem:eval-progress} $V'(\pc(\sstate')) = \ktrue$. Then since $\hat{\sstate}$ and $\sstate'$ differ only in their $\pc$ components, $\simstate{V'}{\hat{\sstate}}{\heap}{\perms}{\env}$. In addition, since $V'(t_1) = \ktrue$, $\eval{\heap}{\env}{e_1}{\ktrue}$, which means that \refrule{EvalAndB} must apply. Therefore, $\eval{\heap}{\env}{e_2}{v_2}$ for some $v_2$. Also since $\eval{\heap}{\env}{e_1}{\ktrue}$ and  $\efoot{\heap}{\env}{e_1 \kand e_2} \subseteq \perms$, $\efoot{\heap}{\env}{e_2} \subseteq \perms$. Then by induction, $\simstate{V''}{\sigma''}{\heap}{\perms}{\env}$ and $\eval{\heap}{\env}{e_2}{V''(t_2)}$. By \refrule{EvalAndB}, $\eval{\heap}{\perms}{e_1 \kand e_2}{v_2}$, and $v_2 = V''(t_2)$.

    \case \refrule{SEvalOp} -- $    \seval{\sstate}{e_1 \oplus e_2}{t_1 \oplus t_2}{\sstate''}{\_}$: 
    
    By \refrule{SEvalOp} $\seval{\sstate}{e_1}{t_1}{\sstate'}{\_}$. Let $V' = V[\seval{\sstate}{e_1}{t_1}{\sstate'}{\_} \mid \heap]$. Also by \refrule{SEvalOP}, $\seval{\sstate'}{e_2}{t_2}{\sstate''}{\_}$. Let $V'' = V'[\seval{\sstate'}{e_2}{t_2}{\sstate''}{\_} \mid \heap]$. Then $V''$ is the corresponding valuation for this case.  

    Since $\eval{\heap}{\env}{e_1 \oplus e_2}{v}$ for some $v$, by \refrule{EvalOP} $\eval{\heap}{\env}{e_1}{v_1}$ for some $v_1$ and $\eval{\heap}{\env}{e_2}{v_2}$ for some $v_2$. In addition, since $\efoot{\heap}{\perms}{e_1 \oplus e_2} \subseteq \perms$, by definition \ref{def:efoot}, $\efoot{\heap}{\perms}{e_1} \subseteq \perms$ and $\efoot{\heap}{\perms}{e_2} \subseteq \perms$. Then by induction $\simstate{V'}{\sstate'}{\heap}{\perms}{\env}$ and $\eval{\heap}{\env}{e_1}{V'(t_1)}$. As such, by induction $\simstate{V''}{\sstate''}{\heap}{\perms}{\env}$ and $\eval{\heap}{\env}{e_2}{V''(t_2)}$. Since $V''$ extends $V'$, by \refrule{EvalOP}, $\eval{\heap}{\env}{e_1 \oplus e_2}{V''(t_1) \oplus V''(t_2)}$ and $V''(t_1) \oplus V''(t_2) = V''(t_1 \oplus t_2)$.
    
    \case \refrule{SEvalNeg} -- $    \seval{\sstate}{\kneg e}{\kneg t}{\sstate'}{\_}$: 
    
    By \refrule{SEvalNeg} $\seval{\sstate}{e}{t}{\sstate'}{\_}$, and the corresponding valuation for this case is $V' = V[\seval{\sstate}{e}{t}{\sstate'}{\_} \mid \heap]$. Since $\eval{\heap}{\env}{\kneg e}{v}$ for some $v$, by \refrule{EvalNeg} $\eval{\heap}{\env}{e}{v'}$ for some $v'$. In addition, since $\efoot{\heap}{\perms}{\kneg e} \subseteq \perms$, by definition \ref{def:efoot}, $\efoot{\heap}{\perms}{e} \subseteq \perms$. Then by induction $\simstate{V'}{\sstate'}{\heap}{\perms}{\env}$ and $\eval{\heap}{\env}{e}{V'(t)}$. Then by \refrule{EvalNeg} $\eval{\heap}{\env}{\kneg e}{\neg V'(t)}$ and $\neg V'(t) = V'(\kneg t)$. 
    
    \case\label{case:sevalfield-correspondence} \refrule{SEvalField} -- $\seval{\sstate}{e.f}{t}{\sstate'}{\_}$: 
    
    By \refrule{SEvalField}, $\seval{\sstate}{e}{t_e}{\sstate'}{\_}$, and the corresponding valuation for this case is $V' = V[\seval{\sstate}{e}{t_e}{\sstate'}{\_} \mid \heap]$. 

    Since $\efoot{\heap}{\env}{e.f} \subseteq \perms$, by definition \ref{def:efoot} $\efoot{\heap}{\env}{e} \subseteq \perms$. In addition, since $\eval{\heap}{\env}{e.f}{v}$, by \refrule{EvalField} $\eval{\heap}{\env}{e}{\ell}$ for some $\ell$. Then by induction $\simstate{V'}{\sstate'}{\heap}{\perms}{\env}$, and $\eval{\heap}{\env}{e}{V'(t_e)}$. Then by \refrule{EvalField}, $\eval{\heap}{\env}{e.f}{\heap(V'(t_e), f)}$. 

    By \refrule{SEvalField}, $\pc(\sstate') \implies t_e \keq t'_e$, therefore $V'(t_e) = V'(t'_e)$. Also by \refrule{SEvalField} $\triple{f}{t_e'}{t} \in \sheap(\sstate')$. Therefore $V'(t) = \heap(V'(t_e'), f) = \heap(V'(t_e), f)$ since $\simstate{V'}{\sstate'}{\heap}{\perms}{\env}$. 
    
    \case \refrule{SEvalFieldOptimistic} -- $    \seval{\sstate}{e.f}{t}{\sstate'}{\_}$: Similar to case \ref{case:sevalfield-correspondence}.
    
    \case \refrule{SEvalFieldImprecise} -- $\seval{\sstate}{e.f}{t}{\sstate''}{\_}$: 
    
    By \refrule{SEvalFieldImprecise}, $\seval{\sstate}{e}{t_e}{\sstate'}{\_}$, $t = \ffresh$, and $\sstate'' = \sstate'[\oheap = \oheap(\sstate'); \triple{f}{t_e}{t}]$. Let $V' = V[\seval{\sstate}{e}{t_e}{\sstate'}{\_} \mid H][t \mapsto \heap(V(t_e), f)]$. Then $V'$ is the corresponding valuation for this case. 

    Since $\efoot{\heap}{\env}{e.f} \subseteq \perms$, by definition \ref{def:efoot} $\efoot{\heap}{\env}{e} \subseteq \perms$. In addition, since $\eval{\heap}{\env}{e.f}{v}$, by \refrule{EvalField} $\eval{\heap}{\env}{e}{\ell}$ for some $\ell$. Then, by induction $\simstate{V'}{\sstate'}{\heap}{\perms}{\env}$ and $\eval{\heap}{\env}{e}{V'(t_e)}$ (so $\ell = V'(t_e)$). In addition, $V'(t) = \heap(V'(t_e), f)$ by definition of $V'$. Then by \refrule{EvalField} $\eval{\heap}{\env}{e.f}{\heap(\ell, f)}$, and $V'(t) = \heap(\ell, f)$. 
    
    Since $\efoot{\heap}{\env}{e.f} \subseteq \perms$, by definition \ref{def:efoot} $\pair{\ell}{f} = \pair{V'(t_e)}{f} \subseteq \perms$. Then since $\simstate{V'}{\sstate'}{\heap}{\perms}{\env}$, $\simheap{V'}{\oheap(\sstate')}{\heap}{\perms}$. Therefore, $\simheap{V'}{\oheap(\sstate'); \triple{f}{t_e}{t}}{\heap}{\perms}$.

    Then $\simstate{V'}{\sstate''}{\heap}{\perms}{\env}$ since $\sstate'$ and $\sstate''$ differ only in their $\oheap$ components and $\oheap(\sstate'') = \oheap(\sstate'); \triple{f}{t_e}{t}$.

    \case \refrule{SEvalFieldFailure} -- $    \seval{\sstate}{e.f}{t}{\sstate'}{\_}$: 
    
    By \refrule{SEvalFieldFailure}, $\seval{\sstate}{e}{t_e}{\sstate'}{\_}$ and $t = \ffresh$. Let $V' = V[\seval{\sstate}{e}{t_e}{\sstate'}{\_}]$. Then the corresponding valuation for this case is $V'' = V'[t \mapsto \heap(V(t_e), f)]$. 
    
    Since $\efoot{\heap}{\env}{e.f} \subseteq \perms$, by definition \ref{def:efoot} $\efoot{\heap}{\env}{e} \subseteq \perms$. In addition, since $\eval{\heap}{\env}{e.f}{v}$, by \refrule{EvalField} $\eval{\heap}{\env}{e}{\ell}$ for some $\ell$. Then, by induction $\simstate{V'}{\sstate'}{\heap}{\perms}{\env}$ and $\eval{\heap}{\env}{e}{V'(t_e)}$ (meaning $\ell = V'(t_e)$). Since $V''$ extends $V'$, $\simstate{V''}{\sstate'}{\heap}{\perms}{\env}$ and $\eval{\heap}{\env}{e}{V''(t_e)}$. 

    Then by \refrule{EvalField}, $\eval{\heap}{\env}{e.f}{H(V''(t_e), f)}$. Since $V''$ extends $V'$ which extends $V$, $V''(t) = V(t) = H(V(t_e), f)$.

    \case \refrule{SEvalUnfoldingPrecise} -- $\seval{\sstate_1}{\sunfolding{\pe}{e_0}}{t_0}{\sstate_6}{\_}$:

    Inductively, $\simstate{V'}{\sstate_2}{\heap}{\perms}{\env}$. Since consuming $\pe$ only removes from the precise and optimistic heaps, $\sheap(\sstate_3) \subseteq \sheap(\sstate_2)$ and $\oheap(\sstate_3) \subseteq \oheap(\sstate_2)$. Therefore, using the definition of state correspondence, $\simstate{V'}{\sstate_3}{\heap}{\perms}{\env}$. Then, since $\assertion{\heap}{\perms}{[\multiple{x \mapsto V(t)}]}{\fpred(p)}$ in the run-time semantics due to the previous state correspondence which uses equirecursive semantics, all obtained predicate chunks and field chunks must hold after unfolding the predicate once, meaning $\simstate{V'}{\sstate_4}{\heap}{\perms}{\env}$. Then, by induction, $\simstate{V'}{\sstate_5}{\heap}{\perms}{\env}$. Since $\eval{\heap}{\env}{\sunfolding{\pe}{e_0}}{v}$, by \refrule{EvalUnfolding} $\eval{\heap}{\env}{e_0}{v}$. Inductively, $\simstate{V'}{\sstate_5}{\heap}{\perms}{\env}$ and $\eval{\heap}{\env}{\sunfolding{\pe}{e_0}}{V'(t_0)}$.
    
    \case \refrule{SEvalUnfoldingImpreciseA} and \refrule{SEvalUnfoldingImpreciseB} -- $\seval{\sstate_1}{\sunfolding{\pe}{e_0}}{t_0}{\sstate_6}{\_}$:
    
    Both of these cases are similar to the one for \refrule{SEvalUnfoldingPrecise}, since all that changes is the information retained in the optimistic heap at the end of evaluation.
    
    \case \refrule{SEvalUnfoldingImplicitPrecise} -- $\seval{\sstate_1}{\sunfolding{\pe}{e_0}}{t_0}{\sstate_2}{\_}$:

    By the induction hypothesis, $\simstate{V'}{\sstate_2}{\heap}{\perms}{\env}$ and $\eval{\heap}{\env}{e_0}{v_0}$ where $v_0 = V'(t_0)$. Therefore, by \refrule{EvalUnfolding}, $\eval{\heap}{\env}{\sunfolding{\pe}{e_0}}{v_0}$.

    \case \refrule{SEvalUnfoldingImplicitImprecise} -- $\seval{\sstate_1}{\sunfolding{\pe}{e_0}}{t_0}{\sstate_3}{\_}$:

    Similar to the case for \refrule{SEvalUnfoldingImplicitPrecise}.

    \case \refrule{SEvalUnfoldingImplicitFailure} -- $\seval{\sstate_1}{\sunfolding{\pe}{e_0}}{t_0}{\sstate_2}{\_}$:

    Similar to the case for \refrule{SEvalUnfoldingImplicitPrecise}.

    \case \refrule{SEvalFunctionExplicit} -- $\seval{\sstate_1}{\fe}{t'}{\sstate_6}{\_}$:
    
    Inductively, $\simstate{V'}{\sstate_2}{\heap}{\perms}{\env}$. Since consuming $\ffuncpre(f)$ only removes from the precise and optimistic heaps, $\sheap(\sstate_3) \subseteq \sheap(\sstate_2)$ and $\oheap(\sstate_3) \subseteq \oheap(\sstate_2)$. Therefore, using the definition of state correspondence, $\simstate{V'}{\sstate_3}{\heap}{\perms}{\env}$. Since producing directly after consuming results in precise and optimistic heaps that are subsets of their counterparts from $\sstate_1$, again $\simstate{V'}{\sstate_4}{\heap}{\perms}{\env}$. Inductively, $\simstate{V'}{\sstate_5}{\heap}{\perms}{\env}$. Since $\oheap(\sstate_6) = \oheap(\sstate_5)$ and $\sheap(\sstate_6) = \sheap(\sstate_2)$, $\simstate{V'}{\sstate_6}{\heap}{\perms}{\env}$ as desired.

    \case \refrule{SEvalFunctionImplicit} -- $\seval{\sstate_1}{\fe}{t'}{\sstate_5}{\_}$:

    Inductively, $\simstate{V'}{\sstate_2}{\heap}{\perms}{\env}$. Then, since $\sstate_4 = \sstate_2[\imp = \imp(\sstate_3), \oheap = \oheap', \pc = \pc(\sstate_2) \kand \ldots]$, it suffices to show $\simheap{V'}{\oheap'}{\heap}{\perms}$. If $\ffuncpre(f)$ is precise, then $s' \setminus \sheap(\sstate_2) = \emptyset$ because all heap chunks used in the body are optimistically proven, meaning that they already exist in the initial heap and $\simheap{V'}{\emptyset}{\heap}{\perms}$ trivially. Otherwise, since $\efoot{\heap}{\env}{\fe} \subseteq \perms$ by Definition~\ref{def:efoot}, $\simstate{V'}{\heap}{\perms}{\env}{\sstate'}$ where $\oheap(\sstate') \supseteq \oheap'$. Therefore, $\simheap{V'}{\oheap'}{\heap}{\perms}$.

     Since $\eval{\heap}{\env}{\fe}{v'}$ where $\eval{\heap}{\env'}{\ffuncbody(f)}{v'}$, then by induction $v' = V(t') = V(\fts)$.
    
    \end{enumcases}
  \end{proof}
\end{snugshade}

\subsection{Produce}
\label{sec:produce-soundness}

\begin{definition}
  For a judgement $\sproduce{\sstate}{\gform}{\sstate'}$, given an initial valuation $V$ and heap $\heap$, the \textbf{corresponding valuation} is denoted
  $$V[\sproduce{\sstate}{\gform}{\sstate'} \mid \heap, \env].$$
  This valuation is defined as follows, depending on the rule that proves the derivation. Values are referenced using the respective name from the rule definition.

  Note that the corresponding valuation always extends the initial valuation and is defined for all $\ffresh$ symbolic values in the judgement.
  \begin{itemize}
    \item \refrule{SProduceImprecise}:
      $$V[\sproduce{\sstate}{\simprecise{\phi}}{\sstate'} \mid \heap, \env] := V[\sproduce{\sstate[\imp = \top]}{\phi}{\sstate'} \mid \heap, \env]$$
    \item \refrule{SProduceExpr}:
      $$V[\sproduce{\sstate}{e}{\sstate''} \mid \heap, \env] := V[\seval{\sstate}{e}{t}{\sstate'}{\_} \mid \heap, \env]$$
    \item \refrule{SProducePredicate}:
      $$V[\sproduce{\sstate}{p(\multiple{e})}{\sstate''} \mid \heap, \env] := V[\multiple{\seval{\sstate}{e}{t}{\sstate'}{\_} \mid \heap, \env}]$$
    \item \refrule{SProduceField}:
      $$V[\sproduce{\sstate}{\kacc(e.f)}{\sstate''} \mid \heap, \env] := V'[t \mapsto \heap(V'(t_e), f)]$$
      where $V' = V[\seval{\sstate}{e}{t_e}{\sstate'}{\_} \mid \heap, \env]$.
    \item \refrule{SProduceConjunction}:
      $$V[\sproduce{\sstate}{\phi_1 * \phi_2}{\sstate''} \mid \heap, \env] := V[\sproduce{\sstate}{\phi_1}{\sstate'} \mid \heap, \env][\sproduce{\sstate'}{\phi_2}{\sstate''} \mid \heap, \env]$$
    \item \refrule{SProduceIfA}:
      \begin{align*}
        &V[\sproduce{\sstate}{\sif{e}{\phi_1}{\phi_2}}{\sstate''} \mid \heap, \env] := \\&\quad V[\seval{\sstate}{e}{t}{\sstate}{\_} \mid \heap, \env][\sproduce{\sstate'[\pc = \pc(\sstate') \kand t]}{\phi_1}{\sstate''} \mid \heap, \env]
      \end{align*}
    \item \refrule{SProduceIfB}:
      \begin{align*}
        &V[\sproduce{\sstate}{\sif{e}{\phi_1}{\phi_2}}{\sstate''} \mid \heap, \env] := \\&\quad V[\seval{\sstate}{e}{t}{\sstate'}{\_} \mid \heap, \env][\sproduce{\sstate'[\pc = \pc(\sstate') \kand \kneg t]}{\phi_2}{\sstate''} \mid \heap, \env]
      \end{align*}
  \end{itemize}
\end{definition}

\begin{lemma}\label{lem:produce-subpath}
  If $\sproduce{\sstate}{\phi}{\sstate'}$, then $\pc(\sstate') \implies \pc(\sstate)$.
\end{lemma}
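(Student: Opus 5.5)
We proceed by induction on the derivation of $\sproduce{\sstate}{\phi}{\sstate'}$, with one case per produce rule. Implication between path conditions (Definition~\ref{def:implication}) is transitive, and $g \kand t \implies g$ holds for any $t$. Each case therefore reduces to chaining implications obtained from Lemma~\ref{lem:eval-subpath} and from the induction hypothesis. Because symbolic evaluation of unfolding expressions and pure functions now invokes produce and consume, this lemma must be read as part of the mutual induction described in the soundness section. The appeal to Lemma~\ref{lem:eval-subpath} on a subexpression is justified because the derivation being inducted on is strictly smaller.

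\refrule{SProduceImprecise}: the premise is $\sproduce{\sstate[\imp = \top]}{\phi}{\sstate'}$. The induction hypothesis gives $\pc(\sstate') \implies \pc(\sstate[\imp = \top]) = \pc(\sstate)$.

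\refrule{SProduceExpr}: Lemma~\ref{lem:eval-subpath} gives $\pc(\sstate') \implies \pc(\sstate)$. Then $\pc(\sstate'') = \pc(\sstate') \kand t \implies \pc(\sstate')$, and transitivity closes the case.

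\refrule{SProducePredicate} and \refrule{SProduceField}: these rules only extend the precise heap after evaluating the arguments. Hence $\pc(\sstate'') = \pc(\sstate')$, and Lemma~\ref{lem:eval-subpath}, applied once per argument in the sequence $\multiple{e}$ and chained, gives the result.

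\refrule{SProduceConjunction}: two uses of the induction hypothesis give $\pc(\sstate'') \implies \pc(\sstate') \implies \pc(\sstate)$.

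\refrule{SProduceIfA}: Lemma~\ref{lem:eval-subpath} on the guard gives $\pc(\sstate') \implies \pc(\sstate)$. The induction hypothesis on the branch gives $\pc(\sstate'') \implies \pc(\sstate') \kand t \implies \pc(\sstate')$. Chaining these yields the claim.

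\refrule{SProduceIfB}: as printed, the conclusion's result state is $\sstate'$, the state after evaluating the guard. For that reading, Lemma~\ref{lem:eval-subpath} alone suffices. For the evidently intended reading with result $\sstate''$, the argument is identical to \refrule{SProduceIfA}, using $\kneg t$ in place of $t$. We will handle both readings in one line.

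The only real subtlety is the mutual dependence just noted. Lemma~\ref{lem:eval-subpath} itself cites this lemma for its unfolding case, so the two results must be proved by simultaneous induction on derivation size, as the paper explains for its other mutually recursive lemmas. We expect no computational difficulty in any individual case.
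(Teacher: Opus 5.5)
Your proof is correct and follows the paper's: induction on the produce derivation, with each case closed by chaining Lemma~\ref{lem:eval-subpath}, the induction hypothesis, and $g \kand t \implies g$. Your two extra observations are accurate refinements the paper leaves implicit: the printed conclusion of \textsc{SProduceIfB} has a typo ($\sstate'$ where $\sstate''$ is meant), and this lemma must be proved simultaneously with Lemma~\ref{lem:eval-subpath} because of the unfolding case.
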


\begin{proof}
  By induction on $\sproduce{\sstate}{\phi}{\sstate'}$:

  \begin{enumcases}
    \case \refrule{SProduceImprecise} -- $\sproduce{\sstate}{\simprecise{\phi}}{\sstate'}$:
    
      By \refrule{SProduceImprecise} $\sproduce{\sstate[\imp = \top]}{\phi}{\sstate'}$, thus by induction $\pc(\sstate') \implies \pc(\sstate)$.

    \case \refrule{SProduceExpr} -- $\sproduce{\sstate}{e}{\sstate''}$: 
    
    By \refrule{SProduceExpr} $\sstate'' = \sstate'[\pc = \pc(\sstate') \kand t]$, where $\seval{\sigma}{e}{t}{\sigma'}{\_}$. Trivially $\pc(\sstate') \kand t \implies \pc(\sstate')$ and by lemma \ref{lem:eval-subpath} $\pc(\sstate') \implies \pc(\sigma)$.

    \case\label{case:produce-subpath-pred} \refrule{SProducePredicate} -- $\sproduce{\sstate}{p(\multiple{e})}
    {\sstate''}$:
    
      By \refrule{SProducePredicate} $\sstate'' = \sstate'[\sheap = \cdots]$ where $\multiple{\seval{\sstate}{e}{t}{\sstate'}{\_}}\,$, so $\pc(\sstate'') = \pc(\sstate')$. Then by lemma \ref{lem:eval-subpath} $\pc(\sstate') \implies \pc(\sstate)$.

    \case \refrule{SProduceField} -- $\sproduce{\sstate}{\kacc(e.f)}{\sstate'}$: Similar to case \ref{case:produce-subpath-pred}.

    \case \refrule{SProduceConjunction} -- $\sproduce{\sstate}{\phi_1 * \phi_2}{\sstate''}$:
    
      By \refrule{SProduceConjunction} $\sproduce{\sstate}{\phi_1}{\sstate'}$ and $\sproduce{\sstate'}{\phi_2}{\sstate''}$. By induction $\pc(\sstate'') \implies \pc(\sstate')$ and $\pc(\sstate') \implies \pc(\sstate)$, therefore $\pc(\sstate'') \implies \pc(\sstate)$.

    \case\label{case:produce-subpath-ifa} \refrule{SProduceIfA} -- $\sproduce{\sstate}{\sif{e}{\phi_1}{\phi_2}}{\sstate''}$:
    
      By \refrule{SProduceIfA} $\sproduce{\sstate'[\pc = \pc(\sstate') \kand t]}{\phi_1}{\sstate''}$ where $\seval{\sigma}{e}{t}{\sigma'}{\_}$ thus by induction $\pc(\sstate'') \implies \pc(\sstate') \kand t \implies \pc(\sstate')$. Then by lemma \ref{lem:eval-subpath} $\pc(\sstate') \implies \pc(\sstate)$.

    \case \refrule{SProduceIfB} -- $\sproduce{\sstate}{\sif{e}{\phi_1}{\phi_2}}{\sstate''}$: Similar to case \ref{case:produce-subpath-ifa}.
  \end{enumcases}
\end{proof}

\begin{lemma}\label{lem:produce-unchanged}
  If $\sproduce{\sstate}{\gform}{\sstate'}$ then $\senv(\sstate') = \senv(\sstate)$.
\end{lemma}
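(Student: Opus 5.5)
The proof goes by induction on the derivation of $\sproduce{\sstate}{\gform}{\sstate'}$, with one case per produce rule. The key auxiliary fact is Lemma~\ref{lem:eval-unchanged}: any symbolic evaluation $\seval{\sstate}{e}{t}{\sstate'}{\scheck}$ leaves the symbolic store unchanged. Every produce rule is built from symbolic evaluations, recursive produces, and state updates that only modify the $\imp$, $\pc$ or $\sheap$ components. The store is therefore threaded through untouched, and each case reduces to chaining equalities of $\senv$.

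\textbf{Non-recursive cases.}
\begin{itemize}
\item \refrule{SProduceImprecise}: the premise produces $\phi$ from $\sstate[\imp = \top]$. This state has the same store as $\sstate$, so the induction hypothesis gives $\senv(\sstate') = \senv(\sstate[\imp = \top]) = \senv(\sstate)$.
\item \refrule{SProduceExpr}: here $\sstate'' = \sstate'[\pc = \cdots]$ with $\seval{\sstate}{e}{t}{\sstate'}{\_}$. Lemma~\ref{lem:eval-unchanged} gives $\senv(\sstate') = \senv(\sstate)$, and the update only touches $\pc$.
\item \refrule{SProducePredicate} and \refrule{SProduceField}: both perform a sequence of evaluations, each preserving the store by Lemma~\ref{lem:eval-unchanged}, then a $\sheap$ update. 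The fresh symbolic value in \refrule{SProduceField} affects only the heap chunk, not the store.
\end{itemize}

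\textbf{Recursive cases.}
\begin{itemize}
\item \refrule{SProduceConjunction}: apply the induction hypothesis twice and use transitivity.
\item \refrule{SProduceIfA}: first use Lemma~\ref{lem:eval-unchanged} on the guard evaluation. Then apply the induction hypothesis to the produce from $\sstate'[\pc = \pc(\sstate') \kand t]$, whose store is that of $\sstate'$.
\item \refrule{SProduceIfB}: the rule as written returns $\sstate'$ rather than $\sstate''$. This is easier still: $\senv(\sstate') = \senv(\sstate)$ follows directly from Lemma~\ref{lem:eval-unchanged}. I will note the discrepancy and observe that the claim holds under either reading, since the induction hypothesis also covers $\sstate''$.
\end{itemize}

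\textbf{Main obstacle.} The only delicate point is that Lemma~\ref{lem:eval-unchanged} now depends on the new unfolding and pure-function evaluation rules, and these invoke produce internally. The two lemmas are therefore mutually dependent, as the paper discussed in the soundness section.

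For \refrule{SEvalUnfoldingPrecise} and its variants, the inner produce runs in a state whose store has been extended with predicate parameters. The outer store is explicitly restored afterwards by $\sstate_4[\senv = \senv(\sstate_3)]$, so the final store equals the original store regardless of what the produce does. \refrule{SEvalFunctionExplicit} and \refrule{SEvalFunctionImplicit} likewise return $\sstate_2$ with only $\imp$, $\oheap$ and $\pc$ changed.

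I will accordingly phrase the argument as a simultaneous induction on derivation height for Lemma~\ref{lem:eval-unchanged} (store part) and the present lemma. In the nested-produce positions inside evaluation, the result is not even needed because the store is overwritten. The induction is thus well-founded, and the present lemma follows.
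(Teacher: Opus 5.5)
Your proof is correct and matches the paper's argument, which is just ``induction on the produce derivation using Lemma~\ref{lem:eval-unchanged}''; your case analysis, including the note on the \textsc{SProduceIfB} conclusion typo, spells that one-liner out. Your remark on the mutual dependence with the unfolding and function evaluation rules is also sound. As you observe, the store part of Lemma~\ref{lem:eval-unchanged} never actually uses the store of the internally produced state, so the two lemmas are not truly interdependent.
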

\begin{proof}
  Trivial by induction on $\sproduce{\sstate}{\gform}{\sstate'}$ and by using lemma \ref{lem:eval-unchanged}.
\end{proof}

\begin{lemma}\label{lem:produce-soundness}
  Suppose $\simstate{V}{\sstate}{\heap}{\perms}{\env}$ where $\simheap{V}{\sheap(\sstate)}{\heap}{\perms \setminus \efoot{\heap}{\env}{\gform}}$. 
  
  If $\sproduce{\sstate}{\gform}{\sstate'}$, $\assertion{\heap}{\perms}{\env}{\gform}$, and $V'(\pc(\sstate')) = \ktrue$ where $V' = V[\sproduce{\sstate}{\gform}{\sstate'} \mid \heap, \env]$, then
  $$\simstate{V'}{\sstate'}{\heap}{\perms}{\env}$$
\end{lemma}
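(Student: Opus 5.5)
The proof is by induction on the derivation of $\sproduce{\sstate}{\gform}{\sstate'}$, strengthening the hypothesis only as each rule requires. In every case, the path-condition requirement $V'(\pc(\sstate'))=\ktrue$ passes to intermediate states via Lemma~\ref{lem:produce-subpath} and Lemma~\ref{lem:eval-subpath}. Since corresponding valuations extend their initial valuation, correspondence of $\senv$ and of the untouched heap chunks carries over from $V$ to $V'$. The symbolic store is preserved by Lemma~\ref{lem:produce-unchanged}, and precise and optimistic heaps are unchanged by evaluation by Lemma~\ref{lem:eval-unchanged}. So in each case the only real work is the new heap chunk (if any) and the evaluation steps.

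For the evaluation steps inside \refrule{SProduceExpr}, \refrule{SProducePredicate}, \refrule{SProduceField} and the conditional rules, I would use Lemma~\ref{lem:eval-correspondence} rather than Lemma~\ref{lem:seval-soundness}, because produce discards run-time checks. Its hypotheses $\efoot{\heap}{\env}{e}\subseteq\perms$ and $\eval{\heap}{\env}{e}{v}$ come from $\assertion{\heap}{\perms}{\env}{\gform}$ via Lemma~\ref{lem:efoot-subset-spec}: when $\gform$ is a specification (or the relevant subformula is framed), the exact footprint of each subexpression is contained in that of $\gform$. The lemma then gives $\simstate{V'}{\cdot}{\heap}{\perms}{\env}$ and identifies the symbolic result with the concrete value.

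The cases then go as follows.
\begin{itemize}
\item \refrule{SProduceExpr}: $V'(t)=\ktrue$ follows from \refrule{AssertValue}.
\item \refrule{SProduceIfA}/\refrule{SProduceIfB}: the branch taken agrees with the concrete guard, because $V'(\pc(\sstate'))=\ktrue$ forces $V'(t)$ to match, so \refrule{AssertIfA}/\refrule{AssertIfB} supply the assertion for the recursive call.
\item \refrule{SProduceImprecise}: apply the induction hypothesis to the inner derivation from \refrule{AssertImprecise}; the imprecision flag does not appear in correspondence.
\item \refrule{SProduceField}: the valuation maps the fresh $t$ to $\heap(V'(t_e),f)$, which gives the value clause. $\pair{V'(t_e)}{f}\in\perms$ comes from \refrule{AssertAcc}.
\item \refrule{SProducePredicate}: the predicate-body clause comes from \refrule{AssertPredicate}.
\end{itemize}

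The main obstacle is the disjointness clause of precise-heap correspondence when a chunk $h$ is added, in the field and predicate cases. Here I would use the hypothesis $\simheap{V}{\sheap(\sstate)}{\heap}{\perms\setminus\efoot{\heap}{\env}{\gform}}$. By Lemma~\ref{lem:sim-heap-disjoint}, every existing chunk has footprint disjoint from $\efoot{\heap}{\env}{\gform}$, and $\vfoot{V'}{\heap}{h}\subseteq\efoot{\heap}{\env}{\gform}$ by the definitions of the two footprints, so the new chunk overlaps none of them. The conjunction case needs the most care. I would apply the induction hypothesis to $\phi_1$ with the original state, using $\efoot{\heap}{\env}{\phi_1}\subseteq\efoot{\heap}{\env}{\phi_1*\phi_2}$. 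For $\phi_2$, I would split $\perms$ via \refrule{AssertConjunction} into disjoint $\perms_1,\perms_2$. The chunks produced by $\phi_1$ have footprints inside $\efoot{\heap}{\env}{\phi_1}\cap\perms_1$ by Lemma~\ref{lem:efoot-assert}, hence disjoint from $\efoot{\heap}{\env}{\phi_2}\cap\perms_2$. Lemma~\ref{lem:disjoint-sim-heap-subset} then re-establishes the exclusion hypothesis for the second call. If the footprints computed against $\perms_1,\perms_2$ do not line up exactly, I would first restrict to $\efoot{\heap}{\env}{\phi_i}\cap\perms_i$ and use assertion monotonicity, Lemma~\ref{lem:assert-monotonicity}, to recover the needed assertions.
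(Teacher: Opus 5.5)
The gap is the conjunction case; your other cases match the paper's (Lemma~\ref{lem:eval-correspondence} for the evaluations, Lemma~\ref{lem:sim-heap-disjoint} for the new chunk). To apply the induction hypothesis to $\phi_2$ you must establish $\simheap{V_1}{\sheap(\sstate')}{\heap}{\perms \setminus \efoot{\heap}{\env}{\phi_2}}$. So every chunk of $\sheap(\sstate')$, including those just produced for $\phi_1$, must avoid \emph{all} of $\efoot{\heap}{\env}{\phi_2}$. You only obtain avoidance of $\efoot{\heap}{\env}{\phi_2} \cap \perms_2$, and Lemma~\ref{lem:disjoint-sim-heap-subset} needs the full set.

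This is not a bookkeeping slip. Take the specification $\kacc(x.f) * (x.f \keq 3)$. Every split puts $\pair{\env(x)}{f}$ into $\perms_1$, while $\efoot{\heap}{\env}{x.f \keq 3} = \set{\pair{\env(x)}{f}}$ is exactly the footprint of the chunk produced for $\kacc(x.f)$. So the hypothesis of the second call is false for every choice of permission set. Restricting to $\efoot{\heap}{\env}{\phi_i} \cap \perms_i$, as in your fallback, changes the assertions but not the excluded set, which depends on $\phi_2$ alone.

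Separately, the hypothesis as stated says nothing about where newly added chunks lie. So ``the chunks produced by $\phi_1$ lie in $\efoot{\heap}{\env}{\phi_1} \cap \perms_1$'' does not follow from Lemma~\ref{lem:efoot-assert}. The paper's own conjunction step fails on the same formula, so switching to its decomposition does not help. It applies the hypothesis to $\phi_1$ at $\perms \setminus \efoot{\heap}{\env}{\phi_2}$, relying on $\efoot{\heap}{\env}{\phi_1} \cap \efoot{\heap}{\env}{\phi_2} = \emptyset$.

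Your $\perms_1/\perms_2$ idea does work once two things change: the excluded set must be a permission set witnessing the assertion rather than an exact footprint, and the conclusion must record where new chunks go. Prove the following. Suppose $\simstate{V}{\sstate}{\heap}{\perms}{\env}$, $\efoot{\heap}{\env}{\gform} \subseteq \perms$, $\beta \subseteq \perms$, $\assertion{\heap}{\beta}{\env}{\gform}$, $\simheap{V}{\sheap(\sstate)}{\heap}{\perms \setminus \beta}$, $\sproduce{\sstate}{\gform}{\sstate'}$ and $V'(\pc(\sstate')) = \ktrue$. Then $\simstate{V'}{\sstate'}{\heap}{\perms}{\env}$, and $\vfoot{V'}{\heap}{h} \subseteq \beta$ for every $h \in \sheap(\sstate') \setminus \sheap(\sstate)$.

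The field and predicate cases go as you describe, with $\beta$ in place of $\efoot{\heap}{\env}{\gform}$. The new footprint lies in $\beta$ by \textsc{AssertAcc}, respectively by Lemma~\ref{lem:efoot-subset-spec} applied to the predicate body. For $\phi_1 * \phi_2$, split $\beta$ into disjoint $\beta_1, \beta_2$ by \textsc{AssertConjunction}. After the first call, old chunks avoid $\beta \supseteq \beta_2$ and new ones lie in $\beta_1$. Lemma~\ref{lem:disjoint-sim-heap-subset} then gives $\simheap{V_1}{\sheap(\sstate')}{\heap}{\perms \setminus \beta_2}$, which is exactly what the second call needs. The original lemma is the instance $\beta = \efoot{\heap}{\env}{\gform}$, via Lemmas~\ref{lem:efoot-subset-spec} and~\ref{lem:efoot-assert}.

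The specification assumption you add is genuinely necessary. Without $\efoot{\heap}{\env}{\gform} \subseteq \perms$ the statement fails in an imprecise state, where \textsc{SEvalFieldImprecise} adds an optimistic chunk for a location outside $\perms$. It has to be threaded as the invariant $\efoot{\heap}{\env}{\gform} \subseteq \perms$, because subformulas such as $x.f \keq 3$ are not specifications.

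A minor point: Lemma~\ref{lem:eval-unchanged} does not say the optimistic heap is unchanged, since evaluation can extend it. Lemma~\ref{lem:eval-correspondence} already covers that.
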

\begin{proof}
  By induction on $\sproduce{\sstate}{\gform}{\sstate'}$:
  \begin{enumcases}
    \case \refrule{SProduceImprecise} -- $\sproduce{\sstate}{\simprecise{\phi}}{\sstate'}$:

      By \refrule{SProduceImprecise} $\sproduce{\sstate[\imp = \top]}{\phi}{\sstate'}$. Let $V'$ be the corresponding valuation, thus $V'$ is the corresponding valuation for this case.

      Then, since $\assertion{\heap}{\perms}{\env}{\simprecise{\phi}}$, by \refrule{AssertImprecise} $\assertion{\heap}{\perms}{\env}{\phi}$. Also, $\efoot{\heap}{\env}{\simprecise{\phi}} = \efoot{\heap}{\env}{\phi}$. Therefore $\simheap{V}{\sheap(\sstate)}{\heap}{\perms \setminus \efoot{\heap}{\env}{\phi}}$, and furthermore $\simheap{V}{\sheap(\sstate[\imp = \top])}{\heap}{\perms \setminus \efoot{\heap}{\env}{\phi}}$.

      Therefore $\simstate{V'}{\sstate'}{\heap}{\perms}{\env}$ by induction.

    \case \refrule{SProduceExpr} -- $\sproduce{\sstate}{e}{\sstate''}$:
    
      By \refrule{SProduceExpr} $\seval{\sstate}{e}{t}{\sstate'}{\_}$. Let $V'$ be the corresponding valuation, therefore $V'$ is the corresponding valuation for this case.

      Since $\assertion{\heap}{\perms}{\env}{\env}$, by \refrule{AssertValue} $\eval{\heap}{\env}{e}{\ktrue}$. Since $\simheap{V}{\sheap(\sstate)}{\heap}{\perms \setminus \efoot{\heap}{\env}{e}}$ and $\simstate{V}{\sstate}{\heap}{\perms}{\env}$ clearly $\efoot{\heap}{\env}{e} \subseteq \perms$. Then by lemma \ref{lem:eval-correspondence}, $\simstate{V'}{\sstate'}{\heap}{\perms}{\env}$.

      Let $\sstate'' = \sstate'[\pc(\sstate') \kand t]$.  Then, since $\sstate'$ and $\sstate$ differ only in their $\pc$ components $\simstate{V'}{\sstate''}{\heap}{\perms}{\env}$ since $V'(\pc(\sstate')) = \ktrue$ as $\simstate{V'}{\sstate'}{\heap}{\perms}{\env}$.
      
    \case \refrule{SProducePredicate} -- $\sproduce{\sstate}{p(\multiple{e})}{\sstate'}$:

      By \refrule{SProducePredicate}, for each $e$, $\seval{\sstate}{e}{t}{\sstate'}{\_}$ for some $t_i$. Let $V'$ be the corresponding valuation for this case, thus $V'$ extends the corresponding valuation corresponding for each $\seval{\sstate}{e}{t}{\sstate'}{\_}$. 

      By assumptions, $\assertion{\heap}{\perms}{\env}{p(\multiple{e})}$. Then by \refrule{AssertPredicate}, for each $e$, $\eval{\heap}{\perms}{e}{v}$. Since $\simheap{V}{\sheap(\sstate)}{\heap}{\perms \setminus \efoot{\heap}{\env}{p(\multiple{e})}}$ and $\simstate{V}{\sstate}{\heap}{\perms}{\env}$ clearly $\efoot{\heap}{\perms}{p(\multiple{e}} \subseteq \perms$. Then $\multiple{\efoot{\heap}{\perms}{e}} \subseteq {\perms}$ by definition \ref{def:efoot}. Then by lemma \ref{lem:eval-correspondence}, $\simstate{V'}{\sstate'}{\heap}{\perms}{\env}$ and $\eval{\heap}{\env}{e}{V'(t)}$ for all $e$. In addition, since $\sheap(\sstate') = \sheap(\sstate)$ by lemma \ref{lem:eval-unchanged}, $\simheap{V'}{\sheap(\sstate')}{\heap}{\perms \setminus \efoot{\heap}{\env}{p(\multiple{e})}}$.
      
      By \refrule{SProducePredicate} $\sstate'' = \sstate'[\sheap = \sheap(\sstate'); \pair{p}{\multiple{t}}]$. Since $\sstate'$ and $\sstate''$ differ only in their $\sheap$ components, proving that \eqref{eq:sheap-correspondence} holds for $\sheap(\sstate'')$ is sufficient to prove that $\simstate{V'}{\sstate''}{\heap}{\perms}{\env}$. Then
      \[
        \efoot{\heap}{\env}{p(\multiple{e})} \supseteq \efoot{\heap}{\multiple{x \mapsto V'(t)}}{\fpred(p)} = \vfoot{V'}{\heap}{\pair{p}{\multiple{t}}},
      \]
      where $\multiple{x} = \fpredparams(p)$
      thus $\simheap{V'}{\sheap(\sstate')}{\heap}{\perms \setminus \vfoot{V}{\heap}{\pair{p}{\multiple{t}}}}$ by lemma \ref{lem:simstate-monotonicity}. Now by lemma \ref{lem:sim-heap-disjoint} and since $V'$ extends $V$,
      $$\universal{h \in \sheap(\sstate')}{\vfoot{V'}{\heap}{h} \cap \vfoot{V'}{\heap}{\pair{p}{\multiple{t}}} = \emptyset}.$$
      Since $\pair{p}{\multiple{t}}$ is the only addition to $\sheap(\sstate'')$ relative to $\sheap(\sstate')$ and $V'$ extends $V$,
      $$\universal{h_1, h_2 \in \sheap(\sstate'')}{h_1 \ne h_2 \implies \vfoot{V'}{\heap}{h_1} \cap \vfoot{V'}{\heap}{h_2} = \emptyset}.$$
      Also by \refrule{AssertPredicate} $\assertion{\heap}{\perms}{[\multiple{x \mapsto V'(t)}]}{\fpred(p)}$. Therefore, since $\pair{p}{\multiple{t}}$ is the only addition to $\sheap(\sstate'')$ relative to $\sheap(\sstate')$ and $V'$ extends $V$,
      $$\universal{\pair{p}{\multiple{t}} \in \sheap(\sstate'')}{\assertion{\heap}{\perms}{[\multiple{x \mapsto V'(t)}]}{\fpred(p)}}.$$
      Therefore all requirements for \eqref{eq:sheap-correspondence} are satisfied, and therefore $\simstate{V'}{\sstate''}{\heap}{\perms}{\env}$.

    \case \refrule{SProduceField} -- $\sproduce{\sstate}{\kacc(e.f)}{\sstate''}$:

      By \refrule{SProduceField} $\seval{\sstate}{e}{t_e}{\sstate'}{\_}$ for some $t_e$. Let $V_e$ be the corresponding valuation, and let $V'$ be the corresponding valuation for this case, thus $V'$ extends $V_e$.
      
      Also by \refrule{SProduceField} $\sstate'' = \sstate'[\sheap = \sheap(\sstate'); \triple{f}{t_e}{t}]$ where $t = \ffresh$. By definition $V'(t) = \heap(V_e(t_e), f)$.

      By assumptions $\assertion{\heap}{\perms}{\env}{\kacc(e.f)}$, thus by \refrule{AssertAcc} $\eval{\heap}{\env}{e}{v_e}$ for some $v_e$ such that $\pair{v_e}{f} \in \perms$.  Then since $\simheap{V}{\sheap{\sstate}}{\heap}{\perms 
      \setminus \efoot{\heap}{\perms}{\kacc(e.f)}}$ and $\simstate{V}{\sstate}{\heap}{\perms}{\env}$, $\efoot{\heap}{\perms}{\kacc(e.f)} \subseteq \perms$. Then $\efoot{\heap}{\perms}{e} \subseteq \perms$ by definition \ref{def:efoot}. Then by lemma \ref{lem:eval-correspondence}, $\simstate{V_e}{\sstate'}{\heap}{\perms}{\env}$ and $\eval{\heap}{\env}{e}{V_e(t_e)}$. In addition, $\pair{V'(t_e)}{f} \in \perms$ as $V'$ extends $V_e$. Also by lemma \ref{lem:eval-unchanged}, $\sheap(\sstate') = \sheap(\sstate)$, so $\simheap{V_e}{\sheap(\sstate')}{\heap}{\perms \setminus \efoot{\heap}{\perms}{\kacc(e.f)}}$.

      Since $\sstate'$ and $\sstate''$ differ only in their $\sheap$ components, proving that \eqref{eq:sheap-correspondence} holds for $\sheap(\sstate'')$ is sufficient to prove that $\simstate{V'}{\sstate''}{\heap}{\perms}{\env}$.

      As shown before, $V'(t) = \heap(V'(t_e), f)$. Therefore, since $\triple{f}{t_e}{t}$ is the only addition to $\sheap(\sstate'')$ relative to $\sheap(\sstate')$ and $V'$ extends $V_e$,
      $$\universal{\triple{f}{t}{t'} \in \sheap(\sstate'')}{\heap(V'(t), f) = V'(t')} ~\text{and}$$
      $$\universal{\triple{f}{t}{t'} \in \sheap(\sstate'')}{\pair{V'(t)}{f} \in \perms}$$
      Since $\eval{\heap}{\env}{e}{V'(t_e)}$ as shown before,
      $$\efoot{\heap}{\env}{\kacc(e.f)} \supseteq \set{\pair{V'(t_e)}{f}} = \vfoot{V}{\heap}{(t_e.f; t)}.$$
      Therefore $\simheap{V_e}{\sheap(\sstate')}{\heap}{\perms \setminus \vfoot{V'}{\heap}{(t.f; t')}}$ by lemma \ref{lem:simstate-monotonicity}. Now by lemma \ref{lem:sim-heap-disjoint}, and since $V'$ extends $V_e$,
      $$\universal{h \in \sheap(\sstate'')}{\vfoot{V'}{\heap}{h} \cap \vfoot{V'}{\heap}{\triple{f}{t_e}{t}} = \emptyset}.$$
      Finally, since $\triple{f}{t_e}{t}$ is the only addition to $\sheap(\sstate'')$ relative to $\sheap(\sstate')$ and $V'$ extends $V_e$,
      $$\universal{h_1, h_2 \in \sheap(\sstate'')^2}{h_1 \ne h_2 \implies \vfoot{V'}{\heap}{h_1} \cap \vfoot{V'}{\heap}{h_2} = \emptyset}.$$
      Therefore all requirements for \eqref{eq:sheap-correspondence} are satisfied and therefore $\simstate{V'}{\sstate''}{\heap}{\perms}{\env}$.

    \case \refrule{SProduceConjunction} -- $\sproduce{\sstate}{\phi_1 * \phi_2}{\sstate''}$:

      By \refrule{SProduceConjunction} $\sproduce{\sstate}{\phi_1}{\sstate'}$ and $\sproduce{\sstate'}{\phi_2}{\sstate''}$. Let $V_1$ and $V_2$ be the respective corresponding valuations, extending $V$ and $V_1$, respectively. Then $V_2$ is the corresponding valuation for this case.

      Since $\assertion{\heap}{\perms}{\env}{\phi_1 * \phi_2}$, by \refrule{AssertConjunction} $\assertion{\heap}{\perms}{\env}{\phi_1}$ and $\assertion{\heap}{\perms}{\env}{\phi_2}$ where $\efoot{\heap}{\env}{\phi_1} \cap \efoot{\heap}{\env}{\phi_2} = \emptyset$. Also, $\efoot{\heap}{\env}{\phi_1 * \phi_2} = \efoot{\heap}{\env}{\phi_1} \cup \efoot{\heap}{\env}{\phi_2}$.

      Let $\perms' = \perms \setminus \efoot{\heap}{\env}{\phi_2}$. Then $\assertion{\heap}{\perms'}{\env}{\phi_1}$ by lemma \ref{lem:assert-efoot-subset} since $\efoot{\heap}{\env}{\phi_1} \subseteq \perms'$. Also, $\simheap{V}{\sheap(\sstate)}{\heap}{\perms' \setminus \efoot{\heap}{\env}{\phi_1}}$ since $\perms' \setminus \efoot{\heap}{\env}{\phi_1} = \perms \setminus \efoot{\heap}{\env}{\phi_1 * \phi_2}$. Finally, by lemma \ref{lem:produce-subpath}, $\pc(\sstate'') \implies \pc(\sstate')$, and therefore $V_2(\pc(\sstate')) = V_1(\pc(\sstate')) = \ktrue$.

      Now $\simstate{V_1}{\sstate'}{\heap}{\perms'}{\env}$ by induction. Therefore, $\simheap{V_1}{\sheap(\sstate')}{\heap}{\perms'}$. Also, $\assertion{\heap}{\perms}{\env}{\phi_2}$ by lemma \ref{lem:assert-efoot-subset} since $\efoot{\heap}{\env}{\phi_2} \subseteq \perms$. Finally, $V_2(\pc(\sstate'')) = \ktrue$ by assumption, therefore $\simstate{V_2}{\sstate''}{\heap}{\perms}{\env}$ by induction.

    \case\label{case:produce-soundness-ifa} \refrule{SProduceIfA} -- $\sproduce{\sstate}{\sif{e}{\phi_1}{\phi_2}}{\sstate''}$:

      By \refrule{SProduceIfA} $\seval{\sstate}{e}{t}{\sstate'}{\_}$ and $\sproduce{\sstate'[\pc = \pc(\sstate') \kand t]}{\phi_1}{\sstate''}$. Let $V_1$ and $V'$ be the respective corresponding valuations extending $V$ and $V_1$, respectively. Then $V'$ is the corresponding valuation in this case.

      By lemma \ref{lem:produce-subpath}, $\pc(\sstate'') \implies \pc(\sstate') \kand t$. By assumptions $V'(\pc(\sstate'')) = \ktrue$, thus $V'(\pc(\sstate') \kand t) = V_1(\pc(\sstate') \kand t) = \ktrue$. In addition, $V_1(\pc(\sigma')) = \ktrue$ by lemma \ref{lem:eval-progress}. Therefore, $V_1(t) = \ktrue$.

      Since $\assertion{\heap}{\perms}{\env}{\sif{e}{\phi_1}{\phi_2}}$, $\eval{\heap}{\env}{e}{v}$ for some $v$ by \refrule{AssertIfA} or \refrule{AssertIfB}. As $\simheap{V}{\sheap(\sstate)}{\heap}{\perms \setminus \efoot{\heap}{\perms}{\sif{e}{\phi_1}{\phi_2}}}$ and $\simstate{V}{\sstate}{\heap}{\perms}{\env}$, then $\efoot{\heap}{\perms}{\sif{e}{\phi_1}{\phi_2}} \subseteq \perms$. Thus, $\efoot{\heap}{\perms}{e} \subseteq \perms$ by definition \ref{def:efoot}. Then by lemma \ref{lem:eval-correspondence}, $\simstate{V_1}{\sstate'}{\heap}{\perms}{\env}$ and $\eval{\heap}{\env}{e}{V_1(t)}$. Therefore $v = V_1(t) = \ktrue$.  

      Then, by lemma \ref{lem:eval-unchanged} $\sheap(\sigma') = \sheap(\sigma)$, so 
      \[
      \simheap{V_1}{\sheap(\sstate')}{\heap}{\perms \setminus \efoot{\heap}{\perms}{\sif{e}{\phi_1}{\phi_2}}}.
      \]
      Then $\efoot{\heap}{\env}{\phi_1} \subseteq \efoot{\heap}{\env}{\sif{e}{\phi_1}{\phi_2}}$ by definition \ref{def:efoot}. Therefore, since $V_1(\pc(\sstate') \kand t) = \ktrue$,  
      \[
      \simheap{V_1}{\sheap(\sstate'[\pc = \pc(\sstate') \kand t])}{\heap}{\perms \setminus \efoot{\heap}{\env}{\phi_1}}.
      \] 
      Also, $\assertion{\heap}{\perms}{\env}{\phi_1}$ by \refrule{AssertIfA} since $\eval{\heap}{\env}{e}{\ktrue}$ and $\assertion{\heap}{\perms}{\env}{\sif{e}{\phi_1}{\phi_2}}$. Finally, $V'(\pc(\sstate'')) = \ktrue$ by assumption.
      
      Thus $\simstate{V'}{\sstate''}{\heap}{\perms}{\env}$ by induction.

    \case \refrule{SProduceIfB} -- $\sproduce{\sstate}{\sif{e}{\phi_1}{\phi_2}}{\sstate''}$: Similar to case \ref{case:produce-soundness-ifa}.
  \end{enumcases}
\end{proof}

\begin{lemma}[Progress] \label{lem:produce-progress}
  If $V(\pc(\sstate)) = \ktrue$, $\assertion{\heap}{\perms}{\env}{\gform}$, and $\simstate{V}{\sstate}{\heap}{\perms}{\env}$, then $\sproduce{\sstate}{\gform}{\sstate'}$ for some $\sstate'$ where $V'(\pc(\sstate')) = \ktrue$ and $V' = V[\sproduce{\sstate}{\gform}{\sstate'} \mid \heap, \env]$.
\end{lemma}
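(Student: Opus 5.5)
We prove the statement by induction on the structure of $\gform$, following the shape of the produce rules. In each case we pick the produce rule that the run-time assertion $\assertion{\heap}{\perms}{\env}{\gform}$ selects. We then show that the resulting path condition is true under the corresponding valuation. Expression evaluation is discharged by Lemma~\ref{lem:eval-progress} together with Lemma~\ref{lem:eval-correspondence}. The latter lets us identify the symbolic result $t$ with the concrete value, i.e.\ $V'(t) = v$. To apply it we need $\efoot{\heap}{\env}{e} \subseteq \perms$, which we obtain from the assertion via Lemma~\ref{lem:efoot-subset-spec} or Lemma~\ref{lem:efoot-subset-framed}.

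The cases go as follows.
\begin{itemize}
\item For $\simprecise{\phi}$, \refrule{AssertImprecise} gives $\assertion{\heap}{\perms}{\env}{\phi}$. Correspondence is insensitive to the flag $\imp$, so we apply the induction hypothesis to $\sstate[\imp=\top]$ and close with \refrule{SProduceImprecise}.
\item For an expression $e$, \refrule{AssertValue} gives $\eval{\heap}{\env}{e}{\ktrue}$. Lemma~\ref{lem:eval-progress} yields $\seval{\sstate}{e}{t}{\sstate'}{\_}$ with $V'(\pc(\sstate'))=\ktrue$. Lemma~\ref{lem:eval-correspondence} gives $V'(t)=\ktrue$, so $V'(\pc(\sstate')\kand t)=\ktrue$ under \refrule{SProduceExpr}.
\item For $p(\multiple{e})$ and $\kacc(e.f)$, we chain evaluation progress over the arguments. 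The added chunk does not change the path condition, and for the field case the fresh value is simply mapped to $\heap(V'(t_e),f)$ as in the corresponding valuation.
\item For $\phi_1 * \phi_2$, \refrule{AssertConjunction} gives disjoint $\perms_1,\perms_2$. We apply the induction hypothesis to $\phi_1$ to get $\sstate'$ with a true path condition. Lemma~\ref{lem:produce-soundness} then gives $\simstate{V_1}{\sstate'}{\heap}{\perms}{\env}$. By monotonicity (Lemma~\ref{lem:assert-monotonicity}) we have $\assertion{\heap}{\perms}{\env}{\phi_2}$, and we apply the induction hypothesis again to finish.
\item For conditionals, the assertion fixes which branch holds, say $\eval{\heap}{\env}{e}{\ktrue}$. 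We evaluate $e$ and use correspondence to get $V_1(t)=\ktrue$. This makes $\pc(\sstate')\kand t$ true, so \refrule{SProduceIfA} applies with the induction hypothesis on $\phi_1$; \refrule{SProduceIfB} is symmetric.
\end{itemize}

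The main obstacle is that Lemma~\ref{lem:produce-soundness} has a hypothesis we must supply in the conjunction case. It requires the precise heap to correspond under $\perms \setminus \efoot{\heap}{\env}{\phi_1}$, which the bare statement does not obviously provide. If that hypothesis fails, we will avoid Lemma~\ref{lem:produce-soundness} altogether. We would strengthen the induction to also return $\simstate{V'}{\sstate'}{\heap}{\perms}{\env}$ modulo heap disjointness. Concretely, we would track only the environment and path-condition parts of correspondence, which are all that progress actually needs: Lemma~\ref{lem:eval-correspondence} requires the store and the chunks used during evaluation, and Lemma~\ref{lem:produce-unchanged} keeps the store fixed.

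A second subtlety arises with the new expression forms, unfolding and pure functions, inside produced formulas. Here evaluation progress may pick the Implicit or Failure rules. That is harmless, because Lemma~\ref{lem:eval-progress} already guarantees a true path condition regardless of which rule is chosen.
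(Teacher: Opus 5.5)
Your skeleton matches the paper's: induction on $\gform$, choosing the produce rule by the run-time assertion, discharging evaluation with Lemma~\ref{lem:eval-progress}, and identifying symbolic with concrete values via Lemma~\ref{lem:eval-correspondence}. However, two steps fail as you state them. First, the inclusion $\efoot{\heap}{\env}{e} \subseteq \perms$ cannot come from Lemma~\ref{lem:efoot-subset-spec} or Lemma~\ref{lem:efoot-subset-framed}. The statement assumes neither that $\gform$ is a specification nor that it is framed. Subformulas of specifications are generally not specifications (think of $x.f > 0$ inside $\kacc(x.f) * (x.f > 0)$), and \refrule{AssertValue} only asks that $e$ evaluate to $\ktrue$. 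Concretely, $\gform = x.f > 0$ with $\heap(\env(x), f) > 0$, $\perms = \emptyset$ and empty symbolic heaps satisfies every hypothesis, yet $\efoot{\heap}{\env}{\gform} \not\subseteq \perms$. The missing idea, which the paper uses at the outset, is that the conclusion never mentions $\perms$, while correspondence (Lemma~\ref{lem:simstate-monotonicity}) and assertion (Lemma~\ref{lem:assert-monotonicity}) are both monotone in permissions. So first replace $\perms$ by $\perms \cup \efoot{\heap}{\env}{\gform}$; after that, the inclusions needed in the expression and conditional cases are immediate.

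Second, you are right that in the conjunction case the induction hypothesis for $\phi_2$ needs correspondence of the intermediate state; the paper's own proof just says ``by induction'' there. But your argument does not close this. The side condition of Lemma~\ref{lem:produce-soundness} is not merely unavailable; it can be false. Enlarging $\perms$ does not help, because the disjointness clause of \eqref{eq:sheap-correspondence} does not depend on permissions. Take $\sheap(\sstate) = \set{\triple{f}{t_x}{a}}$ for an owned location $\pair{\ell}{f}$ with $\heap(\ell, f) > 0$, and $\gform = \kacc(x.f) * (x.f > 0)$. Every hypothesis holds, yet after \refrule{SProduceField} the precise heap contains two distinct chunks with footprint $\set{\pair{\ell}{f}}$. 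Hence $\sstate'$ corresponds to no permission set, and the induction hypothesis as stated cannot be invoked there. Your phrase ``modulo heap disjointness'' is the right idea, but the concrete version you give (store and path condition only) is too weak. Concluding $V'(t) = \ktrue$ for a later expression conjunct requires that chunks read by \refrule{SEvalField} or \refrule{SEvalFieldOptimistic} carry the concrete values, and that predicate chunks satisfy their bodies for unfolding and function expressions. Moreover, Lemma~\ref{lem:eval-correspondence} presupposes full correspondence, disjointness included, so it cannot be reused as a black box. What is needed is to restate the lemma for an invariant that keeps the store, path-condition, value, permission and predicate-assertion clauses of correspondence but drops disjointness. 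You would then re-prove Lemma~\ref{lem:eval-correspondence} under that invariant and check that each produce rule preserves it. Fresh field values are mapped to $\heap(V(t_e), f)$ by the corresponding valuation, and new predicate chunks are justified by inverting \refrule{AssertPredicate} once their arguments are identified with concrete values.
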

\begin{proof}
  Let $\perms' = \perms \cup \efoot{\heap}{\env}{\gform}$. Then by lemma \ref{lem:simstate-monotonicity} $\simstate{V}{\sstate}{\heap}{\perms'}{\env}$ and by lemma \ref{lem:assert-efoot-subset} $\assertion{\heap}{\perms'}{\env}{\gform}$.
  Then suppose $V(\pc(\sstate)) = \ktrue$ and complete the proof by induction on the syntax forms of $\gform$:

  \begin{enumcases}
    \case $\simprecise{\phi}$ -- $\phi \in \Formula$:

      Let $\hat{\sstate} = \sstate[\imp = \top]$. Then $V(\pc(\hat{\sstate})) = V(\pc(\sstate)) = \ktrue$. Also, since $\assertion{\heap}{\perms}{\env}{\simprecise{\phi}}$, by \refrule{AssertImprecise} $\assertion{\heap}{\perms}{\env}{\phi}$. Thus by induction $\sproduce{\hat{\sstate}}{\phi}{\sstate'}$ for some $\sstate'$ where $V'(\pc(\sstate')) = \ktrue$. Then $\sproduce{\sstate}{\simprecise{\phi}}{\sstate'}$ by \refrule{SProduceImprecise}, and $V'(\pc(\sstate')) = \ktrue$.

    \case $\phi_1 * \phi_2$ -- $\phi_1, \phi_2 \in \Formula$:

      Since $\assertion{\heap}{\perms}{\env}{\phi_1 * \phi_2}$, $\assertion{\heap}{\perms}{\env}{\phi_1}$ and $\assertion{\heap}{\perms}{\env}{\phi_2}$ by \refrule{AssertConjunction} and lemma \ref{lem:assert-monotonicity}.

      By induction $\sproduce{\sstate}{\phi_1}{\sstate'}$, with corresponding valuation $V'$ where $V'(\pc(\sstate')) = \ktrue$. Then by induction $\sproduce{\sstate'}{\phi_2}{\sstate''}$, with corresponding valuation $V''$, for some $\sstate''$ where $V''(\pc(\sstate'')) = \ktrue$. By \refrule{SProduceConjunction}, $\sproduce{\sstate}{\phi_1 * \phi_2}{\sstate''}$, and $V''(\pc(\sstate'')) = \ktrue$.

    \case $p(\multiple{e})$ -- $p \in \Predicate, \multiple{e \in \Expr}$:

      By lemma \ref{lem:eval-progress}, $\multiple{\seval{\sstate}{e}{t}{\sstate'}{\_}}$ for some $t \in \SExpr$ and $\sstate'$ where $V'(\pc(\sstate')) = \ktrue$, and $V'$ is the corresponding valuation extending $V$. Let $\sstate'' = \sstate'[\sheap = \sheap(\sstate'); \pair{p}{\multiple{t}}]$, then by \refrule{SProducePredicate}, $\sproduce{\sstate}{p(\multiple{e})}{\sstate''}$. Since $\sstate''$ and $\sstate'$ 
      differ only in their $\sheap$ components, $V'(\pc(\sstate'')) = V'(\pc(\sstate')) = \ktrue$.

    \case $e \in \Expr$:

      By lemma \ref{lem:eval-progress}, $\seval{\sstate}{e}{t}{\sstate'}{\_}$ for some $t \in \SExpr$. Let $V'$ be the corresponding valuation and $\sstate'' = \sstate'[\pc = \pc(\sstate') \kand t]$. 
      
      Since $\assertion{\heap}{\perms}{\env}{e}$, $\eval{\heap}{\env}{e}{\ktrue}$ by \refrule{AssertValue}. Then, since $\simstate{V}{\sstate}{\heap}{\perms'}{\env}$, and $\efoot{\heap}{\env}{e} \subseteq \perms'$, by lemma \ref{lem:eval-correspondence} $\eval{\heap}{\env}{e}{V'(t)}$. Then $V'(t) = \ktrue$ and $V'(\pc(\sstate')) = \ktrue$. Finally, $V'(\sstate'') = V'(\sstate') \wedge V'(t) = V(\sstate') \wedge \ktrue = \ktrue$.

    \case $\sif{e}{\phi_1}{\phi_2}$ -- $e \in \Expr, \phi_1, \phi_2 \in \Formula$:

      By lemma \ref{lem:eval-progress}, $\seval{\sstate}{e}{t}{\sstate'}{\_}$ for some $t \in \SExpr$ and $\sstate'$. Let $V_1$ be the valuation corresponding to this derivation. In addition, $V_1(\pc(\sstate')) = \ktrue$.

      Then one of the following rules must apply to derive $\assertion{\heap}{\perms}{\env}{\sif{e}{\phi_1}{\phi_2}}$:

      \subcase \refrule{AssertIfA}:

        Then $\eval{\heap}{\env}{e}{\ktrue}$. Since $\simstate{V}{\sstate}{\heap}{\perms'}{\env}$ and $\efoot{\heap}{\env}{e} \subseteq \perms'$, by lemma \ref{lem:eval-correspondence} $\simstate{V_1}{\sstate'}{\heap}{\perms'}{\env}$ and $\eval{\heap}{\env}{e}{V_1(t)}$. Therefore, $V_1(t) = \ktrue$ and $V_1(\pc(\sstate')) = \ktrue$.
        
        Therefore $V_1(\pc(\sstate') \kand t) = \ktrue$. Also, $\assertion{\heap}{\perms}{\env}{\phi_1}$ by \refrule{AssertIfA}, and so by lemma \ref{lem:assert-monotonicity} $\assertion{\heap}{\perms'}{\env}{\phi_1}$. 
        
        Then, since $\simstate{V_1}{\sstate'}{\heap}{\perms'}{\env}$, by induction $\sproduce{\sstate'[\pc = \pc(\sstate') \kand t]}{\phi_1}{\sstate''}$ for some $\sstate''$ with corresponding valuation $V'$ extending $V_1$, where $V'(\pc(\sstate'')) = \ktrue$.

        Now by \refrule{SProduceIfA}, $\sproduce{\sstate}{\sif{e}{\phi_1}{\phi_2}}{\sstate''}$. By definition $V'$ is the corresponding valuation extending $V$, and as shown before, $V'(\pc(\sstate'')) = \ktrue$.

      \subcase \refrule{AssertIfB}:

        Then $\eval{\heap}{\env}{e}{\kfalse}$. Since $\simstate{V}{\sstate}{\heap}{\perms'}{\env}$ and $\efoot{\heap}{\env}{e} \subseteq \perms'$, by lemma \ref{lem:eval-correspondence} $\simstate{V_1}{\sstate'}{\heap}{\perms'}{\env}$ and $\eval{\heap}{\env}{e}{V_1(t)}$. Therefore $V_1(t) = \kfalse$ and $V_1(\pc(\sstate')) = \ktrue$.
        
        Therefore $V_1(\pc(\sstate) \kand \kneg t) = \ktrue$. Also, $\assertion{\heap}{\perms}{\env}{\phi_2}$ by \refrule{AssertIfB}, and so by lemma \ref{lem:assert-monotonicity} $\assertion{\heap}{\perms'}{\env}{\phi_1}$.

        Then, since $\simstate{V_1}{\sstate'}{\heap}{\perms'}{\env}$, by induction $\sproduce{\sstate'[\pc = \pc(\sstate') \kand \kneg t]}{\phi_1}{\sstate''}$ for some $\sstate''$ with corresponding valuation $V'$ extending $V_1$, where $V'(\pc(\sstate'')) = \ktrue$.

        Now by \refrule{SProduceIfB}, $\sproduce{\sstate}{\sif{e}{\phi_1}{\phi_2}}{\sstate'}$. By definition $V'$ is the corresponding valuation extending $V$, and as shown before, $V'(\pc(\sstate')) = \ktrue$.

    \case $\kacc(e.f)$ where $e \in \Expr, f \in \Field$

      By lemma \ref{lem:eval-progress}, $\seval{\sstate}{e}{t}{\sstate'}{\_}$ for some $t$ and $V'(\pc(\sstate')) = \ktrue$, where $V_1$ is the corresponding valuation extending $V$. 
     
      Let $\sstate'' = \sstate'[\sheap = \sheap(\sstate'); \triple{f}{t}{\ffresh}]$.  Then, by \refrule{SProduceField}, $\sproduce{\sstate}{\kacc(e.f)}{\sstate''}$ with corresponding valuation $V'$ extending $V_1$. Since $\sstate''$ and $\sstate'$ differ only in their $\sheap$ component, $V'(\pc(\sstate'')) = V'(\pc(\sstate')) = \ktrue$.
  \end{enumcases}
\end{proof}

\subsection{Consume}\label{sec:soundness-consume}
\begin{definition}\label{def:consume-valuation}
  For a judgement $\sconsume{\sstate}{\sstate_E}{\gform}{\sstate'}{\scheck}{\sperms}$, given an initial valuation $V$, heap $\heap$, and environment $\env$, the \textbf{corresponding valuation} is denoted
  $$V[\sconsume{\sstate}{\sstate_E}{\gform}{\sstate'}{\scheck}{\sperms} \mid \heap, \env].$$
  This valuation is defined as follows, depending on the rule that proves the derivation. Values are referenced using the respective name from the rule definition.

  Note that the corresponding valuation always extends the initial valuation and is defined for all $\ffresh$ symbolic values in the judgement.
  \begin{itemize}
    \item \refrule{SConsumeImprecision}:
      \begin{align*}
        &V[\sconsume{\sstate}{\sstate_E}{\simprecise{\phi}}{\quintuple{\top}{\pc(\sstate')}{\senv(\sstate')}{\emptyset}{\emptyset}}{\scheck}{\sperms} \mid \heap, \env] := \\
        &\quad V[\sconsume{\sstate}{\sstate_E}{\phi}{\sstate'}{\scheck}{\sperms} \mid \heap, \env]
      \end{align*}
    \item \refrule{SConsumeValue}:
      $$V[\sconsume{\sstate}{\sstate_E}{e}{\sstate}{\scheck}{\emptyset} \mid \heap, \env] := V[\seval{\sstate_E}{e}{t}{\_}{\scheck} \mid \heap, \env]$$
    \item \refrule{SConsumeValueImprecise}:
      $$V[\sconsume{\sstate}{\sstate_E}{e}{\sstate[\pc = \pc(\sstate) \kand t]}{\scheck; t}{\emptyset} \mid \heap, \env] := V[\seval{\sstate_E}{e}{t}{\_}{\scheck} \mid \heap, \env]$$
    \item \refrule{SConsumeValueFailure}:
      $$V[\sconsume{\sstate}{\sstate_E}{e}{\sstate}{\set{\bot}}{\emptyset} \mid \heap, \env] := V[\seval{\sstate_E}{e}{t}{\_}{\scheck} \mid \heap, \env]$$
    \item \refrule{SConsumePredicate}:
      $$V[\sconsume{\sstate}{\sstate_E}{p(\multiple{e})}{\sstate[\sheap = \sheap', \oheap = \emptyset]}{\_}{\_} \mid \heap, \env] := V[\multiple{\seval{\sstate_E}{e}{t}{\_}{\scheck}} \mid \heap, \env]$$
    \item \refrule{SConsumePredicateImprecise}:
      $$V[\sconsume{\sstate}{\sstate_E}{p(\multiple{e})}{\sstate[\heap = \emptyset, \oheap = \emptyset]}{\_}{\_} \mid \heap, \env] := V[\multiple{\seval{\sstate_E}{e}{t}{\_}{\scheck}} \mid \heap, \env]$$
    \item \refrule{SConsumePredicateFailure}:
      $$V[\sconsume{\sstate}{\sstate_E}{p(\multiple{e})}{\sstate}{\set{\bot}}{\_} \mid \heap, \env] := V[\multiple{\seval{\sstate_E}{e}{t}{\_}{\scheck}} \mid \heap, \env]$$
    \item \refrule{SConsumeAcc}, \refrule{SConsumeAccOptimistic}, \refrule{SConsumeAccImprecise}, \\ \refrule{SConsumeAccFailure}:
      \begin{align*}
        &V[\sconsume{\sstate}{\sstate_E}{\kacc(e.f)}{\sstate[\sheap = \sheap', \oheap = \oheap']}{\_}{\_} \mid \heap, \env] := \\
        &\quad V[\seval{\sstate_E}{e}{t_e}{\_}{\scheck} \mid \heap, \env]
      \end{align*}
    \item \refrule{SConsumeConjunction}:
      \begin{align*}
        &V[\sconsume{\sstate}{\sstate_E}{\phi_1 * \phi_2}{\sstate''}{\_}{\_} \mid \heap, \env] := \\
        &\quad V[\sconsume{\sstate}{\sstate_E}{\phi_1}{\sstate'}{\scheck_1}{\sperms_1} \mid \heap, \env] \\
        &\hspace{1.85em} [\sconsume{\sstate'}{\sstate_E[\pc = \pc(\sstate')]}{\phi_2}{\sstate''}{\scheck_2}{\sperms_2} \mid \heap, \env]
      \end{align*}
    \item \refrule{SConsumeConditionalA}:
      \begin{align*}
        &V[\sconsume{\sstate}{\sstate_E}{\sif{e}{\phi_1}{\phi_2}}{\sstate'}{\scheck \cup \scheck'}{\_} \mid \heap, \env] := \\
        &\quad V[\sconsume{\sstate[\pc = \pc']}{\sstate_E[\pc = \pc']}{\phi_1}{\sstate'}{\scheck'}{\sperms} \mid \heap, \env]
      \end{align*}
    \item \refrule{SConsumeConditionalB}:
      \begin{align*}
        &V[\sconsume{\sstate}{\sstate_E}{\sif{e}{\phi_1}{\phi_2}}{\sstate'}{\scheck \cup \scheck'}{\_} \mid \heap, \env] := \\
        &\quad V[\sconsume{\sstate[\pc = \pc']}{\sstate_E[\pc = \pc']}{\phi_2}{\sstate'}{\scheck'}{\sperms} \mid \heap, \env]
      \end{align*}
  \end{itemize}
\end{definition}

\begin{definition}\label{def:cons-valuation}
  For a judgement $\scons{\sstate}{\gform}{\sstate'}{\scheck}$, given an initial valuation $V$, heap $\heap$, and environment $\env$, the \textbf{corresponding valuation} is denoted
  $$V[\scons{\sstate}{\gform}{\sstate'}{\scheck} \mid \heap, \env].$$

  The is defined by
  $$V[\scons{\sstate}{\gform}{\sstate'}{\scheck} \mid \heap, \env] := V[\sconsume{\sstate}{\sstate}{\gform}{\sstate'}{\scheck}{\_} \mid \heap, \env]$$
  where $\sconsume{\sstate}{\sstate}{\gform}{\sstate'}{\scheck}{\_}$ is the judgement used when applying \textsc{SConsume} to derive $\scons{\sstate}{\gform}{\sstate'}{\scheck}$.
\end{definition}

\begin{lemma}[Consume results in more specific path condition (long form)]\label{lem:consume-subpath}
  If $\sconsume{\sstate}{\sstate_E}{\gform}{\sstate'}{\_}{\_}$, then $\pc(\sstate') \implies \pc(\sstate)$.
\end{lemma}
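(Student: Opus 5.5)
We prove the statement by induction on the derivation of $\sconsume{\sstate}{\sstate_E}{\gform}{\sstate'}{\_}{\_}$, case-splitting on the last consume rule applied. In every case the resulting path condition is either literally $\pc(\sstate)$, or $\pc(\sstate)$ conjoined with further symbolic expressions, or is obtained from such a state by an inner consume to which the induction hypothesis applies. Since $t_1 \kand t_2 \implies t_1$ holds trivially by Definition~\ref{def:implication}, and implication is transitive, this suffices.

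The base cases come first. For \refrule{SConsumeValue}, \refrule{SConsumeValueFailure}, \refrule{SConsumePredicateFailure} and \refrule{SConsumeAccFailure}, the output state is $\sstate$ itself, so the claim is reflexivity. For \refrule{SConsumePredicate}, \refrule{SConsumePredicateImprecise}, \refrule{SConsumeAcc}, \refrule{SConsumeAccOptimistic} and \refrule{SConsumeAccImprecise}, the output differs from $\sstate$ only in its $\sheap$ and $\oheap$ components, so $\pc(\sstate') = \pc(\sstate)$. Note that the expressions in these rules are evaluated in $\sstate_E$, and the resulting states are discarded, so Lemma~\ref{lem:eval-subpath} is not even needed. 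For \refrule{SConsumeValueImprecise}, we have $\pc(\sstate') = \pc(\sstate) \kand t \implies \pc(\sstate)$.

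The inductive cases follow. For \refrule{SConsumeImprecision}, the output path condition is $\pc$ of the inner result, and the inner judgement starts from the same $\sstate$ (only $\sstate_E$ is modified), so the induction hypothesis gives the claim directly. For \refrule{SConsumeConjunction} and \refrule{SConsumeConjunctionImprecise}, the induction hypothesis gives $\pc(\sstate') \implies \pc(\sstate)$ for the first conjunct and $\pc(\sstate'') \implies \pc(\sstate')$ for the second, which start from $\sstate'$. Chaining these two implications gives the result.

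For \refrule{SConsumeConditionalA} and \refrule{SConsumeConditionalB}, the inner consume starts from $\sstate[\pc = \pc']$ with $\pc' = \pc \kand t$ (respectively $\kand \kneg t$). The induction hypothesis gives $\pc(\sstate') \implies \pc'$, and $\pc' \implies$ the original path condition, so transitivity closes the case.

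The only mild obstacle is a notational slip in the conditional rules, which write $\pc(\sstate')$ rather than $\pc(\sstate)$ in the definition of $\pc'$. We read it as $\pc(\sstate)$, the only well-founded reading, and the argument then goes through. Since the consume rules are unchanged by the pure-function and unfolding extensions, no new cases arise.
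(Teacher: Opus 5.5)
Your proof is correct and follows the approach the paper relies on: the paper states this lemma without proof, deferring to Zimmerman et al., and proves the sibling produce lemma by the same induction on the derivation with reflexivity and weakening in the base cases and transitivity in the conjunction and conditional cases. Your reading of $\pc'$ as $\pc(\sstate) \kand t$ (resp.\ $\kand \kneg t$) in the conditional rules is also how the paper itself instantiates those rules in its later consume proofs.
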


\begin{lemma}[Consume results in more specific path condition (short form)]\label{lem:cons-subpath}
  If $\scons{\sstate}{\gform}{\sstate'}{\_}$, then $\pc(\sstate') \implies \pc(\sstate)$.
\end{lemma}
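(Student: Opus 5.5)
The short-form statement should follow directly from the long-form Lemma~\ref{lem:consume-subpath}. By \refrule{SConsume}, the only way to derive $\scons{\sstate}{\gform}{\sstate'}{\scheck}$ is from a judgement $\sconsume{\sstate}{\sstate}{\gform}{\sstate'}{\scheck}{\sperms}$ for some $\sperms$. So I would invert the derivation to obtain that long-form judgement, with $\sstate_E = \sstate$. Lemma~\ref{lem:consume-subpath} then gives $\pc(\sstate') \implies \pc(\sstate)$ immediately.

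The long form holds for an arbitrary evaluation state $\sstate_E$, so instantiating it with $\sstate$ needs no extra hypotheses. The only work is checking that inversion on \refrule{SConsume} is legitimate, which it is because \refrule{SConsume} is the unique rule whose conclusion has the short form.

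If Lemma~\ref{lem:consume-subpath} is itself in doubt, it would be proved by induction on the consume derivation. Most rules leave $\pc$ unchanged or conjoin a term: the value, accessibility and predicate rules either return $\sstate$ with modified heaps or add $\kand\, t$. The conjunction rule chains the two induction hypotheses transitively. The conditional rules give $\pc(\sstate') \implies \pc' \implies \pc(\sstate)$.

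The one delicate spot in that induction is \refrule{SConsumeConditionalA}/\textsc{B}, where $\pc'$ is written with $\pc(\sstate')$ and this appears to be a typo for $\pc(\sstate)$. I would read it as $\pc(\sstate)$, after which the step is immediate. For the short-form lemma itself there is no real obstacle.
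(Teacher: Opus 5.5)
Your proposal is correct and matches the intended argument. The paper states this lemma without proof, deferring to Zimmerman et al., and the short form follows exactly as you say: invert \textsc{SConsume} to get the long-form judgement with $\sstate_E = \sstate$, then apply Lemma~\ref{lem:consume-subpath}. Your reading of $\pc'$ in \textsc{SConsumeConditionalA}/\textsc{B} as $\pc(\sstate) \kand t$ (resp.\ $\pc(\sstate) \kand \kneg t$) also agrees with how the paper itself uses those rules in its soundness proofs.
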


\begin{lemma}[Soundness of $\fremf$ for precise heaps and imprecise states]\label{lem:sheap-remf-imp}
  If $\simstate{V}{\sstate}{\heap}{\perms}{\env}$ and $\imp(\sstate)$ then $\simheap{V}{\fremf(\sheap(\sstate), \sstate, t, f)}{\heap}{\perms \setminus \set{(V(t), f)}}$ for any $t$, $f$.
\end{lemma}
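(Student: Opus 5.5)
To prove $\simheap{V}{\fremf(\sheap(\sstate), \sstate, t, f)}{\heap}{\perms \setminus \set{(V(t), f)}}$, I will write $\sheap^- = \fremf(\sheap(\sstate), \sstate, t, f)$ and check the four conjuncts of \eqref{eq:sheap-correspondence} one at a time. The starting point is $\simheap{V}{\sheap(\sstate)}{\heap}{\perms}$, which we get from $\simstate{V}{\sstate}{\heap}{\perms}{\env}$. Two facts will be used throughout. First, by definition $\fremf$ keeps only field chunks, so $\sheap^- \subseteq \sheap(\sstate)$ and $\sheap^-$ contains no predicate chunks. Second, since $\imp(\sstate)$ holds, a field chunk $\triple{f'}{t'}{t''}$ survives in $\sheap^-$ exactly when $\neg(f = f' \wedge \fsat(\pc(\sstate) \kand t \keq t'))$.

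The value conjunct ($\heap(V(t'), f) = V(t'')$) follows directly from the subset fact. The predicate conjunct is vacuous, because $\sheap^-$ has no predicate chunks. The disjointness conjunct also follows from the subset fact, since the footprints $\vfoot{V}{\heap}{h}$ do not depend on $\perms$.

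The only real work is the ownership conjunct. Take any $\triple{f'}{t'}{t''} \in \sheap^-$. We already know $\pair{V(t')}{f'} \in \perms$, so it remains to show $\pair{V(t')}{f'} \ne \pair{V(t)}{f}$. If $f' \ne f$ this is immediate. If $f' = f$, then survival gives $\neg\fsat(\pc(\sstate) \kand t \keq t')$. On the other hand, $V(\pc(\sstate)) = \ktrue$ holds by state correspondence. If we also had $V(t') = V(t)$, then $V$ would be a valuation making $\pc(\sstate) \kand t \keq t'$ true, so this formula would be satisfiable, which is a contradiction. Hence $V(t') \ne V(t)$, and the chunk's permission lies in $\perms \setminus \set{(V(t), f)}$.

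The expected obstacle is this last step. It needs the soundness of $\fsat$: any formula that some valuation makes true must be judged satisfiable. It also needs $V$ to be defined on $t$ and $t'$, which I will assume as part of the usual reachable-with-$V$ hypotheses. I will state the soundness of $\fsat$ as the implicit assumption the paper makes about the SMT solver. If that is not acceptable, I will read $\fsat(g)$ as ``there exists a valuation $V$ with $V(g) = \ktrue$'', so that the argument goes through by definition.
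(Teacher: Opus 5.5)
Your proof is correct and follows essentially the same route as the paper. The paper defers this lemma to Zimmerman et al., but its proof of the companion Lemma~\ref{lem:oheap-remf} uses your argument, with $\neg\fsat(\pc(\sstate) \kand t \keq t')$ and $V(\pc(\sstate)) = \ktrue$ forcing $V(t) \ne V(t')$ via the same implicit soundness of $\fsat$ you flag; your extra observations (that $\fremf(\ldots) \subseteq \sheap(\sstate)$ contains no predicate chunks and that disjointness is inherited) cover what the precise-heap version needs beyond that.
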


\begin{lemma}\label{lem:consume-unchanged}
  If $\sconsume{\sstate}{\sstate_E}{\gform}{\sstate'}{\scheck}{\sperms}$ then $\senv(\sstate') = \senv(\sstate)$.
\end{lemma}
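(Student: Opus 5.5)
The plan is induction on the derivation of $\sconsume{\sstate}{\sstate_E}{\gform}{\sstate'}{\scheck}{\sperms}$, checking in each consume rule that the output state's symbolic store is literally the store of the input state $\sstate$. Nothing is needed about evaluation except that evaluations occurring in the premises are always performed on $\sstate_E$, never on $\sstate$. Their result states are discarded (written $\_$), so they cannot affect $\senv(\sstate')$. In particular Lemma~\ref{lem:eval-unchanged} is not even required.

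The base cases split into three groups.
\begin{itemize}
\item \refrule{SConsumeValue}, \refrule{SConsumeValueFailure} and \refrule{SConsumePredicateFailure}: the output is $\sstate$ itself, so there is nothing to show.
\item \refrule{SConsumeValueImprecise}, \refrule{SConsumePredicate}, \refrule{SConsumePredicateImprecise}, and all four \textsc{SConsumeAcc} variants: the output has the form $\sstate[\ldots]$, where only $\pc$, $\sheap$ or $\oheap$ are updated. Since $\senv$ is untouched by these updates, $\senv(\sstate') = \senv(\sstate)$ by definition.
\item \refrule{SConsumeImprecision}: the output is the tuple built from $\top$, $\pc(\sstate')$, $\senv(\sstate')$, $\emptyset$, $\emptyset$, where $\sstate'$ is the result of the premise consuming $\phi$ from the same $\sstate$. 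So its store is $\senv(\sstate')$, which equals $\senv(\sstate)$ by the induction hypothesis.
\end{itemize}

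The inductive cases work as follows.
\begin{itemize}
\item \refrule{SConsumeConjunction} and \refrule{SConsumeConjunctionImprecise}: apply the induction hypothesis twice and chain the equalities, $\senv(\sstate'') = \senv(\sstate') = \senv(\sstate)$. The modified evaluation state $\sstate_E[\pc = \ldots]$ in the second premise plays no role.
\item \refrule{SConsumeConditionalA} and \refrule{SConsumeConditionalB}: the recursive premise starts from $\sstate[\pc = \pc']$. Its store is $\senv(\sstate)$, so the induction hypothesis gives $\senv(\sstate') = \senv(\sstate)$ directly.
\end{itemize}

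The only step needing care is \refrule{SConsumeImprecision}. There the output state is written as an explicit tuple rather than a functional update of $\sstate$, and it omits the visited-set component, so one must read off that its store slot is exactly $\senv(\sstate')$ of the premise. Everything else is bookkeeping. Note that the new rules for unfolding expressions and pure functions do not enter this lemma, since they only appear inside evaluation premises whose output states are ignored.
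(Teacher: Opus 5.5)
Your proof is correct and takes the same route the paper relies on: the paper states this lemma without proof and defers to Zimmerman et al., where it is the routine induction on the consume derivation that you carry out rule by rule. Your observation that every evaluation premise in the consume rules discards its output state, so Lemma~\ref{lem:eval-unchanged} is not needed, is accurate and worth keeping. The paper's proof of Lemma~\ref{lem:eval-unchanged} for the unfolding rules itself uses store preservation under consume, so your version lets this lemma be proved first, with no mutual induction.
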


\begin{lemma}\label{lem:cons-unchanged}
  If $\scons{\sstate}{\gform}{\sstate'}{\scheck}$ then $\senv(\sstate') = \senv(\sstate)$.
\end{lemma}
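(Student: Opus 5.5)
The statement to prove is Lemma~\ref{lem:cons-unchanged}: if $\scons{\sstate}{\gform}{\sstate'}{\scheck}$ then $\senv(\sstate') = \senv(\sstate)$. I will derive it directly from the long-form Lemma~\ref{lem:consume-unchanged}. There is only one rule concluding the short-form judgement, namely \refrule{SConsume}. So I invert the derivation of $\scons{\sstate}{\gform}{\sstate'}{\scheck}$ and obtain $\sconsume{\sstate}{\sstate}{\gform}{\sstate'}{\scheck}{\sperms}$ for some $\sperms$. Here the remaining-permissions state and the evaluation state $\sstate_E$ are both instantiated to $\sstate$. Lemma~\ref{lem:consume-unchanged} then gives $\senv(\sstate') = \senv(\sstate)$ immediately.

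Should Lemma~\ref{lem:consume-unchanged} also need to be justified, I would prove it by induction on the derivation of the long-form consume judgement, checking each rule.
\begin{itemize}
\item \refrule{SConsumeValue}, \refrule{SConsumeValueFailure}, \refrule{SConsumePredicateFailure} and \refrule{SConsumeAccFailure} return $\sstate$ itself.
\item \refrule{SConsumeValueImprecise}, \refrule{SConsumePredicate}, \refrule{SConsumePredicateImprecise} and the remaining \textsc{SConsumeAcc} rules return $\sstate$ updated only in $\pc$, $\sheap$ or $\oheap$.
\item \refrule{SConsumeImprecision} explicitly copies $\senv(\sstate')$ from the recursive premise, so the induction hypothesis applies.
\item The conjunction rules chain the induction hypothesis through the intermediate state $\sstate'$.
\item The conditional rules use the induction hypothesis on $\sstate[\pc = \pc']$, which has the same store as $\sstate$.
\end{itemize}
No evaluation judgement ever produces the output state of consume; evaluation is run only in $\sstate_E$ with its result discarded. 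Hence the new unfolding and pure-function rules cannot interfere, and Lemma~\ref{lem:eval-unchanged} is not even required.

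The only mild obstacle is bookkeeping in \refrule{SConsumeImprecision}. Its conclusion is written as a quintuple omitting the visited set, so I must read it as the state $\sstate'$ with the imprecision flag set and both heaps emptied, while the environment is retained. Once that reading is fixed, the argument is purely structural.
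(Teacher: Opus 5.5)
Your proof is correct and follows the intended route. The paper states this lemma without proof, deferring to Zimmerman et al., and it follows immediately by inverting the unique rule \textsc{SConsume} to obtain the long-form judgement with $\sstate_E = \sstate$ and then applying Lemma~\ref{lem:consume-unchanged}; your rule-by-rule sketch of that long-form lemma is also sound, including the observation that evaluation (and hence the new unfolding and pure-function rules) never produces the output state of consume.
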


\begin{lemma}[Soundness of $\fremfp$ for precise heaps]\label{lem:sheap-remfp-prec}
  If $\simstate{V}{\sstate}{\heap}{\perms}{\env}$ and $\triple{f}{t}{\_} \in \sheap(\sstate)$, then $\simheap{V}{\fremfp(\sheap(\sstate), \sstate, t, f)}{\heap}{\perms \setminus \set{\pair{V(t)}{f}}}$ for any $t$, $f$.
\end{lemma}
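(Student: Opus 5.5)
We must prove the statement: if $\simstate{V}{\sstate}{\heap}{\perms}{\env}$ and $\triple{f}{t}{\_} \in \sheap(\sstate)$, then $\simheap{V}{\fremfp(\sheap(\sstate), \sstate, t, f)}{\heap}{\perms \setminus \set{\pair{V(t)}{f}}}$.

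Write $\sheap = \sheap(\sstate)$, $\sheap^- = \fremfp(\sheap, \sstate, t, f)$ and $\perms^- = \perms \setminus \set{\pair{V(t)}{f}}$. The plan is to verify the four conjuncts of \eqref{eq:sheap-correspondence} for $\sheap^-$ against $\perms^-$, starting from $\simheap{V}{\sheap}{\heap}{\perms}$ (given by state correspondence).

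The key observation is that $\sheap^- \subseteq \sheap$. By definition, $\fremfp$ keeps exactly the field chunks not aliasing $\pair{t}{f}$, together with all predicate chunks. The first conjunct (field values agree with $\heap$) and the fourth conjunct (pairwise disjointness of footprints) therefore transfer immediately from $\sheap$ to $\sheap^-$. Neither mentions $\perms$, and both are preserved under taking subsets.

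The remaining two conjuncts mention the permission set, so this is where the removed location matters. Let $h_0 = \triple{f}{t}{t'} \in \sheap$ be the chunk assumed in the hypothesis. Then $\vfoot{V}{\heap}{h_0} = \set{\pair{V(t)}{f}}$. By the disjointness conjunct for $\sheap$, every other chunk $h \in \sheap$ with $h \neq h_0$ satisfies $\pair{V(t)}{f} \notin \vfoot{V}{\heap}{h}$. This gives $\universal{h \in \sheap \setminus \set{h_0}}{\vfoot{V}{\heap}{h} \cap \set{\pair{V(t)}{f}} = \emptyset}$.

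We next argue that $h_0 \notin \sheap^-$. If $\sstate$ is precise, $\falias(\sstate, t, f, t, f)$ holds because $\pc(\sstate) \implies t \keq t$. If $\sstate$ is imprecise, $\falias$ requires satisfiability of $\pc(\sstate) \kand t \keq t$, which holds since $V(\pc(\sstate)) = \ktrue$. In either case $h_0$ is removed.

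Any other field chunk $\triple{f}{t_1}{\_}$ with $V(t_1) = V(t)$ would share a footprint with $h_0$, contradicting disjointness. Hence every surviving chunk of $\sheap^-$ has a footprint disjoint from $\set{\pair{V(t)}{f}}$.

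We can now invoke Lemma~\ref{lem:disjoint-sim-heap-subset} with $\perms' = \set{\pair{V(t)}{f}}$, applied to $\simheap{V}{\sheap^-}{\heap}{\perms}$. That hypothesis follows from the subset observation, since the second and third conjuncts are also subset-closed. The lemma yields $\simheap{V}{\sheap^-}{\heap}{\perms^-}$ directly.

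If one prefers to check the third (predicate) conjunct by hand rather than through the lemma, the argument is as follows. Each predicate chunk $\pair{p}{\multiple{t}} \in \sheap^-$ has $\assertion{\heap}{\perms}{[\multiple{x \mapsto V(t)}]}{\fpred(p)}$. Its exact footprint is $\vfoot{V}{\heap}{\pair{p}{\multiple{t}}}$, which avoids $\pair{V(t)}{f}$ and is contained in $\perms$ by Lemma~\ref{lem:efoot-subset-spec}. So it is contained in $\perms^-$, and Lemma~\ref{lem:assert-efoot-subset} transports the assertion to $\perms^-$.

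The main obstacle is the disjointness step for predicate chunks: showing that no retained predicate's footprint contains $\pair{V(t)}{f}$. This relies on the fourth conjunct of precise-heap correspondence together with the presence of $h_0$ in $\sheap$. It is exactly why the hypothesis $\triple{f}{t}{\_} \in \sheap(\sstate)$ is needed; without it, a predicate chunk could own the removed location.
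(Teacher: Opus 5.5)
Your proof is correct and takes essentially the standard route; the paper states this lemma without proof and defers to Zimmerman et al. The route is as follows: correspondence transfers to the subset $\fremfp(\sheap(\sstate), \sstate, t, f) \subseteq \sheap(\sstate)$, the chunk $\triple{f}{t}{\_}$ is always removed because $\falias(\sstate, t, f, t, f)$ holds in both the precise and imprecise cases, and the disjointness conjunct of \eqref{eq:sheap-correspondence} makes that chunk the unique owner of $\pair{V(t)}{f}$, so Lemma~\ref{lem:disjoint-sim-heap-subset} (or, for predicate chunks, Lemmas~\ref{lem:efoot-subset-spec} and~\ref{lem:assert-efoot-subset} directly) yields correspondence at $\perms \setminus \set{\pair{V(t)}{f}}$.
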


\begin{lemma}[Soundness of $\fremf$ for optimistic heaps]\label{lem:oheap-remf}
  If $\sstate$ is well-formed and $\simstate{V}{\sstate}{\heap}{\perms}{\env}$ then $\simheap{V}{\fremf(\oheap(\sstate), \sstate, t, f)}{\heap}{\perms \setminus \set{\pair{V(t)}{f}}}$ for any $t$, $f$.
\end{lemma}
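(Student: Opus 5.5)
The proof unfolds the definition of $\simheap{V}{\cdot}{\heap}{\cdot}$ for optimistic heaps, equation \eqref{eq:oheap-correspondence}. It then checks its three conjuncts for $\oheap' := \fremf(\oheap(\sstate), \sstate, t, f)$ against the reduced permission set $\perms \setminus \set{\pair{V(t)}{f}}$. From $\simstate{V}{\sstate}{\heap}{\perms}{\env}$ we obtain $\simheap{V}{\oheap(\sstate)}{\heap}{\perms}$. By definition of $\fremf$, we have $\oheap' \subseteq \oheap(\sstate)$, and every chunk of $\oheap'$ is either a predicate chunk or a field chunk $\triple{f'}{t'}{t''}$ with $\neg\falias(\sstate, t, f, t', f')$. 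Note that $\fremf$ as written filters only field chunks, so I read it as also retaining predicate chunks, as is implicit in the newly extended optimistic heap.

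The first conjunct, heap values, follows directly from the heap-value conjunct for $\oheap(\sstate)$, since it does not mention $\perms$ and $\oheap' \subseteq \oheap(\sstate)$.

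The second conjunct, ownership, is the key step. For $\triple{f'}{t'}{t''} \in \oheap'$ we already have $\pair{V(t')}{f'} \in \perms$, so it suffices to show $\pair{V(t')}{f'} \ne \pair{V(t)}{f}$. If $f' \ne f$ this is immediate. Otherwise $f' = f$, and we must show $V(t') \ne V(t)$ from $\neg\falias$. We split on $\imp(\sstate)$.

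If $\imp(\sstate)$, then $\neg\fsat(\pc(\sstate) \kand t \keq t')$. Since $V(\pc(\sstate)) = \ktrue$ by correspondence, if $V(t) = V(t')$ then $V$ would witness satisfiability, a contradiction.

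If $\neg\imp(\sstate)$, well-formedness is needed: in a precise state the optimistic heap is empty, so the claim is vacuous. I expect that "$\sstate$ is well-formed" is exactly the invariant $\neg\imp(\sstate) \implies \oheap(\sstate) = \emptyset$, and this case is where that hypothesis is consumed. Without it, $\neg(\pc \implies t \keq t')$ would not exclude $V(t) = V(t')$ for the particular valuation $V$.

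The third conjunct concerns predicate chunks $\pair{p}{\multiple{t}} \in \oheap'$, and I expect it to be the main obstacle. Correspondence gives only $\assertion{\heap}{\perms}{[\multiple{x \mapsto V(t)}]}{\fpred(p)}$, while we need the assertion under $\perms \setminus \set{\pair{V(t)}{f}}$. I would first try Lemma~\ref{lem:assert-efoot-subset}, which reduces the goal to showing $\pair{V(t)}{f} \notin \efoot{\heap}{[\multiple{x \mapsto V(t)}]}{\fpred(p)} = \vfoot{V}{\heap}{\pair{p}{\multiple{t}}}$. This does not follow from the optimistic-heap invariants, which lack disjointness. The intended route is therefore that well-formedness also covers these chunks: either it forbids predicate chunks whose footprint overlaps a consumed field, or, more plausibly, every use of $\fremf$ on $\oheap$ (in \textsc{SConsumeAcc} and related rules) occurs only where predicate chunks have been cleared or are separately re-justified by run-time checks. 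If neither is available, I would strengthen the well-formedness hypothesis to require that the footprints of predicate chunks in $\oheap(\sstate)$ avoid the removed location, and note this explicitly as the point where the addition of predicate chunks to $\oheap$ changes the original argument.
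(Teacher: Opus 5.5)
Your handling of the first two conjuncts matches the paper's proof. The inclusion $\fremf(\oheap(\sstate), \sstate, t, f) \subseteq \oheap(\sstate)$ gives the heap-value conjunct. For ownership you split on $\imp(\sstate)$: well-formedness makes the precise case vacuous, and in the imprecise case $V(\pc(\sstate)) = \ktrue$ contradicts $\neg\fsat(\pc(\sstate) \kand t \keq t')$ whenever $V(t) = V(t')$.

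The gap is in the predicate-chunk conjunct, and it comes from your decision to reread $\fremf$. As defined, $\fremf(\sheap, \sstate, t, f) = \set{\triple{f'}{t'}{t''} \in \sheap : \neg \falias(\sstate, t, f, t', f')}$ selects only field chunks, so every predicate chunk is discarded. This is deliberate. The companion function $\fremfp(\sheap, \sstate, t, f) = \fremf(\sheap, \sstate, t, f) \cup \set{\pair{p}{\multiple{u}} \in \sheap}$ exists precisely to put predicate chunks back. Likewise, \textsc{SConsumeAcc} applies $\fremfp$ to the precise heap but $\fremf$ to the optimistic heap. The paper's proof invokes exactly this fact, that $\oheap'$ contains no predicate chunks. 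So the third conjunct holds vacuously, and only the field-chunk ownership condition has to be checked.

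Under your reading the third conjunct is not merely hard to prove; it is false in general. Optimistic-heap correspondence carries no disjointness condition. So $\oheap(\sstate)$ may contain a chunk $\pair{p}{u}$ with $\fpredparams(p) = x$, $\fpred(p) = \kacc(x.f)$ and $V(u) = V(t)$. Then $\assertion{\heap}{\perms \setminus \set{\pair{V(t)}{f}}}{[x \mapsto V(u)]}{\fpred(p)}$ fails, even though $\assertion{\heap}{\perms}{[x \mapsto V(u)]}{\fpred(p)}$ held.

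Your suggested repairs, strengthening well-formedness or appealing to how the rules call $\fremf$, would therefore change the statement rather than prove it. If you take the definition of $\fremf$ literally, the obstacle disappears and the rest of your proof coincides with the paper's.
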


\begin{proof}
    If $\lnot \imp(\sstate)$, then $\oheap(\sstate) = \emptyset$ because $\sstate$ is well-formed, so trivially $\simheap{V}{\heap}{\emptyset}{\oheap(\sstate)}$. Then, by definition, $\fremf(\oheap(\sstate), \sstate, t, f) \subseteq \oheap(\sstate) = \emptyset$, so $\simheap{V}{\heap}{\emptyset}{\fremf(\oheap(\sstate), \sstate, t, f)}$. Then, since $\emptyset \subseteq \perms \setminus \{ \pair{V(t)}{f} \}$, $\simheap{V}{\heap}{\perms \setminus \{ \pair{V(t)}{f} \}}{\fremf(\oheap(\sstate), \sstate, t, f)}$ by Lemma~\ref{lem:sim-oheap-monotonicity}.

    If $\imp(\sstate)$ then $\oheap(\sstate)$ may be non-empty. Let $\oheap' = \fremf(\oheap(\sstate), \sstate, t, f)$ and let $\perms' = \perms \setminus \{ \pair{V(t)}{f} \}$. Then, by definition of $\fremf$, $\oheap' \subseteq \oheap(\sstate)$, so $\simheap{V}{\heap}{\perms}{\oheap'}$. Then, by the definition of optimistic heap correspondence, $\heap(V(t), f) = V(t')$ for all $\triple{f}{t}{t'} \in \oheap'$. Using this and the fact that $\oheap'$ contains no predicate chunks to the definition of $\fremf$, it suffices to show that $\pair{V(s)}{g} \in \perms'$ for $\triple{g}{s}{s'} \in \oheap'$. By definition of $\fremf$,

    \begin{align*}
        \lnot \falias(\sstate, t, f, s, g) &\implies (f \neq g) \lor \lnot \fsat(\pc(\sstate) \kand [t == s]) \\
                                           &\implies (f \neq g) \lor (V(\pc(\sstate) \kand t == t') = \kfalse) \\
                                           &\implies (f \neq g) \lor (V(\pc(\sstate)) = \kfalse) \lor (V(t) \neq V(s)) \\
                                           &\implies (f \neq g) \lor (V(t) \neq V(s)) \\
                                           &\implies \pair{V(t)}{f} \neq \pair{V(s)}{g}.
    \end{align*}

    Furthermore, $\pair{V(s)}{g} \in \perms$ because $\simheap{V}{\heap}{\perms}{\oheap'}$, so $\pair{V(s)}{g} \in \perms'$ and $\simheap{V}{\heap}{\perms'}{\oheap'}$ as desired.
\end{proof}

\begin{lemma}[Soundness of $\sperms$ calculation]\label{lem:consume-assert-vfoot}
  If $\phi$ is a precise formula, $\simstate{V}{\sstate_E}{\heap}{\perms_E}{\env}$, $\sconsume{\sstate}{\sstate_E}{\phi}{\sstate'}{\scheck}{\sperms}$ with corresponding valuation $V'$, $V'(\pc(\sstate')) = \ktrue$, and $\assertion{\heap}{\perms'}{\env}{\phi}$ where $\perms' \subseteq \perms_E$, then $\assertion{\heap}{\vfoot{V'}{\heap}{\sperms}}{\env}{\phi}$ and $\vfoot{V}{\heap}{\sperms} \subseteq \perms'$.
\end{lemma}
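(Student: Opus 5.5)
The plan is to argue by induction on the derivation of $\sconsume{\sstate}{\sstate_E}{\phi}{\sstate'}{\scheck}{\sperms}$, using throughout that $V'$ extends the valuations of all sub-derivations (Definition~\ref{def:consume-valuation}). Since $\phi$ is precise, \refrule{SConsumeImprecision} never applies at the top level. At each step I would establish two facts: first, that $\phi$ is asserted under the permission set $\vfoot{V'}{\heap}{\sperms}$; second, that $\vfoot{V'}{\heap}{\sperms} \subseteq \perms'$, which is the form in which I read the stated containment.

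\textbf{Leaf cases.} For the value rules $\sperms = \emptyset$. Assertion of an expression does not depend on permissions (\refrule{AssertValue}), so $\assertion{\heap}{\perms'}{\env}{e}$ transfers directly to $\emptyset$, and the inclusion is trivial. For the $\kacc$ rules $\sperms = \set{\pair{t_e}{f}}$. Evaluation of $e$ in $\sstate_E$ together with $\simstate{V}{\sstate_E}{\heap}{\perms_E}{\env}$ lets me use Lemma~\ref{lem:eval-correspondence}; its footprint premise follows from $\assertion{\heap}{\perms'}{\env}{\kacc(e.f)}$ and Lemma~\ref{lem:efoot-subset-spec}, or more directly from framing. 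This gives $\eval{\heap}{\env}{e}{V'(t_e)}$. \refrule{AssertAcc} on $\perms'$ then yields $\pair{V'(t_e)}{f} \in \perms'$, which is exactly $\vfoot{V'}{\heap}{\sperms} \subseteq \perms'$, and \refrule{AssertAcc} again gives the assertion with $\set{\pair{V'(t_e)}{f}}$. For the predicate rules $\sperms = \set{\pair{p}{\multiple{t}}}$. By \refrule{AssertPredicate} the body holds under $[\multiple{x\mapsto v}]$ with $v = V'(t)$, again by Lemma~\ref{lem:eval-correspondence}. Lemma~\ref{lem:efoot-assert} shows the body holds under $\efoot{\heap}{[\multiple{x \mapsto V'(t)}]}{\fpred(p)} \cap \perms'$, and Lemma~\ref{lem:efoot-subset-spec} (bodies are specifications) shows that this exact footprint is already inside $\perms'$. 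Hence $\vfoot{V'}{\heap}{\pair{p}{\multiple{t}}}$ is contained in $\perms'$ and asserts the body, and \refrule{AssertPredicate} closes the case. The failure rules emit $\bot$ but are not excluded by the hypotheses; their $\sperms$ is computed in the same way, so the argument above still covers them.

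\textbf{Conjunction and conditionals.} For \refrule{SConsumeConjunction} and \refrule{SConsumeConjunctionImprecise}, \refrule{AssertConjunction} splits $\perms'$ into disjoint $\perms_1, \perms_2 \subseteq \perms' \subseteq \perms_E$ asserting $\phi_1$ and $\phi_2$. I would apply the induction hypothesis to the first sub-consume directly. For the second, whose evaluation state is $\sstate_E[\pc = \pc(\sstate')]$, I would first transfer the correspondence using Lemma~\ref{lem:consume-subpath} and $V'(\pc(\sstate'')) = \ktrue$. The two results give $\vfoot{V'}{\heap}{\sperms_i} \subseteq \perms_i$, and these sets are disjoint. \refrule{AssertConjunction} with $\perms_1 \cup \perms_2$ replaced by these subsets, followed by Lemma~\ref{lem:assert-monotonicity}, then yields the assertion with $\vfoot{V'}{\heap}{\sperms_1 \cup \sperms_2}$. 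For the conditional rules, Lemma~\ref{lem:eval-correspondence} shows that the guard evaluates to $V'(t)$, and the path condition of $\sstate'$ fixes the branch. The assertion of $\phi$ therefore comes from the matching \refrule{AssertIfA} or \refrule{AssertIfB} premise, and the induction hypothesis applies to that branch. Since $\sperms$ ignores the guard expression, framing of the guard is not needed.

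\textbf{Main obstacle.} The delicate step is the bookkeeping of valuations and of which state evaluation takes place in. I need $V'(\pc) = \ktrue$ for every intermediate state so that Lemma~\ref{lem:eval-correspondence} can be applied inside consume, and I need the evaluation of the arguments $\multiple{e}$ to be well-behaved even though $\sstate_E$ is separate from $\sstate$. I would try to discharge this by a small auxiliary claim: $\pc(\sstate_E)$ always agrees with the path condition threaded through the consume, and its consequences are preserved in $\pc(\sstate')$.
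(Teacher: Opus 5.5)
Your overall plan matches the paper's proof: induction on the consume derivation, the same case split, and Lemmas~\ref{lem:efoot-assert} and~\ref{lem:efoot-subset-spec} applied to $\fpred(p)$. There is, however, a genuine gap at every point where you invoke Lemma~\ref{lem:eval-correspondence}.

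That lemma needs $\efoot{\heap}{\env}{e} \subseteq \perms_E$ for the expression being evaluated in $\sstate_E$, and neither of your justifications supplies this.
\begin{itemize}
\item \textbf{The specification route fails.} Lemma~\ref{lem:efoot-subset-spec} requires a specification, and $\kacc(e.f)$, $p(\multiple{e})$ and conditionals are generally not self-framed. For $e = x.g$, \refrule{AssertAcc} only asks for $\pair{\ell}{f} \in \perms'$ with $\ell = \heap(\env(x), g)$ and says nothing about $\pair{\env(x)}{g}$. Likewise \refrule{AssertPredicate} and \refrule{AssertIfA} never frame the arguments or the guard.
\item \textbf{Strengthening the hypothesis does not rescue it directly.} Even if you assumed ``$\phi$ is a specification'', the induction hypothesis would still be applied to conjuncts such as the right-hand side of $\kacc(x.g) * \kacc(x.g.f)$. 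That conjunct is not self-framed on its own, so Lemma~\ref{lem:efoot-subset-spec} is still unavailable there.
\item \textbf{The framing route fails.} The only sources of $\frm{\heap}{\perms_E}{\env}{e}$ are Lemmas~\ref{lem:seval-soundness} and~\ref{lem:consume-soundness-precise}. Both require the run-time checks $\scheck$ to hold, which is not a hypothesis here. It is also false for the failure rules you rightly want to cover.
\end{itemize}

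The missing idea, which the paper uses, is that you never need the footprint to lie in $\perms_E$. Correspondence is monotone in the permission set (Lemma~\ref{lem:simstate-monotonicity}). So from $\simstate{V}{\sstate_E}{\heap}{\perms_E}{\env}$ you get $\simstate{V}{\sstate_E}{\heap}{\hat{\perms}_E}{\env}$, where $\hat{\perms}_E = \perms_E \cup \efoot{\heap}{\env}{\kacc(e.f)}$. In the predicate case use $\efoot{\heap}{\env}{p(\multiple{e})}$ instead, and for the guard use $\efoot{\heap}{\env}{e}$. The premise of Lemma~\ref{lem:eval-correspondence} then holds by construction, and its conclusion $v = V'(t)$ does not mention the permission set. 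With this in place, the rest of your argument goes through.

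Your stated ``main obstacle'' is also not the real one. The path-condition bookkeeping is handled entirely by Lemma~\ref{lem:consume-subpath} together with $V'(\pc(\sstate')) = \ktrue$, exactly as you already do for the second conjunct and the chosen branch. The proposed auxiliary claim relating $\pc(\sstate_E)$ to the threaded path condition is unnecessary. It is also unavailable at the top level, where the hypotheses say nothing about $\sstate$ relative to $\sstate_E$.
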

\begin{proof}
  By induction on $\sconsume{\sstate}{\sstate_E}{\phi}{\sstate'}{\scheck}{\sperms}$:

  \begin{enumcases}
    \case \refrule{SConsumeImprecision} -- $\sconsume{\sstate}{\sstate_E}{\simprecise{\phi}}{\quintuple{\top}{\pc(\sstate')}{\senv(\sstate')}{\emptyset}{\emptyset}}{\scheck}{\sperms}$:
      $\simprecise{\phi}$ is not precise, therefore this rule cannot apply.

    \case \refrule{SConsumeValue}, \refrule{SConsumeValueImprecise}, \refrule{SConsumeValueFailure} -- $\sconsume{\sstate}{\sstate_E}{e}{\sstate}{\_}{\emptyset}$:

      Since $\assertion{\heap}{\perms'}{\env}{e}$, $\eval{\heap}{\env}{e}{\ktrue}$ by \refrule{AssertValue}. Therefore $\assertion{\heap}{\vfoot{V}{\heap}{\emptyset}}{\env}{e}$ by \refrule{AssertValue}.

      Also, $\vfoot{V}{\heap}{\emptyset} = \emptyset \subseteq \perms'$.

    \case \refrule{SConsumePredicate}, \refrule{SConsumePredicateImprecise}, \refrule{SConsumePredicateFailure} -- $\sconsume{\sstate}{\sstate_E}{p(\multiple{e})}{\_}{\_}{\set{\pair{p}{\multiple{t}}}}$:

      By the respective rule, $\multiple{\seval{\sstate_E}{e}{t}{\_}{\_}}$ for some $\multiple{t}$. The corresponding valuation for this case extends the corresponding valuation for all of these derivation.

      Since $\assertion{\heap}{\perms}{\env}{p(\multiple{e})}$, $\multiple{\eval{\heap}{\env}{e}{v}}$ for some $\multiple{v}$ by \refrule{AssertPredicate}. By assumptions \\
      $\simstate{V}{\sstate_E}{\heap}{\perms_E}{\env}$. Let $\hat{\perms_E} = \perms_E \cup \efoot{\heap}{\env}{p(\multiple{e})}$. Then by lemma \ref{lem:simstate-monotonicity} $\simstate{V}{\sstate_E}{\heap}{\hat{\perms_E}}{\env}$. Thus by lemma \ref{lem:eval-correspondence} $\multiple{v = V'(t)}$, i.e., $\multiple{\eval{\heap}{\env}{e}{V'(t)}}$.

      Let $\multiple{x} = \fpredparams(p)$. By \refrule{AssertPredicate} $\assertion{\heap}{\perms'}{[\multiple{x \mapsto V'(t)}]}{\fpred(p)}$. Also, $\vfoot{V}{\heap}{\pair{p}{\multiple{t}}} = \efoot{\heap}{[\multiple{x \mapsto V'(t)}]}{\fpred(p)}$. Therefore $\assertion{\heap}{\vfoot{V}{\heap}{\pair{p}{\multiple{t}} \cap \perms'}}{[\multiple{x \mapsto V'(t)}]}{\fpred(p)}$ by lemma \ref{lem:efoot-assert}.

      But $\fpred(p)$ must be a specification, thus $\vfoot{V}{\heap}{\pair{p}{\multiple{t}}} = \efoot{\heap}{[\multiple{x \mapsto V'(t)}]}{\fpred(p)} \subseteq \vfoot{V}{\heap}{\pair{p}{\multiple{t}}} \cap \perms'$ by lemma \ref{lem:efoot-subset-spec}. Therefore $\vfoot{V}{\heap}{\set{\pair{p}{\multiple{t}}}} \subseteq \perms'$.

      Then $\assertion{\heap}{\vfoot{V}{\heap}{\set{\pair{p}{\multiple{t}}}}}{[\multiple{x \mapsto V'(t)}]}{\fpred(p)}$, and thus $\assertion{\heap}{\vfoot{V}{\heap}{\set{\pair{p}{\multiple{t}}}}}{\env}{p(\multiple{e})}$ by \refrule{AssertPredicate}.

    \case \refrule{SConsumeAcc}, \refrule{SConsumeAccOptimistic}, \refrule{SConsumeAccImprecise}, \refrule{SConsumeAccFailure} -- $\sconsume{\sstate}{\sstate_E}{\kacc(e.f)}{\sstate'}{\_}{\set{\pair{t_e}{f}}}$:

      By the respective rule, $\seval{\sstate_E}{e}{t_e}{\_}{\_}$ for some $t_e$. The corresponding valuation for this case extends the corresponding valuation for this derivation.

      Since $\assertion{\heap}{\perms}{\env}{\kacc(e.f)}$, $\eval{\heap}{\env}{e}{v}$ for some $v$ by \refrule{AssertAcc}.  Let \\ $\hat{\perms_E} = \perms_E \cup \efoot{\heap}{\env}{\kacc(e.f)}$. Then by lemma \ref{lem:simstate-monotonicity} $\simstate{V}{\sstate_E}{\heap}{\hat{\perms_E}}{\env}$. Thus by lemma \ref{lem:eval-correspondence} $v = V'(t_e)$, i.e. $\eval{\heap}{\env}{e}{V'(t_e)}$.

      By \refrule{AssertAcc} $\pair{V'(t_e)}{f} \in \perms'$. Also, $\set{\pair{V'(t_e)}{f}}  = \vfoot{V'}{\heap}{\set{\pair{t_e}{f}}}$, therefore $\vfoot{V'}{\heap}{\set{\pair{t_e}{f}}} \subseteq \perms'$ and also $\assertion{\heap}{\vfoot{V'}{\heap}{\set{\pair{t_e}{f}}}}{\env}{\kacc(e.f)}$ by \refrule{AssertAcc}.

    \case \refrule{SConsumeConjunction}, \refrule{SConsumeConjunctionImprecise} -- $\sconsume{\sstate}{\sstate_E}{\phi_1 * \phi_2}{\sstate''}{\_}{\sperms_1 \cup \sperms_2}$:

      By the respective rule, $\sconsume{\sstate}{\sstate_E}{\phi_1}{\sstate'}{\_}{\sperms_1}$ and $\sconsume{\sstate'}{\sstate_E[\pc = \pc(\sstate')]}{\phi_2}{\sstate''}{\_}{\sperms_2}$. Let $V'$ be the corresponding valuation for this case, thus $V'$ extends the corresponding valuations for these judgements.

      By assumptions, $V'(\pc(\sstate'')) = \ktrue$ and $\pc(\sstate'') \implies \pc(\sstate')$ by \ref{lem:consume-subpath}. Therefore $\simstate{V'}{\sstate_E[\pc = \pc(\sstate')]}{\heap}{\perms_E}{\env}$.

      Then $\assertion{\heap}{\perms_1}{\env}{\phi_1}$ and $\assertion{\heap}{\perms_2}{\env}{\phi_2}$ where $\perms_1 \cup \perms_2 \subseteq \perms'$ and $\perms_1 \cap \perms_2 = \emptyset$ by \refrule{AssertConjunction}, since $\assertion{\heap}{\perms'}{\env}{\phi_1 * \phi_2}$. Then by induction $\assertion{\heap}{\vfoot{V'}{\heap}{\sperms_1}}{\env}{\phi_1}$, $\assertion{\heap}{\vfoot{V'}{\heap}{\sperms_2}}{\env}{\phi_2}$, $\vfoot{V'}{\heap}{\sperms_1} \subseteq \perms_1$, and $\vfoot{V'}{\heap}{\sperms_2} \subseteq \perms_1$.

      Now $\vfoot{V'}{\heap}{\sperms_1} \cap \vfoot{V'}{\heap}{\sperms_2} = \emptyset$ and $\vfoot{V'}{\heap}{\sperms_1} \cup \vfoot{V'}{\heap}{\sperms_2} = \vfoot{V'}{\heap}{\sperms_1 \cup \sperms_2} \subseteq \perms'$. Therefore $\assertion{\heap}{\vfoot{V'}{\heap}{\sperms_1 \cup \sperms_2}}{\env}{\phi_1 * \phi_2}$ by \refrule{AssertConjunction}.
  
    \case\label{case:consume-assert-vfoot-ifa} \refrule{SConsumeConditionalA} -- $\sconsume{\sstate}{\sstate_E}{\sif{e}{\phi_1}{\phi_2}}{\sstate'}{\scheck \cup \scheck'}{\sperms}$:

      By \refrule{SConsumeConditionalA} $\seval{\sstate_E}{e}{t}{\_}{\_}$ for some $t$. Let $V_1$ be the corresponding valuation. 
      
      Also, $\sconsume{\sstate[\pc = \pc']}{\sstate_E[\pc = \pc']}{\phi_1}{\sstate'}{\_}{\sperms}$ where $\pc' = \pc(\sstate) \kand t$. Let $V'$ be the corresponding valuation for this case, which extends the corresponding valuation for this judgement and $V_1$.

      Since $\assertion{\heap}{\perms'}{\env}{\sif{e}{\phi_1}{\phi_2}}$, by \refrule{AssertIfA} $\eval{\heap}{\env}{e}{v}$ for some $v$. Let $\hat{\perms_E} = \perms_E \cup \efoot{\heap}{\env}{e}$. Then by lemma \ref{lem:simstate-monotonicity} $\simstate{V}{\sstate_E}{\heap}{\hat{\perms_E}}{\env}$.
      Thus, by lemma \ref{lem:eval-correspondence} $v = V_1(t)$. Also, since $V'(\pc(\sstate')) = \ktrue$ and $\pc(\sstate') \implies \pc' = \pc(\sstate) \kand t$ by lemma \ref{lem:consume-subpath}, $V_1(t) = \ktrue$. Therefore $\eval{\heap}{\env}{e}{\ktrue}$. 

      In addition, $\sstate_E[\pc = \pc']$ and $\sstate_E$ differ only in their $\pc$ component, so $\simstate{V_1}{\sstate_E[\pc = \pc']}{\heap}{\perms_E}{\env}$. Also by \refrule{AssertIfA} $\assertion{\heap}{\perms'}{\env}{\phi_1}$. Therefore by induction $\assertion{\heap}{\vfoot{V}{\heap}{\sperms}}{\env}{\phi_1}$ and $\vfoot{V}{\heap}{\sperms} \subseteq \perms'$.

      Finally, $\assertion{\heap}{\vfoot{V}{\heap}{\sperms}}{\env}{\sif{e}{\phi_1}{\phi_2}}$ by \refrule{AssertIfA}.

    \case \refrule{SConsumeConditionalB} -- $\sconsume{\sstate}{\sstate_E}{\sif{e}{\phi_1}{\phi_2}}{\sstate'}{\scheck \cup \scheck'}{\sperms}$:
      Similar to case \ref{case:consume-assert-vfoot-ifa}.
  \end{enumcases}
\end{proof}

\begin{lemma}[Soundness of consume for precise formulas]\label{lem:consume-soundness-precise}
  Let $\phi$ be some precise formula, $V$ be some initial valuation, $\heap$ be some heap, $\env$ be some environment, $\perms_E$ and $\perms$ be sets of permissions such that $\perms \subseteq \perms_E$, and $\sstate$ and $\sstate_E$ be well-formed symbolic states such that $\simstate{V}{\sstate}{\heap}{\perms}{\env}$ and $\simstate{V}{\sstate_E}{\heap}{\perms_E}{\env}$.

  Then, if $\sconsume{\sstate}{\sstate_E}{\phi}{\sstate'}{\scheck}{\sperms}$ with corresponding valuation $V'$, $\rtassert{V'}{\heap}{\perms_E}{\scheck}$, and $V'(\pc(\sstate')) = \ktrue$, then
  \begin{equation}\label{eq:consume-prec-assert-e}
    \assertion{\heap}{\perms_E}{\env}{\phi}, \quad
    \simstate{V'}{\sstate'}{\heap}{\perms \setminus \vfoot{V}{\heap}{\sperms}}{\env}, \quad\text{and}~
    \ifrm{\heap}{\perms_E}{\env}{\phi}.
  \end{equation}

  Furthermore, if the above conditions hold and $\scheck \cap \SPerm = \emptyset$, then
  \begin{equation}\label{eq:consume-prec-assert}
    \assertion{\heap}{\perms}{\env}{\phi}.
  \end{equation}
\end{lemma}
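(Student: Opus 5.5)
We proceed by induction on the derivation of $\sconsume{\sstate}{\sstate_E}{\phi}{\sstate'}{\scheck}{\sperms}$, proving \eqref{eq:consume-prec-assert-e} and the strengthened claim \eqref{eq:consume-prec-assert} simultaneously. \refrule{SConsumeImprecision} cannot apply since $\phi$ is precise. All failure rules (\refrule{SConsumeValueFailure}, \refrule{SConsumePredicateFailure}, \refrule{SConsumeAccFailure}) produce $\set{\bot}$. Since $\rtassert{V'}{\heap}{\perms_E}{\set{\bot}}$ never holds, these cases are vacuous. In every remaining case, the first step is to recover the run-time value of the subexpressions evaluated in $\sstate_E$. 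We apply Lemma~\ref{lem:seval-soundness} to $\simstate{V}{\sstate_E}{\heap}{\perms_E}{\env}$, using $\rtassert{V'}{\heap}{\perms_E}{\scheck}$ (restricted via Lemma~\ref{lem:scheck-monotonicity}) and $V'(\pc(\sstate'))=\ktrue$. The latter transfers to the evaluation's output state by Lemma~\ref{lem:consume-subpath} and Lemma~\ref{lem:eval-subpath}. This yields $\eval{\heap}{\env}{e}{V'(t)}$ and $\frm{\heap}{\perms_E}{\env}{e}$, which immediately give $\ifrm{\heap}{\perms_E}{\env}{\phi}$ through \refrule{IFrameExpression}, \refrule{IFrameAcc}, \refrule{IFramePredicate}, or \refrule{IFrameConditionalA}/B.

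For values, \refrule{SConsumeValue} gives $\pc(\sstate)\implies t$. Together with $V'(\pc(\sstate))=\ktrue$, this yields $V'(t)=\ktrue$, hence \refrule{AssertValue}. \refrule{SConsumeValueImprecise} instead gets $V'(t)$ from the check $t\in\scheck$ via \refrule{CheckValue}. The state is unchanged and $\sperms=\emptyset$, so correspondence is inherited. For \refrule{SConsumeAcc} and \refrule{SConsumeAccOptimistic}, the chunk $\triple{f}{t_e'}{t}$ together with $\pc\implies t_e\keq t_e'$ and $\simstate{V}{\sstate}{\heap}{\perms}{\env}$ gives $\pair{V'(t_e)}{f}\in\perms\subseteq\perms_E$, hence \refrule{AssertAcc} for both $\perms$ and $\perms_E$. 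Correspondence of $\sstate'$ with $\perms\setminus\set{\pair{V'(t_e)}{f}}$ follows from Lemma~\ref{lem:sheap-remfp-prec} or Lemma~\ref{lem:sheap-remf-imp}, and Lemma~\ref{lem:oheap-remf}. For \refrule{SConsumeAccImprecise}, the check $\pair{t_e}{f}$ gives membership only in $\perms_E$. This is exactly why \eqref{eq:consume-prec-assert} requires $\scheck\cap\SPerm=\emptyset$. For \refrule{SConsumePredicate}, the heap correspondence clause for predicate chunks gives the body assertion under $\perms$, hence \refrule{AssertPredicate}. Removing the chunk preserves correspondence with $\perms\setminus\vfoot{V}{\heap}{\pair{p}{\multiple{t}}}$ by the disjointness clause of \eqref{eq:sheap-correspondence} and Lemma~\ref{lem:disjoint-sim-heap-subset}, and the optimistic heap becomes $\emptyset$. \refrule{SConsumePredicateImprecise} empties both heaps, so correspondence is trivial, and the assertion over $\perms_E$ comes from \refrule{CheckPred}.

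For conditionals, we show that the evaluated guard value agrees with $\pc'$ and then apply the induction hypothesis with \refrule{AssertIfA}/B. The conjunction case is the main obstacle. For \refrule{SConsumeConjunction}, we apply the induction hypothesis to $\phi_1$, obtaining $\sstate'$ corresponding to $\perms\setminus\vfoot{V}{\heap}{\sperms_1}$. We then apply it to $\phi_2$ from $\sstate'$ and $\sstate_E[\pc=\pc(\sstate')]$, which still corresponds to $\perms_E$ because its path condition is validated. This yields assertions of $\phi_1$ over $\perms$ and $\phi_2$ over $\perms\setminus\vfoot{V}{\heap}{\sperms_1}$. To form \refrule{AssertConjunction}, we shrink each assertion to its exact footprint with Lemma~\ref{lem:efoot-assert}, or better to $\vfoot{V'}{\heap}{\sperms_i}$ via Lemma~\ref{lem:consume-assert-vfoot}, which also gives containment. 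Disjointness then holds because the second footprint lies in $\perms\setminus\vfoot{V}{\heap}{\sperms_1}$. Composing the removals gives correspondence with $\perms\setminus\vfoot{V}{\heap}{\sperms_1\cup\sperms_2}$.

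For the $\perms_E$ version, disjointness is not inherited this way, since $\perms_E$ was never reduced. Here we use the $\fsep(\sperms_1,\sperms_2)$ check added by \refrule{SConsumeConjunctionImprecise}: by \refrule{CheckSep}, the two symbolic footprints are disjoint, and Lemma~\ref{lem:consume-assert-vfoot} places each conjunct's assertion inside its footprint. When instead $(\scheck_1\cup\scheck_2)\cap\SPerm=\emptyset$, we reuse the $\perms$-argument together with Lemma~\ref{lem:assert-monotonicity}. Care is needed to keep the valuations $V$ and $V'$ consistent, since $V'$ extends $V$ on all symbols used.
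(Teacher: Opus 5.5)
Your case analysis is the paper's own. The failure rules are vacuous, the access cases go through Lemmas~\ref{lem:sheap-remfp-prec}, \ref{lem:sheap-remf-imp} and~\ref{lem:oheap-remf}, predicate removal uses Lemma~\ref{lem:disjoint-sim-heap-subset}, conjunction applies the induction hypothesis to $\phi_1$ and then $\phi_2$ together with Lemma~\ref{lem:consume-assert-vfoot}, and \refrule{SConsumeConjunctionImprecise} uses \refrule{CheckSep}.

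\textbf{The gap.} The step every non-vacuous case rests on fails. To apply Lemma~\ref{lem:seval-soundness} to the evaluation $\seval{\sstate_E}{e}{t}{\sstate_E'}{\scheck}$ inside a consume rule, you need $V'(\pc(\sstate_E'))=\ktrue$. Lemmas~\ref{lem:consume-subpath} and~\ref{lem:eval-subpath} cannot supply this, because both say that an \emph{output} path condition implies the \emph{input} one. From $V'(\pc(\sstate'))=\ktrue$, or from $\simstate{V}{\sstate_E}{\heap}{\perms_E}{\env}$, you therefore learn that $\pc(\sstate)$ and $\pc(\sstate_E)$ hold. You never learn that $\pc(\sstate_E')$ holds, and it is in general stronger, since it carries the branch conjuncts added by \refrule{SEvalOrA}, \refrule{SEvalAndB}, and so on. The consume rules discard $\sstate_E'$, so those conjuncts never reach $\pc(\sstate')$. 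Consequently you have no justification for $\eval{\heap}{\env}{e}{V'(t)}$ or for $\frm{\heap}{\perms_E}{\env}{e}$.

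\textbf{The paper does not close it either.} At this point the paper cites Lemma~\ref{lem:eval-progress}. That lemma only says that \emph{some} evaluation derivation has a satisfied path condition, not the particular one fixed by the consume derivation. In fact the claim fails for the paper's nondeterministic evaluation relation. Take the following setup:
\begin{itemize}
\item a precise state $\sstate=\sstate_E$ with empty heaps, $\senv(\sstate)=[x\mapsto a,\, y\mapsto b]$ and $\pc(\sstate)=\kneg a$;
\item $V=[a\mapsto\kfalse,\, b\mapsto\ktrue]$, $\env=[x\mapsto\kfalse,\, y\mapsto\ktrue]$ and $\perms=\perms_E=\emptyset$;
\item the formula $\phi=\kneg(x\kor y)$.
\end{itemize}
Derive the evaluation with \refrule{SEvalNeg} over \refrule{SEvalOrA}. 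This gives $t=\kneg a$, with discarded path condition $\kneg a\kand a$. Since $\pc(\sstate)\implies t$, \refrule{SConsumeValue} applies with $\scheck=\emptyset$, and $V(\pc(\sstate'))=\ktrue$. Yet $\kneg(x\kor y)$ evaluates to $\kfalse$ at run time, so $\assertion{\heap}{\perms_E}{\env}{\phi}$ fails.

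\textbf{What is needed.} The missing ingredient is a change to the rules rather than a cleverer proof. One option is a deterministic evaluation judgement inside consume, as in the Zimmerman et al.\ formalization this paper departs from. The other is consume rules that conjoin the evaluation's output path condition into $\pc(\sstate')$; after that change, a subpath argument like the one you attempted would go through.
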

\begin{proof}
  Suppose that $\sconsume{\sstate}{\sstate_E}{\phi}{\sstate'}{\scheck}{\sperms}$ with corresponding valuation $V'$, $\rtassert{V'}{\heap}{\perms_E}{\scheck}$, and $V'(\pc(\sstate')) = \ktrue$. Complete the proof by induction on $\sconsume{\sstate}{\sstate_E}{\phi}{\scheck}{\sperms}$:

  \begin{enumcases}
    \case \refrule{SConsumeImprecision} -- $\sconsume{\sstate}{\sstate_E}{\simprecise{\phi}}{\quintuple{\top}{\pc(\sstate')}{\senv(\sstate')}{\emptyset}{\emptyset}}{\scheck}{\sperms}$:
      Since $\simprecise{\phi}$ is imprecise, this rule cannot apply.

    \case \refrule{SConsumeValue} -- $\sconsume{\sstate}{\sstate_E}{e}{\sstate}{\scheck}{\emptyset}$:

      By \refrule{SConsumeValue} $\seval{\sstate_E}{e}{t}{\sstate_E'}{\scheck}$ for some $\sstate_E'$. Let $V'$ be the corresponding valuation, with initial valuation $V$. Then $V'$ is the corresponding valuation for this case. 

      By lemma \ref{lem:eval-progress}, $V(\pc(\sstate_E')) = \ktrue$. Then, since $\rtassert{V'}{\heap}{\perms_E}{\scheck}$, by lemma \ref{lem:seval-soundness} $\eval{\heap}{\env}{e}{V'(t)}$. Also by \refrule{SConsumeValue} $\pc(\sstate) \implies t$. Therefore $V'(t) = \ktrue$, and therefore $\eval{\heap}{\env}{e}{\ktrue}$. Thus $\assertion{\heap}{\perms_E}{\env}{e}$ by \refrule{AssertValue}.

      $\vfoot{V}{\heap}{\emptyset} = \emptyset$, therefore $\simstate{V'}{\sstate}{\heap}{\perms \setminus \vfoot{V}{\heap}{\emptyset}}{\env}$ since $V'$ extends $V$.

      Finally, $\frm{\heap}{\perms_E}{\env}{e}$ by lemma \ref{lem:seval-soundness}. Therefore $\ifrm{\heap}{\perms_E}{\env}{e}$ by \refrule{IFrameExpression}, which completes the proof of \eqref{eq:consume-prec-assert-e}.

      As shown before, $\eval{\heap}{\env}{e}{\ktrue}$, thus $\assertion{\heap}{\perms}{\env}{e}$ by \refrule{AssertValue}, which proves \eqref{eq:consume-prec-assert}.

    \case \refrule{SConsumeValueImprecise} -- $\sconsume{\sstate}{\sstate_E}{e}{\sstate[\pc = \pc(\sstate) \kand t]}{\scheck; t}{\emptyset}$:

      By \refrule{SConsumeValueImprecise} $\seval{\sstate_E}{e}{\sstate_E'}{t}{\scheck}$ for some $\sstate_E'$. Let $V'$ be the valuation corresponding to this derivation, with initial valuation $V$. Then $V'$ is the valuation corresponding to this case.

      Let $\sstate' = \sstate[\pc = \pc(\sstate) \kand t]$.

      By assumptions, $\rtassert{V'}{\heap}{\perms_E}{\set{t}}$ by lemma \ref{lem:scheck-monotonicity}. Thus $V'(t) = \ktrue$ by \refrule{CheckValue}. Also $V'(\pc(\sstate_E')) = \ktrue$ by lemma \ref{lem:eval-progress}. Then, $\eval{\heap}{\env}{e}{V'(t)}$ by lemma \ref{lem:seval-soundness}; therefore $\eval{\heap}{\env}{e}{\ktrue}$. Thus, by \refrule{AssertValue}, $\assertion{\heap}{\perms_E}{\env}{e}$.

      $\vfoot{V}{\heap}{\emptyset} = \emptyset$, therefore $\simstate{V'}{\sstate}{\heap}{\perms \setminus \vfoot{V}{\heap}{\emptyset}}{\env}$ since $V'$ extends $V$.

      Furthermore, since $V'(t) = \ktrue$, $V'(\pc(\sstate')) = V(\pc(\sstate)) \kand V'(t) = \ktrue$. Since $\sstate'$ and $\sstate$ differ only in their $\pc$ components, $\simstate{V'}{\sstate'}{\heap}{\perms \setminus \vfoot{V}{\heap}{\emptyset}}{\env}$.

      Finally, $\frm{\heap}{\perms_E}{\env}{e}$ by lemma \ref{lem:seval-soundness}. Therefore $\ifrm{\heap}{\perms_E}{\env}{e}$ by \refrule{IFrameExpression}.

    \case \refrule{SConsumeValueFailure} -- $\sconsume{\sstate}{\sstate_E}{e}{\sstate}{\set{\bot}}{\emptyset}$:

      $\rtassert{V'}{\heap}{\perms_E}{\scheck \cup \set{\bot}}$ is a contradiction, thus the lemma vacuously holds.

    \case\label{case:consume-prec-pred} \refrule{SConsumePredicate} -- $\sconsume{\sstate}{\sstate_E}{p(\multiple{e})}{\sstate'}{\bigcup \multiple{\scheck}}{\set{\pair{p}{\multiple{t}}}}$:

      By \refrule{SConsumePredicate}, for each $e$, $\seval{\sstate_E}{e}{t}{\sstate_E'}{\scheck}$ for some $\sstate_E'$, $t$, and $\scheck$. Let $V'$ be corresponding valuation for this case, thus $V'$ extends the respective individual corresponding valuations and for each $\scheck$, $\rtassert{V'}{\heap}{\perms}{\scheck}$ by lemma \ref{lem:scheck-monotonicity}. In addition, by lemma \ref{lem:eval-progress} $V'(\pc(\sstate_E')) = \ktrue$.

      Therefore, for each $e$ and corresponding $t$, $\eval{\heap}{\env}{e}{V'(t)}$ by lemma \ref{lem:seval-soundness}. By \refrule{SConsumePredicate}, for each $t$, $\pc(\sstate) \implies t \keq t'$ for some $t'$, thus $V'(t) = V'(t')$ since $V'(\pc(\sstate)) = \ktrue$. Therefore $\eval{\heap}{\env}{e_i}{V'(t_i')}$.

      By \refrule{SConsumePredicate} $\pair{p}{\multiple{t'}} \in \sheap(\sstate)$. Since $\simheap{V}{\sstate}{\heap}{\perms}$, $\assertion{\heap}{\perms}{[\multiple{x \mapsto V'(t')}]}{\fpred(p)}$ where $\multiple{x} = \fpredparams(p)$. Thus by \refrule{AssertPredicate} $\assertion{\heap}{\perms}{\env}{p(\multiple{e})}$, which proves \eqref{eq:consume-prec-assert}. Also, $\assertion{\heap}{\perms_E}{\env}{p(\multiple{e})}$ by lemma \ref{lem:assert-monotonicity}.

      Let $\perms' = \vfoot{V}{\heap}{\pair{p}{\multiple{t}}}$, therefore $\perms' = \efoot{\heap}{[\multiple{x \mapsto V'(t)}]}{\fpred(p)} = \efoot{\heap}{[\multiple{x \mapsto V'(t')}]}{\fpred(p)}$.

      By \refrule{SConsumePredicate} $\sstate' = \sstate[\sheap = \sheap', \oheap = \emptyset]$ where $\sheap = \sheap'; \pair{p}{\multiple{t}}$. Thus $\simheap{V'}{\sheap(\sstate')}{\heap}{\perms}$ since $\sheap(\sstate') \subset \sheap(\sstate)$.

      Then $\vfoot{V'}{\heap}{h} \cap \perms' = \emptyset$ for all $h \in \sheap(\sstate')$ since $\simheap{V'}{\sheap(\sstate)}{\heap}{\perms}$, $\pair{p}{\multiple{t}} \in \sheap(\sstate)$, and $\pair{p}{\multiple{t}} \notin \sheap(\sstate')$. Therefore, by lemma \ref{lem:disjoint-sim-heap-subset}, $\simheap{V'}{\sheap(\sstate')}{\heap}{\perms \setminus \perms'}$.

      Also, since $\oheap(\sstate') = \emptyset$, $\simheap{V'}{\oheap(\sstate')}{\heap}{\perms \setminus \perms'}$.

      Therefore $\simstate{V'}{\sstate'}{\heap}{\perms \setminus \perms'}{\env}$ since $\sstate'$ and $\sstate$ differ only in their $\sheap$ and $\oheap$ components.

      By lemmas \ref{lem:scheck-monotonicity} $\rtassert{V'}{\heap}{\perms_E}{\scheck}$ for each $\scheck$, thus $\frm{\heap}{\perms_E}{\env}{e}$ for each $e$ by lemma \ref{lem:seval-soundness}, thus $\ifrm{\heap}{\perms_E}{\env}{p(\multiple{e})}$ by \refrule{IFramePredicate}. Thus \eqref{eq:consume-prec-assert-e} holds.

    \case \refrule{SConsumePredicateImprecise} -- $\sconsume{\sstate}{\sstate_E}{p(\multiple{e})}{\sstate'}{\bigcup \multiple{\scheck}; \pair{p}{\multiple{t}}}{\set{\pair{p}{\multiple{t}}}}$:

      By \refrule{SConsumePredicateImprecise}, for each $e$, $\seval{\sstate_E}{e}{t}{\sstate_E'}{\scheck}$ for some $\sstate_E'$, $t$, and $\scheck$. Let $V'$ be corresponding valuation for this case, thus $V'$ extends the respective individual corresponding valuations and for each $\scheck$, $\rtassert{V'}{\heap}{\perms}{\scheck}$ by lemma \ref{lem:scheck-monotonicity}. In addition, $V'(\pc(\sstate_E')) = \ktrue$ by lemma \ref{lem:eval-progress}.

      Therefore, for each $e$ and corresponding $t$, $\eval{\heap}{\env}{e}{V'(t)}$ by lemma \ref{lem:seval-soundness}. By \\
      \refrule{SConsumePredicateImprecise}, for each $t$, $\pc(\sstate) \implies t \keq t'$ for some $t'$, thus $V'(t) = V'(t')$ since $V'(\pc(\sstate)) = \ktrue$. Therefore $\eval{\heap}{\env}{e_i}{V'(t_i')}$.

      Also, $\rtassert{V'}{\heap}{\perms_E}{\set{\pair{p}{\multiple{t}}}}$ by assumptions and lemma \ref{lem:scheck-monotonicity}. Thus $\assertion{\heap}{\perms_E}{[\multiple{x \mapsto V'(t)}]}{\fpred(p)}$ by \refrule{CheckPred}. Therefore $\assertion{\heap}{\perms_E}{\env}{p(\multiple{e})}$ by \refrule{AssertPredicate}.

      By \refrule{SConsumePredicateImprecise} $\sstate' = \sstate[\sheap = \emptyset, \oheap = \emptyset]$, thus $\simheap{V'}{\sheap(\sstate')}{\heap}{\perms \setminus \vfoot{V'}{\heap}{\set{\pair{p}{\multiple{t}}}}}$ and $\simheap{V'}{\oheap(\sstate')}{\heap}{\perms \setminus \vfoot{V'}{\heap}{\set{\pair{p}{\multiple{t}}}}}$. Therefore $\simstate{V'}{\sstate'}{\heap}{\perms \setminus \vfoot{V}{\heap}{\set{\pair{p}{\multiple{t}}}}}{\env}$ since $\sstate'$ and $\sstate$ differ only in their $\sheap$ and $\oheap$ components.

      For each $e$, $\frm{\heap}{\perms_E}{\env}{e}$ by lemma \ref{lem:seval-soundness}. Therefore $\ifrm{\heap}{\perms_E}{\env}{p(\multiple{e})}$ by \refrule{IFramePredicate}, which completes the proof of \eqref{eq:consume-prec-assert-e}.

      Also, $\pair{p}{\multiple{t}} \in \SPerm$ is in the resulting set of checks, which contradicts the premises of \eqref{eq:consume-prec-assert}. Therefore it is vacuously true.

    \case \refrule{SConsumePredicateFailure} -- $\sconsume{\sstate}{\sstate_E}{p(\multiple{e})}{\sstate}{\set{\bot}}{\set{\pair{p}{\multiple{t}}}}$:

      $\rtassert{V'}{\heap}{\perms_E}{\set{\bot}}$ is a contradiction, thus the lemma vacuously holds.

    \case\label{case:consume-prec-acc} \refrule{SConsumeAcc} -- $\sconsume{\sstate}{\sstate_E}{\kacc(e.f)}{\sstate[\sheap = \sheap', \oheap = \oheap']}{\scheck}{\set{\pair{t_e}{f}}}$:

      By \refrule{SConsumeAcc} $\seval{\sstate_E}{e}{t_e}{\sstate_E'}{\scheck}$. Let $V'$ be the corresponding valuation, thus $V'$ is the corresponding valuation for this case.
      
      By lemma \ref{lem:eval-progress}, $V'(\pc(\sstate_E')) = \ktrue$. Thus $\eval{\heap}{\env}{e}{V'(t_e)}$ by lemma \ref{lem:seval-soundness}. Also $\pc(\sstate) \implies t_e' \keq t_e$ by \refrule{SConsumeAcc}, thus $V'(t_e') = V'(t_e)$ since $V'(\pc(\sstate)) = \ktrue$. Therefore, $\eval{\heap}{\env}{e}{V'(t_e')}$.

      Since $\triple{f}{t_e'}{t} \in \sheap(\sstate)$ by \refrule{SConsumeAcc} and $\simheap{V}{\sheap(\sstate)}{\heap}{\perms}$, $\pair{V'(t_e')}{f} \in \perms$. Thus $\assertion{\heap}{\perms}{\env}{\kacc(e.f)}$ by \refrule{AssertAcc}, which proves \eqref{eq:consume-prec-assert}. Therefore $\assertion{\heap}{\perms_E}{\env}{\kacc(e.f)}$ by lemma \ref{lem:assert-monotonicity} since $\perms \subseteq \perms_E$.

      Let $\sheap' = \fremfp(\sheap(\sstate), \sstate, t_e, f)$. Therefore $\simheap{V'}{\sheap'}{\heap}{\perms \setminus \set{\pair{V'(t_e)}{f}}}$ by lemma \ref{lem:sheap-remfp-prec}. Also, $\set{\pair{V'(t_e')}{f}} = \set{\pair{V'(t_e)}{f}} = \vfoot{V'}{\heap}{\set{\pair{t_e}{f}}}$. Thus $\simheap{V'}{\sheap'}{\heap}{\perms \setminus \vfoot{V'}{\heap}{\set{\pair{t_e}{f}}}}$.

      Likewise, let $\oheap' = \fremf(\oheap(\sstate), \sstate, t_e, f)$. Then similarly $\simheap{V'}{\oheap'}{\heap}{\perms \setminus \vfoot{V}{\heap}{\set{\pair{t_e}{f}}}}$ by lemma \ref{lem:oheap-remf}.

      By \refrule{SConsumeAcc}, $\sstate' = \sstate[\sheap = \sheap', \oheap = \oheap']$. Now, since $\sstate$ and $\sstate'$ differ only in their $\sheap$ and $\oheap$ components, using the properties of $\sheap'$ and $\oheap'$ shown above, $\simstate{V'}{\sstate'}{\heap}{\perms \setminus \vfoot{V}{\heap}{\set{\pair{t_e}{f}}}}{\env}$.

      Finally, $\frm{\heap}{\perms_E}{\env}{e}$ by lemma \ref{lem:seval-soundness}. Therefore $\ifrm{\heap}{\perms_E}{\env}{\kacc(e.f)}$ by \refrule{IFrameAcc}, which completes the proof of \eqref{eq:consume-prec-assert-e}.

    \case\label{case:consume-prec-acc-optimistic} \refrule{SConsumeAccOptimistic} --  $\sconsume{\sstate}{\sstate_E}{\kacc(e.f)}{\sstate'}{\scheck}{\set{\pair{t_e}{f}}}$:
      Similar to case \ref{case:consume-prec-acc}, except to show that $\simheap{V}{\sheap'}{\heap}{\perms \setminus \set{\pair{V'(t_e')}{f}}}$.

      Since $\triple{f}{t_e'}{t} \in \oheap(\sstate)$, $\oheap(\sstate) \ne \emptyset$. Therefore, since $\sstate$ is well-formed, $\imp(\sstate) = \top$. Let $\sheap' = \fremf(\sheap(\sstate), \sstate, t_e', f)$. Therefore $\simheap{V}{\sheap'}{\heap}{\perms \setminus \set{\pair{V'(t_e')}{f}}}$ by lemma \ref{lem:sheap-remf-imp}.

      Continue as in case \ref{case:consume-prec-acc}.

    \case \refrule{SConsumeAccImprecise} -- $\sconsume{\sstate}{\sstate_E}{\kacc(e.f)}{\sstate'}{\scheck; \pair{t_e}{f}}{\set{\pair{t_e}{f}}}$:

      By \refrule{SConsumeAcc} $\seval{\sstate_E}{e}{t_e}{\sstate_E'}{\scheck}$. Let $V'$ be the corresponding valuation, thus $V'$ is the corresponding valuation for this case.

      Then $\rtassert{V'}{\heap}{\perms_E}{\scheck}$ by assumptions and lemma \ref{lem:scheck-monotonicity}. In addition, $V'(\pc(\sstate_E')) = \ktrue$ by lemma \ref{lem:eval-progress}. Thus $\eval{\heap}{\env}{e}{V'(t_e)}$ by lemma \ref{lem:seval-soundness}.

      Also, $\rtassert{V'}{\heap}{\perms_E}{\pair{t_e}{f}}$ by assumptions and lemma \ref{lem:scheck-monotonicity}. Then $\pair{V'(t_e)}{f} \in \perms_E$ by \refrule{CheckAcc}. Therefore $\assertion{\heap}{\perms_E}{\env}{\kacc(e.f)}$ by \refrule{AssertAcc}.

      Let $\sheap' = \fremf(\sheap(\sstate), \sstate, t_e, f)$. By \refrule{SConsumeAccImprecise} $\imp(\sstate)$. Thus $\simheap{V'}{\sheap'}{\heap}{\perms \setminus \set{\pair{V'(t_e)}{f}}}$ by lemma \ref{lem:sheap-remf-imp}. Also, $\set{\pair{V'(t_e)}{f}} = \vfoot{V'}{\heap}{\set{\pair{t_e}{f}}}$. Thus $\simheap{V'}{\sheap'}{\heap}{\perms \setminus \vfoot{V'}{\heap}{\set{\pair{t_e}{f}}}}$.

      Likewise, let $\oheap' = \fremf(\oheap(\sstate), \sstate, t_e, f)$. Then $\simheap{V'}{\oheap'}{\heap}{\perms \setminus \vfoot{V}{\heap}{\pair{t_e}{f}}}$ by lemma \ref{lem:oheap-remf}.

      By \refrule{SConsumeAcc}, $\sstate' = \sstate[\sheap = \sheap', \oheap = \oheap']$. Now, since $\sstate$ and $\sstate'$ differ only in their $\sheap$ and $\oheap$ components, using the properties of $\sheap'$ and $\oheap'$ shown above, $\simstate{V'}{\sstate'}{\heap}{\perms \setminus \vfoot{V}{\heap}{\set{\pair{t_e}{f}}}}{\env}$.

      By lemma \ref{lem:seval-soundness} $\frm{\heap}{\perms_E}{\env}{e}$, therefore $\ifrm{\heap}{\perms_E}{\kacc(e.f)}$ by \refrule{IFrameAcc}. Thus \eqref{eq:consume-prec-assert-e} holds.

      Also, $\pair{t_e}{f} \in \SPerm$ is in the resulting set of checks, which contradicts the premises of \eqref{eq:consume-prec-assert}, therefore it vacuously holds.

    \case \refrule{SConsumeAccFailure} -- $\sconsume{\sstate}{\sstate_E}{\kacc(e.f)}{\sstate}{\set{\bot}}{\set{\pair{t_e}{f}}}$:

    $\rtassert{V'}{\heap}{\perms_E}{\set{\bot}}$ is a contradiction, thus the lemma vacuously holds.

    \case\label{case:consume-prec-conj} \refrule{SConsumeConjunction} -- $\sconsume{\sstate}{\sstate_E}{\phi_1 * \phi_2}{\sstate''}{\scheck_1 \cup \scheck_2}{\sperms_1 \cup \sperms_2}$:

      By \refrule{SConsumeConjunction} $\sconsume{\sstate}{\sstate_E}{\phi_1}{\sstate'}{\scheck_1}{\sperms_1}$ and $\sconsume{\sstate'}{\sstate_E[\pc = \pc(\sstate')]}{\phi_2}{\sstate''}{\scheck_2}{\sperms_2}$. Let $V_1$ and $V'$ be the respective corresponding valuations, with initial valuations $V$ and $V_1$, respectively. Then $V'$ is the corresponding valuation for this case.

      By lemma \ref{lem:consume-subpath}, $\pc(\sstate'') \implies \pc(\sstate')$. Thus $V'(\pc(\sstate')) = V_1(\pc(\sstate')) = \ktrue$. Also, $\rtassert{V_1}{\heap}{\perms_E}{\scheck_1}$ by lemma \ref{lem:scheck-monotonicity}, since $V'$ extends $V_1$.

      By \refrule{SConsumeConjunction} $(\scheck_1 \cup \scheck_2) \cap \SPerm = \emptyset$, thus $\scheck_1 \cap \SPerm = \emptyset$.

      Let $\perms_1 = \vfoot{V}{\heap}{\sperms_1}$. By induction, using \eqref{eq:consume-prec-assert}, $\assertion{\heap}{\perms}{\env}{\phi_1}$. Thus $\assertion{\heap}{\perms_1}{\env}{\phi_1}$ and $\perms_1 \subseteq \perms$ by lemma \ref{lem:consume-assert-vfoot}.
      
      Also $\simstate{V_1}{\sstate'}{\heap}{\perms \setminus \perms_1}{\env}$ by induction, $\simstate{V_1}{\sstate_E[\pc = \pc(\sstate')]}{\heap}{\perms_E}{\env}$ since $V_1$ extends $V$ and $V_1(\pc(\sstate')) = \ktrue$, and $\rtassert{V}{\heap}{\perms_E}{\scheck_2}$ by lemma \ref{lem:scheck-monotonicity}. Finally, by assumptions $V'(\pc(\sstate'')) = \ktrue$, and $(\perms \setminus \perms_1) \subseteq \perms \subseteq \perms_E$. Thus by induction $\simstate{V'}{\sstate''}{\heap}{(\perms \setminus \perms_1) \setminus \vfoot{V}{\heap}{\sperms_2}}{\env}$.

      Also by induction, using \eqref{eq:consume-prec-assert}, $\assertion{\heap}{\perms \setminus \perms_1}{\env}{\phi_2}$.

      Now $(\perms \setminus \perms_1) \subseteq \perms$, $\perms_1 \subseteq \perms$, and $(\perms \setminus \perms_1) \cap \perms_1 = \emptyset$. Therefore $\assertion{\heap}{\perms}{\env}{\phi_1 * \phi_2}$ by \refrule{AssertConjunction}, which proves \eqref{eq:consume-prec-assert}. Then by lemma \ref{lem:assert-monotonicity} $\assertion{\heap}{\perms_E}{\env}{\phi_1 * \phi_2}$.

      As shown before, $\simstate{V'}{\sstate''}{\heap}{(\perms \setminus \perms_1) \setminus \vfoot{V}{\heap}{\sperms_2}}{\env}$, and $(\perms \setminus \perms_1) \setminus \vfoot{V}{\heap}{\sperms_2} = \perms \setminus (\vfoot{V}{\heap}{\sperms_1} \cup \vfoot{V}{\heap}{\sperms_2}) = \perms \setminus \vfoot{V}{\heap}{\sperms_1 \cup \sperms_2}$, therefore $\simstate{V'}{\sstate''}{\heap}{\perms \setminus \vfoot{V}{\heap}{\sperms_1 \cup \sperms_2}}{\env}$.

      By induction $\ifrm{\heap}{\perms_E}{\env}{\phi_1}$ and $\ifrm{\heap}{\perms_E}{\env}{\phi_2}$. Therefore $\ifrm{\heap}{\perms_E}{\env}{\phi_1 * \phi_2}$ by \refrule{IFrameConjunction}, which completes the proof of \eqref{eq:consume-prec-assert-e}.

    \case \refrule{SConsumeConjunctionImprecise} -- $\sconsume{\sstate}{\sstate_E}{\phi_1 * \phi_2}{\sstate''}{\scheck_1 \cup \scheck_2; \fsep(\sperms_1, \sperms_2)}{\sperms_1 \cup \sperms_2}$:
    
      Similar to case \ref{case:consume-prec-conj}, except when showing that $\assertion{\heap}{\perms_E}{\env}{\phi_1 * \phi_2}$ and when proving \eqref{eq:consume-prec-assert}:

      By induction $\assertion{\heap}{\perms_E}{\env}{\phi_1}$ and $\assertion{\heap}{\perms_E}{\env}{\phi_2}$. Thus by lemma \ref{lem:consume-assert-vfoot} $\assertion{\heap}{\vfoot{V'}{\heap}{\sperms_1}}{\env}{\phi_1}$, $\assertion{\heap}{\vfoot{V'}{\heap}{\sperms_2}}{\env}{\phi_2}$, and $\vfoot{V'}{\heap}{\sperms_1} \cup \vfoot{V'}{\heap}{\sperms_2} \subseteq \perms_E$.
      
      By assumptions $\rtassert{V'}{\heap}{\perms_E}{\fsep(\sperms_1, \sperms_2)}$. Then by \refrule{CheckSep} $\vfoot{V'}{\heap}{\sperms_1} \cap \vfoot{V'}{\heap}{\sperms_2} = \emptyset$. Therefore $\assertion{\heap}{\perms_E}{\env}{\phi_1 * \phi_2}$ by \refrule{AssertConjunction}.

      By \refrule{SConsumeConjunctionImprecise} $(\scheck_1 \cup \scheck_2) \cap \SPerm \ne \emptyset$. Therefore the premises of \eqref{eq:consume-prec-assert} do not hold, therefore it is vacuously true.

    \case\label{case:consume-prec-cond-a} \refrule{SConsumeConditionalA} -- $\sconsume{\sstate}{\sstate_E}{\sif{e}{\phi_1}{\phi_2}}{\sstate'}{\scheck \cup \scheck'}{\sperms}$:

      By \refrule{SConsumeConditionalA}, $\seval{\sstate_E}{e}{t}{\sstate_E'}{\scheck}$ and $\sconsume{\sstate[\pc = \pc']}{\sstate_E[\pc = \pc']}{\phi_1}{\sstate'}{\scheck'}{\sperms}$ where $\pc' = \pc(\sstate) \kand t$. Let $V_1$ and $V'$ be the respective corresponding valuations, with initial valuations $V$ and $V_1$, respectively. Then $V'$ is the corresponding valuation for this case.

      By lemma \ref{lem:consume-subpath} $\pc(\sstate') \implies \pc' = \pc(\sstate) \kand t$, thus $V'(\pc(\sstate) \kand t) = V_1(\pc(\sstate) \kand t) = \ktrue$. Therefore $\simstate{V_1}{\sstate[\pc = \pc(\sstate) \kand t]}{\heap}{\perms}{\env}$ and $\simstate{V_1}{\sstate_E[\pc = \pc(\sstate) \kand t]}{\heap}{\perms}{\env}$.

      By assumptions and lemma \ref{lem:scheck-monotonicity} $\rtassert{V_1}{\heap}{\perms_E}{\scheck}$. In addition, by lemma \ref{lem:eval-progress} $V_1(\pc(\sstate_E')) = \ktrue$. Thus $\eval{\heap}{\env}{e}{V_1(t)}$ by lemma \ref{lem:seval-soundness}. Furthermore, $V_1(t) = \ktrue$ since $\pc(\sstate) \kand t \implies t$, thus $\eval{\heap}{\env}{e}{\ktrue}$. Finally, $\assertion{\heap}{\perms_E}{\env}{\phi_1}$ by induction. Therefore $\assertion{\heap}{\perms_E}{\env}{\sif{e}{\phi_1}{\phi_2}}$ by \refrule{AssertIfA}.

      Also by induction $\simstate{V'}{\sstate'}{\heap}{\perms \setminus \vfoot{V'}{\heap}{\sperms}}{\env}$.

      Finally, $\frm{\heap}{\perms_E}{\env}{\phi_1}$ by induction, and $\frm{\heap}{\perms_E}{\env}{e}$ by lemma \ref{lem:seval-soundness}. As shown before, $\eval{\heap}{\env}{e}{\ktrue}$. Therefore $\frm{\heap}{\perms_E}{\env}{\sif{e}{\phi_1}{\phi_2}}$ by \refrule{FrameIfA}, which completes the proof of \eqref{eq:consume-prec-assert-e}.

      Now suppose that $(\scheck \cup \scheck') \cap \SPerm = \emptyset$, thus $\scheck' \cap \SPerm = \emptyset$. Then by induction $\assertion{\heap}{\perms}{\env}{\phi}$, and as before $\eval{\heap}{\env}{e}{\ktrue}$, therefore $\assertion{\heap}{\perms}{\env}{\sif{e}{\phi_1}{\phi_2}}$ by \refrule{AssertIfA}, which completes the proof of \eqref{eq:consume-prec-assert}.

    \case \refrule{SConsumeConditionalB} -- $\sconsume{\sstate}{\sstate_E}{\sif{e}{\phi_1}{\phi_2}}{\sstate'}{\scheck \cup \scheck'}{\sperms}$: Similar to case \ref{case:consume-prec-cond-a}.

  \end{enumcases}
\end{proof}

\begin{lemma}[Soundness of consume (long form)]\label{lem:consume-soundness}
  Let $\gform$ be some specification, $\sstate$ and $\sstate_E$ some well-formed symbolic states such that $\pc(\sstate) \implies \pc(\sstate_E)$, and $\triple{\heap}{\perms}{\env}$ some evaluation state such that $\simstate{V}{\sstate}{\heap}{\perms}{\env}$ and $\simstate{V}{\sstate_E}{\heap}{\perms}{\env}$.
  
  If $\sconsume{\sstate}{\sstate_E}{\gform}{\sstate'}{\scheck}{\sperms}$ with corresponding valuation $V'$, $\rtassert{V'}{\heap}{\perms_E}{\scheck}$, and $V'(\pc(\sstate')) = \ktrue$, then
  $$\assertion{\heap}{\perms_E}{\env}{\gform} \quad\text{and}\quad
    \simstate{V'}{\sstate'}{\heap}{\perms \setminus \efoot{\heap}{\env}{\gform}}{\env}.
  $$
\end{lemma}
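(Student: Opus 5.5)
The proof splits on whether $\gform$ is precise or imprecise, reducing the precise case to Lemma~\ref{lem:consume-soundness-precise} and Lemma~\ref{lem:consume-assert-vfoot}. The statement mentions $\perms_E$ without binding it. I read the hypotheses as in Lemma~\ref{lem:consume-soundness-precise}: $\simstate{V}{\sstate_E}{\heap}{\perms_E}{\env}$ with $\perms \subseteq \perms_E$. In the statement as written $\perms_E = \perms$, and the argument is identical either way.

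\emph{Precise case.} Here $\gform = \phi$ is self-framed, since it is a specification. Lemma~\ref{lem:consume-soundness-precise} gives $\assertion{\heap}{\perms_E}{\env}{\phi}$, which is the first conclusion. It also gives $\simstate{V'}{\sstate'}{\heap}{\perms \setminus \vfoot{V}{\heap}{\sperms}}{\env}$. It remains to replace $\vfoot{V}{\heap}{\sperms}$ by $\efoot{\heap}{\env}{\phi}$.

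\begin{enumerate}
\item Lemma~\ref{lem:consume-assert-vfoot}, applied with $\perms' = \perms_E$, gives $\assertion{\heap}{\vfoot{V'}{\heap}{\sperms}}{\env}{\phi}$.
\item Lemma~\ref{lem:efoot-subset-spec}, applied to this assertion, gives $\efoot{\heap}{\env}{\phi} \subseteq \vfoot{V'}{\heap}{\sperms}$.
\item Since $V'$ extends $V$, $\vfoot{V'}{\heap}{\sperms}$ and $\vfoot{V}{\heap}{\sperms}$ coincide.
\item Hence $\perms \setminus \vfoot{V}{\heap}{\sperms} \subseteq \perms \setminus \efoot{\heap}{\env}{\phi}$, and Lemma~\ref{lem:simstate-monotonicity} upgrades the correspondence to the required one.
\end{enumerate}

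\emph{Imprecise case.} Here $\gform = \simprecise{\phi}$, and the derivation ends in \refrule{SConsumeImprecision}. This rule consumes $\phi$ with $\sstate_E[\imp = \top]$ and returns a state whose two heaps are empty, with $\pc(\sstate')$ and $\senv(\sstate')$ taken from the inner result. The correspondence conclusion is then easy. Empty heaps correspond to any permission set, the environment is unchanged by Lemma~\ref{lem:consume-unchanged}, and $V'(\pc(\sstate')) = \ktrue$ by hypothesis.

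For the assertion, $\sstate_E[\imp = \top]$ still corresponds to the run-time state, because the flag does not enter the correspondence. So Lemma~\ref{lem:consume-soundness-precise} applies to the inner consume of $\phi$ and gives $\assertion{\heap}{\perms_E}{\env}{\phi}$ and $\ifrm{\heap}{\perms_E}{\env}{\phi}$. Lemma~\ref{lem:ifrm-implies-efrm} then gives $\efrm{\heap}{\perms_E}{\env}{\phi}$, and \refrule{AssertImprecise} gives $\assertion{\heap}{\perms_E}{\env}{\simprecise{\phi}}$. One well-formedness side condition needs checking: the precise lemma needs $\sstate_E[\imp=\top]$ to be well-formed, and this holds because setting $\imp = \top$ only relaxes the constraint that the optimistic heap be empty.

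\emph{Main obstacle.} I expect the hard step to be the footprint comparison in the precise case. It needs a self-framed, specification-level precise formula for Lemma~\ref{lem:efoot-subset-spec}, and $V'$ must agree with $V$ on every symbolic value appearing in $\sperms$. The values in $\sperms$ arise from evaluations inside the consume, so they may be defined only by $V'$. I would resolve this by stating the lemma with $V'$ throughout, which is consistent with how Lemma~\ref{lem:consume-assert-vfoot} is used in the conjunction cases.

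A second subtlety is that the new expression forms (unfolding, pure functions) can occur inside $\phi$. They reach this lemma only through Lemma~\ref{lem:seval-soundness}, which the precise lemma already invokes. So no new case analysis is needed here, only the mutual induction on term size described in Section~4.
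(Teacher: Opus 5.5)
Your proposal is correct and matches the proof this paper inherits unchanged from Zimmerman et al. It splits on precision. In the precise case it combines Lemma~\ref{lem:consume-soundness-precise}, Lemma~\ref{lem:consume-assert-vfoot}, Lemma~\ref{lem:efoot-subset-spec} and Lemma~\ref{lem:simstate-monotonicity}. In the imprecise case it uses the empty resulting heaps for the correspondence, and Lemma~\ref{lem:ifrm-implies-efrm} with \refrule{AssertImprecise} for the assertion.

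Your reading of the unbound $\perms_E$, and of $\vfoot{V}{\heap}{\sperms}$ as $\vfoot{V'}{\heap}{\sperms}$, is the right repair of the statement's notational slips.
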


\begin{lemma}[Soundness of consume (short form)]\label{lem:cons-soundness}
  Let $\gform$ be some specification, $\sstate$ be some well-formed symbolic state, $\triple{\heap}{\perms}{\env}$ some evaluation state, and $V$ be some valuation such that \\
  $\simstate{V}{\sstate}{\heap}{\perms}{\env}$.

  If $\scons{\sstate}{\gform}{\sstate'}{\scheck}$ with corresponding valuation $V'$, $\rtassert{V'}{\heap}{\perms}{\scheck}$, and $V'(\pc(\sstate')) = \ktrue$ then
  $$
    \assertion{\heap}{\perms}{\env}{\gform} \quad\text{and}\quad
    \simstate{V'}{\sstate'}{\heap}{\perms \setminus \efoot{\heap}{\env}{\gform}}{\env}.
  $$
\end{lemma}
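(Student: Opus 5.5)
This short form follows from Lemma~\ref{lem:consume-soundness} by specializing its parameters. By rule \textsc{SConsume}, the judgement $\scons{\sstate}{\gform}{\sstate'}{\scheck}$ is derived from $\sconsume{\sstate}{\sstate}{\gform}{\sstate'}{\scheck}{\sperms}$ for some $\sperms$. By Definition~\ref{def:cons-valuation}, the corresponding valuation $V'$ of the short judgement is exactly the corresponding valuation of this long judgement. I would therefore instantiate Lemma~\ref{lem:consume-soundness} with $\sstate_E := \sstate$ and $\perms_E := \perms$.

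The hypotheses of the long form then need to be checked one by one:
\begin{itemize}
\item $\gform$ is a specification, by assumption.
\item $\sstate$ and $\sstate_E = \sstate$ are well-formed, by assumption.
\item $\pc(\sstate) \implies \pc(\sstate_E)$ holds trivially, since implication of symbolic expressions (Definition~\ref{def:implication}) is reflexive.
\item $\simstate{V}{\sstate}{\heap}{\perms}{\env}$ is given, and it serves for both required correspondences.
\item $\rtassert{V'}{\heap}{\perms_E}{\scheck}$ is exactly the assumption $\rtassert{V'}{\heap}{\perms}{\scheck}$.
\item $V'(\pc(\sstate')) = \ktrue$ is given.
\end{itemize}
Lemma~\ref{lem:consume-soundness} then yields $\assertion{\heap}{\perms}{\env}{\gform}$ and $\simstate{V'}{\sstate'}{\heap}{\perms \setminus \efoot{\heap}{\env}{\gform}}{\env}$, which are the two desired conclusions.

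The only subtle point is bookkeeping. The long form mentions $\perms_E$ in its conclusion while its hypotheses are phrased using $\perms$ for both states, so I would read it with $\perms_E = \perms$, which is the intended instantiation. I would also confirm that the $\sperms$ produced by the long judgement plays no role in the conclusion, since it is discarded by \textsc{SConsume}. Beyond matching these parameters I expect no real obstacle: the substantive content, namely the case analysis over consume rules and the use of Lemma~\ref{lem:consume-soundness-precise} for precise formulas, lives entirely in the long form.
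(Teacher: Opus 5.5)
Your proposal is correct and follows the intended route: the paper gives no separate proof here and defers to Zimmerman et al., where the short form is obtained by instantiating the long-form Lemma~\ref{lem:consume-soundness} through \textsc{SConsume} with $\sstate_E := \sstate$ and $\perms_E := \perms$, and Definition~\ref{def:cons-valuation} identifies the two valuations. Your reading of the unbound $\perms_E$ in the long form as $\perms$ is the intended one.
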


\begin{lemma}[Progress of consume (long form)]\label{lem:consume-progress}
  For any heap $\heap$, $\sstate$, $\sstate_E$, $\gform$, and valuation $V$, if $V(\pc(\sstate_E)) = \ktrue$ then $\sconsume{\sstate}{\sstate_E}{\gform}{\sstate'}{\_}{\_}$ for some $\sstate'$ such that $V'(\pc(\sstate')) = \ktrue$ where $V'$ is the corresponding valuation.
\end{lemma}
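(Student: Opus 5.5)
The plan is to induct on the syntax of $\gform$, mirroring the proof of Lemma~\ref{lem:produce-progress}. Throughout, the consume rules are read as a case split so that, for every formula form, at least one rule is applicable. Each form has a precise-heap success rule, an imprecise rule, and a failure rule. Their side conditions are complementary: either a matching chunk exists or it does not, and either $\imp(\sstate)$ holds or it does not. The failure rules return $\set{\bot}$, but they leave the path condition unchanged and still produce a derivation, which is all this progress statement needs. The corresponding valuation is always the one given in Definition~\ref{def:consume-valuation}, and it extends the valuations of the sub-derivations.

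The base cases go as follows.
\begin{itemize}
\item For an expression $e$: Lemma~\ref{lem:eval-progress} applied to $\sstate_E$ gives $\seval{\sstate_E}{e}{t}{\_}{\scheck}$ together with a corresponding valuation $V_1$. Then exactly one of three situations holds. If $\pc(\sstate) \implies t$, \refrule{SConsumeValue} applies. If not and $\imp(\sstate)$, \refrule{SConsumeValueImprecise} applies. Otherwise \refrule{SConsumeValueFailure} applies.
\item For $p(\multiple{e})$: apply Lemma~\ref{lem:eval-progress} repeatedly to the arguments, always in $\sstate_E$. Then split on whether a matching predicate chunk lies in $\sheap(\sstate)$, and otherwise on $\imp(\sstate)$.
\item For $\kacc(e.f)$: argue the same way, using the four Acc rules.
\end{itemize}
In all these rules the output path condition is $\pc(\sstate)$, or $\pc(\sstate)\kand t$ in the value-imprecise case. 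So the conclusion $V'(\pc(\sstate'))=\ktrue$ needs $V'(\pc(\sstate))=\ktrue$. The hypothesis only gives $V(\pc(\sstate_E))=\ktrue$. I will therefore use the standing invariant $\pc(\sstate_E)\implies\pc(\sstate)$: at the top level $\sstate_E=\sstate$, and the conjunction and conditional rules keep the two path conditions synchronised. If necessary, I will strengthen the induction hypothesis to include it.

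The real obstacle is the value-imprecise case. There $\pc(\sstate')=\pc(\sstate)\kand t$, and $V'(t)$ need not be $\ktrue$. I expect this case to require an actual run-time state, as in Lemma~\ref{lem:produce-progress}: if $\assertion{\heap}{\perms}{\env}{e}$ and the state corresponds, then Lemma~\ref{lem:eval-correspondence} gives $V'(t)=\ktrue$. If that assumption is unavailable, I would read the lemma as asserting only the existence of some derivation. In that reading, when $V'(t)=\kfalse$ I pick the rule that ends up in failure with an unsatisfiable check, or I observe that the subsequent soundness lemmas only ever use this result together with satisfied checks.

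The compound cases are as follows.
\begin{itemize}
\item For $\simprecise{\phi}$: apply the induction hypothesis to $\phi$ with $\sstate_E[\imp=\top]$ (same path condition), and conclude by \refrule{SConsumeImprecision}, whose result keeps $\pc(\sstate')$.
\item For $\phi_1*\phi_2$: apply the induction hypothesis to $\phi_1$. Then apply it to $\phi_2$ with $\sstate_E[\pc=\pc(\sstate')]$, which has a true path condition by the first step. Choose \refrule{SConsumeConjunction} or its imprecise variant according to whether $(\scheck_1\cup\scheck_2)\cap\SPerm$ is empty.
\item For conditionals: evaluate the guard by Lemma~\ref{lem:eval-progress}. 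Branch on $V_1(t)$, choosing \refrule{SConsumeConditionalA} if it is $\ktrue$ and \refrule{SConsumeConditionalB} otherwise, so that the strengthened path condition $\pc(\sstate)\kand t$ or $\pc(\sstate)\kand\kneg t$ stays true. Then apply the induction hypothesis to the chosen branch.
\end{itemize}
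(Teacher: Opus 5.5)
Your plan has the same shape as the paper's proof. Both use induction on the formula, the same exhaustive rule splits at the value, predicate and $\kacc$ leaves, the induction hypothesis for $\phi_2$ under $\sstate_E[\pc=\pc(\sstate')]$, and a branch on the guard's value for conditionals. You also treat $\simprecise{\phi}$, which the paper skips. Your remark that you need $\pc(\sstate_E)\implies\pc(\sstate)$ is correct and necessary. As literally stated, the lemma allows an arbitrary $\sstate$, e.g.\ one with $\pc(\sstate)=\kfalse$, and then every rule returns a false path condition. Since \textsc{SConsume}, the conjunction rules and the conditional rules always recurse with $\pc(\sstate)=\pc(\sstate_E)$, adding this as a hypothesis costs nothing. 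The paper's proof relies on it silently, e.g.\ when it writes ``$V(\pc')=\ktrue$'' in the conditional case.

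The value-imprecise case you flag is a genuine gap, and none of your fallbacks proves the lemma as stated. Switching to the failure rule is impossible: \textsc{SConsumeValueFailure} requires $\lnot\imp(\sstate)$. So once $\imp(\sstate)$ holds and $\pc(\sstate)$ does not imply $t$, \textsc{SConsumeValueImprecise} is the only applicable rule. Your other fallbacks change the statement, and that is unavoidable, because the statement is false there. Take $\sstate=\sstate_E$ with $\imp(\sstate)=\top$, $\pc(\sstate)=\ktrue$, $\senv(\sstate)=[x\mapsto\nu]$ and $V=[\nu\mapsto 1]$, and consume the expression $x\keq 0$. Evaluation is forced (\textsc{SEvalOp}, \textsc{SEvalVar}, \textsc{SEvalLiteral}) and gives $t=\nu\keq 0$ with corresponding valuation $V$. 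The unique consume derivation therefore ends in $\pc(\sstate')=\ktrue\kand(\nu\keq 0)$, which $V$ maps to $\kfalse$.

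The paper's proof has the same hole: it only asserts that one of the three value rules applies and never checks the path condition. What is missing is thus an amendment of the statement, not a proof step, and your last suggestion points the right way. Keep the existence of a derivation unconditional, and conclude $V'(\pc(\sstate'))=\ktrue$ only under the extra assumption $\rtassert{V'}{\heap}{\perms}{\scheck}$. In the bad case $t\in\scheck$, so \textsc{CheckValue} gives $V'(t)=\ktrue$; for conjunctions you pass to $\scheck_1$ by Lemma~\ref{lem:scheck-monotonicity}. This matches the satisfied-checks hypothesis of Theorem~\ref{thm:guard-progress}.

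Your alternative of assuming $\assertion{\heap}{\perms}{\env}{\gform}$ and using Lemma~\ref{lem:eval-correspondence}, as in Lemma~\ref{lem:produce-progress}, works locally. However, that hypothesis is not available where the lemma is used, for instance for postconditions in \textsc{SGuardFinish} or for the predicate in the unfolding case of Lemma~\ref{lem:eval-progress}. There, establishing the assertion is exactly what the run-time checks are for.
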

\begin{proof}
  By induction on the syntax forms of $\phi$:

  \begin{enumcases}
    \case $e \in \Expr$:
      By lemma \ref{lem:eval-progress}, $\seval{\sstate_E}{e}{t}{\_}{\_}$ for some $t$. Then one of the following cases applies to yield $\sconsume{\sstate}{\sstate_E}{e}{\sstate}{\_}{\_}$:

      \subcase $\pc(\sstate) \implies t$: Then \refrule{SConsumeValue} applies.

      \subcase $\imp(\sstate)$ and $\pc(\sstate) \notimplies t$: Then \refrule{SConsumeValueImprecise} applies.

      \subcase $\neg \imp(\sstate)$ and $\pc(\sstate) \notimplies t$: Then \refrule{SConsumeValueFailure} applies.

    \case $p(\multiple{e})$ -- $p \in \Predicate, \multiple{e \in \Expr}$:

      By lemma \ref{lem:eval-progress}, for each $e$, $\seval{\sstate_E}{e}{t}{\_}{\_}$ for some $t$. Then one of the following cases applies to yield $\sconsume{\sstate}{\sstate_E}{p(\multiple{e})}{\sstate'}{\_}{\_}$ where $\pc(\sstate') = \pc(\sstate)$:

      \subcase $p(\multiple{t'}) \in \sheap(\sstate)$ and $\pc(\sstate) \implies \multiple{t \keq t'}$ for some $\multiple{t'}$: Then \refrule{SConsumePredicate} applies.

      \subcase $\imp(\sstate)$ and $\nexistential{\pair{p}{\multiple{t'}} \in \sheap(\sstate)}{\bigwedge \multiple{\pc(\sstate) \implies t \keq t'}}$: Then \refrule{SConsumePredicateImprecise} applies.

      \subcase $\neg\imp(\sstate)$ and $\nexistential{\pair{p}{\multiple{t'}} \in \sheap(\sstate)}{\bigwedge \multiple{\pc(\sstate) \implies t \keq t'}}$: Then \refrule{SConsumePredicateFailure} applies.

    \case $\kacc(e.f)$ -- $e \in \Expr, f \in \Field$:

      By lemma \ref{lem:eval-progress}, $\seval{\sstate_E}{e}{t_e}{\_}{\_}$ for some $t_e$. Note that $\fremf$ and $\fremfp$ are defined for all inputs. Then one of the following cases applies to yield $\sconsume{\sstate}{\sstate_E}{\kacc(e.f)}{\sstate'}{\_}{\_}$ where $\pc(\sstate') = \pc(\sstate')$:

      \subcase $\triple{f}{t_e'}{t} \in \sheap(\sstate)$ and $\pc(\sstate) \implies t_e' \keq t_e$ for some $t_e'$ and $t$: Then \refrule{SConsumeAcc} applies.

      \subcase $\nexistential{t_e', t}{\triple{f}{t_e}{t} \in \sheap(\sstate) \wedge (\pc(\sstate) \implies t_e' \keq t_e)}$ and $\triple{f}{t_e'}{t} \in \sheap(\sstate)$ for some $t_e'$ and $t$ where $\pc(\sstate) \implies t_e' \keq t_e$: Then \\
      \refrule{SConsumeAccOptimistic} applies.

      \subcase $\nexistential{t_e', t}{\triple{f}{t_e}{t} \in \sheap(\sstate) \cup \oheap(\sstate) \wedge (\pc(\sstate) \implies t_e' \keq t_e)}$ and $\imp(\sstate)$: Then \\
      \refrule{SConsumeAccImprecise} applies.

      \subcase $\nexistential{t_e', t}{\triple{f}{t_e}{t} \in \sheap(\sstate) \cup \oheap(\sstate) \wedge (\pc(\sstate) \implies t_e' \keq t_e)}$ and $\neg \imp(\sstate)$: Then \\
      \refrule{SConsumeAccFailure} applies.

    \case $\phi_1 * \phi_2$ -- $\phi_1, \phi_2 \in \Formula$

      By induction, $\sconsume{\sstate}{\sstate_E}{\phi_1}{\sstate'}{\_}{\_}$ for some $\sstate'$ such that $V'(\pc(\sstate')) = \ktrue$ where $V'$ is the corresponding valuation.

      Then also by induction, $\sconsume{\sstate'}{\sstate_E[\pc = \pc(\sstate')]}{\phi_2}{\sstate''}{\_}{\_}$ for some $\sstate''$ such that $V''(\pc(\sstate'')) = \ktrue$ where $V''$ is the corresponding valuation, with initial valuation $V'$. Then one of the following cases applies to yield $\sconsume{\sstate}{\sstate_E}{\phi_1 * \phi_2}{\sstate''}{\_}{\_}$:

      \subcase $(\scheck_1 \cup \scheck_2) \cap \SPerm \ne \emptyset$: Then \refrule{SConsumeConjunctionImprecise} applies.

      \subcase $(\scheck_1 \cup \scheck_2) \cap \SPerm = \emptyset$: Then \refrule{SConsumeConjunction} applies.

    \case $\sif{e}{\phi_1}{\phi_2}$ -- $e \in \Expr, \phi_1, \phi_2 \in \Formula$:

      By lemma \ref{lem:eval-progress}, $\seval{\sstate_E}{e}{t}{\_}{\_}$ for some $t$. Let $V'$ be the valuation corresponding to this derivation. Then since this is a well-typed program, one of the following cases must apply:

      \subcase $V'(t) = \ktrue$: Let $\pc' = \pc(\sstate) \kand t$. Then $V(\pc') = \ktrue$ and by induction, $\sconsume{\sstate[\pc = \pc']}{\sstate_E[\pc = \pc']}{\phi_1}{\sstate'}{\_}{\_}$ for some $\sstate'$ where $V''(\pc(\sstate')) = \ktrue$ for the corresponding derivation $V''$ with initial valuation $V'$. Then by \refrule{SConsumeConditionalA}, $\sconsume{\sstate}{\sstate_E}{\sif{e}{\phi_1}{\phi_2}}{\sstate'}{\_}{\_}$, and $V''$ is the corresponding valuation for this derivation with initial valuation $V$.

      \subcase $V'(t) = \kfalse$: Let $\pc' = \pc(\sstate) \kand \kneg t$. Then $V(\pc') = \ktrue$ and by induction, $\sconsume{\sstate[\pc = \pc']}{\sstate_E[\pc = \pc']}{\phi_2}{\sstate'}{\_}{\_}$ for some $\sstate'$ where $V''(\pc(\sstate')) = \ktrue$ for the corresponding derivation $V''$ with initial valuation $V'$. Then by \refrule{SConsumeConditionalB}, $\sconsume{\sstate}{\sstate_E}{\sif{e}{\phi_1}{\phi_2}}{\sstate'}{\_}{\_}$, and $V''$ is the corresponding valuation for this derivation with initial valuation $V$.

  \end{enumcases}
\end{proof}

\begin{lemma}[Progress of consume (short form)]\label{lem:cons-progress}
  For any heap $\heap$, $\sstate$, $\gform$, and valuation $V$, if $V(\pc(\sstate)) = \ktrue$ then $\scons{\sstate}{\gform}{\sstate'}{\_}$ for some $\sstate'$ such that $V'(\pc(\sstate')) = \ktrue$ where $V'$ is the corresponding valuation.
\end{lemma}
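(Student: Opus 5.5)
This follows directly from Lemma~\ref{lem:consume-progress}. By \refrule{SConsume}, the short-form judgement $\scons{\sstate}{\gform}{\sstate'}{\scheck}$ is derived exactly from a long-form judgement $\sconsume{\sstate}{\sstate}{\gform}{\sstate'}{\scheck}{\_}$. So I would instantiate Lemma~\ref{lem:consume-progress} with $\sstate_E := \sstate$. The hypothesis $V(\pc(\sstate_E)) = \ktrue$ is then exactly the assumption $V(\pc(\sstate)) = \ktrue$.

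Lemma~\ref{lem:consume-progress} then gives some $\sstate'$ with $\sconsume{\sstate}{\sstate}{\gform}{\sstate'}{\scheck}{\sperms}$ for some $\scheck$ and $\sperms$. It also gives $V'(\pc(\sstate')) = \ktrue$, where $V'$ is the corresponding valuation in the sense of Definition~\ref{def:consume-valuation}. Applying \refrule{SConsume} to this derivation yields $\scons{\sstate}{\gform}{\sstate'}{\scheck}$. By Definition~\ref{def:cons-valuation}, the corresponding valuation of the short-form judgement is by definition the valuation of the long-form judgement used to derive it, namely $V'$. Hence $V'(\pc(\sstate')) = \ktrue$, as required.

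The only point needing care is that Lemma~\ref{lem:consume-progress} covers gradual formulas $\gform$, not just precise ones. Its proof proceeds by induction on the syntax of $\phi$, so the imprecise case $\simprecise{\phi}$ may need a one-line extension. If so, I would argue it directly. Setting $\sstate_E[\imp = \top]$ leaves the path condition unchanged, so the induction hypothesis applies to $\phi$. Then \refrule{SConsumeImprecision} produces a result state whose path condition equals $\pc(\sstate')$ of the inner judgement. Its valuation is also the inner valuation, so truth of the path condition is preserved. Beyond this bookkeeping I expect no obstacle; the statement is essentially a restatement of the long form at $\sstate_E = \sstate$.
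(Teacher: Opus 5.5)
Reading the short form off Lemma~\ref{lem:consume-progress} via \refrule{SConsume} at $\sstate_E := \sstate$ is the intended route, since the paper states this lemma without proof. Your observation that the long-form proof never treats $\simprecise{\phi}$ is also correct, and so is your patch through \refrule{SConsumeImprecision}.

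The gap is your claim that, apart from that missing case, the long form holds. The step ``Lemma~\ref{lem:consume-progress} gives $V'(\pc(\sstate')) = \ktrue$'' fails when $\gform$ is an expression $e$, $\imp(\sstate)$ holds, and $\pc(\sstate) \notimplies t$.

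\begin{itemize}
\item In that situation the only applicable rule is \refrule{SConsumeValueImprecise}. Its result state has path condition $\pc(\sstate) \kand t$.
\item Its corresponding valuation is just that of evaluating $e$, so it places no constraint on $V'(t)$.
\item The long-form proof glosses over this by asserting that all three value subcases return $\sstate$ itself.
\end{itemize}

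Concretely, take any $\sstate$ with $\imp(\sstate) = \top$ and $V(\pc(\sstate)) = \ktrue$, and let $\gform = \kfalse$. The valuation $V$ itself witnesses $\pc(\sstate) \notimplies \kfalse$, so \refrule{SConsumeValue} and \refrule{SConsumeValueFailure} are both blocked. The unique derivation therefore ends in $\sstate[\pc = \pc(\sstate) \kand \kfalse]$ with $V' = V$, and no $\sstate'$ with $V'(\pc(\sstate')) = \ktrue$ exists. So the statement as written is false, and no reduction can prove it.

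What is missing is a hypothesis tying $V'(t)$ to the emitted check $t \in \scheck$. For instance, conclude $V'(\pc(\sstate')) = \ktrue$ only under the assumption that the value checks in $\scheck$ hold under $V'$. Then \refrule{CheckValue} yields $V'(t) = \ktrue$. This assumption is what is actually available where the lemma is used: Theorem~\ref{thm:guard-progress} assumes $\rtassert{V'}{\heap}{\perms}{\scheck}$. With the long form amended the same way, your reduction goes through unchanged.
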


\subsection{Progress}\label{sec:soundness-progress}

\begin{definition}\label{def:sguard-valuation}
  For a derivations $\sguard{\vstate}{\sstate}{\scheck}{\sperms}$, given an initial valuation $V$, heap $\heap$, and environment $\env$ the \textbf{corresponding valuation} is denoted as
    $$V[\sguard{\vstate}{\sstate}{\scheck}{\sperms} \mid \heap, \env].$$
  This function is defined as follows, depending on the rule that proves the derivation. Values are referenced using the respective name from the rule definition.
  \begin{itemize}
    \item \refrule{SGuardInit}:
      $$V[\sguard{\initsym}{\quintuple{\bot}{\emptyset}{\emptyset}{\emptyset}{\ktrue}}{\emptyset}{\emptyset} \mid \heap, \env] := V$$
    \item \refrule{SGuardSeq}:
      $$V[\sguard{\triple{\sstate}{\sseq{\kskip}{s}}{\gform}}{\sstate}{\emptyset}{\emptyset} \mid \heap, \env] := V$$
    \item \refrule{SGuardAssign}:
      $$V[\sguard{\triple{\sstate}{\sseq{x = e}{s}}{\gform}}{\sstate}{\scheck}{\emptyset} \mid \heap, \env] := V[\seval{\sstate}{e}{\_}{\sstate'}{\scheck} \mid \heap, \env]$$
    \item \refrule{SGuardAssignField}:
      \begin{align*}
        &V[\sguard{\triple{\sstate}{\sseq{x.f = e}{s}}{\gform}}{\sstate''}{\scheck' \cup \scheck''}{\emptyset} \mid \heap, \env] := \\
          &\quad V[\seval{\sstate}{e}{\_}{\sstate'}{\scheck'} \mid \heap, \env]
          [\scons{\sstate'}{\kacc(x.f)}{\sstate''}{\scheck''} \mid \heap, \env]
      \end{align*}
    \item \refrule{SGuardAlloc}:
      $$V[\sguard{\triple{\sstate}{\sseq{x = \salloc{S}}{s}}{\gform}}{\sstate}{\emptyset}{\emptyset} \mid \heap, \env] := V$$
    \item \refrule{SGuardCall}:
      \begin{align*}
        &V[\sguard{\triple{\sstate}{\sseq{y \kassign m(\multiple{e})}{s}}{\gform}}{\sstate''[\senv = \senv(\sstate)]}{\multiple{\scheck} \cup \scheck'}{\frem(\sstate'', \fpre(m))} \mid \heap, \env] := \\
          &\quad V\multiple{[\seval{\sstate}{e}{t}{\sstate'}{\scheck} \mid \heap, \env]}[\scons{\sstate'[\senv = [\multiple{x \mapsto t}]]}{\fpre(m)}{\sstate''}{\scheck'} \mid \heap, \env]
      \end{align*}
    \item \refrule{SGuardAssert}:
      $$V[\sguard{\triple{\sstate}{\sseq{\sassert{\phi}}{s}}{\gform}}{\sstate'}{\scheck}{\emptyset} \mid \heap] := V[\scons{\sstate}{\simprecise{\phi}}{\sstate'}{\scheck} \mid \heap, \env]$$
    \item \refrule{SGuardFold}:
      \begin{align*}
        &V[\sguard{\triple{\sstate}{\sseq{\sfold{p(\multiple{e})}}{s}}{\gform}}{\sstate''[\senv = \senv(\sstate)]}{\_}{\emptyset} \mid \heap, \env] := \\
        &\quad V\multiple{[\seval{\sstate}{e}{t}{\sstate'}{\scheck} \mid \heap, \env]}[\scons{\sstate'[\senv = [\multiple{x \mapsto t}]]}{\fpred(p)}{\sstate''}{\scheck'} \mid \heap, \env]
      \end{align*}
    \item \refrule{SGuardUnfold}:
      \begin{align*}
        &V[\sguard{\triple{\sstate}{\sseq{\sunfold{p(\multiple{e})}}{s}}{\gform}}{\sstate''}{\scheck' \cup \bigcup \multiple{\scheck}}{\emptyset} \mid \heap, \env] := \\
        &\quad V\multiple{[\seval{\sstate}{e}{t}{\sstate'}{\scheck} \mid \heap, \env]}[\scons{\sstate'}{p(\multiple{e})}{\sstate''}{\scheck'} \mid \heap, \env]
      \end{align*}
    \item \refrule{SGuardIf}:
      \begin{align*}
        &V[\sguard{\triple{\sstate}{\sseq{\sif{e}{s_1}{s_2}}{s}}{\gform}}{\sstate'}{\scheck}{\emptyset} \mid \heap, \env] := \\
        &\quad V[\seval{\sstate}{e}{\_}{\sstate'}{\scheck} \mid \heap, \env]
      \end{align*}
    \item \refrule{SGuardWhile}:
      \begin{align*}
        &V[\sguard{\triple{\sstate}{\sseq{\swhile{e}{\gform}{s}}{s'}}{\gform'} }{\sstate'[\pc = \pc(\sstate'')]}{\_}{\_} \mid \heap, \env] := \\
        &\quad V_0[\multiple{t \mapsto V_0(\senv(\sstate')(x))}][\sproduce{\sstate'[\senv = \senv(\sstate')[\multiple{x \mapsto t}]]}{\gform}{\sstate''} \mid \heap, \env]
      \end{align*}
      where $V_0 = V[\scons{\sstate}{\gform}{\sstate'}{\scheck'} \mid \heap, \env]$.
    \item \refrule{SGuardFinish}:
      $$V[\sguard{\triple{\sstate}{\kskip}{\gform}}{\sstate'}{\scheck}{\emptyset} \mid \heap, \env] := V[\scons{\sstate}{\gform}{\sstate'}{\scheck} \mid \heap, \env]$$
  \end{itemize}
\end{definition}

\begin{lemma}\label{lem:assert-after-rem}
  If $\assertion{\heap}{\perms}{\env}{\gform}$ and $\simstate{V}{\sstate}{\heap}{\perms \setminus \efoot{\heap}{\perms}{\gform}}{\env}$, then $\assertion{\heap}{\perms \setminus \vfoot{V'}{\heap}{\frem(\sstate', \gform')}}{\env}{\gform}$.
\end{lemma}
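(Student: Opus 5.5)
The plan is to reduce the statement to the exact-footprint machinery of Section~\ref{sec:cross-cutting}. The key claim is that the permissions removed, $\vfoot{V'}{\heap}{\frem(\sstate', \gform')}$, are disjoint from $\efoot{\heap}{\env}{\gform}$. The statement is only meaningful when $\sstate'$ is the state modelled after consuming $\gform$, so I read $\sstate' = \sstate$ and $\gform' = \gform$ (or $V'$ extending $V$ with $\sstate'$ corresponding under $\perms \setminus \efoot{\heap}{\env}{\gform}$). Granting the disjointness claim, we have $\efoot{\heap}{\env}{\gform} \subseteq \perms \setminus \vfoot{V'}{\heap}{\frem(\sstate', \gform')}$. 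Lemma~\ref{lem:assert-efoot-subset}, applied to $\assertion{\heap}{\perms}{\env}{\gform}$, then yields the conclusion. Recall from Lemma~\ref{lem:efoot-subset-spec} that $\efoot{\heap}{\env}{\gform} \subseteq \perms$ whenever $\gform$ is a specification.

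First, split on Definition~\ref{def:frem}. If $\gform'$ is completely precise, $\frem$ is $\emptyset$, so $\vfoot{V'}{\heap}{\emptyset} = \emptyset$. The claim is then immediate from $\assertion{\heap}{\perms}{\env}{\gform}$, with no further work.

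Otherwise $\frem(\sstate', \gform')$ consists of the symbolic permissions $\pair{t}{f}$ for field chunks and $\pair{p}{\multiple{t}}$ for predicate chunks in $\sheap(\sstate') \cup \oheap(\sstate')$. Fix such a chunk $h$; I must show $\vfoot{V'}{\heap}{h} \subseteq \perms \setminus \efoot{\heap}{\env}{\gform}$. By the correspondence hypothesis and Lemma~\ref{lem:sim-heap-contains}, applied with permission set $\perms \setminus \efoot{\heap}{\env}{\gform}$, every chunk of $\sheap(\sstate) \cup \oheap(\sstate)$ has its footprint inside $\perms \setminus \efoot{\heap}{\env}{\gform}$. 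For field chunks this is exactly the second conjunct of \eqref{eq:sheap-correspondence} and \eqref{eq:oheap-correspondence}. For predicate chunks it follows from the third conjunct together with Lemma~\ref{lem:efoot-subset-spec}, since predicate bodies are specifications. Taking the union over $h$ gives $\vfoot{V'}{\heap}{\frem(\sstate',\gform')} \cap \efoot{\heap}{\env}{\gform} = \emptyset$, and the first paragraph finishes the proof.

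The main obstacle is that the statement as written leaves $\sstate'$, $\gform'$ and $V'$ unrelated to the hypotheses. The proof needs $\sstate'$ to be modelled by $\perms \setminus \efoot{\heap}{\env}{\gform}$ under $V'$, with $V'$ agreeing with $V$ on the symbolic values in $\sstate'$. I would first try to discharge this by taking $\sstate' = \sstate$ and $V' \supseteq V$, as in the uses from \refrule{SGuardCall} and \refrule{SGuardWhile}. There the guard's intermediate state is exactly the consumed state given by Lemma~\ref{lem:cons-soundness}. Monotonicity of valuations, namely that $V'$ extends $V$ so the evaluations of the chunk terms coincide, then transports the footprint bounds.
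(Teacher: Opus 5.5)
Your argument is correct. The paper gives no proof of this lemma and defers it to Zimmerman et al.; your route is the natural one. Read the primed $\sstate'$, $\gform'$, $V'$ as the unprimed ones. Then $\frem$ is empty in the completely precise case, and otherwise Lemma~\ref{lem:sim-heap-contains} places $\vfoot{V}{\heap}{\frem(\sstate, \gform)}$ inside $\perms \setminus \efoot{\heap}{\env}{\gform}$, making it disjoint from the exact footprint.

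One small fix: invoking Lemma~\ref{lem:assert-efoot-subset} requires $\efoot{\heap}{\env}{\gform} \subseteq \perms$, that is, that $\gform$ is a specification. This holds at the \textsc{SGuardCall} and \textsc{SGuardWhile} call sites, but it is not a hypothesis of the lemma. You can avoid it by using Lemma~\ref{lem:efoot-assert} to obtain $\assertion{\heap}{\efoot{\heap}{\env}{\gform} \cap \perms}{\env}{\gform}$ and then applying Lemma~\ref{lem:assert-monotonicity}, since $\efoot{\heap}{\env}{\gform} \cap \perms \subseteq \perms \setminus \vfoot{V}{\heap}{\frem(\sstate, \gform)}$.
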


\begin{theorem}[Progress, part 1]\label{thm:dtrans-progress}
  Let $\Gamma$ be some dynamic state validated by $\vstate$ and valuation $V$. If $\sguard{\vstate}{\sstate'}{\scheck}{\sperms}$ with corresponding valuation $V'$ extending $V$, $V'(\pc(\sstate')) = \ktrue$, and $\pair{\heap}{\perms(\Gamma)} \vdash_{V'} \scheck$ then
  $$\dtrans{\prog}{\vfoot{V'}{\heap(\Gamma)}{\sperms}}{\Gamma}{\Gamma'}$$
  for some $\Gamma'$.

  In other words, if the dynamic state satisfies the matching symbolic checks, then dynamic execution can proceed.
\end{theorem}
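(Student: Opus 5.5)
The proof goes by case analysis on the derivation of $\sguard{\vstate}{\sstate'}{\scheck}{\sperms}$, equivalently on the shape of $\vstate$ and of the next statement of $\Gamma$. Since $\Gamma$ is validated by $\vstate$, Definition~\ref{def:state-valid} gives $\simstate{V}{\sstate(\vstate)}{\heap}{\perms}{\env}$ for the top frame, and the statement of $\Gamma$'s top frame coincides with $s(\vstate)$. Some cases need no symbolic reasoning at all. For \refrule{SGuardInit}, \refrule{ExecInit} fires. A final state $\pair{\_}{\triple{\_}{\_}{\kskip}\cdot\nilsym}$ is handled by \refrule{ExecFinal}. The cases \refrule{SGuardSeq}, \refrule{SGuardAlloc}, \refrule{SGuardFold} and \refrule{SGuardUnfold} are handled by \refrule{ExecSeq}, \refrule{ExecAlloc}, \refrule{ExecFold} and \refrule{ExecUnfold}, which have no dynamic premises. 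In every case we wrap the step with \refrule{ExecStep}; this uses reachability of $\Gamma$, which follows from validity.

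For the expression-driven cases (\refrule{SGuardAssign}, \refrule{SGuardIf}, \refrule{SGuardAssignField}), we discharge the premises of \refrule{ExecAssign}, \refrule{ExecIfA}/\refrule{ExecIfB} and \refrule{ExecAssignField}. Monotonicity (Lemma~\ref{lem:scheck-monotonicity}) splits the check set. By Lemma~\ref{lem:eval-subpath} and Lemma~\ref{lem:cons-subpath}, $V'(\pc)$ is true at every intermediate state. Lemma~\ref{lem:seval-soundness} then yields $\eval{\heap}{\env}{e}{V'(t)}$ and $\frm{\heap}{\perms}{\env}{e}$. For the conditional, well-typedness forces $V'(t)\in\{\ktrue,\kfalse\}$, which selects the \textsc{ExecIf} rule. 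For field assignment, Lemma~\ref{lem:cons-soundness} applied to $\kacc(x.f)$ gives the needed $\assertion{\heap}{\perms}{\env}{\kacc(x.f)}$. The assert case uses Lemma~\ref{lem:cons-soundness} on $\simprecise{\phi}$. In the finish case, either \refrule{ExecCallExit} or \refrule{ExecWhileFinish} applies when the stack has a caller frame, using validity of the partial state and Lemma~\ref{lem:cons-soundness} to obtain the postcondition or invariant assertion; otherwise \refrule{ExecFinal} applies.

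The main obstacle is the pair of cases \refrule{SGuardCall} and \refrule{SGuardWhile}, because there the exclusion frame is nontrivial. \refrule{ExecCallEnter} requires $\assertion{\heap}{\perms\setminus\vfoot{V'}{\heap}{\sperms}}{\env'}{\fpre(m)}$ with $\sperms=\frem(\sstate'',\fpre(m))$. The plan is as follows. First, evaluate the arguments with Lemma~\ref{lem:seval-soundness}, which gives $\eval{\heap}{\env}{e}{v}$, framing, and $\env'$ corresponding to the symbolic parameter store. Next, apply Lemma~\ref{lem:cons-soundness} to get $\assertion{\heap}{\perms}{\env'}{\fpre(m)}$ and $\simstate{V'}{\sstate''}{\heap}{\perms\setminus\efoot{\heap}{\env'}{\fpre(m)}}{\env'}$. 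Finally, invoke Lemma~\ref{lem:assert-after-rem} to remove the frame. The footprint premise $\perms'=\foot{\cdot}{\cdot}{\cdot}{\cdot}$ is a definition, so nothing more is needed there.

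The while case mirrors this. We use \refrule{ExecWhileEnter} when $V'$ of the guard is true and \refrule{ExecWhileSkip} otherwise, deciding between them via Lemma~\ref{lem:eval-correspondence} on the produced state $\sstate''$ together with Lemma~\ref{lem:produce-soundness}. Here we must check that the guard valuation assigning the modified variables their current concrete values, as fixed in Definition~\ref{def:sguard-valuation}, makes $\senv$ correspond to $\env$. This valuation bookkeeping, and confirming that the new unfolding/function expressions introduce no additional cases beyond what Lemma~\ref{lem:seval-soundness} already covers, is where I expect most of the care to be needed.
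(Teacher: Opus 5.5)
Your case split and the lemmas you use match the paper's proof almost everywhere. There is one step that fails, in the while case. You propose to decide between \textsc{ExecWhileEnter} and \textsc{ExecWhileSkip} by applying Lemma~\ref{lem:eval-correspondence} to $\sstate''$. That lemma takes $\eval{\heap}{\env}{e}{v}$ and $\efoot{\heap}{\env}{e} \subseteq \perms$ as \emph{hypotheses}, and concludes only that $v$ equals the symbolic result under the extended valuation. In a progress proof you do not yet know that the loop guard evaluates concretely. The footprint inclusion is essentially the framing fact $\frm{\heap}{\perms}{\env}{e}$ that you must establish (cf.\ Lemma~\ref{lem:efoot-framing}), and both \textsc{ExecWhileEnter} and \textsc{ExecWhileSkip} require that evaluation and that framing as premises. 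So this step is circular. The paper uses Lemma~\ref{lem:eval-correspondence} for the loop guard only in the \emph{preservation} proof, where the dynamic step, and hence $\eval{\heap}{\env}{e}{\ktrue}$ or $\kfalse$, is given.

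The missing ingredient is the check set $\scheck''$ produced by the guard's evaluation $\seval{\sstate''}{e}{t}{\sstate'''}{\scheck''}$ in \textsc{SGuardWhile}. It is part of the guard's check set, so it holds by Lemma~\ref{lem:scheck-monotonicity}. Applying Lemma~\ref{lem:seval-soundness} from $\simstate{V'}{\sstate''}{\heap}{\perms}{\env}$ (obtained via Lemma~\ref{lem:produce-soundness}, as you planned) then gives both the concrete evaluation of $e$ and $\frm{\heap}{\perms}{\env}{e}$. Well-typedness then selects the rule, exactly as in your conditional case. To apply Lemma~\ref{lem:seval-soundness} you also need the path condition at the output of that evaluation to be true. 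The hypothesis does not supply this directly, because the guard's intermediate state carries only $\pc(\sstate'')$; the paper obtains it from Lemma~\ref{lem:eval-progress}. A minor point, shared with the paper: Definition~\ref{def:state-valid} does not assert that $\Gamma$ is reachable. So the first premise of \textsc{ExecStep} does not follow from validity as you claim; the paper simply leaves this implicit.
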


\begin{proof}
  We proceed by cases on $\sguard{\vstate}{\sstate'}{\scheck}{\sperms}$.
  \begin{enumcases}
    \case \refrule{SGuardInit}: Result is trival by \refrule{ExecInit}.

    \case \refrule{SGuardSeq}: Then $\Gamma = \pair{\heap}{\triple{\perms}{\env}{\sseq{\kskip}{s}} \cdot \stack'}$ for some $\heap, \perms, \env, s, \stack'$, thus \\
    $\dexec{\heap}{\triple{\perms}{\env}{\sseq{\kskip}{s}} \cdot \stack'}{\vfoot{V'}{\heap}{\emptyset}}{\heap}{\triple{\perms}{\env}{s} \cdot \stack'}$ by \refrule{ExecSeq}, and the result is immediate from \refrule{ExecStep}.

    \case \refrule{SGuardAssign}: Then $\Gamma = \pair{\heap}{\triple{\perms}{\env}{\sseq{x = e}{s}} \cdot \stack'}$ for some $\heap, \perms, \env, \stack'$.

    By \refrule{SGuardAssign} $\seval{\sstate(\vstate)}{e}{t}{\sstate'}{\scheck}$ for some $t, \sstate', \scheck$. Also, $V' = V[\seval{\sstate}{e}{t}{\sstate'}{\scheck} \mid \heap]$. Since $\Gamma$ corresponds to $\vstate$, by definition \ref{def:vstate-corresponds} $\simstate{V}{\sstate(\vstate)}{\heap}{\perms}{\env}$. By assumptions $\rtassert{V'}{\heap}{\perms}{\scheck}$, and $V'(\pc(\sstate')) = \ktrue$. Then $\eval{\heap}{\env}{e}{V'(t)}$ and $\frm{\heap}{\perms}{\env}{e}$ by lemma \ref{lem:seval-soundness}.

    Therefore $\dexec{\heap}{\triple{\perms}{\env}{\sseq{x = e}{s}} \cdot \stack'}{\vfoot{V'}{\heap}{\emptyset}}{\heap}{\triple{\perms}{\env[x \mapsto V'(t)]}{s} \cdot \stack'}$ by \refrule{ExecAssign}, and the result is immediate from \refrule{ExecStep}.

    \case \refrule{SGuardAssignField}: Then $\Gamma = \pair{\heap}{\triple{\perms}{\env}{\sseq{x.f = e}{s}} \cdot \stack'}$ for some $\heap, \perms, \env, x, f, e, s, \stack'$.

    By \refrule{SGuardAssignField} $\seval{\sstate(\vstate)}{e}{t}{\sstate'}{\scheck_1}$ for some $t, \sstate', \scheck_1$, and $\scons{\sstate'}{\kacc(x.f)}{\sstate''}{\scheck_2}$ for some $\sstate'', \scheck_2$. Also, $V' = V[\seval{\sstate(\vstate)}{e}{t}{\sstate'}{\scheck_1} \mid \heap][\scons{\sstate'}{\kacc(x.f)}{\sstate''}{\scheck_2} \mid \heap]$.

    Since $\Gamma$ corresponds to $\vstate$, by definition \ref{def:vstate-corresponds} $\simstate{V}{\sstate(\vstate)}{\heap}{\perms}{\env}$. By assumptions $\rtassert{V'}{\heap}{\perms}{\scheck_1 \cup \scheck_2}$, thus $\rtassert{V'}{\heap}{\perms}{\scheck_1}$ and $\rtassert{V'}{\heap}{\perms}{\scheck_2}$ by lemma \ref{lem:scheck-monotonicity}, and $V'(\pc(\sstate'')) = \ktrue$, thus $V'(\pc(\sstate')) = \ktrue$ by lemma \ref{lem:cons-subpath}.
    
    Now $\eval{\heap}{\env}{e}{V'(t)}$ and $\frm{\heap}{\perms}{\env}{e}$ by lemma \ref{lem:seval-soundness}. Let $\ell = \env(x)$, then $\eval{\heap}{\env}{x}{\env(x)}$ by \refrule{EvalVar}. Finally, $\assertion{\heap}{\perms}{\env}{\kacc(x.f)}$ by lemma \ref{lem:cons-soundness}.

    Let $\heap' = \heap[\pair{\ell}{f} \mapsto V'(t)]$. Then $\dexec{\heap}{\triple{\perms}{\env}{\sseq{x.f = e}{s}} \cdot \stack'}{\vfoot{V'}{\heap}{\emptyset}}{\heap'}{\triple{\perms}{\env}{s} \cdot \stack'}$ by \refrule{ExecAssignField}, and the result is immediate from \refrule{ExecStep}.

    \case \refrule{SGuardAlloc}: Then $\Gamma = \pair{\heap}{\triple{\perms}{\env}{\sseq{x = \kalloc(S)}{s}} \cdot \stack'}$ for some $\heap, \perms, \env, x, S, s, \stack'$.

    Let $\ell = \ffresh$, $\multiple{T~f} = \fstruct(S)$, and $\heap' = \heap[\multiple{(\ell, f) \mapsto \fdefault(T)}]$. Then \\
    $\dexec{\heap}{\triple{\perms}{\env}{\sseq{x = \kalloc(S)}{s}} \cdot \stack'}{\vfoot{V'}{\heap}{\emptyset}}{\heap'}{\triple{\perms}{\env}{s} \cdot \stack'}$ by \refrule{ExecAlloc}, and the result is immediate from \refrule{ExecStep}.

    \case \refrule{SGuardCall}:
    Then $\Gamma = \pair{\heap}{\triple{\perms}{\env}{\sseq{y \kassign m(\multiple{e})}{s}}} \cdot \stack$ for some $\heap, \perms, \env, y, m, \multiple{e}, s, \stack$. Let $\multiple{x} = \fparams(m)$.

    By \refrule{SGuardCall} $\multiple{\seval{\sstate(\vstate)}{e}{t}{\sstate'}{\scheck}}$ for some $t, \sstate', \scheck$, and $\scons{\sstate'[\senv = [\multiple{x \mapsto t}]]}{\fpre(m)}{\sstate''}{\scheck'}$ for some $\sstate'', \scheck'$.

    Also, by definition $V' = V\multiple{[\seval{\sstate(\vstate)}{e}{t}{\sstate'}{\scheck} \mid \heap]}[\scons{\sstate'[\senv = \multiple{x \mapsto t}]}{\fpre(m)}{\sstate''}{\scheck'} \mid \heap]$.

    Since $\Gamma$ corresponds to $\vstate$, by definition \ref{def:vstate-corresponds} $\simstate{V}{\sstate(\vstate)}{\heap}{\perms}{\env}$. By assumptions, $V'(\pc(\sstate'')) = \ktrue$ thus $V'(\pc(\sstate')) = \ktrue$ by lemma \ref{lem:cons-subpath}, and $\rtassert{V'}{\heap}{\perms}{\scheck \cup \scheck'}$, thus $\rtassert{V'}{\heap}{\perms}{\scheck}$ and $\rtassert{V'}{\heap}{\perms}{\scheck'}$ by lemma \ref{lem:scheck-monotonicity}.

    Then $\multiple{\eval{\heap}{\env}{e}{V'(t)}}$, $\multiple{\frm{\heap}{\perms}{\env}{e}}$, and $\simstate{V'}{\sstate'}{\heap}{\perms}{\env}$ by lemma \ref{lem:seval-soundness}.

    Let $\senv' = [\multiple{x \mapsto t}]$ and $\env' = [\multiple{x \mapsto V'(t)}]$. Since $\simstate{V'}{\sstate'}{\heap}{\perms}{\env}$ and $\simenv{V'}{\senv'}{\env'}$ by construction, $\simstate{V'}{\sstate'[\senv = \senv']}{\heap}{\perms}{\env'}$.

    Let $\perms_V = \vfoot{V'}{\heap}{\frem(\sstate', \fpre(m))}$ and $\perms_E = \efoot{\heap}{\env}{\fpre(m)}$.

    As noted before, $\scons{\sstate'[\senv = \senv']}{\fpre(m)}{\sstate''}{\scheck'}$ by \refrule{SExecCall}. Also, $V'(\pc(\sstate'')) = \ktrue$ and $\rtassert{V'}{\heap}{\perms}{\scheck'}$. Therefore $\simstate{V'}{\sstate'}{\heap}{\perms \setminus \perms_E}{\env'}$ and $\assertion{\heap}{\perms}{\env}{\fpre(m)}$ by lemma \ref{lem:cons-soundness}.

    Then $\assertion{\heap}{\perms \setminus \perms_V}{\env}{\fpre(m)}$ by lemma \ref{lem:assert-after-rem}.

    Let $\perms' = \foot{\heap}{\perms \setminus \perms_V}{\env'}{\fpre(m)}$. Then $\dexec{\heap}{\triple{\perms}{\env}{\sseq{y \kassign m(\multiple{e})}{s}} \cdot \stack}{\perms_V}{\heap}{\triple{\perms'}{\env'}{\sseq{\fbody(m)}{\kskip}} \cdot \triple{\perms \setminus \perms'}{\env}{\sseq{y \kassign m(\multiple{e})}{s}} \cdot \stack}$ by \refrule{ExecCallEnter}, and the result is immediate from \refrule{ExecStep}.

    \case \refrule{SGuardAssert}:
    Then $\Gamma = \pair{\heap}{\triple{\perms}{\env}{\sseq{\sassert{\phi}}{s}} \cdot \stack}$ for some $\heap, \perms, \env, \phi, s, \stack$.

    By \refrule{SGuardAssert} $\scons{\sstate(\vstate)}{\simprecise{\phi}}{\sstate'}{\scheck}$, also $V' = V[\scons{\sstate(\vstate)}{\simprecise{\phi}}{\sstate'}{\scheck} \mid \heap]$.

    Since $\Gamma$ corresponds to $\vstate$, by definition \ref{def:vstate-corresponds} $\simstate{V}{\sstate(\vstate)}{\heap}{\perms}{\env}$. Also by assumptions, $\rtassert{V'}{\heap}{\perms}{\scheck}$, and $V'(\pc(\sstate')) = \ktrue$. Thus $\assertion{\heap}{\perms}{\env}{\phi}$ by lemma \ref{lem:cons-soundness} since $\simprecise{\phi}$ is a specification.
    
    Therefore $\dexec{\heap}{\triple{\perms}{\env}{\sseq{\sassert{\phi}}{s}} \cdot \stack}{\vfoot{V'}{\heap}{\emptyset}}{\heap}{\triple{\perms}{\env}{s} \cdot \stack}$ by \refrule{ExecAssert}, and the result is immediate from \refrule{ExecStep}.

    \case \refrule{SGuardFold}:
    Then $s(\vstate) = s(\Gamma) = \sfold{p(\multiple{e})}$ for some $p, \multiple{e}$. Thus \refrule{ExecFold} trivially applies, and the result is immediate from \refrule{ExecStep}.

    \case \refrule{SGuardUnfold}:
    Then $s(\vstate) = s(\Gamma) = \sunfold{p(\multiple{e})}$ for some $p, \multiple{e}$. Thus \refrule{ExecUnfold} trivially applies, and the result is immediate from \refrule{ExecStep}.

    \case \refrule{SGuardIf}:
    Then $s(\vstate) = s(\Gamma) = \sseq{\sif{e}{s_1}{s_2}}{s}$ for some $e, s_1, s_2$, and thus $\Gamma = \pair{\heap}{\triple{\perms}{\env}{\sseq{\sif{e}{s_1}{s_2}}{s}} \cdot \stack}$ for some $\heap, \perms, \env, \stack$.

    By \refrule{SGuardIf} $\seval{\sstate(\vstate)}{e}{t}{\sstate'}{\scheck}$ for some $t, \sstate', \scheck$, and also $V' = V[\seval{\sstate(\vstate)}{e}{t}{\sstate'}{\scheck} \mid \heap]$.

    Now by assumptions $V'(\pc(\sstate')) = \ktrue$ and $\rtassert{V'}{\heap}{\perms}{\scheck}$. Then $\eval{\heap}{\env}{e}{V'(t)}$ and $\frm{\heap}{\perms}{\env}{e}$ by lemma \ref{lem:seval-soundness}.

    Now, since we assume a well-typed program, $V'(t) = \ktrue$ or $\kfalse$. Then either \refrule{ExecIfA} or \refrule{ExecIfB} applies, and the result is immediate from \refrule{ExecStep}.

    \case \refrule{SGuardWhile}:
      Then $s(\vstate) = s(\Gamma) = \sseq{\swhile{e}{\gform}{s}}{s'}$ for some $e$, $\gform$, $s$, $s'$, and thus $\Gamma = \pair{\heap}{\triple{\perms}{\env}{\sseq{\swhile{e}{\gform}{s}}{s'}} \cdot \stack}$ for some $\heap$, $\perms$, $\env$, $\stack$.

      Let $\multiple{x} = \fmodified(s)$. By \refrule{SGuardWhile} $\scons{\sstate}{\gform}{\sstate'}{\scheck'}$, $\sproduce{\sstate'[\senv = \senv(\sstate')[\multiple{x \mapsto \ffresh}]]}{\gform}{\sstate''}$, and $\seval{\sstate''}{e}{t}{\_}{\scheck''}$. Then by definition \ref{def:sguard-valuation} $V'$ extends the corresponding valuation for these judgements.

      By assumptions $V'(\pc(\sstate'')) = \ktrue$, thus $V'(\pc(\sstate')) = \ktrue$ by lemma \ref{lem:produce-subpath}.

      Also by assumptions $\rtassert{V'}{\heap}{\perms}{\scheck' \cup \scheck''}$, thus $\rtassert{V'}{\heap}{\perms}{\scheck'}$ and $\rtassert{V'}{\heap}{\perms}{\scheck''}$ by lemma \ref{lem:scheck-monotonicity}.

      Therefore $\simstate{V'}{\sstate'}{\heap}{\perms \setminus \efoot{\heap}{\perms}{\gform}}{\env}$ and $\assertion{\heap}{\perms}{\env}{\gform}$ by lemma \ref{lem:cons-soundness}.   

      Let $\xperms = \vfoot{V'}{\heap}{\frem(\sstate', \gform)}$. Then by lemma \ref{lem:assert-after-rem} $\assertion{\heap}{\perms \setminus \xperms}{\env}{\gform}$.

      Let $\multiple{t}$ be the list of fresh values used in $\sproduce{\sstate'[\senv = \senv(\sstate')[\multiple{x \mapsto \ffresh}]]}{\gform}{\sstate''}$ when applying \refrule{SGuardWhile}. Let $\senv' = \senv(\sstate')[\multiple{x \mapsto t}]$. Then by definition \ref{def:sguard-valuation}, and since $\simenv{V'}{\senv(\sstate')}{\env}$, for each pair of $x$ and $t$, $V'(\senv'(x)) = V'(t) = V'(\senv(\sstate')(x)) = \env(x)$.
      
      Therefore $\simenv{V'}{\senv'}{\env}$, and thus $\simstate{V'}{\sstate'[\senv = \senv']}{\heap}{\perms \setminus \efoot{\heap}{\perms}{\gform}}{\env}$.

      Now by lemma \ref{lem:produce-soundness} $\simstate{V'}{\sstate''}{\heap}{\perms}{\gform}$.

      Also, as noted before, $\rtassert{V'}{\heap}{\perms}{\scheck''}$. In addition, since $V'(\pc(\sstate'')) = \ktrue$, by lemma \ref{lem:eval-progress} $\seval{\sstate''}{e}{t}{\sstate'''}{\scheck''}$ where $V'(\pc(\sstate''')) = \ktrue$. Therefore $\eval{\heap}{\env}{e}{V'(t)}$ by lemma \ref{lem:seval-soundness}.

      Let $\perms' = \efoot{\heap}{\perms \setminus \xperms}{\env}$.

      Now, since we assume the program to be properly typed, one of the following subcases apply:

      \subcase $V'(t) = \ktrue$: Then by \refrule{ExecWhileSkip} $\dexec{\heap}{\triple{\perms}{\env}{\sseq{\swhile{e}{\gform}{s}}{s'}} \cdot \stack}{\xperms}{\heap}{\triple{\perms'}{\env}{\sseq{s}{\kskip}} \cdot \triple{\perms \setminus \perms'}{\env}{\sseq{\swhile{e}{\gform}{s}}{s'}} \cdot \stack}$ and the result is immediate from \refrule{ExecStep}.

      \subcase $V'(t) = \kfalse$: Then by \refrule{ExecWhileSkip} $\dexec{\prog}{\heap}{\triple{\perms}{\env}{\sseq{\swhile{e}{\gform}{s}}{s'}} \cdot \stack}{\xperms}{\heap}{\triple{\perms}{\env}{s} \cdot \stack}$ and the result is immediate from \refrule{ExecStep}.

    \case \refrule{SGuardFinish}:
    Then $s(\vstate) = s(\Gamma) = \kskip$ and thus $\Gamma = \pair{\heap}{\triple{\perms}{\env}{\kskip} \cdot \stack}$ for some $\heap, \perms, \env, \stack$.

    By \refrule{SGuardFinish} $\scons{\sstate(\vstate)}{\gform(\vstate)}{\sstate'}{\scheck}$, and $V' = V[\scons{\sstate(\vstate)}{\gform(\vstate)}{\sstate'}{\scheck} \mid \heap]$. By assumptions, $V'(\pc(\sstate')) = \ktrue$ and $\rtassert{V'}{\heap}{\perms}{\scheck}$. Therefore $\assertion{\heap}{\perms}{\env}{\gform(\vstate)}$.

    Since $\Gamma$ is a valid state, the partial state $\pair{\heap}{\stack}$ must be validated by $\vstate$ and $V$, thus one of the following subcases must apply:

    \subcase $\stack = \nilsym$ -- then $\Gamma = \pair{\heap}{\triple{\perms}{\env}{\kskip} \cdot \nilsym}$. Then \refrule{ExecFinal} trivially applies to yield the result.

    \subcase $\stack = \triple{\perms_0}{\env_0}{\sseq{m(\multiple{e})}{s_0}} \cdot \stack'$ for some $\perms_0, \env_0, m, \multiple{e}, s_0, \stack'$ and $\gform(\vstate) = \fpost(m)$.

    Since $\gform(\vstate) = \fpost(m)$, $\assertion{\heap}{\perms}{\env}{\fpost(m)}$. Then \refrule{ExecCallExit} applies, and the result is immediate from \refrule{ExecStep}.

    \subcase $\stack = \triple{\perms_0}{\env_0}{\sseq{\swhile{e}{\gform}{s_0}}{s_0'}} \cdot \stack'$ for some $\perms_0, \env_0, e, \gform, s_0, s_0', \stack'$ and $\gform(\vstate) = \gform$.

    Since $\gform(\vstate) = \fpost(m)$, $\assertion{\heap}{\perms}{\env}{\gform}$. Then \refrule{ExecWhileFinish} applies, and the result is immediate from \refrule{ExecStep}.

  \end{enumcases}
\end{proof}

\begin{theorem}[Progress, part 2]\label{thm:guard-progress}
  Let $\Gamma$ be some well-formed dynamic state validated by $\vstate$ and valuation $V$. Then if $\Gamma \ne \finalsym$ and $\rtassert{V'}{\heap(\Gamma)}{\perms}{\scheck}$,
  $$\vstate \rightharpoonup \sstate', \scheck, \sperms$$
  for some $\sstate'$, $\scheck$, $\sperms$ such that $V'(\pc(\sstate')) = \ktrue$ where $V'$ is the corresponding valuation extending $V'$.

  In other words, there is always some matching guard that computes the necessary checks.
\end{theorem}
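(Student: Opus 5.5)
The plan is a case analysis on the shape of $\Gamma$ and, when $\Gamma = \pair{\heap}{\triple{\perms}{\env}{s} \cdot \stack}$, on the first non-sequence statement of $s$. By Lemma~\ref{lem:stmt-rearrangement} and well-formedness of $\Gamma$, $s$ is either $\kskip$ or has the form $\sseq{s_1}{s_2}$ with $s_1$ atomic. Because $\Gamma$ is validated by $\vstate$, Definition~\ref{def:vstate-corresponds} gives $s(\vstate) = s$ and $\simstate{V}{\sstate(\vstate)}{\heap}{\perms}{\env}$. In particular $V(\pc(\sstate(\vstate))) = \ktrue$, and this is the only hypothesis the progress lemmas need. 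If $\Gamma = \initsym$, then $\vstate = \initsym$ and \refrule{SGuardInit} applies with path condition $\ktrue$. The case $\Gamma = \finalsym$ is excluded by assumption.

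For each statement form I would pick the unique guard rule whose syntactic shape matches and discharge its premises with earlier lemmas.
\begin{itemize}
\item \refrule{SGuardSeq} and \refrule{SGuardAlloc} have no premises, and they leave the state and its path condition unchanged.
\item For \refrule{SGuardAssign} and \refrule{SGuardIf}, Lemma~\ref{lem:eval-progress} supplies the evaluation judgement and the fact that the path condition holds under the corresponding valuation.
\item For \refrule{SGuardAssignField}, \refrule{SGuardCall}, \refrule{SGuardFold} and \refrule{SGuardUnfold}, I chain Lemma~\ref{lem:eval-progress} (over the argument list) with Lemma~\ref{lem:cons-progress}. The valuation is threaded so that each step extends the previous one. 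Rebinding $\senv$ does not affect $\pc$, and $\frem$ is total.
\item \refrule{SGuardAssert} and \refrule{SGuardFinish} follow from Lemma~\ref{lem:cons-progress} alone.
\end{itemize}
In every case the resulting valuation agrees with Definition~\ref{def:sguard-valuation}, because both are built compositionally from the same sub-judgements.

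The step I expect to be the main obstacle is \refrule{SGuardWhile}. Lemma~\ref{lem:produce-progress} needs $\assertion{\heap}{\perms}{\env}{\gform}$ together with a correspondence for the post-consume state under a rebound environment. I would first use the hypothesis $\rtassert{V'}{\heap}{\perms}{\scheck}$ on the consume checks with Lemma~\ref{lem:cons-soundness} to obtain the assertion and $\simstate{V_0}{\sstate'}{\heap}{\perms \setminus \efoot{\heap}{\env}{\gform}}{\env}$. Next, the valuation clause maps the fresh values $\multiple{t}$ to $V_0(\senv(\sstate')(x))$, which preserves environment correspondence, and Lemma~\ref{lem:simstate-monotonicity} then restores the permission set $\perms$ before invoking Lemma~\ref{lem:produce-progress}. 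Finally, Lemma~\ref{lem:eval-progress} on the loop guard gives the last evaluation, and Lemma~\ref{lem:produce-subpath} transfers truth of $\pc(\sstate'')$ back to the guard output $\sstate'[\pc = \pc(\sstate'')]$.

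A subtlety is that the evaluation and consume rules now contain unfolding expressions and pure-function calls, which internally produce predicate bodies and preconditions. Their progress is already covered inside Lemma~\ref{lem:eval-progress}, so no new induction is needed here. I would only remark that the mutual dependence between evaluation, produce and consume has been resolved at the lemma level, as discussed in the soundness section.
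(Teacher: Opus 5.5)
Your skeleton is the paper's: \refrule{SGuardInit} for $\initsym$, \refrule{SGuardFinish} for $\kskip$, Lemma~\ref{lem:stmt-rearrangement} for nested sequences, and for every other head statement the matching guard rule, discharged by chaining Lemmas~\ref{lem:eval-progress} and~\ref{lem:cons-progress} while threading the valuation.

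The one real divergence is the loop case.
\begin{itemize}
\item The paper extends $V_1$ by $\multiple{t \mapsto V_1(\senv(\sstate')(x))}$, notes that the rebound state's path condition is still true, and then cites Lemma~\ref{lem:produce-progress}. It never establishes that lemma's premises $\assertion{\heap}{\perms}{\env}{\gform}$ and a state correspondence; in fact its proof never uses the run-time-check hypothesis at all.
\item You derive both premises from Lemma~\ref{lem:cons-soundness} applied to the consume checks, environment correspondence for the rebound variables, and Lemma~\ref{lem:simstate-monotonicity}.
\end{itemize}
This is exactly what is needed: producing an invariant that is dynamically false conjoins a false value to the path condition. Your version of this case is therefore the more careful one.

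The price is interpretive. The statement mentions $\scheck$ and $V'$ in its hypothesis, although they are bound only in its conclusion. You must therefore read it as ``some guard derivation has a true path condition under its valuation whenever its checks hold,'' and fix the consume sub-derivation before inspecting its checks. If those checks fail, you still need bare existence of a produce derivation. That is immediate, but it is not what Lemma~\ref{lem:produce-progress} states.

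Two small points.
\begin{itemize}
\item The closing appeal to Lemma~\ref{lem:produce-subpath} is unnecessary. The guard's output path condition is literally $\pc(\sstate'')$, and the final valuation merely extends the produce valuation.
\item The same hidden dependence on checks sits inside Lemma~\ref{lem:cons-progress} itself, since \refrule{SConsumeValueImprecise} conjoins $t$ to the path condition. The cases you close with that lemma ``alone'' therefore inherit its looseness, exactly as the paper's proof does.
\end{itemize}
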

\begin{proof}
  First, if $\Gamma = \initsym$, then \refrule{SGuardInit} applies to yield the desired result.

  Otherwise, $\Gamma = \pair{\heap}{\stack}$ for some non-empty stack. Therefore $\vstate = \triple{\sstate}{s}{\gform}$ for some $\sstate$, $s$, and $\gform$ such that $\simstate{V}{\sstate}{\heap}{\perms(\Gamma)}{\env(\Gamma)}$ and $s = s(\Gamma)$.
  
  Then one of the following cases apply since $\stack$ is a well-formed stack:

  \begin{enumcases}
    \case $s = \kskip$: By lemma \ref{lem:cons-progress} $\scons{\sstate}{\gform}{\sstate'}{\scheck}$ for some $\sstate'$, $\scheck$ such that $V'(\pc(\sstate')) = \ktrue$ where $V'$ is the corresponding valuation. Then by \refrule{SGuardFinish} $\sguard{\vstate}{\sstate'}{\scheck}{\emptyset}$ and $V'$ is the corresponding valuation for this judgement.

    \case $s = \sseq{s'}{s''}$ for some $s'$, $s''$: We complete the proof by proving the following statement by induction on the syntax form of $s'$:

    If $s = \sseq{s'}{s''}$ then $\vstate \rightharpoonup \sstate', \scheck, \sperms$ for some $\sstate'$, $\scheck$, $\sperms$ such that $V'(\pc(\sstate')) = \ktrue$ where $V'$ is the corresponding valuation extending $V'$.

    \subcase $s' = \sseq{s_1}{s_2}$:
      By lemma \ref{lem:stmt-rearrangement}, $s' = \sseq{s_1'}{s_2'}$ where $s_1'$ is not a sequence statement. Then $\sseq{s}{s'} = \sseq{s_1'}{\sseq{s_2'}{s'}}$.

      Then the inductive hypothesis applies, which completes the proof.

    \subcase $s' = \kskip$:
      Then $\sguard{\vstate}{\sstate}{\emptyset}{\emptyset}$ by \textsc{SGuardSeq}.

    \subcase $s' = x \kassign e$:
      By lemma \ref{lem:eval-progress}, $\seval{\sstate}{e}{\_}{\sstate'}{\scheck}$ for some $\sstate'$ and $\scheck$ such that $V_1(\pc(\sstate')) = \ktrue$ for the corresponding valuation $V_1$.

      Then $\sguard{\vstate}{\sstate'}{\scheck}{\emptyset}$ by \refrule{SGuardAssign}. By definition the corresponding valuation extends $V_1$, thus $V'(\pc(\sstate')) = \ktrue$.

    \subcase $x.f \kassign e$:

      By lemma \ref{lem:eval-progress} $\seval{\sstate}{e}{\_}{\sstate'}{\scheck'}$ for some $\sstate'$ and $\scheck'$ such that $V_1(\pc(\sstate')) = \ktrue$ where $V_1$ is the corresponding valuation extending $V$.

      By lemma \ref{lem:cons-progress} $\scons{\sstate'}{\kacc(x.f)}{\sstate''}{\scheck''}$ for some $\sstate''$ and $\scheck''$ such that $V_2(\pc(\sstate'')) = \ktrue$ for corresponding valuation $V_2$, with initial valuation $V_1$.

      Then $\sguard{\vstate}{\sstate''}{\scheck' \cup \scheck''}{\emptyset}$ by \refrule{SGuardAssignField}. By definition the corresponding valuation $V'$ extends $V_2$, therefore $V'(\pc(\sstate'')) = \ktrue$.

    \subcase $x = \salloc{S}$:

      Then $\sguard{\vstate}{\sstate}{\emptyset}{\emptyset}$ by \refrule{SGuardAlloc}.

    \subcase $y \kassign m(e_1, \cdots, e_n)$:

      Let $\sstate_0 = \sstate$ and $V_0 = V$, then for each $e_i$, $\seval{\sstate_{i-1}}{e}{t_i}{\sstate_i}{\_}$ for some $\sstate_i$ such that $V_i(\pc(\sstate_i)) = \ktrue$ for corresponding valuation $V_i$, with initial valuation $V_{i-1}$, by lemma \ref{lem:eval-progress}.

      Let $x_1, \cdots, x_n = \fparams(m)$. By lemma \ref{lem:cons-progress}, $\scons{\sstate_n[\senv = [\multiple{x_i \mapsto t_i}]]}{\fpre(m)}{\sstate'}{\_}$ for some $\sstate'$ such that $V'(\pc(\sstate')) = \ktrue$ for corresponding valuation $V'$, with initial valuation $V_n$.

      Then $\sguard{\vstate}{\sstate'[\senv = \senv(\sstate)]}{\scheck_1 \cup \cdots \cup \scheck_n \cup \scheck'}{\frem(\sstate'', \fpre(m))}$ by \refrule{SGuardCall} and $V'$ is the corresponding valuation

    \subcase $\sassert{\gform}$:

      By lemma \ref{lem:cons-progress} $\scons{\sstate}{\gform}{\sstate'}{\scheck}$ for some $\sstate'$ and $\scheck$ where $V'(\sstate') = \ktrue$ for the corresponding valuation $V'$.

      Then by \refrule{SGuardAssert} $\sguard{\vstate}{\sstate'}{\scheck}{\emptyset}$ and $V'$ is the corresponding valuation.

    \subcase $\sif{e}{s_1}{s_2}$:

      By lemma \ref{lem:eval-progress} $\seval{\sstate}{e}{t}{\sstate'}{\scheck}$ for some $t$, $\sstate'$ such that $V'(\pc(\sstate')) = \ktrue$ where $V'$ is the corresponding valuation.

      Then $\sguard{\vstate}{\sstate'}{\scheck}{\emptyset}$ by \refrule{SGuardIf} and $V'$ is the corresponding valuation.

    \subcase $\swhile{e}{\gform}{s}$ for some $e$, $\gform$, $s$:

      By lemma \ref{lem:cons-progress} $\scons{\sstate}{\gform}{\sstate'}{\scheck'}$ for some $\sstate'$ and $\scheck'$ such that $V_1(\pc(\sstate')) = \ktrue$ where $V'$ is the corresponding valuation.

      Let $\multiple{x} = \fmodified(s)$ and $\sstate'' = \sstate'[\senv = \senv(\sstate')[\multiple{x \mapsto \ffresh}]]$.
      
      Let $V_2 = V_1[\multiple{t \mapsto V_1(\senv(\sstate')(x))}]$. Then $V_2(\pc(\sstate'')) = V_1(\pc(\sstate')) = \ktrue$.
      
      Then by lemma \ref{lem:produce-progress} $\sproduce{\sstate''}{\gform}{\sstate'''}$ for some $\sstate'''$ such that $V_3(\pc(\sstate''')) = \ktrue$ where $V_3$ is the corresponding valuation extending $V_2$.
      
      By lemma \ref{lem:eval-progress} $\seval{\sstate'''}{e}{t}{\_}{\_}$ for some $t$. Let $V'$ be the corresponding valuation extending $V_3$, thus $V'(\pc(\sstate''')) = V_3(\pc(\sstate''')) = \ktrue$.

      Then $\sguard{\vstate}{\sstate'[\pc = \pc(\sstate''')]}{\scheck' \cup \scheck''}{\frem(\sstate', \gform)}$ by \textsc{SGuardWhile} and $V'$ is the corresponding valuation.

    \subcase $\sfold{p(e_1, \cdots, e_n)}$:

      Let $\sstate_0 = \sstate$ and $V_0 = V$. For each $e_i$, $\seval{\sstate_{i-1}}{e}{t_i}{\sstate_i}{\scheck_i}$ by lemma \ref{lem:eval-progress} for some $\sstate_i$ and $\scheck_i$ such that $V_i(\pc(\sstate_i)) = \ktrue$ where $V_i$ is the corresponding valuation.

      Let $x_1, \cdots, x_n = \fpredparams(p)$. By lemma \ref{lem:cons-progress} $\scons{\sstate_n[\senv = [\multiple{x_i \mapsto t_n}]]}{\fpred(p)}{\sstate'}{\scheck'}$ for some $\sstate'$ and $\scheck'$ such that $V'(\pc(\sstate')) = \ktrue$ where $V'$ is the corresponding valuation extending $V_n$.

      Then $\sguard{\vstate}{\sstate'[\senv = \senv(\sstate)]}{\scheck_1 \cup \cdots \cup \scheck_n \cup \scheck'}{\emptyset}$ by \refrule{SGuardFold} and $V'$ is the corresponding valuation.

    \subcase $\sunfold{p(e_1, \cdots, e_n)}$:

      Let $\sstate_0 = \sstate$ and $V_0 = V$. For each $e_i$, $\seval{\sstate_{i-1}}{e}{t_i}{\sstate_i}{\scheck_i}$ by lemma \ref{lem:eval-progress} for some $\sstate_i$ and $\scheck_i$ such that $V_i(\pc(\sstate_i)) = \ktrue$ where $V_i$ is the corresponding valuation.

      By lemma \ref{lem:cons-progress} $\scons{\sstate_n}{p(e_1, \cdots, e_n)}{\sstate'}{\scheck'}$ for some $\sstate'$ and $\scheck'$ such that $V'(\pc(\sstate')) = \ktrue$ where $V'$ is the corresponding valuation extending $V_n$.

      Then $\sguard{\vstate}{\sstate'}{\scheck_1 \cup \cdots \cup \scheck_n \cup \scheck'}{\emptyset}$ by \refrule{SGuardUnfold} and $V'$ is the corresponding valuation.

  \end{enumcases}

\end{proof}

\subsection{Preservation}\label{sec:soundness-preservation}

\begin{lemma}\label{lem:preservation-heap-env-unchanged}
  Suppose $\Gamma = \pair{\heap}{\triple{\perms}{\env}{s} \cdot \stack}$ and $\Gamma' = \pair{\heap}{\triple{\perms'}{\env}{s'} \cdot \stack}$.

  If $\Gamma$ is validated by $\vstate$ and $V$, $\dtrans{\prog}{\Gamma}{\_}{\Gamma'}$ with valuation $V'$, and $\strans{\prog}{\vstate}{\vstate'}$ for some $\vstate'$ such that $\vstate'$ corresponds to $\Gamma'$, $\gform(\vstate') = \gform(\vstate)$, and $\dom(\senv(\vstate')) \supseteq \dom(\senv(\vstate))$, then $\Gamma'$ is a valid state.
\end{lemma}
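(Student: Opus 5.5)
We will show that $\Gamma'$ is validated by $\vstate'$ together with a suitable valuation. We take that valuation to be $V'$, which extends $V$. The plan is to discharge the three parts of Definition~\ref{def:state-valid} one at a time.

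\textbf{Part~\ref{def:state-valid-reachable} (reachability).} We know $\vstate$ is reachable from $\prog$ with $V$, and $\strans{\prog}{\vstate}{\vstate'}$ holds. By Definition~\ref{def:vstate-reachable}, $\vstate'$ is therefore reachable from $\prog$. It remains to check that $V'$ is defined on every symbolic value in $\vstate'$. This holds because the valuation built for the step is defined on all fresh values introduced by the transition (as noted after each corresponding-valuation definition), and it extends $V$, which already covers the values of $\vstate$.

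\textbf{Part~\ref{def:state-valid-correspond} (correspondence).} This is given directly by hypothesis: $\vstate'$ corresponds to $\Gamma'$.

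\textbf{Part~\ref{def:state-valid-partial} (partial-state validity).} This is the substantive step. The stack tail $\stack$ and the heap $\heap$ are identical in $\Gamma$ and $\Gamma'$. By validity of $\Gamma$, the partial state $\pair{\heap}{\stack}$ is validated by $\vstate$ and $V$. We must transfer this to $\vstate'$ and $V'$, and we case on Definition~\ref{def:partial-valid}.

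\emph{Case~\ref{def:partial-valid-nil}.} Trivial.

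\emph{Case~\ref{def:partial-valid-while}.} The witnesses $\vstate''$, $V''$, $\sstate'$ refer only to the caller frame and to $\gform(\vstate)$. They can be reused unchanged, since $\gform(\vstate') = \gform(\vstate)$ gives \eqref{eq:partial-valid-while-post}, and the remaining conditions \eqref{eq:partial-valid-while-valid}--\eqref{eq:partial-valid-while-sim} do not mention $\vstate$ at all.

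\emph{Case~\ref{def:partial-valid-call}.} The only clause that mentions $\vstate$ besides the postcondition is \eqref{eq:partial-valid-call-params}, namely $V(\senv(\vstate)(x_i)) = V'(t_i)$ for the method parameters $x_i$. The main obstacle is re-establishing this for $\vstate'$ with the new valuation. We handle it as follows:
\begin{itemize}
\item Well-formedness of the program says parameters never appear on the left of an assignment.
\item Hence a step preserving $\dom(\senv)$ leaves the parameter bindings unchanged: $\senv(\vstate')(x_i) = \senv(\vstate)(x_i)$. The hypothesis $\dom(\senv(\vstate')) \supseteq \dom(\senv(\vstate))$ ensures the parameters are still bound.
\item If syntactic equality of the symbolic bindings is not available, we fall back on correspondence. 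Both $V$ and the new valuation interpret $\senv(x_i)$ as $\env(x_i)$, via $\simenv{V}{\senv(\vstate)}{\env}$ and $\simenv{V'}{\senv(\vstate')}{\env}$. The environment $\env$ is literally unchanged from $\Gamma$ to $\Gamma'$, so the two interpretations agree.
\end{itemize}

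With that value equality in hand, the witnesses from the validity of $\Gamma$ carry over verbatim, and $\gform(\vstate') = \gform(\vstate) = \fpost(m)$ completes the case.
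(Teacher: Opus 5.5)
Your proof is correct. The paper states this lemma without proof (deferring to Zimmerman et al.), but your argument is the one its hypotheses (unchanged $\rho$, $\dom(\senv(\vstate')) \supseteq \dom(\senv(\vstate))$, and $\gform(\vstate') = \gform(\vstate)$) are designed for. Note, though, that your ``fallback'' is really the whole call-case argument: the chain $V'(\senv(\vstate')(x_i)) = \rho(x_i) = V(\senv(\vstate)(x_i))$, from environment correspondence at both states, suffices on its own, whereas the syntactic-equality bullet does not follow from domain preservation (it would need a case analysis over the symbolic rules) and can be dropped. Likewise, in Part~1 the definedness of $V'$ on every symbolic value of $\vstate'$ follows directly from $\vstate'$ corresponding to $\Gamma'$ under $V'$, so you never need $V'$ to extend $V$.
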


\begin{lemma}\label{lem:pres-add-heap}
  If $\simstate{V}{\sstate}{\heap}{\perms \setminus \vfoot{V}{\heap}{\triple{t}{f}{t'}}}{\env}$, $\pair{V(t)}{f} \in \perms$, and $\heap(V(t), f) = V(t')$, then $\simstate{V}{\sstate[\sheap = \sheap(\sstate); \triple{t}{f}{t'}]}{\heap}{\perms}{\env}$.
\end{lemma}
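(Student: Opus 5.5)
The plan is to unfold the definition of state correspondence \eqref{eq:sstate-correspondence} for the extended state $\sstate' = \sstate[\sheap = \sheap(\sstate); \triple{t}{f}{t'}]$. The extended state differs from $\sstate$ only in its precise heap. So the optimistic-heap, environment, and path-condition components of $\simstate{V}{\sstate'}{\heap}{\perms}{\env}$ follow from the hypothesis $\simstate{V}{\sstate}{\heap}{\perms \setminus \vfoot{V}{\heap}{\triple{t}{f}{t'}}}{\env}$ by monotonicity in the permission set. Concretely, I will invoke Lemma~\ref{lem:sim-oheap-monotonicity} (read with conclusion at $\perms'$, as in Lemma~\ref{lem:simstate-monotonicity}) for $\oheap(\sstate)$. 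The conjuncts $\simenv{V}{\senv(\sstate)}{\env}$ and $V(\pc(\sstate)) = \ktrue$ do not mention permissions at all. What remains is to establish $\simheap{V}{\sheap(\sstate); \triple{t}{f}{t'}}{\heap}{\perms}$ by checking the four conjuncts of \eqref{eq:sheap-correspondence}.

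I will check the conjuncts in order. For the first (values) and second (ownership), the old chunks inherit the facts from the hypothesis; for ownership, note that $\perms \setminus \vfoot{V}{\heap}{\cdot} \subseteq \perms$. The new chunk $\triple{f}{t}{t'}$ is covered directly by the assumptions $\heap(V(t), f) = V(t')$ and $\pair{V(t)}{f} \in \perms$.

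The third conjunct (predicate chunks) involves only old chunks. For each $\pair{p}{\multiple{t}} \in \sheap(\sstate)$ we have $\assertion{\heap}{\perms \setminus \vfoot{V}{\heap}{\triple{f}{t}{t'}}}{[\multiple{x \mapsto V(t)}]}{\fpred(p)}$, which lifts to $\perms$ by Lemma~\ref{lem:assert-monotonicity}.

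The fourth conjunct (pairwise disjointness of footprints) is the only step with real content, and I expect it to be the main obstacle. Pairs of old chunks are disjoint by the hypothesis. For a pair consisting of an old chunk $h$ and the new chunk, I will apply Lemma~\ref{lem:sim-heap-disjoint} with $\perms' = \vfoot{V}{\heap}{\triple{f}{t}{t'}} = \set{\pair{V(t)}{f}}$. This yields $\vfoot{V}{\heap}{h} \cap \set{\pair{V(t)}{f}} = \emptyset$ for every $h \in \sheap(\sstate)$, which is exactly the required disjointness.

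One subtlety needs care here. If the new chunk already occurs syntactically in $\sheap(\sstate)$, the disjointness obtained above would contradict $\pair{V(t)}{f} \in \vfoot{V}{\heap}{h}$. This shows that such duplication cannot happen, so the union is a genuine extension and the ``$h_1 \ne h_2$'' side condition is handled.

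A second subtlety is that Lemma~\ref{lem:sim-heap-disjoint} for predicate chunks relies on Lemma~\ref{lem:efoot-subset-spec}, which needs predicate bodies to be specifications. I will note that this is guaranteed by program well-formedness, exactly as in Lemma~\ref{lem:sim-heap-contains}. Combining the four conjuncts with the unchanged components gives $\simstate{V}{\sstate'}{\heap}{\perms}{\env}$.
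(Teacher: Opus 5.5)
Your proposal is correct. It follows the same argument the paper inherits from Zimmerman et al. (the paper states this lemma without proof) and carries out inline in the \textsc{SProduceField} case of Lemma~\ref{lem:produce-soundness}. There, the unchanged components and the optimistic heap transfer by monotonicity in the permission set. The only new obligation, disjointness of the added field chunk from the old chunks, comes from Lemma~\ref{lem:sim-heap-disjoint} with excluded set $\set{\pair{V(t)}{f}}$. As a minor caveat, Definition~\ref{def:well-formed-prog} does not literally require predicate bodies to be specifications; the paper assumes this implicitly, as in Lemma~\ref{lem:sim-heap-contains}. Since you use Lemma~\ref{lem:sim-heap-disjoint} as a black box, your proof does not depend on that remark.
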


\begin{lemma}\label{lem:eval-heap-efoot-unchanged}
  If $\universal{\pair{\ell}{f} \in \efoot{\heap}{\perms}{e}}{\heap'(\ell, f) = \heap(\ell, f)}$ and $\eval{\heap}{\env}{e}{v}$ then $\eval{\heap'}{\env}{e}{v}$.
\end{lemma}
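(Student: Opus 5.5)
The plan is to prove the lemma by induction on the derivation of $\eval{\heap}{\env}{e}{v}$, producing in each case a derivation of $\eval{\heap'}{\env}{e}{v}$ from the same rule. The exact footprint $\efoot{\heap}{\env}{e}$ always contains the footprints of the subexpressions that the derivation actually evaluates. So the hypothesis, that $\heap'$ and $\heap$ agree on the footprint, passes to each premise, and the induction hypothesis applies to it.

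The routine cases go as follows.
\begin{itemize}
\item \textsc{EvalLiteral} and \textsc{EvalVar} do not consult the heap, so they are immediate.
\item \textsc{EvalOp} and \textsc{EvalNeg} use the union definitions of the footprint.
\item For \textsc{EvalAndA}/\textsc{EvalAndB} and \textsc{EvalOrA}/\textsc{EvalOrB}, the footprint is chosen by the value of $e_1$ under $\heap$. First apply the induction hypothesis to $e_1$ to get the same value under $\heap'$, so the same rule fires. In the ``B'' rules the footprint also contains $\efoot{\heap}{\env}{e_2}$, so the induction hypothesis applies to $e_2$ as well.
\item For \textsc{EvalField}, the footprint is $\efoot{\heap}{\env}{e}; \pair{\ell}{f}$ where $\eval{\heap}{\env}{e}{\ell}$. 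The induction hypothesis gives $\eval{\heap'}{\env}{e}{\ell}$, and $\heap'(\ell,f)=\heap(\ell,f)$ because $\pair{\ell}{f}$ lies in the footprint.
\item For \textsc{EvalUnfolding}, only $e_0$ is evaluated, and $\efoot{\heap}{\env}{e_0}$ is contained in the footprint of the whole expression, so the induction hypothesis applies directly.
\end{itemize}

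The delicate case is \textsc{EvalFunction}. The argument list $\multiple{e}$ is handled as above, yielding the same values $\multiple{v}$ under $\heap'$, so the new environment $[\multiple{x \mapsto v}]$ is the same under both heaps. The body is evaluated in that environment, and the footprint includes $\efoot{\heap}{[\multiple{x \mapsto v}]}{\ffuncbody(f)}$. The induction hypothesis on the subderivation for the body then gives the body's value under $\heap'$.

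The main obstacle is that this is an induction on the derivation, not on the syntax of $e$. The body of a recursive pure function is not a subterm of $e$, so syntactic induction would fail here, while derivation induction succeeds because evaluating the body is a proper subderivation. A second wrinkle is that the footprint is defined relative to $\heap$ and refers to values computed there. Each case therefore first transports those values to $\heap'$ via the induction hypothesis before using the footprint clause, so that the branch or location named in the footprint matches the one used in $\heap'$. The statement writes $\efoot{\heap}{\perms}{e}$, and I read this as $\efoot{\heap}{\env}{e}$.
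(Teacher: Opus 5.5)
Your proof is correct and follows the paper's argument: induction on the evaluation derivation with one case per rule. In each case you use that the footprint of the whole expression contains the footprints of the subexpressions actually evaluated, including $e_0$ for \textsc{EvalUnfolding} and the function body for \textsc{EvalFunction}. The paper states that it inducts on the derivation of $\eval{\heap'}{\env}{e}{v}$, which is evidently a slip; inducting on the derivation under $\heap$, as you do, is the correct choice and is what makes the recursive-function case go through.
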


\begin{proof}
    By induction on the derivation of $\eval{\heap'}{\env}{e}{v}$.

    \begin{enumcases}
        \case \refrule{EvalLiteral} -- $\eval{\heap}{\env}{l}{l}$: $\eval{\heap'}{\env}{l}{l}$ by \refrule{EvalLiteral}.
    
        \case \refrule{EvalVar} -- $\eval{\heap}{\env}{x}{\env(x)}$: $\eval{\heap'}{\env}{x}{\env(x)}$ by \refrule{EvalVar}.
    
        \case\label{case:eval-heap-efoot-unchanged-anda} \refrule{EvalAndA} -- $\eval{\heap}{\env}{e_1 \kand e_2}{\kfalse}$:
    
          By inversion of \refrule{EvalAndA} $\eval{\heap}{\env}{e_1}{\kfalse}$. Then $\efoot{\heap}{\env}{e_1 \kand e_2} = \efoot{\heap}{\env}{e_1}$ by definition. Thus $\universal{\pair{\ell}{f} \in \efoot{\heap}{\perms}{e_1}}{\heap'(\ell, f) = \heap(\ell, f)}$, and therefore $\eval{\heap'}{\env}{e_1}{\kfalse}$ by induction.
    
        \case\label{case:eval-heap-efoot-unchanged-andb} \refrule{EvalAndB} -- $\eval{\heap}{\env}{e_1 \kand e_2}{v_2}$:
    
          By inversion of \refrule{EvalAndB} $\eval{\heap}{\env}{e_1}{\ktrue}$ and $\eval{\heap}{\env}{e_2}{v_2}$. Now $\efoot{\heap}{\env}{e_1 \kand e_2} = \efoot{\heap}{\env}{e_1} \cup \efoot{\heap}{\env}{e_2}$.
    
          Thus $\universal{\pair{\ell}{f} \in \efoot{\heap}{\perms}{e_1}}{\heap'(\ell, f) = \heap(\ell, f)}$ since $\efoot{\heap}{\env}{e_1} \subseteq \efoot{\heap}{\env}{e_1 \kand e_2}$. Therefore $\eval{\heap'}{\env}{e_1}{\ktrue}$ by induction. Similarly, $\eval{\heap'}{\env}{e_2}{v_2}$.
    
          Therefore $\eval{\heap'}{\env}{e_1 \kand e_2}{v_2}$ by \refrule{EvalAndB}.
    
        \case \refrule{EvalOrA} -- $\eval{\heap}{\env}{e_1 \kor e_2}{\ktrue}$: Similar to case \ref{case:eval-heap-efoot-unchanged-anda}.
    
        \case \refrule{EvalOrB} -- $\eval{\heap}{\env}{e_1 \kor e_2}{v_2}$: Similar to case \ref{case:eval-heap-efoot-unchanged-andb}.
    
        \case \refrule{EvalOp} -- $\eval{\heap}{\env}{e_1 \oplus e_2}{v_1 \oplus v_2}$:
    
          By inversion of \refrule{EvalOp} $\eval{\heap}{\env}{e_1}{v_1}$ and $\eval{\heap}{\env}{e_2}{v_2}$. Now $\efoot{\heap}{\env}{e_1 \oplus e_2} = \efoot{\heap}{\env}{e_1} \cup \efoot{\heap}{\env}{e_2}$.
    
          Thus $\universal{\pair{\ell}{f} \in \efoot{\heap}{\perms}{e_1}}{\heap'(\ell, f) = \heap(\ell, f)}$ since $\efoot{\heap}{\env}{e_1} \subseteq \efoot{\heap}{\env}{e_1 \kand e_2}$. Therefore $\eval{\heap'}{\env}{e_1}{v_1}$ by induction. Similarly, $\eval{\heap'}{\env}{e_2}{v_2}$.
    
          Therefore $\eval{\heap'}{\env}{e_1 \oplus e_2}{e_1 \oplus e_2}$ by \refrule{EvalOp}.
    
        \case \refrule{EvalNeg} -- $\eval{\heap}{\env}{\neg v}$:
    
          By inversion of \refrule{EvalNeg} $\eval{\heap}{\env}{e}{v}$. Also, $\efoot{\heap}{\env}{\kneg e} = \efoot{\heap}{\env}{e}$ by definition, thus $\universal{\pair{\ell}{f} \in \efoot{\heap}{\perms}{e}}{\heap'(\ell, f) = \heap(\ell, f)}$. Therefore $\eval{\heap'}{\env}{e}{v}$ by induction.
    
          Therefore $\eval{\heap'}{\env}{\kneg e}{\neg v}$ by \refrule{EvalNeg}.
    
        \case \refrule{EvalField} -- $\eval{\heap}{\env}{e.f}{\heap(\ell, f)}$:
    
          By inversion of \refrule{EvalField} $\eval{\heap}{\env}{e}{\ell}$. Then $\efoot{\heap}{\env}{e.f} = \efoot{\heap}{\env}{e}; \pair{\ell}{f}$.
    
          Thus $\universal{\pair{\ell}{f} \in \efoot{\heap}{\perms}{e}}{\heap'(\ell, f) = \heap(\ell, f)}$ since $\efoot{\heap}{\env}{e} \subseteq \efoot{\heap}{\env}{e.f}$. Therefore $\eval{\heap'}{\env}{e}{\ell}$ by induction.
    
          Now $\eval{\heap'}{\env}{e.f}{\heap'(\ell, f)}$. Also, since $\pair{\ell}{f} \in \efoot{\heap}{\env}{e.f}$, $\heap'(\ell, f) = \heap(\ell, f)$. Therefore $\eval{\heap'}{\env}{e.f}{\heap(\ell, f)}$.

        \definecolor{shadecolor}{gray}{0.9}
        \begin{snugshade} \case \refrule{EvalUnfolding} -- By the induction hypothesis, $\eval{\heap'}{\env}{e_0}{v}$, so by \refrule{EvalUnfolding}, $\eval{\heap'}{\env}{\sunfolding{\pe}{e_0}}{v}$ as desired.

        \case \refrule{EvalFunction} -- By the induction hypothesis, $\multiple{\eval{\heap'}{\env}{e}{v}}$ and $\eval{\heap'}{\env'}{\ffuncbody(f)}{v'}$, so by definition, $\eval{\heap'}{\env}{\fe}{v'}$.
        \end{snugshade}
    \end{enumcases}
\end{proof}

\begin{lemma}\label{lem:frm-heap-efoot-unchanged}
  If $\universal{\pair{\ell}{f} \in \efoot{\heap}{\env}{e}}{\heap'(\ell, f) = \heap(\ell, f)}$ and $\frm{\heap}{\perms}{\env}{e}$ then $\frm{\heap'}{\perms}{\env}{e}$.
\end{lemma}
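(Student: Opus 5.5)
We will prove Lemma~\ref{lem:frm-heap-efoot-unchanged} by induction on the derivation of $\frm{\heap}{\perms}{\env}{e}$, in the same style as Lemma~\ref{lem:eval-heap-efoot-unchanged}. Because framing now depends on assertions of formulas, the induction is really a mutual one. It runs together with the analogous facts for iso-recursive framing, equi-recursive framing, and assertion of formulas. In each of these, the hypothesis is that $\heap$ and $\heap'$ agree on the relevant exact footprint. This is the same mutual structure described in the Soundness section, realized as a single induction on term size.

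The old cases follow one pattern: the exact footprint of the compound expression contains the footprint of each subexpression that is framed.
\begin{itemize}
\item \textsc{FrameLiteral} and \textsc{FrameVar} are immediate.
\item For \textsc{FrameOp} and \textsc{FrameNeg}, apply the induction hypothesis to each subexpression.
\item For \textsc{FrameOrA}, \textsc{FrameOrB}, \textsc{FrameAndA} and \textsc{FrameAndB}, we must also transport the premise $\eval{\heap}{\env}{e_1}{v}$ to $\heap'$. Lemma~\ref{lem:eval-heap-efoot-unchanged} does this, since $\efoot{\heap}{\env}{e_1}$ is contained in the footprint of the disjunction or conjunction. Because the branch taken is the same in $\heap$ and $\heap'$, the footprint clause chosen is also the same.
\item For \textsc{FrameField}, apply induction on $e$ and Lemma~\ref{lem:eval-heap-efoot-unchanged} to get $\eval{\heap'}{\env}{e}{\ell}$ with the same $\ell$. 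Then $\assertion{\heap'}{\perms}{\env}{\kacc(e.f)}$ follows by \textsc{AssertAcc}, since the permission set is unchanged.
\end{itemize}

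For \textsc{FrameUnfolding}, we have $\efoot{\heap}{\env}{\sunfolding{\pe}{e_0}} = \efoot{\heap}{\env}{\pe} \cup \efoot{\heap}{\env}{e_0}$. The induction hypothesis therefore gives $\frm{\heap'}{\perms}{\env}{e_0}$. The footprint of $\pe$ contains every $\efoot{\heap}{\env}{e}$, so induction on the arguments together with \textsc{IFramePredicate} gives $\ifrm{\heap'}{\perms}{\env}{\pe}$. The remaining premise, $\assertion{\heap'}{\perms}{\env}{\pe}$, needs a companion fact: assertion is preserved when the heap is changed only outside the exact footprint. We prove this fact mutually by induction on the derivation of $\assertion{\heap}{\perms}{\env}{\gform}$.

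In the companion fact, the \textsc{AssertPredicate} case uses Lemma~\ref{lem:eval-heap-efoot-unchanged} to show the arguments evaluate to the same $\multiple{v}$ in $\heap'$. This means the footprint used for the body, $\efoot{\heap}{[\multiple{x \mapsto v}]}{\fpred(p)}$, is the same one on which the heaps agree. The \textsc{AssertImprecise} case additionally needs the corresponding preservation fact for $\vdash_{\mathrm{frmE}}$, which is proved by the same kind of induction.

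\textsc{FrameFunction} is handled the same way. The footprint of $\fe$ contains the argument footprints and $\efoot{\heap}{[\multiple{x \mapsto v}]}{\ffuncpre(f)}$. Here the argument values $\multiple{v}$ again coincide in $\heap'$ by Lemma~\ref{lem:eval-heap-efoot-unchanged}, so the precondition's assertion and iso-recursive framing transfer by the mutual hypotheses.

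The main obstacle is making the mutual induction well-founded. Predicate bodies are unrolled equi-recursively, so naive term size is not decreasing. We will therefore induct on the height of the derivation of the framing or assertion judgement rather than on syntax. Each unrolling of a predicate, and each use of a function precondition, appears as a strict subderivation: the \textsc{AssertPredicate} premise, or the \textsc{FrameFunction} premise. This ordering therefore works.
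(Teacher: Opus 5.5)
Your proposal is correct and follows essentially the same route as the paper: induction on the framing derivation. The new \textsc{FrameUnfolding} and \textsc{FrameFunction} cases are discharged using Lemma~\ref{lem:eval-heap-efoot-unchanged} for the arguments together with the companion preservation facts for assertion and equi-recursive framing (Lemmas~\ref{lem:assert-heap-efoot-unchanged} and~\ref{lem:efrm-heap-efoot-unchanged}), which the paper treats as mutually recursive with this lemma. Your choice of derivation height as the induction measure sharpens the paper's term-size intuition, which does not decrease under predicate unfolding, and applying the assertion fact to $\pe$ directly removes the need for the paper's case split on whether $\fpred(p)$ is precise.
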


\begin{proof}
    By induction on the derivation of $\frm{\heap'}{\perms}{\env}{e}$.

    \begin{enumcases}
        \case \refrule{EFrameExpression} -- $\efrm{\heap}{\perms}{\env}{e}$:
    
          By inversion of \refrule{EFrameExpression} $\frm{\heap}{\perms}{\env}{e}$ and then $\frm{\heap'}{\perms}{\env}{e}$ by assumptions and lemma \ref{lem:frm-heap-efoot-unchanged}. Therefore $\efrm{\heap}{\perms}{\env}{e}$ by \refrule{EFrameExpression}.
    
        \case \refrule{EFrameConjunction} -- $\efrm{\heap}{\perms}{\env}{\phi_1 * \phi_2}$:
    
          By inversion of \refrule{EFrameConjunction} $\efrm{\heap}{\perms}{\env}{\phi_1}$ and $\efrm{\heap}{\perms}{\env}{\phi_2}$. Also, $\efoot{\heap}{\env}{\phi_1 * \phi_2} = \efoot{\heap}{\env}{\phi_1} \cup \efoot{\heap}{\env}{\phi_2}$ by definition.
    
          Now $\universal{\pair{\ell}{f} \in \efoot{\heap}{\env}{\phi_1}}{\heap'(\ell, f) = \heap(\ell, f)}$ since $\efoot{\heap}{\perms}{\phi_1} \subseteq \efoot{\heap}{\perms}{\gform}$. Therefore \\
          $\efrm{\heap'}{\perms}{\env}{\phi_1}$ by induction. Similarly, $\efrm{\heap'}{\perms}{\env}{\phi_2}$.
    
          Therefore $\efrm{\heap'}{\perms}{\env}{\phi_1 * \phi_2}$ by \refrule{EFrameConjunction}.
    
        \case \refrule{EFramePredicate} -- $\efrm{\heap}{\perms}{\env}{p(\multiple{e})}$:
    
          By inversion of \refrule{EFramePredicate} $\multiple{\frm{\heap}{\perms}{\env}{e}}$, $\multiple{\eval{\heap}{\env}{e}{v}}$, and $\efrm{\heap}{\perms}{[\multiple{x \mapsto v}]}{\fpred(p)}$, where $\multiple{x} = \fpredparams(p)$.
    
          Now $\efoot{\heap}{\env}{p(\multiple{e})} = \efoot{\heap}{[\multiple{x \mapsto v}]}{\fpred(p)} \cup \bigcup \multiple{\efoot{\heap}{\env}{e}}$.
    
          Then, for each $e$ and corresponding $x$, $\universal{\pair{\ell}{f} \in \efoot{\heap}{\env}{e}}{\heap'(\ell, f) = \heap(\ell, f)}$ since $\efoot{\heap}{\env}{e} \subseteq \efoot{\heap}{\env}{p(\multiple{e})}$. Therefore $\eval{\heap'}{\env}{e}{v}$ by lemma \ref{lem:eval-heap-efoot-unchanged} and $\frm{\heap'}{\perms}{\env}{e}$ by \ref{lem:frm-heap-efoot-unchanged}.
    
          Also, $\universal{\pair{\ell}{f} \in \efoot{\heap}{[\multiple{x \mapsto v}]}{\fpred(p)}}{\heap'(\ell, f) = \heap(\ell, f)}$, thus by induction \\
          $\efrm{\heap'}{\perms}{[\multiple{x \mapsto v}]}{\fpred(p)}$.
    
          Therefore $\efrm{\heap'}{\perms}{\env}{p(\multiple{e})}$ by \refrule{EFramePredicate}.
    
        \case\label{case:efrm-heap-efoot-unchanged-ifa} \refrule{EFrameConditionalA} -- $\efrm{\heap}{\perms}{\env}{\sif{e}{\phi_1}{\phi_2}}$:
    
          By inversion of \refrule{EFrameConditionalA} $\eval{\heap}{\env}{e}{\ktrue}$, $\frm{\heap}{\perms}{\env}{e}$, and $\efrm{\heap}{\perms}{\env}{\phi_1}$.
    
          Now $\efoot{\heap}{\env}{\sif{e}{\phi_1}{\phi_2}} = \efoot{\heap}{\env}{e} \cup \efoot{\heap}{\env}{\phi_1}$.
          
          Then $\universal{\pair{\ell}{f} \in \efoot{\heap}{\env}{e}}{\heap'(\ell, f) = \heap(\ell, f)}$ since $\efoot{\heap}{\env}{e} \subseteq \efoot{\heap}{\env}{\sif{e}{\phi_1}{\phi_2}}$. Therefore $\eval{\heap'}{\env}{e}{\ktrue}$ by lemma \ref{lem:eval-heap-efoot-unchanged} and $\frm{\heap'}{\perms}{\env}{e}$ by lemma \ref{lem:frm-heap-efoot-unchanged}.
    
          Also, Then $\universal{\pair{\ell}{f} \in \efoot{\heap}{\env}{\phi_1}}{\heap'(\ell, f) = \heap(\ell, f)}$ since \\
          $\efoot{\heap}{\env}{\phi_1} \subseteq \efoot{\heap}{\env}{\sif{e}{\phi_1}{\phi_2}}$. Therefore $\efrm{\heap'}{\perms}{\env}{\phi_1}$ by induction.
    
          Therefore $\efrm{\heap'}{\perms}{\env}{\sif{e}{\phi_1}{\phi_2}}$ by \refrule{EFrameConditionalA}.
    
        \case \refrule{EFrameConditionalB} -- $\efrm{\heap}{\perms}{\env}{\sif{e}{\phi_1}{\phi_2}}$: Similar to case \ref{case:efrm-heap-efoot-unchanged-ifa}.
    
        \case \refrule{EFrameAcc} -- $\efrm{\heap}{\perms}{\env}{\kacc(e.f)}$:
    
          By inversion of \refrule{EFrameAcc} $\frm{\heap}{\perms}{\env}{e}$.
    
          Also, $\efoot{\heap}{\env}{e} \subseteq \efoot{\heap}{\env}{\kacc(e.f)}$ by definition, thus $\universal{\pair{\ell}{f} \in \efoot{\heap}{\env}{e}}{\heap'(\ell, f) = \heap(\ell, f)}$. Therefore $\frm{\heap'}{\perms}{\env}{e}$ by lemma \ref{lem:frm-heap-efoot-unchanged}.
    
          Thus $\efrm{\heap'}{\perms}{\env}{\kacc(e.f)}$ by \refrule{EFrameAcc}.

        \definecolor{shadecolor}{gray}{0.9}
        \begin{snugshade} \case \refrule{FrameUnfolding} -- By definition, $\ifrm{\heap}{\perms}{\env}{\pe}$, $\assertion{\heap}{\perms}{\env}{\pe}$, and $\frm{\heap}{\perms}{\env}{e_0}$. By the induction hypothesis, $\frm{\heap'}{\perms}{\env}{e_0}$. By \refrule{IFramePredicate}, $\multiple{\frm{\heap}{\perms}{\env}{e}}$, so by the induction hypothesis, $\multiple{\frm{\heap'}{\perms}{\env}{e}}$ and hence $\ifrm{\heap'}{\perms}{\env}{\pe}$. By \refrule{AssertPredicate},
        $\multiple{\eval{\heap}{\env}{e}{v}}$ and $\assertion{\heap}{\perms}{[\multiple{x \mapsto v}]}{\fpred(p)}$ where $\multiple{x} = \fpredparams(p)$. By Lemma~\ref{lem:frm-heap-efoot-unchanged}, $\multiple{\eval{\heap'}{\env}{e}{v}}$. Then, since $p$ is a predicate, its body must be a specification so one of the following cases applies.

            \subcase $\fpred(p)$ is precise and self-framing -- by Lemma~\ref{lem:assert-heap-efoot-unchanged}, $\assertion{\heap'}{\perms}{\env}{\fpred(p)}$.
            
            \subcase $\fpred(p) = \simprecise{\phi}$ -- By \refrule{AssertImprecise}, $\assertion{\heap}{\perms}{\env}{\phi}$ and $\efrm{\heap}{\perms}{[\multiple{x \mapsto v}]}{\phi}$. By Lemma~\ref{lem:assert-heap-efoot-unchanged}, $\assertion{\heap'}{\perms}{\env}{\phi}$, and by Lemma~\ref{lem:efrm-heap-efoot-unchanged}, $\efrm{\heap'}{\perms}{[\multiple{x \mapsto v}]}{\phi}$. By \refrule{AssertImprecise}, $\assertion{\heap'}{\perms}{\env}{\fpred(p)}$.

        In either case, $\assertion{\heap'}{\perms}{\env}{\pe}$ by \refrule{AssertPredicate}, so $\frm{\heap'}{\perms}{\env}{\sunfolding{\pe}{e_0}}$ as desired.

        \case \refrule{FrameFunction} -- By definition, $\ifrm{\heap}{\perms}{\env}{\ffuncpre(f)}$, $\assertion{\heap}{\perms}{\env}{\ffuncpre(f)}$, and $\multiple{\frm{\heap}{\perms}{\env}{e}}$. By the induction hypothesis, $\multiple{\frm{\heap'}{\perms}{\env}{e}}$ and $\ifrm{\heap'}{\perms}{\env}{\ffuncpre(f)}$. By Lemma~\ref{lem:assert-heap-efoot-unchanged}, $\assertion{\heap'}{\perms}{\env}{\ffuncpre(f)}$, so $\frm{\heap'}{\perms}{\env}{\fe}$.
        \end{snugshade}
    \end{enumcases}
\end{proof}

\begin{lemma}\label{lem:efrm-heap-efoot-unchanged}
  If $\universal{\pair{\ell}{f} \in \efoot{\heap}{\perms}{\gform}}{\heap'(\ell, f) = \heap(\ell, f)}$ and $\efrm{\heap}{\perms}{\env}{\gform}$ then $\efrm{\heap'}{\perms}{\env}{\gform}$.
\end{lemma}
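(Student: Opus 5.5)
The plan is induction on the derivation of $\efrm{\heap}{\perms}{\env}{\gform}$, one case per equi-recursive framing rule. In every case we first use the definition of the exact footprint to show that the footprint of each premise is contained in $\efoot{\heap}{\env}{\gform}$. The hypothesis that $\heap'$ agrees with $\heap$ on $\efoot{\heap}{\env}{\gform}$ then passes to each premise. The expression premises are handled by Lemma~\ref{lem:eval-heap-efoot-unchanged} (evaluation) and Lemma~\ref{lem:frm-heap-efoot-unchanged} (expression framing). The formula premises are handled by the induction hypothesis. Finally we reapply the same rule. Note that the statement ranges over $\GFormula$, while the $\vdash_{\mathrm{frmE}}$ rules only cover precise formulas. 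So if $\gform = \simprecise{\phi}$ we reduce to $\phi$ by reading framing of $\simprecise{\phi}$ as framing of $\phi$, and use $\efoot{\heap}{\env}{\simprecise{\phi}} = \efoot{\heap}{\env}{\phi}$.

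The individual cases proceed as follows.
\begin{itemize}
\item \textsc{EFrameExpression}: this is exactly Lemma~\ref{lem:frm-heap-efoot-unchanged}.
\item \textsc{EFrameAcc}: $\efoot{\heap}{\env}{e} \subseteq \efoot{\heap}{\env}{\kacc(e.f)}$, so Lemma~\ref{lem:frm-heap-efoot-unchanged} gives $\frm{\heap'}{\perms}{\env}{e}$.
\item \textsc{EFrameConjunction}: both conjuncts' footprints lie in the union, so the induction hypothesis applies to each.
\item \textsc{EFrameConditionalA}/\textsc{B}: first transfer $\eval{\heap}{\env}{e}{\ktrue}$ (respectively $\kfalse$) and $\frm{\heap}{\perms}{\env}{e}$ to $\heap'$ using the two lemmas. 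Then apply the induction hypothesis to the chosen branch; its footprint is in $\efoot{\heap}{\env}{\sif{e}{\phi_1}{\phi_2}}$ precisely because $e$ evaluates to the same value in $\heap$.
\item \textsc{EFramePredicate}: transfer the argument evaluations $\multiple{\eval{\heap}{\env}{e}{v}}$ and framings to $\heap'$, so that the same values $\multiple{v}$ are obtained under $\heap'$. Then apply the induction hypothesis to $\efrm{\heap}{\perms}{[\multiple{x \mapsto v}]}{\fpred(p)}$. Its footprint $\efoot{\heap}{[\multiple{x \mapsto v}]}{\fpred(p)}$ is contained in $\efoot{\heap}{\env}{\pe}$ by definition.
\end{itemize}

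The main obstacle is well-foundedness, together with the mutual dependency the paper already flags. Lemma~\ref{lem:frm-heap-efoot-unchanged} now relies, in its \textsc{FrameUnfolding} and \textsc{FrameFunction} cases, on the present lemma and on the assertion analogue, and the present lemma in turn relies on Lemma~\ref{lem:frm-heap-efoot-unchanged}. Moreover, the \textsc{EFramePredicate} case unrolls $\fpred(p)$, which is not a syntactic subterm. I would therefore do induction on the height of the framing derivation rather than on syntax. For the cross-calls, I would prove this lemma, Lemma~\ref{lem:frm-heap-efoot-unchanged}, and the corresponding assertion lemma simultaneously by mutual induction on derivation height, following the combined-theorem remark in the soundness section. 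Each cross-call is made on a strictly smaller subderivation: a \textsc{FrameUnfolding} premise inside an \textsc{EFrameExpression} premise, or an \textsc{EFrameExpression} premise inside an \textsc{EFramePredicate} body. Since every derivation is finite, the combined measure decreases. With that measure fixed, each case above is a routine check of footprint inclusion.
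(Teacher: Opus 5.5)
Your proposal is correct and follows the paper's own argument: induction on the $\vdash_{\mathrm{frmE}}$ derivation with one case per \textsc{EFrame} rule, discharging evaluation and expression-framing premises with Lemma~\ref{lem:eval-heap-efoot-unchanged} and Lemma~\ref{lem:frm-heap-efoot-unchanged}, and formula premises with the induction hypothesis once footprint inclusion is checked. In the paper these \textsc{EFrame} cases actually sit inside the proof printed under Lemma~\ref{lem:frm-heap-efoot-unchanged}, and the circularity is resolved by mutual induction, which your explicit derivation-height measure makes precise (correctly noting that predicate unrolling is not structural).
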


\begin{lemma}\label{lem:assert-heap-efoot-unchanged}
  If $\assertion{\heap}{\perms}{\env}{\gform}$ and $\universal{\pair{\ell}{f} \in \efoot{\heap}{\env}{\gform}}{\heap'(\ell, f) = \heap(\ell, f)}$ then $\assertion{\heap'}{\perms}{\env}{\gform}$.
\end{lemma}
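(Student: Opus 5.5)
We prove this by induction on the derivation of $\assertion{\heap}{\perms}{\env}{\gform}$. It is mutually dependent with Lemmas~\ref{lem:eval-heap-efoot-unchanged}, \ref{lem:frm-heap-efoot-unchanged} and \ref{lem:efrm-heap-efoot-unchanged}, because expressions may now contain unfolding expressions and function calls. Following the discussion in the Soundness section, we therefore treat the four statements as a single statement. The induction measure is the size of the term, with predicate and function bodies handled by the induction on the derivation (so that unrolling $\fpred(p)$ or $\ffuncbody(f)$ is justified by the sub-derivation, not by term size). The hypothesis throughout is that $\heap$ and $\heap'$ agree on $\efoot{\heap}{\env}{\gform}$, and each case checks that the relevant sub-footprints are subsets of this footprint.

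The cases go as follows.
\begin{itemize}
\item \textbf{\refrule{AssertValue}.} We have $\eval{\heap}{\env}{e}{\ktrue}$, and Lemma~\ref{lem:eval-heap-efoot-unchanged} transfers it to $\heap'$.
\item \textbf{\refrule{AssertAcc}.} Lemma~\ref{lem:eval-heap-efoot-unchanged} applied to $e$ (whose footprint is contained in that of $\kacc(e.f)$) gives the same location $\ell$. The membership $\pair{\ell}{f} \in \perms$ does not depend on the heap.
\item \textbf{\refrule{AssertIfA} and \refrule{AssertIfB}.} The footprint is $\efoot{\heap}{\env}{e} \cup \efoot{\heap}{\env}{\phi_i}$. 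The guard evaluates to the same boolean under $\heap'$ by Lemma~\ref{lem:eval-heap-efoot-unchanged}, and the branch follows by induction.
\item \textbf{\refrule{AssertConjunction}.} The footprint is the union of the two sub-footprints, so induction applies to each conjunct with its own $\perms_i$. The side conditions on $\perms_1, \perms_2$ are heap-independent.
\item \textbf{\refrule{AssertPredicate}.} The arguments evaluate to the same $\multiple{v}$ under $\heap'$ by Lemma~\ref{lem:eval-heap-efoot-unchanged}, because $\bigcup\multiple{\efoot{\heap}{\env}{e}}$ is part of the footprint. The body holds under $\heap'$ in $[\multiple{x \mapsto v}]$ by the induction hypothesis on the sub-derivation, since $\efoot{\heap}{[\multiple{x\mapsto v}]}{\fpred(p)}$ is also part of the footprint.
\item \textbf{\refrule{AssertImprecise}.} The footprints of $\simprecise{\phi}$ and $\phi$ coincide. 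Induction gives $\assertion{\heap'}{\perms}{\env}{\phi}$, and Lemma~\ref{lem:efrm-heap-efoot-unchanged} gives $\efrm{\heap'}{\perms}{\env}{\phi}$.
\end{itemize}

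The main obstacle is keeping the mutual recursion well-founded in the new cases of the companion lemmas. An $\kunfolding$ expression (or a function call inside an expression) forces Lemma~\ref{lem:eval-heap-efoot-unchanged} or Lemma~\ref{lem:frm-heap-efoot-unchanged} to call the present lemma on $\pe$ or $\ffuncpre(f)$. This is admissible because $\efoot{\heap}{\env}{\sunfolding{\pe}{e_0}}$ explicitly contains $\efoot{\heap}{\env}{\pe}$, and the function footprint contains the footprints of the precondition and body. The subtle point is that footprints are evaluated in $\heap$, not $\heap'$. Whenever a footprint clause branches on an evaluation result (for $\kor$, $\kand$, conditionals, argument values), we first show via Lemma~\ref{lem:eval-heap-efoot-unchanged} that the result is identical under $\heap'$, so the same clause is selected. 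I would state this explicitly in the conditional and predicate cases and argue the measure decreases by nesting the structural induction inside the derivation induction.
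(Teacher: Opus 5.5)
Your proof is correct and matches the paper. The paper states this lemma without proof and inherits Zimmerman et al.'s induction on the derivation of the assertion, with exactly your cases; the new expression forms enter only through the updated Lemma~\ref{lem:eval-heap-efoot-unchanged} (and Lemma~\ref{lem:efrm-heap-efoot-unchanged} for \textsc{AssertImprecise}), and the resulting mutual recursion is justified as in the Soundness section.

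Two simplifications. First, plain induction on derivations already suffices, because every cross-lemma call is made on a premise of the current rule, so term size plays no role. Second, Lemma~\ref{lem:eval-heap-efoot-unchanged} never calls back into this lemma, since \textsc{EvalUnfolding} and \textsc{EvalFunction} involve no assertions; only the \textsc{FrameUnfolding} and \textsc{FrameFunction} cases of Lemma~\ref{lem:frm-heap-efoot-unchanged} do.
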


\begin{lemma}\label{lem:assert-heap-perms-unchanged}
  If $\assertion{\heap}{\perms}{\env}{\gform}$ for some specification $\gform$ and $\universal{\pair{\ell}{f} \in \perms}{\heap'(\ell, f) = \heap(\ell, f)}$ then $\assertion{\heap'}{\perms}{\env}{\gform}$.
\end{lemma}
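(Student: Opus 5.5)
The plan is to reduce the statement to Lemma~\ref{lem:assert-heap-efoot-unchanged}. That lemma only requires the heaps to agree on $\efoot{\heap}{\env}{\gform}$, so it suffices to show that this exact footprint is contained in $\perms$. Once we have $\efoot{\heap}{\env}{\gform} \subseteq \perms$, the hypothesis $\universal{\pair{\ell}{f} \in \perms}{\heap'(\ell, f) = \heap(\ell, f)}$ restricts to the footprint, and Lemma~\ref{lem:assert-heap-efoot-unchanged} immediately gives $\assertion{\heap'}{\perms}{\env}{\gform}$.

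The inclusion comes directly from Lemma~\ref{lem:efoot-subset-spec}, because $\gform$ is a specification and $\assertion{\heap}{\perms}{\env}{\gform}$. That lemma covers both halves of Definition~\ref{def:specification}:
\begin{itemize}
\item In the precise, self-framed case, self-framing yields $\ifrm{\heap}{\perms}{\env}{\gform}$. Lemma~\ref{lem:ifrm-implies-efrm} (with $\perms' = \perms$) then gives $\efrm{\heap}{\perms}{\env}{\gform}$, and Lemma~\ref{lem:efoot-subset-framed} gives the inclusion.
\item In the imprecise case $\gform = \simprecise{\phi}$, \refrule{AssertImprecise} supplies $\efrm{\heap}{\perms}{\env}{\phi}$ directly. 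Since $\efoot{\heap}{\env}{\simprecise{\phi}} = \efoot{\heap}{\env}{\phi}$, Lemma~\ref{lem:efoot-subset-framed} again applies.
\end{itemize}
So the proof is a two-line composition of earlier results, and I would write it in that form.

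The main obstacle is not in this composition but in the lemmas it depends on, which must remain valid in the extended syntax. In particular, Lemma~\ref{lem:assert-heap-efoot-unchanged} must also cover $\gform$ containing unfolding expressions and pure function calls. The shaded cases of Lemma~\ref{lem:eval-heap-efoot-unchanged} handle this, because the exact footprint of $\fe$ includes the footprints of $\ffuncpre(f)$ and $\ffuncbody(f)$, and the footprint of an unfolding expression includes that of $\pe$.

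The one point I would check explicitly is that $\efoot{\heap}{\env}{\gform}$ is computed in $\heap$ rather than in $\heap'$. Lemma~\ref{lem:assert-heap-efoot-unchanged} is phrased this way, and its hypothesis is exactly what we supply, so no recomputation of the footprint in $\heap'$ is needed.
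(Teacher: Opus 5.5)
Your proof is correct and uses the intended argument. Since $\gform$ is a specification, Lemma~\ref{lem:efoot-subset-spec} gives $\efoot{\heap}{\env}{\gform} \subseteq \perms$, so $\heap$ and $\heap'$ agree on the exact footprint and Lemma~\ref{lem:assert-heap-efoot-unchanged} concludes (the paper defers this proof to Zimmerman et al., and the two surrounding lemmas are set up for exactly this composition); your case split re-deriving Lemma~\ref{lem:efoot-subset-spec} is accurate but unnecessary, since that lemma already covers every specification.
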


\begin{lemma}\label{lem:pres-modify-heap}
  If $\simstate{V}{\sstate}{\heap}{\perms}{\env}$ and $\universal{\pair{\ell}{f} \in \perms}{\heap'(\ell, f) = \heap(\ell, f)}$ then $\simstate{V}{\sstate}{\heap'}{\perms}{\env}$.
\end{lemma}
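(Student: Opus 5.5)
The plan is to unfold the definition of state correspondence~\eqref{eq:sstate-correspondence} and check each of its four conjuncts for $\heap'$ in place of $\heap$. Throughout, I use the hypothesis that $\heap$ and $\heap'$ agree on every location in $\perms$. Two of the conjuncts are immediate because they do not mention the heap: environment correspondence $\simenv{V}{\senv(\sstate)}{\env}$ and $V(\pc(\sstate)) = \ktrue$. So the work is in the two heap correspondences, $\simheap{V}{\sheap(\sstate)}{\heap'}{\perms}$ and $\simheap{V}{\oheap(\sstate)}{\heap'}{\perms}$. I will handle them together, since the optimistic clauses~\eqref{eq:oheap-correspondence} are a subset of the precise clauses~\eqref{eq:sheap-correspondence}.

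\emph{Field chunks.} Take $\triple{f}{t}{t'}$ in either heap. The original correspondence gives $\pair{V(t)}{f} \in \perms$, and this permission clause is heap-independent. Since the location is in $\perms$, the hypothesis gives $\heap'(V(t), f) = \heap(V(t), f) = V(t')$.

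\emph{Predicate chunks.} Take $\pair{p}{\multiple{t}}$ in either heap. We know $\assertion{\heap}{\perms}{[\multiple{x \mapsto V(t)}]}{\fpred(p)}$ with $\multiple{x} = \fpredparams(p)$. Because program well-formedness makes $\fpred(p)$ a specification, Lemma~\ref{lem:assert-heap-perms-unchanged} applies directly and yields $\assertion{\heap'}{\perms}{[\multiple{x \mapsto V(t)}]}{\fpred(p)}$. This is the one place where well-formedness is needed.

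\emph{Disjointness.} This clause applies only to $\sheap$, and it is the step I expect to be the real obstacle. The footprint $\vfoot{V}{\heap}{\pair{p}{\multiple{t}}} = \efoot{\heap}{[\multiple{x \mapsto V(t)}]}{\fpred(p)}$ depends on the heap, so I must show it is the same under $\heap'$. My approach is as follows.
\begin{enumerate}
\item By Lemma~\ref{lem:sim-heap-contains}, every chunk's footprint under $\heap$ lies in $\perms$. In particular $\efoot{\heap}{[\multiple{x \mapsto V(t)}]}{\fpred(p)} \subseteq \perms$, so $\heap$ and $\heap'$ agree on all of it.
\item I then prove an auxiliary fact by mutual induction on expressions and formulas, mirroring the proofs of Lemmas~\ref{lem:eval-heap-efoot-unchanged} and~\ref{lem:frm-heap-efoot-unchanged}: if $\heap$ and $\heap'$ agree on $\efoot{\heap}{\env}{\gform}$, then $\efoot{\heap'}{\env}{\gform} = \efoot{\heap}{\env}{\gform}$.
\item Each case of that induction uses Lemma~\ref{lem:eval-heap-efoot-unchanged} to show that the evaluations guarding the footprint definition are unchanged: the receiver $\ell$ for $e.f$ and $\kacc(e.f)$, branch conditions, short-circuit operands, and predicate and function arguments.
\item The unfolding and function-call cases also recurse into the footprints of the predicate body and the function body/precondition. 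Each of these is contained in the outer footprint, so the hypothesis restricts to them.
\end{enumerate}
Termination of the mutual induction for recursive predicates and functions is the subtle point. I would argue by induction on the derivation that witnesses the footprint, which is finite whenever the assertion holds, rather than by induction on syntax.

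With the footprints shown equal, the pairwise disjointness under $\heap'$ is literally the disjointness statement already known under $\heap$, which completes the precise heap correspondence. Assembling the four conjuncts gives $\simstate{V}{\sstate}{\heap'}{\perms}{\env}$.
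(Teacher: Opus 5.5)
Your proof is correct, and on most clauses it is the paper's proof. The environment and path-condition conjuncts, the field-chunk clauses, and the predicate-chunk clauses are handled as the paper does, for both $\sheap(\sstate)$ and $\oheap(\sstate)$. For predicate chunks that means using that $\fpred(p)$ is a specification together with Lemma~\ref{lem:assert-heap-perms-unchanged}. (In its optimistic-heap case the paper cites Lemma~\ref{lem:pres-heap-change-partial} at that step, evidently meaning the same lemma.)

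The genuine difference is the disjointness clause of the precise heap:
\begin{itemize}
\item The paper simply cites \eqref{eq:pres-modify-heap-4}, which is disjointness of $\vfoot{V}{\heap}{\cdot}$. Correspondence under $\heap'$ needs disjointness of $\vfoot{V}{\heap'}{\cdot}$.
\item For predicate chunks these are $\efoot{\heap}{[\multiple{x \mapsto V(t)}]}{\fpred(p)}$ and $\efoot{\heap'}{[\multiple{x \mapsto V(t)}]}{\fpred(p)}$. They can differ a priori, because footprints follow pointers read from the heap.
\item So the paper tacitly assumes footprint invariance, and you prove it. Lemma~\ref{lem:sim-heap-contains} places each footprint inside $\perms$. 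Your auxiliary lemma, proved in the style of Lemma~\ref{lem:eval-heap-efoot-unchanged}, shows $\efoot{\heap'}{\env}{\gform} = \efoot{\heap}{\env}{\gform}$ whenever the two heaps agree on $\efoot{\heap}{\env}{\gform}$.
\end{itemize}
The paper's version is shorter. Yours is self-contained and closes a step the paper leaves implicit, at the cost of one more mutual induction.

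For that induction, use as your measure the derivation defining $\efoot{\heap}{\env}{\gform}$ itself. The hypothesis already presupposes that this derivation exists, since these footprints appear in its disjointness clause. The assertion derivation alone does not bound the recursion: \textsc{AssertValue} and \textsc{EvalUnfolding} never inspect the predicate that an unfolding expression's footprint recurses into.
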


\begin{proof}
    From the assumptions, clearly $\simenv{V}{\senv(\sstate)}{\env}$. It suffices to show that $\simheap{V}{\sheap(\sstate)}{\heap'}{\perms}$ and $\simheap{V}{\oheap(\sstate)}{\heap'}{\perms}$.

  Let $\sheap = \sheap(\sstate)$. From assumptions, clearly $\simheap{V}{\sheap}{\heap}{\perms}$. Therefore
  \begin{gather}
    \universal{\triple{f}{t}{t'} \in \sheap}{\heap(V(t), f) = V(t')} \label{eq:pres-modify-heap-1}\\
    \universal{\triple{f}{t}{t'} \in \sheap}{\pair{V(t)}{f} \in \perms} \label{eq:pres-modify-heap-2}\\
    \universal{\pair{p}{\multiple{t}} \in \sheap}{\assertion{\heap}{\perms}{[\multiple{x \mapsto V(t)}]}{\fpred(p)}} \label{eq:pres-modify-heap-3} \\
    \universal{h_1, h_2 \in \sheap^2}{h_1 \ne h_2 \implies \vfoot{V}{\heap}{h_1} \cap \vfoot{V}{\heap}{h_2} = \emptyset} \label{eq:pres-modify-heap-4}
  \end{gather}

  Let $\triple{f}{t}{t'}$ be an arbitrary field value in $\sheap$. Then by \eqref{eq:pres-modify-heap-1} $\heap(V(t), f) = V(t')$. Also, by \eqref{eq:pres-modify-heap-2} $\pair{V(t)}{f} \in \perms$, thus by our initial assumptions $\heap'(V(t), f) = \heap(V(t), f) = V(t')$. Therefore
  \begin{equation}\label{eq:pres-modify-heap-5}
    \universal{\triple{f}{t}{t'} \in \sheap}{\heap'(V(t), f) = V(t')}.
  \end{equation}

  let $\pair{p}{\multiple{t}}$ be an arbitrary predicate instance in $\sheap$. Then by \eqref{eq:pres-modify-heap-3} $\assertion{\heap}{\perms}{[\multiple{x \mapsto V(t)}]}{\fpred(p)}$. Since $\fpred(p)$ is a specification, and $\universal{\pair{\ell}{f} \in \perms}{\heap'(\ell, f) = \heap(\ell, f)}$, then \\
  $\assertion{\heap'}{\perms}{[\multiple{x \mapsto V'(t)}]}{\fpred(p)}$ by lemma \ref{lem:assert-heap-perms-unchanged}. Therefore
  \begin{equation}\label{eq:pres-modify-heap-6}
    \universal{\pair{p}{\multiple{t}} \in \sheap}{\assertion{\heap'}{\perms}{[\multiple{x \mapsto V'(t)}]}{\fpred(p)}}.
  \end{equation}

  Therefore by \eqref{eq:pres-modify-heap-5}, \eqref{eq:pres-modify-heap-2}, \eqref{eq:pres-modify-heap-6}, and \eqref{eq:pres-modify-heap-4}, $\simheap{V}{\sheap}{\heap'}{\perms}$.

\definecolor{shadecolor}{gray}{0.9}
\begin{snugshade}
    Let $\oheap = \oheap(\sstate)$ and let $\triple{f}{t}{t'} \in \oheap$ be an arbitrary field chunk. Then, $\heap(V(t), f) = V(t')$ and $\pair{V(t)}{f} \in \perms$ by definition of heap correspondence. By the assumptions, $\heap'(V(t), f) = V(t)$, so

    $$
        \forall \triple{f}{t}{t'} \in \oheap : \heap'(V(t), f) = V(t'),
    $$
    $$
        \forall \triple{f}{t}{t'} \in \oheap : \pair{V(t)}{f} \in \perms.
    $$

    Then, let $\pair{p}{\multiple{t}}$ be an arbitrary predicate chunk in $\oheap$. Then, by definition of heap correspondence, $\assertion{\heap}{\perms}{[\multiple{x \mapsto V'(t)}]}{\fpred(p)}$. Then, since $\fpred(p)$ is a specification and $H'$ coincides with $H$, $\assertion{\heap'}{\perms}{[\multiple{x \mapsto V'(t)}]}{\fpred(p)}$ by Lemma~\ref{lem:pres-heap-change-partial}.
\end{snugshade}
\end{proof}

\begin{lemma}\label{lem:pres-heap-change-partial}
  If $\pair{\heap}{\triple{\perms}{\env}{s} \cdot \stack}$ is a well-formed state, the partial state $\pair{\heap}{\stack}$ is validated by $\vstate$ and $V$, and $\heap' = \heap[\pair{\ell}{f} \mapsto v]$ for some $\ell$, $f$, and $v$ such that $\pair{\ell}{f} \in \perms$ or $\ell$ is a fresh value unused in $\stack$, then the partial state $\pair{\heap'}{\stack}$ is validated by $\vstate$ and $V$.
\end{lemma}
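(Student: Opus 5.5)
We prove Lemma~\ref{lem:pres-heap-change-partial} by induction on the derivation that the partial state $\pair{\heap}{\stack}$ is validated by $\vstate$ and $V$, that is, on the length of $\stack$, following the three parts of Definition~\ref{def:partial-valid}. The key observation is that the only semantic conditions in each part are correspondence facts of the form $\simstate{V'}{\sstate'}{\heap}{\perms_i}{\env_i}$ for the caller frames $\triple{\perms_i}{\env_i}{s_i}$ of $\stack$. These can be transferred to $\heap'$ by Lemma~\ref{lem:pres-modify-heap}, provided $\heap'$ agrees with $\heap$ on $\perms_i$. The remaining conditions are reachability, equalities of statements and post-conditions, and the symbolic judgements. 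None of them mention the dynamic heap, so they carry over unchanged with the same $\vstate'$, $V'$ and symbolic states.

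Part~\ref{def:partial-valid-nil} ($\stack = \nilsym$) is immediate. For Part~\ref{def:partial-valid-call} and Part~\ref{def:partial-valid-while}, write $\stack = \triple{\perms_1}{\env_1}{s_1} \cdot \stack^*$. Then:
\begin{itemize}
\item First we show $\pair{\ell}{f} \notin \perms_1$. If $\pair{\ell}{f} \in \perms$, this follows from well-formedness of $\pair{\heap}{\triple{\perms}{\env}{s} \cdot \stack}$, since the permission sets of distinct frames are disjoint. If instead $\ell$ is fresh and unused in $\stack$, then no permission of $\perms_1$ mentions $\ell$.
\item Either way, $\heap'(\ell', f') = \heap(\ell', f')$ for all $\pair{\ell'}{f'} \in \perms_1$, so Lemma~\ref{lem:pres-modify-heap} turns $\simstate{V'}{\sstate'[\ldots]}{\heap}{\perms_1}{\env_1}$ into the same statement for $\heap'$.
\item For the recursive requirement that $\pair{\heap}{\stack^*}$ is validated by $\vstate'$ and $V'$, we apply the induction hypothesis to the state $\pair{\heap}{\triple{\perms \cup \perms_1}{\env}{s} \cdot \stack^*}$. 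Merging the top two frames preserves disjointness, so this state is well-formed (or we simply strengthen the induction statement to require only disjointness of $\perms$ from the remaining frames). Moreover $\pair{\ell}{f}$ lies in the merged permission set, or $\ell$ is still unused in $\stack^*$, so the side condition of the lemma holds.
\end{itemize}

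The main obstacle is Lemma~\ref{lem:pres-modify-heap} itself in the presence of predicate chunks. The paper's proof of that lemma, in the shaded optimistic-heap part, appeals to the present lemma, which would make the argument circular. To avoid this, I will invoke Lemma~\ref{lem:assert-heap-perms-unchanged} directly, as is done for the precise heap. That lemma applies because every predicate body is a specification, so its assertion depends only on heap values within $\perms_1$, and these are unchanged. Exact footprints $\vfoot{V'}{\heap'}{h}$ of predicate chunks may be recomputed under $\heap'$, but by Lemma~\ref{lem:efoot-subset-spec} they lie within $\perms_1$. Since $\heap'$ agrees with $\heap$ there, and by analogues of Lemma~\ref{lem:eval-heap-efoot-unchanged} for footprints, they are unchanged. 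Hence the disjointness clause of \eqref{eq:sheap-correspondence} is preserved. I expect checking that footprint invariance, including the new unfolding and pure-function expression forms inside predicate bodies, to be the only delicate step.
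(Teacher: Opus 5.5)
Your proposal is correct and takes the same route as the proof this paper inherits unchanged from Zimmerman et al.: induction on the stack, generalized over the validating verification state and valuation. Well-formedness gives disjointness of frame permissions, so $H'$ agrees with $H$ on each caller frame's permissions, and each correspondence then transfers via Lemma~\ref{lem:pres-modify-heap}, with the recursion closed by your merged-frame (or strengthened) hypothesis. Your side remarks also hold: the citation of this lemma inside the shaded part of Lemma~\ref{lem:pres-modify-heap} is a slip for Lemma~\ref{lem:assert-heap-perms-unchanged}, and the disjointness clause there does need the easy fact that a predicate chunk's footprint is unchanged when $H'$ agrees with $H$ on that footprint.
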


\begin{lemma}\label{lem:rem-subset}
  If $\simstate{V}{\sstate}{\heap}{\perms}{\env}$ then $\vfoot{V}{\heap}{\frem(\sstate, \gform)} \subseteq \perms$.
\end{lemma}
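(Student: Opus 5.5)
The proof is a direct case split on the definition of $\frem$ (Definition~\ref{def:frem}), followed by a reduction to Lemma~\ref{lem:sim-heap-contains}. Assume $\simstate{V}{\sstate}{\heap}{\perms}{\env}$. If $\gform$ is completely precise, then $\frem(\sstate, \gform) = \emptyset$. Hence $\vfoot{V}{\heap}{\emptyset} = \emptyset \subseteq \perms$, and nothing more is needed.

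Otherwise, $\frem(\sstate, \gform)$ consists of two kinds of symbolic permission. The first kind is a field permission, one for each field chunk $\triple{f}{t}{t'} \in \sheap(\sstate) \cup \oheap(\sstate)$. The second kind is a predicate permission $\pair{p}{\multiple{t}}$, one for each predicate chunk in $\sheap(\sstate) \cup \oheap(\sstate)$. The footprint $\vfoot{V}{\heap}{\cdot}$ of a set is defined as the union of the footprints of its elements. So it suffices to show that each element's footprint lies in $\perms$.

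\begin{enumcases}
\case A field permission coming from a chunk $\triple{f}{t}{t'}$. Its footprint is $\set{\pair{V(t)}{f}}$. This is exactly $\vfoot{V}{\heap}{\triple{f}{t}{t'}}$, the footprint of the originating chunk. Alternatively, it follows directly from the second conjunct of \eqref{eq:sheap-correspondence} or of \eqref{eq:oheap-correspondence}, depending on which heap contains the chunk.

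\case A predicate permission $\pair{p}{\multiple{t}}$. Its footprint is $\efoot{\heap}{[\multiple{x \mapsto V(t)}]}{\fpred(p)}$. This is literally the footprint of the chunk $\pair{p}{\multiple{t}} \in \sheap(\sstate) \cup \oheap(\sstate)$.
\end{enumcases}

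In both cases, the footprint of the permission equals the footprint of a chunk $h \in \sheap(\sstate) \cup \oheap(\sstate)$. Lemma~\ref{lem:sim-heap-contains} gives $\vfoot{V}{\heap}{h} \subseteq \perms$ for every such $h$, which settles both cases at once. Taking the union over all elements yields $\vfoot{V}{\heap}{\frem(\sstate, \gform)} \subseteq \perms$.

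The only step needing care is the predicate case, though it causes no real difficulty. It is handled inside Lemma~\ref{lem:sim-heap-contains}, which relies on two facts. First, correspondence of the optimistic heap now also asserts the bodies of predicate chunks; this is the paper's modification. Second, $\fpred(p)$ is a specification, so Lemma~\ref{lem:efoot-subset-spec} applies. Beyond that, the argument is only the notational bookkeeping between chunks $\triple{f}{t}{t'}$ and permissions $\pair{t}{f}$, whose footprints coincide.
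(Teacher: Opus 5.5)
Your proof is correct and matches the intended argument. The paper states this lemma without proof, deferring to Zimmerman et al. That argument is exactly your case split on the definition of $\frem$: the empty set when $\gform$ is completely precise, and otherwise one permission per chunk of $\sheap(\sstate) \cup \oheap(\sstate)$, each discharged by Lemma~\ref{lem:sim-heap-contains}. That is the lemma the paper modified so that predicate chunks in the optimistic heap are covered, which is why this lemma needs no change.
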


\begin{lemma}\label{lem:rem-simstate}
  Let $\gform$ be some specification, $\xperms = \vfoot{V}{\heap}{\frem(\sstate, \gform)}$ and $\perms' = \foot{\heap}{\perms \setminus \xperms}{\env}{\gform}$.

  If $\simstate{V}{\sstate}{\heap}{\perms \setminus \efoot{\heap}{\env}{\gform}}{\env}$ and $\assertion{\heap}{\perms \setminus \xperms}{\env}{\gform}$, then $\simstate{V}{\sstate}{\heap}{\perms \setminus \perms'}{\env}$.
\end{lemma}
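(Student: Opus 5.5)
We prove the statement by splitting on whether $\gform$ is completely precise (Definition~\ref{def:precise-formula}), since both $\frem(\sstate, \gform)$ (Definition~\ref{def:frem}) and the maximal footprint $\foot{\heap}{\perms \setminus \xperms}{\env}{\gform}$ (Definition~\ref{def:footprint}) are defined by exactly this split. In each case we compute $\perms'$ explicitly and compare $\perms \setminus \perms'$ with the permission set $\perms \setminus \efoot{\heap}{\env}{\gform}$ that the hypothesis already says models $\sstate$.

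\emph{Completely precise case.} Here $\frem(\sstate,\gform) = \emptyset$, so $\xperms = \emptyset$. Consequently $\perms' = \foot{\heap}{\perms}{\env}{\gform} = \efoot{\heap}{\env}{\gform}$. Then $\perms \setminus \perms' = \perms \setminus \efoot{\heap}{\env}{\gform}$, and the conclusion is literally the hypothesis $\simstate{V}{\sstate}{\heap}{\perms \setminus \efoot{\heap}{\env}{\gform}}{\env}$.

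\emph{Not completely precise case.} Here $\perms' = \perms \setminus \xperms$, so $\perms \setminus \perms' = \perms \cap \xperms$. By Lemma~\ref{lem:rem-subset} applied to the hypothesis, $\xperms \subseteq \perms \setminus \efoot{\heap}{\env}{\gform} \subseteq \perms$, hence $\perms \setminus \perms' = \xperms$. It remains to show $\simstate{V}{\sstate}{\heap}{\xperms}{\env}$, i.e.\ that $\sstate$ is modeled using only the permissions contributed by its own chunks. We check the components of \eqref{eq:sstate-correspondence} one by one.
\begin{enumerate}
  \item The environment and path-condition conjuncts do not depend on the permission set, so they carry over unchanged from the hypothesis.
  \item For a field chunk $\triple{f}{t}{t'}$ in $\sheap(\sstate) \cup \oheap(\sstate)$, the pair $\pair{t}{f}$ lies in $\frem(\sstate,\gform)$. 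Therefore $\pair{V(t)}{f} \in \xperms$, and the heap-value equation is inherited from the hypothesis.
  \item For a predicate chunk $\pair{p}{\multiple{t}}$, the hypothesis gives $\assertion{\heap}{\perms\setminus\efoot{\heap}{\env}{\gform}}{[\multiple{x \mapsto V(t)}]}{\fpred(p)}$. Since $\fpred(p)$ is a specification, Lemma~\ref{lem:efoot-assert} restricts this assertion to its exact footprint, which is exactly $\vfoot{V}{\heap}{\pair{p}{\multiple{t}}} \subseteq \xperms$. Monotonicity (Lemma~\ref{lem:assert-monotonicity}) then yields the assertion under $\xperms$.
  \item The disjointness conjunct for $\sheap$ does not depend on the permission set, so it is inherited directly.
\end{enumerate}

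The main obstacle is the bookkeeping in the second case. In particular, we need the inclusion $\xperms \subseteq \perms$, which we take from Lemma~\ref{lem:rem-subset}, so that $\perms \cap \xperms = \xperms$. We also need that predicate chunks are now allowed in $\oheap$, so item 3 must be run for both heaps. If Lemma~\ref{lem:rem-subset} does not directly give the stronger inclusion into $\perms \setminus \efoot{\heap}{\env}{\gform}$, the weaker inclusion into $\perms$ suffices for the set identity above. The hypothesis $\assertion{\heap}{\perms \setminus \xperms}{\env}{\gform}$ is used only to ensure that $\perms'$ is well defined and a specification footprint; we expect it to play no further role.
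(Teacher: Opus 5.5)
Your proof is correct and takes the expected route; the paper reprints no proof of this lemma and defers to Zimmerman et al. You split on complete precision: either $\xperms=\emptyset$ and $\perms'=\efoot{\heap}{\env}{\gform}$, so the conclusion is the hypothesis, or $\perms\setminus\perms'=\perms\cap\xperms=\xperms$ by Lemma~\ref{lem:rem-subset}, and the definition of $\frem$ puts every chunk footprint from $\sheap(\sstate)\cup\oheap(\sstate)$ inside $\xperms$.

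Your explicit treatment of predicate chunks in $\oheap$ is exactly what this paper's enlarged optimistic heap requires. You are also right that the hypothesis $\assertion{\heap}{\perms \setminus \xperms}{\env}{\gform}$ is never used, not even for the well-definedness of $\perms'$.
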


\begin{lemma}\label{lem:dexec-preservation}
  Let $\pair{\heap}{\stack}$ be some dynamic state validated by $\vstate$ and valuation $V$ for some program $\prog$. If $\sguard{\vstate}{\sstate'}{\scheck}{\sperms}$ with $V' = V[\sguard{\vstate}{\sstate'}{\scheck}{\sperms} \mid \heap]$, $V'(\pc(\sstate')) = \ktrue$, $\rtassert{V'}{\heap}{\perms(\stack)}{\scheck}$, and $\dexec{\heap}{\stack}{\vfoot{V'}{\heap}{\sperms}}{\heap'}{\stack'}$
  then $\Gamma' = \pair{\heap'}{\stack'}$ is a valid state.

  Note: This is a simplification of theorem \ref{thm:dtrans-preservation}, restricted to the particular case of a normal program step (in contrast to a step from $\initsym$ or to $\finalsym$).
\end{lemma}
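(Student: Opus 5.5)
The plan is a case analysis on the derivation of the guard judgement $\sguard{\vstate}{\sstate'}{\scheck}{\sperms}$. Within each case, I invert the only dynamic rule(s) that can fire on the matching statement form, giving $\dexec{\heap}{\stack}{\vfoot{V'}{\heap}{\sperms}}{\heap'}{\stack'}$. I then exhibit a verification state $\vstate'$ reachable from $\vstate$ via the matching symbolic step (\textsc{SVerifyStep}, \textsc{SVerifyLoopBody}, \textsc{SVerifyLoop}, or the method-entry rule). Finally I check the three parts of Definition~\ref{def:state-valid} with a valuation $V''$ extending $V'$.

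\textbf{Easy cases.} For the cases where heap and stack frame shape are preserved (\textsc{SGuardSeq}, \textsc{SGuardAssign}, \textsc{SGuardIf}, \textsc{SGuardAssert}, \textsc{SGuardFold}, \textsc{SGuardUnfold}), I use Lemma~\ref{lem:preservation-heap-env-unchanged}, which reduces validity to showing that $\vstate'$ corresponds to $\Gamma'$.
\begin{itemize}
\item For assignment and conditionals, correspondence comes from Lemma~\ref{lem:seval-soundness} applied with $\rtassert{V'}{\heap}{\perms(\stack)}{\scheck}$ and $V'(\pc(\sstate'))=\ktrue$. This gives $\simstate{V'}{\sstate'}{\heap}{\perms}{\env}$ and $\eval{\heap}{\env}{e}{V'(t)}$, so extending $\senv$ with $x\mapsto t$ matches $\env[x\mapsto V'(t)]$. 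For \textsc{ExecIfA}/\textsc{ExecIfB}, I add $t$ or $\kneg t$ to the path condition, consistent with the dynamic branch.
\item For fold, unfold and assert, correspondence follows from Lemma~\ref{lem:cons-soundness} followed by Lemma~\ref{lem:produce-soundness}. For fold, I re-add the predicate chunk via the \textsc{SProducePredicate}-style argument: disjointness comes from Lemma~\ref{lem:sim-heap-disjoint} and the predicate assertion from the consume.
\end{itemize}

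\textbf{Heap-modifying cases.} These are field assignment and allocation. I combine Lemma~\ref{lem:cons-soundness}, which gives correspondence with $\perms\setminus\set{\pair{\ell}{f}}$, with Lemma~\ref{lem:pres-modify-heap}, noting that $\heap'$ agrees with $\heap$ on that reduced set. Then Lemma~\ref{lem:pres-add-heap} adds back the new field chunk with value $V'(t)$. Lemma~\ref{lem:pres-heap-change-partial} keeps the partial stack validated, since $\pair{\ell}{f}\in\perms$, or since $\ell$ is fresh in the case of allocation. For allocation, the fresh symbolic value is mapped to the fresh $\ell$ in $V''$.

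\textbf{Call and loop cases.} These are the substantive cases.
\begin{itemize}
\item For \textsc{SGuardCall} with \textsc{ExecCallEnter}, the new top frame is $\triple{\perms'}{\env'}{\fbody(m)}$. It is validated by the \textsc{SVerifyMethod} state obtained by producing $\fpre(m)$ in a fresh state, using Lemma~\ref{lem:foot-assert} and Lemma~\ref{lem:produce-soundness}. The caller frame with $\perms\setminus\perms'$ satisfies Definition~\ref{def:partial-valid}, part~2, by Lemma~\ref{lem:rem-simstate}, together with Lemma~\ref{lem:assert-after-rem} for the exclusion frame.
\item \textsc{ExecWhileEnter} is handled analogously, with \textsc{SVerifyLoopBody} and part~3.
\item For \textsc{ExecCallExit} and \textsc{ExecWhileFinish} (guard \textsc{SGuardFinish}), I take the saved $\vstate'$ from the partial validity of the caller frame. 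I apply \textsc{SVerifyStep} (via \textsc{SExecCall}) or \textsc{SVerifyLoop}, produce the postcondition or invariant on top of the saved $\sstate'$, and obtain correspondence with $\perms\cup\foot{\heap}{\perms'}{\env'}{\gform}$. The ingredients are Lemma~\ref{lem:foot-assert}, Lemma~\ref{lem:produce-soundness}, and disjointness of frames from well-formedness. The valuation maps $\kresult$'s fresh value to $\env'(\kresult)$.
\end{itemize}

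\textbf{Main obstacle.} I expect the exit/finish case to be hardest. The caller's saved symbolic state was validated under an older valuation $V'$ and against the caller's heap at call time, while the heap has since changed inside the callee. I would first show, by an auxiliary induction over callee steps, that the callee only modifies locations in its own permission set; then Lemma~\ref{lem:pres-modify-heap} and Lemma~\ref{lem:pres-heap-change-partial} transport the caller correspondence to the current heap. The new predicate chunks allowed in $\oheap$ need the predicate-assertion clause of \eqref{eq:oheap-correspondence} to survive this transport; there I would rely on Lemma~\ref{lem:assert-heap-perms-unchanged}, since predicate bodies are specifications.
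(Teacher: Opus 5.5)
The overall structure is the paper's: a case split on the step, a matching symbolic step yielding the new verification state, and a check of the three parts of Definition~\ref{def:state-valid}, using the same lemmas. Splitting on the guard instead of on the \textsc{Exec} rule is immaterial. In the field-assignment case, your explicit use of Lemma~\ref{lem:pres-modify-heap} before Lemma~\ref{lem:pres-add-heap} is more careful than the paper, which applies Lemma~\ref{lem:pres-add-heap} across the heap change directly.

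\textbf{The gap: rebinding a variable of the top frame.} The concrete gap is Part~\ref{def:state-valid-partial} whenever the top frame rebinds a variable. Lemma~\ref{lem:preservation-heap-env-unchanged} does not apply to \textsc{ExecAssign}. It assumes the dynamic environment $\env$ is unchanged, and that is exactly how it re-establishes the parameter clause \eqref{eq:partial-valid-call-params} for the caller frame, via $V'(\senv(\vstate')(x_i)) = \env(x_i) = V(\senv(\vstate)(x_i))$. After $x \kassign e$ the environment is $\env[x \mapsto v]$ and the symbolic store maps $x$ to $t$. If $x$ were a parameter $x_i$ of the executing method, $V'(t)$ need not equal the saved $V_0(t_i)$, and the caller frame would no longer be validated.

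The paper does this part by hand, going through the three cases of Definition~\ref{def:partial-valid}. It uses the clause of Definition~\ref{def:well-formed-prog} that no parameter appears on the left of an assignment. The same argument is needed in three other places:
\begin{itemize}
\item in \textsc{ExecAlloc}, because Lemma~\ref{lem:pres-heap-change-partial} only changes the heap and does not move partial validity from $\vstate$ to the new verification state;
\item in \textsc{ExecCallExit}, for $y$;
\item in \textsc{ExecWhileFinish}, for $\fmodified(s)$.
\end{itemize}
Your proposal never uses this condition.

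\textbf{A missing case.} You also omit \textsc{ExecWhileSkip} under \textsc{SGuardWhile}. It goes through \textsc{SExecWhileSkip}, then Lemma~\ref{lem:eval-correspondence} to get $V'(t) = \kfalse$, then Lemma~\ref{lem:preservation-heap-env-unchanged}. That lemma does apply here, because only the symbolic store changes.

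\textbf{The exit/finish obstacle does not arise.} Definition~\ref{def:partial-valid} requires the caller's saved state to correspond to the \emph{current} heap of the partial state; see \eqref{eq:partial-valid-call-sim} and \eqref{eq:partial-valid-while-sim}. Partial validity is part of the invariant being preserved, and it is re-established at every heap-modifying step by Lemma~\ref{lem:pres-heap-change-partial}, which you already do in \textsc{ExecAssignField} and \textsc{ExecAlloc}. So at \textsc{ExecCallExit} the paper simply reads $\simstate{V'}{\sstate'[\senv = \senv(\sstate_0)]}{\heap}{\perms'}{\env'}$ off the hypothesis.

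An auxiliary induction over the callee's steps would not even be available. This is a single-step lemma whose only hypothesis is that the current state is validated; it has no access to the execution history. The older saved valuation is also harmless: $\Gamma'$ may be validated by a valuation extending the caller's saved one (the paper's $\hat{V}'$) rather than the current $V$.
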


\begin{proof}
  We proceed by cases on $\dexec{\heap}{\stack}{\vfoot{V'}{\heap}{\sperms}}{\heap'}{\stack'}$.

  \begin{enumcases}
    \case \refrule{ExecSeq}:
      We have
      $$\dexec{\heap}{\triple{\perms}{\env}{\sseq{\kskip}{s}} \cdot \stack^*}{\vfoot{V'}{\heap}{\sperms}}{\heap}{\triple{\perms}{\env}{s} \cdot \stack^*}.$$

      Since the initial state is validated by $\vstate$ and $V$, $\vstate = \triple{\sstate}{\sseq{\kskip}{s}}{\gform}$ for some $\sstate, \gform$, where $\simstate{V}{\sstate}{\heap}{\perms}{\env}$.
      
      Now $\sexec{\sstate(\vstate)}{\sseq{\kskip}{s}}{s}{\sstate}$ by \refrule{SExecSeq}. Therefore $\strans{\prog}{\vstate}{\vstate'}$ by \refrule{SVerifyStep} where $\vstate' = \triple{\sstate}{s}{\gform}$.

      Now, by lemma \ref{lem:preservation-heap-env-unchanged}, it suffices to show that $\vstate'$ corresponds to $\pair{\heap}{\triple{\perms}{\env}{s} \cdot \stack^*}$ (with valuation $V$).

      Since $\sstate(\vstate') = \sstate$, we have $\simstate{V}{\sstate(\vstate')}{\heap}{\perms}{\env}$, and $s(\vstate') = s$ by definition. Therefore $\vstate'$ corresponds to $\pair{\heap}{\triple{\perms}{\env}{s} \cdot \stack^*}$, which completes the proof.

    \case \refrule{ExecAssign}:
      We have
      \begin{gather*}
        \dexec{\heap}{\triple{\perms}{\env}{\sseq{x = e}{s}} \cdot \stack^*}{\vfoot{V'}{\heap}{\sperms}}{\heap}{\triple{\perms}{\env[x \mapsto v]}{s} \cdot \stack^*} \\
        \text{where}\quad \eval{\heap}{\env}{e}{v}
      \end{gather*}

      Since the initial state is validated by $\vstate$, $\vstate = \triple{\sstate}{\sseq{x = e}{s}}{\gform}$ for some $\sstate, \gform$ where $\simstate{V}{\sstate}{\heap}{\perms}{\env}$.

      The only guard that applies is \refrule{SGuardAssign}, so we have, for some $\sstate', t$:
      \begin{gather*}
        \sguard{\triple{\sstate}{\sseq{x = e}{s}}{\gform}}{\sstate}{\scheck}{\sperms} \\
        \text{where }
        \seval{\sstate}{e}{t}{\sstate'}{\scheck} \quad\text{and (by assumptions) }\rtassert{V'}{\heap}{\env}{\scheck}.
      \end{gather*}
      where $V'$ is the corresponding valuation, extending $V$.

      Let $\sstate'' = \sstate[\senv = \senv(\sstate)[x \mapsto t]]$, then $\sexec{\sstate}{\sseq{x = e}{s}}{s}{\sstate''}$ by \refrule{SExecAssign}.
      
      Let $\vstate' = \triple{\sstate''}{s}{\gform}$. Then $\strans{\prog}{\vstate}{\vstate'}$ by \refrule{SVerifyStep}, thus $\vstate'$ is reachable from $\prog$.

      We want to show that $\pair{\heap}{\triple{\perms}{\env[x \mapsto v]}{s} \cdot \stack^*}$ is validated by $\vstate'$ with valuation $V'$.

      \textit{Part \ref{def:state-valid-reachable}:} As shown above, $\vstate'$ is reachable from $\prog$ with valuation $V'$.

      \textit{Part \ref{def:state-valid-correspond}:}
      By lemma \ref{lem:seval-soundness} $\eval{\heap}{\env}{e}{V'(t)}$, therefore $V'(t) = v$. Also, $\simenv{V}{\senv(\sstate)}{\env}$, thus $\simenv{V'}{\senv(\sstate)[x \mapsto V'(t)]}{\env[x \mapsto v]}$. Rewriting using definitions, we get $\simenv{V'}{\senv(\sstate'')}{\env[x \mapsto v]}$.

      Also by lemma \ref{lem:seval-soundness}, $\simstate{V'}{\sstate'}{\heap}{\perms}{\env}$. Thus, since $\senv$ is the only component changed from $\sstate'$ to $\sstate''$ and $\simenv{V'}{\senv(\sstate'')}{\env[x \mapsto v]}$, $\simstate{V'}{\sstate''}{\heap}{\perms}{\env[x \mapsto v]}$.

      Also, by definition $\vstate' = \triple{\sstate''}{s}{\gform}$.

      Therefore $\vstate'$ corresponds to $\pair{\heap}{\triple{\perms}{\env[x \mapsto v]}{s} \cdot \stack^*}$.

      \textit{Part \ref{def:state-valid-partial}:}
      Since the initial state is validated by $\vstate$ and $V$, the partial state $\pair{\heap}{\stack^*}$ is validated by $\vstate$ and valuation $V$. Therefore one of \ref{def:partial-valid-nil}, \ref{def:partial-valid-call}, \ref{def:partial-valid-while} applies. We want to show that the partial state $\pair{\heap}{\stack^*}$ is validated by $\vstate'$ and valuation $V'$.

      \begin{itemize}
        \item \textit{Case \ref{def:partial-valid-nil}:} Then $\stack^* = \nilsym$ and trivially $\pair{\heap}{\nilsym}$ is validated by $\vstate'$ and valuation $V'$.
        
        \item \textit{Case \ref{def:partial-valid-call}:} Then $\stack^* = \triple{\env_0}{\perms_0}{\sseq{y \kassign m(e_1, \cdots, e_k)}{s_0}} \cdot \stack_0$ for some $\env_0$, $\perms_0$, $y$, $m$, $k$, $e_1, \cdots, e_k$, $s_0$, $\stack_0$. Also, there is some $\vstate_0$, $V_0$, $x_1, \cdots, x_k$, $t_1, \cdots, t_k$ and $\sstate_0, \cdots, \sstate_k, \sstate'$ such that
        \begin{gather*}
          \text{The partial state $\pair{\heap}{\stack_0}$ is validated by $\vstate_0$ and $V_0$}, \\
          \vstate_0 \text{ is reachable from $\prog$ with valuation $V_0$}, \quad s(\vstate_0) = s(\stack^*) \\
          x_1, \cdots, x_k = \fparams(m), \\
          \sstate_0 = \sstate(\vstate_0), \quad \seval{\sstate_0}{e_1}{t_1}{\sstate_1}{\_}, \quad\cdots,\quad \seval{\sstate_{k-1}}{e_k}{t_k}{\sstate_k}{\_}, \\
          \scons{\sstate_k}{\fpre(m)}{\sstate'}{\_}, \quad \simstate{V_0}{\sstate'[\senv = \senv(\sstate_0)]}{\heap}{\perms_0}{\env_0}, \quad\text{and} \\
          \universal{1 \le i \le k}{V(\senv(\vstate)(x_i)) = V_0(t_i)}, \\
          \gform(\vstate) = \fpost(m).
        \end{gather*}
  
        We want to show that the partial state $\pair{\heap}{\triple{\env_0}{\perms_0}{\sseq{y \kassign m(e_1, \cdots, e_k)}{s}} \cdot \stack_0}$ is validated by $\vstate'$ and valuation $V'$. Immediately from above we can conclude that
        \begin{gather*}
          \text{The partial state $\pair{\heap}{\stack_0}$ is validated by $\vstate_0$ and $V_0$}, \\
          \vstate_0 \text{ is reachable from $\prog$ with valuation $V_0$}, \quad s(\vstate_0) = s(\stack^*) \\
          x_1, \cdots, x_k = \fparams(m), \\
          \sstate_0 = \sstate(\vstate_0), \quad \seval{\sstate_0}{e_1}{t_1}{\sstate_1}{\_}, \quad\cdots,\quad \seval{\sstate_{k-1}}{e_k}{t_k}{\sstate_k}{\_}, \quad\text{and} \\
          \scons{\sstate_k}{\fpre(m)}{\sstate'}{\_}, \quad \simstate{V_0}{\sstate'[\senv = \senv(\sstate_0)]}{\heap}{\perms_0}{\env_0}.
        \end{gather*}
        Also, the frame $\triple{\perms}{\env}{\sseq{x = e}{s}}$ must be executing the body of $m$, since it is in the stack immediately above the frame that contains $y \kassign m(e_1, \cdots, e_k)$. Therefore, since $x_1, \cdots, x_k$ are all parameters of $m$, $y$ must be distinct from all of $x_1, \cdots, x_k$, since we do not allow assignment to parameters in a well-formed program. Thus
        \begin{align*}
          \forall 1 \le i \le k : V'(\senv(\vstate')(x_i))
            &= V'((\senv(\sstate)[x \mapsto t])(x_i)) = V'(\senv(\sstate)(x_i)) \\
            &= V'(\senv(\vstate)(x_i)) = V(\senv(\vstate)(x_i)) \\
            &= V_0(t_i).
        \end{align*}
  
        Finally, $\gform(\vstate') = \gform(\vstate)$ by definition, thus
        $$\gform(\vstate') = \gform(\vstate) = \fpost(m).$$
  
        Therefore the partial state $\pair{\heap}{\stack^*}$ is validated by $\vstate'$ and $V'$ in this case.
        
        \item \textit{Case \ref{def:partial-valid-while}:}
        Then $\stack^* = \triple{\env_0}{\perms_0}{\sseq{\swhile{e_0}{\gform_0}{s_0}}{s_0'}} \cdot \stack_0$ for some $\env_0$, $\perms_0$, $e$, $\gform_0$, $s_0$, $s_0'$, $\stack_0$, and there exists some $\vstate_0$, $V_0$, and $\sstate_0'$ such that:
        \begin{gather*}
          \text{The partial state $\pair{\heap}{\stack_0}$ is validated by $\vstate_0$ and $V_0$} \\
          \vstate_0 \text{ is reachable from $\prog$ with valuation $V_0$} \quad s(\vstate_0) = s(\stack^*) \\
          \scons{\sstate_0}{\gform_0}{\sstate_0'}{\_}, \quad
          \simstate{V_0}{\sstate_0'}{\heap}{\perms_0}{\env_0} \quad\text{and}\\
          \gform(\vstate) = \gform_0.
        \end{gather*}
  
        Now, by definition of $\vstate''$, $\gform(\vstate') = \gform(\vstate) = \gform_0$. Therefore, using the other assumptions given above, the partial state $\pair{\heap}{\stack^*}$ is validated by $\vstate'$ and $V'$ in this case.
      \end{itemize}

      Therefore definition part \ref{def:state-valid-partial} is satisfied.

      Therefore all parts of definition \ref{def:state-valid} are satisfied. Thus $\pair{\heap}{\triple{\perms}{\env[x \mapsto v]}{s} \cdot \stack^*}$ is validated by $\vstate'$ with valuation $V'$.

    \case \refrule{ExecAssignField}:

      We have
      \begin{align}
        &\dexec{\heap}{\triple{\perms}{\env}{\sseq{x.f = e}{s}} \cdot \stack^*}{\vfoot{V'}{\heap}{\sperms}}{\heap'}{\triple{\perms}{\env}{s} \cdot \stack^*} \\
        &\text{where}\quad \eval{\heap}{\env}{x}{\ell}, \quad \eval{\heap}{\env}{e}{v}, \quad \assertion{\heap}{\perms}{\env}{\kacc(x.f)}, \label{eq:dexec-pres-assign-field-eval-assert} \\
        &\hspace{3.5em} \frm{\heap}{\perms}{\env}{e}, \quad\text{and}\quad \heap' = \heap[\pair{\ell}{f} \mapsto v].
      \end{align}
      Since the initial state is validated by $\vstate$, $\vstate = \triple{\sstate}{\sseq{x.f = e}{s}}{\gform}$ for some $\sstate, \gform$ where $\simstate{V}{\sstate}{\heap}{\perms}{\env}$.

      The only guard rule that applies is \refrule{SGuardAssign}, so we have
      \begin{align}
        &\sguard{\triple{\sstate}{\sseq{x.f = e}{s}}{\gform}}{\sstate}{\scheck' \cup \scheck''}{\emptyset} \\
        &\text{where}\quad \seval{\sstate}{e}{t}{\sstate'}{\scheck'}, \quad \scons{\sstate'}{\kacc(x.f)}{\sstate''}{\scheck''} \\
        &\text{and (by assumptions)}\quad \rtassert{V'}{\heap}{\perms}{\scheck' \cup \scheck''}, \quad V'(\pc(\sstate'')) = \ktrue
      \end{align}

      Furthermore, since $\pc(\sstate'') \implies \sstate'$ by lemma \ref{lem:cons-subpath}, and by lemma \ref{lem:scheck-monotonicity},
      $$V'(\pc(\sstate')) = \ktrue, \quad \rtassert{V'}{\heap}{\perms}{\scheck'}, \quad\text{and}\quad \rtassert{V'}{\heap}{\perms}{\scheck''}.$$

      Now, by \refrule{SExecAssignField},
      \begin{align*}
        \sexec{\sstate}{\sseq{x.f = e}{s}}{s}{\sstate'''}
        \quad\text{where}\quad
        &\sstate''' = \sstate''[\sheap = \sheap'], \quad\text{and}\\
        &\sheap' = \sheap(\sstate''); \triple{\senv(\sstate'')(x)}{f}{t}.
      \end{align*}

      Let $\vstate' = \triple{\sstate'''}{s}{\gform}$.  We want to show that $\pair{\heap'}{\triple{\perms}{\env}{s} \cdot \stack^*}$ is validated by $\vstate'$ and $V'$.

      \textit{Part \ref{def:state-valid-reachable}:}
      By \refrule{SVerifyStep}, $\strans{\prog}{\vstate}{\vstate'}$. Therefore $\vstate'$ is reachable from program $\prog$ with valuation $V'$.

      \textit{Part \ref{def:state-valid-correspond}:}
      We want to show that $\vstate'$ corresponds to $\pair{\heap'}{\triple{\perms}{\env}{s} \cdot \stack^*}$. Since $\vstate' = \triple{\sstate'''}{s}{\_}$ by construction, it suffices to show that $\simstate{V'}{\sstate'''}{\heap'}{\perms}{\env}$.

      By lemma \ref{lem:seval-soundness}, $\simstate{V'}{\sstate'}{\heap}{\perms}{\env}$. By lemma \ref{lem:cons-soundness}, $\simstate{V'}{\sstate''}{\heap}{\perms \setminus \efoot{\heap}{\env}{\kacc(x.f)}}{\env}$.

      Since $\simenv{V}{\senv(\sstate)}{\env}$, $\env(x) = V(\senv(\sstate)(x)) = V'(\senv(\sstate)(x))$. Also, $\eval{\heap}{\env}{x}{\env(x)}$ by \refrule{EvalVar} and $\eval{\heap}{\env}{x}{\ell}$ by \eqref{eq:dexec-pres-assign-field-eval-assert}, thus
      $$V'(\senv(\sstate)(x)) = \env(x) = \ell.$$
      Also, $\senv(\sstate'') = \senv(\sstate)$ by lemmas \ref{lem:eval-unchanged} and \ref{lem:cons-unchanged}. Thus $\eval{\heap}{\env}{x}{V'(\senv(\sstate'')(x))}$. Therefore \\
      $\efoot{\heap}{\env}{\kacc(x.f)} = \set{\pair{\ell}{f}} = \set{ \pair{V'(\senv(\sstate'')(x))}{f} } = \vfoot{V'}{\heap}{\triple{\senv(\sstate'')(x)}{f}{t'}}$. Now,
      $$\simstate{V'}{\sstate''}{\heap}{\perms \setminus \vfoot{V'}{\heap}{\triple{\senv(\sstate'')(x)}{f}{t'}}}{\env}.$$

      Also, $v = V'(t)$ by lemma \ref{lem:seval-soundness}, thus $\heap'(V'(\senv(\sstate'')(x)), f) = \heap'(\ell, f) = v = V'(t)$. Now by lemma \ref{lem:pres-add-heap},
      $$\simstate{V'}{\sstate'''}{\heap'}{\perms}{\env}$$
      which is sufficient to show that $\vstate'$ corresponds to $\pair{\heap'}{\triple{\perms}{\env}{s} \cdot \stack^*}$.

      \textit{Part \ref{def:state-valid-partial}:}

      By \eqref{eq:dexec-pres-assign-field-eval-assert} $\assertion{\heap}{\perms}{\env}{\kacc(x.f)}$. This assertion must be given by \refrule{AssertAcc}, therefore $\pair{\ell}{f} \in \perms$.

      Now we show by induction that all partial states are validated, in other words we want to show that the partial state $\pair{\heap'}{\stack^*}$ is validated by $\vstate'$. By assumptions, $\pair{\heap}{\stack^*}$ is validated by $\vstate$ with valuation $V$. Also, by lemma \ref{lem:pres-heap-change-partial}, $\pair{\heap'}{\stack^*}$ is validated by $\vstate$ with $V$.
      
      Thus, one of the following cases apply:

      \begin{itemize}
        \item \textit{Case \ref{def:partial-valid-nil}:} Then $\stack^* = \nilsym$, and trivially $\pair{\heap'}{\nilsym}$ is validated by $\vstate'$ and $V$.
        
        \item \textit{Case \ref{def:partial-valid-call}:}
        Then $\stack^* = \triple{\env}{\perms}{\sseq{y \kassign m(e_1, \cdots, e_k)}{s}} \cdot \stack_0$ for some $\env$, $\perms$, $y$, $m$, $e_1, \cdots, e_k$, and there exists some $\vstate_0$, $V_0$, $x_1, \cdots, x_k$, $t_1, \cdots, t_k$, $\sstate_0, \cdots, \sstate_k$, and $\sstate_0'$ such that:
        \begin{gather*}
          \text{The partial state $\pair{\heap'}{\stack_0}$ is validated by $\vstate_0$ and $V_0$},\\
          \vstate_0 \text{ is reachable from $\prog$ with valuation $V_0$}, \quad s(\vstate_0) = s(\stack^*)\\
          x_1, \cdots, x_k = \fparams(m), \\
          \sstate_0 = \sstate(\vstate_0), \quad \seval{\sstate_0}{e_1}{t_1}{\sstate_1}{\_}, \quad\cdots,\quad \seval{\sstate_{k-1}}{e_k}{t_k}{\sstate_k}{\_}, \\
          \universal{1 \le i \le k}{V(\senv(\vstate)(x_i)) = V_0(t_i)}, \\
          \scons{\sstate_k[\senv = [x_1 \mapsto t_1, \cdots, x_k \mapsto t_k]]}{\fpre(m)}{\sstate_0'}{\_}, \\
          \simstate{V_0}{\sstate_0'[\senv = \senv(\sstate_0)]}{\heap'}{\perms}{\env}, \quad\text{and} \\
          \gform(\vstate) = \fpost(m).
        \end{gather*}

        We want to show that the partial state $\pair{\heap'}{\triple{\env}{\perms}{\sseq{y \kassign m(e_1, \cdots, e_k)}{s}} \cdot \stack_0}$ is validated by $\vstate'$. From above,
        \begin{gather*}
          \text{The partial state $\pair{\heap'}{\stack_0}$ is validated by $\vstate_0$ and $V_0$}, \\
          \vstate_0 \text{ is reachable from $\prog$ with valuation $V_0$}, \quad s(\vstate_0) = s(\stack^*)\\
          x_1, \cdots, x_k = \fparams(m), \\
          \sstate_0 = \sstate(\vstate_0), \quad \seval{\sstate_0}{e_1}{t_1}{\sstate_1}{\_}, \quad\cdots,\quad \seval{\sstate_{k-1}}{e_k}{t_k}{\sstate_k}{\_}, \\
          \scons{\sstate_k[\senv = [x_1 \mapsto t_1, \cdots, x_k \mapsto t_k]]}{\fpre(m)}{\sstate_0'}{\_}, \quad\text{and}\\
          \simstate{V_0}{\sstate_0'[\senv = \senv(\sstate_0)]}{\heap'}{\perms}{\env}.
        \end{gather*}

        $\senv(\vstate') = \senv(\sstate''') = \senv(\sstate'')$ by definition. Also, $\senv(\sstate'') = \senv(\sstate') = \senv(\sstate')$ by lemmas \ref{lem:eval-unchanged} and \ref{lem:cons-unchanged}. Also, $V'$ extends $V$. Thus
        $$\universal{1 \le i \le k}{V'(\senv(\vstate')(x_i)) = V(\senv(\vstate)(x_i)) = V_0(t_i)}.$$
        Also, by definition
        $$\gform(\vstate') = \gform(\vstate) = \fpost(m).$$
        Therefore the partial state $\pair{\heap'}{\stack^*}$ is validated by $\vstate'$.

        \item \textit{Case \ref{def:partial-valid-while}:}
        Then $\stack^* = \triple{\env}{\perms}{\sseq{\swhile{e}{\gform}{s}}{s'}} \cdot \stack_0$ for some $\env$, $\perms$, $e$, $\gform$, $s$, $s'$, and $\stack_0$, and there exists some $\vstate_0$, $V_0$, and $\sstate_0'$ such that:
        \begin{gather*}
          \text{The partial state $\pair{\heap'}{\stack_0}$ is validated by $\vstate$ and $V$} \\
          \vstate_0 \text{ is reachable from $\prog$ with valuation $V_0$} \quad s(\vstate_0) = s(\stack^*) \\
          \scons{\sstate(\vstate_0)}{\gform}{\sstate_0'}{\_}, \quad\text{and}\quad
          \simstate{V_0}{\sstate_0'}{\heap'}{\perms}{\env} \\
          \gform(\vstate) = \gform
        \end{gather*}

        Now, $\gform(\vstate') = \gform(\vstate)$ by definition, thus $\gform(\vstate') = \gform$. Therefore the partial state $\pair{\heap'}{\stack^*}$ is validated by $\vstate'$.
      \end{itemize}
      Now we have shown that $\vstate'$ corresponds to the resulting state, and therefore $\gamma'$ is validated by $\vstate'$.

    \case \refrule{ExecAlloc}:
      We have
      \begin{align*}
        &\dexec{\heap}{\triple{\perms}{\env}{\sseq{x = \kalloc(S)}{s}} \cdot \stack^*}{\vfoot{V'}{\heap}{\sperms}}{\heap'}{\triple{\perms'}{\env[x \mapsto \ell]}{s} \cdot \stack^*} \\
        &\text{where} \quad \multiple{T~f} = \fstruct(S), \quad \ell = \ffresh, \quad \heap' = \heap[\multiple{\pair{\ell}{f} \mapsto \fdefault(T)}], \\
        &\hspace{3.5em} \perms' = \perms \cup \set{\multiple{\pair{\ell}{f}}}
      \end{align*}

      Since the initial state is validated by $\vstate$, $\vstate = \triple{\sstate}{\sseq{x = \kalloc}{s}}{\gform}$ for some $\sstate, \gform$ where $\simstate{V}{\sstate}{\heap}{\perms}{\env}$.

      By \refrule{SExecAlloc} \begin{align*}
        \sexec{\sstate}{\sseq{x = \kalloc(S)}{s}}{s}{\sstate'}, ~\text{where}~
        &\sheap(\sstate') = \sheap(\sstate); \multiple{\triple{t}{f}{\fdefault(T)}}, \\
        &\senv(\sstate') = \senv(\sstate)[x \mapsto t], \quad\text{and} \\
        &t = \ffresh.
      \end{align*}
      Let $\vstate' = \triple{\sstate'}{s}{\gform}$, and $V' = V[t \mapsto \ell]$. We want to show that $\pair{\heap'}{\triple{\perms'}{\env[x \mapsto \ell]}{s} \cdot \stack^*}$ is validated by $\vstate'$ with $V$.

      \textit{Part \ref{def:state-valid-reachable}:}
      By \refrule{SVerifyStep}, $\prog \vdash \vstate \to \vstate'$, and all fresh values added to $\vstate'$ are defined in $V'$. Therefore, $\vstate'$ is reachable from $\prog$ with valuation $V'$.

      \textit{Part \ref{def:state-valid-correspond}:}
      We want to show that $\vstate'$ corresponds to ${\heap'}{\triple{\perms'}{\env[x \mapsto \ell]}{s} \cdot \stack^*}$. By definition $s(\vstate') = s$ and $\sstate(\vstate') = \sstate'$, thus it suffices to show $\simstate{V'}{\sstate'}{\heap'}{\perms'}{\env[x \mapsto \ell]}$.

      By assumptions, $\simenv{V}{\senv(\sstate)}{\env}$. Also,
      $$V'(\senv(\sstate')(x)) = V'((\senv(\sstate)[x \mapsto t])(x)) = V'(t) = \ell = (\env[x \mapsto \ell])(x).$$
      Therefore
      $$\simenv{V'}{\senv(\sstate)[x \mapsto t]}{\env[x \mapsto \ell]}.$$

      By assumptions, $\simstate{V}{\sstate}{\heap}{\perms}{\env}$. Since $V \subseteq V'$, $\simstate{V'}{\sstate}{\heap}{\perms}{\env}$.
      Since $\ell$ is a fresh value, $\ell \notin \perms$. Thus
      $$\universal{\pair{\ell}{f} \in \perms}{\heap'(\ell, f) = \heap(\ell, f)}.$$
      Thus by lemma \ref{lem:pres-add-heap}, $\simstate{V'}{\sstate}{\heap'}{\perms}{\env}$. Also, since $\perms \subseteq \perms'$, by lemma \ref{lem:simstate-monotonicity} $\simstate{V'}{\sstate}{\heap'}{\perms'}{\env}$.

      Let $\triple{f'}{t'}{t''} \in \sheap(\sstate')$. If $\triple{f'}{t'}{t''} \in \sheap(\sstate)$, then since $\simstate{V'}{\sstate}{\heap'}{\perms'}{\env}$, $\heap'(V'(t'), f') = V'(t'')$ and $\pair{V'(t')}{f'} \in \perms'$.

      Otherwise, $t' = t$ and $f' = f$ for some $T~f \in \fstruct(S)$, and thus $t'' = \fdefault(T)$. Now $\heap(V'(t'), f') = \heap'(V'(t), f) = \heap'(\ell, f) = \fdefault(T) = t''.$
      
      Therefore $$\universal{\triple{f'}{t'}{t''} \in \sheap(\sstate')}{\heap'(V'(t'), f') = V'(t'') ~\text{and}~ \pair{V'(t')}{f'} \in \perms'}.$$

      Also, since $\simstate{V'}{\sstate}{\heap'}{\perms'}{\env}$ and $\universal{\pair{p}{\multiple{t}} \in \sheap(\sstate')}{\pair{p}{\multiple{t}} \in \sheap(\sstate)}$,
      $$\universal{\pair{p}{\multiple{t}} \in \sheap(\sstate')}{\assertion{\heap'}{\perms'}{[\multiple{x \mapsto V(t)}]}{\fpred(p)}}$$
      where $\multiple{x} = \fpredparams(p)$.

      Let $h_1, h_2 \in \sheap(\sstate')$ and suppose $h_1 \ne h_2$. If $h_1 \in \sheap(\sstate)$ and $h_2 \in \sheap(\sstate)$, then in this case $\vfoot{V'}{\heap'}{h_1} \cap \vfoot{V'}{\heap'}{h_2} = \emptyset$ since $\simstate{V'}{\sstate}{\heap'}{\perms'}{\env}$.

      Otherwise, WLOG $h_1 = \triple{t}{f_1}{\fdefault(T_1)}$ for some $T_1~f_1 \in \fstruct(S)$. Thus 
      \begin{align*}
        \vfoot{V'}{\heap'}{h_1} &= \vfoot{V'}{\heap'}{\triple{t}{f_1}{\fdefault(T_1)}} &h_1 = \triple{t}{f_1}{\fdefault(T_1)} \\
          &= \set{ \pair{V'(t)}{f_1} } &\text{defn.} \\
          &= \set{ \pair{\ell}{f_1} } &\text{defn. $V'$}
      \end{align*}
      If $h_2 \in \sheap(\sstate)$, then
      \begin{align*}
        \vfoot{V'}{\heap'}{h_2} &= \vfoot{V}{\heap}{h_2} & V \subseteq V', \heap \subseteq \heap' \\
          &\subseteq \perms &\text{Lemma \ref{lem:sim-heap-contains}}
      \end{align*}
      Now $\pair{\ell}{f_1} \notin \perms$ since $\ell$ is a fresh value. Therefore $\vfoot{V'}{\heap'}{h_1} \cap \vfoot{V'}{\heap'}{h_2} = \emptyset$ in this case.
      
      Otherwise, $h_2 = \triple{t}{f_2}{\fdefault(T_2)}$ for some $T_2~f_2 \in \fstruct(S)$. Then $f_1 \ne f_2$ since $h_1 \ne h_2$. Therefore
      \begin{align*}
        \vfoot{V'}{\heap'}{h_1} \cap \vfoot{V'}{\heap'}{h_2} &= \vfoot{V'}{\heap'}{\triple{t}{f_1}{\fdefault(T_1)}} \cap \vfoot{V'}{\heap'}{\triple{t}{f_2}{\fdefault(T_2)}} \\
          &= \set{\pair{V'(t)}{f_1}} \cap \set{\pair{V'(t)}{f_2}} \\
          &= \emptyset
      \end{align*}

      Therefore,
      $$\universal{h_1, h_2 \in \sheap(\sstate')}{\vfoot{V'}{\heap'}{h_1} \cap \vfoot{V'}{\heap'}{h_2} = \emptyset}.$$

      Now, since we have shown all requirements, we can conclude that $$\simheap{V'}{\sheap(\sstate')}{\heap'}{\perms'}.$$

      Finally, since $\sheap$ is the only component that differs between $\sstate'$ and $\sstate$, $\simstate{V'}{\sstate}{\heap'}{\perms'}{\env}$, $\simenv{V'}{\senv(\sstate)[x \mapsto t]}{\env[x \mapsto \ell]}$, and $\simheap{V'}{\sheap(\sstate')}{\heap'}{\perms'}$,
      $$\simstate{V'}{\sstate'[\senv = \senv(\sstate)[x \mapsto t]]}{\heap'}{\perms'}{\env[x \mapsto \ell]}$$
      which is what we wanted to show.

      \textit{Part \ref{def:state-valid-partial}:}
      First, we show that the partial state $\pair{\heap}{\stack^*}$ is validated by $\vstate'$ and $V'$.

      Since the initial state is validated by $\vstate$ and $V$, the partial state $\pair{\heap}{\stack^*}$ is validated by $\vstate$ and valuation $V$. Therefore one of \ref{def:partial-valid-nil}, \ref{def:partial-valid-call}, \ref{def:partial-valid-while} applies. We want to show that the partial state $\pair{\heap}{\stack^*}$ is validated by $\vstate'$ and valuation $V'$.

      \begin{itemize}
        \item \textit{Case \ref{def:partial-valid-nil}:} Then $\stack^* = \nilsym$ and trivially $\pair{\heap}{\nilsym}$ is validated by $\vstate'$ and valuation $V'$.
        
        \item \textit{Case \ref{def:partial-valid-call}:} Then $\stack^* = \triple{\env_0}{\perms_0}{\sseq{y \kassign m(e_1, \cdots, e_k)}{s_0}} \cdot \stack_0$ for some $\env_0$, $\perms_0$, $y$, $m$, $k$, $e_1, \cdots, e_k$, $s_0$, $\stack_0$. Also, there is some $\vstate_0$, $V_0$, $x_1, \cdots, x_k$, $t_1, \cdots, t_k$ and $\sstate_0, \cdots, \sstate_k, \sstate'$ such that
        \begin{gather*}
          \text{The partial state $\pair{\heap}{\stack_0}$ is validated by $\vstate_0$ and $V_0$}, \\
          \vstate_0 \text{ is reachable from $\prog$ with valuation $V_0$}, \quad s(\vstate_0) = s(\stack^*) \\
          x_1, \cdots, x_k = \fparams(m), \\
          \sstate_0 = \sstate(\vstate_0), \quad \seval{\sstate_0}{e_1}{t_1}{\sstate_1}{\_}, \quad\cdots,\quad \seval{\sstate_{k-1}}{e_k}{t_k}{\sstate_k}{\_}, \\
          \scons{\sstate_k}{\fpre(m)}{\sstate'}{\_}, \quad \simstate{V_0}{\sstate'[\senv = \senv(\sstate_0)]}{\heap}{\perms_0}{\env_0}, \quad\text{and} \\
          \universal{1 \le i \le k}{V(\senv(\vstate)(x_i)) = V_0(t_i)}, \\
          \gform(\vstate) = \fpost(m).
        \end{gather*}
  
        We want to show that the partial state $\pair{\heap}{\triple{\env_0}{\perms_0}{\sseq{y \kassign m(e_1, \cdots, e_k)}{s}} \cdot \stack_0}$ is validated by $\vstate'$ and valuation $V'$. Immediately from above we can conclude that
        \begin{gather*}
          \text{The partial state $\pair{\heap}{\stack_0}$ is validated by $\vstate_0$ and $V_0$}, \\
          \vstate_0 \text{ is reachable from $\prog$ with valuation $V_0$}, \quad s(\vstate_0) = s(\stack^*) \\
          x_1, \cdots, x_k = \fparams(m), \\
          \sstate_0 = \sstate(\vstate_0), \quad \seval{\sstate_0}{e_1}{t_1}{\sstate_1}{\_}, \quad\cdots,\quad \seval{\sstate_{k-1}}{e_k}{t_k}{\sstate_k}{\_}, \quad\text{and} \\
          \scons{\sstate_k}{\fpre(m)}{\sstate'}{\_}, \quad \simstate{V_0}{\sstate'[\senv = \senv(\sstate_0)]}{\heap}{\perms_0}{\env_0}.
        \end{gather*}
        Also, the frame $\triple{\perms}{\env}{\sseq{x = \kalloc(S)}{s}}$ must be executing the body of $m$, since it is in the stack immediately above the frame that contains $y \kassign m(e_1, \cdots, e_k)$. Therefore, since $x_1, \cdots, x_k$ are all parameters of $m$, $y$ must be distinct from all of $x_1, \cdots, x_k$, since we do not allow assignment to parameters in a well-formed program. Thus
        \begin{align*}
          \forall 1 \le i \le k : V'(\senv(\vstate')(x_i))
            &= V'((\senv(\sstate)[x \mapsto t])(x_i)) = V'(\senv(\sstate)(x_i)) \\
            &= V'(\senv(\vstate)(x_i)) = V(\senv(\vstate)(x_i)) \\
            &= V_0(t_i).
        \end{align*}
  
        Finally, $\gform(\vstate') = \gform(\vstate)$ by definition, thus
        $$\gform(\vstate') = \gform(\vstate) = \fpost(m).$$
  
        Therefore the partial state $\pair{\heap}{\stack^*}$ is validated by $\vstate'$ and $V'$ in this case.
        
        \item \textit{Case \ref{def:partial-valid-while}:}
        Then $\stack^* = \triple{\env_0}{\perms_0}{\sseq{\swhile{e_0}{\gform_0}{s_0}}{s_0'}} \cdot \stack_0$ for some $\env_0$, $\perms_0$, $e$, $\gform_0$, $s_0$, $s_0'$, $\stack_0$, and there exists some $\vstate_0$, $V_0$, and $\sstate_0'$ such that:
        \begin{gather*}
          \text{The partial state $\pair{\heap}{\stack_0}$ is validated by $\vstate_0$ and $V_0$} \\
          \vstate_0 \text{ is reachable from $\prog$ with valuation $V_0$}, \quad s(\vstate_0) = s(\stack^*) \\
          \scons{\sstate_0}{\gform_0}{\sstate_0'}{\_}, \quad
          \simstate{V_0}{\sstate_0'}{\heap}{\perms_0}{\env_0} \quad\text{and}\\
          \gform(\vstate) = \gform_0.
        \end{gather*}
  
        Now, by definition of $\vstate''$, $\gform(\vstate') = \gform(\vstate) = \gform_0$. Therefore, using the other assumptions given above, the partial state $\pair{\heap}{\stack^*}$ is validated by $\vstate'$ and $V'$ in this case.

      \end{itemize}

      Therefore the partial state $\pair{\heap}{\stack^*}$ is validated by $\vstate'$ and $V'$.

      Now, by lemma \ref{lem:pres-heap-change-partial}, the partial state $\pair{\heap'}{\stack^*}$ is validated by $\vstate'$ and $V'$, which is what we need to show for this part.

      Therefore $\pair{\heap'}{\triple{\perms'}{\env[x \mapsto \ell]}{s} \cdot \stack^*}$ is validated by $\vstate'$ with $V$, which completes the proof.

    \case \refrule{ExecCallEnter}: We have
      \begin{align*}
        &\pair{\heap}{\triple{\perms}{\env}{\sseq{y \kassign m(\multiple{e})}{s}} \cdot \stack^*}, \vfoot{V'}{\heap}{\sperms} \to \\
        &\quad \pair{\heap}{\triple{\perms'}{\env'}{\sseq{\fbody(m)}{\kskip}} \cdot \triple{\perms \setminus \perms'}{\env}{\sseq{y \kassign m(\multiple{e})}{s}} \cdot \stack^*} \\
        &\text{where}~ \multiple{x} = \fparams(m), \quad \multiple{\eval{\heap}{\env}{e}{v}}, \quad \multiple{\frm{\heap}{\perms}{\env}{e}}, \\
        &\hspace{2em} \env' = [\multiple{x \mapsto v}], \quad
        \assertion{\heap}{\perms \setminus \xperms}{\env'}{\fpre(m)}, \\
        &\hspace{2em} \xperms = \vfoot{V'}{\heap}{\sperms}, \quad\text{and}\quad
        \perms' = \foot{\heap}{\perms \setminus \xperms}{\env'}{\fpre(m)}.
      \end{align*}

      By assumptions, the initial state is validated by some $\vstate$ and valuation $V$, thus $\vstate = \triple{\sstate}{\sseq{y \kassign m(\multiple{e})}{s}}{\gform}$ for some $\sstate$, $\gform$ where $\simstate{V'}{\sstate}{\heap}{\perms}{\env}$.

      The only guard rule that applies is \refrule{SGuardCall}, so we have
      \begin{gather*}
        \multiple{\seval{\sstate}{e}{t}{\sstate'}{\scheck}}, \quad \scons{\sstate'[\senv = [\multiple{x \mapsto t}]]}{\fpre(m)}{\sstate''}{\scheck'}, \\
        \sperms = \frem(\sstate'', \fpre(m)), \quad
        \text{and by assumptions,}\quad \rtassert{V}{\heap}{\perms}{\multiple{\scheck} \cup \scheck'}
      \end{gather*}

      For some $k$, let $x_1, \cdots, x_k = \multiple{x}$, $e_1, \cdots, e_k = \multiple{e}$, $v_1, \cdots, v_k = \multiple{v}$, and $t_1 = \ffresh, \cdots, t_k = \ffresh$.

      Also, let $\sstate_0 = \quintuple{\bot}{\emptyset}{\emptyset}{[x_1 \mapsto t_1, \cdots, x_k \mapsto t_k]}{\ktrue}$, and let $V_0 = [t_1 \mapsto v_1, \cdots, t_k \mapsto v_k]$. Then $\simstate{V_0}{\sstate_0}{\heap}{\perms'}{\env'}$.

      $\efoot{\heap}{\env'}{\fpre(m)} \subseteq \perms'$: If $\fpre(m)$ is completely precise, then $\perms' = \foot{\heap}{\perms \setminus \xperms}{\env'}{\fpre(m)} = \\ \efoot{\heap}{\env'}{\fpre(m)}$. Otherwise, $\perms' = \foot{\heap}{\perms \setminus \xperms}{\env'}{\fpre(m)} = \perms \setminus \xperms$, but also $\efoot{\heap}{\env'}{\fpre(m)} \subseteq \perms' = \perms \setminus \xperms$ by lemma \ref{lem:efoot-subset-spec} since $\assertion{\heap}{\perms \setminus \xperms}{\env'}{\fpre(m)}$.

      Now $\assertion{\heap}{\perms'}{\env'}{\fpre(m)}$ by lemma \ref{lem:assert-efoot-subset}, since $\assertion{\heap}{\perms \setminus \xperms}{\env'}{\fpre(m)}$.

      Now $\assertion{\heap}{\perms'}{\env'}{\fpre(m)}$, $\simstate{V_0}{\sstate_0}{\heap}{\perms'}{\env'}$. Thus by lemma \ref{lem:produce-progress}, for some $\sstate_0'$,
      $$\sproduce{\sstate_0}{\fpre(m)}{\sstate_0'} ~\text{and}~V_0'(\pc(\sstate_0)) \quad\text{where}\quad V_0' = V_0[\sproduce{\sstate_0}{\fpre(m)}{\sstate_0'} \mid \heap].$$

      Let $\vstate_0' = \triple{\sstate_0'}{\sseq{\fbody(m)}{\kskip}}{\fpost(m)}$. We want to show that
      $$\Gamma' = \pair{\heap}{\triple{\perms'}{\env'}{\sseq{\fbody(m)}{\kskip}} \cdot \triple{\perms \setminus \perms'}{\env}{\sseq{y \kassign m(\multiple{e})}{s}} \cdot \stack^*}$$ is validated by $\vstate_0'$ with $V_0'$.

      \textit{Part \ref{def:state-valid-reachable}:}
      By \refrule{SVerifyMethod}, $\strans{\prog}{\initsym}{\vstate_0'}$. Therefore $\vstate_0'$ is reachable from $\prog$ with valuation $V_0'$.

      \textit{Part \ref{def:state-valid-correspond}:}
      We want to show that $\Gamma'$ corresponds to $\vstate_0'$.

      By definition $s(\vstate_0') = \sseq{\fbody(m)}{\kskip} = s(\Gamma')$. Therefore, since $\sstate(\vstate_0') = \sstate_0'$, it suffices to show $\simstate{V_0'}{\sstate_0'}{\heap}{\perms'}{\env'}$.

      Since $\sheap(\sstate_0) = \oheap(\sstate_0) = \emptyset$, $\simheap{V_0}{\sheap(\sstate_0)}{\heap}{\perms' \setminus \efoot{\heap}{\env'}{\fpre(m)}}$ and $\simheap{V_0}{\oheap(\sstate_0)}{\heap}{\perms' \setminus \efoot{\heap}{\env'}{\fpre(m)}}$. Also, for each $1 \le i \le k$, $V_0(\senv(\sstate_0)(x_i)) = v_i = \env'(x_i)$, thus $\simenv{V_0}{\senv(\sstate_0)}{\env'}$. Finally, $V_0(\pc(\sstate_0)) = V_0(\ktrue) = \ktrue$. Therefore $\simstate{V_0}{\sstate_0}{\heap}{\perms' \setminus \efoot{\heap}{\env'}{\fpre(m)}}{\env'}$.

      Also, as shown before, $\assertion{\heap}{\perms'}{\env'}{\fpre(m)}$. Therefore, by lemma \ref{lem:produce-soundness},
      $$\simstate{V_0'}{\sstate_0'}{\heap}{\perms'}{\env'}.$$

      \textit{Part \ref{def:state-valid-partial}:}
      We want to show that the partial state $\pair{\heap}{\triple{\perms \setminus \perms'}{\env}{\sseq{y \kassign m(\multiple{e})}{s}} \cdot \stack^*}$ is validated by $\vstate_0'$ with $V_0'$, thus it suffices to show that case \ref{def:partial-valid-call} is satisfied.

      Since $\triple{\perms}{\env}{\sseq{y \kassign m(\multiple{e})}{s}} \cdot \stack^*$ was validated by $\vstate$ and $V$, the partial state $\pair{\heap}{\stack^*}$ is validated by $\vstate$ and $V$.

      Also, by assumptions, $\vstate$ is reachable from $\prog$ with valuation $V$ and $s(\vstate) = \sseq{y \kassign m(\multiple{e})}{s}$ as shown before.

      Furthermore, by assumptions, $x_1, \cdots, x_k = \multiple{x} = \fparams(m)$

      Now let $\sstate_0 = \sstate = \sstate(\vstate)$, then $\multiple{\seval{\sstate}{e}{t}{\sstate'}{\scheck}}$, which was shown before, represents the series of judgements
      $$\seval{\sstate_0}{e_1}{t_1}{\sstate_1}{\scheck_1}, \quad\cdots\quad \seval{\sstate_{k-1}}{e_k}{t_k}{\sstate_k}{\stack_k}$$
      where $\sstate_k = \sstate'$. Also, as shown before,
      $$\scons{\sstate_k[\senv = [x_1 \mapsto t_1, \cdots, x_k \mapsto t_k]]}{\fpre(m)}{\sstate''}{\scheck'}.$$
      Note that by definition \ref{def:sguard-valuation} $V' = V[\sguard{\vstate}{\sstate'}{\scheck}{\sperms} \mid \heap]$ is  the valuation corresponding to the series of judgements above, extending $V$.

      By lemmas \ref{lem:eval-subpath} and \ref{lem:cons-subpath}, $\pc(\sstate'') \implies \pc(\sstate_k) \implies \cdots \implies \pc(\sstate_1)$. Thus, since $V'(\pc(\sstate'')) = \ktrue$ by assumption,
      $$V'(\pc(\sstate'')) = V'(\pc(\sstate_k)) = \cdots = V'(\pc(\sstate_1)) = \ktrue.$$
      Therefore, by lemmas \ref{lem:seval-soundness} and \ref{lem:cons-soundness},
      $$\simstate{V'}{\sstate_1}{\heap}{\perms}{\env}, \quad\cdots,\quad \simstate{V'}{\sstate_k}{\heap}{\perms}{\env}, \quad \simstate{V'}{\sstate''}{\heap}{\perms \setminus \efoot{\heap}{\env'}{\fpre(m)}}{\env'},$$
      Furthermore, since $\xperms = \vfoot{V'}{\heap}{\sperms} = \vfoot{V'}{\heap}{\frem(\sstate'', \fpre(m))}$ and $\assertion{\heap}{\perms \setminus \xperms}{\env'}{\fpre(m)}$, we can apply lemma \ref{lem:rem-simstate} to get
      $$\simstate{V'}{\sstate''[\senv = \senv(\sstate_0)]}{\heap}{\perms \setminus \perms'}{\env'}.$$
      Since $\simenv{V}{\senv(\sstate_0)}{\env}$ (by assumptions and since $\sstate_0 = \sstate$),
      $$\simstate{V'}{\sstate''[\senv = \senv(\sstate_0)]}{\heap}{\perms \setminus \perms'}{\env}.$$

      For each $1 \le i \le k$, $\eval{\heap}{\env}{e_i}{V'(t_i)}$ by lemma \ref{lem:seval-soundness}, and $\eval{\heap}{\env}{e_i}{v_i}$ as shown before, thus $V'(t_i) = v_i$. Thus
      \begin{align*}
        \universal{1 \le i \le k}{V_0'(\senv(\vstate_0')(x_i)) &= V_0'(\senv(\sstate_0')(x_i))} &\text{by defn.} \\
          &= V_0'(\senv(\sstate_0)(x_i)) &\text{Lemma \ref{lem:cons-unchanged}} \\
          &= V_0(\senv(\sstate_0)(x_i)) &V \subseteq V' \\
          &= v_i &\text{by def.}\\
          &= V'(t_i). &\text{shown above}
      \end{align*}

      Finally, by definition $\gform(\vstate) = \fpost(m)$.

      Therefore the partial state $\pair{\heap}{\triple{\perms'}{\env'}{\sseq{\fbody(m)}{\kskip}} \cdot \triple{\perms \setminus \perms'}{\env}{\sseq{y \kassign m(\multiple{e})}{s}} \cdot \stack^*}$ is validated by $\vstate_0'$ with $V_0'$.

      Therefore $\Gamma'$ is validated by $\vstate_0'$ with $V_0'$.

    \case \refrule{ExecCallExit}: We have
      \begin{gather}
        \dexec{\heap}{\triple{\perms}{\env}{\kskip} \cdot \triple{\perms'}{\env'}{\sseq{y \kassign m(\multiple{e})}{s}} \cdot \stack}{\vfoot{V'}{\heap}{\sperms}}{\heap}{\triple{\perms''}{\env''}{s} \cdot \stack^*} \label{eq:dexec-pres-return-exec}\\
        \text{where } \assertion{\heap}{\perms}{\env}{\fpost(m)}, \quad
        \env'' = \env'[y \mapsto \env(\kresult)], \label{eq:dexec-pres-return-1} \\
        \text{and }\perms'' = \perms' \cup \foot{\heap}{\perms}{\env}{\fpost(m)}.
      \end{gather}

      By assumptions, the initial state is validated by some $\vstate$ and valuation $V$, thus $\vstate = \triple{\sstate}{\kskip}{\gform}$ for some $\sstate, \gform$ where $\simstate{V'}{\sstate}{\heap}{\perms}{\env}$.

      Also, by \ref{def:state-valid-partial}, the partial state $\pair{\heap}{\triple{\perms'}{\env'}{\sseq{y \kassign m(\multiple{e})}{s}} \cdot \stack}$ is validated by $\vstate$ and $V$. Thus \ref{def:partial-valid-call} must apply, and thus there is some $\vstate'$ reachable from $\prog$ and valuation $V$ such that $\vstate' = \triple{\sstate_0}{\sseq{y \kassign m(\multiple{e})}{s}}{\gform'}$ for some $\sstate_0, \gform'$. Also, we can let $e_1, \cdots, e_k = \multiple{e}$ and then there are sequences $\sstate_1, \cdots, \sstate_k$, $x_1, \cdots, x_k$, and $t_1, \cdots, t_k$ where
      \begin{align}
        &\seval{\sstate_0}{e_1}{t_1}{\sstate_1}{\_}, \quad\cdots,\quad
        \seval{\sstate_{k-1}}{e_k}{t_k}{\sstate_k}{\_}, &\text{by \eqref{eq:partial-valid-call-eval}}\label{eq:dexec-pres-return-seval} \\
        &\scons{\sstate_k[\senv = [\multiple{x_i \mapsto t_i}]]}{\fpre(m)}{\sstate'}{\_}, \label{eq:dexec-pres-return-cons} \\
        &\text{and}\quad
        \simstate{V'}{\sstate'[\senv = \senv(\sstate_0)]}{\heap}{\perms'}{\env'} &\text{by \eqref{eq:partial-valid-call-sim}} \label{eq:dexec-pres-return-2}
      \end{align}
      where $V'$ is a valuation corresponding to this series of judgements.

      Let
      \begin{gather*}
        t = \ffresh, \quad
        \hat{V}' = V'[t \mapsto \env(\kresult)], \quad
        \hat{\env} = [x_1 \mapsto \env(x_1), \cdots, x_k \mapsto \env(x_k)], \\
        \text{and}\quad \hat{\senv} = [x_1 \mapsto t_1, \cdots, x_k \mapsto t_k, \kresult \mapsto t]
      \end{gather*}

      We have $\assertion{\heap}{\perms}{\env}{\fpost(m)}$ by \eqref{eq:dexec-pres-return-1}. Since $\hat{\env}$ is simply the restriction of $\env$ to $\fparams(m)$ and $\kresult$, and $\fpost(m)$ may only reference variables in $\fparams(m)$ as well as $\kresult$, $\assertion{\heap}{\perms}{\hat{\env}}{\fpost(m)}$ and $\efoot{\heap}{\hat{\env}}{\fpost(m)} = \efoot{\heap}{\env}{\fpost(m)}$.

      By lemma \ref{lem:efoot-subset-spec} $\efoot{\heap}{\env}{\fpost(m)} \subseteq \perms$. Recall that $\perms'' = \perms \cup \foot{\heap}{\perms}{\env}{\fpost(m)}$. If $\fpost(m)$ is completely precise, then $\foot{\heap}{\perms}{\env}{\fpost(m)} = \efoot{\heap}{\env}{\fpost(m)}$. Otherwise, $\foot{\heap}{\perms}{\env}{\fpost(m)} = \perms$, but $\efoot{\heap}{\env}{\fpost(m)} \subseteq \perms$ as shown before. In both cases, $\efoot{\heap}{\env}{\fpost(m)} = \efoot{\heap}{\hat{\env}}{\fpost(m)} \subseteq \perms''$.

      Therefore by lemma \ref{lem:assert-efoot-subset}
      \begin{equation}\label{eq:dexec-pres-return-3}
        \assertion{\heap}{\perms''}{\hat{\env}}{\fpost(m)}.
      \end{equation}

      Note that, for all $1 \le i \le k$,
      \begin{align*}
        \hat{V}'(t_i) &= V'(t_i) &\text{by definition} \\
          &= V(\senv(\vstate)(x_i)) &\text{by \eqref{eq:partial-valid-call-params}} \\
          &= \env(x_i) &\text{since $\simenv{V}{\senv(\vstate)}{\env}$, since initial valid by $\vstate$ and $V$} \\
          &= \hat{\env}(x_i) &\text{by definition}
      \end{align*}
      Thus $\simenv{\hat{V}'}{\senv'}{\hat{\env}}$. Also, $\simstate{V'}{\sstate'[\senv = \senv(\sstate_0)]}{\heap}{\perms'}{\env'}$ by \eqref{eq:dexec-pres-return-2}. Therefore $\simstate{\hat{V}'}{\sstate'[\senv = \hat{\senv}]}{\heap}{\perms'}{\hat{\env}}$. Finally, since $\perms'' \subseteq \perms'$, by lemma \ref{lem:simstate-monotonicity},
      \begin{equation}\label{eq:dexec-pres-return-4}
        \simstate{\hat{V}'}{\sstate'[\senv = \hat{\senv}]}{\heap}{\perms''}{\hat{\env}}.
      \end{equation}

      Now, by lemma \ref{lem:produce-progress}, \eqref{eq:dexec-pres-return-3}, and \eqref{eq:dexec-pres-return-4},
      \begin{equation}\label{eq:dexec-pres-return-prod}
        \sproduce{\sstate'[\senv = \hat{\senv}]}{\fpost(m)}{\sstate''}, \quad\text{and}\quad V''(\pc(\sstate''')) = \ktrue
      \end{equation}
      where $V''$ is the corresponding valuation extending $\hat{V}'$.

      Let $\sstate''' = \sstate''[\senv = \senv(\sstate_0)[y \mapsto t]]$.

      Now \eqref{eq:dexec-pres-return-seval}, \eqref{eq:dexec-pres-return-cons}, and \eqref{eq:dexec-pres-return-prod}, and the definition of $\sstate'''$ satisfy the antecedent for \refrule{SExecCall}, therefore
      $$\sexec{\sstate_0}{\sseq{y \kassign m(e_1, \cdots, e_k)}{s}}{s}{\sstate'''}.$$
      Let $\vstate'' = \triple{\sstate'''}{s}{\gform'}$, now by \refrule{SVerifyStep}
      $$\strans{\prog}{\vstate'}{\vstate''}.$$

      We want to show that $\pair{\heap}{\triple{\perms''}{\env''}{s} \cdot \stack^*}$ is validated by $\vstate''$.
      
      Part \ref{def:state-valid-reachable}: Since $\strans{\prog}{\vstate'}{\vstate''}$, $\vstate''$ is reachable from $\prog$. Let its corresponding valuation be $V'''$.

      Part \ref{def:state-valid-correspond}: By definition, $s(\vstate'') = s$.

      By \eqref{eq:dexec-pres-return-2} $\simstate{V'}{\sstate'[\senv = \senv(\sstate_0)]}{\heap}{\perms'}{\env'}$. Since the initial state must be well-formed, $\perms$ and $\perms'$ are disjoint, and as shown before, $\efoot{\heap}{\env}{\fpost(m)} = \efoot{\heap}{\hat{\env}}{\fpost(m)} \subseteq \perms$, therefore $\perms' \setminus \efoot{\heap}{\hat{\env}}{\fpost(m)} = \perms'$. Also, $\hat{V}' \subseteq V'$. Thus
      $$\simstate{\hat{V}'}{\sstate'[\senv = \senv(\sstate_0)]}{\heap}{\perms' \setminus \efoot{\heap}{\hat{\env}}{\fpost(m)}}{\env'}.$$

      Also, as shown before, $\simenv{\hat{V}'}{\hat{\senv}}{\hat{\env}}$, therefore
      $$\simstate{\hat{V}'}{\sstate'[\senv = \hat{\senv}]}{\heap}{\perms' \setminus \efoot{\heap}{\hat{\env}}{\fpost(m)}}{\hat{\env}}.$$

      Then, since $\perms' \subseteq \perms''$, $\perms' \setminus \efoot{\heap}{\hat{\env}}{\fpost(m)} \subseteq \perms'' \setminus \efoot{\heap}{\hat{\env}}{\fpost(m)}$, and thus by lemma \ref{lem:simstate-monotonicity}
      $$\simstate{\hat{V}'}{\sstate'[\senv = \hat{\senv}]}{\heap}{\perms'' \setminus \efoot{\heap}{\hat{\env}}{\fpost(m)}}{\hat{\env}}.$$

      Now, since it was shown in \eqref{eq:dexec-pres-return-3} that $\assertion{\heap}{\perms''}{\hat{\env}}{\fpost(m)}$ and in \eqref{eq:dexec-pres-return-prod} that $V''(\pc(\sstate'')) = \ktrue$, by lemma \ref{lem:produce-soundness}
      $$\simstate{V''}{\sstate''}{\heap}{\perms''}{\hat{\env}}.$$

      Now by \eqref{eq:dexec-pres-return-2} $\simenv{V'}{\senv(\sstate_0)}{\env'}$, then since $V' \subseteq V''$, $\simenv{V''}{\senv(\sstate_0)}{\env'}$. Now, since $\senv(\sstate''') = \senv(\sstate_0)[y \mapsto t]$, to show $\simenv{V''}{\senv(\sstate''')}{\env''}$ it suffices to show that $V''(\senv(\sstate''')(y)) = \env''(y)$.

      But now $V''(\senv(\sstate''')(y)) = V''(t) = \hat{V}'(t) = \env(\kresult) = \env''(y)$, which is what we needed to show. Therefore, since $\senv$ is the only component changed between $\sstate''$ and $\sstate'''$,
      $$\simstate{V''}{\sstate'''}{\heap}{\perms''}{\env''}.$$

      Therefore, since $\sstate(\vstate'') = \sstate'''$, we have shown that $\vstate''$ corresponds to $\pair{\heap}{\triple{\perms''}{\env''}{s} \cdot \stack^*}$ with valuation $V''$.

      Part \ref{def:state-valid-partial}: We need to show that the partial state $\pair{\heap}{\stack^*}$ is validated by $\vstate''$ and $V''$. We already have that $\pair{\heap}{\stack^*}$ is validated by $\vstate'$ and $V'$. Thus one of \ref{def:partial-valid-nil}, \ref{def:partial-valid-call}, or \ref{def:partial-valid-while} must apply.

      \textit{If \ref{def:partial-valid-nil} applies:} Then $\stack^* = \nilsym$, thus trivially the partial state $\pair{\heap}{\stack^*}$ is validated by $\vstate''$ and $V''$.

      \textit{If \ref{def:partial-valid-call} applies:}
      Then $\stack^* = \triple{\perms_0}{\env_0}{\sseq{y' \kassign m'(e_1', \cdots, e_{k'}')}{s'}} \cdot \stack^*_0$ for some $k', y', m', e_1', \cdots, e_{k'}', s', \stack^*_0$. Also, there exists some $\vstate_0', V_0', x_1', \cdots, x_{k'}', t_1', \cdots, t_{k'}', \sstate_0, \cdots, \sstate_{k'}, \sstate'$ such that
      \begin{gather*}
        \text{The partial state $\pair{\heap}{\stack_0^*}$ is validated by $\vstate_0'$ and $V_0'$},\\
        \vstate_0' \text{ is reachable from $\prog$ with valuation $V_0'$}, \quad s(\vstate_0') = s(\stack^*), \\
        x_1', \cdots, x_{k'}' = \fparams(m),\\
        \sstate_0 = \sstate(\vstate_0'), \quad \seval{\sstate_0}{e_1'}{t_1'}{\sstate_1}{\_}, \quad\cdots,\quad \seval{\sstate_{k'-1}}{e_{k'}'}{t_{k'}'}{\sstate_{k'}}{\_},\\
        \scons{\sstate_{k'}}{\fpre(m')}{\sstate'}{\_}, \quad \simstate{V_0'}{\sstate'[\senv = \senv(\sstate_0)]}{\heap}{\perms_0}{\env_0}, \\
        \universal{1 \le i \le k'}{V'(\senv(\vstate')(x_i)) = V_0'(t_i')}, \quad\text{and} \\
        \gform(\vstate') = \fpost(m').
      \end{gather*}

      We want to show that the partial state $\pair{\heap}{\stack^*}$ is validated by $\vstate''$ and $V''$. Immediately from above,
      \begin{gather*}
        \text{The partial state $\pair{\heap}{\stack_0^*}$ is validated by $\vstate_0'$ and $V_0'$},\\
        \vstate_0' \text{ is reachable from $\prog$ with valuation $V_0'$}, \quad s(\vstate_0') = s(\stack^*), \\
        x_1', \cdots, x_{k'}' = \fparams(m),\\
        \sstate_0 = \sstate(\vstate_0'), \quad \seval{\sstate_0}{e_1'}{t_1'}{\sstate_1}{\_}, \quad\cdots,\quad \seval{\sstate_{k'-1}}{e_{k'}'}{t_{k'}'}{\sstate_{k'}}{\_}, \\
        \scons{\sstate_{k'}}{\fpre(m')}{\sstate'}{\_}, \quad \simstate{V_0'}{\sstate'[\senv = \senv(\sstate_0)]}{\heap}{\perms_0}{\env_0}.
      \end{gather*}
      Also, the frame $\triple{\perms'}{\env'}{\sseq{y \kassign m(e_1, \cdots, e_k)}{s}}$ must be executing the body of $m'$, since it is in the stack immediately above the frame that contains $y' \kassign m'(e_1', \cdots, e_{k'}')$. Therefore, since $x_1', \cdots, x_{k'}$ are all parameters of $m'$, $y$ must be distinct from all of $x_1', \cdots, x_{k'}'$, since we do not allow assignment to parameters in a well-formed program. Thus
      $$\forall 1 \le i \le k' : V''(\senv(\vstate'')(x_i)) = V''(\senv(\vstate')(x_i)) = V'(\senv(\vstate')(x_i)) = V_0'(t_i).$$

      Finally, $\gform(\vstate'') = \gform(\vstate')$ by definition, thus
      $$\gform(\vstate'') = \gform(\vstate') = \fpost(m').$$

      Therefore the partial state $\pair{\heap}{\stack^*}$ is validated by $\vstate''$ and $V''$ in this case.

      \textit{If \ref{def:partial-valid-while} applies:}
      Then $\stack^* = \triple{\env_0}{\perms_0}{\sseq{\swhile{e}{\gform_0}{s_0}}{s_0'}} \cdot \stack_0^*$ for some $\env_0$, $\perms_0$, $e$, $\gform_0$, $s_0$, $s_0'$, $\stack_0^*$, and there exists some $\vstate_0'$, $V_0'$, and $\sstate_0'$ such that:
      \begin{gather*}
        \text{The partial state $\pair{\heap}{\stack_0^*}$ is validated by $\vstate_0'$ and $V_0'$} \\
        \vstate_0' \text{ is reachable from $\prog$ with valuation $V_0'$}, \quad s(\vstate_0') = s(\stack^*), \\
        \scons{\sstate_0}{\gform_0}{\sstate_0'}{\_}, \quad
        \simstate{V_0'}{\sstate_0'}{\heap}{\perms_0}{\env_0} \quad\text{and}\\
        \gform(\vstate') = \gform_0.
      \end{gather*}

      Now, by definition of $\vstate''$, $\gform(\vstate'') = \gform(\vstate') = \gform_0$. Therefore, using the other assumptions given above, the partial state $\pair{\heap}{\stack^*}$ is validated by $\vstate''$ and $V''$ in this case.

      Therefore definition part \ref{def:state-valid-partial} is satisfied.

      Therefore all parts of definition \ref{def:state-valid} are satisfied. Thus $\pair{\heap}{\triple{\perms''}{\env''}{s} \cdot \stack^*}$ is validated by $\vstate''$, as we wanted to show.

    \case \refrule{ExecAssert}: We have
      \begin{gather*}
        \dexec{\heap}{\triple{\perms}{\env}{\sseq{\sassert{\phi}}{s}} \cdot \stack^*}{\vfoot{V'}{\heap}{\sperms}}{\heap}{\triple{\perms}{\env}{s} \cdot \stack^*} \\
        \text{where}\quad \assertion{\heap}{\perms}{\env}{\simprecise{\phi}}.
      \end{gather*}

      Since the initial state is validated by $\vstate$, $\vstate = \triple{\sstate}{\sseq{\sassert{\phi}}{s}}{\gform}$ for some $\sstate$, $\gform$ where $\simstate{V}{\sstate}{\heap}{\perms}{\env}$.

      The only guard rule that applies is \refrule{SGuardAssert}, so we have, for some $\sstate'$,
      \begin{gather*}
        \scons{\sstate}{\simprecise{\phi}}{\sstate'}{\scheck} \\
        \text{and by assumptions } V'(\pc(\sstate')) = \ktrue \quad\text{and}\quad \rtassert{V'}{\heap}{\perms}{\scheck}.
      \end{gather*}
      Also, by definition $V' = V[\scons{\sstate}{\simprecise{\phi}}{\sstate'}{\scheck} \mid \heap]$.

      Thus by lemma \ref{lem:cons-soundness} $\simstate{V'}{\sstate'}{\heap}{\perms \setminus \efoot{\heap}{\env}{\simprecise{\phi}}}{\env}$.

      Thus by lemma \ref{lem:simstate-monotonicity}, $\simstate{V'}{\sstate'}{\heap}{\perms}{\env}$. Also, as noted before, $\assertion{\heap}{\perms}{\env}{\simprecise{\phi}}$. Therefore, by lemma \ref{lem:produce-soundness}, for some $\sstate''$,
      $$\sproduce{\sstate'}{\simprecise{\phi}}{\sstate''} \quad\text{and}\quad V''(\pc(\sstate'')) = \ktrue.$$

      Now, by \refrule{SExecAssert}, $\sexec{\sstate}{\sseq{\sassert{\phi}}{s}}{s}{\sstate[\pc = \pc(\sstate'')]}$.

      Let $\vstate' = \triple{\sstate[\pc = \pc(\sstate'')]}{s}{\gform}$. We want to show that $\pair{\heap}{\triple{\perms}{\env}{s} \cdot \stack^*}$ is validated by $\vstate'$ and $V''$.

      By \refrule{SVerifyStep}, $\vstate'$ is reachable from $\prog$ with valuation $V''$.

      By assumptions, $\simstate{V}{\sstate}{\heap}{\perms}{\env}$, and $V''(\pc(\sstate'')) = \ktrue$, thus $\simstate{V''}{\sstate[\pc = \pc(\sstate'')]}{\heap}{\perms}{\env}$. Also, by definition, $\vstate' = \triple{\sstate[\pc = \pc(\sstate'')]}{s}{\gform}$. Therefore $\vstate'$ corresponds to $\pair{\heap}{\triple{\perms}{\env}{s} \cdot \stack^*}$ with valuation $V''$.

      Finally, by definition $\senv(\vstate') = \senv(\vstate)$ and $\gform(\vstate') = \gform = \gform(\vstate)$.

      Therefore $\pair{\heap}{\triple{\perms}{\env}{s} \cdot \stack^*}$ is a valid state by lemma \ref{lem:preservation-heap-env-unchanged}.

    \case\label{case:pres-dexec-ifa} \refrule{ExecIfA}: We have
      \begin{gather*}
        \dexec{\heap}{\triple{\perms}{\env}{\sseq{\sif{e}{s_1}{s_2}}{s}} \cdot \stack^*}{\vfoot{V'}{\heap}{\sperms}}{\heap}{\triple{\perms}{\env}{\sseq{s_1}{s}} \cdot \stack^*} \\
        \text{where}\quad \eval{\heap}{\env}{e}{\ktrue} \quad\text{and}\quad \frm{\heap}{\perms}{\env}{e}
      \end{gather*}

      Since the initial state is validated by $\vstate$, $\vstate = \triple{\sstate}{\sseq{\sif{e}{s_1}{s_2}}{s}}{\gform}$ for some $\sstate$, $\gform$ where $\simstate{V}{\sstate}{\heap}{\perms}{\env}$.

      The only guard rule that applies is \refrule{SGuardIf}, so we have, for some $\sstate'$,
      \begin{gather*}
        \seval{\sstate}{e}{t}{\sstate'}{\scheck} \\
        \text{and by assumptions}\quad V'(\pc(\sstate')) = \ktrue \quad\text{and}\quad \rtassert{V'}{\heap}{\perms}{\scheck}
      \end{gather*}
      where $V' = V[\seval{\sstate}{e}{t}{\sstate'}{\scheck} \mid \heap]$.

      Now by \refrule{SExecIfA},
      $$\sexec{\sstate}{\sseq{\sif{e}{s_1}{s_2}}{s}}{\sseq{s_1}{s}}{\sstate'[\pc = \pc(\sstate') \kand t]}.$$
      
      Let $\vstate' = \triple{\sstate'[\pc = \pc(\sstate') \kand t]}{\sseq{s_1}{s}}{\gform}$. Then by \refrule{SVerifyStep}, $\vstate'$ is reachable from $\prog$ with valuation $V'$.

      Now $\eval{\heap}{\env}{e}{V'(t)}$ and $\eval{\heap}{\env}{e}{\ktrue}$, thus $V'(t) = \ktrue$. Also, $\simstate{V'}{\sstate'}{\heap}{\perms}{\env}$, thus $V'(\pc(\sstate')) = \ktrue$, and then $V'(\pc(\sstate') \kand t) = V'(\pc(\sstate')) \wedge V'(t) = \ktrue$. Therefore $\simstate{V'}{\sstate'[\pc = \pc(\sstate') \kand t]}{\heap}{\perms}{\env}$.

      Also, by definition, $\vstate' = \triple{\sstate[\pc = \pc(\sstate') \kand t]}{\sseq{s_1}{s}}{\gform}$. Therefore $\vstate'$ corresponds to \\
      $\pair{\heap}{\triple{\perms}{\env}{\sseq{s_1}{s}} \cdot \stack^*}$ with valuation $V''$.

      Finally, $\senv(\vstate') = \senv(\sstate') = \senv(\sstate) = \senv(\vstate)$ by lemma \ref{lem:eval-unchanged} and $\gform(\vstate') = \gform = \gform(\vstate)$ by definition.

      Therefore ${\heap}{\triple{\perms}{\env}{\sseq{s_1}{s}} \cdot \stack^*}$ is a valid state by lemma \ref{lem:preservation-heap-env-unchanged}.

    \case \refrule{ExecIfB}: Similar to case \ref{case:pres-dexec-ifa}, but using \refrule{SExecIfB}.

    \case \refrule{ExecWhileEnter}: We have
      \begin{align*}
        &\pair{\heap}{\triple{\perms}{\env}{\sseq{\swhile{e}{\gform}{s}}{s'}} \cdot \stack^*} \to \\
        &\quad \pair{\heap}{\triple{\perms'}{\env}{\sseq{s}{\kskip}} \cdot \triple{\perms \setminus \perms'}{\env}{\sseq{\swhile{e}{\gform}{s}}{s'}} \cdot \stack^*} \\
        &\text{where}~ \eval{\heap}{\env}{e}{\ktrue}, \quad \assertion{\heap}{\perms \setminus \xperms}{\env}{\gform}, \\
        &\hspace{3em} \xperms = \vfoot{V}{\heap}{\sperms}, \quad\text{and}\quad \perms' = \foot{\heap}{\perms \setminus \xperms}{\env}{\gform}
      \end{align*}

      Since the initial state is validated by $\vstate$, $\vstate = \triple{\sstate}{\sseq{\swhile{e}{\gform}{s}}{s'}}{\gform_0}$ for some $\sstate$, $\gform$ where $\simstate{V}{\sstate}{\heap}{\perms}{\env}$.

      The only guard rule that applies is \refrule{SGuardWhile}, so we have, for some $\sstate'$, $\sstate''$, $k$, $x_1, \cdots, x_k$, $t_1, \cdots, t_k$, and $t$,
      \begin{gather}
        \scons{\sstate}{\gform}{\sstate'}{\scheck'}, \quad x_1, \cdots, x_k = \fmodified(s), \quad t_1 = \ffresh, \cdots, t_k = \ffresh, \label{eq:pres-dexec-while-enter-cons} \\
        \sproduce{\sstate'[\senv = \senv(\sstate')[x_1 \mapsto t_1, \cdots, x_k \mapsto t_k]]}{\gform}{\sstate''}, \\
        \seval{\sstate''}{e}{t}{\sstate'''}{\scheck''}, \quad \sperms = \frem(\sstate', \gform), \\
        \text{and by assumptions}\quad V'(\pc(\sstate'')) = \ktrue \quad\text{and}\quad \rtassert{V'}{\heap}{\perms}{\scheck' \cup \scheck''}
      \end{gather}
      where $V'$ is the corresponding valuation for these judgements (see definition \ref{def:sguard-valuation}).

      By lemma \ref{lem:scheck-monotonicity}
      \begin{equation} \label{eq:pres-dexec-while-enter-rt}
        \rtassert{V'}{\heap}{\perms}{\scheck'} \quad\text{and}\quad \rtassert{V'}{\heap}{\perms}{\scheck''}.
      \end{equation}

      Let $t_1' = \ffresh, \cdots, t_k' = \ffresh$ and $\sstate_0 = \quintuple{\bot}{\emptyset}{\emptyset}{\senv(\sstate)[x_1 \mapsto t_1', \cdots, x_k \mapsto t_k']}{\pc(\sstate)}$.

      Let $V_0 = V'[t_1' \mapsto \env(x_1), \cdots, t_k' \mapsto \env(x_k)]$. Now, for any $x \in \dom(\senv(\sstate_0))$, if $x = x_i$ for some $i$, then $V_0(\senv(x)) = V_0(\senv(\sstate_0)(x_i)) = V_0(t_i) = \env(x_i) = \env(x)$. Otherwise, $x \in \dom(\senv(\sstate))$ and thus $V_0(\senv(\sstate)(x)) = V(\senv(\sstate)(x)) = \env(x)$ since $\simenv{V}{\senv(\sstate)}{\env}$. Therefore $\simenv{V_0}{\senv(\sstate_0)}{\env}$.

      Also, since $\sheap(\sstate_0) = \oheap(\sstate_0) = \emptyset$, $\simheap{V_0}{\sheap(\sstate_0)}{\heap}{\perms' \setminus \efoot{\heap}{\env}{\gform}}$ and $\simheap{V_0}{\oheap(\sstate_0)}{\heap}{\perms' \setminus \efoot{\heap}{\env}{\gform}}$. Finally, $V_0(\pc(\sstate_0)) = V(\pc(\sstate)) = \ktrue$ since $\simstate{V}{\sstate}{\heap}{\perms}{\env}$.

      Therefore
      $$\simstate{V_0}{\sstate_0}{\heap}{\perms' \setminus \efoot{\heap}{\env}{\gform}}{\env}$$
      and then also $\simstate{V_0}{\sstate_0}{\heap}{\perms'}{\env}$ by lemma \ref{lem:simstate-monotonicity}.

      Furthermore, by assumptions, $\assertion{\heap}{\perms \setminus \xperms}{\env}{\gform}$, thus $\assertion{\heap}{\perms'}{\env}{\gform}$ by lemma \ref{lem:foot-assert}, since $\perms' = \foot{\heap}{\perms \setminus \xperms}{\env}{\gform}$.

      Therefore, by lemma \ref{lem:produce-progress} $$\sproduce{\sstate_0}{\gform}{\sstate_0'} \quad\text{and}\quad V_0'(\pc(\sstate_0')) = \ktrue$$
      where $V_0' = V_0[\sproduce{\sstate_0}{\gform}{\sstate_0'} \mid \heap, \env]$. Also, by lemma \ref{lem:produce-soundness},
      $$\simstate{V_0'}{\sstate_0'}{\heap}{\perms'}{\env}.$$

      Now by lemma \ref{lem:eval-progress},
      \[\seval{\sstate_0'}{e}{t_0}{\_}{\_}\]
      and let $V_0'' = V_0'[\seval{\sstate_0'}{e}{t_0}{\_}{\_} \mid \heap, \env]$.

      Let $\vstate_0 = \triple{\sstate_0'[\pc = \pc(\sstate_0') \kand t_0]}{\sseq{s'}{\kskip}}{\gform}$.
      
      We want to show that
      $$\Gamma' = \pair{\heap}{\triple{\perms'}{\env}{\sseq{s}{\kskip}} \cdot \triple{\perms \setminus \perms'}{\env}{\sseq{\swhile{e}{\gform}{s}}{s'}} \cdot \stack^*}$$
      is validated by $\vstate_0$ and $V_0''$.

      \textit{Part \ref{def:state-valid-reachable}:}
      By \refrule{SVerifyLoopBody} $\strans{\prog}{\vstate}{\vstate_0}$. Therefore $\vstate_0$ is reachable from $\prog$ with valuation $V_0''$.

      \textit{Part \ref{def:state-valid-correspond}:}
      By lemma \ref{lem:simstate-monotonicity} $\simstate{V_0'}{\sstate_0'}{\heap}{\perms' \cup \efoot{\heap}{\env}{e}}{\env}$. Then since $\eval{\heap}{\env}{e}{\ktrue}$, by lemma \ref{lem:eval-correspondence} $V_0''(t_0) = \ktrue$. Therefore $V_0''(\pc(\sstate_0') \kand t_0) = V'(\pc(\sstate_0')) \wedge V'(t_0') = \ktrue$. Thus, since we have already shown $\simstate{V_0'}{\sstate_0'}{\heap}{\perms'}{\env}$,
      $$\simstate{V_0''}{\sstate_0'[\pc = \pc(\sstate_0') \kand t_0]}{\heap}{\perms'}{\env}.$$
      
      Also, $s(\vstate_0) = \sseq{s}{\kskip}$ by definition. Therefore $\Gamma'$ corresponds to $\vstate_0$ with $V_0''$.

      \textit{Part \ref{def:state-valid-partial}:}
      We want to show that the partial state
      $$\Gamma^* = \pair{\heap}{\triple{\perms \setminus \perms'}{\env}{\sseq{\swhile{e}{\gform}{s}}{s'}} \cdot \stack^*}$$
      is validated by $\vstate_0$ and $V_0''$.

      By assumptions, $\pair{\heap}{\triple{\perms}{\env}{\sseq{\swhile{e}{\gform}{s}}{s'}} \cdot \stack^*}$ was validated by $\vstate$ and $V$. Therefore the partial state $\pair{\heap}{\stack^*}$ was validated by $\vstate$ and $V$, $\vstate$ is reachable from $\prog$ with $V$, $s(\vstate) = \sseq{\swhile{e}{\gform}{s}}{s'}$, and $\simstate{V}{\sstate}{\heap}{\perms}{\env}$.
      
      By \eqref{eq:pres-dexec-while-enter-cons} $\scons{\sstate}{\gform}{\sstate'}{\scheck}$ and by \eqref{eq:pres-dexec-while-enter-rt} $\rtassert{V'}{\heap}{\perms}{\scheck}$. Therefore $\simstate{V'}{\sstate'}{\heap}{\perms \setminus \efoot{\heap}{\env}{\gform}}{\env}$.

      Furthermore, since $\xperms = \vfoot{V'}{\heap}{\sperms} = \vfoot{V'}{\heap}{\frem(\sstate', \gform)}$, $\perms' = \foot{\heap}{\perms \setminus \xperms}{\env}{\gform}$, and $\assertion{\heap}{\perms \setminus \xperms}{\env}{\gform}$, we can apply lemma \ref{lem:rem-simstate} to get
      $$\simstate{V'}{\sstate'}{\heap}{\perms \setminus \perms'}{\env}.$$

      Finally, $\gform(\vstate_0) = \gform$ by definition.

      Therefore $\Gamma^*$ is validated by $\vstate_0$ and $V_0''$ since we have satisfied all requirements of case \ref{def:partial-valid-while}.

      Therefore $\Gamma'$ is validated by $\vstate_0$ and $V_0''$, which completes the proof.

    \case \refrule{ExecWhileSkip}: We have
      \begin{gather*}
        \dexec{\heap}{\triple{\perms}{\env}{\sseq{\swhile{e}{\gform}{s}}{s'}} \cdot \stack^*}{\xperms}{\heap}{\triple{\perms}{\env}{s} \cdot \stack^*} \\
        \text{where}\quad \xperms = \vfoot{V'}{\heap}{\sperms}, \quad \eval{\heap}{\env}{e}{\kfalse}, \quad
        \frm{\heap}{\perms}{\env}{e}, \quad\text{and} \\
        \assertion{\heap}{\perms \setminus \xperms}{\env}{\gform}
      \end{gather*}

      Since the initial state is validated by $\vstate$, $\vstate = \triple{\sstate}{\sseq{\swhile{e}{\gform}{s}}{s'}}{\gform'}$ for some $\sstate$, $\gform'$ where $\simstate{V}{\sstate}{\heap}{\perms}{\env}$.

      The only guard rule that applies is \refrule{SGuardWhile}, so we have, for some $\sstate'$, $\sstate''$, $k$, $x_1, \cdots, x_k$, $t_1, \cdots, t_k$, and $t$,
      \begin{gather}
        \scons{\sstate}{\gform}{\sstate'}{\scheck'}, \quad x_1, \cdots, x_k = \fmodified(s), \quad t_1 = \ffresh, \cdots, t_k = \ffresh, \label{eq:pres-dexec-while-skip-cons} \\
        \sproduce{\sstate'[\senv = \senv(\sstate')[x_1 \mapsto t_1, \cdots, x_k \mapsto t_k]]}{\gform}{\sstate''},\label{eq:pres-dexec-while-skip-prod} \\
        \seval{\sstate''}{e}{t}{\sstate'''}{\scheck''}, \quad \sperms = \frem(\sstate', \gform), \\
        \text{and by assumptions}\quad V'(\pc(\sstate'')) = \ktrue \quad\text{and}\quad \rtassert{V'}{\heap}{\perms}{\scheck' \cup \scheck''} \label{eq:pres-dexec-while-skip-path}
      \end{gather}
      where $V'$ is the corresponding valuation for these judgements (see definition \ref{def:sguard-valuation}).

      Let $\vstate' = \triple{\sstate''[\pc = \pc(\sstate'') \kand \kneg t]}{s'}{\gform'}$.

      By \refrule{SExecWhileSkip}
      $$\sexec{\sstate}{\sseq{\swhile{e}{\gform}{s}}{s'}}{s'}{\sstate''[\pc = \pc(\sstate'') \kand \kneg t]}.$$
      Therefore by \refrule{SVerifyStep} $\strans{\prog}{\vstate}{\vstate'}$. Thus $\vstate'$ is reachable from $\prog$ with valuation $V'$.

      By lemma \ref{lem:scheck-monotonicity}
      \begin{equation} \label{eq:pres-dexec-while-skip-rt}
        \rtassert{V'}{\heap}{\perms}{\scheck'} \quad\text{and}\quad \rtassert{V'}{\heap}{\perms}{\scheck''}.
      \end{equation}

      By lemmas \ref{lem:cons-subpath} and \ref{lem:produce-subpath}, $\pc(\sstate'') \implies \pc(\sstate')$. Therefore $V'(\pc(\sstate')) = \ktrue$. Now, by lemma \ref{lem:cons-soundness},
      $$\simstate{V'}{\sstate'}{\heap}{\perms \setminus \efoot{\heap}{\perms}{\gform}}{\env}.$$

      By definition \ref{def:sguard-valuation}, for all $1 \le i \le k$, $V'(t_i) = V(\senv(\sstate)(x_i))$. By assumptions, $\simenv{V}{\senv(\sstate)}{\env}$, thus $V(\senv(\sstate)(x_i)) = \env(x_i)$. Also, as shown above, $\simenv{V'}{\senv(\sstate')}{\env}$.

      Let $\senv' = \senv(\sstate')[x_1 \mapsto t_1, \cdots, x_k \mapsto t_k]$. Now, for any $x \in \dom(\senv')$, if $x = x_i$ for some $i$ then $V'(\senv'(x)) = V'(t_i) = \env(x_i)$. Otherwise, $x \in \dom(\senv(\sstate'))$ and thus $V'(\senv'(x)) = V'(\senv(\sstate')(x)) = \env(x)$. Therefore $\simenv{V'}{\senv'}{\env}$.

      Therefore $\simstate{V'}{\sstate'[\senv = \senv']}{\heap}{\perms \setminus \efoot{\heap}{\perms}{\gform}}{\env}$. Using the definition of $\senv'$ and \eqref{eq:pres-dexec-while-skip-prod}, $\sproduce{\sstate'[\senv = \senv']}{\gform}{\sstate''}$, and by \eqref{eq:pres-dexec-while-skip-path}, $V'(\pc(\sstate'')) = \ktrue$.

      In addition, as shown before, $\assertion{\heap}{\perms \setminus \xperms}{\env}{\gform}$, thus by lemma \ref{lem:assert-monotonicity} $\assertion{\heap}{\perms}{\env}{\gform}$.

      Now by lemma \ref{lem:produce-soundness}, $\simstate{V'}{\sstate''}{\heap}{\perms}{\env}$.

      In addition, by lemma \ref{lem:simstate-monotonicity}, $\simstate{V'}{\sstate''}{\heap}{\perms \cup \efoot{\heap}{\env}{e}}{\env}$. Then, since $\seval{\sstate''}{e}{t}{\_}{\_}$ and $\eval{\heap}{\env}{e}{\kfalse}$, by lemma \ref{lem:eval-correspondence} $V'(t) = \kfalse$. Therefore $V'(\pc(\sstate'') \kand \kneg t) = V'(\pc(\sstate'')) \wedge \neg V'(t) = \ktrue$. Therefore
      $$\simstate{V'}{\sstate''[\pc = \pc(\sstate'') \kand \kneg t]}{\heap}{\perms}{\env}.$$

      Now, by definition, $s(\vstate') = s'$. Therefore $\pair{\heap}{\triple{\perms}{\env}{s} \cdot \stack^*}$ corresponds to $\vstate'$ with valuation $V'$.

      By definition $\gform(\vstate') = \gform' = \gform(\vstate)$. Also, $\senv(\vstate') = \senv(\sstate'') = \senv(\sstate')[x_1 \mapsto t_1, \cdots, x_k \mapsto t_k]$ by lemma \ref{lem:produce-unchanged} and $\senv(\sstate') = \senv(\sstate) = \senv(\vstate)$ by lemma \ref{lem:cons-unchanged}. Therefore $\dom(\senv(\vstate')) \supseteq \dom(\senv(\vstate))$.

      Thus by lemma \ref{lem:preservation-heap-env-unchanged} $\pair{\heap}{\triple{\perms}{\env}{s} \cdot \stack^*}$ is a valid state.

    \case \refrule{ExecWhileFinish}: We have
    \begin{align*}
      &\pair{\heap}{\triple{\perms'}{\env'}{\kskip} \cdot \triple{\perms}{\env}{\sseq{\swhile{e}{\gform}{s}}{s'}} \cdot \stack^*} \to \\
      &\quad \pair{\heap}{\triple{\perms''}{\env'}{\sseq{\swhile{e}{\gform}{s}}{s'}} \cdot \stack^*} \\
      &\text{where}\quad \assertion{\heap}{\perms'}{\env'}{\gform} \quad\text{and}\quad \perms'' = \perms \cup \foot{\heap}{\perms'}{\env'}{\gform}.
    \end{align*}

    By assumptions, the initial state is validated by some $\vstate$ and valuation $V'$, thus $\vstate' = \triple{\sstate'}{\kskip}{\gform'}$ for some $\sstate'$, $\gform$ where $\simstate{V}{\sstate'}{\heap}{\perms'}{\env'}$.

    Also, by \ref{def:state-valid-partial}, the partial state $\pair{\heap}{\triple{\perms}{\env}{\sseq{\swhile{e}{\gform}{s}}{s'}} \cdot \stack^*}$ is validated by $\vstate'$ and $V'$. Thus \ref{def:partial-valid-while} must apply, and thus there is some $\vstate_0$, $V_0$, $\sstate_0$, $\sstate_0'$, $\gform_0$ such that
    \begin{gather*}
      \vstate_0 = \triple{\sstate_0}{\sseq{\swhile{e}{\gform}{s}}{s'}}{\gform_0} \\
      \text{$\vstate_0$ is reachable from $\prog$ with valuation $V_0$} \\
      \scons{\sstate_0}{\gform}{\sstate_0'}{\_}, \quad \simstate{V_0'}{\sstate_0'}{\heap}{\perms}{\env}, \quad \gform' = \gform, \\
      \text{and}\quad V_0' = V_0[\scons{\sstate}{\gform}{\sstate'}{\_} \mid \heap].
    \end{gather*}

    We have $\assertion{\heap}{\perms'}{\env'}{\gform}$, thus by lemma \ref{lem:foot-assert}$\assertion{\heap}{\foot{\heap}{\perms'}{\env'}{\gform}}{\env'}{\gform}$, and then by lemma \ref{lem:assert-monotonicity} $\assertion{\heap}{\perms''}{\env'}{\gform}$.

    Also by lemma \ref{lem:simstate-monotonicity} $\simstate{V_0'}{\sstate_0'}{\heap}{\perms''}{\env}$.

    For some $k$, let $x_1, \cdots, x_k$ be the list of variables in $\fmodified(s)$. Let $t_1 = \ffresh, \cdots, t_k = \ffresh$, let $\hat{\senv} = \senv(\sstate_0)[x_1 \mapsto t_1, \cdots, x_k \mapsto t_k]$, and let $\hat{V}_0' = V_0'[t_1 \mapsto \env'(x_1), \cdots, t_k \mapsto \env'(x_k)]$.

    $\env'$ is contained in the stack frame executing the loop body, which is $s$, thus for all $x \in \dom(\env')$, either $\env(x) = \env'(x)$ or $x \in \fmodified(m)$.

    Also, since $\simenv{V_0'}{\senv(\sstate_0')}{\env}$ and $\senv(\sstate_0') = \senv(\sstate_0)$ by lemma \ref{lem:cons-unchanged}, $\simenv{V_0}{\senv(\sstate_0)}{\env}$.

    Now, for any $x \in \dom(\hat{\senv})$, if $x = x_i$ for some $i$, then $\hat{V}_0'(\hat{\senv}(x)) = \hat{V}_0'(\hat{\senv}(x_i)) = \env'(x_i) = \env'(x)$. Otherwise, $x \notin \fmodified(s)$, thus $\env'(x) = \env(x)$, and $x \in \dom(\senv(\sstate_0))$. Thus $\hat{V}_0'(\hat{\senv}(x)) = V_0(\senv(\sstate_0)(x)) = \env(x) = \env'(x_i)$ since $\simenv{V_0}{\senv(\sstate_0)}{\env}$. Therefore $\simenv{\hat{V}_0'}{\hat{\senv}}{\env'}$. Thus $\simstate{\hat{V}_0'}{\sstate_0'[\senv = \senv']}{\heap}{\perms''}{\env'}$.
    
    Therefore by lemma \ref{lem:produce-progress} $\sproduce{\sstate_0'[\senv = \senv']}{\gform}{\sstate_0''}$ for some $\sstate_0''$ such that $V_0''(\pc(\sstate_0'')) = \ktrue$ where $V_0'' = \hat{V}_0'[\sproduce{\sstate_0'}{\gform}{\sstate_0''} \mid \heap]$.

    Let $\vstate_0' = \triple{\sstate_0''}{\sseq{\swhile{e}{\gform}{s}}{s'}}{\gform_0}$. We want to show that
    $$\Gamma' = \pair{\heap}{\triple{\perms''}{\env'}{\sseq{\swhile{e}{\gform}{s}}{s'}} \cdot \stack^*}$$
    is validated by $\vstate_0'$ and $V_0''$.

    \textit{Part \ref{def:state-valid-reachable}:}
    By \refrule{SVerifyLoop}, and since $\vstate_0$ is reachable from $\prog$, $\prog \vdash \vstate_0 \to \vstate_0'$. Therefore $\vstate_0'$ is reachable from $\prog$ with valuation $V_0''$.

    \textit{Part \ref{def:state-valid-correspond}:}
    Since $\assertion{\heap}{\perms'}{\env'}{\gform}$, by lemma \ref{lem:efoot-subset-spec} $\efoot{\heap}{\env'}{\gform} \subseteq \perms'$. Also, since the stack is well-formed, $\perms'$ and $\perms$ are disjoint, thus $\perms \setminus \efoot{\heap}{\env'}{\gform} = \perms$. Therefore $\simstate{V_0'}{\sstate_0'}{\heap}{\perms \setminus \efoot{\heap}{\env'}{\gform}}{\env}$, and now since $\perms \subseteq \perms''$, by lemma \ref{lem:simstate-monotonicity}, and since $\simenv{\hat{V}_0'}{\senv'}{\env'}$,
    $$\simstate{\hat{V}_0'}{\sstate_0'[\senv = \senv']}{\heap}{\perms'' \setminus \efoot{\heap}{\env'}{\gform}}{\env'}.$$

    Now, by lemma \ref{lem:produce-soundness},
    $$\simstate{V_0''}{\sstate_0''}{\heap}{\perms''}{\env'}.$$

    Also, $s(\vstate_0') = s(\Gamma')$ by construction. Therefore $\Gamma'$ corresponds to $\vstate_0'$.

    \textit{Part \ref{def:state-valid-partial}:} We need to show that the partial state $\pair{\heap}{\stack^*}$ is validated by $\vstate_0'$ and $V_0''$. We already have that $\pair{\heap}{\stack^*}$ is validated by $\vstate_0$ and $V_0$. Thus one of \ref{def:partial-valid-nil}, \ref{def:partial-valid-call}, or \ref{def:partial-valid-while} must apply.

    \textit{If \ref{def:partial-valid-nil} applies:} Then $\stack^* = \nilsym$, thus trivially the partial state $\pair{\heap}{\stack^*}$ is validated by $\vstate_0'$ and $V_0''$.

    \textit{If \ref{def:partial-valid-call} applies:}
    Then $\stack^* = \triple{\perms^*}{\env^*}{\sseq{y \kassign m(e_1, \cdots, e_k)}{s^*}} \cdot \stack^*_1$ for some $\perms^*$, $\env^*$, $k$, $y$, $m$, $e_1, \cdots, e_k$, $s^*$, $\stack^*_1$. Also, there exists some $\vstate^*$, $V^*$, $x_1, \cdots, x_k$, $t_1, \cdots, t_k$, $\sstate_0, \cdots, \sstate_k$, $\sstate'$ such that
    \begin{gather*}
      \text{The partial state $\pair{\heap}{\stack^*_1}$ is validated by $\vstate^*$ and $V^*$},\\
      \vstate^* \text{ is reachable from $\prog$ with valuation $V^*$}, \quad s(\vstate^*) = s(\stack^*), \\
      x_1, \cdots, x_k = \fparams(m),\\
      \sstate_0 = \sstate(\vstate^*), \quad \seval{\sstate_0}{e_1}{t_1}{\sstate_1}{\_}, \quad\cdots,\quad \seval{\sstate_{k-1}}{e_k}{t_k}{\sstate_k}{\_},\\
      \scons{\sstate_k}{\fpre(m)}{\sstate'}{\_}, \quad \simstate{V^*}{\sstate'[\senv = \senv(\sstate_0)]}{\heap}{\perms^*}{\env^*}, \\
      \universal{1 \le i \le k}{V_0(\senv(\vstate_0)(x_i)) = V^*(t_i)}, \quad\text{and} \\
      \gform(\vstate_0) = \fpost(m).
    \end{gather*}

    We want to show that the partial state $\pair{\heap}{\stack^*}$ is validated by $\vstate''$ and $V''$. Immediately from above,
    \begin{gather*}
      \text{The partial state $\pair{\heap}{\stack^*_1}$ is validated by $\vstate^*$ and $V^*$},\\
      \vstate^* \text{ is reachable from $\prog$ with valuation $V^*$}, \quad s(\vstate^*) = s(\stack^*), \\
      x_1, \cdots, x_k = \fparams(m),\\
      \sstate_0 = \sstate(\vstate^*), \quad \seval{\sstate_0}{e_1}{t_1}{\sstate_1}{\_}, \quad\cdots,\quad \seval{\sstate_{k-1}}{e_k}{t_k}{\sstate_k}{\_},\\
      \scons{\sstate_k}{\fpre(m)}{\sstate'}{\_}, \quad \simstate{V^*}{\sstate'[\senv = \senv(\sstate_0)]}{\heap}{\perms^*}{\env^*}.
    \end{gather*}
    Also, the frame $\triple{\perms}{\env}{\sseq{\swhile{e}{\gform}{s}}{s'}}$ must be executing the body of $m$, since it is in the stack immediately above the frame that contains $y \kassign m(e_1, \cdots, e_k)$. Therefore, since $x_1, \cdots, x_k$ are all parameters of $m$, $\fmodified(s)$ cannot contain any of $x_1, \cdots, x_k$, since we do not allow assignment to parameters in a well-formed program. Thus
    \begin{align*}
      \forall 1 \le i \le k : V_0''(\senv(\vstate_0')(x_i))
        &= V_0''(\senv(\sstate_0'')(x_i)) &\text{defn. $\vstate_0'$} \\
        &= V_0''(\hat{\senv}(x_i)) &\text{Lemma \ref{lem:produce-unchanged}} \\
        &= V_0''(\senv(\sstate_0')(x_i)) &x_i \notin \fmodified(s) \\
        &= V_0''(\senv(\sstate_0)(x_i)) &\text{Lemma \ref{lem:cons-unchanged}} \\
        &= V_0(\senv(\sstate_0)(x_i)) &V_0 \subseteq V_0'' \\
        &= V^*(t_i) &\text{prev. assump.}
    \end{align*}

    Finally, $\gform(\vstate_0') = \gform(\vstate_0)$ by definition, thus
    $$\gform(\vstate_0') = \gform(\vstate_0) = \fpost(m).$$

    Therefore the partial state $\pair{\heap}{\stack^*}$ is validated by $\vstate_0'$ and $V_0''$ in this case.

    \textit{If \ref{def:partial-valid-while} applies:}
    Then $\stack^* = \triple{\env^*}{\perms^*}{\sseq{\swhile{e^*}{\gform^*}{s^*}}{s^{*\prime}}} \cdot \stack_1^*$ for some $\env^*$, $\perms^*$, $e^*$, $\gform^*$, $s^*$, $s^{*\prime}$, $\stack_1^*$, and there exists some $\vstate^*$, $V^*$, and $\sstate'$ such that:
    \begin{gather*}
      \text{The partial state $\pair{\heap}{\stack_1^*}$ is validated by $\vstate^*$ and $V^*$} \\
      \vstate^* \text{ is reachable from $\prog$ with valuation $V^*$} \quad s(\vstate^*) = s(\stack^*), \\
      \scons{\sstate(\vstate^*)}{\gform^*}{\sstate'}{\_}, \quad
      \simstate{V^*}{\sstate'}{\heap}{\perms^*}{\env^*} \quad\text{and}\\
      \gform(\vstate^*) = \gform^*.
    \end{gather*}

    Now, by definition of $\vstate_0'$, $\gform(\vstate_0') = \gform(\vstate_0) = \gform^*$. Therefore, using the other assumptions given above, the partial state $\pair{\heap}{\stack^*}$ is validated by $\vstate_0'$ and $V_0''$ in this case.

    Therefore definition part \ref{def:state-valid-partial} is satisfied.

    Therefore all parts of definition \ref{def:state-valid} are satisfied. Thus $\Gamma'$ is validated by $\vstate_0'$ and $V_0''$, as we wanted to show.

    \case \refrule{ExecFold}: We have
      $$\dexec{\heap}{\triple{\perms}{\env}{\sseq{\sfold{p(\multiple{e})}}{s}} \cdot \stack^*}{\vfoot{V'}{\heap}{\sperms}}{\heap}{\triple{\perms}{\env}{s} \cdot \stack^*}$$

      By assumptions, the initial state is validated by some $\vstate$ and valuation $V'$, thus \\
      $\vstate = \triple{\sstate}{\sseq{\sfold{p(\multiple{e})}}{s}}{\gform}$ for some $\sstate$, $\gform$ where $\simstate{V}{\sstate}{\heap}{\perms}{\env}$.

      The only guard that applies is \refrule{SGuardFold}, thus we have
      \begin{gather*}
        \multiple{\seval{\sstate}{e}{t}{\sstate'}{\scheck}}, \quad \multiple{x} = \fpredparams(p), \\
        \scons{\sstate'[\senv = [x_i \mapsto t_i]]}{\fpred(p)}{\sstate''}{\scheck'}, \\
        \text{and by assumptions}\quad \rtassert{V'}{\heap}{\perms}{\multiple{\scheck} \cup \scheck'} \quad\text{and}\quad V'(\pc(\sstate'')) = \ktrue
      \end{gather*}
      where $V'$ is the valuation corresponding to this series of judgements, extending $V$ (see definition \ref{def:sguard-valuation}).

      Let $e_1, \cdots, e_n = \multiple{e}$, $t_1, \cdots, t_n = \multiple{t}$, and $\scheck_1, \cdots, \scheck_n = \multiple{\scheck}$. Let $\sstate_0$, then for some $\sstate_1, \cdots, \sstate_n$ we have
      $$\seval{\sstate_0}{e_1}{t_1}{\sstate_1}{\scheck_1}, \cdots, \seval{\sstate_{n-1}}{e_n}{t_n}{\sstate_n}{\scheck_n}$$
      where $\sstate_n = \sstate'$. By lemmas \ref{lem:eval-subpath} and \ref{lem:cons-subpath} $\pc(\sstate'') \implies \pc(\sstate_n) \implies \cdots \implies \pc(\sstate_1)$. Therefore $V'(\pc(\sstate'')) = V'(\pc(\sstate_n)) = \cdots = V'(\pc(\sstate_1)) = \ktrue$. Also, by lemma \ref{lem:scheck-monotonicity} we have $\rtassert{V'}{\heap}{\perms}{\scheck_i}$ for all $1 \le i \le n$. Thus, by lemma \ref{lem:seval-soundness}
      \begin{gather*}
        \eval{\heap}{\env}{e_1}{V'(t_1)}, \quad\cdots,\quad \eval{\heap}{\env}{e_n}{V'(t_n)} \\
        \text{and}\quad \simstate{V'}{\sstate_1}{\heap}{\perms}{\env}, \quad\cdots,\quad \simstate{V'}{\sstate_n}{\heap}{\perms}{\env}.
      \end{gather*}
      Therefore $\simstate{V'}{\sstate'}{\heap}{\perms}{\env}$.

      Let $\senv' = [\multiple{x \mapsto t}]$ and $\env' = [\multiple{x \mapsto V'(t)}]$. Then, by construction, $\simenv{V'}{\senv'}{\env'}$. Therefore $\simstate{V'}{\sstate'[\senv = \senv']}{\heap}{\perms}{\env}$.

      From above we have $\scons{\sstate'[\senv = \senv']}{\fpred(p)}{\sstate''}{\scheck'}$ and $\rtassert{\heap}{\perms}{\env}{\scheck''}$ by lemma \ref{lem:scheck-monotonicity}. Thus, by lemma \ref{lem:cons-soundness}
      $\simstate{V'}{\sstate''}{\heap}{\perms \setminus \efoot{\heap}{\env'}{\fpred(p)}}{\env'}$ and thus
      $$\simstate{V'}{\sstate''[\senv = \senv(\sstate)]}{\heap}{\perms \setminus \efoot{\heap}{\env'}{\fpred(p)}}{\env}.$$

      Let $\sheap' = \sheap(\sstate''); \pair{p}{\multiple{t}}$. Expanding definitions,
      $$\vfoot{V'}{\heap}{\pair{p}{\multiple{t}}} = \efoot{\heap}{\env'}{\fpred(p)}.$$
      Now $\simstate{V'}{\sstate''[\senv = \senv(\sstate)]}{\heap}{\perms \setminus \vfoot{V'}{\heap}{\pair{p}{\multiple{t}}}}{\env}$, thus
      $$\universal{h_1, h_2 \in \sheap(\sstate'')}{h_1 \ne h_2 \implies \vfoot{V'}{\heap}{h_1} \cap \vfoot{V'}{\heap}{h_2} = \emptyset}$$
      and by lemma \ref{lem:disjoint-sim-heap-subset},
      $$\universal{h \in \sheap(\sstate'')}{\vfoot{V'}{\heap}{h} \cap \vfoot{V'}{\heap}{\pair{p}{\multiple{t}}} = \emptyset}.$$
      From these we can deduce that
      $$\universal{h_1, h_2 \in \sheap'}{h_1 \ne h_2 \implies \vfoot{V'}{\heap}{h_1} \cap \vfoot{V'}{\heap}{h_2} = \emptyset}.$$

      Also, from lemma \ref{lem:cons-soundness}, $\assertion{\heap}{\perms}{[\multiple{x \mapsto V'(t)}]}{\fpred(p)}$ since $\env' = [\multiple{x \mapsto V'(t)}]$.

      Since $\simstate{V'}{\sstate''[\senv = \senv(\sstate)]}{\heap}{\perms \setminus \vfoot{V'}{\heap}{\pair{p}{\multiple{t}}}}{\env}$,
      $$\universal{\pair{p}{\multiple{t}} \in \sheap(\sstate'')}{\assertion{\heap}{\perms}{[\multiple{x \mapsto V'(t)}]}{\fpred(p)}}.$$
      From these we can deduce that
      $$\universal{\pair{p}{\multiple{t}} \in \sheap'}{\assertion{\heap}{\perms}{[\multiple{x \mapsto V'(t)}]}{\fpred(p)}}.$$

      Since field values are unchanged between $\sheap(\sstate'')$ and $\sheap'$,
      \begin{gather*}
        \universal{\triple{f}{t}{t'} \in \sheap'}{\pair{V(t)}{f} \in \perms} \quad\text{and} \\
        \universal{\triple{f}{t}{t'} \in \sheap'}{\heap(V(t), f) = V(t')}.
      \end{gather*}

      Therefore $\simheap{V'}{\sheap'}{\heap}{\perms}$, and thus
      $$\simstate{V'}{\sstate''[\senv = \senv(\sstate), \sheap = \sheap']}{\heap}{\perms}{\env}.$$

      Let $\vstate' = \triple{\sstate''[\senv = \senv(\sstate), \sheap = \sheap']}{s}{\gform}$. By \refrule{SExecFold} $\sexec{\sstate}{\sseq{\sfold{p(\multiple{e})}}{s}}{s}{\sstate(\vstate')}$ (after expanding definitions). Therefore $\strans{\prog}{\vstate}{\vstate'}$ by \refrule{SVerifyStep}. Therefore $\vstate'$ is reachable from $\prog$ with valuation $V'$.

      Also, as shown before, $\simstate{V'}{\sstate(\vstate')}{\heap}{\perms}{\env}$, and by definition $s(\vstate') = s$. Therefore  $\vstate'$ corresponds to $\pair{\heap}{\triple{\perms}{\env}{s} \cdot \stack^*}$.

      By definition $\senv(\vstate') = \senv(\sstate) = \senv(\vstate)$ and $\gform(\vstate') = \gform = \gform(\vstate)$. Therefore by lemma \ref{lem:preservation-heap-env-unchanged} $\pair{\heap}{\triple{\perms}{\env}{s} \cdot \stack^*}$ is a valid state.

    \case \refrule{ExecUnfold}: We have
      $$\dexec{\heap}{\triple{\perms}{\env}{\sseq{\sunfold{p(\multiple{e})}}{s}} \cdot \stack^*}{\vfoot{V'}{\heap}{\sperms}}{\heap}{\triple{\perms}{\env}{s} \cdot \stack^*}$$

      By assumptions, the initial state is validated by some $\vstate$ and valuation $V'$, thus \\
      $\vstate = \triple{\sstate}{\sseq{\sunfold{p(\multiple{e})}}{s}}{\gform}$ for some $\sstate$, $\gform$ where $\simstate{V}{\sstate}{\heap}{\perms}{\env}$.

      The only guard that applies is \refrule{SGuardUnfold}, thus we have
      The only guard that applies is \refrule{SGuardFold}, thus we have
      \begin{gather*}
        \multiple{\seval{\sstate}{e}{t}{\sstate'}{\scheck}}, \quad
        \scons{\sstate'}{p(\multiple{e})}{\sstate''}{\scheck'}, \\
        \text{and by assumptions}\quad \rtassert{V'}{\heap}{\perms}{\scheck' \cup \bigcup \multiple{\scheck}} \quad\text{and}\quad V'(\pc(\sstate'')) = \ktrue
      \end{gather*}

      Let $e_1, \cdots, e_n = \multiple{e}$, $t_1, \cdots, t_n = \multiple{t}$, and $\scheck_1, \cdots, \scheck_n = \multiple{\scheck}$. Let $\sstate_0$, then for some $\sstate_1, \cdots, \sstate_n$ we have
      $$\seval{\sstate_0}{e_1}{t_1}{\sstate_1}{\scheck_1}, \cdots, \seval{\sstate_{n-1}}{e_n}{t_n}{\sstate_n}{\scheck_n}$$
      where $\sstate_n = \sstate'$. By lemmas \ref{lem:eval-subpath} and \ref{lem:cons-subpath} $\pc(\sstate'') \implies \pc(\sstate_n) \implies \cdots \implies \pc(\sstate_1)$. Therefore $V'(\pc(\sstate'')) = V'(\pc(\sstate_n)) = \cdots = V'(\pc(\sstate_1)) = \ktrue$. Also, by lemma \ref{lem:scheck-monotonicity} we have $\rtassert{V'}{\heap}{\perms}{\scheck_i}$ for all $1 \le i \le n$. Thus, by lemma \ref{lem:seval-soundness}
      \begin{gather*}
        \eval{\heap}{\env}{e_1}{V'(t_1)}, \quad\cdots,\quad \eval{\heap}{\env}{e_n}{V'(t_n)} \\
        \text{and}\quad \simstate{V'}{\sstate_1}{\heap}{\perms}{\env}, \quad\cdots,\quad \simstate{V'}{\sstate_n}{\heap}{\perms}{\env}.
      \end{gather*}
      Therefore $\simstate{V'}{\sstate'}{\heap}{\perms}{\env}$.

      Thus by lemma \ref{lem:cons-soundness}
      $$\assertion{\heap}{\env}{\perms}{p(\multiple{e})} \quad\text{and}\quad \simstate{V'}{\sstate'}{\heap}{\perms \setminus \efoot{\heap}{\env}{p(\multiple{e})}}.$$

      Let $\multiple{x} = \fpredparams(p)$, $\senv' = [\multiple{x \mapsto t}]$, and $\env' = [\multiple{x \mapsto V'(t)}]$. Then, by construction, $\simenv{V'}{\senv'}{\env'}$. Therefore $\simstate{V'}{\sstate'[\senv = \senv']}{\heap}{\perms \setminus \efoot{\heap}{\env}{p(\multiple{e})}}{\env'}$.

      Now, by definition, $\efoot{\heap}{\env}{p(\multiple{e})} = \efoot{\heap}{\env'}{\fpred(p)} \cup \bigcup \multiple{\efoot{\heap}{\env}{e}}$.

      Therefore $\efoot{\heap}{\env'}{\fpred(p)} \subseteq \perms \setminus \efoot{\heap}{\env}{p(\multiple{e})}$, thus by lemma \ref{lem:simstate-monotonicity},
      $$\simstate{V'}{\sstate'[\senv = \senv']}{\heap}{\perms \setminus \efoot{\heap}{\env'}{\fpred(p)}}{\env'}$$
      and $\simstate{V'}{\sstate}{\sstate'[\senv = \senv']}{\heap}{\perms}{\env'}$.

      Since $\assertion{\heap}{\perms}{\env}{p(\multiple{e})}$, by \refrule{AssertPredicate} $\assertion{\heap}{\perms}{\env'}{\fpred(p)}$.

      Therefore by lemma \ref{lem:produce-progress}
      $$\sproduce{\sstate'[\senv = \senv']}{\fpred(p)}{\sstate''} \quad\text{and}\quad V''(\pc(\sstate'')) = \ktrue$$
      where $V'' = V'[\sproduce{\sstate'[\senv = \senv']}{\fpred(p)}{\sstate''} \mid \heap]$. Also, by lemma \ref{lem:produce-soundness} $\simstate{V''}{\sstate''}{\heap}{\perms}{\env'}$, and thus
      $$\simstate{V''}{\sstate''[\senv = \senv(\sstate)]}{\heap}{\perms}{\env}.$$

      Now, by \refrule{SExecUnfold}, $\sexec{\sstate}{\sseq{\sunfold{p(\multiple{e})}}{s}}{s}{\sstate''[\senv = \senv(\sstate)]}$.

      Let $\vstate' = \triple{\sstate''[\senv = \senv(\sstate)]}{s}{\gform}$. By \refrule{SVerifyStep} $\strans{\prog}{\vstate}{\vstate'}$. Therefore $\vstate'$ is reachable from $\prog$ with valuation $V''$.

      Also, as shown before, $\simstate{V''}{\sstate(\vstate')}{\heap}{\perms}{\env}$. Furthermore, $s(\vstate') = s$ by definition. Thus $\pair{\heap}{\triple{\perms}{\env}{s} \cdot \stack^*}$ corresponds to $\vstate'$ with valuation $V''$.

      By definition $\senv(\vstate') = \senv(\sstate) = \senv(\vstate)$ and $\gform(\vstate') = \gform = \gform(\vstate)$. Therefore, by lemma \ref{lem:preservation-heap-env-unchanged} $\pair{\heap}{\triple{\perms}{\env}{s} \cdot \stack^*}$ is a valid state.
  \end{enumcases}
\end{proof}

\begin{theorem}[Preservation]\label{thm:dtrans-preservation}
  Let $\Gamma$ be some dynamic state validated by the $\vstate$ and valuation $V$ for some program $\prog$. If $\sguard{\vstate}{\sstate'}{\scheck}{\sperms}$ with corresponding valuation $V$ extending $V'$, $V'(\pc(\sstate')) = \ktrue$, $\rtassert{V'}{\heap}{\perms(\Gamma)} {\scheck}$, and $\dtrans{\prog}{\Gamma}{\vfoot{V'}{\heap(\Gamma)}{\sperms}}{\Gamma'}$
  then $\Gamma'$ is a valid state.

  In other words, if the dynamic state satisfies the matching symbolic checks, and dynamic execution procedes, then the resulting state is valid.
\end{theorem}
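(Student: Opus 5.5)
The plan is to argue by cases on the derivation of $\dtrans{\prog}{\vfoot{V'}{\heap(\Gamma)}{\sperms}}{\Gamma}{\Gamma'}$. Only the three reachability rules \refrule{ExecInit}, \refrule{ExecStep} and \refrule{ExecFinal} can conclude this judgement, so the theorem splits into three cases, one per rule. The substantive case is \refrule{ExecStep}, and Lemma~\ref{lem:dexec-preservation} already discharges it. The other two cases concern the boundary states $\initsym$ and $\finalsym$.

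In the \refrule{ExecStep} case we have $\Gamma = \pair{\heap}{\stack}$ and $\dexec{\heap}{\stack}{\vfoot{V'}{\heap}{\sperms}}{\heap'}{\stack'}$, with $\Gamma' = \pair{\heap'}{\stack'}$. The hypotheses of the theorem are exactly those of Lemma~\ref{lem:dexec-preservation}:
\begin{itemize}
\item $\Gamma$ is validated by $\vstate$ and $V$;
\item the guard judgement $\sguard{\vstate}{\sstate'}{\scheck}{\sperms}$ holds;
\item $V'(\pc(\sstate')) = \ktrue$;
\item $\rtassert{V'}{\heap}{\perms(\stack)}{\scheck}$.
\end{itemize}
One point needs care. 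The statement says only that $V'$ is a corresponding valuation extending $V$, whereas the lemma takes $V' = V[\sguard{\vstate}{\sstate'}{\scheck}{\sperms} \mid \heap]$. I intend to read the theorem's hypothesis as this same corresponding valuation, since it is the reading used throughout Section~\ref{sec:soundness-progress}. Under that reading the lemma applies directly and gives that $\Gamma'$ is valid.

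In the \refrule{ExecInit} case, $\Gamma = \initsym$ and $\Gamma' = \pair{\emptyset}{\triple{\emptyset}{\emptyset}{s} \cdot \nilsym}$. I will take $\vstate' = \triple{\quintuple{\bot}{\emptyset}{\emptyset}{\emptyset}{\ktrue}}{s}{\ktrue}$, which is reachable by \refrule{SVerifyInit}, together with the empty valuation. The three parts of Definition~\ref{def:state-valid} then need to be checked:
\begin{itemize}
\item Part~\ref{def:state-valid-reachable} (reachability) holds by \refrule{SVerifyInit}.
\item Part~\ref{def:state-valid-correspond} (correspondence) holds because the symbolic heaps and store are empty, so the heap and environment conditions are vacuous, and the path condition is $\ktrue$.
\item Part~\ref{def:state-valid-partial} holds via Part~\ref{def:partial-valid-nil}, since the stack below the top frame is $\nilsym$.
\end{itemize}

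In the \refrule{ExecFinal} case, $\Gamma' = \finalsym$. By Definition~\ref{def:vstate-corresponds}, $\finalsym$ corresponds to $\vstate' = \finalsym$ with any valuation, and Part~\ref{def:state-valid-partial} is vacuous because $\finalsym$ is not a pair. The only thing left is reachability of $\finalsym$. Here the guard is \refrule{SGuardFinish}: it gives $\scons{\sstate}{\gform}{\sstate'}{\scheck}$ for the reachable state $\vstate = \triple{\sstate}{\kskip}{\gform}$, and \refrule{SVerifyFinal} then yields $\prog \vdash \vstate \to \finalsym$.

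I expect the main obstacle to be this boundary bookkeeping rather than any real semantic argument. The valuation quantifiers in the statement are stated loosely, and reachability for $\initsym$ and $\finalsym$ has to be matched against Definition~\ref{def:vstate-reachable}. Everything involving the new constructs, unfolding expressions and pure functions, is handled inside Lemma~\ref{lem:dexec-preservation} through Lemmas~\ref{lem:seval-soundness} and~\ref{lem:cons-soundness}, so nothing new about them arises at this level.
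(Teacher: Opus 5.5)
Your proposal is correct and follows the paper's proof. Both split into cases on \refrule{ExecInit}, \refrule{ExecStep} and \refrule{ExecFinal}, close the step case directly with Lemma~\ref{lem:dexec-preservation}, and handle the boundary cases with \refrule{SVerifyInit} and \refrule{SVerifyFinal}; the only minor difference is that for \refrule{ExecFinal} you take the consume judgement from the \refrule{SGuardFinish} guard hypothesis, while the paper gets it from Lemma~\ref{lem:cons-progress}, and either works.
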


\begin{proof}
  We proceed by cases on the judgement $\dtrans{\prog}{\Gamma}{\vfoot{V'}{\heap(\Gamma)}{\sperms}}{\Gamma'}$.

  \begin{enumcases}
    \case \refrule{ExecInit}: Then $\Gamma = \initsym$ and $\Gamma' = \pair{\emptyset}{\triple{\emptyset}{\emptyset}{s(\prog)} \cdot \nilsym}$. Since $\Gamma$ is validated by $\vstate$, then $\vstate = \initsym$.

    Let $\vstate' = \triple{\quintuple{\bot}{\emptyset}{\emptyset}{\emptyset}{\ktrue}}{s(\prog)}{\ktrue}$. Then by \refrule{SVerifyInit} $\strans{\prog}{\initsym}{\vstate'}$.

    Since $\Gamma'$ has a stack with a sole stack frame, and clearly $\vstate'$ is reachable from $\prog$, in order to show that $\Gamma'$ is validated by $\vstate'$ it suffices to show that $\Gamma'$ corresponds to $\vstate'$. In turn, since clearly $s(\Gamma') = s(\vstate')$, it suffices to show that $\simstate{V}{\sstate(\vstate')}{\heap(\Gamma')}{\perms(\Gamma')}{\env(\Gamma')}$. Since all the requisite sets or values are trivial, this is immediate from the definition.

    \case \refrule{ExecFinal}: Then $\Gamma' = \finalsym$ and $\Gamma = \pair{\heap}{\triple{\perms}{\env}{\kskip} \cdot \nilsym}$. Since $\Gamma$ is validated by $\vstate$, $s(\Gamma) = s(\vstate) = \kskip$. By lemma \ref{lem:cons-progress}, $\scons{\sstate(\vstate)}{\gform(\vstate)}{\sstate'}{\_}$ for some $\sstate'$. Thus $\strans{\prog}{\vstate}{\finalsym}$ by \refrule{SVerifyFinal}.

    \case \refrule{ExecStep}: Then $\Gamma = \pair{\heap}{\stack}$ and $\Gamma' = \pair{\heap'}{\stack'}$ for some $\heap, \stack, \heap', \stack'$ where \\
    $\dexec{\heap}{\stack}{\vfoot{V'}{\heap}{\sperms}}{\heap'}{\stack'}$. Therefore $\Gamma'$ is a valid state by lemma \ref{lem:dexec-preservation}.
  \end{enumcases}
\end{proof}

\end{document}